\documentclass[titlepage, a4paper, oneside, doublespace]{book}
\addtolength{\oddsidemargin}{-18mm}
\addtolength{\evensidemargin}{-9mm}
\addtolength{\textwidth}{35mm} \addtolength{\topmargin}{-15mm}
\addtolength{\textheight}{12mm} \setlength{\headheight}{15mm}

\usepackage{amsmath,amsthm,latexsym,amssymb,amsfonts,epsfig,dsfont,MnSymbol,proof}


\usepackage[all,arc,dvips,arrow,curve,cmactex]{xy}
\usepackage{amsfonts,amsmath,amssymb,graphics,psfrag}

\newcommand\Sets{{\bf Sets}}
\newcommand\op{{\rm op}}
\newcommand\ps[1]{\underline{#1}}
\newcommand\down[1]{\downarrow\!{#1}}
\def\d{{\cal D}}
\def\c{{\cal C}}

\def\md{\mathcal{D}}

\newcommand\Sub{{\rm Sub}}
\newcommand\Om{\underline{\Omega}}
\newcommand\w{\mathfrak{w}}
\newcommand\ket[1]{\,|#1\rangle}
 \newcommand\od{\underline{\mathbb{O}}}
\newcommand\ui{\underline{\mathbb{I}}}

%
%
\makeatletter \@addtoreset{equation}{section} \makeatother

\pagestyle{plain}
%
\setcounter{secnumdepth}{7} \setcounter{tocdepth}{7}
\newtheorem{Theorem}{Theorem}[section]
\newtheorem{Definition}{Definition}[section]
\newtheorem{Lemma}{Lemma}[section]
\newtheorem{Corollary}{Corollary}[section]

\newtheorem{Conjecture}{Conjecture}[section]
\newtheorem{Diagram}{Diagram}[section]
\newtheorem{Proof}{Proof}[section]
\newtheorem{example}{Example}[section]
\newtheorem{Axiom}{Axiom}[section]
\newcommand\Sig{\underline{\Sigma}}
\newcommand\V{\mathcal{V}}

\newcommand\Hi{{\cal H}}
%
\def\be{\begin{equation}}
\def\ee{\end{equation}}
\def\ba{\begin{eqnarray}}
\def\ea{\end{eqnarray}}
\def\mc{\mathcal{C}}
%

\def\mv{\mathcal{V}}
\def\mh{\mathcal{H}}
\def\mb{\mathcal{B}}
\def\us{\underline{\Sigma}}

\def\uo{\underline{\mathcal{O}}}

\def\ug{{\underline{G}}}
\def\uom{\underline{\Omega}}
%
\def\Nl{{\mathchoice
{\setbox0=\hbox{$\displaystyle\rm N$}\hbox{\hbox to0pt
{\kern0.4\wd0\vrule height0.9\ht0\hss}\box0}}
{\setbox0=\hbox{$\textstyle\rm N$}\hbox{\hbox to0pt
{\kern0.4\wd0\vrule height0.9\ht0\hss}\box0}}
{\setbox0=\hbox{$\scriptstyle\rm N$}\hbox{\hbox to0pt
{\kern0.4\wd0\vrule height0.9\ht0\hss}\box0}}
{\setbox0=\hbox{$\scriptscriptstyle\rm N$}\hbox{\hbox to0pt
{\kern0.4\wd0\vrule height0.9\ht0\hss}\box0}}}}
%
\def\Zl{{\mathchoice
{\setbox0=\hbox{$\displaystyle\rm Z$}\hbox{\hbox to0pt
{\kern0.4\wd0\vrule height0.9\ht0\hss}\box0}}
{\setbox0=\hbox{$\textstyle\rm Z$}\hbox{\hbox to0pt
{\kern0.4\wd0\vrule height0.9\ht0\hss}\box0}}
{\setbox0=\hbox{$\scriptstyle\rm Z$}\hbox{\hbox to0pt
{\kern0.4\wd0\vrule height0.9\ht0\hss}\box0}}
{\setbox0=\hbox{$\scriptscriptstyle\rm Z$}\hbox{\hbox to0pt
{\kern0.4\wd0\vrule height0.9\ht0\hss}\box0}}}}
%
\def\Ql{{\mathchoice
{\setbox0=\hbox{$\displaystyle\rm Q$}\hbox{\hbox to0pt
{\kern0.4\wd0\vrule height0.9\ht0\hss}\box0}}
{\setbox0=\hbox{$\textstyle\rm Q$}\hbox{\hbox to0pt
{\kern0.4\wd0\vrule height0.9\ht0\hss}\box0}}
{\setbox0=\hbox{$\scriptstyle\rm Q$}\hbox{\hbox to0pt
{\kern0.4\wd0\vrule height0.9\ht0\hss}\box0}}
{\setbox0=\hbox{$\scriptscriptstyle\rm Q$}\hbox{\hbox to0pt
{\kern0.4\wd0\vrule height0.9\ht0\hss}\box0}}}}
%
\def\Rl{{\mathchoice
{\setbox0=\hbox{$\displaystyle\rm R$}\hbox{\hbox to0pt
{\kern0.4\wd0\vrule height0.9\ht0\hss}\box0}}
{\setbox0=\hbox{$\textstyle\rm R$}\hbox{\hbox to0pt
{\kern0.4\wd0\vrule height0.9\ht0\hss}\box0}}
{\setbox0=\hbox{$\scriptstyle\rm R$}\hbox{\hbox to0pt
{\kern0.4\wd0\vrule height0.9\ht0\hss}\box0}}
{\setbox0=\hbox{$\scriptscriptstyle\rm R$}\hbox{\hbox to0pt
{\kern0.4\wd0\vrule height0.9\ht0\hss}\box0}}}}
%
\def\Cl{{\mathchoice
{\setbox0=\hbox{$\displaystyle\rm C$}\hbox{\hbox to0pt
{\kern0.4\wd0\vrule height0.9\ht0\hss}\box0}}
{\setbox0=\hbox{$\textstyle\rm C$}\hbox{\hbox to0pt
{\kern0.4\wd0\vrule height0.9\ht0\hss}\box0}}
{\setbox0=\hbox{$\scriptstyle\rm C$}\hbox{\hbox to0pt
{\kern0.4\wd0\vrule height0.9\ht0\hss}\box0}}
{\setbox0=\hbox{$\scriptscriptstyle\rm C$}\hbox{\hbox to0pt
{\kern0.4\wd0\vrule height0.9\ht0\hss}\box0}}}}
%
\def\Hl{{\mathchoice
{\setbox0=\hbox{$\displaystyle\rm H$}\hbox{\hbox to0pt
{\kern0.4\wd0\vrule height0.9\ht0\hss}\box0}}
{\setbox0=\hbox{$\textstyle\rm H$}\hbox{\hbox to0pt
{\kern0.4\wd0\vrule height0.9\ht0\hss}\box0}}
{\setbox0=\hbox{$\scriptstyle\rm H$}\hbox{\hbox to0pt
{\kern0.4\wd0\vrule height0.9\ht0\hss}\box0}}
{\setbox0=\hbox{$\scriptscriptstyle\rm H$}\hbox{\hbox to0pt
{\kern0.4\wd0\vrule height0.9\ht0\hss}\box0}}}}
%
\def\Ol{{\mathchoice
{\setbox0=\hbox{$\displaystyle\rm O$}\hbox{\hbox to0pt
{\kern0.4\wd0\vrule height0.9\ht0\hss}\box0}}
{\setbox0=\hbox{$\textstyle\rm O$}\hbox{\hbox to0pt
{\kern0.4\wd0\vrule height0.9\ht0\hss}\box0}}
{\setbox0=\hbox{$\scriptstyle\rm O$}\hbox{\hbox to0pt
{\kern0.4\wd0\vrule height0.9\ht0\hss}\box0}}
{\setbox0=\hbox{$\scriptscriptstyle\rm O$}\hbox{\hbox to0pt
{\kern0.4\wd0\vrule height0.9\ht0\hss}\box0}}}}

\begin{document}
\begin{titlepage}
\begin{center}
{\Huge{\bf Lectures on Topos Quantum Theory}} \vspace{0.5cm}
 {\Large\\\vspace{.5cm}
    Dr. Cecilia Flori\\\vspace{.5cm}
Perimeter Institute for Theoretical Physics}\\ \vspace{1cm}  
\Large{Graduate Course 2012\\\vspace{0.3cm}
For The University of Waterloo} \\\vspace{0.5cm}
\end{center}
\tableofcontents

\end{titlepage}



\chapter{Introduction}
\label{che:topostheory}
``We can't solve problems by using the same kind of thinking we used when we created them."\\
(Einstein)\\

The great revolution of the nineteenth century started with the theory of special and general relativity and culminated in quantum theory. However, up to date, there are still some fundamental issues with quantum theory that are yet to be solved. Nonetheless a great deal of effort in fundamental physics is spent on an elusive theory of quantum gravity which is an attempt to combine the two above mentioned theories which seem, as they have been formulated, to be incompatible. In the last five decades, various attempt to formulate such a theory of quantum gravity have been made, but none have fully succeeded in becoming \emph{the} quantum theory of gravity. One possibility of the failure for reaching an  agreement on a theory of quantum gravity might be presence of unresolved fundamental issues already present in quantum theory. Most approaches to quantum gravity adopt standard quantum theory as there starting point, with the hope that the unresolved issues of the theory will get solved along the way. However, it might be the case that these fundamental issues should be solved \emph{before} attempting to define a quantum theory of gravity. 

If one adopts this point of view, the questions that come next are: i) which are the main conceptual issues in quantum theory ii) How can these issues be solved within a new theoretical frame work of quantum theory.\\
Chris Isham, Andreas D\"oring, Jeremy Butterfield and others have proposed that the main issues in the standard quantum
formalism are: (A) the use of critical mathematical ingredients which seem to assume certain properties of space and/or
time which are not entirely justified. In particular it could be the case that such a priori assumptions of space and time are not compatible with a theory of quantum gravity. (B) The instrumental interpretation of quantum theory that denies the possibility of talking about systems without reference to an external observer. A consequence of this issue is the problematic notion of a closed system in quantum cosmology. \\

A possible way to overcome the above mentioned issues is through a reformulation of quantum theory in terms of a different mathematical framework called topos theory. The reason for choosing topos theory is that it `looks like' sets and is equipped with an internal logic. As we will explain in detail in the following chapters, both these features are desirable, because they will allow for a reformulation of quantum theory which is more realist (thus solving issue (B)) and which does not rest on a priori assumptions about the nature of space and time.\\ 
The hope is that such a new formulation of quantum theory will shed some light on how a quantum theory of gravity should look like.
\\

The main idea in the topos formulation of normal
quantum theory \cite{andreas1,andreas2,andreas3,andreas4,andreas5,isham1,isham2,isham3,isham4,isham5}
is that using topos theory to redefine the mathematical
structure of quantum theory leads to a reformulation of quantum
theory in such a way that it is made to `look like' classical
physics. Furthermore, this reformulation of quantum theory has the
key advantages that (i)  no fundamental role is played by the
continuum; and (ii) propositions can be given truth values without
needing to invoke the concepts of `measurement' or `observer`.
Let us 
analyse the reasons why such a reformulation is needed  in the
first place. These concern  quantum theory general and quantum
cosmology in particular.
\begin{itemize}
\item As it stands quantum theory is non-realist. From a mathematical perspective this is reflected in the Kocken-Specher theorem \footnote{
\textbf{Kochen-Specker Theorem}: if the dimension of $\Hi$ is
greater than 2, then there does not exist any valuation function
$V_{\vec{\Psi}}:\mathcal{O}\rightarrow\Rl$ from the set
$\mathcal{O}$ of all bounded self-adjoint operators $\hat{A}$ of
$\Hi$ to the reals $\Rl$ such that  for all
$\hat{A}\in\mathcal{O}$ and all $f:\Rl\rightarrow\Rl$, the following holds $V_{\vec{\Psi}}(f(\hat{A}))=f(V_{\vec{\Psi}}(\hat{A}))$.}. This theorem implies that any statement
regarding state of affairs, formulated within the theory, acquires
meaning contractually, i.e., after measurement. This implies that
it is hard to avoid the Copenhagen interpretation of quantum
theory, which is intrinsically non-realist.
\item Notions of `measurement' and `external observer' pose problems when dealing with cosmology. In fact, in this case there can be no external observer since we are dealing with a closed system. But this then implies that the concept of `measurement' plays no fundamental role, which in turn implies that the standard definition of probabilities in terms of relative frequency of measurements breaks down.
\item The existence of the Planck scale suggests that there is no \emph{a priori} justification for the adoption of the notion of a continuum in the quantum theory used in formulating quantum gravity.
\end{itemize}

These considerations led Isham and D\"oring to search for a
reformulation of quantum theory that is more realist\footnote{By a
`realist' theory we mean one in which the following conditions are
satisfied: (i) propositions form a Boolean algebra; and (ii)
propositions can always be assessed to be either true or false. As
will be delineated in the following, in the topos approach to
quantum theory both of these conditions are relaxed, leading to
what Isham and D\"oring called  a \emph{neo-realist} theory.} than
the existing one. It turns out that this can be achieved through
the adoption of topos theory as the mathematical framework with
which to reformulate Quantum theory.

One approach to reformulating quantum theory in a more realist way
is to re-express it in such a way that it `looks like' classical
physics, which is the paradigmatic example of a realist theory.
This is precisely the strategy adopted by the authors in
\cite{andreas1}, \cite{andreas2}, \cite{andreas3}, \cite{andreas4}
and \cite{andreas5}. Thus the first question is what is the
underlining structure which makes classical physics a realist
theory?

The authors identified this structure with the following elements:
\begin{enumerate}
\item The existence of a state space $S$.

\item  Physical quantities are represented by functions from the state space to the reals. Thus each physical quantity, $A$, is represented by a function
\begin{equation}
f_A:S\rightarrow \Rl
\end{equation}
\item Any propositions of the form ``$A\in \Delta$'' (``The value of the quantity A lies in the subset $\Delta\in\Rl$'') is represented by a subset of the state space $S$: namely, that subspace for which the proposition is true. This is just
\begin{equation}
f_A^{-1}(\Delta)=\{s\in S| f_A(s)\in\Delta\}
\end{equation}
The collection of all such subsets forms  a Boolean algebra,
denoted ${\rm Sub}(S)$.

\item States $\psi$ are identified with Boolean-algebra homomorphisms
\begin{equation}\label{equ:state}
\psi:{\rm Sub}(S)\rightarrow \{0,1\}
\end{equation}
from the Boolean algebra ${\rm Sub}(S)$ to the two-element
$\{0,1\}$. Here, $0$ and $1$ can be identified as `false' and
`true' respectively.

The identification of states with such maps follows from
identifying propositions with subsets of $S$. Indeed, to each
subset $f_A^{-1}(\{\Delta\})$, there is associated a
characteristic function
$\chi_{A\in\Delta}:S\rightarrow\{0,1\}\subset\Rl$ defined by
\begin{equation}
\chi_{A\in\Delta}(s)=\begin{cases}1& {\rm if}\hspace{.1in}f_A(s)\in\Delta;\\
0& \text{otherwise}  \label{eq:mam}.
\end{cases}
\end{equation}
Thus each state $s$ either lies in $f_A^{-1}(\{\Delta\})$ or it
does not. Equivalently, given a state $s$ every proposition about
the values of physical quantities in that state is either true or
false. Thus \ref{equ:state} follows
\end{enumerate}

The first issue in finding quantum analogues of 1,2,3, and 4 is to
consider the appropriate mathematical framework in which to
reformulate the theory. As previously mentioned the choice fell on
topos theory. There were many reasons for this, but a paramount
one is that in any topos (which is a special type of category)
distributive logic arise in a natural way: i.e., a topos has an
internal logical structure that is similar in many ways to the way
in which Boolean algebras arise in set theory. This feature is
highly desirable since requirement 3 implies that the subobjects
of our state space (yet to be defined) should form some sort of
logical algebra.

The second issue is to identify which  topos is the right one to
use. Isham et al achieved this by noticing that the possibility of
obtaining a `neo-realist' reformulation of quantum theory lied in
the idea of a \emph{context}. Specifically, because of the
Kocken-Specher theorem, the only way of obtaining quantum
analogues of requirements 1,2,3 and 4 is by defining them with
respect to commutative subalgebras (the `contexts') of the
non-commuting algebra, $\mathcal{B(H)}$, of all bounded operators
on the quantum theory's Hilbert space. Thus `locally' with respect to these contexts quantum theory 
effectively behaves classically. So the idea is to try and define each quantum object locally in terms of
these abelian contexts. The key feature, however, is that the collection of all this {\it contexts} or {\it classical snapshots} form a category 
ordered by inclusion. This implies that although one defines each quantum object locally, the global information is never lost since it is put back into the picture by the categorical structure of the collection of all these classical snapshots. \\
Hence the task is to find a topos which allows you to define a quantum object as (roughly speaking) a collection of classical approximations. As we will see this can be done thought the topos of presheaves over the category of abelian algebras. \\
In terms of this topos of presheaves, quantum theory can be re-defined so that it retains some realism and its interpretation is not riddled with the above mentioned conceptual problems.

\chapter{Lecture 1}
The first lecture will deal with what the main interpretational problems of canonical quantum theory are. 

In particular, we will analyse how the mathematical formalism of quantum theory leads to a non realist interpretation of the theory. The focus will lie in understanding and analysing the Kochen-Specker theorem (K-S theorem), which can be thought of as the main mathematical underlying reason why quantum theory is non realist. The interpretation which comes out is the well known Copenhagen interpretation of quantum theory, which is an instrumentalist interpretation. However, such an interpretation leads to many conceptual problems. 

Topos quantum theory is a way of overcoming such problems by re-define quantum theory in the novel language of topos theory. The advantage of this language is that it renders the theory more realist, thus solving the above mentioned problems. In the process, however, one ends up with a multivalued/intuitionistic logic rather than a Boolean logic.

\section{Conceptual Problems in Quantum Theory}
The first natural question to ask is: why do we need a topos reformulation of quantum theory? The short answer is that such a reformulation is needed since it represents a candidate for solving certain conceptual issues present in quantum theory, which derive from the mathematical formulation of the theory. In particular, the canonical mathematical formulation of quantum theory leads to an interpretation which has many conceptual obstacles to a fully coherent theory.

In order to understand the above mentioned issues present in quantum theory one needs to first analyse in depths i) what a theory of physics really is and ii) what is it trying to achieve.
Obviously we will not be able to answer these questions fully or give them the attention they deserve since this would go beyond the scope of theis course. However, we will try and give a brief overview of the situation.

\subsection{What is a Theory of Physics and What is it Trying to Achieve?}
A theory of physics can be seen as a mathematical model which tries to answer three of the fundamental questions humanity has been and still is struggling to answer: 
\begin{enumerate}
\item what is a thing. (Heidegger)
\item How are ``things" related to one another.
\item How do we know 1) and 2).

\end{enumerate}
The first two questions are related to ontological\footnote{Ontology comes from the Greek word meaning ``being, that which is" and indicates the study of what things are in themselves and what can be said to exist.} issues while the third is of an epistemological\footnote{Epistemology comes from the Greek word meaning ``knowledge" and it is the study concerning what is knowledge and how do we gain knowledge.} nature.

The two main theories which presuppose to answer the above questions are
\begin{itemize}
\item Classical physics.
\item Quantum theory.
\end{itemize}

The way in which these two theories have answered the above questions is by defining a mathematical model which is supposed to describe nature. The interpretation of this mathematical model then, in turn, gives rise to a philosophical view of the world. In classical physics the mathematical model developed is in accordance with our common believes of the world, which provides the desired answer. In fact it is arguable whether our common believes modelled classical theory, but we will not delve into this. On the other hand things are not so straightforward in quantum theory in which, as we will see, the mathematical formalism of the theory seems, at time, to defy our commons sense.

In any case, in order to fully understand how a philosophical picture of the world can be derived though the mathematical formulation of a theory of physics we need to refine the questions 1-3 defined above (see \cite{topos1} for an in depth discussion). 
In particular, any theory of physics worthy that name, should address the following issues:
\begin{enumerate}
\item What is the system under investigation.
\item Ontological status of physical terms.
\item Epistemological status of physical terms.
\item Relation between the mathematical model and the physical world.
\item How physical statements can be verified or falsified.
\item Nature of space-time.
\item Meaning of probabilities (if they arise in the theory).
\end{enumerate}

As we will see, the different answers given to the above issues by classical theory and quantum theory, respectively, will hilight the radical differences between the two theories and the different interpretations each of them gives to the ``outside world".

\begin{center}
\fbox{Mathematical tools used to describe a physical system encode philosophical position regarding the world.}
\end{center}

In the following subsection we will briefly analyse how questions 1-7 are dealt with in classical physics. However we will not address all of them, since that would require more than a course (probably several) to do so, but we will only focus on certain crucial aspects which are essential in understanding the philosophical position of classical theory

\subsection{Philosophical Position of Classical Theory.}

When studying classical physics we develops an image of the world which is in accordance to our common sense. In fact classical theory is such that 
\begin{enumerate}
\item [i)] properties can be ascribed to a system at any given time and do not depend on the act of measuring; 
\item [ii)] the underlying logic is Boolean (classical) logic\footnote{Boolean logic will be described later on in the course. For now we will simply say that Boolean logic is the logic we use in our every day thinking and in our language. Such a logic is characterised by the fact that i) it is distributive, ii)  it only has two truth values $\{ true, false\}$ and iii) the logical connectives are our linguistic logical connectives: ``and ", ``exclusive or", ``not", ``if then". } which is the same logic we employ in our language. 
\end{enumerate}
Generally speaking a theory with the above mentioned properties is called a \emph{realist theory}. We will analyse later on, in more details, what exactly are the underlying assumptions which make classical physics a realist theory but, for now, it suffices to say that the realism of classical physics is associated  to the way in which objects and knowledge of such objects are mathematically expressed.

The realism of classical theory implies that a \emph{thing} is defined\footnote{It is worth noting that our own language reflects a realist view of the world: `` The tree \emph{is} three meters tall".} in terms of a bundle of properties which are said to belong to the \emph{thing} (system). The type of properties that we are dealing with are of two kinds:
\begin{enumerate}
\item \emph{Internal properties} which belong exclusively to the system, for example the mass, charge etc.
\item \emph{External properties} which define relations to other systems, for example position, velocity etc.
\end{enumerate}

When defining what a thing is one usually considers internal properties.

In classical physics epistemological questions are answered through the process of measurement. In fact \emph{measurement} enables us to know the values of a given system. However, although in the actual process of measurement there is a momentary distinction between object and subject, such a  distinction has a purely functional role, not an actual distinction. In fact both object and subject, as viewed from a classical perspective, exist out there independently of one another. 
This, in turn, implies that no special role is ascribed to measurement, i.e. in classical physics measurement is just another form of interaction.

Finally it is worth mentioning that, generally, classical theory is thought of as being a deterministic theory\footnote{In a stochastic approach the realist conditions i) and ii) at the beginning of the section still hold.}, i.e. given initial state of a system at a given time, it is possible to predict with certainty the state of the system at a subsequent time.

From the above discussion it emerges that the mathematical structure of classical theory implies an interpretation of the theory which is realist. In fact the mathematical model of classical theory induces a conceptual descriptions of various elements in the theory which, in turn, imply a realist philosophy of the outside world. 

So the natural question to ask at this point is: what are the mathematical constructs whose definition ( and in particular the way in which they are defined) directly imply a realist interpretation of classical theory?

\noindent
The answer to this question will be given in detail in subsequent lessons, but for now we will restrict ourselves in answering it in a very conceptual way, so as to give a general idea of the relation between mathematical constructs and induced philosophical ideas.

The elements/concepts whose mathematical description render classical theory a realist theory are the following:
\begin{enumerate}
\item \emph{State space}. In classical physics the state space $S$ is defined to be the collection of all states $s_i\in S$ of the system, such that each $s_i$ at a given time $t_i$ embodies all the properties of the system at that time.
\item Definition of physical quantities (see lecture 6)
\item \emph{Definition of propositions}. See lecture 6
\item \emph{Boolean logic}. The logic governing classical propositions is Boolean logic which is a distributive logic, which admits only two truth values: $\{ true,\; false\}$. Verification of such truth values is done through the measurement interaction. 
\item \emph{Probabilities}. Classical probabilities are defined as follows:
\be
\frac{\text{positive possible outcomes}}{\text{all admissible outcomes}}\nonumber
\ee
\end{enumerate}

Later in the course we will describe, in details, how the above classical concepts are mathematically represented but, for now, it suffices to say that it is precisely the way in which the above elements of the theory are  mathematically expressed which renders classical theory a realist theory. In fact, when considering quantum theory we will see how the same elements are mathematically  described in a very different way. This will induce a different conceptual understanding of such elements which, in turn, will imply a different philosophical interpretation of the theory. 

\subsection{Philosophy Behind Quantum Theory}
What can be said about the philosophical position of quantum theory?  If we analyse the mathematical formalism of quantum theory we immediately realise that the theory is non-realist (with the definition of realist given above). In fact the above conditions [i), ii)] do not strictly hold\footnote{We will clarify this later on in the lecture.} in quantum theory, since the formalism of the latter implies a clear distinction between measuring apparatus and measuring system, such that the act of measuring gets ascribed a special status. In this setting, measurement becomes a means for assigning a probabilistic spread of outcomes rather than a means to determine properties of the system. In fact the very concept of properties ceases to have its common sense meaning since its definition is now intertwined with the act of measurement. It is as if properties acquire the status of latent attributes which are brought into existence by the act of measurement, but which can not be said to exist independently of such measurement. Therefore it becomes meaningless to talk about a physical system as possessing properties. The interpretation that results is the so called \emph{instrumentalist interpretation} of quantum theory which is a non-realist interpretation.

So the feature of quantum theory which render it non-realist can be summarised as follows:
\begin{enumerate}
\item Properties can not be said to be possessed by a system a priori . All that can be said is that after a measurement is performed the system ``acquires" the ``latent" properties (state-vector reductio). 
\item Any statement regarding `states of affairs' about a system can only be made a posteriori after measurement. However such statements can not be regarded as describing properties of the system, on the contrary, it describes probabilities of possible measurement outcomes.
\item Measurement becomes a very special type of interaction. 
\item Clear distinction between observe and observed system.
\item States are not seen as bearers of physical properties but are simply the most efficient tools to enable one to determine/compute predictions for possible measurements, i.e. predictions of probabilities of outcomes not outcomes themselves.
\item Quantum theory is deterministic \emph{but} what evolves are now predicted probabilities of measurement results, \emph{not} actual measurements.
\item Relative frequency interpretation of probabilities.
\end{enumerate}

The above features of quantum theory which directly derive from the mathematical representation of the theory imply a non realist interpretation of quantum theory.
Such an interpretation, although works for some situations, causes various conceptual problems in the context of quantum gravity and quantum cosmology. These difficulties are the following:
\begin{itemize}
\item Notions of `measurement' and `external observer' pose problems when dealing with cosmology. In fact, in this case there can be no external observer since we are dealing with a closed system. But this then implies that the concept of `measurement' plays no fundamental role which, in turn, implies that the standard definition of probabilities, in terms of relative frequency of measurements, breaks down.
\item The existence of the Planck scale suggests that there is no \emph{a priori} justification for the adoption of the notion of a continuum in the quantum theory used in formulating quantum gravity.
\item Standard quantum theory employs in its formulation the use of a fixed spatio-temporal structure (fixed background). This is needed to make sense of its instrumentalist interpretation, i.e. it needs a space-time in which to make a measurement. This fixed background seems to cause problems in quantum gravity where one is trying to make measurements of space-time properties. In fact, if the action of making a measurement requires a space time background, what does it mean to measure space time properties? 
\item Given the concept of superposition present in quantum theory, by applying such concept to quantum gravity we would have to account for the occurrence of quantum superpositions of eignestate properties of space, time and space-time.
\end{itemize}

In the following section we will analyse one of the main theorems (another one would be Bell's inequality) which states the impossibility of quantum theory, as it is canonically expressed, to be a realist theory.

\section{Kochen Specker Theorem}

The Kochen-Specker theorem derives from the incompatibility of two assumptions regarding 
observables in quantum theory, namely \cite{topos1} \cite{topos4}.
\begin{enumerate}
\item The need of assigning simultaneous values to all observables in $\mathcal{O}$ (collection of all self-adjoint operators on $\mh$).
\item The need for the values of observables to be ``mutually exclusive and collectively 
exhaustable"\footnote{Mutually \emph{exclusive} means that only one value of an observable can be realised at a given time, while collectively \emph{exhaustible} means that at least one of the values has to be realised at a given time.} \cite{topos4}.
\end{enumerate}
It follows that the Kochen-Specker theorem is related to the existence, in
quantum theory, of a value function (to be defined) $V:\mathcal{O}\rightarrow\Rl$ from the 
set of self-adjoint 
operators (which are the quantum analogues of physical quantities) to the Reals. 
\subsection{Valuation Function}
To understand what a valuation function is let us first analyse how it is defined in classical theory.
\subsubsection{Valuation Function in Classical Theory}
Before giving the definition of what a valuation function is in classical physics we first of all have to define how a physical quantity is mathematically described in classical physics. 

Namely, in classical physics physical quantities are represented by functions from the state space to the reals. Thus, each physical quantity, $A$, is represented by a function $f_A:S\rightarrow \Rl$, such that for each state $s_i\in S$, $f_A(s_i)\in \Rl$ represents the value of $A$ given the state $s_i$. This association of physical quantities with real valued functions on the state space is 1:1 ( one-2-one: for each quantity $A$ there corresponds one and only one function $f_A$).

Given the definition of physical quantity in terms of maps on the state space, the definition of valuation function in classical physics is straightforward. In particular, a valuation function is defined, for each state $s_i $ in the state space $S$, as a map
\be
V_{s_i}:\mathcal{O}\rightarrow \Rl
\ee 
from the set of observables (physical quantities) $\mathcal{O}$ to the reals, such that for each $A\in \mathcal{O}$ we obtain:
\be
A\mapsto V_{s_i(A)}:=f_A(s_i)
\ee
where $V_{s_i(A)}$ represents the value of the physical quantity $A$ given the state $s_i$.

A condition such a valuation function has to satisfy is the so called \emph{functional composition condition} (FUNC) which is defined as follows:
\be\label{equ:cfunc}
\text{ for any }h:\Rl\rightarrow \Rl, \;\;V_{s_i}(h(A))=h(V_{s_i}(A))
\ee
In this equation $h(A)\in \mathcal{O}$ and is defined in terms of composition of functions:
\be
h(A):=h\circ f_A:S\xrightarrow{f_A}\Rl\xrightarrow{h}\Rl
\ee
If $A$ represents the physical quantity \emph{energy}, and $h$ is a function which defines the square, i.e. $h(A)=A^2$, what equation \ref{equ:cfunc} would mean is: ``the value of the energy squared is equal to the square of the value of the energy".

\subsubsection{Valuation Function in Quantum Theory}
If we were to mimic classical theory then we would define a valuation function as follows: \\
for each state $\psi\in H$ the valuation function is a function from the set of self-adjoint operators (quantum analogues of physical quantities) to the reals
\be
V_{|\psi\rangle}:\mathcal{O}\rightarrow \Rl
\ee

Such that, for each state $|\psi\rangle$, $V_{|\psi\rangle}$ assigns to each self-adjoint 
operator $\hat{A}\in\mathcal{O}$, a real number $V_{|\psi\rangle}(\hat{A})\in \Rl$ that represents the value of $\hat{A}$
for the state $|\psi\rangle$ of the system. 

However this definition of a valuation function only makes sense if $|\psi\rangle$ is an eigenvector of $\hat{A}$. Other than that special case, the above definition of a valuation function does not really make sense. So the question is how to generalise it for an arbitrary state $|\psi\rangle$? A possible generalisation is the following:
\begin{Definition}
A valuation function for quantum theory is a map $V:\mathcal{O}\rightarrow \Rl$ which satisfies the following two conditions:
\begin{enumerate}
\item [i)] For each $\hat{A}\in\mathcal{O}$, $V(\hat{A})\in \Rl$ represents the value of the operator $\hat{A}$ and it belongs to the spectrum of $\hat{A}$.
\item [ii)] FUNC. For all $h:\Rl \rightarrow \Rl$ the following holds
\be
V(h(\hat{A}))=h(V(\hat{A}))
\ee
\end{enumerate}

\end{Definition}
Any function satisfying the above conditions is a valuation function.

At this point it is worth understanding, explicitly, what $h(\hat{A})$ is.\\\\
\textbf{What is $\mathbf{h(\hat{A})}$?}\\\\
Given a self-adjoint operator $\hat{A}$ we have two situations:
\begin{enumerate}
\item [i)] Let $|\psi\rangle\in \mh$ be an eigenvector of $\hat{A}$, i.e. $\hat{A}|\psi\rangle=a|\psi\rangle$. It is then straightforward to define the following:
\be
\hat{A}^2|\psi\rangle=a^2|\psi\rangle;\;\;\hat{A}^3|\psi\rangle=a^3|\psi\rangle
\ee
Thus, generalising for any polynomial function $Q$ we obtain
\be
Q(\hat{A})|\psi\rangle=Q(a)|\psi\rangle
\ee

Provided $Q(a)$ is well defined. Given the above we are justified in defining, for any function $h:\Rl\rightarrow\Rl$ the following:
\be
h(\hat{A})|\psi\rangle=h(a)|\psi\rangle
\ee
Again, provided $h(a)$ is well defined, (ex. not infinite).

\item [ii)] We now would like to generalise it to arbitrary states, not just eigenvectors. To this end we recall that the set of eigenvectors of a self-adjoint operator forms an orthonormal basis for $\mh$. This means that any state $|\psi\rangle\in \mh$ can be written in terms of such an orthonormal basis. 
Thus, considering the case in which $\hat{A}$ has a discrete spectrum (all that follows can be easily generalised for the continuum case\footnote{In lecture 6/7 we will explain, in detail, what e spectral decomposition is, but for now we will simply state that each self-adjoint operator $\hat{A}$ can be written as
$
\hat{A}=\int_{\sigma(A)}\lambda d\hat{E}^{\hat{A}}_{\lambda}
$
Such an expression is called the spectral decomposition of $\hat{A}$. Here $\sigma(A)\subseteq \Rl$ represents the spectrum of the operator $\hat{A}$ and $\{\hat{E}^{\hat{A}}_{\lambda}|\lambda\in\sigma(\hat{A})\}$ is the spectral family of $\hat{A}$. In the discrete case we would have $\hat{A}=\sum_{\sigma(A)}\lambda \hat{P}^{\hat{A}}_{\lambda}$, where the projection operators $ \hat{P}^{\hat{A}}_{\lambda}$ project on subspaces of the Hilbert space for which the states $\psi$ have value $\lambda$ of $A$. }) the \emph{spectral decomposition} of $\hat{A}$ is 
\be
\hat{A}:=\sum_{m=1}^M a_m\hat{P}_m
\ee 
where $\{a_1\cdots a_m\}$ is the set of eigenvalues of $\hat{A}$, while each $\hat{P}_m$ is the projection operator onto the subspace of eigenvectors with eigenvalue $a_m$. In particular
\be
\hat{P}_m:=\sum_{j=1}^{d(m)}|a_m, j\rangle\langle a_m,j|
\ee
Here $j=1\cdots d(m)$ labels the degenerate eigenvectors with common eigenvalue $a_m$. In this setting any state $|\psi\rangle $ can be written as follows
\be
|\psi\rangle=\sum_{m=1}^M\sum_{j=1}^{d(m)}\langle a_m, j|\psi\rangle|a_m, j\rangle
\ee

Keeping this in mind, and inspired by case (i) above we define 
\ba\label{ali:ha}
h(\hat{A})|\psi\rangle&:=&\sum_{m=1}^M\sum_{j=1}^{d(m)}h(a_m)\langle a_m, j|\psi\rangle|a_m, j\rangle\\
&=&\sum_{m=1}^Mh(a_m)\hat{P}_m|\psi\rangle
\ea
Obviously this makes sense iff $f(a_m)$ is well defined.

Since definition \ref{ali:ha} is valid for all $|\psi\rangle\in \mh$ it follows that
\be
h(\hat{A}):=\sum_{m=1}^M\sum_{j=1}^{d(m)}|a_m, j\rangle\langle a_m, j|=\sum_{m=1}^Mh(a_m)\hat{P}_m
\ee
\end{enumerate}

\textbf{The Conditions FUNC Entails}\\\\
The conditions on the valuation function implied by FUNC are:
\begin{enumerate}
\item The $\emph{sum rule}$
\begin{equation}
V(\hat{A}+\hat{B})=V(\hat{A})+V(\hat{B})   \label{eq:sum}
\end{equation}
where $\hat{A}$ and $\hat{B}$ are such that $[\hat{A},\hat{B}]=0$.
\begin{proof}
To prove the above result we need the following theorem:
\begin{Theorem}:
Given any pair of self-adjoint operators $\hat{A}$ and $\hat{B}$, such that $[\hat{A},\hat{B}]=0$ and 
two functions $f,g:\Rl\rightarrow\Rl$, then there exists a third operator $\hat{C}$
such that $\hat{A}=f(\hat{C})$ and $\hat{B}=g(\hat{C})$.
\end{Theorem}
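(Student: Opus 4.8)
The plan is to reduce everything to the spectral theorem and then build $\hat{C}$ explicitly. Note first that, although the statement reads ``given two functions $f,g$'', what is actually needed — and what the proof should produce — is the \emph{existence} of an operator $\hat{C}$ together with functions $f,g:\Rl\to\Rl$ for which $\hat{A}=f(\hat{C})$ and $\hat{B}=g(\hat{C})$. This is precisely the input the sum rule requires: once such $\hat{C},f,g$ are in hand, FUNC gives $V(\hat{A})=f(V(\hat{C}))$ and $V(\hat{B})=g(V(\hat{C}))$, while $\hat{A}+\hat{B}=(f+g)(\hat{C})$ forces $V(\hat{A}+\hat{B})=(f+g)(V(\hat{C}))=V(\hat{A})+V(\hat{B})$.

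First I would treat the discrete case, which matches the spectral decomposition already introduced. Since $[\hat{A},\hat{B}]=0$, the two operators can be simultaneously diagonalised: their spectral projections commute, so writing $\hat{A}=\sum_i a_i\hat{P}_i$ and $\hat{B}=\sum_j b_j\hat{R}_j$ we have $[\hat{P}_i,\hat{R}_j]=0$ for all $i,j$. The key move is to pass to the common refinement $\hat{Q}_{ij}:=\hat{P}_i\hat{R}_j$. Because the factors commute and are orthogonal projections, each $\hat{Q}_{ij}$ is again an orthogonal projection, the family is mutually orthogonal, and $\sum_{ij}\hat{Q}_{ij}=\hat{1}$; moreover $\sum_j\hat{Q}_{ij}=\hat{P}_i$ and $\sum_i\hat{Q}_{ij}=\hat{R}_j$.

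Next I would choose, for each nonzero $\hat{Q}_{ij}$, a \emph{distinct} real number $c_{ij}$ and set $\hat{C}:=\sum_{ij}c_{ij}\hat{Q}_{ij}$, a self-adjoint operator whose spectral projections are exactly the $\hat{Q}_{ij}$. Distinctness of the $c_{ij}$ is what makes the final step legitimate: I define $f,g$ on the spectrum of $\hat{C}$ by $f(c_{ij}):=a_i$ and $g(c_{ij}):=b_j$ (extended arbitrarily, say by zero, off the spectrum). These are honest single-valued functions precisely because no label $c_{ij}$ is repeated, and then
\[
f(\hat{C})=\sum_{ij}f(c_{ij})\hat{Q}_{ij}=\sum_i a_i\Big(\sum_j\hat{Q}_{ij}\Big)=\sum_i a_i\hat{P}_i=\hat{A},
\]
with the identical computation giving $g(\hat{C})=\hat{B}$.

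The main obstacle is the passage beyond the finite/discrete setting to operators with continuous spectrum, where one cannot simply enumerate eigenprojections and hand out distinct labels. There the argument must be recast measure-theoretically: commuting self-adjoint operators generate an abelian von Neumann algebra, and (in the separable case) such an algebra is singly generated — this is von Neumann's theorem. Concretely, one encodes the joint spectral measure on $\Rl^2$ into a single real observable by means of an injective Borel map $\Rl^2\to\Rl$, takes $\hat{C}$ to be the corresponding operator, and recovers $f,g$ as the Borel functions inverting that encoding. I would therefore state the discrete version in full, since it is all the sum rule uses in the sequel, and indicate the functional-analytic extension by reference rather than grinding through the general spectral-measure construction.
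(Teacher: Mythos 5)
Your proof is correct, but there is nothing in the paper to compare it against: the lecture notes state this theorem without any proof at all, quoting it as a known result (it is von Neumann's classical theorem on commuting self-adjoint operators) inside the derivation of the sum rule, and then using it as a black box. So your argument supplies something the text omits. Three points are worth emphasising. First, your repair of the quantifiers is both correct and necessary: as literally worded (``given two functions $f,g$ \dots\ there exists $\hat{C}$'') the statement is false --- take $f=g=0$ and any nonzero $\hat{A}$ --- and the sum rule indeed only needs the existence of some triple $(\hat{C},f,g)$, which is exactly what you produce. Second, your discrete construction is the standard one and is carried out correctly: the common refinement $\hat{Q}_{ij}=\hat{P}_i\hat{R}_j$ of the two spectral resolutions consists of mutually orthogonal projections summing to $\hat{1}$ because the factors commute, the distinctness of the labels $c_{ij}$ is precisely what makes $f(c_{ij}):=a_i$ and $g(c_{ij}):=b_j$ well defined, and the computation $f(\hat{C})=\sum_i a_i\big(\sum_j\hat{Q}_{ij}\big)=\hat{A}$ closes the argument; this matches the level of the paper, which works throughout with discrete spectral decompositions. (For infinitely many $\hat{Q}_{ij}$ you should choose the $c_{ij}$ in a bounded set so that $\hat{C}$ stays bounded, but that is a one-line fix.) Third, your closing remark on the continuous case is accurate: there one needs von Neumann's theorem that a countably generated (e.g.\ separable-Hilbert-space) abelian von Neumann algebra is singly generated, with an injective Borel encoding $\Rl^2\rightarrow\Rl$ playing the role of the distinct labels $c_{ij}$; citing that result rather than reproving it is entirely appropriate for notes at this level.
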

Given two commuting operators $\hat{A}$ and $\hat{B}$ from the above theorem it follows 
that $\hat{A}=f(\hat{C})$ and
$\hat{B}=g(\hat{C})$, therefore there exists a function $h=f + g$ ($(f+h)(x):=f(x)+h(x)$) such that 
$\hat{A}+\hat{B}=h(\hat{C})$,
therefore
\begin{align*}
V(\hat{A}+\hat{B})&=V(h(\hat{C}))\\
&=h(V(\hat{C}))\\
&=f(V(\hat{C}))+g(V(\hat{C}))\\
&=V(f(\hat{C}))+V(g(\hat{C}))\\
&=V(\hat{A})+V(\hat{B})
\end{align*}
\end{proof}

\item The $\emph{product rule}$
\begin{equation}
V(\hat{A}\hat{B})=V(\hat{A})V(\hat{B})   \label{eq:product}
\end{equation}
where $\hat{A}$ and $\hat{B}$ are such that $[\hat{A},\hat{B}]=0$
\begin{proof}
Given $\hat{A}=f(\hat{C})$ and $\hat{B}=g(\hat{C})$ there exists a function $k=f g$ ($fg(x):=f(x)g(x)$) such that
$\hat{A}\hat{B}=k(\hat{C})$
therefore
\begin{align*}
V(\hat{A}\hat{B})&=V(k(\hat{C}))\\
&=k(V(\hat{C}))\\
&=f(V(\hat{C})) g(V(\hat{C}))\\
&=V(f(\hat{C}))\cdot V(g(\hat{C}))\\
&=V(\hat{A}) V(\hat{B})
\end{align*}
\end{proof}
\end{enumerate}
As a consequence of the product and sum rules we obtain the following equalities:
\begin{equation}
\begin{split}
V_{|\psi\rangle}(\hat{\mathds{1}})&=1\\
V_{|\psi\rangle}(\hat{0})&=0\\
V_{|\psi\rangle}(\hat{P})&=0\hspace{.1in}or\hspace{.1in}1    \label{eq:cecia}
\end{split}
\end{equation}
\begin{proof}
\begin{enumerate}
\item Given any physical quantity $B$ (with associated self-adjoint operator $\hat{B}$), from the product rule we have that, for $\hat{A}:=\hat{\mathds{1}} $, the following relation holds:
\be
V(\hat{\mathds{1}} \hat{B})=V(\hat{B})=V(\hat{\mathds{1}})V(\hat{B})=V(\hat{B})
\ee
This implies that $V(\hat{\mathds{1}})=1$.
\item Given any physical quantity $B$ (with associated self-adjoint operator $\hat{B}$), from the sum rule we have that, for $\hat{A}:=\hat{0} $, the following relation holds:
\be
V(\hat{0} +\hat{B})=V(\hat{B})=V(\hat{0})+V(\hat{B})=V(\hat{B})
\ee
This implies that $V(\hat{0})=0$.
\item Given a projection operator $\hat{P}$ we know that $\hat{P}^2=\hat{P}$ therefore
\be
(V(\hat{P})^2=V(\hat{P}^2)=V(\hat{P})
\ee
It follows that 
\be
V(\hat{P})=1\text{ or }0
\ee
\end{enumerate}
\end{proof}

Since quantum propositions can be expressed as projection operators (the reason will be explained later on in the course), what the last result implies is that, for any given state $|\psi\rangle$,  the valuation function can only assign value \emph{true} or \emph{false} to propositions. 
\\
\\

Since the set of all eigenvectors of a self-adjoint operator $\hat{A}$ forms an 
orthonormal basis for $\mathcal{H}$, then we can define the resolution of unity in terms of the projection operators corresponding to the eignevectors:
\begin{equation}
\hat{1}=\sum_{m=1}^M\hat{P}_m   \label{eq:unity1}
\end{equation}
From $\ref{eq:sum}$, $\ref{eq:product}$, $\ref{eq:cecia}$ and $\ref{eq:unity1}$ we conclude 
(for discreet case but it can easily be extended to the continuous case)
\begin{equation}
V(\hat{1})=V\left(\sum_{m=1}^M\hat{P}_m\right)=\sum_{m=1}^M V(\hat{P}_m)=1   \label{eq:ci}
\end{equation}
What this equation means is that one and only one of the projectors that form the resolution of 
unity gets assigned the value 1 (true), while the rest gets assigned the value 0 (false), i.e. the 
value assignment
is said to be `` mutually exclusive and collectively exhaustive" 
\cite{topos4}.
However, the Kochen-Specker theorem will show that it is impossible to give simultaneous values to all observables associated 
with a set of self-adjoint operators,
in such a way that the values are ``mutually exclusive and collectively exhaustive". 
Since the property of values of being ``mutually exclusive and collectively exhaustive" is a consequence
of FUNC, it is worth analysing how the condition FUNC is derived from the formalism of Quantum 
theory.\\
FUNC is a direct consequence of three assumptions and a principle 
present in quantum theory :
\begin{itemize}
\item [1)] \textbf{Statistical functional compositional principle}:given a self-adjoint operator $\hat{A}$
that represents an observable A and a function $f:\Rl\rightarrow\Rl$,
then for an arbitrary real number $a$ we have the following equality:
\begin{equation*}
prob[V(f(\hat{A}))=a]=prob[f(V(\hat{A}))=a]
\end{equation*}
In order to prove the above principle we have to define 
the relation between projector operators and their respective 
characteristic functions.\\
Let us consider the following characteristic function $\chi_{r}$ such that
\begin{equation*}
\chi_{r}(t)=\begin{cases}1& {if\hspace{.1in}t=r}\\
0& otherwise
\end{cases}
\end{equation*}
It then follows that, given a self-adjoint operator $\hat{A}$, whose spectral decomposition (assume discrete) contains 
the spectral
projector $\hat{P}_m$, one can write: 
\begin{equation}
\chi_r(\hat{A}):=\sum_{m=1}^M\chi_r(a_m)\hat{P}_m=\begin{cases} \hat{P}_m & {if\hspace{.1in}a_m=r}\\
0&  otherwise
\end{cases}        \label{eq:chi}
\end{equation}
What equation $\ref{eq:chi}$ uncovers is that $\chi_r(\hat{A})=\hat{P}_m$ iff r is the eigenvalue $a_m$ of 
$\hat{A}$. 
Moreover, given a function $f:\sigma(\hat{A})\rightarrow\Rl$ (where $\sigma(\hat{A})$ represents the spectrum of $\hat{A}$) we have:
\begin{equation}
\chi_r(f(\hat{A}))=\chi_{f^{-1}(a)}(\hat{A})  \label{eq:chi2}
\end{equation} 
We know that the statistical algorithm \cite{topos1} for projection operators is
\begin{equation}
prob(V(\hat{(A)}=a_m)=Tr(\hat{P}_m\hat{P}_{|\psi\rangle})   \label{eq:tr}
\end{equation}
where $\hat{P}_{|\psi\rangle}:=|\psi\rangle\langle\psi|$.
This means that if a measurement of an observable A is made on a system in state $|\psi\rangle$,
then the probability of obtaining as a result the eigenvalue $a_m$ is given by $\ref{eq:tr}$.\\ 
Therefore from $\ref{eq:chi}$ and $\ref{eq:tr}$ we get
\begin{equation*}
prob(V(\hat{A})=a_m)=Tr(\chi_{a_m}(\hat{A})\hat{P}_{|\psi\rangle})
\end{equation*}
We can now prove the statistical functional compositional principle.
\begin{proof}
Using equations: $\ref{eq:chi}$ ,$\ref{eq:tr}$ and $\ref{eq:chi2}$
we can write the statistical algorithm for projector operators as follows:
\begin{align*}
prob(V(f(A))=b)&=
Tr((\chi_{f^{-1}(b)}(\hat{A})\hat{P}_{|\psi\rangle})\\
&=Tr(\hat{P}_{f^{-1}(b)}\hat{P}_{|\psi\rangle})\\
&=prob(V_{|\psi\rangle}(A)=f^{-1}(b))
\end{align*}
but
\begin{equation*}
V(A)=f^{-1}(b)\hspace{.2in}\Leftrightarrow\hspace{.2in}f(V(A))=b
\end{equation*}
therefore
\begin{equation*}
prob(V(f(A))=b)=prob(f(V(A))=b)
\end{equation*}
\end{proof}
\item [2)]\textbf{Non-contextuality}: the value of observables is independent of the measurement context,
i.e. the value of each observable is independent of any other observables evaluated at the same 
time.
\item [3)] \textbf{Value definiteness}: observables possess definite values at all times.
\item [4)]\textbf{Value realism}: to each real number $\alpha$, such that $\alpha=prob(V(\hat{A})=\beta)$,
for an operator $\hat{A}$ there corresponds an observable A with value $\beta$.
\end{itemize}
From the above conditions (1),(2),(3) and (4) the FUNC condition follows.
\begin{proof}
Consider an observable B represented by the self-adjoint operator $\hat{B}$.
From (3) we deduce that $\hat{B}$ possesses a value: $V(\hat{B})=b$.
Given a function $f:\Rl\rightarrow\Rl$ we obtain the quantity $f(V(\hat{B}))=f(b)=a$.
Applying (1) we get
$prob[f(V(\hat{B}))=a]=prob[V(f(\hat{B}))=a]$ which means that there exists a self-adjoint operator of the form 
$f(\hat{B})$. 
From (4) it then follows that the corresponding observable for $f(\hat{B})$ has value $(a)$, therefore 
$f(V(\hat{B}))=V(f(\hat{B}))$.
From (2) this result is unique, therefore FUNC follows.
\end{proof}
We now 
state the Kochen-Specker theorem. 
 \begin{Theorem}
\textbf{Kochen-Specker Theorem}:
if the dimension of $\mathcal{H}$ is greater than 2 then, there does not exist any valuation function
$V:\mathcal{O}\rightarrow\Rl$ from the set $\mathcal{O}$ of all bounded self-adjoint operators 
$\hat{A}$ of $\mathcal{H}$ to the reals $\Rl$, such that the functional composition principle  
is satisfied for all $\hat{A}\in\mathcal{O}$.
\end{Theorem}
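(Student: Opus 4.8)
The plan is to strip the statement of its analytic content---which has in fact already been assembled---and reduce it to a purely geometric impossibility about colouring rays, and then to exhibit an explicit finite configuration of directions that admits no such colouring. From FUNC we have already derived the sum rule, the product rule, and in particular that $V(\hat P)\in\{0,1\}$ for every projection $\hat P$, together with $\sum_{m=1}^M V(\hat P_m)=1$ for any resolution of the identity $\hat 1=\sum_{m=1}^M\hat P_m$ into mutually orthogonal projectors. First I would record what these facts say about rank-one projectors: writing $\hat P_u=\ket{u}\langle u|$ for a unit vector $u$, a valuation becomes an assignment $V(\hat P_u)\in\{0,1\}$ to the rays of $\mh$ with the property that in every orthonormal basis exactly one ray is coloured $1$ and all the rest $0$---the ``mutually exclusive and collectively exhaustive'' rule.

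Next I would reduce the dimension to three. Since $\dim\mh>2$, fix a complete orthonormal basis $e_1,\dots,e_n$; by the colouring rule exactly one of them, say $e_1$, is coloured $1$. Let $K=\mathrm{span}(e_1,e_2,e_3)$, a three-dimensional subspace with projector $\hat P_K=\hat P_{e_1}+\hat P_{e_2}+\hat P_{e_3}$. Because these three projectors commute, the sum rule gives $V(\hat P_K)=1+0+0=1$, and more generally $V(\hat P_K)=\sum_i V(\hat P_{u_i})=1$ for any orthonormal basis $u_1,u_2,u_3$ of $K$. Thus, restricted to the rays of $K\cong\Rl^3$, the valuation genuinely colours every orthogonal triple with exactly one $1$. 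This refinement is needed because a triple inside $K$ is not a complete basis of $\mh$ when $n>3$; it is the condition $V(\hat P_K)=1$ that restores the ``exactly one'' in place of the weaker ``at most one''. It therefore suffices to prove that no such colouring of $\Rl^3$ exists.

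The heart of the argument---and the step I expect to be the genuine obstacle---is the construction and verification of a finite, ``uncolourable'' set of directions in $\Rl^3$. This is the original geometric achievement of Kochen and Specker, whose list comprised $117$ vectors; in the write-up I would instead present one of the smaller modern configurations (for example Peres's $33$-ray set, or the $31$ rays of Conway and Kochen) together with its table of orthogonality relations. The verification is a finite case analysis: assuming a consistent colouring existed, one propagates the rule ``exactly one $1$ in each orthogonal triple'' through the orthogonality graph until some ray is forced to be simultaneously $0$ and $1$. Producing explicit coordinates, tabulating which triples are mutually orthogonal, and carrying the forced values through to the contradiction is where all the real work lies; everything preceding it is a formal consequence of FUNC.

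Finally I would close by noting that this three-dimensional obstruction forbids a valuation in every dimension $>2$: any valuation on $\mathcal O$ for $\mh$ would, by the two reductions above, yield a consistent $\{0,1\}$ colouring of the rays of the subspace $K\cong\Rl^3$, which is impossible. Hence no $V:\mathcal O\to\Rl$ satisfying FUNC can exist, which is precisely the assertion of the theorem.
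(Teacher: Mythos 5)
Your proposal is correct in outline, but it follows a genuinely different route from the paper's. The preliminary step---converting FUNC into the rule that $V$ assigns values in $\{0,1\}$ to projectors, with exactly one value $1$ in each resolution of the identity---is common to both. From there the paper reproduces Kernaghan's argument: it stays inside a real four-dimensional Hilbert space, writes down an explicit table of $11$ orthonormal bases of $4$ vectors each, arranged so that every vector occurs in an even number of columns, and derives the contradiction by pure parity: one white entry per column forces $11$ (odd) white entries in total, while the even multiplicity of each vector forces an even total. You instead reduce to a three-dimensional subspace $K$ (using the sum rule to show $V(\hat{P}_K)=1$, hence exactly one value $1$ in every orthonormal triple of $K$) and then appeal to an uncolourable finite set of rays in $\Rl^3$---Kochen--Specker's $117$, Peres's $33$, or Conway--Kochen's $31$---verified by propagating forced values through the orthogonality graph. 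The trade-offs are real. Your subspace reduction is what makes the theorem hold for \emph{every} $\dim\mh>2$, including $\dim\mh=3$, which the paper's four-dimensional argument does not directly cover (the paper never performs the analogous reduction, so as written its proof only establishes the claim for spaces admitting Kernaghan's configuration); this is a genuine improvement in completeness. On the other hand, Kernaghan's parity check is a one-line verification once the table is granted, whereas your route requires either a lengthy finite case analysis or an appeal to the literature---and as your write-up stands, the uncolourable configuration is cited rather than exhibited, so the heart of the proof still has to be filled in to make the argument self-contained. One small point of care: $K$ is a \emph{complex} three-dimensional subspace, so you should remark that a configuration of real vectors suffices, because orthogonality of real vectors with respect to the complex inner product coincides with their orthogonality in $\Rl^3$.
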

Another way of stating the theorem which seems more useful for developing a proof is the following:
\begin{Theorem}
\textbf{Kochen-Specker Theorem}:
Given a Hilbert space $\mathcal{H}$ such that $dim(\mathcal{H})>2$ and a set $\mathcal{O}$ of 
self-adjoint operators $\hat{A}$ which represent observables, then 
the following two statements are contradictory:
\begin{enumerate}
\item all observables associated with projectors in $\mathcal{O}$ have values simultaneously, i.e. 
they are mapped uniquely onto the reals.
\item the values of observables follow the functional composition principle (FUNC).
\end{enumerate}
\end{Theorem}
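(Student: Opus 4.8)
The plan is to derive from the two hypotheses a two-valued coloring of the rays of $\mh$ and then to show that no such coloring can exist once $\dim(\mh)>2$. First I would isolate the only structural input required: hypothesis (1) guarantees that every projector in $\mathcal{O}$ carries a value, and hypothesis (2) makes that value obey FUNC, so the product and sum rules apply. Exactly as in \ref{eq:cecia} and \ref{eq:ci}, this forces $V(\hat{P})\in\{0,1\}$ for every projector $\hat{P}$ and $\sum_{m=1}^{M}V(\hat{P}_m)=1$ for every orthogonal resolution of the identity $\hat{1}=\sum_{m=1}^{M}\hat{P}_m$. Calling a ray \emph{white} when $V(\hat{P})=1$ and \emph{black} when $V(\hat{P})=0$, this says precisely that in every complete orthogonal family of rays exactly one ray is white.

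Since $\dim(\mh)>2$, I would next pass to a three-dimensional subspace $W$. Choosing $W$ to contain some ray already colored white (such a ray exists, since every orthonormal basis must contain one) forces the projector onto $W^{\perp}$ to be black, so the identity relation restricts cleanly to: in every orthonormal triad inside $W$ exactly one direction is white. Identifying rank-one projectors on $W$ with unit directions $\vec{u}\in\Rl^{3}$, the problem becomes purely combinatorial-geometric, governed by two constraints: (A) no two orthogonal directions are both white, since a triad containing them would have value-sum at least $2$; and (B) every orthonormal triad contains at least one white direction.

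The heart of the theorem is that (A) and (B) cannot be met simultaneously. To prove this I would exhibit an explicit finite set of directions $\{\vec{u}_1,\dots,\vec{u}_N\}\subset\Rl^{3}$ together with the list of orthogonality relations they realize, engineered so that its orthogonality graph admits no $\{0,1\}$-coloring obeying (A) and (B). The verification is by forced propagation: assuming one direction white, (A) blackens its orthogonal partners, (B) then compels a white direction in a neighbouring triad, and after finitely many such steps two orthogonal directions are both forced white, contradicting (A). I would present one of the standard economical configurations for the certificate---for instance the original Kochen-Specker set of $117$ directions, or a smaller modern variant such as Peres' $33$-direction set---and check that it is uncolorable.

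The hard part is precisely this geometric core: constructing the finite, explicitly uncolorable family of rays and certifying its uncolorability. The passage from FUNC to the coloring constraints is only bookkeeping on top of the sum and product rules already established, whereas the non-existence of the coloring follows from no short algebraic identity but from the combinatorial rigidity of a delicately interlocked arrangement of directions in $\Rl^{3}$, whose checking is finite but intricate. Once such a set is in hand the contradiction with (1) and (2) is immediate, proving that the two statements cannot hold together.
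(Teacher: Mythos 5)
Your proposal is correct in outline but follows a genuinely different route from the paper's. Both arguments share the same first phase: hypotheses (1) and (2) together with the sum and product rules force $V(\hat{P})\in\{0,1\}$ on projectors, with exactly one value $1$ in each orthogonal resolution of the identity, so the theorem becomes a ray-colouring problem. From there you reduce to a three-dimensional subspace $W$ (taking care, correctly, to place an already-white ray inside $W$ so that the constraint restricts cleanly to triads of $W$) and then invoke one of the classical uncolourable configurations in $\Rl^3$ --- the original Kochen--Specker $117$-ray set or Peres' $33$ rays --- certified by forced propagation of colours. The paper instead reproduces Kernaghan's simplified proof: it stays in a \emph{real four-dimensional} space, lists $11$ orthonormal bases built from a stock of vectors arranged so that every vector occurs in an even number (two or four) of the bases, and obtains the contradiction by pure parity: one white entry per column gives $11$ white entries counted with multiplicity, yet the evenness of every vector's multiplicity forces that count to be even. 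The trade-offs are real. Your route genuinely covers every $\dim\mathcal{H}\geq 3$, whereas the paper's four-dimensional certificate cannot even be embedded when $\dim\mathcal{H}=3$, and the paper is silent both on that case and on the restriction from higher dimensions down to its chosen subspace --- a step you do spell out. Conversely, the paper's certificate buys a one-line verification, while the uncolourability check for $117$ or $33$ rays is a lengthy case analysis that your proposal defers to the literature rather than carries out; as written, your plan rests on an imported certificate, and exhibiting and verifying that finite ray set is the one substantive piece you would still need to supply to make the proof self-contained.
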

\section{Proof of Kochen-Specker Theorem}
There are various proofs of the Kochen-Specker Theorem. We will report a simplified 
version of the proof due to Kernaghan (1994) \cite{topos12}.\\

In the previous section we saw that the properties of the valuation functions implied that $V$ can only assign the value \emph{true} or \emph{false} to any projection operator $\hat{P}$, in such a way that this assignment is mutually exclusive and collectively exhaustible. 

A special case would be when $\hat{P}_i:=|e_i\rangle\langle e_i|$ where $\{|e_1\rangle,|e_2\rangle, \cdots, |e_n\rangle\}$ is an orthonormal basis of the Hilbert space $\mh^n$. In this setting the valuation function must assign the value 1 to only one of the projection operators and zero to all the rest. Moreover, if the same projection operator belongs to two different ONB, the value assigned to this projection operator by $V$ has to be the same, independently to which set it is considered to belong. This is what it is meant by \emph{non-contextuality}. Kernaghan, in his proof of the K-S theorem considers a real 4 dimensional
Hilbert space $\mathcal{H}_4$ (there is no loss in generality in considering the Hilbert space to 
be real). He then chooses 11 sets of 4 orthogonal vectors. Each vector is contained in either 2 of these sets or 4, so that there is some correlations between the ONB.
The Kochen-Specker theorem then reduces to a colouring problem, i.e. ``within 
every set of orthogonal vectors in $\mathcal{H}_4$ exactly one must be coloured white (1, true) while 
the remaining black (0, false)". Writing down this collection of vectors, we would end up with the 
following table where each column denotes a set of 4 orthogonal vectors in a particular ONB:

\begin{center}
\hspace{-.4in}\begin{tabular}{|c|c|c|c|c|c|c|c|c|c|c|c|} \hline 
\tiny{$|e_1\rangle$}&\tiny{1,0,0,0}&\tiny{1,0,0,0}&\tiny{1,0,0,0}&\tiny{1,0,0,0}&\tiny{-1,1,1,1}&\tiny{-1,1,1,1}&\tiny{1,-1,1,1}&\tiny{1,1,-1,1}&\tiny{0,1,-1,0}&\tiny{0,0,1,-1}&\tiny{1,0,1,0} \\ \hline
\tiny{$|e_2\rangle$}&\tiny{0,1,0,0}&\tiny{0,1,0,0}&\tiny{0,0,1,0}&\tiny{0,0,0,1}&\tiny{1,-1,1,1}&\tiny{1,1,-1,1}&\tiny{1,1,-1,1}&\tiny{1,1,1,-1}&\tiny{1,0,0,-1}&\tiny{1,-1,0,0}&\tiny{0,1,0,1} \\ \hline
\tiny{$|e_3\rangle$}&\tiny{0,0,1,0}&\tiny{0,0,1,1}&\tiny{0,1,0,1}&\tiny{0,1,1,0}&\tiny{1,1,-1,1}&\tiny{1,0,1,0}&\tiny{0,1,1,0}&\tiny{0,0,1,1}&\tiny{1,1,1,1}&\tiny{1,1,1,1}&\tiny{1,1,-1,-1} \\ \hline
\tiny{$|e_4\rangle$}&\tiny{0,0,0,1}&\tiny{0,0,1,-1}&\tiny{0,1,0,-1}&\tiny{0,1,-1,0}&\tiny{1,1,1,-1}&\tiny{0,1,0,-1}&\tiny{1,0,0,-1}&\tiny{1,-1,0,0}&\tiny{1,-1,-1,1}&\tiny{1,1,-1,-1}&\tiny{1,-1,-1,1} \\ \hline
\end{tabular}
\end{center}

We now want to assign value true (colour white) to one and only one projection operator associated to each vector in each column. This requirement represents condition $\ref{eq:ci}$ above. However, it is easy to see from the table that such condition ($\ref{eq:ci}$) is not satisfied.
In fact, if it were satisfied we would end up with 11 entries being coloured white,  
since each column would have exactly one entry that is coloured white and 
there are 11 columns. But, since each vector appears twice we end up with an even number of white entries which is greater than 11.  Therefore, we conclude that it is impossible to obtain a colouring of a set 
of orthogonal vectors that is consistent with condition $\ref{eq:ci}$. Remember that we have assumed a non-contextual assignment of the entries, 
i.e. we are assuming that same vectors get assigned same colour independently of the column they 
belong to.

Although this is a very simplified version of the proof of the Kochen-Specker theorem, the main 
idea 
is the same as the main idea in the original proof, namely:
given a set of orthogonal vectors in $\mathcal{H}$ it is impossible to assign to each of them a 
set of numbers
$\{1,0,0,0....0\}$ where only one entry is equal to 1, i.e. it is impossible to give simultaneous 
values to all observables 
while respecting the FUNC condition.  
\section{Consequences of the Kochen-Specker Theorem}
The implications of the Kochen-Specker theorem is that one or both of the following two assumptions must be dropped:
\begin{enumerate}
\item [i] The set of truth values is represented by $\{0,1\}$.
\item [ii] The functional composition principle.
\end{enumerate}

\noindent
In the topos approach we abandon the idea that the set of truth values is only $\{0,1\}$. In fact in this approach we utilise a multivalued logic which, in turn, will imply the adoption of an intuitionistic logic.

\noindent
On the other hand, abandoning FUNC would entail abandoning some or all of the three assumptions:
\begin{enumerate}
\item\emph{Non-contextuality}
\item\emph{Value Definiteness}
\item\emph{Value Realism}
\end{enumerate}
from which it derives.\\
In particular, if the FUNC principle gets dropped, then quantum theory turns out to be contextual and non-realist. To understand this situation we first have to introduce the notion of simultaneously measurable observables:

\begin{Definition}
Given two observables $\hat{A}$ and $\hat{B}$ we say that they are \textbf{simultaneously measurable} 
iff $[\hat{A}\hat{B}]=0$.
\end{Definition}
Let us now consider two observables: $\hat{A}$, $\hat{B}$, such that $[\hat{A},\hat{B}]\neq0$, and $\hat{A}$ and $\hat{B}$ have a common projection in their spectral decomposition 
\ba
\hat{A}&=&a_1\hat{P}+a_2\hat{P}_{a_2}+a_3\hat{P}_{a_3}\\
\hat{B}&=&b_1\hat{P}+b_2\hat{P}_{b_2}+b_3\hat{P}_{b_3}
\ea
From equation $\ref{eq:chi}$ it 
follows that $\hat{P}$ can be expressed in terms of $\hat{A}$ or of $\hat{B}$, i.e.:
\begin{align*}
\hat{P}&=\chi_{a_1}(\hat{A}):=\sum_{m=1}^{M}\chi_{a_1}(a_m)\hat{P}_{a_m}\\
\hat{P}&=\chi_{b_1}(\hat{B}):=\sum_{m=1}^{M}\chi_{b_1}(b_m)\hat{P}_{b_m}
\end{align*}

Since commuting operators correspond to orthogonal operators, then, if we choose to express 
$\hat{P}$ in terms of $\hat{A}$ i.e.
$\hat{P}=\chi_{a_1}(\hat{A})$ the commuting operators of $\hat{P}$ are: $\hat{P}_{a_2}$,
$\hat{P}_{a_3}$,$\hat{P}\vee\hat{P}_{a_2}$,$\hat{P}\vee\hat{P}_{a_3}$ and $\hat{P}^{\perp}$.
If, instead, we choose to express $\hat{P}$ in terms of $\hat{B}$, i.e. 
$\hat{P}=\chi_{b_1}(\hat{B})$, then the commuting operators of $\hat{P}$
would be: $\hat{P}_{b_2}$, $\hat{P}_{b_3}$,$\hat{P}\vee\hat{P}_{b_2}$,$\hat{P}\vee\hat{P}_{b_3}$
and $\hat{P}^{\perp}$.

Now, as a consequence of the FUNC condition we obtain that
\ba
V(\hat{P})&=&V(\chi_{a_1}(\hat{A})=\chi_{a_1}(V(\hat{A}))\\
V(\hat{P})&=&V(\chi_{b_1}(\hat{B})=\chi_{b_1}(V(\hat{B}))
\ea
which implies that $\chi_{a_1}(V(\hat{A}))=\chi_{b_1}(V(\hat{B}))$.

However if the FUNC condition does not hold, then we have that
\be
\chi_{a_1}(V(\hat{A}))\neq\chi_{b_1}(V(\hat{B}))
\ee
What this implies is that the value of $\hat{P}$ will depend on whether $\hat{P}$ is considered as belonging to the spectral decomposition of $\hat{A}$ or that of $\hat {B}$. In fact let us assume that  $\chi_{a_1}(V(\hat{A}))=1$, which means that $V(\hat{A})=a_1$. Since $a_1 $ is the eigenvalue with corresponding projection operator $\hat{P}$, then it follows that $V(\hat{P})=1$, since $\hat{P}$ projects on the subspace of eigenvectors which have eigenvalue precisely $a_1$. However, if $\chi_{a_1}(V(\hat{A}))\neq\chi_{b_1}(V(\hat{B}))$, then $\chi_{b_1}(V(\hat{B}))\neq 1$ which must imply (since it is a characteristic function) that $\chi_{b_1}(V(\hat{B}))=0$. If this is the case then $\hat{P}$ is false, i.e. $V(\hat{P})=0$.

%
%
%

As a consequence of the above, a physical quantity $A$ is not represented by a unique operator in quantum theory. On the contrary, each operator has different \emph{meaning} depending on what other operators are considered at the same time. This implies that the quantisation map $A\mapsto\hat{A}$ is one to many.

The contextuality derived from dropping FUNC has great impact on the `realism' of quantum theory. In fact, when one says that a given quantity has a certain value, we mean that that quantity ``possesses" that value,  and the concept of ``possession" is independent of the context chosen. However, if our theory is contextual, what does it mean exactly that a quantity has a given value? It would seem that in a contextual theory there is not room for a realist interpretation. In fact, if we measured two pair of quantities $(A, B)$ and $(A, C)$ and obtained the values $(a, b)$ and $(a^{'}, c)$, respectively, such that $a\neq a^{'}$, then what values does the quantity $A$ actually posses\footnote{It should be noted that the probabilistic predictions of quantum theory are not affected by the notion of contextuality. In fact, the result of measuring a property $A$ of a system does not depend on what else is measured at the same time, since the probability of obtaining $a_m$ as the value of $A$ will always be $\langle|\psi|\hat{P}_{a_m}|\psi\rangle$.}?

Thus, if one drops the FUNC principle we would end up with a non-realist contextual interpretation of quantum theory, which clashes with the realism of classical physics and our common sense.

\noindent
The question then arises: what if, instead, we dropped the first assumption, namely if we allowed for the truth values to be in some larger set other than the set $\{0,1\}$? This is precisely what is done in topos quantum theory. In fact in this setting the FUNC principle is conserved, but the set of truth values is replaced by some larger set than simply $\{0,1\}$ leading to a multivalued logic. The interpretation we end up with is not strictly realist, due to the multivalued nature of the resulting logic. However we reach a more realist interpretation of the theory since now it makes sense to say that values are possessed by quantities in a context independent way. 

At this stage it should be pointed out that in the topos formulation of quantum theory there will be the notion of \emph{contextuality} albeit its interpretation will be very different, thus it will not impinge on the notion of realism (of the interpretation). 

\chapter{Lecture 2/3}
 I this lectures I will explain how, in order to work with category theory, one has to change from an internal description of mathematical objects (as is done in set theory) to an external/relational description. To fully understand this change in perspective I will describe how the axioms of a group can be given by both an internal perspective and an external one.

\noindent
I will then give an axiomatic definition of what a category is, introducing also the concept of a subcategory. This will be augmented with various examples of categories and a list of the most common categories which appear in physics.

\noindent 
I will conclude with the concept of duality, very important in category theory and in what we will be doing in the rest of the course.

\section{Change of Perspective}
``Category Theory allows you to work on structures without the need first to pulverize them into set theoretic dust" (Corfiel).\\

The above quotation explains, in a rather pictorial way, what \emph{category theory} and, in particular \emph{topos theory} are really about.
In fact, \emph{category theory} and, in particular, \emph{topos theory} allow to abstract from the specification of points (elements of a set) and 
functions between these points to a universe of discourse in which the basic elements are arrows, and any property is given 
in terms of compositions of arrows.

The reason for the above characterisation is that the underlining philosophy behind category theory (and topos theory) is that of describing mathematical objects from an external point of view, i.e. in terms of relations.\\
This is in radical contrast to \emph{set theory}, whose approach is essentially internal in nature. By this we mean that the basic/primitive notions of what sets are and the belonging relations between sets, are defined in terms of the elements which belong to the sets in question, i.e. an internal perspective.

In order to be able to implement the notion of external definition we first need to define two important notions i) the notion of a map or arrow, which is simply an abstract characterisation\footnote{By abstract characterisation here we mean a notion that does not depend on the sets or objects between which the arrow is defined} of the notion of a function between sets; ii) the notion of an ``equation" in categorical language. We will first start with the notion of a map.

Given two general objects $A$ and $B$ ( not necessarily sets) an arrow $f$ is said to have domain $A$ and codomain $B$ if it goes from $A$ to $B$, i.e. $f:A\rightarrow B$. It is convention to denote $A=dom(f)$ and $B=cod(f)$.

\noindent
We will often draw such an arrow as follows:
\be
A\xrightarrow{f}B
\ee

Given two arrows $f:A\rightarrow B$ and $g:B\rightarrow C$, such that $cod(f)=dom(g)$ then we can \emph{compose} the two arrows obtaining $g\circ f:A\rightarrow C$. The property of \emph{composition} is drawn as follows:
\be
A\xrightarrow{f}B\xrightarrow {g}C
\ee

For each object (or set $A$) there always exists and \emph{identity arrow} $id_A:A\rightarrow A$
\be
id_A:A\rightarrow A\text{, such that }id_A(a)=a\forall \;\;a\in A
\ee

The collection of arrows between various objects satisfy two laws:
\begin{enumerate}
\item [i)] \emph{Associativity}: given three arrows $f:A\rightarrow B$, $g:B\rightarrow C$ and $h:C\rightarrow D$ with appropriate domain and codomain relations,  we then have
\be
h\circ (g\circ f)=(h\circ g)\circ f
\ee
\item [ii)] \emph{Unit law}: given $f:A\rightarrow B$, $id_A:A\rightarrow A$ and $id_B:B\rightarrow B$ the following holds
\be
f\circ id_A=f=id_B\circ f
\ee
\end{enumerate}

The next step is to try and define the analogue of an equation in an abstract categorical language. This is done through the notion of commutative diagrams. So, what is a diagram? A diagram is defined as follows:
\begin{Definition}
A  graph is a collection of vertices $\bullet_{x_i}$ and directed edges $e:\bullet_{x_1}\rightarrow \bullet_{x_2}$ where $\bullet_{x_1}$ is the source vertex while $\bullet_{x_2}$ is the target. If the vertices are labelled by objects $X_i$ and the edges are labeled by an arrow, such that each $e:\bullet_{x_1}\rightarrow \bullet_{x_2}$ is now labelled as $f:X_1\rightarrow X_2$, then we say that the graph is actually a diagram.
\end{Definition}

A typical diagram will be either a triangle or a square
\[\xymatrix{
X_1\ar[dd]_{f}\ar[ddrr]^h&&&&&&\\
&&&&&&\\
X_2\ar[rr]^{g}&&X_3&&&&
}
\xymatrix{
X_4\ar[dd]_{i}\ar[rr]^{j}&&X_1\ar[dd]^f\\
&&\\
X_5\ar[rr]^k&&X_2\\
}\]

An ``equation" is then given by the concept of commutativity, i.e. we say that the above diagrams commute iff
\be
g\circ f=h;\;\;\;\;f\circ j=k\circ i
\ee
Care should be taken since commutativity is not as strict a condition as one might think. In particular, if we have the following commuting diagram
 \[\xymatrix{
A\ar[rr]^g&&B\ar@<3pt>[rr]^{\;\;\;\;\;\;\; f} \ar@<-3pt>[rr]_{\;\;\;\;\;\;\; h}&&C\\
}\]
we can only imply that $f\circ g=h\circ g$, but not that $f=h$

Obviously diagrams can be combined together to form a bigger diagram, as long as the rules pertaining composition of arrows hold. Thus, considering the commuting diagrams in the above example we can combine them to obtain
\[\xymatrix{
X_4\ar[dd]_{i}\ar[rr]^{j}&&X_1\ar[dd]^f\ar[ddrr]^h&&\\
&&&&\\
X_5\ar[rr]^k&&X_2\ar[rr]^{g}&&X_3\\
}\]

An important theorem in diagram language is the following: 

\noindent
consider the diagram 
\[\xymatrix{
A\ar[dd]\ar[rr]&&B\ar[dd]\ar[rr]&&C\ar[dd]\\
&&&&\\
D\ar[rr]&&E\ar[rr]&&F\\
}\]
If any two of the three rectangles commute then so will the remaining one.
\\

Now that we know how arrows and ``equations" are defined we are ready to give some examples of how the same concepts can be described both internally, using set theory and externally, using a categorical language.
\begin{center}
\begin{tabular}{l l c r } 
&internal &  & external \\\hline
Element&$a\in S$ & $\stackrel{a=f(*)}{\leftrightarrow} $ &$ \{*\}\xrightarrow{f} S $\\ 
Subset&$A\subseteq S $&$\leftrightarrow{} $&$S\rightarrow \{0,1\} $\\
Associative binary operation& $a\cdot(b\cdot c) =(a\cdot b)\cdot c $ $\forall a,b,c\in S$ &$\leftrightarrow$ &$\mu:S\times S\rightarrow S$, such that
\end{tabular}
\end{center}
\[\xymatrix{
&&&&&&&&&&S\times S\times S\ar[rr]^{\mu\times id_{S}}\ar[dd]_{id_{S}\times \mu}&&S\times S\ar[dd]^{\mu}\\
&&&&&&&&&&&&\\
&&&&&&&&&&S\times S\ar[rr]^{\mu}&&S\\
}\] 
\hspace{5in}commutes.\\\\

It is interesting how the definition and the axioms of a group can be described in an external way. In particular, we have
\begin{enumerate}
\item \emph{Associativity}: ($\forall g_1, g_2, g_3\in G$, $g_1(g_2g_3)=(g_1g_2)g_3$).
\be
\mu\circ (id_G\times\mu)=\mu\circ (\mu\times id_G)
\ee
where 
\ba
\mu:G\times G&\rightarrow& G\\
\langle g_1, g_2\rangle&\mapsto& g_1g_2
\ea 
and 
\ba
id_G:G&\rightarrow& G\\
g&\mapsto&g
\ea
Equivalently, associativity can be defined as the condition by which the diagram below commutes
\[\xymatrix{
G\times G\times G\ar[rr]^{\mu\times id_{G}}\ar[dd]_{id_{G}\times \mu}&&G\times G\ar[dd]^{\mu}\\
&&\\
G\times G\ar[rr]^{\mu}&&G\\
}\]
\item \emph{Identity Element}: ($\forall g\in G$, $ge=eg=g$)
\be
\mu\circ (id_G\times k_e)=\mu\circ(k_e\times id_G)=id_G
\ee
where 
\ba
k_e: G&\rightarrow &G\\
g&\mapsto&k_e(g):=e
\ea
is the constant map which maps each element to the identity element. So explicitly the identity element condition is equivalent to the fact that the following diagram commutes
\[\xymatrix{
G\ar[rrdd]^{id_G}\ar[rr]^{id_G\times k_e}\ar[dd]_{k_e\times id_G}&&G\times G\ar[dd]^{\mu}\\
&&\\
G\times  G\ar[rr]^{\mu}&&G\\
}\]

\item \emph{Inverse}: ($\forall g\in G$, $gg^{-1}=g^{-1}g=e$).\\
The existence of an inverse can be written as follows:
\be
\mu\circ(id_G\times i)\circ \Delta=\mu\circ(i\times id_G)\circ \Delta=k_e
\ee
where the diagonal map $\Delta$ is
\ba
\Delta:G&\rightarrow& G\times G\\
g&\mapsto&(g,g)
\ea
while the inverse map is
\ba
i:G&\rightarrow& G\\
g&\mapsto&g^{-1}
\ea
and 
\ba
k_e: G&\rightarrow &G\\
g&\mapsto&k_e(g):=e
\ea
is the constant map which maps each element to the identity element.
Equivalently, the condition of having an inverse can be represented by the following commuting diagram:
\[\xymatrix{
G\ar[rrdddd]^{k_e}\ar[rr]^{\Delta}\ar[dd]_{\Delta}&&G\times G\ar[dd]^{id_G\times i}\\
&&\\
G\times G\ar[dd]^{i\times id_G}&&G\times G\ar[dd]^{\mu}\\
&&\\
G\times G\ar[rr]^{\mu}&&G
}\]

\end{enumerate}
\section{Axiomatic Definition of a Category}
\begin{Definition}
A (small\footnote{A category $\c$ is called small if $Ob(\c)$ is a Set }) category $\mathcal{C}$ consists of the following elements:
\begin{itemize}
\item[1.] A collection $Ob(\c)$ of $\c$-objects
\item[2.] For any two objects $a, b\in Ob(\c)$, a set $Mor_{\c}(a, b)$ of $\c$-arrows (or $\c$-morphisms) from $a$ to $b$
\item[3.] Given any three objects $a, b, c\in\c$, a map which represents composition operation
\ba
\circ:Mor_{\c}(b,c)\times Mor_{\c}(a,b)&\rightarrow& Mor_{\c}(a,c)\\
(f,g)&\mapsto&f\circ g
\ea
Composition is \emph{associative}, i.e. for $f\in Mor_{\c}(b,c)$, $g\in Mor_{\c}(a,b)$ and $h\in Mor_{\c}(c,d)$ we have
\be
h\circ(f\circ g)=(h\circ f)\circ g
\ee
which in diagrammatic form is the statement that the following diagram commutes
\[\xymatrix{
\ar[dd]_{h\circ(f\circ g)}&&a\ar[rr]^{g}\ar[ddrr]_{\;\;\;\;h\circ f\;\;\;}\ar[dd]_{(h\circ f)\circ g}&&b\ar[dd]^{f}\ar[ddll]^{\;\;\;f\circ g}\\
&&&&\\
&&d&&c\ar[ll]^{h}\\
}\]
\item[4.] For each object $b\in\c$ an identity morphisms $id_{b}\in Mor_{\c}(b,b)$, such that the following \emph{Identity law} holds: for all $g\in Mor_{\c}(a,b)$ and $f\in Mor_{\c}(b, c)$ then $f=f\circ id_b$ and $g=id_b\circ g$. In diagrammatic form this is represented by the fact that the diagram
\[\xymatrix{
a\ar[rr]^{g}\ar[ddrr]^{g}&&b\ar[dd]^{id_b}\ar[ddrr]^{f}&&\\
&&&&\\
&&b\ar[rr]^{f}&&c\\
}\]
commutes.
\end{itemize}
\end{Definition}
So, a category is essentially a collection of diagrams for which certain ``equations" (commutative relations) hold.
\begin{Definition}
$\d$ is a \emph{subcategory} of $\c$, denoted $\d\subseteq \c$, if:
\begin{itemize}
\item[i)]
$Ob(\d)\subseteq Ob(\c)$ as sets.
\item[ii)]
For any two objects $c,d\in Ob(\d)$, then $Mor_{\d}(c,d)\subseteq Mor_{\c}(c,d)$.
\end{itemize}
\end{Definition}
Thus a subcategory is a sub-collection of objects with a sub-collection of graphs containing these objects.
\begin{Definition}
$\d$ is a \emph{full subcategory} of $\c$ if an extra requirement is satisfied:\\
(iii) for any $\d$-objects $a$ and $d$, then $Mor_{\d}(a,b)=Mor_{\c}(a,b)$.
\end{Definition}
Keeping with our graph description, a full subcategory is a sub-collection of objects but has the same collection of graphs containing these objects

\subsection{Examples of Categories}
\begin{example}
\textbf{Two object category}\\
Simple example 
of a two element 
category is the following:
\[\xymatrix{
0\ar@(ul,ur)[]^{i_0}\ar[rr]^{f_{01}}&&1\ar@(ul,ur)^{i_1}
}\]
This category has 3 arrows:
\begin{itemize}
\item $i_0:0\rightarrow 0$ identity on 0.
\item $i_1:1\rightarrow 1$ identity on 1.
\item $f_{01}:0\rightarrow 1$.
\end{itemize}
it is easy to see that the composite arrows are:
$i_0 \circ  i_0=i_0$ ,$i_1 \circ  i_1=i_1$ ,$i_1 \circ  f_{01}=f_{01}$ and $f_{01} \circ  i_1=f_{01}$. 
\end{example}
\begin{example}
\textbf{Poset}
A poset is a set in which the elements are related by a partial order, i.e. not all elements are related to each other. The definition of a poset is as follows:
\begin{Definition}
Given a set $P$ we call this a poset iff a partial order $\leq$ is defined on it.  A partial order is a binary relation $\leq$ on a set $P$, which has the following properties:
\begin{itemize}
\item Reflexivity: $a\leq a$ for all $a\in P$.
\item Antysimmetry: if $a\leq b$ and $b\leq a$, then $a=b$.
\item Transitivity: If $a\leq b$ and $b\leq c$, then $a\leq b$.
\end{itemize}

\end{Definition}
An example of a poset is any set with an inclusion relation defined on it. Another example is $\Rl$ with the usual ordering defined on it.  

A poset forms a category whose objects are the elements of the poset and, given any two elements $p, q$, there exists a map $p\rightarrow q$ iff $p\leq q$ in the poset ordering.
We will be using such a (poset) category quite often when defining a topos description of quantum theory. Thus, it is worth pointing out the following:
\begin{Definition}
Given two partial ordered sets $P$ and $Q$, a map/arrow $f:P\rightarrow Q$ is a partial order homomorphisms (otherwise called monotone functions or order preserving functions) if
\be
\forall x,y\in P\;\;x\leq y\Longleftrightarrow f(x)\leq f(y)
\ee
\end{Definition}
Homomorphisms are closed under composition. A trivial example of partial order homomorphisms is given by the identity
maps.
\end{example}
\begin{example}
\textbf{Comma Category}\\
This category (also called slice category) has as objects arrows with fixed domain or codomain,
for example $C\downarrow\Rl$ is a comma category with:
\begin{itemize}
\item Objects: given $A,B\in C$, the respective objects in $C\downarrow\Rl$ are arrows whose codomain is $\Rl$, i.e. $f:A\rightarrow\Rl$ and $g:B\rightarrow\Rl$ also written as: $(A,f)$ and $(B,g)$.
\item Morphisms: given two objects $f$ and $g$ we define an arrow between them as the arrow $k:A\rightarrow B$ in $\c$ such that,

\[\xymatrix{
A\ar[rr]^k\ar[ddr]_f&&B\ar[ddl]^g\\
&&\\
&\Rl& \\
}\]

commutes in $C\downarrow\Rl$
\end{itemize}
The above definition of arrows in $C\downarrow\Rl$ implies the following:
\begin{itemize}
\item \emph{Composition}:
given the two arrows $j:A\rightarrow B$ and $i:B\rightarrow C$, their composition is defined by the following commutative diagram: 
\[\xymatrix{
A\ar[rrrr]^{j \circ i}\ar[drr]^j\ar[ddrr]_f&&&&C\ar[ddll]^h\\
&&B\ar[rru]^i\ar[d]^g&&\\
&&\Rl&&\\
}\]
Basically you just glue triangles together.
\item \emph{Identity}: given an element $f:A\rightarrow\Rl$, its 
identity arrow is: $id_A:(A,f)\rightarrow(A,f)$
\[\xymatrix{
A\ar[rr]^{id_A}\ar[ddr]_f&&A\ar[ddl]^f\\
&&\\
&\Rl& \\
}\]

\end{itemize}
It is interesting to note that, given a category $\c$ for any element $A\in \c$ we can form the comma category $\c/A$ ($\c\downarrow A$) where objects in $\c/A$ are all morphisms in $\c$ with codomain $A$, while arrows between two objects $f:B\rightarrow A$ and $g:D\rightarrow A$ are commutative diagrams in $\c$, i.e.
\[\xymatrix{
B\ar[rr]^{h}\ar[ddr]_f&&D\ar[ddl]^g\\
&&\\
&A& \\
}\]
For an object $f:B\rightarrow A$ the identity arrow is simply $id_B:B\rightarrow B$.\\
\end{example}

\begin{example}
\textbf{Monoid}\\
A \emph{monoid} $\mathcal{M}$ is a one object category equipped with a binary operation on that object and a unit 
element. In particular. the definition is as follows:
\begin{Definition} 
A monoid $\mathcal{M}$ is a triplet (M, *, i) such that,
\begin{itemize}
\item M is a Set.
\item * is a map $M\times M\rightarrow M$ which is associative.
\item $i\in M$, such that $\forall\hspace{.02in}x\in M$ $i * x = x * i = x$, where $i$ is the two sided identity.
\end{itemize}
\end{Definition}
The $*$-map can be identified either with the tensor product or with the direct sum or with the direct product according to which category 
M one is tacking into consideration.

\noindent 
Examples of monoinds are $(\Nl, +, 0)$ and any type of group $(G, \cdot, e)$.

\noindent
It is also possible to compare two different monoids as follows:
\begin{Definition}
Given two monoids $M$ and $N$  a map $h : M\rightarrow  N$ is saied to be a monoid homomorphism iff
\be
 \forall m_1,m_2\in M\;\;\;h(m_1 *m_2)=h(m_1)*h(m_2),\;\;	h(1)=1
\ee
\end{Definition}
\end{example}

We will now give a list of various categories which are frequently used in physics.
\begin{center}
\begin{tabular}{ l c r } 
Category & Objects  &morphisms \\\hline
\textbf{Sets} & Sets &functions\\ 
\textbf{Top}&Topological space & Continuous maps\\
\textbf{Gr}&Groups &Homomorphisms of groups\\
\textbf{Ab}& Abelian groups& Homomorphisms of groups\\
$\mathbf{Vect_K}$& Vector spaces over a filed K& K-linear maps\\
\textbf{Man}& Manifolds& Smooth maps\\
\textbf{Pos}& Partially ordered sets& Monoton functions\\
\textbf{N}& N& Natural numbers\\
\textbf{Set} is a descrete category & $x\in Set $& Identity arrows\\
\textbf{Preoder}:P& $x\in P$& At most one arrow between any two objects.\\

\end{tabular}
\end{center}

\section{The Duality Principle}
A very important notion in category theory is the notion of \emph{duality}. In particular, for any statement (or ``equation") $Y$ expressed in categorical language  its dual $Y^{\op}$ is obtained by replacing the domain with the codomain and the codomain by the domain and reversing the order of arrow composition , i.e. $h=g\circ f$ becomes $h=f\circ g$. 

Thus, all arrows and diagrams in $Y$ have the reverse direction in $Y^{\op}$, and the construction /notion described by $Y^{\op}$ is saied to be dual to the notion described by $Y$. Moreover, we also have the notion of a dual category.

\begin{Definition}
Given a category $\c$ the dual $\c^{op}$ is defined as follows:\\
\be
Ob(\c^{op}):=Ob(\c)\;\;Mor_{\c^{op}}(a, b):=Mor_{\c}(b,a)
\ee
the composition law is: 

\noindent
given $f\in Mor_{\c^{op}}(a,b)$ and $g\in Mor_{\c^{op}}(b,c)$, then 
\be
g\circ_{\c^{op}}f:=f\circ_{\c} g
\ee
\end{Definition}
It is easy to see that $(\c^{op})^{op}=\c$ for any category.\\
Therefore, given the construction $Y^{op}$ referred to a category $\c$, this can be considered as the construction $Y$ applied to the dual category $\c^{op}$.\\
The notion of opposite categories leads to the very important notion of \emph{duality principle}, by which a statement $Y$ is true in $\c$ iff its dual $Y^{\op}$ is true in $\c^{op}$. This principle allows us to prove various things simultaneously. By this we mean that if we have a statement $X$, which holds in the category $\c$, then we immediately know that the statement $X^{op}$ holds for $\c^{op}$. Moreover, if we derive a theorem $T$ from the axioms of category theory, then such a theorem holds for any category $\c$. However, by duality $T^{op}$ holds for every category $\c^{op}$. But each category can be written as the opposite of some other category ($(\c^{op})^{op}=\c$), therefore $T^{op}$ holds for all categories.
Then the duality principle allows us to derive a universal theorem from a specific instance of it.\\
In what follows we will see many examples of statements, theorems and their duals. 

\section{Arrows in a Category}
In this section we will explain the notions of injective, surjective and bijective map in a categorical language, i.e. from an external point of view.
\subsection{Monic Arrow}
 Monic arrow is the ``arrow-analogue" of an \emph{injective function}.
\begin{Definition}
A $\c$-arrow $f:a\rightarrow b$ is monic if for any pair of parallel arrows 
$g:c\rightarrow a$, $h:c\rightarrow a$, the equality $f \circ g = f\circ h$ implies that $h=g$, i.e. $f$ is left 
cancellable.
Monic arrows are denoted as:

\[\xymatrix{
*++{a}\ar@{>->}[rr]&&b\\
}\]

\end{Definition}
We now want to show how it is possible to derive a monic arrow from an injective one and vice versa in $\mathbf{Sets}$.
\begin{proof}
Consider the sets $A,B,C$, an injective function $f:A\rightarrow B$ (i.e. if $f(x)=f(y)$, then $ x=y$) and a pair of parallel functions 
$g:C\rightarrow A$ and $h:C\rightarrow A$, such that 
\[\xymatrix{
C\ar[rr]^{g}\ar[dd]_{h}&&A\ar[dd]^{f}\\
&&\\
A\ar[rr]_f&&B\\
}\]
commutes, i.e. $f \circ g = f\circ h$.\\ 
Now if 
\begin{align*}
\hspace{.2in}x\in C\Longrightarrow \hspace{.2in}&f \circ g(x)= f \circ h(x)\\
&f(g(x))=f(h(x))
\end{align*}
Since f is injective it follows that $g(x)=h(x$), i.e $f$ is left cancellable.
Vice versa, let $f$ be left cancellable, and consider the following diagram:

\[\begin{xy}0;/r15mm/:
,(0,0)="x",*@{},*+!U{0}
,(2,0)="z",*@{},*+!{}
,{\ellipse(0.5,1.6){}}
,(2,0.5)="t",*@{},*+!U{x}
,(2,-0.5)="f",*@{},*+!U{y}
,{\ar"x";"t"}
,{\ar"x";"f"}
,(5,0)="p",*@{},*+!U{f(x)=f(y)}
,{\ellipse(1,1){}}
,{\ar"t";"p"}
,{\ar"f";"p"}
,(1,0.5),*@{},*+!{g}
,(1,-0.5),*@{},*+!{h}
,(3.5,0),*@{},*+!{f}
 \end{xy}\]

From the above diagram it is easy to deduce that $f\circ g=f \circ h$, since $f(x)=f(y)$. Given that $x=g(0)$ and $y=h(0)$ by construction, and $f$ is left cancellable by assumption, we get: $g=h$, therefore $x=y$ for $f(x)=f(y)$, i.e. $f$ is injective.
\end{proof}
\subsection{Epic Arrow}
An epic arrow is the ``arrow-analogue" of a \emph{surjective function}.
\begin{Definition}
 An arrow $f:a\rightarrow b$ in a category $\c$ is epic in $\c$ if, for any parallel pair $g:b\rightarrow c$ and $h:b\rightarrow c$ of arrows, the equality $g \circ f =h \circ f$ implies that $h=g$, i.e. $f$ is right cancellable.
Monic arrows are denoted as:\[\xymatrix{
*++{a}\ar@{->>}[rr]&&b\\
}\]
\end{Definition}
An epic is a dual of a monic.\\
In \textbf{Sets} the epic arrows are the surjective set functions. 
\begin{proof}
Let us consider three sets $A,B,C$ such that the set function $f:A\rightarrow B$ is surjective but it is not right cancellable, i.e. given two functions $g,h: B\rightarrow C$ although $h\circ f=g\circ f$, $h\neq g$. What this implies is that there exists an a element $y\in B$ such that $h(y)\neq g(y)$. However since $f$ is surjective $y=f(x)$ for some $x\in A$, then, $h\circ f(x)\neq g\circ f(x)$ which contradicts the assumption that $h\circ f=g\circ f$.
\end{proof}
\subsection{Iso Arrow}
An iso arrow is the ``arrow-analogue" of a \emph{bijective function}.
\begin{Definition}
A $\c$-arrow $f:a\rightarrow b$ is iso, or invertible in $\c$ if there is a $\c$-arrow $g:b\rightarrow a$, such that $g \circ f=1_a$ and $f \circ g=1_b$. Therefore, $g$ is the inverse of $f$ i.e. $g=f^{-1}$.
\end{Definition}
\begin{Theorem}
$g$ is unique.
\end{Theorem}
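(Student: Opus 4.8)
The statement to prove is that the inverse of an iso arrow is unique. This is a standard uniqueness argument in category theory, and the plan is to suppose two candidate inverses exist and show they must coincide by exploiting the identity and associativity laws that define a category.

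The plan is to proceed as follows. Suppose $f:a\rightarrow b$ is iso, and assume there are two arrows $g:b\rightarrow a$ and $g':b\rightarrow a$ that both serve as inverses of $f$, so that $g\circ f=1_a$, $f\circ g=1_b$, and likewise $g'\circ f=1_a$, $f\circ g'=1_b$. First I would write down the chain of equalities that funnels $g$ into $g'$ by inserting an identity and re-associating. The key computation is
\be
g=1_a\circ g=(g'\circ f)\circ g=g'\circ(f\circ g)=g'\circ 1_b=g'.
\ee
Here the first step uses the identity law (from the axiomatic definition of a category), the second substitutes $g'\circ f$ for $1_a$, the third invokes associativity of composition, the fourth substitutes $1_b$ for $f\circ g$, and the last again uses the identity law. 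This exhibits $g=g'$, so the inverse is unique.

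The main point to be careful about, rather than a genuine obstacle, is ensuring every step is licensed strictly by the category axioms already established: associativity and the two-sided identity law. There is no deep difficulty here; the argument is the categorical analogue of the familiar fact that a two-sided inverse in a monoid or group is unique, and indeed it only uses the monoid-like structure of the endo/hom-sets under composition. I would emphasise that both the left-inverse and right-inverse conditions are needed: the calculation uses $g'\circ f=1_a$ on the left and $f\circ g=1_b$ on the right, so a one-sided inverse would not suffice to force uniqueness. Finally, I would note that this result justifies writing $g=f^{-1}$ unambiguously, as was done in the preceding definition of an iso arrow.
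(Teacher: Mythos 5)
Your proof is correct and is essentially identical to the paper's argument: the paper writes $g' = 1_a\circ g' = (g\circ f)\circ g' = g\circ(f\circ g') = g\circ 1_b = g$, which is the same insert-identity-and-reassociate computation with the roles of $g$ and $g'$ swapped. Your additional remarks on needing both one-sided inverse conditions are accurate and consistent with the paper.
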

\begin{proof}
Consider any other $g^{'}$ such that $g^{'} \circ f=1_a$ and $f \circ g^{'}=1_b$, then we have
$$g^{'}=1_a \circ g^{'}= (g \circ f) \circ g^{'} = g \circ (f \circ g^{'}) = g \circ 1_b = g$$
\end{proof}
An iso arrow has the following properties:
\begin{enumerate}
\item \emph{An iso arrow is always monic}.
\begin{proof}
Consider an iso $f$, such that $f\circ g = f \circ h$ ($f:a\rightarrow b$ and $g,h:c\rightarrow a$),
then \begin{align*}
g& = 1_a \circ g = (f^{-1} \circ f) \circ g = f^{-1} \circ (f \circ g) \\
&= f^{-1} \circ (f \circ h) = (f^{-1} \circ f) \circ h = h 
\end{align*}
therefore $f$ is left cancellable.
\end{proof}
\item \emph{An iso arrow is always epic} 
\begin{proof}
Consider an iso $f$ such that $g \circ f = h \circ f$  ($f:a\rightarrow b$ and $g,h:b\rightarrow c$)
\begin{align*}
g& = g \circ 1_b = g \circ (f \circ f^{-1}) = (g \circ f) \circ f^{-1} = (h \circ f) \circ f^{-1}\\
& = h \circ (f \circ f^{-1}) = h
\end{align*}
therefore $f$ is right cancellable
\end{proof}
\end{enumerate}

\textbf{Note}: not all arrows which are monic and epic are iso, for example: 
\begin{enumerate}
\item An inclusion map is both monic and epic, but it is not iso, otherwise it would have an inverse and, as a set function, it would have to be a bijection, but it is not.
\item In \emph{poset}, eventhough all functions are monic and epic, the only iso is the identity map.\\
In fact, consider a function $f:p\rightarrow q$, this implies that $p\leq q$. If $f$ is an iso, then $f^{-1}:q\rightarrow p$ exists, therefore, $q\leq p$. However, from the antisymmetry property $p\leq q$ and $q\leq p$ imply that $p=q$, therefore $f=1_p$ is a unique arrow.
\end{enumerate}
Iso arrows are used to determine isomorphic objects within a given category.
\begin{Definition}
Given two objects $a,b\in \c$, we say that they are isomorphic $a\simeq b$ if there exists an iso $\c$-arrow between them. 
\end{Definition}

As we have seen from the above definitions, we managed to give an external characterisation for the set theoretic concepts of injective, surjective and bijective functions. 
\section{Elements and Their Relations in a Category}
In this section we will describe certain fundamental constructions or elements present in category theory. While reading this, it is useful to try and understand what the corresponding elements would be in \textbf{Sets}.
\subsection{Initial Object}\label{sini}
\begin{Definition}  
An \emph{initial object} in a category $\c$ is a $\c$-object $0$ such that, 
for every $\c$-object $A$, there exists one and only one $\c$-arrow from $0$ to $A$.
\end{Definition}
An initial object is unique up to isomorphism, i.e. all initial objects in a category are isomorphic. To see this, consider two initial objects $A_1\in \c$ and $A_2\in \c$. Being both initial, we have the unique arrows $f_1:A_1\rightarrow A_2$ and $f_2:A_2\rightarrow A_1$. Moreover, the fact that they are initial implies that it is possible to uniquely compose the above arrows obtaining $f_1\circ f_2=id_{A_2}$ and $f_2\circ f_1=id_{A_1}$. Therefore $f_1$ and $f_2$ are isomorphic functions and $A_1\simeq A_2$.\\\\
\textbf{Examples}
\begin{enumerate}
\item 
In $C\downarrow\Rl$ the initial object is $f:\emptyset\rightarrow\Rl$, 
such that the following diagram commutes:
\[\xymatrix{
\emptyset\ar[rr]^k\ar[ddr]_f&&A\ar[ddl]^g\\
&&\\
&\Rl& \\
}\]
\item In $\Sets$ the initial object is the $\emptyset$ element.
\item In \textbf{Pos}, the initial object is the poset $(\emptyset , \emptyset)$.
\item In  \textbf{Top}, the initial object is  the space $(\emptyset,\{\emptyset\})$.
\item In $\mathbf{Vect_k}$, the one-element space $\{0\}$ is the initial object.
\item In a poset, the initial object is the least element with respect to the ordering. 
\end{enumerate}
An initial object is the dual of a terminal object. 

\subsection{Terminal Object}\label{sterm}
\begin{Definition} 
A \emph{terminal object} in a category $\c$ is a $\c$-object 1 such that, 
given any other $\c$-object A, there exists one and only one $\c$-arrow from A 
to 1. 
\end{Definition}
\textbf{Examples} 
\begin{enumerate}
\item
in $C\downarrow\Rl$ the terminal object is ($\Rl$, $id_{\Rl}$), such that the diagram 
\[\xymatrix{
A\ar[rr]^k\ar[ddr]_f&&\Rl\ar[ddl]^{id_{\Rl}}\\
&&\\
&\Rl& \\
}\]
commutes ($\therefore$ k=f)
\item In $\Sets$ a terminal object is a singleton $\{*\}$, since given any other element $A\in \Sets$ there exist 1 and only 1 arrow $A\rightarrow\{*\}$.
\item In \textbf{Pos} the poset $(\{*\}, \{(*, *)\})$ is the terminal object.
\item  In \textbf{Top}, the space$(\{*\},\{\emptyset, \{*\}\})$ is the terminal object.
\item  In $\mathbf{Vect_k}$, the one-element space $\{0\}$ is the terminal object.
\item  In a poset, the terminal object is the greatest element with respect to the ordering.
\end{enumerate}
Given the notion of a terminal object we can now define the notion of an element of a $\c$-object. Note that, so far, the definition of every categorical object that was introduced never rested on specific characteristic of its composing elements. This is because, as stated above, concepts in category theory are defined externally. In fact, it is the case that certain objects in a given category do not have elements. We will return to this later. For now we will give the categorical description of what an element of an object actually is.
\begin{Definition}
Given a category $\c$, with terminal object $1$, then an element of a $\c$-object b is a C-arrow $x:1\rightarrow b$.
\end{Definition}
\begin{example}
In $\mathbf{\Sets}$, an element $x\in A$, can be identified with the singleton subset $\{*\}$, therefore with an arrow 
$\{*\}\rightarrow A$ from the terminal object to A.
\end{example}
\section{Products}
We will now give the external/categorical description of the cartesian product. Such a definition will be a general notion of what a product is, which will be valid in any category independent of the details of that category. This is, in fact, one of the powerful aspects of category theory: \emph{an abstract characterization of objects in terms of universal properties}. In this way, definitions become independent of the peculiarity of individual cases, becoming a more objective, universally valid construction. One can compare the level of abstraction in category theory with the level of abstraction in differential geometry, where one defined objects without the use of a specific coordinate reference frame.

Let us now turn to our task of defining what a product is in categorical language. It should be pointed out that, as with all the other objects defined so far, if a product exists, it is uniquely up to isomorphism. Given a particular category, we can then verify whether or not the product exists in that category. It is only at this point that the particularity of the category in question enters the game, i.e. only into the proof of existence\footnote{The proof of existence is done by constructing an object and verifying if it satisfies the requirements of being a product.
}.All the useful properties of the product follow from the general definition. 
So, what is a product?
\begin{Definition}
A \textit{product} of two objects A and B in a category $\c$ is a third $\c$-object
$A\times B$ together with a pair of $\c$-morphisms (arrows):
\be
pr_A:A\times B\rightarrow A\;\;\;\; pr_B:A\times B\rightarrow B
\ee
such that, given any other pair of  $\c$-arrows $f:C\rightarrow A$ and $g:C\rightarrow B$,
there exists a unique arrow $\langle f,g\rangle:C\rightarrow A\times B$,
such that the following diagram commutes\footnote{Note that an arrow drawn as \[\xymatrix{&&\ar@{-->}[rr]&&}\] indicates uniqueness, up to isomorphisms of that arrow.}
\[\xymatrix{
&&C\ar[rrdd]^g\ar@{-->}[dd]^{\langle f,g\rangle}\ar[lldd]_f&&\\
&&&&\\
A&&A\times B\ar[rr]_{pr_B}\ar[ll]^{pr_A}&&B\\
}\]

i.e.
\begin{equation*}
pr_A \circ \langle f,g\rangle=f\hspace{.2in}and\hspace{.2in}pr_b \circ\langle f,g\rangle=g
\end{equation*}
\end{Definition}
Given two products we would now like to know if and how is possible to relate them. To this end one needs to introduce the concept of a map between two product objects. Such a map will be called a product map. The definition is straightforward.
\begin{Definition}
Consider a category $\c$ which allows products. Then consider two $\c$-arrows $f:A\rightarrow B$ and $g:C\rightarrow D$. The product map $f\times g:A\times C\rightarrow B\times D$ is the $\c$-arrow $\langle f\circ pr_A, g\circ pr_B\rangle$. Such an arrow is the unique arrow which makes the following diagram commute:
\[\xymatrix{
&&C\ar[rr]^g&&D\\
&&&&\\
A\times C \ar[uurr]^{pr_C}\ar[ddrr]^{pr_A}\ar@{-->}[rrrr]^{\langle f\circ pr_A, g\circ pr_B\rangle}&&&&B\times D\ar[uu]^{pr_D}\ar[dd]^{pr_B}\\
&&&&\\
&&A\ar[rr]_{f}&&B\\
}\]
\end{Definition}
\begin{Theorem}
In $\Sets$ the product of two sets always exists and it is the cartesian product with projection maps.
\end{Theorem}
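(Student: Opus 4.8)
The plan is to exhibit the familiar cartesian product $A\times B=\{(a,b)\mid a\in A,\,b\in B\}$ together with the coordinate projections and verify that this data satisfies the universal property in the Definition of a product. First I would take as structure maps $pr_A(a,b):=a$ and $pr_B(a,b):=b$; these are manifestly well-defined functions in $\Sets$, so the candidate product object and its two structure arrows are in place. The remaining work then splits into an \emph{existence} part (constructing a mediating arrow) and a \emph{uniqueness} part.

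For existence, given any set $C$ and any pair of functions $f:C\rightarrow A$ and $g:C\rightarrow B$, I would define the candidate arrow $\langle f,g\rangle:C\rightarrow A\times B$ pointwise by $\langle f,g\rangle(c):=(f(c),g(c))$. This is a legitimate set function since $(f(c),g(c))\in A\times B$ for every $c\in C$. Commutativity of the required triangle is then immediate from the definition of the projections: for every $c\in C$ one has $(pr_A\circ\langle f,g\rangle)(c)=pr_A(f(c),g(c))=f(c)$ and likewise $(pr_B\circ\langle f,g\rangle)(c)=g(c)$, whence $pr_A\circ\langle f,g\rangle=f$ and $pr_B\circ\langle f,g\rangle=g$.

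For uniqueness, I would suppose $h:C\rightarrow A\times B$ is any arrow satisfying $pr_A\circ h=f$ and $pr_B\circ h=g$, and then show $h=\langle f,g\rangle$. The key point is that an element of $A\times B$ is completely determined by its two coordinates. Concretely, for each $c\in C$ write $h(c)=(x,y)$ with $x\in A$ and $y\in B$; applying the hypotheses gives $x=pr_A(h(c))=f(c)$ and $y=pr_B(h(c))=g(c)$, so $h(c)=(f(c),g(c))=\langle f,g\rangle(c)$. Since $c$ was arbitrary, $h=\langle f,g\rangle$, establishing uniqueness.

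The argument is essentially routine, so there is no serious obstacle; the one conceptual step worth isolating is the extensionality/pairing fact invoked in the uniqueness part, namely that two elements of $A\times B$ agreeing under both projections are equal. This is precisely the set-theoretic content that forces the mediating arrow to be unique, and it is what the abstract universal property abstracts away.
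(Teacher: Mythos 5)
Your proof is correct and follows essentially the same route as the paper's: construct the mediating arrow pointwise as $c\mapsto(f(c),g(c))$, verify the two triangles commute, and derive uniqueness from the fact that an element of $A\times B$ is determined by its two projections. If anything, your version is slightly cleaner, since you explicitly isolate the pairing/extensionality fact that the paper's uniqueness step uses implicitly.
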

\begin{proof}
Given a set $R$ with maps $q_1:R\rightarrow S$, $q_2:R\rightarrow T$, then the map 
\ba
\psi:R&\rightarrow &S\times T\\
r&\mapsto& (q_1(r), q_2(r))
\ea
would satisfy the commutativity property of the cross product, i.e. $\forall r\in R$
\ba
(p_1\circ \psi)(r)&=&q_1(r)\\
(p_2\circ\psi)(r)&=&q_2(r)
\ea
We now need to prove its uniqueness. This is done as follows: 

\noindent
if there exists a $\phi$ which satisfies $p_1\circ \phi=q_1$ and $p_1\circ\phi=q_2$ then, for all $r\in R$ we have
\be
\psi(r)=(q_1(r),q_2(r))=(p_1(\psi(r)),p_2(\psi(r)))=\phi(r)
\ee
where the last equality holds, since $(s,t)=p_1(s,t), p_2(s,t))$ for all $(s,t)\in S\times T$.\\
It follows that $\psi$ is unique.
\end{proof}
We now want to show that the products are commutative, i.e. $A_1\times A_2\simeq A_2\times A_1$. To this end let us consider each product separately. Being products, there exist unique arrows $i$ and $j$ such that the following diagrams commute:
\[\xymatrix{
&&A_1\times A_2\ar@{-->}[dd]_{i}\ar[ddll]_{pr_1}\ar[ddrr]^{pr_2}&&\\
&&&&\\
A_1&&A_2\times A_1\ar[rr]\ar[ll]&&A_2
}
\xymatrix{
&&A_2\times A_1\ar@{-->}[dd]_{j}\ar[ddll]_{pr_2}\ar[ddrr]^{pr_1}&&\\
&&&&\\
A_2&&A_1\times A_2\ar[rr]\ar[ll]&&A_1\\
}\]
Composition of these diagrams in both orders gives us the following commuting diagrams:
\[\xymatrix{
&&A_1\times A_2\ar@{-->}[dd]_{j\circ i}\ar[ddll]_{pr_1}\ar[ddrr]^{pr_2}&&\\
&&&&\\
A_1&&A_1\times A_2\ar[rr]^{pr_2}\ar[ll]^{pr_1}&&A_2
}
\xymatrix{
&&A_2\times A_1\ar@{-->}[dd]_{i\circ j}\ar[ddll]_{pr_2}\ar[ddrr]^{pr_1}&&\\
&&&&\\
A_2&&A_2\times A_1\ar[rr]^{pr_1}\ar[ll]^{pr_2}&&A_1\\
}\]
It is clear that $j\circ i=id_{A_1\times A_2}$ and $i\circ j=id_{A_2\times A_1}$, thus $i$ and $j$ are isomorphic.\\
The proof given above can be easily extended to give the associativity of the product operation in general, i.e. for an arbitrary amount of factors.\\
It should be noted that the product of an empty set of objects is just the terminal object, and the product of the family consisting of a single object $A$ is $A$ itself with projection $1_A:A\rightarrow A$.
\subsubsection{Examples}
\begin{itemize}
\item In \textbf{Pos}, products are cartesian products with the pointwise order.
\item In \textbf{Top}, products are cartesian products with the product topology. 
\item In $\mathbf{Vect_k}$, products are direct sums. 
\item In a poset products are greatest lower bounds.
\end{itemize}
\subsection{Co-Products}
We now define the categorical/external definition of disjoint union.
\begin{Definition}
A \textit{co-product} of two objects A and B in a category $\c$ is a third $\c$-object
$A + B$ together with a pair of $\c$-arrows:
\be
i_A:A\rightarrow A + B\;\;\;\;\;i_B:B\rightarrow A + B
\ee
such that, given any other pair of  $\c$-arrows $f:A\rightarrow C$ and $g:B\rightarrow C$,
there exists a unique arrow $[f,g]:A + B\rightarrow C$
which makes the following diagram commute:
\[\xymatrix{
A\ar[rr]^{i_A}\ar[rrdd]_f&&A+B\ar[dd]^{[f,g]}&&B\ar[ll]_{i_B}\ar[ddll]^g\\
&&&&\\
&&C&&\\
}\]
i.e. the co-product is the dual of the product. In the above, the arrows $i_A$ and $i_B$ indicate canonical injection maps.
\end{Definition}
Again, it is possible to define a map between two co-products. In fact, in defining such a map one can simply dualise the definition of the product map, thus obtaining the following definition:
\begin{Definition}
Assuming that co-products exist in $\c$, we consider two $\c$-arrows $f:A\rightarrow B$ and $g:C\rightarrow D$. The co-product map $f+g:A+C\rightarrow B+D$ is the unique $\c$-arrow $[i_D\circ f, i_B\circ g]$, such that the following diagram commutes:
\[\xymatrix{
A\ar[dd]_{i_A}\ar[rr]^f&&D\ar[dd]^{i_D}\\
&&\\
A\times C\ar@{-->}[rr]^{[ i_D\circ f, i_B\circ g]}&&B+ D\\
&&\\
C\ar[uu]^{i_C}\ar[rr]_g&&B\ar[uu]_{i_B}\\
}\]
\end{Definition}
 \begin{Theorem}
 In $\mathbf{Sets}$ the co-product of any two elements $X, Y\in \Sets$ always exists and it is the disjoint union
 \be
 X\coprod Y=\{(x,j)\in (X\cup Y)\times\{0,1\}|x\in X \text{ iff } j=0\;,\; x\in Y\text{ iff } j=1\}
 \ee
 We then have
 \ba
 i_X:X&\rightarrow& X\coprod Y\\
 x&\mapsto &(x,0)
 \ea
 and 
 \ba
 i_Y:Y&\rightarrow& X\coprod Y\\
 y&\mapsto &(y,1)
 \ea
 \end{Theorem}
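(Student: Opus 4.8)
The plan is to verify the universal property of the co-product directly, dualising the argument given above for the product in $\Sets$. Concretely, I would take the disjoint union $X\coprod Y$ with the stated injections $i_X$ and $i_Y$, and show that for any set $C$ together with arrows $f:X\rightarrow C$ and $g:Y\rightarrow C$ there is exactly one mediating arrow $[f,g]:X\coprod Y\rightarrow C$ satisfying $[f,g]\circ i_X=f$ and $[f,g]\circ i_Y=g$.

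First I would exhibit the candidate map. Since every element of $X\coprod Y$ carries a unique label $j\in\{0,1\}$ recording which summand it comes from, I can define $[f,g]$ by cases, setting $[f,g](x,0):=f(x)$ and $[f,g](y,1):=g(y)$. The crucial observation --- and the only place the particular structure of the disjoint union is used --- is that this is well defined: no element has both labels, and every element has one, so the two clauses neither conflict nor leave anything undefined. This is exactly the dual of the fact, used in the product proof, that a pair is determined by its two projections.

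Next I would check commutativity, which is immediate from the definitions: $([f,g]\circ i_X)(x)=[f,g](x,0)=f(x)$, and symmetrically $([f,g]\circ i_Y)(y)=[f,g](y,1)=g(y)$, so both triangles in the co-product diagram commute. Finally, for uniqueness, I would suppose $\phi:X\coprod Y\rightarrow C$ is any arrow with $\phi\circ i_X=f$ and $\phi\circ i_Y=g$. Because the injections are jointly surjective, every element of $X\coprod Y$ equals $i_X(x)=(x,0)$ for some $x\in X$ or $i_Y(y)=(y,1)$ for some $y\in Y$, so the value of $\phi$ is forced everywhere: $\phi(x,0)=f(x)=[f,g](x,0)$ and $\phi(y,1)=g(y)=[f,g](y,1)$, whence $\phi=[f,g]$.

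I do not expect any genuine obstacle here; the entire content of the theorem is the bookkeeping that the label $j$ endows the two injections with disjoint images that jointly cover $X\coprod Y$, and it is precisely this that forces the mediating arrow both to exist (well-definedness) and to be unique (joint surjectivity). Alternatively, one could bypass the explicit verification altogether and simply invoke the duality principle together with the already-established product theorem, reading off the co-product as the construction dual to the cartesian product; I would nonetheless prefer the direct check above, since it makes transparent that the disjoint union, rather than the plain union, is what realises the dual universal property.
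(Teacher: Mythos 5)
Your proof is correct and follows exactly the strategy the paper itself uses for the analogous verifications: the paper actually states this theorem without proof, but your case-defined mediating arrow $[f,g]$, the immediate commutativity check, and the uniqueness argument via joint surjectivity of the injections are precisely the arguments the paper gives for the dual statement that the cartesian product is a product in $\Sets$ and, later in Lecture 5/6, that the disjoint union $C\amalg B$ is a pushout in $\Sets$. Nothing further is needed.
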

%
It should be noted that the co-product of an empty set is the initial object.
\subsubsection{Examples}
\begin{itemize}
\item In \textbf{Pos}, co-products are identified with disjoint unions (with the inherited orders).
\item In \textbf{Top}, co-products are identified with topological disjoint unions.
\item  In $\mathbf{Vect_k}$, co-products are identified with direct sums.
\item  In a poset, co-products are identified with least upper bounds.
\end{itemize}
\subsection{Equaliser}
We now describe the categorical analogue of the concept of the largest set for which two functions coincide. This is the concept of an equaliser.
 \begin{Definition}
 Given a category $\c$, a $\c$-arrow $i:E\rightarrow A$ is an equaliser of a pair of $\c$-arrows $f,g:A\rightarrow B$ if
 \begin{enumerate}
 \item $f\circ i=g\circ i$.
 \item Given another $\c$-arrow $h:C\rightarrow A$ such that $f\circ h=g\circ h$, there is exactly one $\c$-arrow $k:C\rightarrow E$ such that the following diagram commutes:
 \[\xymatrix{
E\ar[rr]^i&&A\ar@<3pt>[rr]^{\;\;\;\;\;\;\; f} \ar@<-3pt>[rr]_{\;\;\;\;\;\;\; g}&&B\\
&&&&\\
 &C\ar@{-->}[uul]^{k}\ar[uur]^h&&&\\
}\]
 i.e. $i\circ k=h$
 
 \end{enumerate}
 \end{Definition}
 In $\Sets$, the equaliser of a pair of maps $f,g:A\rightarrow B$ is the largest subset for which the two maps coincide, i.e.
 \be
 \{x\in A|f(x)=g(x)\}\subseteq A
 \ee
 \subsection{Co-Equaliser}
 Dual to the equaliser, there exists the co-equaliser which is defined as follows:
 \begin{Definition}
 Given two $\c$-arrows $f,g:A\rightarrow B$, the $\c$-arrow $h:B\rightarrow C$ is a co-equaliser of $f$ and $g$ if the following conditions hold:
 \begin{itemize}
 \item[1)] $h\circ f=h\circ g$
 \item [2)] Given any other $\c$-arrow $h^{'}:B\rightarrow C^{'}$ such that $h^{'}\circ f=h^{'}\circ g$, there exists a unique $\c$-arrow $u:C\rightarrow C^{'}$ such that the following diagram commutes
 \[\xymatrix{
A\ar@<3pt>[rr]^{\;\;\;\;\;\;\; f} \ar@<-3pt>[rr]_{\;\;\;\;\;\;\; g}&&B\ar[rr]^h\ar[rrdd]^{h^{'}}&&C\ar@{-->}[dd]^u\\
&&&&\\
 &&&&C^{'}\\
}\]

 \end{itemize}
 
 \end{Definition}
 In $\mathbf{Sets}$ the co-equaliser of a pair of maps $f,g:A\rightarrow B$ is the quotient of $B$ by the least equivalence relation for which $f(x)=g(x)$ for all $x\in A$. The condition of being the least equivalence relation is required by condition 2) above.
 
 The construction of such an equivalence relation is done as follows:
 
 \noindent
 Consider the co-equaliser
 \be
 S=\{\langle f(x), g(x)\rangle|x\in A\}\subseteq B\times B
 \ee
 Although it is a co-equaliser it is not necessarily an equivalence relation on $B$. However, one can construct a minimal equivalence relation on $B$ which contains $S$. In particular, such an equivalence relation $R\subseteq B\times B$ would be such that
 \begin{itemize}
 \item $S\subseteq R$.
 \item Given any other equivalence relations $T$ on $B$ such that $S\subseteq T$ then, $R\subseteq T$.
 \end{itemize}
 Such a relation is obviously 
 \be
 S=\{x\in A|f(x)=g(x)\}
 \ee
\subsection{Limits and Co-Limits}
In the description of the elements/objects that we gave so far we have always utilised the notion of a \emph{universal property}, which the object in question had to satisfy. In particular, we never constructed any object in terms of the characteristics of its elements, but rather through its relations to other objects, such that these relations had to satisfy a \emph{universal property}, thus obtaining a \emph{universal construction}. As can be easily deduced from the universal constructions given above, these are unique up to isomorphisms.

So, what are these universal constructions and universal properties?

\noindent
The precise characterisation of these concepts is given in terms of the notion of \emph{diagrams}, \emph{limits} and \emph{co-limits} of these diagrams.
\begin{Definition}
Given a category $\c$, a diagram $D$ in $\c$ is defined to be a collection of $\c$-objects $a_i\in C$ ($i\in I$) and a collection of $\c$-arrows $a_i\rightarrow a_j$ between some of the $\c$-objects above.
\end{Definition}

Using the notion of graphs given at the start of this lecture, a diagram can be defined as one graph in the collection of graphs composing a category.

Now a special type of diagram $D$ is the $D$-cone, i.e. a cone for a diagram $D$. This consists of a $\c$-object $c$ and a $\c$-arrow $f_i:c\rightarrow a_i$ one for each $a_i\in D$, such that 
\[\xymatrix{
a_i\ar[rr]^g&&a_j\\
&&\\
&c\ar[uur]_{f_j}\ar[uul]^{f_i}&&\\
}\]
commutes when ever $g$ is an arrow in the diagram $D$. \\
A cone is denoted as $\{f_i:c\rightarrow a_i\}$ and $c$ is called the vertex of the cone.
We now come to the definition of a limit.
\begin{Definition}
A limit for a diagram $D$ is a $D$-cone $\{f_i:c\rightarrow a_i\}$ such that, given any other $D$-cone $\{f^{'}_i:c^{'}\rightarrow a_i\}$, there is only one $\c$-arrow $g:c^{'}\rightarrow c$ such that, for each $a_i\in D$ the following diagram commutes:
\[\xymatrix{
&a_i&\\
&&\\
c^{'}\ar[uur]^{f_i}\ar[rr]_f&&c\ar[uul]_{g}\\
}\]
\end{Definition}
The limiting cone of a diagram $D$ has the \emph{universal property} with respect to all other $D$-cones, in the sense that any other $D$-cone factors out through the limiting cone.\\

\textbf{Examples}:

\begin{example}
i) The product of two objects $A$ and $B$ in $\c$, defined above, is actually the limiting cone of the diagram containing only two elements $A$ and $B$ and no arrows, i.e. it is the limiting cone of the arrowless diagram
\[\xymatrix{
A&&B\\
}\]
In fact a cone for this diagram is given by any $\c$-object $C$ together with two arrows $f:C\rightarrow A$ and $g:C\rightarrow B$ giving the cone

\[\xymatrix{
&C\ar[ddr]^g\ar[ddl]_f&\\
&&\\
A&&B\\
}
\]
Now, in order for this cone to be a limiting cone, we require that any other cone factors through it. This means that given another cone 
\[\xymatrix{
&C^{'}\ar[ddr]^{g^{'}}\ar[ddl]_{f^{'}}&\\
&&\\
A&&B\\
}
\]
there exists a unique map $h:C^{'}\rightarrow C$, such that
\[\xymatrix{
&&C^{'}\ar[ddrr]^{g^{'}}\ar[ddll]_{f^{'}}\ar[dd]&&\\
&&&&\\
A&&C\ar[rr]_g\ar[ll]^f&&B\\
}
\]
But this is precisely the definition of the product, i.e. $C=A\times B$.
\end{example}
\begin{example}
The terminal object is the limiting cone of the empty diagram.
\end{example}
\begin{example}
 The equaliser is the limiting cone of the diagram
 \[\xymatrix{
A\ar@<3pt>[rr]^{\;\;\;\;\;\;\; f} \ar@<-3pt>[rr]_{\;\;\;\;\;\;\; g}&&B\\
}\]

\end{example}
By duality we also have the notion of a \emph{co-limit} whose definition requires (as expected) the notion of a co-cone (dual to a cone).\\
Given a diagram $D$ a co-cone consists of an object $c$ and arrows $\{f_i\rightarrow c\}$, one for each element $a_i\in D$. A co-cone is denoted $\{f_i:a_i\rightarrow c\}$. \\
We now define a co-limit as follows:
\begin{Definition}
A co-limit of $D$ is a co-cone with the (co)-universal property that given any other $D$-cone $\{f^{'}_i:a_i\rightarrow c^{'}\}$ there exists one and only one map $f:c\rightarrow c^{'}$ such that the following diagram commutes
\[\xymatrix{
&a_i\ar[ddr]^{f^{'}_i}\ar[ddl]_{f_i}&\\
&&\\
c\ar[rr]_{f}&&c^{'}\\
}\]
for all $a_i\in D$
\end{Definition}
From the duality principle one can figure out what exactly the co-product, initial object and co-equaliser, are.

\section{Categories in Quantum Mechanics}
In this section we will delineate different categories that arise in quantum theory, however  we will not go into the details of how each of these categories is used. The aim is simply to show that category theory arises in many more contexts than one can imagine. The list of examples of categories in quantum theory is by no means complete.

\subsection{The Category of Bounded Self Adjoint Operators}
\begin{Definition} \cite{isham1} \cite{isham2}
the \textbf{Set $\mathcal{O}$} of bounded self-adjoint operators is a \textbf{category}, such that
\begin{itemize}
\item the objects of $\mathcal{O}$ are the self-adjoint operators;   
\item given a function $f:\sigma(\hat{A})\rightarrow\Rl$ (from the spectrum 
of $\hat{A}$ to the Reals) such that $\hat{B}=f(\hat{A})$
then there exists a morphism $f_{\mathcal{O}}:\hat{B}\rightarrow\hat{A}$ in $\mathcal{O}$ between operators 
$\hat{B}$ and $\hat{A}$. 
\end{itemize}
\end{Definition}
To show that the category $\mathcal{O}$, so defined, is a category, 
we need to show that it 
satisfies the identity
law and composition law. 
This can be shown in the following way:
\begin{itemize}
\item \textit{Identity Law}:
given any $\mathcal{O}$-object $\hat{A}$, the identity arrow is defined as the arrow 
$id_{\mathcal{O}_A}:\hat{A}\rightarrow\hat{A}$ that corresponds to the arrow $id:\Rl\rightarrow\Rl$
in $\Rl$.
\item \textit{Composition Condition}:
given two $\mathcal{O}$-arrows $f_{\mathcal{O}}:\hat{B}\rightarrow\hat{A}$ and 
$g_{\mathcal{O}}:\hat{C}\rightarrow\hat{B}$, such that $\hat{B}= f(\hat{A})$ and 
$\hat{C}= g(\hat{B})$, then, the composite function $f_{\mathcal{O}}o g_{\mathcal{O}}$ in 
$\mathcal{O}$ corresponds to the composite function $f o g:\Rl\rightarrow\Rl$
in $\Rl$.
\end{itemize}
The category  $\mathcal{O}$, as defined above, represents a pre-ordered set.
In fact, the function $f:\sigma(\hat{A})\rightarrow\Rl$ is unique up to isomorphism, therefore 
it follows that for any two objects in 
$\mathcal{O}$ there exists, 
at most, one morphism between them, i.e. $\mathcal{O}$ is a pre-ordered set.
However, $\mathcal{O}$ fails to be a poset since it lacks the 
antisymmetry property .
In fact, it can be the case that two operators $\hat{B}$ and $\hat{A}$ in $\mathcal{O}$ are 
such that $\hat{A}\neq\hat{B}$, but they are related by $\mathcal{O}$-arrows
$f_{\mathcal{O}}:\hat{B}\rightarrow\hat{A}$ and 
$g_{\mathcal{O}}:\hat{A}\rightarrow\hat{B}$ in such a way that:
\begin{equation}
g_{\mathcal{O}}\circ f_{\mathcal{O}}=id_B\hspace{.1in}and\hspace{.1in}f_{\mathcal{O}}\circ g_{\mathcal{O}}
=id_A      \label{eq:eq}
\end{equation}

It is possible to transform the set of self-adjoint operators into a poset by defining a new 
category $[\mathcal{O}]$ in
which the objects are taken to be equivalence classes of operators, whereby two operators are
considered to be equivalent if the $\mathcal{O}$-morphisms, relating them, satisfies equation
 \ref{eq:eq}.
\subsection{Category of Boolean Sub-Algebras}
\begin{Definition} \label{def:W} \cite{isham1} \cite{isham2}
The \textbf{category $\mathcal{W}$} of Boolean sub-algebras of the lattice $P(\mathcal{H})$ has:
\begin{itemize}
\item as objects, the individual Boolean sub-algebras, i.e. elements $W\in\mathcal{W}$ which represent 
spectral algebras associated with different operators. 
\item as morphisms, the arrows between objects of $\mathcal{W}$, such that a morphism 
$i_{W_1W_2}:W_1\rightarrow W_2$ exists iff $W_1\subseteq W_2$.
\end{itemize}
\end{Definition}
From the definition of morphisms it follows that there is, at 
most, one morphisms between any two elements of $\mathcal{W}$, therefore $\mathcal{W}$
forms a poset under sub-algebras inclusion $W_1\subseteq W_2$.\\
To show that $\mathcal{W}$, as defined above, is indeed a category, we need to define the identity
arrow and the composite arrow.
The identity arrow in $\mathcal{W}$ is defined as $id_W:W\rightarrow W$, which corresponds 
to $W\subseteq W$, whereas, given two $\mathcal{W}$-arrows $i_{W_1W_2}:W_1\rightarrow W_2$ 
$(W_1\subseteq W_2)$ and 
$i_{W_2W_3}:W_2\rightarrow W_3$ $(W_2\subseteq W_3)$ the composite 
$i_{W_2W_3}\circ i_{W_1W_2}$ corresponds to $W_1\subseteq W_3$. 

\begin{example}
An example of the category $\mathcal{W}$ can be formed in the following way:\\
consider a category consisting of four objects (operators): $\hat{A}$,$\hat{B}$,$\hat{C}$,$\hat{1}$
such that the spectral decomposition is:
\begin{align*}
\hat{A}&=a_1\hat{P}_1+a_2\hat{P}_2+a_3\hat{P}_3\\
\hat{B}&=b_1(\hat{P}_1\vee\hat{P}_2)+b_2\hat{P}_3\\
\hat{C}&=C_1(\hat{P}_1\vee\hat{P}_3)+c_2\hat{P}_2
\end{align*}
then the spectral algebras are the following:
\begin{align*}
W_A&=\{\hat{0},\hat{P}_1,\hat{P}_2,\hat{P}_3,\hat{P}_1\vee\hat{P}_3,\hat{P}_1\vee\hat{P}_2,\hat{P}_3\vee\hat{P}_2,\hat{1}\}\\ 
W_B&=\{\hat{0},\hat{P}_3,\hat{P}_1\vee\hat{P}_2,\hat{1}\}\\
W_C&=\{\hat{0},\hat{P}_2,\hat{P}_1\vee\hat{P}_3\hat{1}\}\\
W_1&=\{\hat{1}\}
\end{align*}
The relation between the spectral algebras is given by the following diagram:
\[\xymatrix{
&&W_B\ar[rrd]&&\\
W_1\ar[rru]\ar[rrd]&&&&W_A\\
&&W_C\ar[rru]&&
}\]
where the arrows are subset inclusions.
\end{example}
\chapter{Lecture 4}

In this lecture I will describe how it is possible to define maps between categories. There are two types of such maps called \emph{covariant functors} and \emph{contravariant functors}. I will describe both and give examples of both. We will then abstract a bit further and define maps between functors them selves. These are called \emph{Natural Transformations}. Such transformations will enable us to define equivalent categories.

\section{Functors and Natural Transformations}
\label{sfunc}
So far we have introduced the notion of a category.  However, if we can not compare categories together we could not do much in terms of category theory. Thus, there must be a way of comparing categories or, at least, define maps between them. This is done through the notion of a functor.
Generally speaking a functor is a transformation from one category $\c$
to another category
$\d$, such that
the categorical structure of the domain $\c$ is preserved, i.e. gets mapped onto the structure of the codomain category
$\d$.\\
There are two types of functors:
\begin{enumerate}
\item \textbf{Covariant Functor}
\item \textbf{Contravariant Functor}
\end{enumerate}
\subsection{Covariant Functor}
 \begin{Definition}:
A \textbf{covariant functor} from a category $\c$ to a category $\d$ is a map 
$F:\c\rightarrow\d$ that assigns to each $\c$-object $a$, a 
$\d$-object F(a) and to each $\c$-arrow $f:a\rightarrow b$ a $\d$-arrow
$F(f):F(a)\rightarrow F(b)$, such that the following are satisfied:
\begin{enumerate}
\item $F(1_a)=1_{F(a)}$
\item $F(f \circ g)=F(f) \circ F(g)$ for any $g:c\rightarrow a$ 
\end{enumerate}
\end{Definition}

It is clear from the above that a covariant functor is a transformation that preserves both:
\begin{itemize}
\item The domain's and the codomain's identities.
\item The composites of functions, i.e. it preserves the direction of the arrows.
\end{itemize}
A pictorial description if a covariant functor is as follows:
 \[\xymatrix{
a\ar[rr]^f\ar[rrdd]_h&&b\ar[dd]^g\\
&&\\
&&c\\
}
\xymatrix{\ar@{=>}[rr]^F&&}
\xymatrix{
F(a)\ar[rr]^{F(f)}\ar[rrdd]_{F(h)}&&F(b)\ar[dd]^{F(g)}\\
&&\\
&&F(c)\\
}\]
\subsubsection{Examples}
\begin{example}
\emph{Identity functor}:\\ $id_{\c}:\c\rightarrow\c$ is such that $id_{\c}A=A$ for all $A\in \c$ and $id_{\c}(f)=f$ for all $\c$-arrows $f$. Similarly one can define the insertion functor for any subcategory $\d\subseteq \c$.
This is trivially defined as follows
\ba
I:\d&\rightarrow&\c\\
A&\mapsto&A\\
(f:A\rightarrow B)&\mapsto&(f:A\rightarrow B)
\ea
Given such a definition it follows that
\ba
I(id_A)&=&id_{I(A)}\\
I(f\circ g)&=&=I(f)\circ I(g)
\ea

\end{example}

\begin{example}
\emph{Power set functor}:

\noindent
$P:\Sets\rightarrow \Sets$ assigns to each object $X\in \Sets$ its power set\footnote{Sets of all subsets of X.} $PX$, and to each map $f:X\rightarrow Y$ the map $P(f):PX\rightarrow PY$, which sends each subset $S\subseteq X$ to the subset $f(S)\subseteq Y$.
\end{example}

\begin{example}
\emph{Forgetful functor}:\\ Given a category $\c$ with some structure on it, for example the category of groups $\mathbf{Grp}$, the \emph{forgetful functor} $F:\mathbf{Grp}\rightarrow \Sets$ takes each group to its underlining set forgetting about the group structure, and each $\c$-arrow to itself.
\end{example}

\begin{example}
\emph{Hom functor}:\\Given any $\c$-object $A$, then the \emph{Hom functor} $\c(A,-):\c\rightarrow \Sets$ takes each object $B$ to the set of all $\c$-arrows $\c(A,B)$ from $A$ to $B$, and to each $\c$-arrow $f:B\rightarrow C$ it assigns the map 
\ba
\c(A,f):\c(A, B)&\rightarrow &\c(A,C)\\
g&\mapsto& \c(A, f)(g):=f\circ g
\ea
such that the following diagram commutes
 \[\xymatrix{
A\ar[rr]^g\ar[rrdd]_{f\circ g}&&B\ar[dd]^f\\
&&\\
&&C\\
}
\]
\end{example}

\begin{example}
\emph{Free Group Functor}:\\
Given the categories $\Sets$ and $\mathbf{Grp}$ the free group functor is a functor $F:\Sets\rightarrow \mathbf{Grp}$ which assigns, to each set $A\in \Sets$, the free group\footnote{ A group $G$ is called free if there exists a subset $S\subseteq G$, such that any element of G can be uniquely written as a product of finitely many elements of $S$ and their inverses.} generated by $A$ and, to each morphism $f$, the induced homomorphism between the respective groups which coincides with $f$ on the free generators. 
\end{example}

\begin{example}
\emph{Functors between Preorders}\\ 
Given two preorders $(P, \leq)$, $(Q, \leq) $. A covariant functor $F: (P,\leq)\rightarrow (Q,\leq)$ is defined as a covariant functor $F:P\rightarrow Q$ which is order preserving, i.e.
\be
\forall p_1, p_2\in P\text{ if } p_1\leq p_2\text{ then } F(p_1)\leq F(p_2)
\ee
It can be easily seen that indeed the above map satisfies the conditions of being a functor. It follows that, in this case, $F$ is simply a monotone map.
\end{example}

\begin{example}
Given two monoids $ (M,*, 1)$, $(N, *, 1)$,  a covariant functor $F : (M,*, 1)\rightarrow (N, *, 1)$ is such that it maps $M$ to $N$, i.e. $F(M)=N$ since monoids are categories with a single object. The functoriality condition is then defined as follows\footnote{Recall that in a monoid with set $M$ maps have both domain and codomain equal to $M$ (technically these type of morphisms are called endomorphisms), i.e. $m_i:M\rightarrow M$ and represent elements of $M$.  }:

\be
\forall m_1, m_2\in M\;\;F(m_1*m_2)=F(m_1) * F(m_2)\text{ and } F(1)=1
\ee

Hence, a covariant functor between monoids is just a monoid homomorphism.
\end{example}
\begin{example}
Given a group $(G,*, 1)$ (which, as we previously saw, can be considered as a monoid), a covariant functor $F:G\rightarrow Sets$ represents the action of $G$ on a set $X\in Sets$. In particular $F(G)=X$ and each map $g_i:G\rightarrow G$ gets mapped to an endofunction on $X$, i.e. $F(g_i):=g_1* -:X\rightarrow X$. The functorial condition then amounts to the following:
\be
 \forall g_1, g_2\in G\;\;\;F(g_1* g_2)=F(g_1)\circ F(g_2)\text{ and } F(1)=id_X
\ee
Therefore, given any $x\in X$ the above maps imply 
\be
(g_1* g_2)x=g_1\cdot g_1\cdot x\text{ and } 1\cdot x=x
\ee 
Thus $F$ defines an action of $G$ on $X$.

\end{example}
\subsection{Contravariant Functor}
Let us now analyse the other type of functor: \emph{contravariant functor}
\begin{Definition}

A \textbf{contravariant functor} from a category $\c$ to a category $\d$ is a 
map $X:\c\rightarrow\d$ that assigns to each $\c$-object a a 
$\d$-object X(a) and to each $\c$-arrow $f:a\rightarrow b$ a $\d$-arrow
$X(f):X(b)\rightarrow X(a)$, such that the following conditions are satisfied:
\begin{enumerate}
\item $X(1_a)=1_{X(a)}$
\item $X(f \circ g)=X(g) \circ X(f)$ for any $g:c\rightarrow a$ 
\end{enumerate}
\end{Definition}
A diagrammatic representation of a contravariant functor is the following:

\[\xymatrix{
a\ar@{>}[rr]^f\ar[rrdd]_h&&b\ar[dd]^g\\
&&\\
&&c\\
}
\xymatrix{\ar@{=>}[rr]^X&&}
\xymatrix{
F(a)&&X(b)\ar[ll]^{X(f)}\\
&&\\
&&X(c)\ar[lluu]^{X(h)}\ar[uu]^{X(g)}\\
}\]

\vspace{.2in}
Thus, a contravariant functor in mapping arrows from one category 
to the next reverses
the directions of the arrows, by mapping domains to codomains and vice versa. A contravariant functor is also called a \emph{presheaf}. These types of functors will be the principal objects which we will study when discussing quantum theory in the language of topos theory.
\subsubsection{Examples}
\begin{example}
\textit{Contravariant power set functor} is a functor $\tilde{P}:\Sets\rightarrow \Sets$ which assigns to each set $X$ its power set $\tilde{P}(X)$ and, to each arrow $f:X\rightarrow Y$ the inverse image map $\tilde{P}(f):\tilde{P}(Y)\rightarrow \tilde{P}(X)$, which sends each set $S\in P(Y)$ to the inverse image $f^{-1}(S)\in P(X) $.
\end{example}

\begin{example}
\textit{Contravariant Hom-functor}: \\
For any object $A\in\c$ we define the \emph{contravariant Hom functor} to be the functor $\c(-,A):\c\rightarrow \Sets$, which assigns to each object $B\in \c$  the set of $\c$-arrows $\c(B,A)$ and, to each $\c$-arrow $f:B\rightarrow C$, it assigns the function
\ba
\c(f,A):\c(C,A)&\rightarrow& \c(B,A)\\
g&\mapsto&\c(B,A)(f):=g\circ f
\ea
such that the following diagram commutes
 \[\xymatrix{
B\ar[rr]^f\ar[rrdd]_{g\circ f}&&C\ar[dd]^g\\
&&\\
&&A\\
}
\]
\end{example}

\subsubsection{Characterising Functors}
Irrespectively of whether we are talking about covariant or contravariant functors, there are several properties which distinguish different functors, these are the following:
\begin{Definition}
A functor $F:\c\rightarrow \d$ is called
\begin{enumerate}
\item \emph{Faithful} if 
\be
F:Mor_{\c}(x,y)\rightarrow Mor_{\d}(Fx,Fy)\;\;\;
\forall x,y\in Ob(\c)
\ee
is injective
\item \emph{Full} if 
\be
F:Mor_{\c}(x,y)\rightarrow Mor_{\d}(Fx,Fy)\;\;\;
\forall x,y\in Ob(\c)
\ee
is surjective
\item
\emph{fully faithful}
if 
\be
F:Mor_{\c}(x,y)\rightarrow Mor_{\d}(Fx,Fy)\;\;\;
\forall x,y\in Ob(\c)
\ee
is bijective
\item \emph{Forgetful}
if $F$ takes each $\c$ object to its underlining set, forgetting about any structure which might be present in $\c$, while mapping each $\c$-arrow to itself. Thus, all that is remembered by this forgetful functor is the fact that the $\c$-arrows are set functions.
\item\emph{Essentially surjective}.
A functor $F:\c\rightarrow \d$ is \emph{essentially surjective} (or dense) if each object $d\in \d$ is isomorphic to an object of the form $F(c)$ for some object $c\in\c$.
\item \emph{Embedding} if $F$ is full, faithful, and injective on objects.
\item \emph{An equivalence} if $F$ is full, faithful, and essentially surjective.
\item \emph{An isomorphism} if there exists another functor $G : \d \rightarrow \c$ such that
\be
G \circ F = Id_{\c},\;\;	F\circ  G = Id_{\d} 
\ee

\end{enumerate}
\end{Definition}
It should be noted that the definition of a full and faithful functor only requires that there is a bijection between the morphisms of the categories, not between the objects. In fact if $F:\c\rightarrow \d$ is a full and faithful functor, than it could be the case that:
\begin{itemize}
\item[i] there exists some $y\in Ob(\d)$, such that there is no object $x\in Ob(\c)$ for which $y=F(x)$, i.e. $F$ is not surjetive on objects.
\item [ii] Given two elements $x_1, x_2\in Ob(\c)$, then if $F(x_1)=F(x_2)$ this does not entail that $x_1=x_2$, i.e. $F$ is not injective on objects.
\end{itemize}

\subsubsection{Preservation and Reflection}
So far we have classified functors according to how they act the collection of objects and morphisms seen as sets. However, one can abstract a little more and try understanding how functors behave on more complex structures, such as properties of arrows. In particular consider a property $P$ of arrows, a functor $F : \c \rightarrow \d$ \emph{preserves} $P$ iff
\be
f \text{ satisfies } P\Longrightarrow F(f)\text{ satisfies } P
\ee
$F$ \emph{reflects} $P$ iff
\be
F(f) \text{ satisfies }  P\Longrightarrow f \text{ satisfies } P
\ee

\subsection{Natural Transformations} \label{snat}
So far we have defined categories and maps between them called \emph{functors}. We will now abstract a step more and define maps between functors. These are called \emph{natural transformations}.
 \begin{Definition} 
A \textbf{natural transformation} from $Y:\c\rightarrow \d$ to 
$X:\c\rightarrow \d$ 
is an  assignment of an arrow 
$N:Y\rightarrow X$ that associates to each object A in $\c$ an arrow 
$N_A:Y(A)\rightarrow X(A)$ in $\d$ such that, for any 
$\c$-arrow $f:A\rightarrow B$ the following diagram commutes
\[\xymatrix{
A\ar[dd]^f&&Y(B)\ar[rr]^{N_B}\ar[dd]_{Y(f)}&&X(B)\ar[dd]^{X(f)}\\
&&&&\\
B&&Y(A)\ar[rr]_{N_A}&&X(A)\\
}\]
i.e.
\begin{equation*}
N_A \circ Y(f)=X(f) \circ N_B
\end{equation*}
\end{Definition}
Here $N_A:Y(A)\rightarrow X(A)$ are the components on N, while N is the \emph{natural transformation}.\\
From this diagram it is clear that the two arrows $N_A$ and $N_B$ turn the Y-picture of $f:A\rightarrow B$
into the respective X-picture.
If each $N_A$ ($A\in\c$) is an isomorphism, then $N$ is a \emph{natural isomorphism}
\be
N:Y\xrightarrow{\simeq}X
\ee
\subsubsection{Examples}
\begin{example}
Consider the operation of taking the dual of a vector space defined over some field $K$. This operation is actually a functor as follows
\ba
*:Vect_K&\rightarrow& Vect_K^*\\
V&\mapsto&V^*:=Hom_k(V,K)
\ea
Moreover, given a linear map $f:V\rightarrow W$ we obtain the map $f^*:W^*\rightarrow V^*$ such that $f^*(\phi)=\phi\circ f$ where $\phi\in W^*$. By reiterating this functor we can define a double dual functor as follows:
\ba
**:Vect_K&\rightarrow& Vect_K\\
V&\mapsto& V^{**}
\ea
such that $v^{**}(f)=f(v)$ for $f\in V^*$ and $v\in V$.\\
It is then possible to define a natural transformation between the identity functor 
$1_{Vect_K}:Vect_K\rightarrow Vect_k$ and the double dual functor as follows:
\be
N:1_{Vect_k}\rightarrow **
\ee
whose components are
\[\xymatrix{
V\ar[dd]^f&&1_{Vect_k}(V)\ar[rr]^{N_V}\ar[dd]_{1_{Vect_k}(f)}&&V^{**}\ar[dd]^{f^{**}}\\
&&&&\\
W&&1_{Vect_k}(W)\ar[rr]^{N_W}&&W^{**}\\
}\]
\end{example}
 
 \begin{example}
 Given a map $f : A\rightarrow B$ in a category $\c$, we obtain a natural transformation between covariant Hom-functors as follows:
 \ba
 \c(f,-):\c(B,-)\rightarrow \c(A,-)
 \ea
 such that for each $C\in\c$ we obtain
 \ba\label{ali:def}
 \c(f,C) :\c(B,C)&\rightarrow& \c(A,C)\\
 (g:B\rightarrow C)&\mapsto&\c(f,C)(g):=(g\circ f:A\rightarrow C)
 \ea
 To show that $\c(f,-)$ is indeed a natural transformation we need to show that for all $h:C\rightarrow D$ the following diagram commutes
 \[\xymatrix{
\c(B,C)\ar[rr]^{\c(B, h)}\ar[dd]_{\c(f, C)}&&\c(B, D)\ar[dd]^{\c(f, D)}\\
&&\\
\c(A, C)\ar[rr]^{\c(A,h)}&&\c(A,D)\\
}\]
Chasing the diagram around we have on the one hand
\be
\c(A,h)(\c(f,C)(g))=\c(A,h)(g\circ f)=h\circ (g\circ f)
\ee
and on the other hand
\be
\c(f,D)(\c(B,h)(g))=\c(f,D)(h\circ g)=(h\circ g)\circ f
\ee
 such that 
 \be
 h\circ (g\circ f)=(h\circ g)\circ f
 \ee
 This equality follows from associativity.
  \end{example}
 \begin{example}
 Given a map $f:A\rightarrow B$ in a category $\c$ it is possible to define a natural transformation between contravariant Hom-functors as follows:
 \be
 \c(-,f):\c(-,B)\rightarrow \c(-,A)
 \ee
 That this is indeed a well defined natural transformation. In fact, given any object $C\in \c$ the action of $\c(-,f)$ is
\ba
\c(C,f):\c(C, A)&\rightarrow& \c(C, B)\\
h&\mapsto&\c(C,f)(h)=f\circ h
\ea
While for morphisms $g:C\rightarrow D$ we get
\ba
\c(D,f):\c(D,A)&\rightarrow& \c(D, B)\\
k&\mapsto&f\circ k
\ea
such that the following diagram commutes

 \[\xymatrix{
\c(C,A)\ar[rr]^{\c(C, f)}&&\c(C, B)\\
&&\\
\c(D, A)\ar[rr]^{\c(D,f)}\ar[uu]^{\c(g, A)}&&\c(D,B)\ar[uu]^{\c(g, B)}\\
}\]
Thus $\c(C,f)\circ \c(g, A)(k)=\c(C,f)(k\circ g)=f\circ (k\circ g)$ while $\c(g,B)\circ \c(D,f)(k)=\c(g,B)(f\circ k)=(f\circ k)\circ g$ by associativity $(f\circ k)\circ g=f\circ (k\circ g)$

 \end{example}
 We will state and prove a lemma which is a version of the very important \emph{Yoneda Lemma} which will be analysed in details in subsequent lectures.
 \begin{Lemma}\label{lem:simpleyoneda}
 Given a category $\c$ and two objects $A,B \in \c$, then for each natural transformation $t:\c(A,-)\rightarrow \c(B,-)$ for covariant functors, there exists a unique $f:B\rightarrow A$ in $\c$ such that $t=\c(f,-)$
 \end{Lemma}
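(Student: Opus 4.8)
The plan is to run the standard Yoneda argument: recover the candidate arrow $f$ by evaluating the natural transformation on an identity, and then use naturality to show that this single $f$ reproduces $t$ on every object. Throughout I use the explicit description of $\c(f,-)$ from the earlier example, where for $f:B\rightarrow A$ the natural transformation $\c(f,-):\c(A,-)\rightarrow\c(B,-)$ has components $\c(f,C):\c(A,C)\rightarrow\c(B,C)$ given by $g\mapsto g\circ f$.

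First I would construct $f$. The component of $t$ at the object $A$ is a function $t_A:\c(A,A)\rightarrow\c(B,A)$, and applying it to the distinguished element $id_A\in\c(A,A)$ yields an arrow $f:=t_A(id_A):B\rightarrow A$. This construction is forced, which already secures uniqueness: if $t=\c(f',-)$ for some $f':B\rightarrow A$, then its $A$-component sends $id_A\mapsto\c(f',A)(id_A)=id_A\circ f'=f'$, so necessarily $f'=t_A(id_A)=f$.

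Next I would verify $t=\c(f,-)$. Fix an object $C$ and an element $g\in\c(A,C)$, i.e. an arrow $g:A\rightarrow C$, and apply the naturality square of $t$ associated with this very arrow $g$:
\[\xymatrix{
\c(A,A)\ar[rr]^{t_A}\ar[dd]_{\c(A,g)}&&\c(B,A)\ar[dd]^{\c(B,g)}\\
&&\\
\c(A,C)\ar[rr]_{t_C}&&\c(B,C)\\
}\]
Chasing $id_A$ both ways gives the identity I need: down-then-right yields $t_C\bigl(\c(A,g)(id_A)\bigr)=t_C(g\circ id_A)=t_C(g)$, while right-then-down yields $\c(B,g)\bigl(t_A(id_A)\bigr)=\c(B,g)(f)=g\circ f$. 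Hence $t_C(g)=g\circ f=\c(f,C)(g)$ by the definition of $\c(f,-)$. Since $C$ and $g$ were arbitrary, every component of $t$ coincides with the corresponding component of $\c(f,-)$, so $t=\c(f,-)$.

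The main obstacle is really the only subtle point: setting up the correct naturality square. One must invoke naturality with respect to the arrow $g$ itself, viewed as an arrow $A\rightarrow C$, rather than with respect to some externally fixed morphism, and keep straight that the covariant Hom-functors $\c(A,g),\c(B,g)$ act by post-composition whereas $\c(f,-)$ acts by pre-composition with $f$. Once the square is oriented correctly the verification is a one-line diagram chase, and uniqueness falls out of the forced construction $f=t_A(id_A)$.
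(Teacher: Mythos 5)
Your proof is correct and follows essentially the same route as the paper's: define $f:=t_A(id_A)$, chase $id_A$ around the naturality square associated to an arbitrary arrow $g:A\rightarrow C$ to get $t_C(g)=g\circ f=\c(f,C)(g)$, and obtain uniqueness from the forced construction via $\c(f',A)(id_A)=f'$. The only cosmetic difference is that your naturality square is the transpose of the one drawn in the text.
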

 \begin{Proof}
 Let us define
 \be
\Big(f:B\rightarrow  A\Big):=t_A(id_A)
 \ee
Since $t:\c(A,-)\rightarrow \c(B,-)$  is a natural transformation, given any arrow $g:A\rightarrow C$ we have the following commutative diagram
 \[\xymatrix{
\c(A,A)\ar[rr]^{\c(A, g)}\ar[dd]_{t_A}&&\c(A, C)\ar[dd]^{t_C}\\
&&\\
\c(B, A)\ar[rr]^{\c(B,g)}&&\c(B,C)\\
}\]
 Chasing the diagram around we have 
 \be
t_C( \c(A, g)(id_A))=t_C(g)
 \ee
 On the other hand
 \be
 \c(B,g)(t_A(id_A))=\c(B,g)f=g\circ f
 \ee
 thus, from the requirement of commutativity 
 \be
 t_C(g)=g\circ f
 \ee
 However from the definition in equation \ref{ali:def} we know that 
 \be
 \c(f,C)(g):=g\circ f
 \ee 
 thus
 \be
 t_C=\c(f,-)
 \ee
 To prove uniqueness we need to show that if $\c(f, -)=\c(f^{'},-)$ then $f=f^{'}$. To this end consider
 \ba
 f=id_A\circ f=\c(f,A)(id_A)=\c(f^{'},A)(id_A)=id_A\circ f^{'}=f^{'}
 \ea

 \end{Proof}
\subsection{Equivalence of Categories}
Now that we have defined maps between categories, i.e. functors and maps between functors, i.e. natural transformations, it is possible to compare two categories and see if they are equivalent or not. To this end we need the notion of isomorphic functors.
\begin{Definition}
Two functors $F,G:\c\rightarrow \d$ are said to be naturally isomorphic if there exists a natural transformation $\eta:F\rightarrow G $, which is invertible.  
\end{Definition}
We can now define equivalent categories in terms of naturally isomorphic functor. 
\begin{Definition}
Two categories $\c$ and $\d$ are said to be equivalent if there exists functors $F:\c\rightarrow \d$ and $G:\d\rightarrow\c$, such that the functors $F\circ G\simeq id_{\d}$ and $G\circ F\simeq id_{\c}$ are naturally isomorphic to the identities. In this case $F$ is called an \emph{equivalence of categories}.
\end{Definition}

\chapter{Lecture 5/6}

In this lecure I will first of all introduce the category of functors whose objects are functors (lecture 4) and whose morphisms are natural transformations (lecture 4). Such a category is denoted as $D^C$ for $C$ and $D$ being categories themselves. As we will see later on in the course, for a particular choice of $D$ and $C$ we obtain the quantum topos.
I will then give the axiomatic definition of what a topos is. In particular I will focus on two main objects which are present in a topos, these are: 
\begin{enumerate}
\item [i)] Sub-object Classifier;
\item [ii)] Internal logic: Heyting algebra.
\end{enumerate}
These will be very important when defining the topos version of quantum theory.

\section{The Functor Category }
We will now introduce a type of category which is very important for the topos formulation of quantum theory since i) it is actually a topos, ii) for an appropriate choice of base category it will be the topos in terms of which quantum theory is defined.

The functor category is, in a way more abstract than any of the categories we have encountered so far since it has as objects contravariant functors, and as maps natural transformations. So it is one level higher in abstraction of the `standard' category, which had as objects simpler elements with not so much structure and as maps simpler function between these elements. On the other hand the elements in the functor categories are complicated object being themselves maps between categories, thus they carry a lot of structure. The maps between these objects are now required to preserve such complicated structures. However, on a general level, we still simply have a collection of objects with maps between them, the only difference is that now  the hidden level of complexity has to be taken into consideration when defining any categorical construction. So, for example, as we will see in the examples below,  when defining the product, the equaliser, etc, the structure of the individual elements of the category has to be taken into consideration.

\noindent
The definition of the functor category is as follows:
\begin{Definition}
Given two categories $\c$ and $\d$, the functor category $\d^{\c}$ has: 
\begin{itemize}
\item \emph{Objects}: all functors of the form $F:\c\rightarrow \d$.
\item\emph{Morphisms}: natural transformations between the above mentioned functors.
\end{itemize}
\end{Definition}
Given two natural transformations $N_1:F\rightarrow G$ and $N_2:G\rightarrow H$ in $\d^{\c}$ composition is defined as follows
\be
N_2\circ N_1:F\rightarrow H
\ee
such that, for each $C\in\c$ the individual components are
\ba
(N_2\circ N_1)_C&:&F(C)\rightarrow H(C)\\
&:&F(C)\xrightarrow{(N_1)_C}G(C)\xrightarrow{(N_2)_C}H(C)
\ea
Composition is associative

Of particular importance for us is the case in which $\d$ is the category $\Sets$. In fact $\Sets^{\c}$, for a particular $\c$, will be the category which we will use to describe quantum theory. In particular, for quantum theory we will be using the presheaves $\Sets^{\c^{\op}}$\footnote{Note that $\c^{op}$ represents the 
opposite of the category $\c$. Objects in $\c^{op}$ are the same as the objects in 
$\c$, while the morphisms are the inverse of the morphisms in $\c$,
i.e. $\exists$ a $\c^{op}$-morphisms $f:A\rightarrow B$ iff $\exists$ a $\c$-
morphisms $f:B\rightarrow A$. Thus the category $\Sets^{\c^{\op}}$ has as objects covariant functors from $\c^{\op}$ to $\Sets$ or, equivalently, contravariant functors from $\c$ to $\Sets$.} for an appropriate $\c$.

So let us analyse $\Sets^{\c^{\op}}$
\section{The Functor Category with Domain Sets}
We will now describe in detail the functor category $\Sets^{\c^{\op}}$ for some category $\c$. The reason we are interested in this category is two fold:\\
i) It is actually a topos\\
ii) It will be the topos in which quantum theory will be defined.

Given a contravariant functor  between a category $\c$ and $\mathbf{Sets}$, we can form a 
category $\Sets^{\c^{\op}}$ such that we have the following:
\begin{itemize}
\item Objects:
all contravariant functors $P:\c\rightarrow \Sets$ 
\[\xymatrix{
&&1&&\\
A\ar[rru]^h&&&&B\ar[llu]_g\\
&&o\ar[rru]_f\ar[llu]^k&&\\
&&\Downarrow^P&&\\
&&P(1)\ar[rrd]^{P(g)}\ar[lld]_{P(h)}&&\\
P(A)\ar[rrd]_{P(k)}&&&&P(B)\ar[lld]^{P(f)}\\
&&P(0)&&\\
}\]
\item Arrows: 
all natural transformation $N:P\rightarrow P^{'}$ between contravariant functors such that, given a function 
$f:D\rightarrow C$ in $\c$ the following diagram commutes
\[\xymatrix{
P(C)\ar[rr]^{P(f)}\ar[dd]_{N_C}&& P(D)\ar[dd]^{N_D}\\
&&\\
P^{'}(C)\ar[rr]_{P^{'}(f)}&& P^{'}(D)\\
}\]
\end{itemize}
The morphisms satisfy the following conditions:
\begin{itemize}
\item \emph{Identity maps} for each object X in $\Sets^{\c^{op}}$ are identified with natural transformations  
$i_X$, 
whose components $i_{X_A}$ are the identity maps of the set X(A) in $\Sets$. 
\item \emph{Composition of maps} in $\Sets^{\c^{op}}$ :\\
Consider the functors X,Y and Z that belong to $\Sets^{\c^{op}}$, such that there exist maps\footnote{Here it is intended natural transformations, but we will often simply call them maps. The specification of what type of maps we are considering should be clear from the context.} 
$X\xrightarrow{N}Y$
and $Y\xrightarrow{M}Z$ between them. We can then form a new map $X\xrightarrow{M o N}Y$, whose 
components would be $(M \circ N)_A=M_A \circ N_A$, i.e. graphically we would have 

\[\xymatrix{
X(A)\ar[rrrr]^{X(f)}\ar[dd]_{N_A}&&&&X(B)\ar[dd]^{N_B}\\
&&&&\\
Y(A)\ar[rrrr]^{Y(f)}\ar[dd]_{M_A}&&&&Y(B)\ar[dd]^{M_B}\\
&&&&\\
Z(A)\ar[rrrr]^{Z(f)}&&&&Z(B)\\
}\]

\end{itemize}

Depending on the choice $\c$, there are a number of relevant elements of the category $\Sets^{\c^{\op}}$ in quantum theory. Here we will only give a few example of such \emph{presheaves} (see  \cite{isham1}, \cite{isham2}, \cite{isham4}). A more in depth analysis will be given in subsequent lectures.

\subsection{Spectral Presheaf on the Category of Self-Adjoint Operators with Discrete Speactra}
\begin{Definition}  \label{def:sigma}
 \textbf{Spectral Presheaf} on $\mathcal{O}_d$ (subcategory of $\mathcal{O}$, in which the 
operators have discrete spectra)\footnote{The condition of the spectrum being discrete implies
that given a Borel function $f:\sigma(\hat{A})\subseteq \Rl\rightarrow\Rl$, then 
$\sigma(f(\hat{A}))=f(\sigma(\hat{A}))$.} $\Sigma:\mathcal{O}_d\rightarrow\Sets$ is defined such that:
\begin{enumerate}
\item Objects $\hat{A}\in\mathcal{O}$ get mapped to $\Sigma(\hat{A})=\sigma(\hat{A})$ where 
$\sigma(\hat{A})$ is the spectrum of $\hat{A}$.
\item Morphisms $f_{\mathcal{O}}:\hat{B}\rightarrow\hat{A}$ in $\mathcal{O}_d$, such that 
$\hat{B}=f(\hat{A})$ ($f:\sigma(\hat{A})\subseteq\Rl\rightarrow\Rl$), gets mapped to $\Sigma(f_{\mathcal{O}}):\Sigma(\hat{A})\rightarrow\Sigma(\hat{B})$,
which is equivalent to  $\Sigma(f_{\mathcal{O}}):\sigma(\hat{A})\rightarrow\sigma(\hat{B})$
and is defined by $\Sigma(f_{\mathcal{O}})(\alpha):=f(\alpha)$ for all $\alpha\in\sigma(\hat{A})$.
\end{enumerate}
\end{Definition}
In order to prove that $\Sigma$, as defined above is indeed a presheaf, we need to prove that, given
any function $g_{\mathcal{O}}:\hat{C}\rightarrow\hat{B}$ such that $\hat{C}=g(\hat{B})$
then, the following equation is satisfied:
\begin{equation*}
\Sigma(g_{\mathcal{O}})\circ\Sigma(f_{\mathcal{O}})=\Sigma( f_{\mathcal{O}}\circ g_{\mathcal{O}})
\end{equation*}
\begin{proof}
If we consider the composite function $f_{\mathcal{O}}\circ g_{\mathcal{O}}=h_{\mathcal{O}}:\hat{C}\rightarrow\hat{A}$
from the definition of $\Sigma$ we have
\begin{align*}
\Sigma(f_{\mathcal{O}})&:\sigma(\hat{A})\rightarrow\sigma(\hat{B})\\
\Sigma(g_{\mathcal{O}})&:\sigma(\hat{B})\rightarrow\sigma(\hat{C})\\
\Sigma(f_{\mathcal{O}}\circ g_{\mathcal{O}})=\Sigma(h_{\mathcal{O}})&:
\sigma(\hat{A})\rightarrow\sigma(\hat{C})
\end{align*}
therefore
\begin{align*}
\Sigma(f_{\mathcal{O}}o g_{\mathcal{O}})(\alpha)&=\Sigma(h_{\mathcal{O}})(\alpha)\\
&=h(\alpha)\\
&=(g \circ f)(\alpha)\\
&=g \circ(\Sigma(f_{\mathcal{O}})(\alpha))\\
&=[\Sigma(g_{\mathcal{O}})\circ\Sigma(f_{\mathcal{O}})](\alpha)
\end{align*}

\end{proof}
\subsubsection{Example of Spectral Presheaf}
Let us consider a simple category whose elements are defined by
\begin{align*}
\hat{A}&=a_1P_1+a_2P_2+a_3P_3\\
\hat{B}&=b_1(P_1\vee P_3)+b_2P_2\\
\hat{C}&=c_1P_3+c_2(P_1\vee P_2)
\end{align*}
This can be represented in the following diagram:
\[\xymatrix{
&&\hat{1}\ar[rrd]^{f_{\mathcal{O}}}\ar[lld]_{g_{\mathcal{O}}}&&\\
\hat{B}\ar[rrd]_{i_{\mathcal{O}}}&&&&\hat{C}\ar[lld]^{j_{\mathcal{O}}}\\
&&\hat{A}&&\\
&&\hat{0}\ar[u]_{k_{\mathcal{O}}}&&
}\]
The elements of the presheaf $\Sigma$ are the following:
\begin{align*}
\sigma(\hat{A})&=\{a_1,a_2,a_3\}\\
\sigma(\hat{B})&=\{b_1,b_2\}\\
\sigma(\hat{C})&=\{c_1,c_2\}
\end{align*}
From definition \ref{def:sigma} it follows that, for example, the map $j_{\mathcal{O}}:\hat{C}\rightarrow\hat{A}$ 
gets mapped to $\Sigma(j_{\mathcal{O}}):\sigma(\hat{A})\rightarrow\sigma(\hat{C})$ such that, component-wise, we get the following mapping:
\begin{align*}
\Sigma(j_{\mathcal{O}})(a_3)&=c_1\\
\Sigma(j_{\mathcal{O}})(a_1)&=c_2\\
\Sigma(j_{\mathcal{O}})(a_2)&=c_2
\end{align*} 

\subsection{The Dual Presheaf on $W$ }
Another presheaf which can de defined in quantum theory is the dual presheaf on the category $W$ of Boolean algebras under sub-algebra inclusion. In particular we have:
\begin{Definition}
The dual presheaf on $W$ is the contravariant functor $D : W \rightarrow \Sets$ defined as follows:
\begin{itemize}
\item  On objects: $D(W)$ is the dual of $W$; thus it represents the set $Hom(W,\{0,1\})$ of all homo- morphisms from the Boolean algebra $W$ to the Boolean algebra $\{0, 1\}$.
\item On morphisms: given $i_{W_2,W_1} : W_2\rightarrow W_1$ then $D(i_{W_2,W_1}) : D(W_1)\rightarrow D(W_2)$ is defined by $D(i_{W_2W_1})(\chi) := \chi|_{W_2}$ where $\chi|_{W_2}$	denotes the restriction of $\chi\in D(W_1)$ to the sub-algebra $W_2\subseteq  W_1$.
\end{itemize}
\end{Definition}

\section{Topos Theory}
In this lecture we will analyse, in details, what a topos is. The very hand wavy definition of a topos is that of a category with extra properties. What these extra properties are we will see later on, the important thing for the time being is what this extra properties imply. The implications of these extra properties are that they make a topos ``look like" $\Sets$, in the sense that any mathematical operation which can be done in set theory can be done in a general topos.

In the previous lecture we gave an account (not complete) on how set theoretical structures can be given an external characterisation through category theory. Although it is true that all of the set theoretic constructions can be defined in a categorical language, however, it is not true that all categories have all these set theoretical constructions.

\noindent
A topos, on the other hand, is a category for which all the categorical versions of set constructs exist and are well defined. It is precisely in this sense that a topos ``looks like" $\Sets$.

Before giving the axiomatic definition of what a topos is we, first of all, need to define certain extra constructs of category theory, which are required to be present for a given category to be a topos.  
\subsection{Exponentials}
Given two sets $A$ and $B$, let us imagine we would like to group all arrows between them, i.e.,  we would then define the set $\{f_i:A\rightarrow B\}$. An important property of the set $\{f_i:A\rightarrow B\}$ is that
\be
\text{ if } A, B\in \Sets \text{ then }\{f_i:A\rightarrow B\}\in \Sets
\ee
We will call this object (which in this particular situation is the set $\{f_i:A\rightarrow B\}$) an \emph{exponential} and we will denoted it as $B^A$. 

We now would like to abstract the characterisation of such an object for a general category. Thus we would like to define an object $B^A$ with the properties i) if $A\in\c$ and $B\in \c$ then $B^A\in\c$; ii) it represents a certain relation between $A$ and $B$.

Since in categorical language objects are defined according to the relations with other objects, we will define $B^A$ in terms of what it `does' operationally. To this end let us consider all three objects involved: $A$, $B$, $B^A$. Taking inspiration from what they actually represent in $\Sets$, a possible relation between then can be defined as follows:
\ba
ev:B^A\times A&\rightarrow &B\\
(f_i, a)&\mapsto&ev(f_i,a):=f_i(a)
\ea 
This definition seems very plausible, however, for it to make sense in the categorical world it has to be \emph{universal}, in the sense that any other arrow $g:C\times A\rightarrow B$ will factor through $ev$ in a unique way. Thus, there will exist a unique arrow 
$\hat{g}:C\rightarrow B^A$, which makes the following diagram commute
\[\xymatrix{
B^A\times A\ar[rr]^{ev}&&B\\
&&\\
C\times A\ar@{-->}[uu]^{\hat{g}\times 1_A}\ar[rruu]_g&&\\
}\]
where 
\ba
\hat{g}:C&\rightarrow& B^A\\
c&\mapsto&\hat{g}(c):=g(c,-)
\ea
such that
\ba
g(c,-):A&\rightarrow &B\\
a&\mapsto&g(c,a)
\ea
Thus the map $\hat{g}$ assigns to any $c\in C$ a function $A\rightarrow B$ by taking $g$ and keeping the first term fixed at $c$, while ranging over the elements of $A$. Thus for each $c\in C$ we have that $\hat{g}=g_c:=g(c,-)$.

We are now ready to give the abstract categorical definition of what an exponential is.
\begin{Definition} 
Given two $\c$-objects $A$ and $B$, their \emph{exponentiation} is a $\c$-object $B^A$ together with an \emph{evaluation map} $ev:B^A\times A\rightarrow B$ with the property that, given any other $\c$-object $C$ and 
$\c$-arrow $g:C\times A\rightarrow B$, there exists a unique arrow 
$\hat{g}:C\rightarrow B^A$, such that the following diagram commutes
\[\xymatrix{
B^A\times A\ar[rr]^{ev}&&B\\
&&\\
C\times A\ar@{-->}[uu]^{\hat{g}\times 1_A}\ar[rruu]_g&&\\
}\]
\end{Definition} 

Therefore for any $\langle c,a\rangle\in C\times A$  we get that 
\ba
ev\circ (\hat{g}\times 1_A)(\langle c,a\rangle= ev(\langle\hat{g}(c),a\rangle)= g_c(a)=g(\langle c, a\rangle)
\ea

Now that we have abstractly defined what an exponentiation is, we would like to know what its elements are. Remember that we started off with the example in $\Sets$ and in that case we knew that the objects in $B^A$ are maps $\{f_i:A\rightarrow B\}$. We then abstracted from this particular example and defined a general notion of an exponential in terms of the universal property of the exponential map. However, such definition did not relay on what type of elements $B^A$ had, if any. We now would like to go back full circle and see what we can say about the elements of $B^A$. We know from previous lectures that an element in any object $C$ is identified with a map $1\rightarrow C$. This correspondence will be used in the following definition.
\begin{Definition} \label{def:one}
Objects of $B^A$ are in one-to-one correspondence with maps of the form $f:A\rightarrow B$.
To see this let us consider the following commuting diagram
\[\xymatrix{
B^A\times A\ar[rr]^{ev}&&B\\
&&\\
1\times A\ar@{-->}[uu]^{\hat{f}\times 1_A}\ar[rruu]_{f^{'}}&&\\
}\]
where $f^{'}:1\times A\rightarrow B$ is unique given $f$.
But $1\times A\equiv A$, therefore to each element of $B^A$ ($\hat{f}:1\rightarrow B^A$)  there corresponds a unique function $f:A\rightarrow B$.
\end{Definition}
The above definition of \emph{exponentials} allows us to define a subset of categories called \emph{cartesian closed categories}. The precise definition is as follows:
\begin{Definition}
A category $\c$ is said to be cartesian closed (CCC) if it has terminal object, products and exponentials.
\end{Definition}
Examples of CCC are $\Sets$ and Boolean algebras, seen as categories. For those who are interested in logic, a Boolean algebra is defined as a CCC as follows:
\begin{itemize}
\item Products are given by conjunctions
\be
A \wedge B
\ee
\item Exponentials are implications 
\be
A\Rightarrow B :=\neg A\vee B
\ee
\item Evaluation is Modus Ponens,
\be
(A\Rightarrow B)\wedge A\leq B 
\ee
\item Universality is the Deduction Theorem,
\be
C \wedge A\leq B \Leftrightarrow C\leq A\Rightarrow  B .
\ee
\end{itemize}

\subsubsection{Examples of Exponentiation}
In the following, we will give examples of cartesian closed categories and how the exponentials in each of them are formed.
\begin{example}
In $\mathbf{Sets}$, given two objects $A$ and $B$, the exponential $B^A$ is defined as follows:
\be
B^A=\{f|f\text{is a function from A to B}\}
\ee
In this case the evaluation map 
would be the
following: $$ev(\langle f,x\rangle)=f(x) \text{ with } x\in A$$ 
\end{example}
\begin{example}
Consider the category \textbf{Finord} of all finite ordinals. In such a category we have as objects numbers $0,1,\cdots,n$ where 
\ba
0&=&\emptyset\\
1&=&\{0\}\\
2&=&\{0,1\}\\
3&=&\{0,1,2\}\\
&\vdots&\\
n&=&\{1,2,\cdots, n-1\}
\ea
The maps are then simply the set functions between these cardinals. Given two such elements $n$ and $m$, then the exponential would be $n^m$ where $m$ is the exponent. Such an element should be considered as a finite ordinal with $n^m$ elements in it.
\end{example}
\subsection{Pullback}\label{spull}
We will now define another construction which is present in any topos. This is the notion of pullback or fiber product.
\begin{Definition} 
A \textbf{pullback} or \textbf{fibered product} of a pair of functions $f:A\rightarrow C$ and $g:B\rightarrow C$ (with common codomain) in a category 
$\c$, is a pair of $\c$-arrows $h:D\rightarrow A$ and $k:D\rightarrow B$, such that 
the following conditions are satisfied:
\begin{enumerate}
\item $f \circ h=g \circ k$ i.e the following diagram commutes

\[\xymatrix{
D\ar[rr]^{k}\ar[dd]_{h}&&B\ar[dd]^{g}\\
&&\\
A\ar[rr]_f&&C\\
}\]
  
One usually writes $D=A\times_C B$. 
\item Given two functions $i:E\rightarrow A$ and $j:E\rightarrow B$, where $f \circ i=g \circ j$ 
then, there exists a unique $\c$-arrow $l$ from $E$ to $D$ such that the outer rectangle of 
the following diagram commutes
\[\xymatrix{
E\ar@/^/[rrrrd]^j\ar@{-->}[rrd]^l\ar@/_/[rrddd]_i&&&&\\
&&D\ar[rr]^k\ar[dd]_h&&B\ar[dd]^g\\
&&&&\\
&&A\ar[rr]_f&&C\\
}\]

i.e.
\begin{equation*}
i=h \circ l\hspace{.2in}j=k \circ l
\end{equation*} 
We then say that $f$ (respectively $g$) has been pulled back along $g$ (respectively $f$). 
\end{enumerate}
\end{Definition}
A pullback of a pair of $\c$-arrows $A\xrightarrow{f} C\xleftarrow{g} B$ is a limit of the diagram
\[\xymatrix{
&&B\ar[dd]^g\\
&&\\
A\ar[rr]_f&&C\\
}\]
In fact a cone\footnote{Strictly speaking one has two cones $C\xleftarrow{j}D\xrightarrow{i}A$ and $C\xleftarrow{j}D\xrightarrow{h}B$, but when composing the diagrams it turns out that $j=g\circ h=f\circ i$, thus we can omit the $j$ map. } for this diagram is a pair of $\c$-arrows $A\xleftarrow{i}D\xrightarrow{h}B$ which compose to give the commutative diagram
\[\xymatrix{
D\ar[rr]^h\ar[dd]_i&&B\ar[dd]^g\\
&&\\
A\ar[rr]_f&&C\\
}\]
A limiting cone is such that given any other two cones $A\xleftarrow{k}E\xrightarrow{l}D$ and $B\xleftarrow{j}E\xrightarrow{l}D$ it factors uniquely through it, i.e. it gives rise to the following commuting diagram.
\[\xymatrix{
E\ar@/^/[rrrrd]^j\ar@{-->}[rrd]^l\ar@/_/[rrddd]_k&&&&\\
&&D\ar[rr]^h\ar[dd]_i&&B\ar[dd]^g\\
&&&&\\
&&A\ar[rr]_f&&C\\
}\]

\subsubsection{Examples of Pullback}
\begin{example}
If $A, C, D$ and $B$ are sets, then
\be
D=A \times_C B=\{(a,b)\in A \times B | f(a)=g(b)\}\subseteq A \times B\}
\ee
with maps 
\ba
k:A \times_C B&\rightarrow& B\\
(a,b)&\mapsto& b
\ea
and 
\ba
h:A \times_C B&\rightarrow& A\\
(a,b)&\mapsto& a
\ea
satisfies the conditions of being a pullback.\\
\begin{proof}
Given a set E with maps $j:E\rightarrow B$ and $i:E\rightarrow A$, then the map $l:E\rightarrow A \times_C B$, ($e \longmapsto (i(e),j(e)$) would make the diagram 
\[\xymatrix{
E\ar@/^/[rrrrd]^j\ar@{-->}[rrd]^l\ar@/_/[rrddd]_i&&&&\\
&&D\ar[rr]^k\ar[dd]_h&&B\ar[dd]^g\\
&&&&\\
&&A\ar[rr]_f&&C\\
}\]
commute. In fact we have the following identities for all $e\in E$, $h\circ l(e)=i(e)$ and $k\circ l(e)=j(e)$. 
\end{proof}
Moreover $l$ is unique since, given any other map $m:E\rightarrow A \times_C B$, such that $h\circ m=i$ and $k\circ m=j$, then for all $e\in E$ the following holds:
\be
l(e)=(i(e),j(e))=(h(m(e)),k(m(e)))=m(e)=(h(a,b),k(a,b))=(a,b)
\ee
Therefore $l$ is unique.
\end{example}

\subsection{Pushouts} \label{spus}
As usual, any notation in category theory has a dual, thus we will now define the dual of a pullback which is a push out.
\begin{Definition} 
A \textbf{Pushout} or \textbf{fibered co-product} of a pair of functions $f:A\rightarrow B$ and $g:A\rightarrow C$ in a category 
$\c$ is a pair of $\c$-arrows $h:B\rightarrow D$ and $k:C\rightarrow D$, such that 
the following conditions are satisfied:
\begin{enumerate}
\item $h \circ f=k \circ g$, i.e the following diagram commutes

\[\xymatrix{
A\ar[rr]^{f}\ar[dd]_{g}&&B\ar[dd]^{h}\\
&&\\
C\ar[rr]_k&&D\\
}\]
  
One usually writes $D=C+_A B$. 
\item Given two functions $i:B\rightarrow E$ and $j:C\rightarrow E$, where $i \circ f=j\circ g$, 
then there exists a unique $\c$-arrow $l$ from $D$ to $E$, such that the outer rectangle of 
the following diagram commutes
\[\xymatrix{
A\ar[rr]^f\ar[dd]^g&&B\ar[dd]^h\ar@/^/[rrddd]^i&&\\
&&&&\\
C\ar[rr]_k\ar@/_/[rrrrd]_j&&D\ar@{-->}[rrd]^l&&\\
&&&&E\\
}\]

i.e.
\begin{equation*}
i=l \circ h\hspace{.2in}j=l \circ k
\end{equation*} 
We then say that f (respectively g) has been pushed out along g (respectively f).
\end{enumerate}
\end{Definition}
\subsubsection{Examples}
\begin{example}
In $\Sets$, given three sets $A, B, C$, the set $D=C+_A B$ always exists and it is identified with the disjoint union of $C$ and $B$, i.e.
\begin{equation*}
D=C\amalg B:=\Big\{(x,t)\in (C\cup B)\times\{0,1\}|\begin{cases}x\in B& if\hspace{.1in}t=0\\
x\in C& if\hspace{.1in}t=1
\end{cases}\Big\}
\end{equation*}
where, in this case, the arrows $h$ and $k$ are defined as follows:
\ba
h:B&\rightarrow& C\amalg B\\
b&\mapsto& (b,0)
\ea and 
\ba
k:C&\rightarrow& C\amalg B\\
c&\mapsto& (c,1)
\ea
We now want to prove that $C\amalg B$, as defined, does indeed satisfy the conditions of a pushout.
\begin{proof}:\\
Given a set $E$ and two maps $j:C\rightarrow E$, $i:B\rightarrow E$, we define the map $l:C\amalg B\rightarrow E$, such that $(c,1)\longmapsto j(c)$ and $(b,0)\longmapsto i(b)$.\\
It is then easy to see that the diagram 
\[\xymatrix{
A\ar[rr]^f\ar[dd]^g&&B\ar[dd]^h\ar@/^/[rrddd]^i&&\\
&&&&\\
C\ar[rr]_k\ar@/_/[rrrrd]^j&&D\ar@{-->}[rrd]^l&&\\
&&&&E\\
}\]
commutes. In fact we have the following:
\be
(l\circ h)(b)=l(b,0)=i(b)
\ee and 
\be
(l\circ k)(b)=l(c,1)=j(c)
\ee
The second step in the proof is showing that the map $l$ is unique. In fact, given another map $m:C\amalg B\rightarrow E$, such that $m\circ h=i$ and $m\circ k=j$, then we would have the following equality:
\be
l(b,0)=i(b)=m(h(b))=m(b,0)
\ee and 
\be 
l(c,1)=j(c)=m(k(c))=j(c)
\ee
This shows that $l$ is unique.
\end{proof}
\end{example}

\subsection{Sub-Objects}\label{sssub}
Everyone is familiar with the notion of a subset in the category $\Sets$. We now would like to generalise this notion and describe it in categorical language, i.e. in terms of relation.

Let us start with $\Sets$ and see how much we can abstract from the already known definition of subset. Consider two sets $A$ and $B$ such that $A\subseteq B$. This means that there is an inclusion map $f:A\hookrightarrow B$. In categorical language the map $f$ is monic. So if we went the other way round and considered a monic arrow $f:A\rightarrowtail B$, this would determine a subset of $B$, namely $Im f:=\{f(x)|x\in A\}$. Thus $Im f\subseteq B$ and $Im f\simeq A$.

What this means is that the domain of a monic arrow is isomorphic to a subset of the codomain of the arrow, i.e. up to isomorphisms the domain of a monic arrow is a subset of the codomain.

\noindent
As you might have noticed, going from a subset of a set to the set itself requires a change of type, we are in a different object (namely going from domain object to codomain object). This implies that in the categorical version of subset, there is no feasible way to say that the same element $x$ is in both a set and a subset of the set. All we can say is that two elements are isomorphic. This might seem striking at first since we are used to think in $\Sets$-language terms, where object are determined/defined by the elements which comprise them thus, saying that two sets are the same means saying that the elements which compose them are the same. However, in categorical language ``the same" now becomes ``are isomorphic" since all is defined in terms of relations, i.e. arrows between objects. 

So, preliminary, we will define a sub-object of a $\c$ object $B$ as an arrow in $\c$ which is monic and which has codomain $B$.

\noindent
However, this is not the end of the story, since we are not taking into account that some objects might be isomorphic which in categorical language means the same, thus, we should only really consider them once. Therefore our definition of a sub-object should take into account the existence of equivalent objects.
In order to do so we need the following definition:
\begin{Definition}
Given an arrow $ f : A\rightarrow  B$ in some category $\c$, if for some arrow $g:C\rightarrow B$ in $\c$, there exists another arrow $h:A\rightarrow C$ in $\c$ such that $f=g\circ h$, then we say that $f$ factors through $g$, since $f=g\circ h$ can be solved for $h$.
\end{Definition}
Diagrammatically what factorising means is that the following diagrams commute
\[\xymatrix{
A\ar[rr]^f\ar[dd]_h&&B\\
&&\\
C\ar[rruu]_g&&\\
}\]

We now consider two sub-objects $g:C\rightarrow B$ and $f:A\rightarrow B$, such that they both factor through each other, i.e., the following diagram commutes
\[\xymatrix{
A\ar@{>->}[rr]^f\ar[dd]_h&&B&&&&C\ar@{>->}[rr]^g\ar[dd]_k&&B\\
&&&&&&&&\\
C\ar@{>->}[rruu]_g&&&&&&A\ar@{>->}[rruu]_f&&\\
}\]

We then consider such sub-objects as equivalent. This leads to the definition of the following equivalence relation.
\begin{Definition}\label{def:equi}
Given two monic arrows $f, g$ with the same codomain we say that they are equivalent $f\sim g$ iff they factor though each other.
\end{Definition}

Given the above we are now ready to define the categorical version of a sub-object.
\begin{Definition}
In a category $\c$, a sub-object of any object in $\c$ is an equivalence class of monic arrows under the equivalence condition $\sim $ defined in \ref{def:equi}. We will denote such an equivalence class as $[f]$. The sub-object is a proper sub-object if it does not contain $id_{\c}$.
\end{Definition}

This definition of sub-objects allows us to define the collection of all sub-objects of a given object as a poset under subset inclusion. In particular, we know from set theory that the collection of all subsets of a given set $X$ is a poset, such that there is an arrow between any two such subsets $A$ and $B$ iff $A\subseteq B$. Thus, diagrammatically, we have

\[\xymatrix{
B\ar@{^{(}->}[drr]&&\\
&&X\\
A\ar@{^{(}->}[urr]\ar@{^{(}->}[uu]&&\\
}\]
 Therefore abstracting such a definition to categorical language we say that two monic arrows $f:A\rightarrowtail B$ and $g:C\rightarrowtail B$ are such that $f\subseteq g$ iff the following diagram commutes
 \[\xymatrix{
C\ar@{>->}[drr]^g&&\\
&&B\\
A\ar@{>->}[urr]^f\ar[uu]^h&&\\
}\]
i.e. $f=g\circ h$. (h is monic\footnote{ Given $f:A\rightarrow B$, and $h, k:D\rightarrow A$, assume that $f\circ h=f\circ k$. Then consider $g:B\rightarrow C$ such that $(g\circ f)$ is monic. It follows that
\be
(g\circ f)\circ h=g\circ (f\circ h)=g\circ (f\circ k)=(g\circ f)\circ k\text{ implies that } h=k
\ee
}

If we now consider all sub-objects of a given object $B$ 
\be
Sub(B):=\{[f]|f \text{ is monic and } \text{cod}f=B\}\text{ where }[f]=\{g|f\simeq g\}
\ee
then such a collection of sub objects forms a poset under subset inclusion defined by 
\be
[f]\subseteq [g]\text{ iff } f\subseteq g
\ee
\begin{proof}\label{pro:poset}\hspace{.1in}\\
\begin{itemize}
\item Reflexive: $[f]\subseteq [f]$ implying that $f\subseteq f$. The latter is satisfied since

 \[\xymatrix{
A\ar@{>->}[drr]^f&&\\
&&B\\
A\ar@{>->}[urr]^f\ar[uu]^{1_A}&&\\
}\]
\item Transitive: $[f]\subseteq [g]$ and $[g]\subseteq [h]$ then $[f]\subseteq [h]$. The fact that $[f]\subseteq [g]$ and $[g]\subseteq [h]$ implies that the diagram
\[\xymatrix{
C\ar@{>->}[ddrr]^h&&\\
&&\\
B\ar@{>->}[rr]^g\ar[uu]^i&&D\\
&&\\
A\ar@{>->}[uurr]_f\ar[uu]^{k}&&\\
}\]
commutes, where $f\in [f]$, $g\in[g]$ and $h\in[h]$. If $f\circ g=k$ and $g=h\circ i$ then it follows that $f=h\circ (i\circ k)$ thus $f\subseteq h$ and $[f]\subseteq [h]$
\item Antisymmetric:  If $[f]\subseteq [g]$ and $[g]\subseteq [f]$ then $f\subseteq g$ and $g\subseteq f$. This implies that $f\sim g$ thus $[f]=[g]$. \\
If we had only considered equivalence classes then $f\subseteq g$ and $g\subseteq f$ would only imply $f\sim g$ but not $f=g$. This would only be the case iff the only arrow allowing the factorisation of $f$ via $g$ (or other way round) would be the identity arrow.

\end{itemize}

\end{proof}

From now on, when we will talk about sub-objects it will be implicit that we are referring to equivalence classes of sub-objects, even though it is not explicitly stated.

It is interesting to note that sub-objects, given by a categorical definition, are not the same as subsets, but each subset determines and is determined by a unique sub-object. In fact we have the following definition of a sub-object in $\Sets$:
\begin{Definition}
In  $\Sets$ a sub-object is an equivalence class of injections (set theoretic equivalence of monic map).
\end{Definition}

The 1:1 (one to one) correspondence between sub-objects and sets in set theory is given by the statements in lemma \ref{lem:subsets} and \ref{lem:subsets2}. 

In particular , lemma \ref{lem:subsets} shows that given a sub-object $O$ of a set $S$ there corresponds a unique subset $I$ of $S$. On the other hand lemma \ref{lem:subsets2} shows the reverse, given a subset $I\subseteq S$ there corresponds a unique sub-object $O$ of $S$.

\begin{Lemma}\label{lem:subsets}\hspace{.1in}\\
Given a set $S$ and a sub-object $O$ (equivalence class of injective maps with codomain S) we have the following:
\begin{enumerate}
\item [a)] any two injections $f:A\rightarrow S$ and $g:B\rightarrow S$ which are in $O$ (i.e. they are equivalent) have the same image $I$ in S.
\item [b)] The inclusion $i : I\rightarrow S$ is equivalent to any injection in $O$, thus it is an element of $O$.
\item [c)] If $j : J\rightarrow S$ is an inclusion of a subset $J$ into $S$ that is in $O$, then $I = J$ and $i = j$. 
\item [d)] It follows from the above statements that every sub-object $O$ of $S$ contains one and only one injective map which represents the inclusion of a subset of $S$ into $S$. This  subset is the image of any element of $O$. 
\end{enumerate}
\end{Lemma}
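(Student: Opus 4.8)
The plan is to prove the four assertions in the order stated, since each part feeds into the next. Throughout I work in $\Sets$, where by Definition \ref{def:equi} a monic arrow is exactly an injective function and two injections with common codomain are equivalent precisely when they factor through each other. First I would prove (a). Let $f:A\rightarrow S$ and $g:B\rightarrow S$ both lie in $O$, so $f\sim g$: there exist $h:A\rightarrow B$ and $k:B\rightarrow A$ with $f=g\circ h$ and $g=f\circ k$. The key observation is that $h$ and $k$ are mutually inverse bijections. Indeed $f = g\circ h = (f\circ k)\circ h = f\circ(k\circ h)$, and since $f$ is monic (left-cancellable), $k\circ h = \mathrm{id}_A$; symmetrically $h\circ k = \mathrm{id}_B$. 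Hence $h$ is a bijection of $A$ onto $B$, from which $f(A) = g(h(A)) = g(B)$, so both injections have the same image $I\subseteq S$.

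Next I would prove (b). Given any $f:A\rightarrow S$ in $O$ with image $I = f(A)$, I corestrict $f$ to a map $f':A\rightarrow I$; this $f'$ is injective (as $f$ is) and surjective onto $I$ by construction, hence a bijection. Writing $i:I\rightarrow S$ for the inclusion, we have $f = i\circ f'$ and $i = f\circ (f')^{-1}$, so $i$ and $f$ factor through each other. Therefore $i\sim f$, and consequently $i$ lies in the equivalence class $O$.

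Then (c) and (d) follow quickly. For (c), if $j:J\rightarrow S$ is the inclusion of a subset $J$ and $j\in O$, then by (a) applied to $j$ and to $i\in O$ the two have equal image; but the image of the inclusion $j$ is $J$ itself, while the image of $i$ is $I$, so $J = I$, and two inclusions of the same subset $I=J$ of $S$ into $S$ coincide as functions, giving $i=j$. Part (d) is then the synthesis: (b) exhibits at least one subset-inclusion, namely $i:I\rightarrow S$, inside $O$, and (c) shows it is the only one, so $O$ contains exactly one inclusion of a subset into $S$, and that subset is the image $I$ common to all members of $O$ by (a).

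The only step requiring genuine care is (a): one must extract from the purely diagrammatic notion of mutual factorization the concrete conclusion that the connecting maps $h,k$ are bijections, since this is exactly what makes ``same sub-object'' translate into ``same image''. This rests on the left-cancellability of monic arrows established earlier in the excerpt, and once it is in hand everything else is bookkeeping about images and inclusions.
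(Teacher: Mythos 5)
Your proof is correct, and it diverges from the paper's in the one place where a real argument is needed, part~(a). The paper establishes equality of images by a direct containment argument: from $f_2 = f_1\circ h$ it reads off $\mathrm{im}(f_2)=\mathrm{im}(f_1\circ h)\subseteq \mathrm{im}(f_1)$, and symmetrically from $f_1 = f_2\circ g$ it gets $\mathrm{im}(f_1)\subseteq \mathrm{im}(f_2)$; this uses only the elementary fact that precomposition can never enlarge an image, and never actually needs cancellability (its remark that $h$ is monic is inessential to the containments). You instead use left-cancellability of the monics to show $k\circ h=\mathrm{id}_A$ and $h\circ k=\mathrm{id}_B$, i.e.\ that the connecting maps of the mutual factorization are inverse bijections, and then deduce $f(A)=g(h(A))=g(B)$. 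Your route proves strictly more: it recovers the standard categorical fact that equivalent monics have isomorphic domains, which is reusable beyond $\Sets$, whereas the paper's containment trick is shorter but purely set-theoretic in character. You also give an explicit argument for part~(b) (corestriction $f'\colon A\to I$ is a bijection, so $f=i\circ f'$ and $i=f\circ (f')^{-1}$ exhibit the mutual factorization), a step the paper's proof silently skips before moving on to~(c); your treatment of (c) and (d) then coincides with the paper's.
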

\begin{Lemma}\label{lem:subsets2}\hspace{.1in}\\
On the other hand, given an inclusion map $i:I\rightarrow S$ of a subset $I$ into $S$, we have that:
\begin{enumerate}
\item $i$ is injective, thus it is an element of some sub-object $O$ of $S$.
\item Any two distinct equivalence classes are disjoint, thus $i$ can not belong to two different sub-objects.
\item Hence each subsets of $S$ (and their inclusion maps) defined a unique sub-object of $S$.
\end{enumerate}
\end{Lemma}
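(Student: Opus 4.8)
The plan is to dispatch the three assertions in turn, all of which follow quickly once one recalls that sub-objects are, by Definition \ref{def:equi} and the subsequent definition of a sub-object as an equivalence class, precisely the equivalence classes of monic arrows under the relation $\sim$. The only assertion requiring genuine thought is the last, and there much of the work has already been carried out in Lemma \ref{lem:subsets}, which I would invoke to secure uniqueness. So the proof is largely a matter of reading the two lemmas as the two halves of one bijection.

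For the first claim I would observe that the inclusion $i:I\rightarrow S$ acts by $i(x)=x$ for every $x\in I$, so that $i(x)=i(y)$ immediately yields $x=y$; hence $i$ is injective. By the identification of injective set functions with monic arrows in $\Sets$ established earlier in the chapter, $i$ is then a monic arrow with codomain $S$. Since $\sim$ is an equivalence relation on the monic arrows with codomain $S$, the arrow $i$ lies in exactly one equivalence class, namely $[i]$, and this class is by definition a sub-object $O$ of $S$.

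The second claim is the standard fact that the equivalence classes of an equivalence relation partition the underlying collection: any two such classes are either identical or disjoint. Consequently $i$, which belongs to $[i]$, belongs to no other class, and therefore cannot sit inside two distinct sub-objects.

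For the third claim I would assemble the assignment $I\mapsto[i]$ from subsets of $S$ to sub-objects of $S$ and argue that it is a bijection onto the sub-objects of $S$. Well-definedness, and the fact that each subset determines a single sub-object, are exactly claims one and two. The substantive point is injectivity of this assignment: if $I\neq J$ are distinct subsets with inclusions $i,j$, I must rule out $[i]=[j]$. Here I would appeal to Lemma \ref{lem:subsets}(d), which says that every sub-object contains exactly one inclusion of a subset of $S$, namely the inclusion of the common image of all its members. Were $[i]=[j]$, then both $I$ and $J$ would have to be this unique image, forcing $I=J$, a contradiction. The main (and only mild) obstacle is thus bookkeeping rather than mathematics: keeping straight that Lemma \ref{lem:subsets} recovers a canonical subset from a sub-object while the present lemma produces the sub-object from a subset, so that together they exhibit the claimed $1$:$1$ correspondence between subsets of $S$ and sub-objects of $S$.
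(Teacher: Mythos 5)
Your proof is correct and follows essentially the same route as the paper: the paper likewise treats claims (1) and (2) as immediate consequences of the injectivity of inclusions and the partition property of the equivalence relation $\sim$, and secures uniqueness in claim (3) by appealing to Lemma \ref{lem:subsets} (that every sub-object contains exactly one inclusion map, namely that of the common image of its elements). Your write-up is in fact more explicit than the paper's own rather terse combined proof, but no new idea or different decomposition is involved.
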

\begin{proof}
We now proof both of the above lemmas.
\begin{enumerate}
\item 
$\mathcal{O}:=[f_i]$, $f_i:A\rightarrow S$ where $f_1\sim f_2$ iff

\[\xymatrix{
A\ar[rr]^{f_1}\ar[dd]^g&&S\\
&&\\
B\ar[uurr]^{f_2}&&\\
}\text{    and   }
\xymatrix{
A\ar[rr]^{f_1}&&S\\
&&\\
B\ar[uu]^h\ar[uurr]^{f_2}
}\]

Since $f_1\circ h=f_2$ ( $im(f_1\circ h)=im(f_2)$) and $f_1$ and $f_2$ are monic then $h$ is monic which in Sets means it is injective. Thus $im(h)\subseteq A$ which in Sets can be written as $im(B)\subseteq A$ it follows that $im(f_1)\subseteq im(f_2)$. On the other hand $f_2\circ g=f_1$ thus $im(f_2\circ g)=im(f_1)$  $g$ is monic. Therefore $im(g)\subseteq $ B implying that $im(f_1)\subseteq im(f_1)$. Thus $im(f_1)=im(f_2)$.

\item $j:J\rightarrow S$ is an inclusion in $\mathcal{O}$, thus an injection, i.e. an element. Each element in $\mathcal{O}$ has the same image therefore $im(j)=I$. but $im(j)=J$ since $J\subseteq S$ thus $J=I$ and $j=i$.

\end{enumerate}

\end{proof}
The difference between the categorical definition of a sub-object, as an equivalence class of monic arrows and the standard definition of a subset in set theory is quite important at a conceptual level, although, from an operational point of view they are equivalent. To understand this consider the integers $\Zl$. In standard set theory they are simply a subset of the reals $\Rl$, such that each integer is actually a real. On the other hand in category theory sub-object relations only require the existence (or definition) of a monic map between $\Zl$ and $\Rl$. Thus, in this case, an integer needs not be a real number, it could, in principle, be something different. Reiterating, in standard set theory, 
the image of each integer is represented by the same integer, so that the integers are a subset of the real numbers. From the categorical point of view, instead,  this condition can be relaxed, and all that is needed is that there exists a monic map $m:\Zl\rightarrow\Rl$. If one wants to recover the standard definition in which such monic arrow picks out the same integer, then an extra condition has to be placed, namely, that $m$ is an equivalent monic, i.e. picks out the same sub-object. However, from an operational point of view the integers, as defined by a monic map $\Zl\rightarrow\Rl$ behave the same way, whether this extra condition is taken into consideration or not.

\subsection{Sub-Object Classifier (Truth Object)}
Now that we have defined what a sub-object is we would like to understand how to identify sub-objects.
To this end let us consider a specific example in $\mathbf{Sets}$. Here we have the following isomorphisms which we will prove later.
\be
Sub(S)\simeq 2^S:=\{S\rightarrow 2=\{0,1\}\}
\ee
In fact, given a subset $A$ of $S$, i.e $A\subseteq S$, the notion of being a subset can be expressed mathematically
using the so called characteristic function: $\chi_A:S\rightarrow\{0,1\}$, which is 
defined as follows:
\begin{equation}\label{equ:character}
\chi_A(x)=\begin{cases}0& if\hspace{.1in}x\notin A\\
1& if\hspace{.1in}x\in A  
\end{cases}
\end{equation}
(here we interpret 1=true and 0=false). The role of the characteristic function is to determine which 
elements belong to a certain subset.

Remembering that in any category sub-objects are identified as monic arrows, we define the value true in terms of the following monic map:
\ba
true:1=\{0\}&\rightarrow& 2=\{0,1\}\\
0&\rightarrow& 1
\ea Given this definition,
it can be easily seen that 
\be
A=\{x|x\in S\text{ and } \chi_A(x)=1\}=\chi_A^{-1}(1)
\ee
This equation is equivalent to the statement that the following diagram
\begin{Diagram} \label{dia:11}
 \[\xymatrix{
*++{A}\ar@{^{(}->}[rr]\ar[dd]^{!}&&S\ar[dd]^{\chi_A}\\
&&\\
1\ar[rr]_{\text{true}}&&2\\
}\]
\end{Diagram}
is a pullback, i.e. $A$ is the pullback of $true:1\rightarrow \{0,1\}$ along $\chi_A$. In fact, if we consider the following diagram
\[\xymatrix{
C\ar@/^/[rrrrd]^g\ar@{-->}[rrd]^l\ar@/_/[rrddd]_{!}&&&&\\
&&A\ar@{^{(}->}[rr]^f\ar[dd]_{!}&&S\ar[dd]^{\chi_A}\\
&&&&\\
&&1\ar[rr]_{true}&&2\\
}\]
such that the outer square commutes, we then have that for any $c\in C$; $\chi_A(g(c))=true(!(c))=1$. From the definition of $A$ above it follows that $g(c)\in A$. Thus we can define the map $l$ for each $c\in C$ as $l(c):=g(c)$. 
Obviously such choice makes the whole diagram commute and is the only arrow that would do so.
It follows that $A\subseteq S$ iff the diagram \ref{dia:11} is a pullback. $\chi_A$ is the only arrow that makes such a diagram a pullback. 
\begin{proof}
Given the pullback diagram 
\[\xymatrix{
C\ar@/^/[rrrrd]^g\ar@{-->}[rrd]^l\ar@/_/[rrddd]_{!}&&&&\\
&&A\ar@{^{(}->}[rr]^f\ar[dd]_{!}&&S\ar[dd]^{h}\\
&&&&\\
&&1\ar[rr]_{true}&&2\\
}\]
We want to show that $h=\chi_A$. Since the diagram is indeed a pullback and $f$ a monic, we have that for all $x\in A$, $f(x)=true(!(x))=1$ therefore $A=f^{-1}(1)$, but this is precisely the definition of the characteristic map $\chi_A$.

\end{proof}

When diagrams like this arise:
 \[\xymatrix{
*++{A}\ar@{^{(}->}[rr]^f\ar[dd]_{!}&&S\ar[dd]^{\phi}\\
&&\\
1\ar[rr]_{\text{true}}&&2\\
}\]
we say that $\phi$ classifies $f$ or the sub-object represented by $f$.

We can now give an abstract characterisation of what it means to classify sub-objects.

\begin{Definition} \label{def:clas}
Given a category with a terminal object 1, a \textbf{sub-object classifier} is an object $\Omega$, 
together with a monic arrow $\mathcal{T}:1\rightarrow\Omega$ (topos analogue of the set theoretic arrow \emph{true}) such that, given a monic $\c$-arrow 
$f:a\rightarrow b$, there exists one and only one $\chi_f$ arrow, which makes the following
 diagram 
\[\xymatrix{
*++{a}\ar@{>->}[rr]^f\ar[dd]_{!}&&b\ar[dd]^{\chi_f}\\
&&\\
1\ar[rr]_{\mathcal{T}}&&\Omega\\
}\]
a pullback.

\end{Definition}
\vspace{.5in}
\begin{Axiom} \label{axi:1}
Given a category $\c$ with sub-object classifier $\Omega$ and sub-objects, there exists (in $\c$) an isomorphisms
\be
y:Sub_{\c}(X)\simeq \c(X,\Omega)\hspace{.1in}\forall X\in C
\ee
\end{Axiom}
In order to prove the above axiom we need to show that y is a) monic, b) epic c) has an inverse. Since the prove of the above 
theorem in topos is quite complicated and needs definitions not yet given we will use an analogous proof in $\mathbf{Sets}$, which 
essentially uses the 
same strategy as the proof in topos, but, is much more intuitive.
In $\mathbf{Sets}$ we can write the above axiom as follows:
\begin{Axiom}
The collection of all subsets of S denoted by $\mathcal{P}(S)$ and 
the collection of all maps from S to the set $\{0,1\}=2$ denoted by $2^S$ are isomorphic. This means that the function $y:\mathcal{P}(S)\rightarrow2^S$, 
which in terms of single elements of $\mathcal{P}(S)$ is $A\rightarrow\chi_A$, is a bijection.
\end{Axiom}

\begin{proof}
Let us consider the Diagram \ref{dia:11}
\begin{enumerate}
\item [a)] \emph{y is injective (1:1)}: 
 consider the case in which $\chi_A=\chi_B$, where
 \begin{equation*}
 \chi_B(x)=\begin{cases}1& iff\hspace{.1in}x\in B\\
 0& iff\hspace{.1in}x\notin B
 \end{cases}
 \end{equation*}
 while $\chi_A$ was defined as in equation \ref{equ:character}.
 
 Since the two functions are the same, they both associate the same domain to the codomain 1, therefore $A=B$ ($A=\chi^{-1}_A(1)=\chi_B^{-1}(1)=B$).
 \item [ b)] \emph{y is surjective (onto)}:
 given any function $f\in2^S$,
 then there must exist a subset $A$ of $S$, such that $A_f=\{x:x\in S\hspace{.1in}and\hspace{.1in} f(x)=1\}$, 
 i.e. $A_f=f^{-1}(\{1\})$.  Therefore $f=\chi_{A_f}$.
 \item [c)] The inverse is simply given by $\chi^{-1}(1)$.
 \end{enumerate}
\end{proof}
\begin{Corollary}
The domain of the arrow \emph{true}: $1\rightarrow \Omega$ is always the terminal object. 
\end{Corollary}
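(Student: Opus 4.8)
The plan is to show that whatever object $D$ serves as the domain of the monic $\mathcal{T}$ (written above as $1\to\Omega$) must in fact be a terminal object; equivalently, that for every $\c$-object $A$ there is exactly one arrow $A\to D$. I would extract this purely from the defining universal property of the sub-object classifier in Definition \ref{def:clas}, the key idea being to feed that property the identity monics $id_A$ rather than presuppose anything about $D$.

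First I would establish existence of an arrow $A\to D$. Since $id_A:A\to A$ is a monic $\c$-arrow, Definition \ref{def:clas} supplies a unique classifying arrow $\chi_{id_A}:A\to\Omega$ together with a pullback square whose top edge is $id_A$ and whose bottom edge is $\mathcal{T}:D\to\Omega$. The left edge of that pullback square is precisely an arrow $A\to D$, so $\c(A,D)$ is nonempty.

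Next I would prove uniqueness. Given any arrow $g:A\to D$, I would verify that the square with top edge $id_A:A\to A$, left edge $g:A\to D$, bottom edge $\mathcal{T}:D\to\Omega$ and right edge $\mathcal{T}\circ g:A\to\Omega$ is itself a pullback. Commutativity is immediate, and the universal property of this square reduces to left-cancelling the monic $\mathcal{T}$: if $u:B\to A$ and $v:B\to D$ satisfy $(\mathcal{T}\circ g)\circ u=\mathcal{T}\circ v$, then $\mathcal{T}\circ(g\circ u)=\mathcal{T}\circ v$ forces $g\circ u=v$ because $\mathcal{T}$ is monic, and then $w:=u$ is the unique mediating arrow. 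Hence $\mathcal{T}\circ g$ is a classifying arrow for $id_A$, so the uniqueness clause of Definition \ref{def:clas} yields $\mathcal{T}\circ g=\chi_{id_A}$, a value independent of $g$. Consequently any two arrows $g_1,g_2:A\to D$ satisfy $\mathcal{T}\circ g_1=\mathcal{T}\circ g_2$, and since $\mathcal{T}$ is monic it is left-cancellable, giving $g_1=g_2$.

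Combining the two steps, each $\c$-object $A$ admits one and only one arrow into $D$, which is exactly the statement that $D$ is terminal; by uniqueness of terminal objects up to isomorphism (proved in Section \ref{sterm}), $D\simeq 1$. The only delicate point I anticipate is the pullback verification in the uniqueness step: one must resist assuming $D$ is terminal, since that is what is being proved, and instead use only that $\mathcal{T}$ is monic. As a consistency check, the very same computation shows that classifying $id_b$ always returns the composite $\mathcal{T}\circ{}!_b$, in agreement with the square of Definition \ref{def:clas} having its lower-left corner at the terminal object.
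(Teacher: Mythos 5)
Your proof is correct and takes essentially the same route as the paper's: both arguments classify the identity monic $id_A$ to obtain an arrow $A\to D$, show that $\mathcal{T}\circ g$ classifies $id_A$ for \emph{any} arrow $g:A\to D$ so that the uniqueness clause of the sub-object classifier applies, and then left-cancel the monic $\mathcal{T}$ to conclude there is exactly one arrow $A\to D$, i.e.\ $D$ is terminal. The only difference is cosmetic: you verify the relevant square is a pullback directly from monicity of $\mathcal{T}$, whereas the paper obtains the same conclusion by stacking it on the self-pullback square of $\mathcal{T}$; your version is the cleaner write-up of the identical idea.
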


\begin{proof}
Let us assume that instead of the terminal object being the domain of the \emph{true} map we have a general element $Q$, obtaining $true:Q\rightarrow \Omega$. We now want to consider the identity arrow $id_A$ on an object $A$. Being an identity, such a map is iso, thus monic. We then want to analyse what sub-objects this map defines. We already know the answer since it is simply an identity map, but nonetheless we apply the sub-object classifier procedure and define the pull back diagram

\[\xymatrix{
A\ar[rr]^{id_A}\ar[dd]_{h}&&A\ar[dd]^{\chi_A}\\
&&\\
Q\ar[rr]_{true}&&\Omega\\
}\]
Thus we obtain that $\chi_A=true\circ h$.

On the other hand, given any arrow $k:A\rightarrow Q$, we can define the following pullback classifying diagram
\[\xymatrix{
A\ar[rr]^{id_A}\ar[dd]_{h}&&A\ar[dd]^{k}\\
&&\\
Q\ar[dd]_{id_Q}\ar[rr]_{id_Q}&&Q\ar[dd]^{true}\\
&&\\
Q\ar[rr]_{true}&&\Omega\\
}\]
This means that in this case $true\circ k=true\circ h$. We know that \emph{true} is monic, thus $k=h$. What this means is that for a given object (in this case) $A$ there is one and only one (up to isomorphisms) arrow from $A$ to $Q$, this $Q$ must be the terminal object.
\end{proof}

\subsection{What is a Sub-Object Classifier in Topos ?} \label{ssub}
In the case of $\mathbf{Sets}$, $\Omega\cong\{0,1\}$ therefore the elements of $\Omega$ are simply 0 and 1, which can be identified with the values false and true in the context of the theory of logic. This is not the case for a general 
topos. 

Since we are mainly interested in the topos $\Sets ^{\c^{\op}}$, which will be the topos in which quantum theory will be expressed (and will be explained in details in Lecture 9), we will analyse what the elements of the sub-object classifier are in this case. To fully understand the nature of these elements it is not necessary that you know the details of the topos $\Sets ^{\c^{\op}}$. For now it suffices to know that it is actually a category with objects and morphisms. The details of either are irrelevant at this point.

Given the category $\Sets ^{\c^{\op}}$, the \emph{elements} of the \emph{sub-object classifier} are \emph{sieves}.\\
In what follows we will first define what a sieve is and, then, we will show that they can be identified with elements of $\Omega$.
\subsection{Sieve} \label{ssieve}
\begin{Definition} 
A \textbf{sieve} on an object $A\in\c$ is a collection $S$ of morphisms in $\c$ 
whose codomain is $A$ and such that, if $f:B\rightarrow A\in S$ then, given any morphisms 
$g:C\rightarrow B$ we have
$f \circ g\in S$, i.e. $S$ is closed under left composition:
\[\xymatrix{
B\ar[rr]^{f}&&A\\
&&\\
C\ar[uu]^{g}\ar[rruu]_{fog}&&\\
}\]
\end{Definition} 
It is also possible to define maps between different sieves when the objects these sieves are defined on are related in some way. 
For example, if we have a $\c$-arrow $f:A\rightarrow B$ then it is possible to define a map from the set of all sieves on $B$, which we denote $\Omega(B)$ to the set $\Omega(A)$ of all sieves on $A$ as follows:
\ba\label{ali:sieve}
\Omega_{BA}:\Omega(B)&\rightarrow&\Omega(A)\\
S&\mapsto&\Omega_{BA}(S):=\{g|cod(g)=A\text{ and } f\circ g\in S\}
\ea
Sometimes you can symbolically write $\Omega_{BA}(S)=S\cap \downarrow A$ where $\downarrow A$ is the principal sieve on $A$, i.e. the sieve that contains the 
identity morphism of $A$, therefore it is the biggest sieve on $A$\footnote{Essentially $\downarrow A$ is the sieve which contains all possible $\c$-arrows, which has as codomain $A$, i.e. $\downarrow A:=\{f_i\in \c|cod(f_i)=A\}$.}.

An important property of sieves is the following: \\
if $f:B\rightarrow A$ belongs to $S$ which is a 
sieve on $A$, then the pullback of $S$ by $f$ determines the principal sieve on B, i.e.
\begin{equation*}
f^*(S):=\{h:C\rightarrow B|f \circ h \in S\}=\{h:C\rightarrow B\}=\downarrow B   
\end{equation*}
For example
\[\xymatrix{
&&C\ar[dl]\\
&B\ar[dl]_f&\\
A&&D\ar[ll]\\
}\hspace{.2in}
\xymatrix{
&&\\
\ar@{=>}[rr]^{f^*}&&}\hspace{.2in}
\xymatrix{
&&\\
B\ar@(ul,ur)[]^{i_B}&&C\ar[ll]
}\]
An important property of sieves is that the set of sieves defined on an object forms a Heyting algebra with partial ordering given by subset inclusion. The fact that the set of sieves forms a Heyting algebra is very important since, as we will see later on, such an algebra will represent the logic of truth values. Thus the next question to address is: what is an Heyting algebra? The answer to this question will be the topic of the next section.
\subsubsection{Heyting Algebra}

\begin{Definition} \label{shey}
 A Heyting Algebra \textbf{H} is a relative pseudo complemented distributive lattice.
 
 \end{Definition}
 We will explain these attributes one at a time.
 The definition of a lattice was already given in the handout of lecture 3, but for sake of completeness we will nontheless restate it here.
  \begin{Definition}
 Given a poset $(L,\leq)$, we say that this is a lattice if the following conditions are satisfied.
 \begin{itemize}
 \item [1)] Given any two elements $a, b\in L $ it is always possible to define a third element $a\vee b\in L$ called the join or least upper bound or supremum.
 \item [2)] Given any two elements $a, b\in L $ it is always possible to define a third element $a\wedge b\in L$ called the meet or greatest lower bound or infimum.
 \end{itemize}
 \end{Definition}
 because the elements of the lattice involved in defining join and meet are two, the operations $\vee$ and $\wedge$ are the binary operations of the lattice.
 
 If only the first condition holds we say that $L$ is a join-semilattice, if only the second holds $L$ is a meet-semilattice.

A lattice $L$ is saied to be \emph{distributive} if for any $a_i\in L$ the following relations hold: 
\begin{align*}
a_1\wedge(a_2\vee a_3)&=(a_1\wedge a_2)\vee(a_1\wedge a_3)\\
a_1\vee(a_2\wedge a_3)&=(a_1\vee a_2)\wedge(a_1\vee a_2)
\end{align*}
 In order to understand the property of being a \emph{relative pseudo complemented lattice} we first of all have to introduce the notion of least upper bound (l.u.b) (and dually of greatest lower bound (g.l.b)) of a set. We are already acquainted with the notion of the l.u.b for two elements $a, b$ in a lattice $L$ with ordering $\leq$. This is simply given by the element $a\vee b$. Similarly, the g.l.b. is  $a\wedge b$. But how do we define such notions with respect to a set $A\subseteq L$? The definition is quite intuitive: the g.l.b. of a set $A$ is an element $c\in L$, such that for all $a\in A$, $c\leq a$ and given any other element $b\leq a$ for all $a\in A$ then $b\leq c$. The condition of $c$ being the g.l.b. of $A$ is denoted by $c\leq A$. Moreover, we say that $c$ is the \emph{greatest element} of $A$ if $c$ is the g.l.b and $c \in A$.

 Dually the l.u.b $y\in L$ of $A$, denoted $A\leq y$ is such that for all $a\in A$, $a\leq y$ and, given any other element $z\in L$, such that $A\leq z$, then $y\leq z$.  
We can now define the notion of a relative pseudo-complement as follows:
\begin{Definition}
  $L$ is a relative-pseudo complemented lattice iff for each two elements $a, b\in L$ there exists a third element $c$, such that 
  \begin{enumerate}
\item $a\wedge c\leq b$
\item $\forall x\in L\hspace{.2in}x\leq c\hspace{.2in}iff\hspace{.2in}a\wedge x\leq b$
\end{enumerate}
where $c$ is defined as the \textit{pseudo complement} of $a$ relative to $b$ i.e., the 
greatest element of the set $\{x:a\wedge x\leq b\}$, and it is denoted 
as $a\Rightarrow b$, i.e.
\begin{equation}
a\rightarrow b=\bigvee\{x:x\wedge a\leq b\}
\end{equation}
\end{Definition}
%
%
If in the above definition we replace $b$ with the element $0$, then we obtain the notion of pseudo-complement.
\begin{Definition}
Given a lattice $(L,\leq)$ with a zero element, the pseudo-complement of $a$ is the greatest element of $L$ disjoint from $a$, i.e. the greatest element of the set $\{x\in L|a\wedge x=0\}$. The pseudo complement of $a$ will be denoted as $a\Rightarrow 0$
\end{Definition}
 If every element of $L$ has a pseudo-complement, then $L$ is a pseudo-complemented lattice.
 
 The pseudo complement in a Heyting algebra is identified with the negation operation, i.e. $\neg a:=a\Rightarrow 0$.
%
From the above definition of negation operation ($\neg a:a\Rightarrow 0$) in a Heyting algebra,  we obtain the following corollary:
\begin{Corollary}
Given any element S of an Heyting algebra, we have the following: 
\be
S\vee\neg S\leq1
\ee
\end{Corollary}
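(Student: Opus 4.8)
The plan is to notice that the asserted inequality is essentially immediate, provided one recognizes that the symbol $1$ denotes the top (greatest) element of the Heyting algebra $H$; so most of the work consists in justifying that such a top element exists at all from the axioms of a relative pseudo-complemented lattice. First I would recall that every relative pseudo-complemented lattice possesses a greatest element. Indeed, fix any $a \in H$ and consider the pseudo-complement of $a$ relative to itself, namely $a \Rightarrow a = \bigvee\{x : x \wedge a \leq a\}$. Since $x \wedge a \leq a$ holds for \emph{every} $x \in H$, this join ranges over all of $H$, so $a \Rightarrow a$ is the supremum of the whole lattice; I would denote it $1$. By construction $x \leq 1$ for every $x \in H$.

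Second, I would simply instantiate this with $x := S \vee \neg S$. Since $\neg S = S \Rightarrow 0$ is an element of $H$ and $H$ is closed under the join operation $\vee$, the element $S \vee \neg S$ lies in $H$, and therefore $S \vee \neg S \leq 1$, which is exactly the claim.

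The genuine content of the corollary—and the reason it is worth stating—is conceptual rather than computational: unlike in a Boolean algebra, this inequality cannot in general be strengthened to the equality $S \vee \neg S = 1$, so the law of excluded middle fails. I would therefore close by remarking that the inequality is the sharpest statement available, and that precisely this possible gap between $S \vee \neg S$ and the top element $1$ is what forces the internal logic of the topos to be intuitionistic rather than Boolean. The only ``obstacle'' is the temptation to write equality by analogy with classical logic; resisting that reflex, and pinning down that $1$ really is the top element supplied by relative pseudo-complementation, is the whole of the argument.
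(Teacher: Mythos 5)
Your proof is correct, and its conclusion rests on the same trivial fact as the paper's: everything in the lattice sits below the top element. The difference is in how each argument gets there. The paper takes the existence of a greatest element $1$ for granted, notes that $S\leq 1$ and $\neg S\leq 1$, and then invokes the universal property of the join: since $S\vee\neg S$ is the \emph{least} upper bound of $S$ and $\neg S$, and $1$ is \emph{an} upper bound, we get $S\vee\neg S\leq 1$. You instead do the work the paper skips: you derive the existence of the top element from the relative pseudo-complementation axiom itself, observing that $a\Rightarrow a=\bigvee\{x : x\wedge a\leq a\}$ is a join over the entire lattice (because $x\wedge a\leq a$ holds for every $x$), hence is a greatest element; after that, the inequality follows simply because $S\vee\neg S$ is an element of the lattice. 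Your route is the more self-contained one — it shows the corollary holds in any relatively pseudo-complemented lattice without assuming boundedness as a separate axiom — while the paper's route is the more economical one once $1$ is accepted as given. Your closing remark that the inequality cannot be strengthened to equality is also the correct conceptual point, and matches the discussion the paper gives after the corollary about the failure of the law of excluded middle and the resulting intuitionistic logic.
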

\begin{proof}
Let us consider $S\vee\neg S$. This represents the least upper bound of S and $\neg S$ 
therefore, given any other element $S_1$ in the Heyting algebra, such that $S\leq S_1$ and 
$\neg S\leq\ S_1$, then, $S\vee\neg S\leq S_1$. But, since for any S we have $S\leq1$ and $\neg S\leq1$,
it follows that $S\vee\neg S\leq1$.
\end{proof}
\subsection{Understanding the Sub-Object Classifier in a Topos}
As previously stated, the role of a sub-object classifier is to identify sub-objects. This is done in terms of associating for each sub-object $A$, of a given object $X$, an element of the sub-object classifier. For $\Sets$ this is very straightforward since in this case $\Omega=\{0,1\}$, which can be interpreted as the values true and false. Thus, in this case, each sub-object $A$ is uniquely identified in terms of the elements $x$ which belong to them, i.e. for which $\chi_A(x)=1$. 

The fact that in $\Sets$ the set of truth values is simply $\{0,1\}$ implies that the logic which is derived is a Boolean logic, i.e. a classical logic. This is basically the logic that each of us adopts when speaking any western language. It is the logic of the classical world, and the logic of the western way of reasoning. 

On the other hand, for a general topos the sub-object classifier will not simply be the two valued set $\{0,1\}$ as it is for $\Sets$. This `complicates' (depending on the point of view) the situation however, at an interpretative level, the role of the sub-object classifier is still unchanged and the elements still represent truth values. Since these truth values will, in general, not be simply true or false, we will not end up with a Boolean Logic. Instead, we will end up with an intuitionistic logic which, mathematically, is represented by a Heyting algebra described above.

In particular, in classical logic, either a statement is true or it is false there is nothing in between. This is not the case for intuitionistic logic. If one thinks about it, examples of intuitionistic logic can be found in our language. In fact, although the way we reason is governed by classical logic, some aspects of our language can, in a way, be considered more intuitionistic in nature. For example, statements regarding more subjective issues. Consider the statement ``I am tired" or `` I am hungry" the truthfulness of these statements is not simply true or false. In fact, if you asked the question ``are you tired?" or ``are you hungry?" in most cases you will not get a simple yes or no answer but you could get something more elaborated such as ``I am a little tired" (``I am a bit hungry") or ``I am not so tired" (`` I am not so hungry"), extremely , not at all  etc. This is because the expressions tired, hungry, can be quantified, i.e. graded.  At the extreme points there will be the answers ``yes I am tired" (true) and ``no, I am not tired" (false) and similarly for ``hungry". However, as we have just seen, there will be many different statements in between (different truth values). 

Can these truth values be quantified?  The answer is ``yes", as long as we remain in the real of language since we know that, for example, the expressions ``very", ``a little", ``quite", have the following relations in terms of strength:
\be
\text{a little}\le\text{quite}\le\text{very}
\ee 
In this case if you are ``very hungry" but just ``a little tired", the statement `` you are hungry" is more true than the statement  ``you are tired".
However, this analysis relies only on the meaning that language has given to such expressions. The question is if it would be possible to define relations of truthfulness on a more objective ground. To this end let us consider the statement `` the 1 litre bottle is full" which we symbolically denote as $X$. Let us also consider two identical bottles $b_1$ and $b_2$ one half full and the other three quarters full, respectively. Obviously, the statement $X$, when referred to $b_1$, is less true than when referred to $b_2$. The objectiveness of such truth values can be defined by measuring how much water there is in each bottle (1/2 and 3/4). 

Alternatively, one can also assert how true the proposition $X$ is in a more operational way by measuring how much water needs to be added, such that the proposition $X$ is true in the classical sense, i.e. such that there is 1 litre of water in the bottle. Thus, if proposition $X$ is true when there is exactly one litre of water in a bottle, such that no more water has to be added, it is then possible to define the truth value of $X$, given a system $b_i$,  as $(1-\text{water to be added})$. This tells you precisely how far away you are from the truth, i.e. from one litre of water.

Thus, the truth value of the proposition $X$, given the `states' $b_1$ and $b_2$ is now given by 
\be
v(X, b_1)=(1-1/2)\le v(X, b_2)=(1-1/4)
\ee
Obviously, the extreme points of such truth values are $0$ and $1$; for example when the bottle is empty (1 litre of water has to be added $v(X, b)=1-1=0$) and when the bottle is full (no water has to be added  $v(X, b)=1-0=1$), respectively.

This way of defining truth values is based on the idea of how much one has to \emph{change} the system (fill up the bottles) in order for the \emph{untouched} proposition to be true. However, it is also possible to proceed the other way, i.e. \emph{generalise} the proposition so that the now \emph{unchanged} system satisfies the more general proposition. Therefore, keeping with our water bottle\footnote{I don't know about you but I'm getting very thirsty writing this :)} examples, let us assume we have a litre bottle $b_1$ which has 1/10 of a litre of water in it.  Again, we have the proposition $X$ stating ``the bottle is full". We can't say that $X$ is true given $b_i$, but we can't even say it is false since $b_1$ is not empty. So the question is what can we say about the truth value of $X$ given $b_1$ without changing the system? Well, we simply do the `inverse' procedure as we did before, i.e. we generalise the proposition $X$.

\noindent
To understand this it is convenient to re-write $X$ as ``$b_i$ has 1 litre of water". We then start generalising such a proposition by subtracting amounts of water to the bottle and forming new propositions. For example, we consider the proposition $X_{1/2}= ``b_i$ has (1-1/2) litres of water". However, given our bottle $b_1$ which has only 1/8 of water we still can not say that $X_{1/2}$ is true. So we keep generalising the proposition till we reach the proposition $X_{7/8}=``b_1$ has (1-7/8) litres of water". Such a proposition is true given $b_1$. 

This example shows how it is possible to identify truth values of propositions, given a state, in terms of how much the original proposition had to be generalised such that this new coarse grained proposition is true (given that state).

 Thus, in our case, the truth value of the original proposition $X$, given the state $b_i$, is defined in terms of how much we have to generalise $X$, in this case we coarse grain it to $X_{7/8}$, such that $X_{7/8}$ is true given $b_i$. Mathematically this is given by the equation
\be
v(X, b_1):=1-inf\{x|v(X_{x}, b_1)=1\}=1-7/8=1/8
\ee
It is now straight forward to see how truth values can be compared. In particular, if we have a bottle which contained $1/2$ litre of water then its truth value would be
\be
v(X, b_2):=1-inf\{x|v(X_{x}, b_2)=1\}=1-1/2=1/2
\ee
It then follows that $v(X, b_2)\ge v(X, b_1)$.

In this schema we obviously don't just have `true' or `false' as truth values, but we have many more truth values, whose limiting points are precisely 1 (true) and 0 (false).
 
Although this is a very general description of how different truth values, other than `true' or `false', can be obtained, it still helps shedding some light on how intuitionistic logic works. In such a logic truth values are not constricted to be only true or false but there are many more values in between. Obviously, the two limiting points will then coincide with the classical notions of true and false.

What the above implies is that, differently from a Boolean algebra, in a Heyting the law of excluded middle does not hold: $S\vee \neg S\leq 1$ (in a Boolean logic we would simply have $S\vee\neg S=1$). Thus, differently from ordinary logic where something is either true or false, in intuitionistic logic this is not the case. As a consequence, the sum of a proposition and its negation doesn't give the `whole story', which implies that double negation does not give back the original element, i.e. $\neg\neg S\geq S$. On the other hand strict equality would hold for Boolean algebras.

Above we have given a very intuitive definition of what a sub-object classifier does in a general topos, while the technical definition was given in \ref{def:clas}. However the important point to understand is that in a general topos the sub-object classifier together with the Heyting algebras of sub-objects enables us to render mathematically precise the notion of a proposition being nearly true, almost true etc. Moreover, it allows for a well defined mathematical notion of how `far away' from the truth a proposition, given a state, actually is. This `distance' from the truth is in the end what a truth value represents in a general topos.

In particular, in the yet to be defined topos $\Sets^{\c^{\op}}$, the elements of the sub-object classifier are sieves and these represent truth values, thus the bigger the sieve is the truer a proposition will be. The principal sieve, which is the biggest sieve, represents the analogue of the classical value \emph{true}, while the empty sieve represents the classical value \emph{false}.

\subsection{Axiomatic Definition of a Topos}
Now that we have defined all the above constructions we are ready to give the axiomatic definition of a topos.
\begin{Definition}
An elementary \textbf{topos} is a category with all finite limits, exponentials and a sub-object classifier.
\end{Definition}
An alternative but equivalent definition is:
\begin{Definition} \label{def:topos}
A \textbf{topos} is a category $\tau$ with the following extra properties:
\begin{itemize}
\item [1.] $\tau$ has an initial object $(0)$. 
\item [2.] $\tau$ has a terminal object $(1)$ . 
\item [3.] $\tau$ has pullbacks.
\item [4.] $\tau$ has pushouts. 
\item [5.] $\tau$ has exponentiation, i.e. $\tau$ is such that for every pair of objects X and Y in $\tau$,
the map $Y^X$ exists.
\item [6.] T has a sub-object classifier.
\end{itemize}
\end{Definition}
It is straightforward to see that condition (1) and (3) above are equivalent to stating that a topos $\tau$ has all finite limits. By duality, if $\tau$ has all finite limits it also has all finite co-limits. This requirement is equivalent to conditions (2) and (4).
\chapter{Lecture 7}
In this lecture I will describe some categorical constructs present in the topos of contravariant presheeaves $Sets^{\c^{op}}$. This topos is important since, for a particular choice of $\c$ is will be the topos we will utilised to express quantum theory.

\section{Categorical Constructs in the Topos of Presheaves Taking Values in Sets}
We will now show how some of the categorical construct, which we delineated in lecture 2, apply to the category $\Sets^{\c^{op}}$.
\subsection{Pullbacks}
Pullbacks exist in any category of presheaves $\Sets^{\c^{op}}$.
In fact, if $X,Y,B\in \Sets^{\c^{op}}$, then $P\in \Sets^{\c^{op}}$ is a pullback in $\Sets^{\c^{op}}$ iff
\vspace{.3in}
\[\xymatrix{
P(C)\ar[rr]^{k}\ar[dd]_{h}&&Y(C)\ar[dd]^{g}\\
&&\\
X(C)\ar[rr]_l&&B(C)\\
}\]

is a pullback in set.
This implies that $P(C)=(X \times_B Y)C\simeq X(C)\times_{B(C)} Y(C)$\footnote{Note that the last product is all in Sets}.\\
The fact that a pullback in $\Sets^{C^{op}}$ is defined in terms of a pullback in $\Sets$, implies that the former always exists, since the latter does. In fact, given any object $C\in\c$ and any three functors X, Y, B, it is always possible to construct (in $\Sets$) a diagram as the one above, where $P(C)=(X \times_B Y)C\simeq X(C)\times_{B(C)} Y(C)$. This implies that it is always possible to define a functor $P:\c\rightarrow \Sets$ which assigns, for each $C\in\c$ an object $P(C)=(X \times_B Y)C\simeq X(C)\times_{B(C)} Y(C)$ (a set) and, for each arrow $f:A\rightarrow C$ in $\c$ the unique arrow $P(f):P(C)\rightarrow P(A)$ in $\Sets$, which makes the following cube a pullback in $\Sets^{\c^{op}}$
\[\xymatrix{
P(C)\ar[rrr]^{k}\ar[rd]^{P(f)}\ar[dd]_{h}&&&Y(C)\ar[dd]^{g}\ar[rd]^{Y(f)}&\\
&P(A)\ar[dd]\ar[rrr]&&&Y(A)\ar[dd]\\
X(C)\ar[rrr]^l\ar[rd]_{X(f)}&&&B(C)\ar[rd]^{B(f)}&\\
&X(A)\ar[rrr]&&&B(A)\\
}\]
Where, again the outer square is a pullback in $\Sets$ and, as such, it always exists.
\subsection{Sub-objects}
A sub-object of a presheaf is defined as follows: 
\begin{Definition} \label{def:sub}
Y is a \textbf{sub-object of a presheaf} X if there exists a natural transformation
$i:Y\rightarrow X$ 
which is defined, 
component wise, as $i_A:Y(A)\rightarrow X(A)$ and where $i_A$ defines a subset embedding 
i.e. $Y(A)\subseteq X(A)$.
\end{Definition}
Since Y is itself a presheaf, the maps between the objects of Y are the restrictions of the 
corresponding maps between the objects of X.
This can be easily seen with the aid of the following diagram:

\[\begin{xy} 0;/r15mm/:
,(0,0)="x",*@{},*+!U{Y(A)}
,{\ellipse(0.5,1.6){}}
,{\ellipse(0.3,0.8){}}
,(1.8,-0.3),*@{},*+!{Y(f)}
,(0.6,1.5),*@{},*+!{X(A)}
,(3.6,1)="z",*@{},*+!{}
,(0.4,1)="y",*@{},*+!{}
,{\ar"y";"z"}
,(4.6,1.5),,*@{},*+!{X(B)}
,(2,1.1),,*@{},*+!{X(f)}
,(4,0)="p"*@{},*+!U{Y(B)}
,{\ellipse(0.5,1.6){}}
,{\ellipse(0.3,0.8){}}
,{\ar"x";"p"}
 \end{xy}\]

An alternative way of expressing this condition is through the following commutative diagram:

\[\xymatrix{
Y(A)\ar[rr]^{Y(f)}\ar[dd]_{i_A}&&Y(B)\ar[dd]^{i_B}\\
&&\\
X(A)\ar[rr]_{X(f)}&&X(B)\\
}\]
\subsection{Initial and Terminal Object}
An initial object in the topos of presheaves is defined as follows:
\begin{Definition} See Exercise
%
\end{Definition}
An initial object is the dual of a terminal object. 
A terminal object in the Topos of presheaves $\Sets^{\c^{op}}$ 
is defined as follows:
\begin{Definition}
A \textbf{terminal object} in $\Sets^{\c^{op}}$ is the constant functor 
$1:\c\rightarrow \Sets$
that maps every $\c$-object to the one element Set $\{*\}$ and every 
$\c$-arrow to 
the identity arrow on $\{*\}$.
\[\xymatrix{
&&C\ar[dl]\\
E\ar[d]&B\ar[dl]&\\
A&&D\ar[ll]\\
}\hspace{.2in}
\xymatrix{
&&\\
\ar@{=>}[rr]^{1}&&}\hspace{.2in}
\xymatrix{
&&\{*\}\ar[dl]^{id_{\{*\}}}\\
\{*\}\ar[d]_{id_{\{*\}}}&\{*\}\ar[dl]^{id_{\{*\}}}&\\
\{*\}&&\{*\}\ar[ll]^{id_{\{*\}}}\\
}\]

\end{Definition}

\subsection{Sub-object Classifier in the Topos of Presheaves}
We will now describe the most important object in $\Sets^{\c^{op}}$: the sub-object classifier. As previously described this will allow us to define truth values in our topos representation of quantum theory.
\begin{Definition}\label{def:subclas}
A \textbf{Sub-object Classifier} $\Omega$ is a presheaf 
$\Omega:\c\rightarrow\Sets$ such that:
\begin{itemize}
\item To each object $A\in\c$
there corresponds an object $\Omega(A)\in\Sets$, which represents the set 
of all sieves on A. 
\item  To each 
$\c$-arrow $f:B\rightarrow A$, there corresponds a
$\Sets$-arrow $\Omega(f):\Omega(A)\rightarrow\Omega(B)$, such that
$\Omega(f)(S):=\{h:C\rightarrow B|f \circ h\in S\}$ is a sieve on B, where $\Omega(f)(S):=f^*(S)$. 
\end{itemize}
\end{Definition}
We now want to show that this definition of sub-object classifier is in agreement with definition 1.10 in lecture 5.
In order to do that we need to define the analogue of arrow 
\textit{true} ($\top$) and of the \textit{characteristic function} in a topos.
\begin{Definition}
$\top:1\rightarrow\Omega$ is the natural transformation that has components $\top_A:\{*\}\rightarrow\Omega(A)$ 
given by $\top_A(*)=\downarrow A$ = principal sieve on $A$. 
\end{Definition}
To understand how $\top$ works, let us consider a monic arrow $\sigma:F\rightarrow X$ in $\Sets^{\c^{op}}$, 
which is defined component-wise by $\sigma_A:F(A)\rightarrow X(A)$ and represents subset inclusion.\\
Now let us define the character $\chi^{\sigma}:X\rightarrow\Omega$ of $\sigma$ which is a natural transformation in the category of presheaves, such that the components $\chi_A^{\sigma}$ represent functions from 
$X(A)$ to $\Omega(A)$, as depicted in:
\[\xymatrix{
*++{F(A)}\ar@{^{(}->}[rr]^{\sigma_A}\ar[dd]&&X(A)\ar[dd]^{\chi^{\sigma}_A}\\
&&\\
\{*\}\ar[rr]^{\top_A}&&\Omega(A)\\
}\]
where $T\{*\}=\downarrow A$.\\
From the above diagram we can see that $\chi^{\sigma}_A$ assigns to each element $x\in X(A)$
a sieve $\chi^{\sigma}_A(x)\in \Omega(A)$ on $A$. For an arrow $f: B\rightarrow A$ in $\c$ to belong to the sieve $\chi^{\sigma}_A(x)$ on $A$ we require that
the following diagram commutes
  \[\xymatrix{
*++{F(A)}\ar@{^{(}->}[rr]^{\sigma_A}\ar[dd]_{F(f)}&&X(A)\ar[dd]^{X(f)}\\
&&\\
*++{F(B)}\ar@{^{(}->}[rr]^{\sigma_B}&&X(B)\\
}\]            
Such that $F(f)$ is the restriction of $X(f)$ to $F(A)$, since $F$ is a sub-presheaf of $X$. Therefore
\begin{equation}
\chi_A^{\sigma}(x):=\{f:B\rightarrow A|X(f)(x)\in F(B)\}    \label{eq:char}
\end{equation}
This condition is expressed by the following diagram:
  \[\begin{xy} 0;/r15mm/:
(0,0)="x",*@{},*+!U{}
,(0,0.2),*@{},*+!U{X(f)}
,(1,1.7),*@{},*+!U{F(A)}
,(2.8,1.7),*@{},*+!U{X(A)}
,(1.2,-1.6),*@{},*+!U{X(B)}
,(-0.7,-1.35),*@{},*+!U{F(B)}
,(0.1,1.4)="f"*@{*},*+!LD{x}
,(1,1)="o"*@{},*+!LD{}
,{\ellipse(1.6,0.8){}}
,{\ellipse(0.9,0.3){}}
,(-0.6,-1)="p"*@{},*+!R{}
,{\ellipse(1.6,0.8){}}
,{\ellipse(0.9,0.3){}}
,{\ar"f";"p"}
\end{xy}\]

i.e. $f$ belongs to $\chi^{\sigma}_A(x)$ iff $X(f)$ maps $ x$ into $F(B)$.                                           
\\
$\chi_{A}^F(x)$, as defined by equation \ref{eq:char}, 
represents a sieve on $A$.
\begin{proof}
Consider the following commuting diagram which represents sub-object $F$ of the presheaf $X$.
 \[\xymatrix{
F(A)\ar[rr]^{F(f)}\ar@{^{(}->}[dd]&&F(B)\ar[rr]^{F(g)}\ar@{^{(}->}[dd]&&F(C)\ar@{^{(}->}[dd]\\
&&&&\\
X(A)\ar[rr]^{X(f)}&&X(B)\ar[rr]^{X(g)}&&X(C)\\
}\]
If $f:B\rightarrow{A}$ belongs to $\chi_{A}^{\sigma}(x)$
then, given $g:C\rightarrow B$ it follows that $f \circ g$ belongs to $\chi_{A}^{\sigma}(x)$,
since from the above diagram it can be deduced that $X(f \circ g)(x)\in F(C)$. 
This is precisely the definition of a sieve, so we have proved that
$\chi_{A}^{\sigma}(x):=\{f:B\rightarrow A|X(f)(x)\in F(B)\}$  is a sieve.  
\end{proof}
As a consequence of axiom 3.1 in lecture 5 the condition of being a sub-object classifier can be 
restated in the following way.
\begin{Axiom}
\textbf{Omega Axiom}: $\Omega$  is a 
\textbf{sub-object classifier} iff there is a ``one to one" correspondence between sub-object of X and morphisms
from X to $\Omega$.
\end{Axiom}
Given this alternative definition of a sub-object classifier, it is easy to prove that $\Omega$,  as defined in \ref{def:subclas}
is a sub-object classifier. In fact, from equation \ref{eq:char}, 
we can see that indeed there is a 1:2:1 correspondence between sub-objects of $X$ and characteristic 
morphisms (character) $\chi$.\\
Moreover, for each morphism $\chi:X\rightarrow\Omega$ we have
\be
F^{\chi}(A):=\chi^{-1}_{A}\{1_{\Omega(A)}:=\downarrow A\}=\{x\in X(A)|\chi_{A}(x)=\downarrow A\}=\text{sub-object of} X(A) 
\ee 
\subsubsection{Elements of Sub-Object Classifier} 
The elements of the sub-object classifier $\Omega$ in a topos are derived from the following theorem:

\begin{Theorem} \label{the:2}
Given an object $A\in \c$ (where $\c$ is a locally small category), the a sieve on $A$ can be identified with representable functor (defined below) $y(C):=Hom_{\c}(-,A)\in\Sets^{\c^{\op}}$.
\end{Theorem}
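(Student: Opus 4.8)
The plan is to prove the statement by exhibiting an explicit bijection between the sieves on $A$ and the subobjects of the representable presheaf $y(A) = Hom_{\c}(-,A)$, since by Definition \ref{def:subclas} the elements of $\Omega(A)$ are precisely the sieves on $A$. First I would recall that, as a contravariant functor, $y(A)$ sends each object $B$ to the hom-set $Hom_{\c}(B,A)$ and each $\c$-arrow $g:C\rightarrow B$ to the precomposition map $y(A)(g)=(-\circ g):Hom_{\c}(B,A)\rightarrow Hom_{\c}(C,A)$. The whole argument hinges on observing that this action is precomposition, so that it matches the closure property defining a sieve.

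Next, given a sieve $S$ on $A$, I would define a candidate sub-presheaf $Q_S$ of $y(A)$ by $Q_S(B):=\{f:B\rightarrow A \mid f\in S\}\subseteq Hom_{\c}(B,A)$. The key check is that $Q_S$ is stable under the presheaf action in the sense of Definition \ref{def:sub}: for any $g:C\rightarrow B$ and any $f\in Q_S(B)$, the action sends $f$ to $f\circ g$, and since $S$ is closed under left composition (Section \ref{ssieve}), $f\circ g\in S$, whence $f\circ g\in Q_S(C)$. This is exactly the sub-presheaf condition, i.e. the component inclusions $Q_S(B)\hookrightarrow y(A)(B)$ assemble into a natural transformation. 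Conversely, given a sub-presheaf $Q\hookrightarrow y(A)$, I would set $S_Q:=\{f:B\rightarrow A \mid f\in Q(B)\}$, the union of all the component subsets, and verify it is a sieve: if $f\in Q(B)$ and $g:C\rightarrow B$ is any arrow, the sub-presheaf condition forces $y(A)(g)(f)=f\circ g\in Q(C)$, so $f\circ g\in S_Q$, establishing closure under left composition.

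Finally I would check that $S\mapsto Q_S$ and $Q\mapsto S_Q$ are mutually inverse. This is immediate from unwinding the definitions, since $Q_S(B)=S\cap Hom_{\c}(B,A)$ recovers $S$ on taking the union over $B$, and conversely $Q_S$ built from $S_Q$ returns $Q$ component by component. Together these give the desired identification of sieves on $A$ with subobjects of $y(A)$.

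The main obstacle—indeed essentially the only delicate point—is keeping the variance straight: because $y(A)$ is contravariant its action on $g:C\rightarrow B$ is precomposition rather than postcomposition, and one must confirm that this is precisely what makes a stable family of subsets the same data as a left-composition-closed collection of arrows. Once this matching is pinned down, both directions and the inverse property reduce to routine substitution. As an alternative, more abstract route I could bypass the explicit construction altogether: the Omega Axiom yields $\mathrm{Sub}(y(A))\cong Hom_{\Sets^{\c^{\op}}}(y(A),\Omega)$, while the Yoneda Lemma (the full version foreshadowed after Lemma \ref{lem:simpleyoneda}) gives $Hom_{\Sets^{\c^{\op}}}(y(A),\Omega)\cong\Omega(A)$, whose elements are by definition the sieves on $A$; chaining these isomorphisms proves the statement without computation.
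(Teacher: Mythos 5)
Your proposal is correct and takes essentially the same route as the paper: the explicit correspondence $S\mapsto Q_S$, $Q\mapsto S_Q$ you construct is precisely the sub-functor/sieve bijection the paper exhibits, and your ``alternative abstract route'' via the Omega Axiom and the Yoneda lemma is exactly the other half of the paper's own argument, which first chains $\Omega(A)\cong Hom(\mathbf{y}(A),\Omega)\cong Sub(\mathbf{y}(A))$ and then gives the same correspondence. The only difference is emphasis: you carry out the verifications (stability under precomposition, closure under left composition, mutual inverses) that the paper asserts without detail.
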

In order to prove the above theorem we need the following lemma:
\begin{Lemma}
\textbf{Preliminary}: if $\c$ is a locally small category \footnote{ A Category $\c$ is said to be locally small iff its collection of morphisms form a proper set.}, then each object A of $\c$ induces a natural contravariant functor from $\c$ to $\mathbf{Sets}$  called a hom-functor $\mathbf{y}(A):=Hom_{\c}(-,A)$\footnote{We have already encounter this in lecture 4. }. Such a functor is defined on objects $C\in\c$ as 
\ba
\mathbf{y}(A):\c&\rightarrow& \Sets\\
C&\mapsto& Hom_{\c}(C,A)
\ea
on $\c$-morphisms $f:C\rightarrow B$ as 
\ba
\mathbf{y}(A)(f):Hom_{\c}(B,A)&\rightarrow& Hom_{\c}(C, A)\\
g&\mapsto& \mathbf{y}(A)(f)(g):=g\circ f
\ea
\textbf{Yoneda lemma}: Given an arbitrary presheaf P on $\c$ 
there exists a bijective correspondence between natural transformations $\mathbf{y}(A)\rightarrow P$ and elements of the set $P(A)$ ($A\in\c$) defined as an arrow
\ba
\theta:Nat_{\c}(\mathbf{y}(A),P)&\buildrel\simeq\over\rightarrow& P(A)\\
\Big(\alpha:\mathbf{y}(A)\rightarrow P\Big)&\mapsto&\theta(\alpha)=\alpha_A(1_A)
\ea
\end{Lemma}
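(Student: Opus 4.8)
The plan is to prove that $\theta$ is a bijection by constructing an explicit two-sided inverse and checking both composites are identities. First I would define a candidate inverse $\Phi : P(A) \rightarrow Nat_{\c}(\mathbf{y}(A), P)$ sending an element $x \in P(A)$ to the family of maps with component at each object $C \in \c$ given by
\be
\Phi(x)_C : Hom_{\c}(C, A) \rightarrow P(C), \qquad g \mapsto P(g)(x).
\ee
This typechecks because $P$ is contravariant: a $\c$-arrow $g : C \rightarrow A$ induces $P(g) : P(A) \rightarrow P(C)$, so $P(g)(x)$ indeed lands in $P(C)$.

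Second, I would verify that $\Phi(x)$ really is a natural transformation $\mathbf{y}(A) \rightarrow P$. For an arbitrary $\c$-arrow $f : C \rightarrow B$ one checks that the naturality square relating $\mathbf{y}(A)(f)$ and $P(f)$ commutes: chasing an element $g \in Hom_{\c}(B, A)$ around the two routes yields $P(g \circ f)(x)$ on one side and $P(f)(P(g)(x))$ on the other, where I have used the defining formula $\mathbf{y}(A)(f)(g) = g \circ f$. These agree precisely because $P$, being a contravariant functor, satisfies $P(g \circ f) = P(f) \circ P(g)$. Thus naturality of $\Phi(x)$ is nothing more than the functoriality of $P$.

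Third, I would compute the two composites. The composite $\theta \circ \Phi$ is immediate: $\theta(\Phi(x)) = \Phi(x)_A(1_A) = P(1_A)(x) = x$, since $P$ preserves identities, so $P(1_A) = 1_{P(A)}$. The reverse composite is the heart of the matter. Given $\alpha : \mathbf{y}(A) \rightarrow P$, set $x := \theta(\alpha) = \alpha_A(1_A)$; I must show $\Phi(x) = \alpha$, i.e.\ that $\alpha_C(g) = P(g)(x)$ for every object $C$ and every arrow $g : C \rightarrow A$. This follows by applying the naturality square of $\alpha$ to the very arrow $g$ and evaluating at the distinguished element $1_A \in Hom_{\c}(A, A)$: one route sends $1_A$ to $\alpha_C(1_A \circ g) = \alpha_C(g)$, while the other sends it to $P(g)(\alpha_A(1_A)) = P(g)(x)$, and commutativity forces these equal.

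The main obstacle is not any calculation but the conceptual observation underlying that final step: the naturality of $\alpha$, a condition imposed simultaneously over all objects and all arrows of $\c$, is entirely pinned down by the single value $\alpha_A(1_A)$. Recognising that evaluating at the identity $1_A$ extracts the full content of $\alpha$ is what makes the inverse $\Phi$ work; everything else (well-definedness of $\Phi(x)$, naturality, and the composite $\theta \circ \Phi$) is a routine unwinding of contravariant functoriality. Once $\Phi$ is shown to be a two-sided inverse of $\theta$, we conclude that $\theta$ is a bijection and the lemma is established.
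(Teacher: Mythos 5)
Your proof is correct, and it is the standard Yoneda argument. One thing to be aware of: the paper never actually proves this lemma --- it is stated as a \textbf{Preliminary} in Lecture 7 and immediately put to work in proving the theorem that identifies elements of the sub-object classifier of $\Sets^{\c^{\op}}$ with sieves; the proof environment following the statement is a proof of that theorem, not of the Yoneda lemma itself. The closest the paper comes to a proof is Lemma \ref{lem:simpleyoneda} of Lecture 4, which treats only the special case where the target presheaf is itself a hom-functor (natural transformations $\c(A,-)\rightarrow\c(B,-)$ correspond to arrows $B\rightarrow A$). The essential move there --- defining $f:=t_A(id_A)$ and chasing the naturality square at the element $id_A$ to show that this single value determines $t$ everywhere --- is precisely your final step, so your argument is the natural generalisation of the paper's own technique. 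What you add, and what the general case genuinely requires, is the explicit inverse $\Phi(x)_C(g):=P(g)(x)$, the verification that $\Phi(x)$ is natural (which, as you observe, is nothing but contravariant functoriality $P(g\circ f)=P(f)\circ P(g)$), and the computation $\theta(\Phi(x))=P(1_A)(x)=x$. All three steps are carried out correctly, so your proposal fills a gap the paper leaves open rather than duplicating an existing proof.
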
 
We have now the right tools to prove the above theorem. 
\begin{proof}
Let us consider $\Omega$ to be a sub-object classifier of $\hat{C}=\Sets^{C^{op}}$, i.e. we want $\Omega$ to classify sub-objects in $\Sets^{C^{op}}$. \\[2pt]
Consider now a presheaf 
$\mathbf{y}(C)=Hom_{\c}(-,C)\in\hat{C}$.
We know from axiom 3.1 lecture 5 that $$Sub_{\hat{C}}(Hom_\c(-,C))\cong Hom_{\hat{C}}(Hom_\c(-,C),\Omega)$$
From Yonedas lemma it follows that
$$Hom_{\hat{C}}(Hom_{\c}(-,C),\Omega)=\Omega(C)$$
therefore the sub-object classifier $\Omega$ must be a presheaf $\Omega:\c\rightarrow \Sets$, such that 
\begin{align*}
\Omega(C)=& Sub_{\hat{C}}(Hom_{\c}(-,C))\\
& =\{S|S\hspace{.1in}a\hspace{.1in}sub-functor\hspace{.1in}of\hspace{.1in}Hom_{\c}(-,C)\}
\end{align*}
Now if $Q\subset Hom_{\c}(-,C)$ is a sub-functor of $Hom_{\c}(-,C)$, then the set
$$S=\{f|\hspace{.1in}for\hspace{.1in}some\hspace{.1in}object\hspace{.1in}A,\hspace{.1in}f:A\rightarrow C
\hspace{.1in}and\hspace{.1in}f\in Q(A)\}$$
is a sieve on $C$. Conversely, given a sieve $S$ on $C$ we define
$$Q(A)=\{f|f:A\rightarrow C\hspace{.1in}and\hspace{.1in}f\in S\}\subseteq Hom_{\c}(A,C)$$
which determines a presheaf $Q:\c\rightarrow \Sets$ which is a sub-functor of $Hom_{\c}(-,C)$, i.e to each object $A\in\c$, Q assigns the set $Q(A)\subseteq Hom_{\c}(A,C)$.
The above discussion shows that there exist a bijective correspondence between sub-functors $Q\subseteq Hom_{\c}(-,C)$ and sieves $S$ on $C$. Therefore 
\begin{center}$Sieve\hspace{.05in}on\hspace{.05in}C\simeq \hspace{.05in}sub-functor\hspace{.05in}of\hspace{.05in}Hom_{\c}(-,C)$ \end{center}
\end{proof}
For each object $A$ in a category $\c$ we can now define a presheaf $\mathbf{y}(A)$ such that:
\begin{itemize}
\item 
Given an object $D$ of $\c$ we have 
$$\mathbf{y}(A)D=Hom_{\c}(D,A)$$
\item 
Given morphisms $\alpha :B\rightarrow D$ and $\theta:D\rightarrow A$ we obtain:
\ba
\mathbf{y}(A)(\alpha):Hom_{\c}(D,A)&\rightarrow& Hom_{\c}(B,A)\\
\theta&\mapsto&\mathbf{y}(A)(\alpha)(\theta)=\theta \circ \alpha
\ea
\end{itemize}
A very simple graphical example of the above is the following: 
\[\xymatrix{
&&E&&\\
A\ar[rru]^h&&&&B\ar[llu]_g\\
&&D\ar[rru]_{\alpha}\ar[llu]^{\theta}&&\\
&&\Downarrow^{\mathbf{y}(A)}&&\\
&&Hom_{\c}(E,A)\ar[rrd]^{y(A)(g)}\ar[lld]_{y(A)(h)}&&\\
Hom_{\c}(A,A)\ar[rrd]_{y(A)(\theta)}&&&&Hom_{\c}(B,A)\ar[lld]^{y(A)(\alpha)}\\
&&Hom_{\c}(D,A)&&\\
}\]

Considering the above construction it follows that, for any morphism on $\c$ of the form $f:A\rightarrow A_1$ there exists a natural transformation $\mathbf{y}(A)\rightarrow \mathbf{y}(A_1)$ between the respective presheaves as constructed above.

\noindent
We can therefore deduce that $\mathbf{y}$ is actually a functor from the category $\c$ to the set of presheaves defined on $\c$, i.e 
$\mathbf{y}:\c\rightarrow \Sets^{\c^{op}}$, 
such that to each object A in $\c$, $\mathbf{y}$ assigns the Hom-functor $Hom_{\c}(-,A)$, i.e.
\ba
\mathbf{y}:\c&\rightarrow&\Sets^{\c^{op}}\\
A&\mapsto& \mathbf{y}(A):=Hom_{\c}(-,A)
\ea
where $Hom(-,A)$ corresponds to a Presheaf on $\c$. 

In this setting, given a $\c$-arrow $f:C\rightarrow D$ and a $\c$-object$ A$ the induced morphisms is
\be
\mathbf{y}(f):Hom_{\c}(-,C)\rightarrow Hom_{\c}(-,D)
\ee
whose natural components, for any $A\in \c$  are 
\ba
\mathbf{y}(f)(A)=Hom_{\c}(A, f):Hom_{\c}(A,C)\rightarrow Hom_{\c}(A,D)
\ea
i.e., they correspond to $\mathbf{y}(A)(f)$.

The importance of Yoneda's Lemma is really that it enables us to identify the elements of a sub-object classifier for the topos $\Sets^{\c^{\op}}$ as sieves. Since elements of the sub-object classifier are identified with truth values, it follows that in the topos $\Sets^{\c^{\op}}$ we will end up with a multivalued logic differently from the logic we obtain in $\Sets$, where $\uom=\{0,1\}$, i.e. where the only truth values are true and false. We will talk more about this in coming lectures. 

\subsection{Global sections}
Other important features of topos theory are the global and local sections which we will define below.
\begin{Definition} \label{def:glo}
A \textbf{global section} or \textbf{global element} of a presheaf X in $Sets^{\c^{op}}$ is an arrow 
$k:1\rightarrow X$ from the terminal object 1 to the presheaf X.
\end{Definition} 
What $k$ does is to assign to 
each object A in $\c$ an element $k_A\in X(A)$ in the corresponding set of the presheaf X.  
The assignment is such that, given an arrow $f:B\rightarrow A$ the following relation holds
\begin{equation}
X(f)(k_A)=k_B     \label{eq:k}
\end{equation}
What \ref{eq:k} uncovers, is that the elements of $X(A)$ assigned by the global section $k$, are mapped 
into each other by the morphisms in $X$.
Presheaves with a local or partial section can exist even if they do not have a global section.\\
\\
A particular important type of global sections are the global elements of $\uom$. In fact the collection $\Gamma(\uom)$ of all such global section forms a Heyting algebra and represents the collection of all truth values in a topos logic. We thus obtain as an internal logic in a topos a multivalued logic which is of an intuitionistic type.
\subsection{Local sections} \label{sloc}
\begin{Definition} \label{def:lo}
A \textbf{local} or \textbf{partial section} of a presheaf X in $Sets^{\c^{op}}$ is an arrow
$\rho:U\rightarrow X$ where U is a subobject of the terminal object 1.
\end{Definition}
In a presheaf, a subobject U of 1 can 
either be the empty set $\emptyset$, or a singleton $\{*\}$. Thus for each object $C\in \c^{op}$ we either obtain the empty set $U(C)=\emptyset$ or a singleton $U(C)=\{*\}$, to each such singleton we then assign an element of $X(C)$. This assignment is said to be 
``closed
downwards", i.e. given a subobject U(A)=$\{*\}$ of 1 and a $\c$-morphisms $f:B\rightarrow A$
then we have U(B)=$\{*\}$, therefore $\rho_a(\{*\})=:\rho_a\in X(A)$ and $X(f)(\rho_a)=\rho_b$. 

\noindent
To better explain the above let us  
consider a category with 4 elements $\{A,B,C,D\}$ such that the following relations hold between 
the elements:
\[\xymatrix{
 A\ar[rr]^f\ar[dd]_i&&B\ar[dd]^g\\
 &&\\
 D\ar[rr]_p&&C\\
 }\] 
Given a subobject U of 1 we then have the following relations
\[\xymatrix{
 U(A)\ar[rr]^{U(f)}\ar[dd]_{U(i)}&&U(B)\ar[dd]^{U(g)}\\
 &&\\
 U(D)\ar[rr]_{U(p)}&&U(C)\\
 }\]
If U(A)=$\emptyset$ then U(f) is either the unique function $\emptyset\rightarrow\{*\}$ iff $U(B)=\{*\}$
or $\emptyset\rightarrow\emptyset$ iff $U(B)=\emptyset$.
If instead $U(A)=\{*\}$ then the only possibility is that $U(B)=\{*\}$ since there does not exist a
function 
$\{*\}\rightarrow\emptyset$.
Therefore $\rho$ assigns to particular subsets of objects $A\in \c$, elements
$\rho_A$. namely those objects $A\in\c$ for which $U(A)=\{*\}$. These objects A are called the domain of $\rho$ $(dom\hspace{.05in}\rho)$ and are such 
that the following conditions are satisfied:
\begin{itemize}
\item The domain is closed downwards i.e. if $A\in dom\hspace{.05in}\rho$ and if there exists a map 
$f:B\rightarrow A$ then $B\in dom\hspace{.05in}\rho$\vspace{.3in}
\item If $A\in dom\rho$ and if there exists a map $f:B\rightarrow A$, then the following condition 
is satisfied:
\begin{equation*}
X(f)(\rho_A)=\rho_B 
\end{equation*}
\end{itemize}  

\section{Exponential}

In $\Sets^{\c^{op}}$ the exponentiation can be defined as follows:
Consider a functor $F\in \Sets^{\c^{op}}$ such that, given an object $a\in\c$, F defines a functor\footnote{Recall from lecture 2/3 that $\c\downarrow a$ is the comma category described in example 2.3} $F_a:\c\downarrow a\rightarrow \Sets$. $F_a$ assigns to each object $f:b\rightarrow a\in\c\downarrow a$ an object $F_a(f):=F(b)$, and to each arrow
$h:(b,f)\rightarrow (b,g)$, such that 
\[\xymatrix{
b\ar[rr]^h\ar[ddr]_f&&c\ar[ddl]^g\\
&&\\
&a& \\
}\] commutes in $\c$, it assigns
the arrow $F(h):F(c)\rightarrow F(b)$.\\
Given this context we define the exponential $G^F:\c\rightarrow \Sets$ between the contravariant functors $F, G\in \Sets^{\c^{op}}$ as the functor with
\begin{itemize}
\item Objects:
\be
G^F(a)=Nat[F_a,G_a]
\ee
i.e. the elements of $G^F(a)$ are the collection of all natural transformations from 
$F_a$ to $G_a$\footnote{Here the functor $G_a:\c\downarrow a\rightarrow \Sets$ is the induced functor from $G$ thus, for a given element $f:b\rightarrow a$ it assigns the element $G_a(f):=G(b)$, and to each morphism $i:f\rightarrow g$ ( $g:c\rightarrow a$) it assigns the morphism $G(i):G(c)\rightarrow G(b)$.}.
\item Morphisms:
given an arrow $k:a\rightarrow d$ we get
\be
G^F(k):Nat[F_d,G_d]\rightarrow Nat[F_a,G_a]
\ee
\end{itemize}
To better understand this definition consider 
the natural transformation $\alpha\in Nat[F_d,G_d]$ and $\theta\in Nat[F_a,G_a]$. The action of $G^F(k)$ can then be illustrated as 
follows:
\[\xymatrix{
F_d\ar[dd]_{\alpha}&&\\
&&\\
G_d&&
}
\xymatrix{
&&\\
&\ar[r]^{G^F(k)}&}
\xymatrix{&&F_a\ar[dd]^{\theta}\\
&&\\
&&G_a\\
}\]
i.e an arrow in $G^F(k)$ assigns, to each natural transformation $\alpha$ from 
$F_d$ to $G_d$ a natural transformation $\theta$ from $F_a$ to $G_a$. The way in which the natural transformation $\theta$ is picked given $\alpha$ can be understood by considering the individual components. In particular, an element $h:c\rightarrow d\in\c\downarrow d$, then
\ba
\alpha_h:F_d(h)&\rightarrow &G_d(h)\\
F(d)&\rightarrow& G(d)
\ea
On the other hand an element $f:c\rightarrow a\in \c\downarrow a$ gives
\ba
\theta_f:F_a(f)&\rightarrow &G_a(f)\\
F(c)&\rightarrow &G(c)
\ea
Now, if we consider a map $k:d\rightarrow a\in \c$, such that the following diagram commutes
\[\xymatrix{
a&&d\ar[ll]^k\\
&&\\
&&c\ar[lluu]^f\ar[uu]_{h}\\
}\]
we require that $\theta_f=\alpha_{k\circ h}$.\\
In this formulation the evaluation function is the map: $ev:G^F\times F\rightarrow G$ in $\Sets^{\c^{op}}$, such that 
\ba
ev_a:G^F(a)\times F(a)&\rightarrow& G(a)\\
\langle\theta,x\rangle&\mapsto& ev_a(\langle\theta,x\rangle)=\theta_{1_a}(x)
\ea
where $\theta\in Nat[F_a,G_a]$ and $x\in F(a)$.

\chapter{Lecture 8}
In this lecture I will describe how topos quantum theory can be seen as a contextual quantum theory, in the sense that each element is defined as a collection of `context dependent' descriptions. Such context dependent descriptions will turn out to be classical snapshots.

I will then describe the above mentioned contexts which are  Abelian Von Neumann sub-algebras, the collection of which forms a category. What this implies is that, although locally quantum theory can be defined in terms of local classical snapshots, the global/quantum information is put back into the picture by the categorical structure of the collection of all such classical snapshots.
I will give an example of the category of Abelian Von Neumann sub-algebras for a 4 dimensional Hilbert space.

Given the definition of our base category we  will then define the topos analogue of the state space. This is the \emph{spectral presheaf}. I will end with a specific example on how such a presheaf is constructed in the case of a $4$ dimensional complex Hilbert space.

\section{The Notion of Contextuality in the Topos Approach}
In previous lectures we have seen how the Knochen-Specker theorem seems to imply that quantum theory is contextual, since values of quantities depend on which other quantities are being measured at the same time. However, that is not the notion of contextuality that we want to address here. In fact, in the topos approach to quantum theory there is another type of contextuality arising which is fundamental for the formulation of the theory. Surprisingly enough also this notion of contextuality is derived from the Kochen-Speker theorem, but in a very different fashion. In particular, although the K-S theorem prohibited us to define values for all quantities at the same time in a consistent way, it nonetheless allowed for the possibility of assigning values to commuting subsets of quantities. These commuting subsets can be considered as classical snapshots since all the peculiarities of quantum theory arise from non-commuting operators. Thus, with respect to these classical snapshots (contexts), quantum theory behaves like classical theory.

\noindent
The idea is then to define quantum theory locally with respect to these classical snapshots but, then, on has to consider all the information coming from the collection of these classical snapshots all at the same time. Thus, in this way, quantum theory could be seen as a collection of local classical approximations. 

Although it seems like one is cheating by doing this, it turns out that this is not the case. The reason being that the collection of all the above mentioned classical snapshots actually forms a category which means that it is always possible two relate (compare) any two contexts. 

What is happening is the following: we first consider different contexts which represent classical snapshots, we then define our quantum theory locally in terms of such classical snapshots, therefore in a way performing a classical approximation. The quantum information, which is lost at the local level is, however, put back into the picture by the categorical structure of the collection of all the classical contexts. In this way no information is lost and we therefore did not cheat. 

The category of classical snapshots we will be utilising is the category of abelian von Neuman sub-algebras of the algebra $\mb(\mh)$ of bounded operators on the Hilbert space. 
\subsection{Category of Abelian von Neumann Sub-Algebras}
In what follows we will first give the axiomatic definition of what this category is and, then, explain through an example what exactly these von Neumann algebras are.

In particular, consider the algebra of bounded operators on a Hilbert space which we denote as $\mb(\mh)$. A quantum system can be represented by a von Neumann algebra $N$ which is identified with a sub-algebra of $\mb(\mh)$. A von Neumann algebra is a *-algebra of bounded operators. We will give the technical definition of a von Neumann algebra below, however we will also give a concrete example on how it is constructed, which might be much more clear to understand. For all practical purposes it is not necessary to understand in details what a von Neumann algebra is, since the study of these algebras is quite complex and would be behind the scope of this lecture course. All that is needed is to understand roughly what they are, how they can be formed and the philosophical implications of  their usage in topos quantum theory.

In order to give the technical definition of a von Neumann algebra we need to make a regression in ring theory and give a few definitions (you are not required to know these definitions or learn them since they will not be examinable, but they might help to get a general understanding. )
\begin{Definition}
A  ring is a set $X$ on which two binary operations are defined :
\ba
+:X\times X&\rightarrow& X\\
(x_1,x_2)&\mapsto& x_1+x_2
\ea 
and 
\ba
\cdot:X\times X&\rightarrow& X\\
(x_1,x_2)&\mapsto& x_1\cdot x_2
\ea
called addition and multiplication. Generally a ring is denoted as $(X, +, \cdot)$ and it has to satisfy the following axioms:
\begin{itemize}
\item $(X, +)$ must be an abelian group under addition. 
\item $(X, \cdot)$ must be a monoid under multiplication.
\end{itemize}
\end{Definition}
In the above definition only the addition operation is required to be commutative, while the multiplication is not. However both operations are required to be associative. For this reason rings are often also called associative rings to distinguish them from non-associative rings, which are a  subsequent generalisation of the concept of a ring in which $(X, \cdot)$ is not a monoid but all that is required is that the multiplication operation be linear in each variables. 

Of particular importance to us is the concept of a *-ring which is defined as follows
\begin{Definition}
A *-ring is an associative ring with a map $ * : A \rightarrow  A$ s.t.
\ba
(x + y)^* &=& x^* + y^*\\
(x\cdot y)^* &=& y^*\cdot x^*\\
1^* &= &1\\
(x^* )^* &=& x
\ea
for all x,y in A. We say that $*$ is an anti-automorphism and an involution.
Elements such that $x^* = x$ are called self-adjoint or Hermitian.

\end{Definition}
Given all the above definition we are now ready to define what a von Neumann algebra is 
\begin{Definition}
A von Neumann algebra is a *-algebra\footnote{A *-algebra A is a *-ring that is a module over a commutative *-ring R, with the * agreeing.
on $R\subseteq A$} of bounded operators on a Hilbert space that is closed in the weak operator topology and contains the identity operator.
\end{Definition}
The above definition can be trivially extended to the notion of abelian von Neumann sub-algebras.

\noindent
The way in which von Neumann algebras are generated given a Hilbert space is through the double commutant theorem.
In particular, given an algebra $B\subset \mb(\mh)$ of bounded operators on a Hilbert space $\mh$, which contains the identity and is closed under taking the adjoint, then the commutant of such an algebra is 
\be
B^{'}:=\{\hat{A}\in \mb(\mh)|[\hat{A},\hat{B}]=0\;\forall\; \hat{B}\in B\}\subset\mb(\mh)
\ee
The double commutant is then the commutant of $B^{'}$: $(B^{'})^{'}=B^{''}$. This algebra $B^{''}$ is the von- Neumann algebra generated by $B$ iff $B=B^{''}$. In the example below we will give a concrete example of how such algebras are generated.
\\
\\
Given a Hilbert space $\mh$, the collection of all the abelian von Neumann sub-algebras, denoted as $\mv(\mh)$, forms a category. The importance of this lies in the fact that although each algebra only gives a partial classical information of the system, the collection of all such algebras retains the full quantum information, since the categorical structure relates information coming from different contexts. In particular, let us consider two contexts $V_1$ and $V_2$. If they have a non-trivial intersection $V_1\cap V_2$ then we have the following relation-arrows: $$V_1\leftarrow  V_1\cap V_2\rightarrow V_2$$

Now, given any self adjoint operator $\hat{A}$ in $ V_1\cap V_2$ it can be written as $g(\hat{B})$ for a self adjoint operator $\hat{B}\in V_1$ and a Borel function $g:\Rl\rightarrow\Rl$. On the other hand $\hat{A}=f(\hat{C})$ for $\hat{C}\in V_2$ and $f$ is another Borel function. It follows that $[\hat{A}, \hat{B}]=[\hat{A}, \hat{C}]=0$, however it is not necessarily the case that $[\hat{B}, \hat{C}]=0$.
Thus, although the elements in $\mv(\mh)$ are abelian the categorical structure knows about the relation of non commutative operators.

The formal definition of the category $\mv(\mh)$ is as follows:
\begin{Definition}
The category $\mv(\mh)$ of abelian von Neumann sub-algebras has 
\begin{itemize}
\item [1.] \emph{Objects}: $V\in \mv(\mh)$ abelian von Neumann sub-algebras.
\item [2.] \emph{Morphisms}: given two sub-algebras $V_1$ and $V_2$ there exists an arrow between them $i:V_1\rightarrow V_2$ iff $V_1\subseteq V_2$
\end{itemize}
\end{Definition}
From the definition it is easy to understand that $\mv(\mh)$ is a poset, whose ordering is given by subset inclusion.

It is interesting to understand what this poset structure actually means from a physics perspective. In particular, if we consider an algebra $V^{'}$ such that $V^{'}\subseteq V$, then the set of self-adjoint operators present in $V^{'}$, which we denote $V^{'}_{sa}$, will be smaller than the set of self-adjoint operators in $V$, i.e. $V^{'}_{sa}\subseteq V_{sa}$. Since self-adjoint operators represent physical quantities, the context $V^{'}$ contains less physical information, so that, by viewing the system from the context $V^{'}$, we know less about it then when viewing it form the context $V$. This idea represents a type of coarse graining which takes place when going from a context with more information $V$ to a context with less information $V^{'}$. If we went the revers direction we would instead have a process of fine graining. 

\noindent 
This idea of coarse graining is central in the formulation of the topos quantum theory. We will see later in the course how it is actually implemented in detail.
\subsection{Example}
Let us consider a four dimensional Hilbert space $\mh=\Cl^4$. The first step is to identify the poset of abelian von Neumann sub-algebras $\mv(\Cl^4)$. Such algebras are sub-algebras of the algebra $\mathcal{B}(\Cl^4)$ of all bounded operators on $\mh$. Since the Hilbert space is $\Cl^4$, $\mathcal{B}(\Cl^4)$ is the algebra of all $4\times 4$ matrices with complex entries which act as linear transformations on $\Cl^4$. \\
In order to form the abelian von Neumann sub-algebras one considers an orthonormal basis $(\psi_1,\psi_2,\psi_3,\psi_4)$ and projection operators $(\hat{P}_1, \hat{P}_2, \hat{P}_3, \hat{P}_4)$ which project on the one-dimensional sub-spaces $\Cl\psi_1$, $\Cl\psi_2$, $\Cl\psi_3$, $\Cl\psi_4$, respectively. One possible von Neumann sub-algebra $V$ is, then, generated by the double commutant
of collections of the above projection operators, i.e. $V=lin_{\Cl}(\hat{P}_1, \hat{P}_2, \hat{P}_3, \hat{P}_4)$.\\
In matrix notation possible representatives for the projection operators are
\[\hat{P}_1=\begin{pmatrix} 1& 0& 0&0\\	
0&0&0 &0\\
0&0&0&0\\
0&0&0&0	
  \end{pmatrix}\;\;\;\;
\hat{P}_2=\begin{pmatrix} 0& 0& 0&0\\	
0&1&0 &0\\
0&0&0&0\\
0&0&0&0	
  \end{pmatrix}\;\;\;\;
\hat{P}_3=\begin{pmatrix} 0& 0& 0&0\\	
0&0&0 &0\\
0&0&1&0\\
0&0&0&0	
  \end{pmatrix}\;\;\;\;
  \hat{P}_4=\begin{pmatrix} 0& 0& 0&0\\	
0&0&0 &0\\
0&0&0&0\\
0&0&0&1	
  \end{pmatrix}
\]
The largest abelian von Neuamnn sub-algebra generated by the above projectors is $V=lin_{\Cl}(\hat{P}_1, \hat{P}_2, \hat{P}_3, \hat{P}_4)$, i.e. the algebra consisting of all $4\times 4$ diagonal matrices with complex entries on the diagonal. Since this algebra is the largest, i.e. not contained in any other abelian sub-algebra of $\mathcal{B}(\Cl^4)$ it is called \emph{maximal}.\\
Any change of bases $(\psi_1,\psi_2,\psi_3,\psi_4)\rightarrow (\rho_1, \rho_2, \rho_3, \rho_4)$ would give another maximal von Neumann sub-algebra $V^{'}=lin_{\Cl}(\rho_1, \rho_2, \rho_3, \rho_4)$. In fact, there are uncountably many such maximal algebras. If two basis are related by a simple permutation or phase factor then the abelian von Neumann sub-algebras they generate are the same.\\
Now considering again our example, the algebra $V$ will have many non maximal sub-algebras which however can be divided into two kinds as follows:
\be
V_{\hat{P}_i\hat{P}_j}=lin_{\Cl}(\hat{P}_i, \hat{P}_j, \hat{P}_k+\hat{P}_l)=\Cl\hat{P}_i+\Cl \hat{P}_j+\Cl (\hat{P}_k+\hat{P}_l)=\Cl\hat{P}_i+\Cl \hat{P}_j+\Cl(\hat{1}- \hat{P}_i+\hat{P}_j)
\ee
for $i\neq j\neq k \neq l\in\{1,2,3,4\}$\\
and
\be
V_{\hat{P}_i}=lin_{\Cl}(\hat{P}_i, \hat{P}_j+\hat{P}_k+\hat{P}_l)=\Cl\hat{P}_i+\Cl(\hat{P}_j+\hat{P}_k+\hat{P}_l)=\Cl\hat{P}_i+\Cl(\hat{1}-\hat{P}_i)
\ee
for $i\neq j\neq k \neq l=1,2,3,4$\\
Again there are uncountably many non maximal abelian sub-algebras. It is the case though that different maximal sub-algebras have common non-maximal sub-algebras, as could be the case that non-maximal abelian sub-algebras contain the same non-maximal abelian sub-algebra. \\
Thus, for example, the context $V$ above contains all the sub-algebras $V_{ij}$ and $V_i$ for $i,j\in\{1,2,3,4\}$. Now consider other 4 pair wise orthogonal projection operators $\hat{P}_1$,  $\hat{P}_2$,  $\hat{Q}_3$,  $\hat{Q}_4$, such that the maximal abelian von Neumann algebra $V^{'}=lin_{\Cl}(\hat{P}_1,\hat{P}_2, \hat{Q}_3,\hat{Q}_4)\neq V$. We then have that
\be
V\cap V^{'}=\{V_{\hat{P}_1}, V_{\hat{P}_2}, V_{\hat{P}_1,\hat{P}_2}\}
\ee
From the above discussion it is easy to deduce that the sub-algebras $V_{\hat{P}_i}$ are contained in all the other sub-algebras which contain the projection operator $\hat{P}_i$ and all the sub-algebras $V_{\hat{P}_i\hat{P}_j}$ are contained in all the sub-algebras which contain both projection operators $\hat{P}_i$ and $\hat{P}_j$.\\
We should mention that there is also the trivial algebra $V^{''}=\Cl\hat{1}$ but we will not consider such algebra when considering the category $\mv(\mh)$ since, otherwise, as will be clear later on, we will never end up with a proposition being false, but the minimal truth value we would end up would be the trivially true one.

\section{Topos Analogue of the State Space}
We would now like to define the topos analogue of the state space. The way in which we would like to construct such a state space is in analogy with how it is constructed in classical physics. In particular we would like a state space which allows a definition of physical quantities in terms of maps from the state space to the reals, as is the case in classical physics. 

\noindent Since we are in the realm of presheaves on $\mv(\mh)$, the state space will itself be a presheaf, thus it will be defined context wise, i.e. for each abelian von Neumann algebra $V\in \mv(\mh)$. It is precisely of such algebras that we will take advantage of when trying to define the state space. In fact, each algebra $V$ has associated to it its Gelf'and spectrum which is the topological space $\us_V$ of all multiplicative linear functionals of norm $1$ on $V$, i.e. $\us_V:=\{\lambda:V\rightarrow \Cl|\lambda(\hat{1})=1\}$. The property of being multiplicative means that
\be
\lambda(\hat{A}\hat{B})=\lambda(\hat{A})\lambda(\hat{B});\;\forall \hat{A},\hat{B}\in V
\ee  
So the elements $\lambda$ of the spectrum $\us_V$ are, in essence, algebra homomorphisms from $V$ to $\Cl$. The topology on $\us_V$ is that of a compact Hausdorff space in the weak *-topology.

Now, what is interesting is the action of such homomorphisms $\lambda\in\us_V$ on self adjoint operators $\hat{A}\in V$. In fact it turns out that such maps $\lambda$ actually represent valuations which respect the FUNC principle. To see this consider an operator $\hat{A}\in V$, then, for each element $\lambda$ of the spectrum $\us_V$ we obtain a value $\lambda(\hat{A})\in sp(\hat{A})$ of $A$. On the other hand, for each element $a$ of the spectrum of $\hat{A}\in V$, i.e., $a\in sp(\hat{A})$ there exists a corresponding element $\lambda_i$ such that $a$ is defined as $a=\lambda_i(\hat{A})$. 

Moreover, given a Borel function $g : \Rl\rightarrow \Rl$, then 
\be
\lambda(g(\hat{A})) = g(\lambda(\hat{A}))
\ee
This is precisely the FUNC principle. Therefore the elements of the Gel'fand spectrum can each be interpreted as (different) valuations, i.e. maps which send each self adjoint operator to an element of its spectrum such that FUNC holds.

Given the topological space $\us_V$ it is possible to represent a self-adjoint operator $\hat{A}\in V$ as a map from $\us_V$ to $\Cl$. This is because of the existence of the \emph{GelÕfand representation theorem} which states that each von Neumann algebra $V$ is isomorphic \footnote{ Technically it is an isometrically *-isomorphic (i.e., isomorphic as a $C^*$-algebra), but this precise definition does not really matter here, we will simply call it isomorphisms.} to the algebra of continuous, complex-valued functions (denoted as $C(\us_V)$ ) on its GelÕfand spectrum $\us_V$. That is to say the following map is an isomorphisms
\ba
V&\rightarrow&C(\us_V)\\
\hat{A}&\mapsto&(\bar{A}:\us_V\rightarrow \Cl)
\ea
where $\bar{A}$ is the Gel'fand transform of the operator $\hat{A}$ and is defined as $\bar{A}(\lambda) :=\lambda(\hat{A})$ for all $\lambda\in\us_V$. If $\hat{A}$ is self adjoint then $\bar{A}$ is real valued and is such that $\bar{A}(\us_V)=sp(\hat{A})$. 

Thus, for each context $V\in \mv(\mh)$ we have managed to reproduce a situation analogous to classical physics in which self-adjoint operators are identified with functions from a space to the reals. In this sense the topological space $\us_V$ can be interpreted as a local state space, one for each $V\in\mv(\mh)$. Obviously the complete quantum picture is only given when we consider the collection of all the local state spaces, since not all operators are contained in a single algebra $V$. It is precisely such a collection of local state spaces which will define the topos analogue of the state space. This will be called the spectral presheaf and it is defined as follows:
\begin{Definition}
The spectral presheaf, $\Sig$, is the covariant functor from the
category $\V(\Hi)^{op}$ to $\Sets$ (equivalently, the
contravariant functor from $\mathcal{V(H)}$ to $\Sets$) defined
by:
\begin{itemize}
\item \textbf{Objects}: Given an object $V$ in $\V(\Hi)^{op}$, the associated set $\Sig(V)=\us_V$ is defined to be the Gel'fand spectrum of the (unital) commutative von Neumann sub-algebra $V$, i.e. the set of all multiplicative linear functionals $\lambda:V\rightarrow \Cl$, such that $\lambda(\hat{1})=1$.
\item\textbf{Morphisms}: Given a morphism $i_{V^{'}V}:V^{'}\rightarrow V$ ($V^{'}\subseteq V$) in $\V(\Hi)^{op}$, the associated function $\Sig(i_{V^{'}V}):\Sig(V)\rightarrow
\Sig(V^{'})$ is defined for all $\lambda\in\Sig(V)$ to be the
restriction of the functional $\lambda:V\rightarrow\Cl$ to the
sub-algebra $V^{'}\subseteq V$, i.e.
$\Sig(i_{V^{'}V})(\lambda):=\lambda_{|V^{'}}$.
\end{itemize}
\end{Definition}
\subsection{Example}
Given the category $\mc(\Cl^4)$ defined in the previous example will define the spectral presheaf.
Let us first consider the maximal abelian sub-algebra $V=lin_{\Cl}(\hat{P}_1, \hat{P}_2, \hat{P}_3, \hat{P}_4)$, the Gel'fand spectrum $\us_V$ (which has a discrete topology) of this algebra contains 4 elements
\be
\lambda_i(\hat{P}_j)=\delta_{ij}\;\;\;\; (i=1,2,3,4)
\ee 
We then consider the sub-algebra $V_{\hat{P}_1\hat{P}_2}=lin_{\Cl}(\hat{P}_1,\hat{P}_2, \hat{P}_3+\hat{P}_4)$. Its Gel'fand spectrum $\us_{V_{\hat{P}_1\hat{P}_2}}$ will contain the elements
\be
\us_{V_{\hat{P}_1\hat{P}_2}}=\{\lambda^{'}_1, \lambda^{'}_2, \lambda^{'}_3 \}
\ee
such that $\lambda^{'}_1(\hat{P}_1)=1$, $\lambda^{'}_2(\hat{P}_2)=1$ and $\lambda^{'}_3(\hat{P}_3+\hat{P}_4)=1$, while all the rest will be zero.\\
Since $V_{\hat{P}_1\hat{P}_2}\subseteq V$, there exists a morphisms between the respective spectra as follows ( for notational simplicity we will denote $V_{\hat{P}_1\hat{P}_2}$ as $V^{'}$):
\ba
\us_{VV^{'}}:\us_V&\rightarrow &\us_{V^{'}}\\
\lambda&\mapsto&\lambda_{|V^{'}}
\ea
Such that we obtain the following:
\ba
\us_{VV^{'}}(\lambda_1)&=&\lambda_1^{'}\\
\us_{VV^{'}}(\lambda_2)&=&\lambda_2^{'}\nonumber\\
\us_{VV^{'}}(\lambda_3)&=&\lambda_3^{'}\nonumber\\
\us_{VV^{'}}(\lambda_4)&=&\lambda_3^{'}\nonumber
\ea
\[\xymatrix{
\us_{V_{\hat{P}_i}}&&&&\\
&&&&\\
&&&&\\
V_{\hat{P}_i}\ar[uuu]\ar[dddrr]_{i_{V_{\hat{P}_i},V_{\hat{P}_i\hat{P}_j}}\;\;\;}&&\us_{V_{\hat{P}_i\hat{P}_j}}\ar[dddll]^{\;\;\;\us_{V_{\hat{P}_i\hat{P}_j},V_{\hat{P}_j}}}\ar[lluuu]_{\us_{V_{\hat{P}_i\hat{P}_j},V_{\hat{P}_i}}}&&\us_V\ar[ll]^{\us_{V,V_{\hat{P}_i\hat{P}_j}}}\\
&&&&\\
&&&&\\
\us_{V_{\hat{P}_j}}&&V_{\hat{P}_i\hat{P}_j}\ar[uuu]\ar[rr]^{\;\;\;\ i_{V_{\hat{P}_i\hat{P}_j,V}}}&&V\ar[uuu]\\\
&&&&\\
&&&&\\
V_{\hat{P}_j}\ar[uuu]\ar[rruuu]_{i_{V_{\hat{P}_j},V_{\hat{P}_i\hat{P}_j}}}&&&&\\
}\]

From the above simple example we can generalise the definition of the spectrum for all sub-algebras $V^{'}\subseteq V$. In particular we get 
\be
\us_{V_{\hat{P}_i,\hat{P}_j}}=\{\lambda_i,\lambda_j,\lambda_{kl}\}
\ee 
where
\ba
\lambda_i(\hat{P}_j)&=&\delta_{ij}\\
\lambda_{kl}(\hat{P}_k+\hat{P}_l)&=&1\\
\ea
and all the rest equals zero. On the other hand for contexts $V_{\hat{P}_i}$ we obtain:
\be
\us_{V_{\hat{P}_i}}=\{\lambda_i,\lambda_{jkl}\}
\ee
where
\ba
\lambda_i(\hat{P}_i)&=&1\\
\lambda_{jkl}(\hat{P}_j+\hat{P}_k+\hat{P}_k)=1
\ea
and all the rest is equal to zero.
\chapter{Lecture 9}
In this lecture I will do the following:
\begin{enumerate}
\item[i)] I will first of all introduce the very important concept of daseinisation;
\item[ii)] I will then give examples of daseinisation;
\item[iii)] Such a concept will be used to define the topos analogue of a proposition.
\end{enumerate}
I will then give a concrete example on how a proposition regarding the value of the spin of a particle is defined, for the case of a $4$ dimensional Hilbert space.

\section{Propositions}
We will now describe how certain terms of type $P(\Sigma)$ (sub-objects of the state object) are represented in $\Sets^{\mv(\mh)^{op}}$, namely propositions. These, represented by projection operators in quantum theory, are identified with clopen (both open and closed) sub-objects of the spectral presheaf.
 A \emph{clopen} subobject
$\ps{S}\subseteq\Sig$ is an object such that, for each context
$V\in \mathcal{V(H)}$, the set $\ps{S}(V)$ is a clopen subset of $\Sig(V)$, where the latter is equipped
with the usual compact and Hausdorff spectral topology. 
We will now show, explicitly, how propositions are defined.

As a first step we have to introduce the concept of
`daseinization'. Roughly speaking, what daseinization does is to
approximate operators so as to `fit' into any given context $V$.
In fact, because the formalism defined so far is
contextual, any  proposition  one wants to consider has to be
studied within (with respect to) each context $V\in\V(\Hi)$.

To see  how this works consider the case in which we would like
to analyse the projection operator $\hat{P}$, which corresponds via the
spectral theorem to the proposition ``$A\in\Delta$''\footnote{It should be noted that different propositions correspond to the same projection operator, i.e. the mapping from propositions to projection operators is many to one. Thus, to account for this, one is really associating equivalence class of propositions to each projection operator. The reason why von Neumann algebras were chosen instead of general C$^*$ algebras is precisely because all projections representing propositions are contained in the former, but not necessarily in the latter.}. In
particular, let us take a context $V$ such that $\hat{P}\notin
P(V)$ (the lattice of projection operators in $V$). We, somehow need to define a
projection operator which does belong to $V$ and which is related,
in some way, to our original projection operator $\hat{P}$. This
can be achieved by approximating $\hat{P}$
from above in $V$, with the `smallest' projection operator in $V$,
greater than or equal to $\hat{P}$. More precisely, the
\emph{outer daseinization},
 $\delta^o(\hat P)$, of $\hat P$ is defined at each context $V$ by
 \begin{equation}
\delta^o(\hat{P})_V:=\bigwedge\{\hat{R}\in
P(V)|\hat{R}\geq\hat{P}\}
\end{equation}
Since projection operators represent propositions, $\delta^o(\hat{P})_V$ is a coarse graining of the proposition $``A\in \Delta"$.

This process of outer daseinization takes place for all contexts
and hence gives, for each projection operator $\hat{P}$, a
collection of daseinized projection operators, one for each
context V, i.e.,
\begin{align}
\hat{P}\mapsto\{\delta^o(\hat{P})_V|V\in\V(\Hi)\}
\end{align}
Because of the Gel'fand transform, to each operator $\hat{P}\in
P(V)$ there is associated the map $\bar{P}:\Sig_V\rightarrow\Cl$,
which takes values in $\{0,1\}\subset\Rl\subset\Cl$ since
$\hat{P}$ is a projection operator. Thus, $\bar{P}$ is a
characteristic function of the subset
$S_{\hat{P}}\subseteq\Sig(V)$ defined by
\begin{equation}
S_{\hat{P}}:=\{\lambda\in\Sig(V)|\bar{P}(\lambda):=\lambda(\hat{P})=1\}
\end{equation}
Since $\bar{P}$ is continuous with respect to the spectral
topology on $\underline\Sigma(V)$, then
$\bar{P}^{-1}(1)=S_{\hat{P}}$ is a clopen subset of
$\underline\Sigma(V)$, since both $\{0\}$ and $\{1\}$ are clopen
subsets of the Hausdorff space $\Cl$.

Through the Gel'fand transform it is then possible to define a
bijective map between projection operators, $\delta^o(\hat{P})_V\in
P(V)$, and clopen subsets of $\Sig_V$ where,
for each \emph{context} V,
\begin{equation}
S_{\delta^o(\hat{P})_V}:=\{\lambda\in\Sig_V|\lambda
(\delta^o(\hat{P})_V)=1\}
\end{equation}

This correspondence between projection operators and clopen
sub-objects of the spectral presheaf $\underline\Sigma$, which we denote as $Sub_{cl}(\us)$,  implies the
existence of a lattice homeomorphism for each $V$
\begin{equation}\label{equ:smap}
\mathfrak{S}:P(V)\rightarrow \Sub_{cl}(\Sig)_V\hspace{.2in}
\end{equation}
such that
\begin{equation}
\delta^o(\hat{P})_V\mapsto
\mathfrak{S}(\delta^o(\hat{P})_V):=S_{\delta^o(\hat{P})_V}
\end{equation}
where $Sub_{cl}(\us)_V$ is the lattice of subsets of the spectrum $\us_V$ with lattice operations given by intersection and union while the lattice ordering is given by subset inclusion.

It can be shown that the collection of subsets
$\{S_{\delta(\hat{P})_V}\}$, $V\in\mathcal{V(H)}$, induces a subobject
of $\Sig$. 

\noindent
In order to understand how this is done let us first give the definition of what a general sub-object of the topos analogue of the state space actually is.
\begin{Definition}
A sub-object $\ps{S}$ of the spectral presheaf $\us$ is a contravariant functor $\ps{S} :\mv(\mh)\rightarrow \Sets$ such that:
\begin{itemize}
\item $\ps{S}_V$ is a subset of $\us_V$ for all $V\in\mv(\mh)$ .
\item Given a map $i_{V^{'}V}:V^{'}\subseteq V$ , then $\ps{S}(i_{V^{'}V} ) : \ps{S}_V \rightarrow \ps{S}_V^{'}$ is simply the restriction of the map $\us(i_{V^{'}V} )$ to the subset $\ps{S}_V\subseteq \us_V$, thus it is given by $\lambda\mapsto\lambda_{|V^{'}}$.
\end{itemize}
\end{Definition}

Obviously, for clopen sub-objects we simply require that $\ps{S}_V$ be clopen in the above definition.

\begin{Theorem}
For each projection operator $\hat{P}\in P(\mh)$, the collection
\be
\ps{\delta(\hat{P})}:=\{S_{\delta(\hat{P})_V}|V\in\mathcal{V(H)}\}
\ee
forms a (clopen) sub-object of the spectral presheaf .
\end{Theorem}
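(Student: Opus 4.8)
The plan is to verify directly that $\ps{\delta(\hat P)}$ satisfies the two clauses in the definition of a sub-object of $\us$. The first clause is immediate: by construction each $S_{\delta^o(\hat P)_V}=\{\lambda\in\us_V\mid\lambda(\delta^o(\hat P)_V)=1\}$ is a subset of $\us_V$, and it was already established, via the Gel'fand transform and the continuity of $\bar P$, that $S_{\delta^o(\hat P)_V}=\bar P^{-1}(1)$ is in fact a clopen subset. Hence the work reduces to the second clause, namely that for every inclusion $i_{V'V}:V'\rightarrow V$ (with $V'\subseteq V$) the restriction map $\us(i_{V'V}):\us_V\rightarrow\us_{V'}$, $\lambda\mapsto\lambda_{|V'}$, carries $S_{\delta^o(\hat P)_V}$ into $S_{\delta^o(\hat P)_{V'}}$. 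Once this is shown, one defines $\ps{\delta(\hat P)}(i_{V'V})$ to be the restriction of $\us(i_{V'V})$ to $S_{\delta^o(\hat P)_V}$, and functoriality (preservation of identities and composites) is then inherited automatically from the fact that $\us$ is itself a contravariant functor.

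The step that carries the real content is the monotonicity of outer daseinization along the inclusion: $\delta^o(\hat P)_V\leq\delta^o(\hat P)_{V'}$ whenever $V'\subseteq V$. I would argue this from the defining meet $\delta^o(\hat P)_V=\bigwedge\{\hat R\in P(V)\mid\hat R\geq\hat P\}$. Since $V'\subseteq V$ we have $P(V')\subseteq P(V)$, and (because meets of projections in an abelian von Neumann algebra agree with those computed in $P(\mh)$) the projection $\delta^o(\hat P)_{V'}=\bigwedge\{\hat R\in P(V')\mid\hat R\geq\hat P\}$ is a genuine projection in $V'\subseteq V$ that still dominates $\hat P$. Thus $\delta^o(\hat P)_{V'}$ is one of the projections over which the meet defining $\delta^o(\hat P)_V$ is taken, so that $\delta^o(\hat P)_V\leq\delta^o(\hat P)_{V'}$. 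This is the one place where the lattice structure of the projections, rather than mere set theory, is used, and I expect it to be the main obstacle to state cleanly.

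With this inequality in hand, the compatibility of the subsets follows from the multiplicativity of the spectral elements. Let $\lambda\in S_{\delta^o(\hat P)_V}$, so that $\lambda(\delta^o(\hat P)_V)=1$. Since $\delta^o(\hat P)_{V'}\in V'\subseteq V$, we have $\lambda_{|V'}(\delta^o(\hat P)_{V'})=\lambda(\delta^o(\hat P)_{V'})$. Because both operators are projections with $\delta^o(\hat P)_V\leq\delta^o(\hat P)_{V'}$, they satisfy $\delta^o(\hat P)_{V'}\,\delta^o(\hat P)_V=\delta^o(\hat P)_V$; applying the multiplicative functional $\lambda$ gives $\lambda(\delta^o(\hat P)_V)=\lambda(\delta^o(\hat P)_{V'})\,\lambda(\delta^o(\hat P)_V)$, whence $\lambda(\delta^o(\hat P)_{V'})=1$. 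Therefore $\lambda_{|V'}\in S_{\delta^o(\hat P)_{V'}}$, as required. (Equivalently, one may invoke that each $\lambda$ is a valuation respecting the ordering on projections, so that $\delta^o(\hat P)_V\leq\delta^o(\hat P)_{V'}$ together with $\lambda(\delta^o(\hat P)_V)=1$ forces $\lambda(\delta^o(\hat P)_{V'})=1$.) Assembling these observations shows that $\ps{\delta(\hat P)}$ is a well-defined contravariant functor each of whose values is a clopen subset of the corresponding Gel'fand spectrum, i.e. a clopen sub-object of the spectral presheaf $\us$.
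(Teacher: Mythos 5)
Your proposal is correct and follows essentially the same route as the paper: both reduce the claim to showing that the spectral-presheaf restriction maps carry $S_{\delta^o(\hat{P})_V}$ into $S_{\delta^o(\hat{P})_{V'}}$ for $V'\subseteq V$, with the key input being the monotonicity $\delta^o(\hat{P})_V\leq\delta^o(\hat{P})_{V'}$. The only cosmetic differences are that you obtain this monotonicity directly (observing that $\delta^o(\hat{P})_{V'}$ lies in $P(V)$, dominates $\hat{P}$, and hence is one of the projections in the meet defining $\delta^o(\hat{P})_V$) and then conclude $\lambda(\delta^o(\hat{P})_{V'})=1$ via multiplicativity of $\lambda$, whereas the paper argues through the iterated daseinisation $\delta^o(\delta^o(\hat{P})_V)_{V'}$ and uses additivity of $\lambda$ applied to the difference projection $\hat{\beta}=\delta^o(\hat{P})_{V'}-\delta^o(\hat{P})_V$.
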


\begin{proof}
We already know that for each $V\in\mv(\mh)$, $S_{\delta(\hat{P})_V}\subseteq \us_V$. Therefore, what we need to show is that these clopen subsets get mapped one to another by the presheaf morphisms. To see that this is the case consider an element $\lambda\in S_{\delta(\hat{P})_V}$. Given any $V^{'}\subseteq V$, then by the definition of daseininsation we get $\delta^o(\delta^o(\hat{P})_{V})_{V^{'}}=\bigwedge\{\hat{\alpha}\in P(V^{'})|\hat{\alpha}\geq  \delta^o(\hat{P})_V\}\geq \delta^o(\hat{P})_V$. Therefore, if $\delta^o(\hat{P})_{V^{'}}-\delta^o(\hat{P})_V=\hat{\beta}$, then 
$\lambda(\delta(\hat{P})_{V^{'}})=\lambda(\delta(\hat{P})_V)+\lambda(\hat{\beta})=1$ since $\lambda(\delta(\hat{P})_V)=1$ and $\lambda(\hat{\beta})\in\{0,1\}$. Therefore  
\be
\{\lambda_{|V^{'}}| \lambda\in S_{\delta(\hat{P})_V}\} \subseteq\{ \lambda\in S_{\delta(\hat{P})_{V^{'}}} \}
\ee
however $\lambda_{|V^{'}}$ is precisely $\us(i_{V^{'}V})\lambda$ therefore 
\be
\{\lambda_{|V^{'}}| \lambda\in S_{\delta(\hat{P})_V}\}=\us(i_{V^{'}V})S_{\delta(\hat{P})_V}
\ee
It follows that $\ps{\delta(\hat{P})}$ is a sub-object of $\us$.
\end{proof}

We can now define the (outer) daseinization as
a mapping from the projection operators to the subobject of the
spectral presheaf given by
\begin{align}\label{ali:glob}
\delta:&P(\Hi)\rightarrow \Sub_{cl}(\Sig)\\
&\hat{P}\mapsto(\mathfrak{S}(\delta^o(\hat{P})_V))_{V\in\V(\Hi)}=:\ps{\delta(\hat{P})}
\end{align}
We will sometimes denote $\mathfrak{S}(\delta^o(\hat{P})_V)$ as
$\ps{\delta(\hat{P})}_V$.
\\
Since the sub-objects of the spectral presheaf form a Heyting
algebra, the above map associates propositions to a distributive
lattice. Actually, it is first necessary to show that the
collection of \emph{clopen} sub-objects of $\underline\Sigma$ is a
Heyting algebra. We will report the proof below.
\begin{Theorem}
The collection, $Sub_{cl}(\us)$, of all clopen sub-objects of $\us$ is a Heyting algebra.
\end{Theorem}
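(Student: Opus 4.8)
The strategy is to verify the three defining features of a Heyting algebra from Definition~\ref{shey} — that $Sub_{cl}(\us)$ is a distributive lattice which is relatively pseudo-complemented — exploiting throughout the fact that sub-objects of a presheaf, and all lattice operations on them, are computed context by context. A useful organizing remark is that $\Sets^{\mv(\mh)^{op}}$ is a topos, so the full lattice $Sub(\us)$ of \emph{all} sub-objects is automatically a Heyting algebra; the real content of the theorem is therefore that the clopen sub-objects are closed under the Heyting operations.

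First I would set up the lattice structure. Given clopen sub-objects $\ps{S},\ps{T}\subseteq\us$, define $(\ps{S}\wedge\ps{T})_V:=\ps{S}_V\cap\ps{T}_V$ and $(\ps{S}\vee\ps{T})_V:=\ps{S}_V\cup\ps{T}_V$ at each context $V$. One checks these are again sub-presheaves: if $\lambda\in\ps{S}_V\cap\ps{T}_V$ and $i_{V'V}:V'\subseteq V$, then $\lambda_{|V'}$ lies in both $\ps{S}_{V'}$ and $\ps{T}_{V'}$ by the sub-object property of each, and symmetrically for the union. Since finite intersections and unions of clopen subsets of the compact Hausdorff space $\us_V$ are again clopen, both $\ps{S}\wedge\ps{T}$ and $\ps{S}\vee\ps{T}$ are clopen sub-objects. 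The top element is $\us$ itself and the bottom element is the empty sub-object $V\mapsto\emptyset$, both manifestly clopen. Distributivity is then immediate: the identity $\ps{S}_V\cap(\ps{T}_V\cup\ps{U}_V)=(\ps{S}_V\cap\ps{T}_V)\cup(\ps{S}_V\cap\ps{U}_V)$ holds for subsets of the fixed set $\us_V$, and since order and operations on $Sub_{cl}(\us)$ are tested context-wise, the distributive laws of Definition~\ref{shey} follow.

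The substantive step, and the one I expect to be the main obstacle, is the relative pseudo-complement $\ps{S}\Rightarrow\ps{T}$. Borrowing the standard formula for Heyting implication in a presheaf topos, I would consider, at each context $V$,
\[
(\ps{S}\Rightarrow\ps{T})_V:=\{\lambda\in\us_V\mid \forall\, i_{V'V}:V'\subseteq V,\ \lambda_{|V'}\in\ps{S}_{V'}\Rightarrow\lambda_{|V'}\in\ps{T}_{V'}\}.
\]
Rewriting this as $\bigcap_{V'\subseteq V}\us(i_{V'V})^{-1}\big((\us_{V'}\setminus\ps{S}_{V'})\cup\ps{T}_{V'}\big)$ shows each set in the intersection is clopen, since $\us(i_{V'V})$ is continuous and complements, unions and preimages of clopen sets are clopen. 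The difficulty is that this is an intersection over \emph{all} sub-contexts of $V$, which may be infinite, so a priori it is only closed, not clopen. Here I would invoke the key topological input: because each $V$ is a \emph{von Neumann} (rather than merely $C^*$-) algebra, its Gel'fand spectrum $\us_V$ is a Stonean (extremally disconnected compact Hausdorff) space, in which the interior of any closed set is clopen. Replacing the closed set above by its interior produces a clopen subset at each $V$; I would then have to check that these interiors cohere under the restriction maps to form a genuine sub-object, and that the resulting clopen sub-object satisfies the defining adjunction $\ps{R}\wedge\ps{S}\leq\ps{T}\iff\ps{R}\leq(\ps{S}\Rightarrow\ps{T})$ for all clopen $\ps{R}$. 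Establishing this equivalence — in particular, that passing to context-wise interiors preserves both functoriality and the adjunction — is the crux, and it is exactly where the completeness of the Boolean algebra of clopen subsets of a Stonean space is needed. Taking $\ps{T}$ to be the bottom object then yields the pseudo-complement (negation) $\neg\ps{S}$ as a special case, completing the Heyting structure.
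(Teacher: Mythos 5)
Your proposal follows essentially the same route as the paper's proof: context-wise meets and joins (with clopen-ness preserved because finite unions and intersections of clopen sets are clopen), the zero and unit sub-objects, the standard Kripke--Joyal formula for $(\ps{S}\Rightarrow\ps{T})_V$, recognition that the intersection over all $V'\subseteq V$ is in general only closed rather than clopen, and the fix of passing to the interior. Your justification for that fix --- the Gel'fand spectrum of an abelian von Neumann algebra is Stonean, so the interior of a closed set is clopen --- is the same underlying fact the paper invokes when it says the clopen subsets of $\us_V$ form a complete lattice whose infima (of decreasing nets) are realized as interiors of intersections. There are, however, two points where your write-up is actually more careful than the paper's: first, the paper applies the interior correction only to the negation $\neg\ps{S}=\ps{S}\Rightarrow\ps{0}$ and never addresses whether the general implication $(\ps{S}\Rightarrow\ps{T})_V$ is clopen, whereas you apply the correction uniformly; second, you explicitly flag that after replacing each stage by its interior one must re-verify coherence under the restriction maps (so that the result is still a sub-object) and the adjunction $\ps{R}\wedge\ps{S}\leq\ps{T}\iff\ps{R}\leq(\ps{S}\Rightarrow\ps{T})$ for clopen $\ps{R}$ --- checks that the paper omits entirely, so your identification of these as the genuine crux is a strength rather than a gap relative to the published argument.
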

\begin{proof}
First of all let us consider how the logical connectives are defined. 

\noindent
\textbf{The Ô$\wedge$Õ- and Ô$\vee$Õ-operations}. 
Given two sub-objects $\ps{T}$ and $\ps{S}$ of $\us$, then the Ô$\wedge$Õ- and Ô$\vee$Õ-operations are defined by
\ba
(\ps{S}\wedge \ps{T})_V &:=& \ps{S}_V \cap \ps{T}_V\\ 
(\ps{S}\vee \ps{T})_V &:=& \ps{S}_V\cup \ps{T}_V 
\ea
for all contexts $V\in\mv(\mh)$ . From the properties of open and closed subsets, it follows that if $\ps{S}_V$ and $ \ps{T}_V$ are clopen as subsets then so are $\ps{S}_V \cap \ps{T}_V$ and $  \ps{S}_V\cup \ps{T}_V$.\\

\noindent
\textbf{The zero and unit elements}. The zero element in $Sub_{cl}(\us)$ is the empty sub-object
\be
\ps{0}:=\{\emptyset_V|V\in Ob(\mv(\mh))\}
\ee 
Where $\emptyset_V$ is the empty subset of $\us_V$ and $Ob(\mv(\mh)$ simply indicates the objects in the category $\mv(\mh)$ . \\

\noindent 
The unit element in $Sub_{cl}(\us)$ is the unit sub-object. 
\be
\us:=\{\us_V|V\in Ob(\mh(\mv))\}
\ee
Clearly both $\ps{0}$ and $ \us$ are clopen sub-objects of $\us$.\\

\noindent
\textbf{ The `$\Rightarrow$'-operation}. We have seen in previous lectures that the negation operation in a Heyting algebra is given by the relative pseudo complement. In particular $\neg \ps{S}:=\ps{S}\Rightarrow \ps{0}$. To understand exactly how such an operation is defined let us first describe $\ps{S}\Rightarrow \ps{T}$. This is 
\ba
(\ps{S}\Rightarrow \ps{T})_V&:=&\{\lambda\in\us_V|\forall V^{'}\subseteq V\text{ if }\us(i_{V^{'}V})(\lambda)\in\ps{S}_{V^{'}}\text{ then }\us(i_{V^{'}V})(\lambda)\in\ps{T}_{V^{'}}\}\\
&=&\{\lambda\in\us_V|\forall V^{'}\subseteq V\text{ if }\lambda_{|V^{'}}\in\ps{S}_{V^{'}}\text{ then }\lambda_{|V^{'}}\in\ps{T}_{V^{'}}\}
\ea
From the above it follows that the negation operation in the Heyting algebra $Sub_{cl}(\us)$ is defined as follows:
\ba
(\neg\ps{S})_V&:=&(\ps{S}\Rightarrow \ps{0})_V=\{\lambda\in \us_V|\forall V^{'}\subseteq V,\us(i_{V^{'}V})\lambda\notin \ps{S}_{V^{'}}\}\\
&=&\{\lambda\in \us_V|\forall V^{'}\subseteq V,\lambda_{|V^{'}}\notin \ps{S}_{V^{'}}\}
\ea
It is also possible to write the negation in terms of the complement of sets as follows:
\be
(\neg\ps{S})_V=\bigcap_{V^{'}\subseteq V}\{\lambda\in\us_V|\lambda_{|V^{'}}\in\ps{S}^c_{V^{'}}\}
\ee
where $\ps{S}^c_{V^{'}}$ represents the standard complement of the set $\ps{S}_{V^{'}}$. Since $\ps{S}_{V^{'}}$ is clopen so will $\ps{S}^c_{V^{'}}$. The map $\us(i_{V^{'}V}):\us_V\rightarrow\us_V^{'}$ is continuous and surjective, thus $\us(i_{V^{'}V})^{-1}(\ps{S}^c_{V^{'}})$ is clopen. Such a subset is defined as
\be
\us(i_{V^{'}V})^{-1}(\ps{S}^c_{V^{'}}) =\{\lambda\in \us_V|\lambda_{|V^{'}}\in\ps{S}^c_{V^{'}}\} 
\ee
Substituting for the formula of the negation operation we obtain
\be\label{equ:net}
(\neg\ps{S})_V=\bigcap_{V^{'}\subseteq V}\us(i_{V^{'}V})^{-1}(\ps{S}^c_{V^{'}})
\ee
However, the right hand side of the above formula is not guaranteed to be clopen, in fact it is closed and it would only be clopen if the set $\{V^{'}|V^{'}\subseteq V\}$ over which the intersection ranges is actually finite. 

\noindent
Now we know that the collection of all clopen subsets for each $\us_V$ is a complete lattice, thus given a family of decreasing subsets there will exist a limiting point of such subsets which will belong to the lattice. 

\noindent
In our case the collection of $ \us(i_{V^{'}V})^{-1}(\ps{S}^c_{V^{'}})$ for all $\{V^{'}|V^{'}\subseteq V\}$ is a decreasing net of clopen subsets of $\us_V$. This means that if $V^{''}\subseteq V^{'}$ and $\lambda_{V^{''}}\in S_{V^{''}}^c$ then $\lambda_{|V}\in S_{V^{'}}^c$. That this is the case can be proved by contradiction, in fact if $\lambda_{|V}\in S_{V^{'}}$ then $\us_{V^{''}V^{'}}\lambda_{|V}=\lambda_{|V^{''}}\in S_{V^{''}}$ which would be a contradiction. Therefore $\us(i_{V^{''}V^{'}})^{-1}(\ps{S}^c_{V^{''}})\subseteq \us(i_{V^{'}V})^{-1}(\ps{S}^c_{V^{'}})$. Therefore the right hand side of \ref{equ:net} represents a decreasing net of clopen sub-sets of $\us_V$. If we now define the limit point of such a net and call it $(\neg S)_V $ we have a definition of the negation of an element as a clopen subset. Thus we define
\ba
(\neg S)_V&:=&int\bigcap_{V^{'}\subseteq V}\us(i_{V^{'}V})^{-1}(\ps{S}^c_{V^{'}})\\
&=&int\bigcap_{V^{'}\subseteq V}\{\lambda\in\us_V|\lambda_{|V^{'}}\in (\ps{S}^c_{V^{'}})\}
\ea
\end{proof}

Particular properties of the daseinization map worth
mentioning are:
\begin{enumerate}
\item $\ps{\delta(P\vee Q)}=\ps{\delta(P)}\vee\ps{\delta(Q)}$, i.e. it preserves the ``or" operation.
\item $\ps{\delta(P\wedge Q)}\leq\ps{\delta(P)}\wedge\ps{\delta(Q)}$, i.e. it does not preserve the ``and" operation.
\item If $\hat{P}\leq\hat{Q}$, then $\ps{\delta(P)}\leq\ps{\delta(Q)}$.
\item The daseinisation map is injective but not surjective.
\item $\ps{\delta(\hat{0})}=\{\emptyset_V|V\in Ob(\mv(\mh))\}$.
\item $\ps{\delta(\hat{1})}=\{\us_V|V\in Ob(\mv(\mh))\}$.
\end{enumerate}

\subsection{Physical Interpretation of Daseinisation}
What exactly does it mean to daseinise a projection? Let us consider a projection $\hat{P}$ which represents the proposition $A\in\Delta$. We now consider a context $V$ such that $\hat{P}\notin V$, thus we approximate this projection so as to be in $V$ obtaining $\delta^o(\hat{P})_V$. If the projection $\delta^o(\hat{P})_V$ is a spectral projector of the operator, $\hat{A}$, representing the quantity, $A$, then it represents the proposition $A\in\Gamma$ where $\Delta\subseteq \Gamma$.
Therefore, the mapping
\ba
\delta^o_V:P(\mh)&\rightarrow& P(\mv)\\
\hat{P}&\mapsto&\delta^o(\hat{P})_V
\ea
is the mathematical implementation of the idea of coarse graining of propositions, i.e. of generalizing a proposition.\\
If, on the other hand, $\delta^o(\hat{P})_V$ is not a spectral projector of the operator $\hat{A}$ representing the quantity, $A$, then $\delta^o(\hat{P})_V$ represents the proposition $B\in\Delta^{'}$. The physical quantity $B$ is now represented by the projection operator $\hat{B}\in P(V)$. Given the fact that $\hat{P}\leq\delta^o(\hat{P})_V$, the proposition $B\in\Delta^{'}$ is a coarse graining of $A\in\Gamma$, in fact a general form of $B\in\Delta^{'}$ could be $f(A)\in\Gamma$, for some Borel function $f: sp(\hat{A})\rightarrow \Rl$.\\
Obviously, for many contexts it is the case that $\delta^o(\hat{P})_V=\hat{1}$, which is the most general proposition of all.\\
From the analysis above we can deduce that, in this framework, there are two types of propositions:
\begin{enumerate}
\item[i)]\emph{Global propositions}, which are the propositions we start with and which we want to represent in  various contexts, i.e. $(A\in\Delta)$.
\item[ii)]\emph{Local propositions}, which are the individual coarse graining of the global propositions, as referred to individual contexts $V$.
\end{enumerate}
Thus, for every global proposition we obtain a collection of local propositions
\be
\hat{P}\rightarrow (\delta^o(\hat{P})_V)_{V\in\V(\Hi)}
\ee
In the topos perspective we consider the collection of all these local propositions at the same time, as exemplified by equation \ref{ali:glob}.
\subsection{Example}
To illustrate the concept of daseinisation of propositions let us consider a 2 spin system. We are interested in the spin in the $z$-direction, which is represented by the physical quantity $S_z$. In particular, we want to consider the following proposition $S_z\in [1.3,2.3]$. Since the total spin in the $z$ direction can only have values $-2$, $0$, $2$, the only value in the interval $ [1.3,2.3]$ which $S_z$ can take is $2$.\\
The self-adjoint operator representing $S_z$ is 
\[\hat{S}_z=\begin{pmatrix} 2& 0& 0&0\\	
0&0&0 &0\\
0&0&0&0\\
0&0&0&-2	
  \end{pmatrix}
  \]
 The eigenstate with eigenvalue 2 would be $\psi=(1,0,0,0)$, whose associated projector $\hat{P}:=\hat{E}[S_z\in [1.3,2.3]]=|\psi\rangle\langle\psi|$ would be 
  \[\hat{P}_1=\begin{pmatrix} 1& 0& 0&0\\	
0&0&0 &0\\
0&0&0&0\\
0&0&0&0	
  \end{pmatrix}
  \]
From our definition of $\mv(\Cl^4)$ we know that the operator $\hat{S}_z$ is contained in all algebras which contain the projector operators $\hat{P}_1$ and $\hat{P}_4$. These algebras are: i) the maximal algebra $V$ and ii) the non maximal sub-algebra $V_{\hat{P}_1\hat{P}_4}$. We will now analyse how the proposition $S_z\in [1.3,2.3]$, represented by the projection operator $\hat{P}_1$, gets represented in the various abelian von Neumann algebra in $\mv(\Cl^4)$.
\begin{enumerate}
\item \emph{Context $V$ and its sub-algebras}.\\\\
Since $V$, $V_{\hat{P}_1\hat{P}_i}$ ( $i\in\{2,3,4\}$) and $V_{\hat{P}_1}$ contain the projection operator $P_1$, then, for all these contexts we have
\be
\delta^o(\hat{P}_1)_V=\delta^o(\hat{P}_1)_{V_{\hat{P}_1\hat{P}_i}}=\delta^o(\hat{P}_1)_{V_{\hat{P}_1}}=\hat{P}_1
\ee
Instead for context $V_{\hat{P}_i}$ for $i\neq 1$ we have
\be
\delta^o(\hat{P}_1)_{V_{\hat{P}_i}}=\hat{P}_1+\hat{P}_j+\hat{P}_k\;\;\; j\neq i\neq k\in\{2,3,4\}
\ee
For contexts of the form $V_{\hat{P}_i\hat{P}_j}$, where $i\neq j\neq1$, we have
\be
\delta^o(\hat{P}_1)_{V_{\hat{P}_i\hat{P}_j}}=\hat{P}_1+\hat{P}_k\;\;\; j\neq i\neq k\in\{2,3,4\}
\ee
\item\emph{Other maximal algebras which contain $\hat{P}_1$ and their sub-algebras}.\\\\
Let us consider 4 pairwise orthogonal projection operators $\hat{P_1},\hat{Q}_2,\hat{Q}_3,\hat{Q}_4$, such that the maximal abelian von Neumann algebra generated by such projections is different from $V$, i.e. 
$$V^{'}=lin_{\Cl}( \hat{P_1},\hat{Q}_2,\hat{Q}_3,\hat{Q}_4)\neq V$$
We then have the following dasenised propositions:\\\\
For contexts $V^{'}$ and $V^{'}_{\hat{P}_1}$, as before, we have
\be
\delta^o(\hat{P}_1)_{V^{'}}=\delta^o(\hat{P}_1)_{V^{'}_{\hat{P}_1}}=\hat{P}_1
\ee
for contexts $V_{\hat{Q}_i}$ we have
\be
\delta^o(\hat{P}_1)_{V_{\hat{Q}_i}}=\hat{P}_1+\hat{Q}_j+\hat{Q}_k\;\;i\neq j\neq k\in\{2,3,4\}
\ee
Instead, for context $V_{\hat{Q}_i\hat{Q}_j}$, we have
\be
\delta^o(\hat{P}_1)_{V_{\hat{Q}_i\hat{Q}_j}}=\hat{P}_1+\hat{Q}_k\;\;i\neq j\neq k\in\{2,3,4\}
\ee
\item \emph{Contexts which contain a projection operator which is implied by $\hat{P}_1$}.\\\\
Let us consider contexts $\tilde{V}$ which contain the projection operator $\hat{Q}$, such that $\hat{Q}\geq \hat{P}_1$, but do not contain $\hat{P}_1$ (if they did contain $\hat{P}_1$, we would be in exactly the same situation as above). In this situation the daseinisated propositions will be
\be
\delta^o(\hat{P}_1)_{\tilde{V}}=\hat{Q}
\ee
\item\emph{Context which neither contain $\hat{P}_1$ or a projection operator implied by it}.\\\\
In these contexts $V{''}$ the only coarse grained proposition related to $\hat{P}_1$ is the unity operator, therefore we have
\be
\delta^o(\hat{P}_1)_{V^{''}}=\hat{1}
\ee
\end{enumerate}
Now that we have defined all the possible coarse grainings of the proposition $\hat{P}_1$, for all possible contexts, we can define the presheaf $\ps{\delta(\hat{P}_1)}$ which is the topos analogue of the proposition $S_z\in [1.3,2.3]$. As explained in the previous section, in order to obtain the presheaf $\ps{\delta(\hat{P}_1)}$ from the projection $\hat{P}_1$, we must apply the daseinisation map\footnote{Note that so far we have only used the outer daseinisation.} defined in \ref{ali:glob}, so as to obtain
\ba
\delta:P(\Cl^4)&\rightarrow& \Sub_{cl}(\Sig)\\
\hat{P}_1&\mapsto&(\mathfrak{S}(\delta^o(\hat{P}_1)_V))_{V\in\V(\Cl^4)}=:\ps{\delta(\hat{P})}
\ea
where the map $\mathfrak{S}$ was defined in \ref{equ:smap}, in particular 
\be
\mathfrak{S}(\delta^o(\hat{P}_1)_V):=S_{\delta^o(\hat{P}_1)_V}:=\{\lambda\in\Sig_V|\lambda
(\delta^o(\hat{P}_1)_V)=1\}
\ee
We now want to define the $\ps{\delta(\hat{P}_1)}$-morphisms. In order to do so we will again subdivide our analysis in different cases, as above.
\begin{enumerate}
\item \emph{Maximal algebra $V$ and its sub-algebras}.\\\\
The sub-algebras of $V$ are of two kinds: $V_{\hat{P}_i,\hat{P}_j}$ and $V_{\hat{P}_k}$ for $i,j,k\in\{1,2,3,4\}$, such that in $\mv(\Cl^4)$ we obtain the morphisms $i_{V_{\hat{P}_i\hat{P}_j},V}:V_{\hat{P}_i\hat{P}_j}\subseteq V$ and $i_{V_{\hat{P}_k},V}:V_{\hat{P}_k}\subseteq V$. Correspondingly $\ps{\delta(\hat{P}_1)}$-morphisms with domain $\ps{\delta(\hat{P}_1)}_V$ will be of two kinds. We will analyse one at the time. First we analyse the morphism
\be
\ps{\delta(\hat{P}_1)}(i_{V_{\hat{P}_i\hat{P}_j},V}):\ps{\delta(\hat{P}_1)}_{V}\rightarrow \ps{\delta(\hat{P}_1)}_{V_{\hat{P}_i,\hat{P}_j}}
\ee 
In this context we have
\be
\ps{\delta(\hat{P}_1)}_{V}=\{\lambda\in \us_V|\lambda(\delta(\hat{P}_1)_{V})=\lambda(\hat{P}_1)=1\}=\{\lambda_1\}
\ee
This is the case since, as we saw in the previous lecture $\us_V=\{\lambda_1, \lambda_2,\lambda_3,\lambda_4\}$ where $\lambda_i\hat{P}_j=\delta_{ij}$.\\
On the other hand for the contexts $V_{\hat{P}_i,\hat{P}_j}$ $i,j\in\{1,2,3,4\}$ we have the following:
\ba
\ps{\delta(\hat{P}_1)}_{V_{\hat{P}_1,\hat{P}_j}}&=&\{\lambda_1\} \text{ where }\lambda_1\Big(\delta(\hat{P}_1)_{V_{\hat{P}_1,\hat{P}_j}}=\hat{P}_1\Big)=1;\;\;j\in\{2,3,4\}\\
\ps{\delta(\hat{P}_1)}_{V_{\hat{P}_i,\hat{P}_j}}&=&\{\lambda_{1k}\} \text{ where }\lambda_{1k}\Big(\delta(\hat{P}_1)_{V_{\hat{P}_i,\hat{P}_j}}=(\hat{P}_1+\hat{P}_k)\Big)=1;\;\; i\neq j \neq k\neq 1
\ea
The $\ps{\delta(\hat{P}_1)}$-morphisms for the above contexts would be 
\ba
\ps{\delta(\hat{P}_1)}(i_{V_{\hat{P}_1,\hat{P}_j},V})(\lambda_1)&:=&\lambda_1\\
\ps{\delta(\hat{P}_1)}(i_{V_{\hat{P}_i,\hat{P}_j},V})(\lambda_1)&:=&\lambda_{1k}
\ea
The remaining $\ps{\delta(\hat{P}_1)}$-morphisms with domain $\ps{\delta(\hat{P}_1)}_V$ are
\be
\ps{\delta(\hat{P}_1)})i_{V_{\hat{P}_i},V}:\ps{\delta(\hat{P}_1)}_{V}\rightarrow \ps{\delta(\hat{P}_1)}_{V_{\hat{P}_i}}
\ee 
In this case the local propositions $\ps{\delta(\hat{P}_1)}_{V_{\hat{P}_i}}$, $i\in\{1,2,3,4\}$ are
\ba
\ps{\delta(\hat{P}_1)}_{V_{\hat{P}_1}}&=&\{\lambda_1\}\\
\ps{\delta(\hat{P}_1)}_{V_{\hat{P}_i}}&=&\{\lambda_{1jk}\}\;\;i,j,k\in\{2,3,4\}
\ea
The $\ps{\delta(\hat{P}_1)}$-morphisms are then 
\ba
\ps{\delta(\hat{P}_1)}(i_{V_{\hat{P}_1},V}(\lambda_1)&:=&\lambda_1\\
\ps{\delta(\hat{P}_1)}(i_{V_{\hat{P}_i},V})(\lambda_1)&:=&\lambda_{1kl}
\ea
\item\emph{Other maximal algebras which contain $\hat{P}_1$ and their sub-algebras}.\\\\
As before we consider 4 pairwise orthogonal projection operators $\hat{P_1},\hat{Q}_2,\hat{Q}_3,\hat{Q}_4$, such that the maximal abelian von Neumann algebra generated by such projections is different from $V$, i.e. \\
$V^{'}=lin_{\Cl}( \hat{P_1},\hat{Q}_2,\hat{Q}_3,\hat{Q}_4)\neq V$.\\
We then obtain the following morphisms with domain $\ps{\delta(\hat{P}_1)}_{V}$:
\ba
\ps{\delta(\hat{P}_1)}(i_{V_{\hat{P}_1},V}):\ps{\delta(\hat{P}_1)}_{V}&\rightarrow& \ps{\delta(\hat{P}_1)}_{V_{\hat{P}_1}}\\
\lambda_1&\mapsto&\lambda_1
\ea
\ba
\ps{\delta(\hat{P}_1)}(i_{V_{\hat{Q}_i},V}):\ps{\delta(\hat{P}_1)}_{V}&\rightarrow& \ps{\delta(\hat{P}_1)}_{V_{\hat{Q}_i}}\\
\lambda_1&\mapsto&\rho_{1jk}
\ea
where $\us_{\hat{Q}_i}:=\{\rho_i,\rho_{1jk}\}$, such that $\rho_i(\hat{Q}_i)=1$ and $\rho_{1jk}(\hat{P}_1+\hat{Q}_j+\hat{Q}_k)=1$.
\ba
\ps{\delta(\hat{P}_1)}(i_{V_{\hat{Q}_i,\hat{Q}_j},V}):\ps{\delta(\hat{P}_1)}_{V}&\rightarrow& \ps{\delta(\hat{P}_1)}_{V_{\hat{Q}_i},\hat{Q}_j}\\
\lambda_1&\mapsto&\rho_{1k}
\ea
where $\us_{\hat{Q}_i,\hat{Q}_j}:=\{\rho_i,\rho_j,\rho_{1k}\}$, such that $\rho_i(\hat{Q}_i)=1$, $\rho_j(\hat{Q}_j)=1$ and $\rho_{1k}(\hat{P}_1+\hat{Q}_k)=1$.
The computation of the remaining maps is left as an exercise.
\item\emph{Contexts which contain a projection operator which is implied by $\hat{P}_1$}.\\\\
We now consider a context $\tilde{V}$ which contains an operator $\hat{Q}$, such that $\hat{Q}\geq \hat{P}$.\\
For such a context we have $\ps{\delta(\hat{P}_1)}_{\tilde{V}}=\{\lambda|\lambda(\hat{Q})=1\}$. Therefore, for sub-algebras which contain the operator $\hat{Q}$ the morphisms will simply map $\lambda$ to itself. The rest of the maps are easily derivable.
\item\emph{Context which neither contain $\hat{P}_1$ or a projection operator implied by it}.\\\\
In such a context $V^{''}$ , whatever its spectrum is, each of the multiplicative linear functionals $\lambda_i\in\us_{V^{''}}$ will assign value 1 to $\ps{\delta(\hat{P}_1)}=\hat{1}$. And so will the elements of the spectrum of the sub-algebras $\bar{V}$ of $V^{''}$. Thus, all the maps $\ps{\delta(\hat{P}_1)}(i_{\bar{V},V^{''}}$ will simply be equivalent to spectral presheaf maps.
\end{enumerate}

\section{The Spectral Preshaef and the Kochen-Specker Theorem} 
We will now show how the non existence of global elements of the spectral presheaf $\us$ is equivalent to the Kochen-Specker theorem.

Let us consider the topos $\Sets^{\mv(\mh)^{\op}}$. In order to define a global element we first of all have to define what a terminal object looks like. This is identified as the preheaf $$\underline{1}:\mv(\mh)\rightarrow \Sets$$ such that for each $V\in \mv(\mh)$
\be
\underline{1}_V:=\{*\}
\ee
Given a map $i_{V^{'}V}:V^{'}\rightarrow V$ ($V^{'}\subseteq V$), the corresponding morphisms is simply the constant map
\ba
\underline{1}(i_{V^{'}V}):\{*\}\rightarrow \{*\}
\ea

We now want to define a global element of the presheaf $\us$. Recall that this is defined as a map
\ba
\gamma:\underline{1}\rightarrow \us
\ea 
such that for each context we get
\ba
\gamma_V:\underline{1}_V&\rightarrow& \us_V\\
\{*\}&\rightarrow  &\us_V\\
\{*\}&\mapsto&\gamma_V(\{*\}):=\lambda
\ea
Thus at the level of the stalks we retrieve the usual set definition of global element. \\\\

The connection between global sections and the Kochen-Specker theorem is given by the following theorem:

\begin{Theorem}
The spectral presheaf $\us$ has no global elements iff FUNC does not hold, i.e. $V(f(\hat{A}))\neq f(V(\hat{A}))$ for some Borel function $f:\Rl\rightarrow \Rl$, such that $\hat{B}=f(\hat{A})$.
\end{Theorem}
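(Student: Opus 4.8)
The plan is to prove the logically equivalent contrapositive statement that $\us$ \emph{has} a global element if and only if FUNC \emph{does} hold, i.e.\ iff there exists a valuation $V:\mathcal{O}\rightarrow\Rl$ of the kind defined earlier that respects FUNC on all of $\mathcal{O}$. The bridge between the two sides is the fact, established in the discussion of the Gel'fand spectrum, that for each abelian context $W\in\mv(\mh)$ the points $\lambda\in\us_W$ are precisely the FUNC-respecting valuations on $W$: each $\lambda$ sends $\hat{A}\in W$ to $\lambda(\hat{A})\in sp(\hat{A})$ and satisfies $\lambda(g(\hat{A}))=g(\lambda(\hat{A}))$ for every Borel $g$. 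Unwinding the definition of a global element, a map $\gamma:\underline{1}\rightarrow\us$ is exactly a choice of $\lambda_W:=\gamma_W(*)\in\us_W$ for every context $W$, subject to the naturality condition that $\us(i_{W'V})(\lambda_W)=\lambda_{W'}$ whenever $W'\subseteq W$; since $\us(i_{W'V})$ is restriction, this says $\lambda_W|_{W'}=\lambda_{W'}$.

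For the forward direction, I would start from a global element $\gamma=\{\lambda_W\}$ and define a candidate global valuation by $V(\hat{A}):=\lambda_{V_A}(\hat{A})$, where $V_A$ denotes the abelian von Neumann algebra generated by $\hat{A}$ and $\hat{1}$. I would first check this is well defined: any context $W$ containing $\hat{A}$ also contains $V_A$, so $V_A\subseteq W$ and the compatibility condition forces $\lambda_W|_{V_A}=\lambda_{V_A}$, hence $\lambda_W(\hat{A})=\lambda_{V_A}(\hat{A})$; the value is independent of the context through which it is read off. To verify FUNC, take $\hat{B}=f(\hat{A})$ for a Borel $f$. Then $\hat{B}\in V_A$ and $V_B\subseteq V_A$, so by compatibility $V(\hat{B})=\lambda_{V_B}(\hat{B})=\lambda_{V_A}(\hat{B})=\lambda_{V_A}(f(\hat{A}))=f(\lambda_{V_A}(\hat{A}))=f(V(\hat{A}))$, the penultimate equality being exactly the FUNC property of the single point $\lambda_{V_A}$. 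Thus a global element yields a FUNC-valuation.

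For the reverse direction, given a valuation $V:\mathcal{O}\rightarrow\Rl$ satisfying FUNC, I would restrict it to each context: setting $\lambda_W:=V|_{W}$ (extended $\Cl$-linearly to all of $W$) gives, by the spectrum and FUNC conditions, a FUNC-respecting valuation on $W$, i.e.\ a genuine point $\lambda_W\in\us_W$ via the correspondence above. Compatibility is immediate, since for $W'\subseteq W$ both $\lambda_W|_{W'}$ and $\lambda_{W'}$ are just the restriction of the one global $V$; hence $\{\lambda_W\}$ assembles into a natural transformation $\gamma:\underline{1}\rightarrow\us$, a global element. Negating the biconditional then yields precisely the stated theorem.

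The step I expect to be the real obstacle is the identification of Gel'fand points with FUNC-valuations on which the whole argument rests — in particular, passing from a real-valued FUNC-valuation on the self-adjoint part of $W$ to a genuine multiplicative, norm-one $\Cl$-linear functional $\lambda_W$, and conversely checking that multiplicativity of $\lambda_W$ delivers the full FUNC property rather than merely the additive and product rules on commuting operators. This requires the Gel'fand representation theorem for the abelian von Neumann algebra $W$ together with the spectral calculus, and it is here, rather than in the essentially formal compatibility bookkeeping, that the mathematical content lies.
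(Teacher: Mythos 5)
Your proposal is correct, and for the one direction the paper actually proves it follows the same route: a global element of $\us$ is a compatible family of Gel'fand characters $\{\lambda_W\}$, each such character is a FUNC-respecting valuation on its own context, and the compatibility condition under restriction lets the family be read as a single valuation on $\mathcal{O}$. Where you differ is in completeness, and the difference is in your favour. The paper's proof establishes only that a global section yields a FUNC-respecting valuation; the converse half of the biconditional (that a FUNC-respecting valuation $V:\mathcal{O}\rightarrow\Rl$ assembles into a global element) is never argued there, whereas your reverse direction — restrict $V$ to each $W$, extend $\Cl$-linearly to a character using the sum and product rules, and note that compatibility is automatic because every component is a restriction of the one global $V$ — supplies exactly the missing half. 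Your forward direction is also sharper: defining $V(\hat{A}):=\lambda_{V_A}(\hat{A})$ through the algebra $V_A$ generated by $\hat{A}$ and $\hat{1}$ makes well-definedness explicit, where the paper leaves the valuation implicit and relies on the loose claim that for $V^{'}\subseteq V$ one can always find $\hat{B}\in V^{'}$ with $\hat{B}=f(\hat{A})$. Finally, the obstacle you flag is genuine and is shared by the paper: the assertion that a character $\lambda\in\us_W$ satisfies $\lambda(f(\hat{A}))=f(\lambda(\hat{A}))$ for arbitrary \emph{Borel} $f$ is stated in Lecture 8 without proof, and it is not automatic — Gel'fand theory gives it for continuous $f$, but characters of a von Neumann algebra need not be normal, so the Borel case requires either an additional argument or a restriction on the class of functions admitted into FUNC. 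Both your proof and the paper's stand or fall with that lemma, so isolating it explicitly, as you do, is the right move.
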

\begin{proof}
Let us assume that $\us$ did have global sections. This would imply that there existed maps $\gamma:\underline{1}\rightarrow \us$, such that to each element $V\in\mv(\mh)$, $\gamma_V\in\us_V$, i.e. $\gamma_V=\lambda:V\rightarrow \Cl$. In particular, for each self adjoint operator $\hat{A}\in V_{sa}$,  $\gamma_V(\hat{A})=\lambda(\hat{A})\in\sigma(\hat{A})$ is an element of the spectrum of $\hat{A}$. Given a map $i_{V^{'},V}:V^{'}\rightarrow V$ ($V^{'}\subseteq V$), then from the properties of global sections we have that 
\be\label{equ:globsec}
\us(i_{V^{'}, V})\gamma_V=\gamma_{V^{'}}
\ee
Now consider a self-adjoint operator $\hat{A}$, such that $\hat{A}\in V$ but $\hat{A}\notin V^{'}$. Given the fact that $V^{'}\subseteq V$ it is always possible to find an operator $\hat{B}\in V^{'}$ such that $f(\hat{A})=\hat{B}$ for some Borel function $f:\Rl\rightarrow\Rl$. Since $\gamma_V(\hat{A})\in\sigma(\hat{A})$, by applying equation \ref{equ:globsec} to $\hat{A}$ we obtain
\be
f(\gamma_V(\hat{A}))=\gamma_{V^{'}}\hat{B}
\ee
which is precisely FUNC.

\end{proof}

The above theorem leads immediately to the following statement
\begin{Corollary}
The Kochen-Specker theorem is equivalent to the
statement that, if $dim\mh > 2$, the spectral presheaf $\us$ 
has no global elements.
\end{Corollary}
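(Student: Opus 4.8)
The plan is to reduce the statement to the theorem just proved, which established that $\us$ has no global elements precisely when no globally consistent valuation respecting FUNC exists. Concretely, I would first make precise a bijective correspondence between global elements of the spectral presheaf and valuation functions $V:\mathcal{O}\rightarrow\Rl$ of exactly the kind forbidden by the Kochen--Specker theorem, and then observe that the dimension hypothesis $\dim\mh>2$ enters only through the non-existence clause of that theorem, so that the correspondence itself is dimension-free.

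A global element $\gamma:\underline{1}\rightarrow\us$ is a natural transformation whose component at each context $V\in\mv(\mh)$ picks out an element $\gamma_V\in\us_V$, i.e. a multiplicative linear functional $\lambda_V:V\rightarrow\Cl$ with $\lambda_V(\hat{1})=1$; naturality forces the compatibility condition $\us(i_{V'V})\gamma_V=\gamma_{V'}$, which by the definition of $\us$ on morphisms means $\lambda_V|_{V'}=\lambda_{V'}$ whenever $V'\subseteq V$. As recalled in Lecture 8, each such $\lambda_V$ is exactly a valuation on the commutative algebra $V$ respecting FUNC: for $\hat{A}\in V$ one has $\lambda_V(\hat{A})\in\mathrm{sp}(\hat{A})$ and $\lambda_V(g(\hat{A}))=g(\lambda_V(\hat{A}))$ for every Borel $g$.

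From a global element I would build a single valuation $\nu:\mathcal{O}\rightarrow\Rl$ as follows: given a self-adjoint $\hat{A}$, let $V_{\hat{A}}$ be the abelian algebra generated by $\hat{A}$ and $\hat{1}$, and set $\nu(\hat{A}):=\gamma_{V_{\hat{A}}}(\hat{A})$. This is well defined and non-contextual: if $W$ is any context containing $\hat{A}$, then $V_{\hat{A}}\subseteq W$, so compatibility gives $\gamma_W|_{V_{\hat{A}}}=\gamma_{V_{\hat{A}}}$ and hence $\gamma_W(\hat{A})=\gamma_{V_{\hat{A}}}(\hat{A})$; the value of $\hat{A}$ does not depend on which commuting family it is viewed inside. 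Moreover $\nu$ satisfies FUNC globally: if $\hat{B}=f(\hat{A})$ then $\hat{B}\in V_{\hat{A}}$, so $\nu(f(\hat{A}))=\lambda_{V_{\hat{A}}}(f(\hat{A}))=f(\lambda_{V_{\hat{A}}}(\hat{A}))=f(\nu(\hat{A}))$ by the within-context FUNC property. Conversely, any valuation $\nu$ satisfying FUNC restricts on each $V$ to a functional $\lambda_V:=\nu|_V$, which by the sum and product rules (consequences of FUNC derived earlier) is linear and multiplicative with $\lambda_V(\hat{1})=1$, hence an element of $\us_V$; restriction-compatibility is then automatic, so $(\lambda_V)_{V\in\mv(\mh)}$ is a global element. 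Thus global elements of $\us$ are in bijection with FUNC-valuations.

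Finally I would conclude. The Kochen--Specker theorem asserts precisely that, when $\dim\mh>2$, no valuation $V:\mathcal{O}\rightarrow\Rl$ satisfying FUNC exists; by the bijection above this is exactly the assertion that $\us$ has no global elements. Since the correspondence is itself dimension-free, the two statements are logically equivalent, which is the claim. The main obstacle is the well-definedness / non-contextuality step: one must verify that the compatibility built into a global element genuinely forces a single context-independent value on each operator. This is the point that ties the purely categorical gluing condition to the physical notion of non-contextuality that underlies the hypotheses of the Kochen--Specker theorem, and it is where the argument does its real work.
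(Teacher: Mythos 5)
Your proposal is correct and takes essentially the same route as the paper: it identifies global elements of $\us$ with FUNC-respecting valuations (via the theorem proved immediately before the corollary) and then invokes the Kochen--Specker theorem to rule these out when $\dim\mh>2$. You are in fact more thorough than the paper's own proof, which only sketches the direction (global element $\Rightarrow$ FUNC valuation) and leaves implicit the converse needed for genuine equivalence, whereas you construct the bijection explicitly in both directions, including the non-contextuality check that the compatibility condition $\us(i_{V'V})\gamma_V=\gamma_{V'}$ forces a single context-independent value on each operator.
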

\begin{proof}
We now want to show that the K-S theorem is equivalent to the statement that $\tilde{\us}$ has no global elements. So let us assume it does, it then follows that 
there exists a function $\gamma:\underline{1}\rightarrow\us$ which assigns to each (bounded, discrete spectrum) self-adjoint operator $\hat{A}$, a real number\footnote{Here for notational simplicity we simply wirte $\gamma(*)$ as $\gamma$.} $\gamma(\hat{A})\in sp(\hat{A})$. From the definition of a section it follow that if $\hat{A} = f(\hat{A})$ then $f(\gamma(\hat{A})) =\gamma(\hat{B})$. This is precisely the FUNC condition. However the K-S theorem tells us that this can not be the case, thus $\tilde{\us}$ has no global sections.
\end{proof}
\chapter{Lecture 10}
In this lecture I will describe what a sub-object classifier looks like in our quantum topos. I will then give a concrete example for the case of a $4$ dimensional Hilbert space. I will then define the topos analogue of a state, and give a concrete example of such a state.

\section{Representation of Sub-object Classifier}
We will now describe how the sub-object classifier is defined in the topos $\Sets^{\mv(\mh)^{op}}$. Such an object represents the truth value object whose elements (global sections) are truth values, which get assigned to propositions (clopen sub-objects of $\us$). As we will see, we end up with a multi valued logic.
In the topos $\Sets^{\mv(\mh)^{op}}$ the sub-object classifier $\Omega$ is identified with the following presheaf.
\begin{Definition}
The presheaf $\Om\in \Sets^{\V(\Hi)^{op}}$ is defined as follows:
\begin{enumerate}
\item For any $V\in\mathcal{V(H)}$, the set $\Om_V$ is defined as the set of all sieves on $V$.

\item Given a morphism $i_{V^{'}V}:V^{'}\rightarrow V$ $(V^{'}\subseteq V)$, the associated function in $\underline\Omega$ is
\begin{align}
\Om(i_{V^{'}V}):
&\Om_V\rightarrow \Om_{V^{'}}\\
&S \mapsto \Om((i_{V^{'}V}))(S):=\{V^{''}\subseteq V^{'}|V^{''}\in
S\}
\end{align}
\end{enumerate}
\end{Definition}
We have seen, in previous lectures, what a sieve is, however, for the particular case in which we are interested, namely sieves
defined on the poset $\V(\Hi)$, the definition of a sieve can be
simplified as follows:
\begin{Definition}
For all $V\in\V(\Hi)$, a sieve $S$ on $V$ is a collection of
sub-alebras $(V^{'}\subseteq V)$ such that, if $V^{'}\in S$ and
$(V^{''}\subseteq V^{'})$, then $V^{''}\in S$. Thus $S$ is a
downward closed set.
\end{Definition}
In this case a maximal sieve on $V$ is
\begin{equation}
\downarrow\! V:=\{V^{'}\in\V(\Hi)|V^{'}\subseteq V\}
\end{equation}

In order for $\Om$ to be a well defined presheaf, we need to show
that indeed $\Om((i_{V^{'}V}))(S):=\{V^{''}\subseteq
V^{'}|V^{''}\in S\}$ defines a sieve on $V^{'}$. Thus we
need to show that $\Om((i_{V^{'}V}))(S):=\{V^{''}\subseteq
V^{'}|V^{''}\in S\}$ is a downward closed set with respect to
$V^{'}$. It is straightforward to deduce this from the definition.

As previously stated, truth values are identified with global
section of the presheaf $\Om$. For each context, such global sections assign the `local' truth value. Therefore the picture we obtain is the following: \\each proposition and each state is defined as a collection of `local' representations, one for each $V\in\mv(\mh)$. Such `local' representations, are glued together by the categorical structure of $\mv(\mh)$. Now for each context we obtain a `local' truth value of the `local' proposition given the `local' state. Such `local' truth values are represented by the global element computed at that particular context. All such `local' truth values are `glued' together by the global section which, in turn, follows the categorical structure of the base category $\mv(\mh)$.
Thus, again, we obtain the quantum picture by considering a collection of `local' representatives. However, it is only the collection that corresponds to a well defined object in our theory, each local representative on its own is meaningless.

Coming back to truth values, 
the global section that consists
entirely of principal sieves, is interpreted as representing
`totally true'. In classical Boolean logic this is just  `true'.
Similarly, the global section that consists of empty sieves is
interpreted as `totally false'. In classical Boolean logic this
is just `false'.
\\
A very important property of sieves is that the set $\uom_V$  of sieves on $V$ has the
structure of a Heyting algebra, where the unit element $\underline{1}_{\uom_V}\in \uom_V$ is represented by the principal sieve $\downarrow V$ and, the null element $\underline{0}_{\uom_V}\in \uom_V$, is represented by the empty set $\emptyset$. \\
Moreover $\uom_V$ is equipped with a partial ordering given by subset inclusion, such that $S_i\leq S_j$ iff $S_i\subseteq S_j$. In this context the logical connectives are given by 
\ba
S_i\wedge S_j&:=& S_i \cap S_j\\
S_i\vee S_j&:=& S_i \cup S_j\\
S_i\Rightarrow S_j&:=&\{V^{'}\subseteq V |\forall V{''}\subseteq V^{'} \text{ if } V{''}\in S_i \text{ then } V{''}\in S_j\}
\ea
Being a Heyting algebra, the negation is given by the pseudo-complement. In particualr, given an element $S$, its pseudo-complement (negation) is the element 
\ba
\neg S&:=&S\Rightarrow 0\\
\neg S &:= &\{V^{'}\subseteq V |\forall V{''}\subseteq V^{'},  V{''}\notin S\}
\ea
\subsection{Example}
We will now describe an example of the truth object for the case of our 4 dimensional Hilbert space $\mh=\Cl^4$. Let us start with the \emph{maximal algebras} $V=lin_{\Cl}(\hat{P}_1,\hat{P}_2, \hat{P}_3, \hat{P}_4)$. What follows can be generalised to any maximal sub-alebra, not just $V$.\\
The collection of sieves on $V$ will be
\be
\uom_V:=\{\underline{0}_{\uom_V},S, S_{12}, S_{13}, S_{14}, S_{23}, S_{24}, S_{34}, S_{1},S_{2}, S_{3}, S_{4}, \cdots\}
\ee 
where the sieves in $\uom_V$ are defined as follows:
\ba
S&=&\{V, V_{\hat{P}_i,\hat{P}_j}, V_{\hat{P}_k}, | i,j,k\in\{1,2,3,4,\} \}=\text{ principal sieve on V }\\
S_{ij}&=&\{V_{\hat{P}_i,\hat{P}_j}, V_{\hat{P}_i}, V_{\hat{P}_j}\}=\text{ principal sieve on }V_{\hat{P}_i,\hat{P}_j}\\
S_{i}&=&\{V_{\hat{P}_i}\}\\
\underline{0}_{\uom_V}&=&\{\emptyset\}
\ea
We now consider a non maximal algebra $V_{\hat{P}_i,\hat{P}_j}$, the collection of sieves on such an algebras is
\be
\uom_{V_{\hat{P}_i,\hat{P}_j}}:=\{\underline{0}_{V_{\hat{P}_i,\hat{P}_j}},S_{ij}, S_i, S_j, \cdots\}
\ee
where the definitions of the individual sieves are the same as before.\\
Similarly, for the context $V_{\hat{P}_i}$ we have
\be
\uom_{V_{\hat{P}_i}}:=\{S_{i}\}
\ee
We now want to define the $\uom$-morphisms. To this end, let us first consider the $\uom$-morphism with domain $\uom_V$. There are various such morphisms, one for each pair $i,j\in\{1,2,3,4\}$, as follows:
\ba
\uom(i_{V_{\hat{P}_i,\hat{P}_j},V}):\uom_V\rightarrow \uom_{V_{\hat{P}_i,\hat{P}_j}}
\ea 
where $i_{V_{\hat{P}_i,\hat{P}_j},V}:V_{\hat{P}_i,\hat{P}_j}\subseteq V$. $\uom(i_{V_{\hat{P}_i,\hat{P}_j},V})$ is defined component wise as follows:
\ba
S\mapsto S_{ij}&\;\;\;&
S_{ij}\mapsto S_{ij}\\
S_{ik}\mapsto S_i&\;\;\;&
S_{kj}\mapsto S_j\\
S_i \mapsto S_i&\;\;\;&
S_{kl}\mapsto\underline{0}_{V_{\hat{P}_i,\hat{P}_j}}\\
S_j\mapsto S_j&\;\;\;&
S_k \mapsto \underline{0}_{V_{\hat{P}_i,\hat{P}_j}}
\ea
Moreover, for each $k\in \{1,2,3,4\}$ such that $i_{V_{\hat{P}_k},V}:V_{\hat{P}_k}\subseteq V$, we have the following $\uom$-morphisms:
\be
\uom_{V,V_{\hat{P}_k}}:\uom_V\rightarrow \uom_{V_{\hat{P}_k}}
\ee
which component wise are defined as follows:
\ba
S\mapsto S_{k}&\;\;\;&
S_{ij}\mapsto \underline{0}_{V_{\hat{P}_k}}\\
S_{ik}\mapsto S_k&\;\;\;&
S_{l}\mapsto\underline{0}_{V_{\hat{P}_k}}\\
S_k \mapsto S_k&\;\;\;&
S_j\mapsto \underline{0}_{V_{\hat{P}_k}}\\
&\;\;\;&S_i\mapsto \underline{0}_{V_{\hat{P}_k}}
\ea
It is straightforward to extend the definition of $\uom$-morphisms for all contexts $V_i\in \mv(\Cl^4)$.
\section{States}
 In classical physics a pure state, $s$, is a point in the state
space. It is the smallest subset of the state space which has
measure one with respect to the Dirac measure $\delta_s$. 

\noindent
Recall that a Dirac measure $\delta_s$ on some set $S$ is defined by 
\be
\delta_s(A)=\begin{cases} 1\text{ if }s\in A\\
0\text{ if } s\notin A
\end{cases}
\ee
for any $s\in S$ and any measurable subset $A\subseteq S$.

Identifying states with subsets which have measure one is a consequence of the one-to-one correspondence which subsists
between pure states and Dirac measure. In particular, for each
pure state, $s$, there corresponds a unique Dirac measure
$\delta_s$. Moreover, propositions which are true in a pure state
$s$ are given by subsets of the state space which have measure one,
with respect to the Dirac $\delta_s$, i.e. those subsets which
contain $s$. The smallest such subset is the one-element set
$\{s\}$. Thus, a pure state can be identified with a single point
in the state space.

In classical physics, more general states are represented by more
general probability measures on the state space. This is the
mathematical framework that underpins classical statistical
physics.

However, the spectral presheaf $\Sig$ has \emph{no}
points\footnote{Recall that in a topos $\tau$, a `point' (or `global element'
or just `element') of an object $O$ is defined to be a morphism
from the terminal object, $1_\tau$, to $O$.}. Indeed, this is
equivalent to the Kochen-Specker theorem! Thus the analogue of a
pure state must  be identified with some other construction. There
are two (ultimately equivalent)\ possibilities:  a `state' can be
identified with (i) an element of $P(P(\Sig))$ (the set of all possible sub-objects of the set of all possible subsets of $\us$); or (ii) an element
of $P(\Sig)$ (the set of all sub-object of $\us$). The first choice is called the \emph{truth-object}
option, the second is  the \emph{pseudo-state} option. In what
follows we will concentrate on the second option. The first option will be described later.

The second choice is the one that most resembles the notion of a point state since it represents the smallest sub-object of the state space $\us$. Since $\us$ is a presheaf, a sub-object of it will be itself a presheaf, thus the \emph{pseudo state} is a presheaf, i.e. an object in $\Sets^{\mv(\mh)^{\op}}$.

Specifically, given a pure quantum state $\psi\in\Hi$, we define
the presheaf
\begin{equation}
\ps{\w}^{\ket\psi}:= \ps{\delta(\ket\psi\langle\psi|)}
\end{equation}
such that for each context V we have
\begin{equation}
\ps{\delta(\ket\psi\langle\psi|)}_V:=
\mathfrak{S}(\bigwedge\{\hat{\alpha}\in
P(V)|\ket\psi\langle\psi|\leq\hat{\alpha}\})= \mathfrak{S}(\delta^o(\ket\psi\langle\psi|))\subseteq\Sig(V)
\end{equation}
Where the map $\mathfrak{S}$ was defined in equation
(1.5) lecture 9 but we will report it below for the sake of completeness:
\begin{equation}\label{equ:smap}
\mathfrak{S}:P(V)\rightarrow \Sub_{cl}(\Sig)_V\hspace{.2in}
\end{equation}
such that
\begin{equation}
\delta^o(\ket\psi\langle\psi|)_V\mapsto
\mathfrak{S}(\delta^o(\ket\psi\langle\psi|)_V):=S_{\delta^o(\ket\psi\langle\psi|)_V}
\end{equation}

Thus, for each context $V\in\mv(\mh)$, the projection operator $\ps{\delta(\ket\psi\langle\psi|)}_V$ is the smallest projection operator implied by $\ket\psi\langle\psi|$. Since $\ket\psi\langle\psi|$ projects on a 1-dimensional sub-space of the Hilbert space, i.e. it projects on a state, $\ps{\delta(\ket\psi\langle\psi|)}_V$ identifies the smallest sub-space of $\mh$ equal or bigger than the one dimensional sub-space $|\psi\rangle$. It is in this sense that $\ps{\w}^{\ket\psi}:= \ps{\delta(\ket\psi\langle\psi|)}$ represents the closest one can get to a point in $\us$.

The map
\begin{equation}
\ket\psi\rightarrow \ps{\w}^{\ket\psi}
\end{equation}
is injective. 
\begin{proof}
We want to show that if $\ps{\mathbb{T}}^{|\psi\rangle}=\ps{\mathbb{T}}^{|\phi\rangle}$ then $|\phi\rangle=e^{i\lambda}|\psi\rangle$. Now applying the definitions for each $V\in \mv(\mh)$ we have
\ba
\ps{\mathbb{T}}^{|\psi\rangle}_V&=&\{\hat{P}\in P(V)|\hat{P}\geq |\psi\rangle\langle\psi|\}\\
&=&\{\hat{P}\in P(V)|\langle\psi|\hat{P} |\psi\rangle=1\}\\
&=&\ps{\mathbb{T}}^{|\phi\rangle}_V\\
&=&\{\hat{P}\in P(V)|\langle\phi|\hat{P} |\phi\rangle=1\}
\ea
However if $\langle\psi|\hat{P} |\psi\rangle=1$ then $\langle\psi e^{-i\lambda}|\hat{P} |e^{i\lambda}\psi\rangle=\langle\phi|\hat{P} |\phi\rangle=1$ for some $\lambda$. Moreover that will be the only stat would satisfy this for all $\hat{P}$ therefore $|\phi\rangle=e^{i\lambda}|\psi\rangle$, i.e. $|\psi\rangle\rightarrow\ps{\mathbb{T}}^{|\psi\rangle}$ is injective. Since the association $\ps{\mathbb{T}}^{|\psi\rangle}\rightarrow \ps{\w}^{\ket\psi}$ is injective the result follows.

\end{proof}
Thus, for each state $\ket\psi$, there is associated a
topos pseudo-state, $\ps{\w}^{\ket\psi}$, which is defined as a
subobject of the spectral presheaf $\Sig$.

This presheaf $\ps{\w}^{\ket\psi}$ is interpreted as the smallest
clopen subobject of $\Sig$, which represents the proposition\footnote{Recall that in the topos framework propositions are identified with clopen sub-objects of the state space} which
is totally true in the state $|\psi\rangle$, namely the proposition $\ps{\delta(|\psi\rangle\langle\psi|)}$. Roughly speaking, it is the
closest one can get to defining a point in $\Sig$. The formal definition of the pseudo state is as follows:

\begin{Definition}
For each state $|\psi\rangle\in \mh$ we obtain the pseudo state $\ps{\w}^{\ket\psi}\in \Sets^{\mv(\mh)^{\op}}$ which is defined on 
\begin{itemize}
\item Objects: For each context $V\in\mv(\mh)$ we obtain 
\be
\ps{\delta(\ket\psi\langle\psi|)}_V:=\{\lambda\in\us_V|\lambda(\delta^o(|\psi\rangle\langle\psi|)_V)=1\}
\ee
\item Morphisms: For each $i_{V^{'}V}:V^{'}\subseteq V$ the corresponding map is simply the spectral presheaf map restricted to $\ps{\w}^{\ket\psi}$, i.e.
\ba
\ps{\w}^{\ket\psi}(i_{V^{'}V}):\ps{\w}^{\ket\psi}_V&\rightarrow &\ps{\w}^{\ket\psi}_{V^{'}}\\
\lambda&\mapsto&\lambda_{|V^{'}}
\ea
\end{itemize}

\end{Definition}

\subsection{Example}
We will now give an example of how to define pseudo states in our 4 dimensional Hilbert space $\Cl^4$. This is very similar to the example for propositions, since also for pseudo states the concept of daseinisation is utilised. However, for pedagogical reasons we will, nonetheless, report it below.\\
Let us consider a state $\psi=(0,1,0,0)$. The respective projection operator is 
\[\hat{P}_2=|\psi\rangle\langle\psi|=\begin{pmatrix} 0& 0& 0&0\\	
0&1&0 &0\\
0&0&0&0\\
0&0&0&0	
  \end{pmatrix}
  \]
We now want to compute the outer daseinisation of such a projection operator for various contexts $V\in \mv(\Cl^4)$. As was done for the proposition, we will subdivide our analysis in different cases:
\begin{enumerate}
\item \emph{Context $V$ and its sub-algebras}.\\\\
Since the maximal algebra $V$ is such that $|\psi\rangle\langle\psi|\in P(V)$, it follows that: 
\be
\delta^o(|\psi\rangle\langle\psi|)_V=|\psi\rangle\langle\psi|
\ee
This also holds for any sub-algebra of $V$ containing $|\psi\rangle\langle\psi|$, i.e. $V_{\hat{P}_2, \hat{P}_i}$ for $i=\{1,3,4\}$ and $V_{\hat{P}_2}$. \\Instead, for the algebras $V_{\hat{P}_i,\hat{P}_j}$, where $i, j\in\{1,3,4\}$, we have
\be
\delta^o(|\psi\rangle\langle\psi|)_{V_{\hat{P}_i,\hat{P}_j}}=\hat{P}_2+\hat{P}_k\;\; \text{for } k\neq i \neq j
\ee
On the other hand, for contexts $V_{\hat{P}_i}$ for $i\in\{1,3,4\}$ we have
\be
\delta^o(|\psi\rangle\langle\psi|)_{V_{\hat{P}_i}}=\hat{P}_2+\hat{P}_k+\hat{P}_j\;\; \text{for } k\neq i \neq j
\ee
\item\emph{Other maximal algebras which contain $|\psi\rangle\langle\psi|$ and their sub-algebras}.

Let us consider the 4 pairwise orthogonal operators $(\hat{Q}_1,\hat{P}_2, \hat{Q}_3,\hat{Q}_4)$, such that \\
$V^{'}:=lin_{\Cl}( \hat{Q}_1,\hat{P}_2, \hat{Q}_3,\hat{Q}_4)\neq V$. For these contexts we obtain
\ba
\delta^o(|\psi\rangle\langle\psi|)_{V^{'}}&=&\hat{P}_2\\
\delta^o(|\psi\rangle\langle\psi|)_{V_{\hat{Q_i}}}&=&\hat{P}_2+\hat{Q}_j+\hat{Q}_k\\
\delta^o(|\psi\rangle\langle\psi|)_{V_{\hat{Q_i}.\hat{Q}_j}}&=&\hat{P}_2+\hat{Q}_k
\ea
\item\emph{Contexts which contain a projection operator which is implied by $|\psi\rangle\langle\psi|$}.\\\\
If $V^{''}$ contains $\hat{Q}\geq|\psi\rangle\langle\psi|$ then 
\be
\delta^o(|\psi\rangle\langle\psi|)_{V^{'}}=\hat{Q}
\ee
\item\emph{Contexts which neither contain $|\psi\rangle\langle\psi|$ or a projection operator implied by it}.\\\\
\be
\delta^o(|\psi\rangle\langle\psi|)_{V^{'}}=\hat{1}
\ee
\end{enumerate}
We now would like to define the $\ps{\w}^{\ket\psi}$-morphisms. This is left as an exercise.

\chapter{Lecture 11}
This is a brief overview of the tight link between topos theory and logic. In particular it will be showed that to each topos there is associated a language with associated logic, but also the revers is true, given a language one can defined a corresponding topos. Given this tight connection,  it is also possible to view a theory of physics, as expressed within the mathematical formulation of topos theory, as a representation, in a topos, of an abstract language.

\section{Topos and Logic}
In this section we will try to describe the deep connection between topos and logic. The nature and scope of such a connection is deep and wide, so we will not be able to expose it in its full details. We will, however, try to give a general account of this intimate connection and, where possible, try to describe such a connection with explicit examples in physics.

We first of all need the definition of a language.
\subsection{First Order Languages $\l$}
A language, in its most raw definition, comprises a collection of atomic variables, and a collection of primitive operations called logical connectives, whose role is to combine together such primitive variables transforming them into formulas or sentences. Moreover, in order to reason with a given language one also requires rules of inference, i.e. rules which allow you to generate other valid sentences from the given ones.

The semantics or meaning of the logical connectives, however, is not given by the logical connectives themselves but it is defined through a so called evaluation map, which is a map from the set of atomic variables and sentences to a set of truth values.  Such a map enables one to determine when a formula is true and, thus, defines its semantics/meaning.

In this perspective it turns out that the meaning of the logical connectives is given in terms of some set of objects which represent the truth values. The logic that a given language will exhibit will depend on what the set of truth values is considered to be. In fact, the above is a very abstract characterisation of what a language is. To actually use it as a deductive system of reasoning one needs to define a mathematical context in which to represent this abstract language. In this way the elementary and compound propositions will be represented by certain mathematical objects, and the set of truth values will itself be identified with an algebra.

For example, in standard classical logic, the mathematical context used is $\mathbf{Sets}$ and the algebra of truth values is the Boolean algebras of subsets of a given set. However, as we will see, in a general topos the internal logic/algebra will not be Boolean but will be a generalisation of it, i.e. a Heyting algebra. We will explain, later on, the implications of this fact.

In order to get a better understanding of what has been said above we will start with a very simple language called propositional language $P(\l)$.
\subsection{Propositional Language}
The propositional language $P(\l)$ contains a set of symbols and a set of formation rules.\\
\textbf{Symbols of $P(\l)$}
\begin{itemize}
\item [i)] An infinite list of symbols $\alpha_0, \alpha_1, \alpha_2\cdots$ called \emph{primitive propositions}. 
\item [ii)] A set of symbols $\neg, \vee, \wedge, \Rightarrow$ which for now have no explicit meaning.
\item[iii)] Brakets $), ($.
\end{itemize} 
\textbf{Formation Rules}
\begin{itemize}
\item [i)] Each primitive proposition $\alpha_i\in P(\l)$ is a sentence.
\item[ii)] If $\alpha$ is a sentence, then so is $\neg\alpha$.
\item[iii)] If $\alpha_1$ and $\alpha_2$ are sentences, then so are $\alpha_1\wedge\alpha_2$, $\alpha_1\vee\alpha_2$ and $\alpha_1\Rightarrow \alpha_2$. 
\end{itemize}   
Note also that $P(\l)$ does not contain the quantifiers $\forall$ and $\exists$. This is because it is only a propositional language. To account for quantifiers one has to go to more complicated languages called higher-order languages, which will be described later.\\
The inference rule present in $P(\l)$ is the modus ponens (the Ôrule of detachmentÕ) which states that from $\alpha_i$ and $\alpha_i\Rightarrow \alpha_j$ the sentence $\alpha_j$ may be derived. Symbolically this is written as 
\be
\frac{\alpha_i, \alpha_i\Rightarrow \alpha_j}{\alpha_i}
\ee
We will see, later on, what exactly the above expression means.

In order to use the language $P(\l)$ one needs to represent it in a mathematical context. The choice of such context will depend on what type of system we want to reason about. For now we will consider a classical system, thus the mathematical context in which to represent the language $P(\l)$ will be \Sets. In $\mathbf{Sets}$ the truth object (object in which the truth values lie) will be the Boolean set $\{0,1\}$, thus the truth values will undergo a Boolean algebra. This, in turn, implies that the logic of the language $P(\l)$, as represented in $\mathbf{Sets}$, will be Boolean. 

The rigorous definition of a representation of the language $P(\l)$ in a mathematical context is a map $\phi$ from the set of primitive propositions  to elements in the algebra in question (in this case a Boolean algebra); $\alpha\rightarrow \phi(\alpha)$. The specification of the algebra, as we will see, will depend on what type of theory we are considering, i.e. classical or quantum. 

In the example (classical system) above the propositions are represented in the Boolean algebra of all (Borel) subsets of the classical state space (how this is done and why will be explained later on, for now we will just consider this statement as given)

Now that we have a representation of the abstract language we can also define the semantics of this language as follows:
\ba\label{ali:log}
\phi(\alpha_i\vee\alpha_j)&:=&\phi(\alpha_i)\vee\phi(\alpha_j)\\
\phi(\alpha_i\wedge\alpha_j)&:=&\phi(\alpha_i)\wedge\phi(\alpha_j)\nonumber\\
\phi(\alpha_i\Rightarrow\alpha_j)&:=&\phi(\alpha_i)\Rightarrow\phi(\alpha_j)\nonumber\\
\phi(\neg\alpha_i)&:=&\neg(\phi(\alpha_i))\nonumber
\ea
where, on the left hand side, the symbols $\{\neg, \wedge, \vee, \Rightarrow\}$ are elements of the language $P(\l)$, while on the right hand side they are the logical connectives in algebra, in which the representation takes place. It is in such an algebra that the logical connectives acquire meaning.\\
For the classical case, since the algebra of representations is the Boolean algebra of subsets, the logical connectives on the right hand side of \ref{ali:log} are defined in terms of set theoretic operations. In particular, we have the following associations:
\ba
\phi(\alpha_i)\vee\phi(\alpha_j)&:=& \phi(\alpha_i)\cup\phi(\alpha_j)\\
\phi(\alpha_i)\wedge\phi(\alpha_j)&:=&\phi(\alpha_i)\cap\phi(\alpha_j)\nonumber\\
\neg(\phi(\alpha_i))&:= &\phi(\alpha_i)^c\nonumber\\
\phi(\alpha_i)\Rightarrow\pi(\alpha_j)&:= &\phi(\alpha_i)^c \cup\phi(\alpha_j)
\ea

So far we have seen how logical connectives are represented in the topos $\mathbf{Sets}$ (since we have been considering the classical case). However, it is possible to give a general definition of logical connectives in terms of arrows. Such a definition would then be valid for any topos. To retrieve the logical connectives for the classical case, in which the topos is $\mathbf{Sets}$, we then simply replace, in the definitions that will follow, the general truth object $\Omega$ with the Boolean algebra $\{0,1\}=2$.

The way in which logical connectives are defined in a general topos is as follows:
\begin{itemize}
\item
\textbf{Negation}\\
We will now describe how to represent negation as an arrow in a given topos $\tau$. Let us assume that the $\tau$-arrow representing the value true is $\top:1\rightarrow \Omega$, which is the arrow used in the definition of the sub-object classifier. Given such an arrow \emph{true}, negation is identified with the unique arrow $\neg:\Omega\rightarrow \Omega $, such that the following diagram is a pullback
\[\xymatrix{
1\ar[rr]^{\perp}\ar[dd]&&\Omega\ar[dd]^{\neg}\\
&&\\
1\ar[rr]_{\top}&&\Omega\\
}\]
Where $\perp$ is the topos analogue of the arrow \textit{false} in $\mathbf{Sets}$, i.e.
$\perp$ is the character of $!_1:0\rightarrow1$
\[\xymatrix{
0\ar[rr]^{!_1}\ar[dd]_{!_1}&&1\ar[dd]^{\perp}\\
&&\\
1\ar[rr]_{\top}&&\Omega\\
}\]

\item
\textbf{Conjunction}\\
Conjunction is identified with the following arrow: 
\begin{equation*}
\cap:\Omega\times\Omega\rightarrow\Omega
\end{equation*}
which is the character of the product arrow 
$\langle \top,\top\rangle:1\rightarrow\Omega\times\Omega$,
such that the following diagram commutes
\[\xymatrix{
1\ar[rr]^{id_1}\ar[dd]_{\langle \top,\top\rangle}&&1\ar[dd]^\top\\
&&\\
\Omega\times\Omega\ar[rr]_{\cap}&&\Omega\\
}\]
where $\langle \top,\top\rangle$ is defined as follows:
\[\xymatrix{
&&1\ar[rr]^{\top}&&\Omega\\
1\ar[rru]^{I_1}\ar[rrrr]^{\langle \top,\top\rangle}\ar[rrd]_{I_1}&&&&\Omega\times\Omega\ar[u]_{pr_1}\ar[d]^{pr_2}\\
&&1\ar[rr]_{\top}&&\Omega\\
}\]
\item 
\textbf{Disjunction}\\
Disjunction is identified with the arrow
\be
\cup:\Omega\times\Omega\rightarrow \Omega
\ee
which is the character of the image of the arrow 
\be
[\langle \top, 1_{\Omega}\rangle,\langle1_{\Omega},\top\rangle]:\Omega+\Omega\rightarrow \Omega\times\Omega
\ee
such that the following diagram commutes
\[\xymatrix{
\Omega+\Omega\ar[rr]^{[\langle \top, 1_{\Omega}\rangle,\langle1_{\Omega},\top\rangle]}\ar[dd]_{!_{\Omega+\Omega}}&& \Omega\times\Omega\ar[dd]^{\cup}\\
&&\\
1\ar[rr]^{\top}&&\Omega\\
}\]
\item
\textbf{Implication}\\
Implication is identified with the arrow
\be
\Rightarrow:\Omega\times\Omega\rightarrow \Omega
\ee
which is the character of the equaliser map
\be
e:\leq\rightarrow\Omega\times\Omega
\ee
such that the following diagram commutes
\[\xymatrix{
\leq\ar[rr]^{e}\ar[dd]_{!_{e}}&& \Omega\times\Omega\ar[dd]^{\Rightarrow}\\
&&\\
1\ar[rr]^{\top}&&\Omega\\
}\]
where $\leq:=\{\langle x,y\rangle|x\leq y \text{ in }\Omega\}$.\\
Now the above arrow $e$ is actually the equaliser of
 \[\xymatrix{
\Omega\times\Omega\ar@<3pt>[rr]^{\;\;\;\;\;\;\; \cap} \ar@<-3pt>[rr]_{\;\;\;\;\;\;\; pr_1}&&\Omega\\
}\]
i.e. $\cap\circ e=pr_1\circ e$.
\end{itemize}
In order to complete the definition of a propositional language in a given topos, we also need to define the valuation functions (which  gives us the semantics) in terms of arrows in that topos.\\
We recall from the definition of the sub-object classifier that a truth value in a general topos $\tau$ is given by a map $1\rightarrow \Omega$ (in $\Sets$ we have $1\rightarrow \{0,1\}=2=\Omega$). The collection of such functions $\tau(1,\Omega)$ represents the collection of all truth values. Thus, a valuation map in a general topos is defined to be the map $V:\{\pi(\alpha_i)\}\rightarrow \tau(1,\Omega)$.\\
It is, then, easy to show that the following equalities hold:
\ba
V(\neg(\pi(\alpha_i)))&=&\neg\circ V(\pi(\alpha_i))\\
V(\pi(\alpha_i)\vee \pi(\alpha_j))&=&\vee\circ \langle V(\pi(\alpha_i)), V(\pi(\alpha_j))\rangle\\
V(\pi(\alpha_i)\wedge \pi(\alpha_j))&=&\wedge\circ \langle V(\pi(\alpha_i)), V(\pi(\alpha_j))\rangle\\
V(\pi(\alpha_i)\Rightarrow \pi(\alpha_j))&=&\Rightarrow\circ \langle V(\pi(\alpha_i)), V(\pi(\alpha_j))\rangle
\ea
\subsubsection{Example In Classical Physics}
We have stated above that classical physics uses the topos $\mathbf{Sets}$. We now want to represent in $\Sets$ the propositional language $P(\l)$ as defined for a classical system $S$. Since such a language, in this case, will be used to talk about $S$, we will denote it $P\l(S)$ so as to make it explicit that we are talking about $S$. Now, since $S$ is a (classical) physical system, the standard propositions which it will contain will be of the form $A\in \Delta$ meaning `` the quantity $A$ which represents some physical observable, has value in a set $\Delta$". These are normally the types of propositions we deal with in classical physics and in physics in general.\\
We now define the representation map from this language to $\mathbf{Sets}$ as follows:
\ba
\pi_{cl}:P\l(S)&\rightarrow& \Sets\\
A\in \Delta&\mapsto&\{s\in S|\tilde{A}(s)\in \Delta\}=\tilde{A}^{-1}(\Delta)
\ea
where $S$ is the classical state space and $\tilde{A}:S\rightarrow \Rl $ is the map from the state space to the reals which identifies the physical quantity $A$.\\
We now define the truth values of such represented propositions. Normally, such truth values are state dependent, i.e. they depend on the state with respect to which we are preforming the evaluation. In classical physics states are simply identified with elements $s$ of the state space $S$. Thus, for all $s\in S$ we define the truth value of the proposition $\tilde{A}^{-1}(\Delta)$ as follows:
\be
v( A\in \Delta; s )=\begin{cases} 1\text{ iff } s\in \tilde{A}^{-1}(\Delta)\\
0\text{ otherwise }
\end{cases}
\ee
Thus the truth values lie in the Boolean algebra $\uom=\{0,1\}$.

It is interesting to note that the application of the propositional language $P(\l)$ for quantum theory fails. This is because in quantum theory propositions are identified with projection operators, thus the representation map would be 
\ba
\pi_q:\{\alpha_i\}&\rightarrow& P(\mh)\\
A\in\Delta&\mapsto&\pi_q(A\in\Delta):= \hat{E}[A\in \Delta]
\ea
where $ \hat{E}[A\in \Delta]$ is the projection operator which projects onto the subset $\Delta$ of the spectrum of $\hat{A}$. \\
Now the problem with this construction is that the set of all projection operators undergoes a logic which is not distributive, but the logic of the propositional language is distributive. Therefore, such a representation will not work. We will see later on how to fix this problem. However, to arrive at the solution we need to introduce a higher order language which we will examine in the next section.
\subsection{The Higher Order Type Language $\l$}
We now go a step higher and define a first order type language $\l$.  Such a language consists of a set of symbols and terms.\\
\textbf{Symbols}
\begin{enumerate}
\item A collection of ``sorts " or ``types". If $T_1,T_2,\cdots ,T_n$, $n\geq 1$, are type symbols, then so is $T_1\times T_2\times \cdots \times T_n$. If $n=0 $ then $T_1\times T_2\times \cdots \times T_n=1$.
\item If $T$ is a type symbol, then so is $PT$.
\item Given any type $T$ there are a countable set of variables of type $T$.
\item There is a special symbol $*$.
\item A set of function symbols for each pair of type symbols, together with a map which assigns to each such functions its type. This assignment consists of a finite non-empty list of types. Thus, for example, if we have the pair of type symbols $(T_1,T_2)$,  the associated set of function symbols would be $F_{\l}(T_1,T_2)$. A given $f\in F_{\l}(T_1,T_2)$ has type $T_1, T_2$, this is indicated by writing $f : T_1 \rightarrow T_2$.
\item A set of relation symbols $R_i$ together with a map which assigns the type of the arguments of the relation. This consists of a list of types. Thus, for example, a relation taking an argument $x_1\in T_1$ of type $T_1$ to an argument $x_2\in T_2$ of type $T_2$ is denoted as $R=R(x_1,x_2)\subseteq T_1\times T_2$.
\end{enumerate}
\textbf{Terms}
\begin{enumerate}
\item The variables of type $T$ are terms of type $T$, $\forall T$.
\item The symbol $*$ is a term of type $1$. 
\item A term of type $\Omega$ is called a formula. If the formula has no free variables then we call it a sentence.
\item Given a function symbol $f: T_1 \rightarrow T_2$ and $t$ a term of type $T_1$, then
$f(t)$ is term of type $T_2$. 
\item	Given $t_1, t_2,\cdots , t_n$ which are terms of type $T_1, T_2,\cdots,T_n$ respectively, then $\langle t_1,t_2,\cdots ,t_n\rangle$ is a term of type $T_1 \times T_2 \times \cdots\times T_n$. 
\item If $x$ is a term of type $T_1\times T_2\times \cdots \times T_n$, then for $1\leq i\leq n$, $t_i$ is a term of type $T_i$.
\item If $\omega$ is a term of type $\Omega$ and $\alpha$ is a variable of type $T$, then $\{\alpha|\omega\} $ is a term of type $PT$. 
\item If $x_1,x_2$ are terms of the same type, then $x_1 = x_2$ is a term of type $\Omega$.
\item  If $x_1, x_2$ are terms of type $T$ and $PT$ respectively, then $x_1 \in x_2$ is a term of type $\Omega$.
\end{enumerate}
The entire set of formulas in the language $\l$ are defined recursively through repeated applications of formation rules, which are the analogue of the standard logical connectives. In particular, we have \emph{atomic formulas} and  \emph{composite formulas}
The former are:
\begin{enumerate}
\item The terms of relation.
\item Equality terms defined above.
\item Truth $\top$ is an atomic formula with empty set of free variables.
\item False $\perp$ is an atomic formula with empty set of free variables.
\end{enumerate}
We can now built more complicated formulas through the use of the logical connectives $\vee$, $\wedge$, $\Rightarrow$ and $\neg$.
These are the \emph{composite formulas}:
\begin{enumerate}
\item Given two formulas $\alpha$ and $\beta$ then $\alpha\vee \beta$  is a formula such that, the set of free variables is defined to be the union of the free variables in $\alpha$ and $\beta$.
\item Given two formulas $\alpha$ and $\beta$ then $\alpha\wedge \beta$  is a formula such that, the set of free variables is defined to be the union of the free variables in $\alpha$ and $\beta$.
\item Given a formula $\alpha$ its negation $\neg\alpha$ is still a formula with the same amount of free variables.
\item Given two formulas  $\alpha$ and $\beta$, then $\alpha\Rightarrow\beta$ is a formula with free variables given by the union of the free variables in $\alpha$ and $\beta$.
\end{enumerate}
It is interesting to note that the logical operations just defined can actually be expressed in terms of the primitive symbols as follows:
\begin{enumerate}
\item \emph{true}$:=*=*$.
\item $\alpha\wedge\beta:=\langle\alpha, \beta\rangle=\langle\text{ \emph{true}},\text{ \emph{true}}\rangle=\langle*=*,*=*\rangle$.
\item $\alpha\Leftrightarrow\beta:=\alpha=\beta$.
\item $\alpha\Rightarrow \beta:=\Big((\alpha\wedge\beta)\Leftrightarrow \alpha\Big):=\langle\alpha, \beta\rangle=\langle\text{ \emph{true}},\text{ \emph{true}}\rangle=\alpha$.
\item $\forall x\alpha:=\{x:\alpha\}=\{x:\text{true}\}$.
\item \emph{false}$:=\forall ww:=\{w: w\}=\{w: \text{true}\}$.
\item $\neg \alpha:=\alpha\Rightarrow\text{false}$.
\item $\alpha\wedge\beta:=\forall w[(\alpha\Rightarrow w\wedge\beta\Rightarrow w)\Rightarrow]$.
\item $\exists x \alpha:=\forall w[\forall x(\alpha\Rightarrow w)\Rightarrow w]$.
\end{enumerate}
In the above the notation $\{x:y\}$ indicates the set of all $x$, such that $y$.

\subsection{Representation of $\l$ in a Topos}
We now want to show how a representation of the first order language $\l$ takes place in a topos. The main idea is that of identifying each of the terms in $\l$ with arrows in a topos. In particular we have:
\begin{Definition}
Given a topos $\tau$ the interpretation/representation (M) of the language $\l$ in $\tau$ consists of the following associations:
\begin{enumerate}
\item To each type $T\in \l$ an object $T^{\tau_M}\in\tau$.
\item To each relation symbol $R\subseteq T_1\times T_2\times \cdots\times T_n$ a sub-object $R^{\tau_M}\subseteq T^{\tau_M}_1\times T^{\tau_M}_2\times \cdots\times T^{\tau_M}_n$.
\item To each function symbol $f:T_1\times T_2\times \cdots\times T_n\rightarrow X$ a $\tau$-arrow $f^{\tau_M}:T^{\tau_M}_1\times T^{\tau_M}_2\times \cdots\times T^{\tau_M}_n\rightarrow X$.
\item To each constant $c$ of type $T$ a $\tau$-arrow $c:1^{\tau_M}\rightarrow T^{\tau_M}$.
\item To each variabe $x$ of type $T$ a $\tau$-arrow $x:T^{\tau_M}\rightarrow T^{\tau_M}$.
\item The symbol $\Omega$ is represented by the sub-object classifier $\Omega^{\tau_M}$.
\item The symbol $1$ is represented by the terminal object $1^{\tau_M}$.
\end{enumerate}
\end{Definition}
Now that we understand how the basic symbols of the abstract language $\l$ are represented in a topos we can proceed to understand also how the various terms and formulas are represented. Needless to say these are all defined in recursive manner.\\
Given a term $t(x_1, x_2,\cdots, x_n)$ of type $Y$ with free variables $x_i$ of type $T_i$, i.e. $t(x_1, x_2,\cdots, x_n):T_1\times\cdots \times T_n\rightarrow Y$, then the representative in a topos of this term would be a $\tau$-map
\be
t(x_1, x_2,\cdots, x_n):T^{\tau_M}_1\times\cdots \times T^{\tau_M}_n\rightarrow Y^{\tau_M}
\ee
Formulas in the language are interpreted with terms of type $\Omega$. In the topos $\tau$ this object $\Omega$ is identified with the sub-object classifier $\Omega$.\\
In particular, a term of type $\Omega$ of the form $\phi(t_1, t_2,\cdots t_n)$ with free variables $t_i$ of type $T_i$ is represented by an arrow $$\phi(t_1, t_2,\cdots t_n)^{\tau_M} :T^{\tau_M}_1\times\cdots \times T^{\tau_M}_n\rightarrow \Omega^{\tau_M}$$
On the other hand, a term, $\phi$ of type $\Omega$ with no free variables is represented by a global element $\phi:1^{\tau_M}\rightarrow \Omega^{\tau_M}$ As we will see these arrows will represent the truth values.\\
The reason that in a topos formulas are identified with arrows with codomain $\Omega$ rests in the fact that sub-objects, of a given object in a topos, are in 1:2:1 correspondence with maps from that object to the sub-object classifier. In fact, by construction, formulas single out sub-objects of a given object in terms of a particular relation which they satisfy, i.e. they define elements of $Sub(X)$. Such sub-objects are in 1:2:1 correspondence with maps $X\rightarrow \Omega$ .\\
In particular, given a formula $\phi(x_1,\cdots, x_n)$ with free variables $x_i$ of type $T_i$, which in the language $\l$ is associated with the subset $\{x_i|\phi\}\subseteq \prod_i T_i$, we obtain the topos representation 
\be
\{(x_1,\cdots, x_n)|\phi\}^{\tau_M}\subseteq T^{\tau_M}_1\times\cdots\times T^{\tau_M}_n
\ee
which, through the \emph{Omega Axiom}, gets identified with the map
\be
\{(x_1,\cdots, x_n)|\phi\}^{\tau_M}\rightarrow T^{\tau_M}_1\times\cdots\times T^{\tau_M}_n\xrightarrow{\chi_{\{(x_1,\cdots, x_n)|\phi\}^{\tau_M}}}\Omega^{\tau_M}
\ee
To illustrate this correspondence let us consider the formula stating that two terms are the same, i.e. $t(x_1, x_2,\cdots, x_n)=t^{'}(x_1, x_2,\cdots, x_n)$. The representation of such a formula in a topos $\tau$ is identified with the equalizer of the two $\tau$-arrows representing the terms $t(x_1, x_2,\cdots, x_n)$ and $t^{'}(x_1, x_2,\cdots, x_n)$. In particular we have
\[\xymatrix{ 
\{x_1, x_2,\cdots, x_n|t=t^{'}\}^{\tau_M} \ar@{{>}->}[r]
& T^{\tau_M}_1\times\cdots\times T^{\tau_M}_n\ar@<3pt>[r]^{\;\;\;\;\;\;\; t^{\tau_M}} \ar@<-3pt>[r]_{\;\;\;\;\;\;\;t^{'{\tau_M}}}
& Y^{\tau_M} 
}\]
Instead, if we consider a relation $R(t_1,\cdots t_n)$ of terms $t_i$ of type $Y_i$ with variables $x_j$ of type $T_j$, then the formula pertaining this relation $\{x_1\cdots x_n|R(t_1,\cdots t_n)\}$ is represented in $\tau$ by pulling back the sub-object $R^{\tau_M}\subseteq Y^{\tau_M}_1\times\cdots \times Y_n^{\tau_M}$ (representing the relation $R(t_1,\cdots t_n)$) along the term arrow $\langle t_1^{\tau_M}, \cdots t_n^{\tau_M}\rangle:T_1^{\tau_M}\times \cdots \times T^{\tau_M}_n\rightarrow Y^{\tau_M}_1\times\cdots \times Y_n^{\tau_M}$:
\[
\xymatrix{ 
\{x_1, x_2,\cdots, x_n|R(t_1,\cdots t_n)\}^{\tau_M}\ar[dd] \ar[rr]&&R^{\tau_M}\ar[dd]\\
&&\\
T^{\tau_M}_1\times\cdots\times T^{\tau_M}_n\ar[rr]^{\langle t_1^{\tau_M}, \cdots t_n^{\tau_M}\rangle}&&Y^{\tau_M}_1\times\cdots \times Y_n^{\tau_M}
}\]
The atomic formulas meaning truth and false ($\top$ and $\perp$ respectively) will be represented in a topos $\tau$ by the greatest and lowest elements of the Heyting algebra of the sub-objects of any object in the topos. Thus, for example, we have that 
\ba
\{x_1\cdots x_n|\top\}^{\tau_M}&=&T_1^{\tau_M}\times T_2^{\tau_M}\times \cdots\times T_n^{\tau_M}\\
\{x_1\cdots x_n|\perp\}^{\tau_M}&=&\emptyset^{\tau_M}
\ea
So far we have established how to define formulas in a topos. In the following, we will delineate how to represent logical connectives between formulas in a topos. In particular, given a collection of formulas represented as sub-objects of the type object $\prod_iT_i$, the logical connectives between these are represented by the corresponding operations in the Heyting algebra of sub-objects of the object $\prod_iT^{\tau_M}_i$ in $\tau$. As before, since we are dealing with sub-object we can also represent the logical connective with $\tau$-arrow with codomain $\Omega$ as follows:\\
Consider two formulas $\phi$, $\rho$, of type $\Omega$ with free variables $x_1$ and $x_2$ of type $T_1^{\tau_M}$ and $T^{\tau_M}_2$, respectively. The conjunction $\phi\wedge \rho$ is the map
\be
\phi\wedge \rho:T_1^{\tau_M}\times T_2^{\tau_M}\xrightarrow{\langle \phi, \rho\rangle} \Omega^{\tau_M}\times \Omega^{\tau_M}\xrightarrow{\wedge} \Omega^{\tau_M}
\ee
Similarly, we have
\ba
\phi\vee\rho&:&T_1^{\tau_M}\times T_2^{\tau_M}\xrightarrow{\langle \phi, \rho\rangle} \Omega^{\tau_M}\times \Omega^{\tau_M}\xrightarrow{\vee} \Omega^{\tau_M}\\
\phi\Rightarrow\rho&:&T_1^{\tau_M}\times T_2^{\tau_M}\xrightarrow{\langle \phi, \rho\rangle} \Omega^{\tau_M}\times \Omega^{\tau_M}\xrightarrow{\Rightarrow} \Omega^{\tau_M}\\
\neg\rho&:&T_1^{\tau_M}\times T_2^{\tau_M}\xrightarrow{ \rho} \Omega^{\tau_M}\xrightarrow{\neg} \Omega^{\tau_M}
\ea
Given a language $\l$ a theory $\mathcal{T}$ in $\l$ is a set of formulas which are called the axioms of $\mathcal{T}$. A model of such a theory is then a representation $M$ in which all the axioms of $\mathcal{T}$ are valid. Such axioms are then represented by the arrow $true : 1\rightarrow \Omega$.\\
An example of this is given by the theory of abelian groups which can be seen as model of a theory in a given language as follows.\\ The language required will only contain one type of elements $G$, no relations, two function symbols $$+:G\times G\rightarrow G$$ $$-:G\rightarrow G$$ and a constant $0$. An interpretation of this language, which will lead us to the theory of groups, will be defined in the topos $\Sets$. Such a representation of $G$ will be identified as a set $G^M$, on which the function symbols 
\ba
+:G^M\times G^M&\rightarrow& G^M\\
\langle g_1, g_2\rangle&\mapsto&g_1g_2
\ea
and 
\ba
-:G^M&\rightarrow& G^M\\
g&\mapsto&-g
\ea
act upon. The constant $0$ will be an element of the set $0^M\in G^M$. Such an interpretation will be a model for the theory of abelian groups if the function symbols satisfy the axioms of abelian groups, i.e. the following hold
\ba
(g_1+g_2)+g_3&=&g_1+(g_2+g_3)\\
g_1+g_2&=&g_2+g_1\\
g_1+0&=&g_1\\
g_1+(-g_1)&=&0
\ea
Given two models $M$ and $M^{'}$ of a theory $\mathcal{T}$ in a language $\mathcal{L}$, we say that these two models are homomorphic if there is a homomorphism of the respective interpretations of the model, i.e. for each symbol type $X$ in $\l$, the following maps are homomorphisms:
\be
H_X:X^M\rightarrow X^{M^{'}}
\ee
where $X^M$ and $X^{M^{'}}$ are the representations of the symbol type $X$ of $\mathcal{L}$ in the representation $M$ and $M{'}$, respectively.\\
Such a homomorphism has to respect every relation symbols, function symbols and constants.\\
In the example of abelian groups, model homomorphisms would simply be group homomorphisms.\\

The definition of homomorphic representations gives rise to a category $\mathcal{I}$, whose objects are all possible representations of a given language $\l$ in a topos $\tau$, and whose morphisms are the above mentioned homomorphisms of representations. Given such a category, each theory $\mathcal{T}$ gives rise to a full subcategory of $\mathcal{I}$ called $Mod(\mathcal{T}, \tau)$, whose objects are models of the theory $\mathcal{T}$ in the topos $\tau$, and whose morphisms are homomorphisms of models.\\

In this section we have seen how, given a first order type language $\l$ it is possible to represent such a language in a topos $\tau$. However, interestingly enough the converse is also true, namely, given a topos $\tau$, it has associated to it an internal first order language $\l$, which enables one to reason about $\tau$ in a set theoretic way, i.e. using the notion of elements.  
Thus, we have:
\begin{Definition}
Given a topos $\tau$, its internal language $\l(\tau)$ has as type symbol $^\lefthalfcap  A^\righthalfcap$ for each object $A\in \tau$. A function symbol $^\lefthalfcap f^\righthalfcap:\;^\lefthalfcap A^\righthalfcap_1\times \;^\lefthalfcap A^\righthalfcap_2\times\cdots\times\;^\lefthalfcap  A^\righthalfcap_n\rightarrow \;^\lefthalfcap B^\righthalfcap$ for each map $f:A_1\times A_2\times\cdots\times A_n\rightarrow B$ in $\tau$. And a relation $^\lefthalfcap R^\righthalfcap\subseteq  \;^\lefthalfcap A^\righthalfcap_1\times\; ^\lefthalfcap A^\righthalfcap_2\times\cdots\times \;^\lefthalfcap A^\righthalfcap_n$ for each sub-object $R\subseteq  A_1\times A_2\times\cdots\times A_n$ in $\tau$.
\end{Definition}

\subsection{ A Theory of Physics in the Language $\l$ Represented in a Topos $\tau$}
We will now try to construct a physics theory for a system $S$. The construction of such a theory is defined by an interplay between a language $\l(S)$, associated to the system $S$, a topos and the representation of the theory in the topos. In particular we can say that a theory of the system $S$ is defined by choosing a representation/model, $M$, of the language $\l(S)$ in a topos $\tau_M$. The choice of both topos and representation depend on the theory-type being used, i.e. if it is classical or quantum theory.\\
As we have seen above, since each topos $\tau$ has an internal language $\l(\tau)$ associated to it, constructing a theory of physics consists in translating the language, $\l(S)$, of the system in the local language $\l(\tau)$ of the topos.\\
For now we will not specify what the theory type is, but we will analyse what ground type terms and formulas would be present in a first order language $\l$, which wants to describe and talk about a physical system. However, if the theory type one is utilising is classical physics, than the topos in which to represent your model will be $\mathbf{Sets}$. For a quantum theory, as will be explained in details later on, the topos utilised will be of the form $\Sets^{\c^{op}}$ for an appropriate category $\c$. \\\\
The minimum set of type symbols and formulas, which are needed for a language to be able to talk about a physical system $S$, are the following:
\begin{enumerate}
\item The state space object and the quantity value object are represented in $\l(S)$ by the ground type symbols $\Sigma$ and $\mathcal{R}$. Given a representation $M$ of $\l$ in a topos $\tau$, these objects are represented by the objects $\Sigma_M$ and $\mathcal{R}_M$ in $\tau$. 
\item Given a physical quantity $A$, it is standard practice to represent such a quantity in terms of a function from the state space to the quantity value object. Thus, we require $\l(S)$ to contain the set function symbols $F_{\l(S)}(\Sigma, \mathcal{R})$ of signature $\Sigma\rightarrow \mathcal{R}$, such that the physical quantity is $A:\Sigma\rightarrow \mathcal{R}$. Given a topos $\tau$, these physical quantities are defined in terms of $\tau$-arrows between the $\tau$-objects $\Sigma_M$ and $\mathcal{R}_M$.\\
We will generally require the representation to be faithful, i.e. the map $A\rightarrow A_M$ is one-to-one.

\item We would like to have values of physical quantities. These are defined in $\l(S)$ as terms of type $\mathcal{R}$ with free variables $s$ of type $\Sigma$, i.e. $A(s)$, where $A:\Sigma\rightarrow \mathcal{R}\in F_{\l(S)}(\Sigma, \mathcal{R})$.
Such terms are represented in the topos by terms of type $\mathcal{R}_M$, i.e. $A_M:\Sigma_M\rightarrow \mathcal{R}_M$.

\item We generally would like to talk about values of physical quantities for a given state of the system, thus we require the presence of formulas of the type $A(s)\in \Delta$, where $\Delta$ is a variable of type $P\mathcal{R}$\footnote{By $P\mathcal{R}$ we mean the power set (collection of all subsets) of $\mathcal{R}$.} and $S$ is a variable of type $\Sigma$, which represents a state (being an element of the state space).\\
Since $A(s)\in \Delta$ is a formula, i.e. a term of type $\Omega$ it is represented in a topos $\tau$ by an arrow 
$$ [ A(s)\in \Delta]:\Sigma\times P\mathcal{R}\rightarrow \Omega$$ Such an arrow gets factored as follows:
\be
[ A(s)\in \Delta]=e_{\mathcal{R}}\circ \langle[A(s)], [\Delta]\rangle
\ee 
where $e_{\mathcal{R}}:\mathcal{R}\times P\mathcal{R}\rightarrow\Omega$ is the evaluation map, $[A(s)]:\Sigma\rightarrow \mathcal{R}$ is the arrow representing the physical quantity $A$ and $\Delta:\mathcal{R}\rightarrow\mathcal{R}$ is simply the identity arrow. Putting the two results together we have
\be
\Sigma\times P\mathcal{R}\xrightarrow{A\times id}\mathcal{R}\times P\mathcal{R}\xrightarrow{e_R}\Omega
\ee
\item We would also like to talk about collections of states of the system with a particular property. Such a collection is represented in terms of sub-objects of the state space, which comprises the states with that particular property in question. Thus we have terms $\{s| A(s) \in \Delta\}$ which are of type $P\Sigma$ with a free variable $\Delta$ of type $P\mathcal{R}$. Such a term is represented in a topos by an arrow 
\be
[\{s| A(s) \in \Delta\}]:P\mathcal{R}\rightarrow P\Sigma
\ee

Using this term of type $P(\Sigma)$ a proposition $A\in\Delta $ can be represented as follows:
\be
[\{s | A(s)\in\Xi \}]\circ [\Delta]:1\rightarrow P(\Sigma)
\ee
\item A formula $w$ with no free variables, which we denoted as a sentence, is a special element of $\Omega$ which is represented in a topos by a global element of $\Omega$, i.e.
\be
[w]:1\rightarrow\Omega
\ee
These, as we will see later on, will represent truth values for propositions about the system.
\item Any axioms added to the language have to be represented by the arrow true $T : 1\rightarrow \Omega$.
\end{enumerate}
\subsection{Deductive System of Reasoning for First Order Logic}
Once we have defined the symbols and formation rules for the first order language $\l$, in order to actually use it as a language that enables us to talk about things, we also require rules of inference. Such rules will allow us to derive true statements from other true statements. \\
In order to describe this better we need to introduce the notion of a sequent. 
\begin{Definition}
Given two formulae $\psi$ and $\phi$ a sequent is an expression $\psi\vdash_{\Vec{x}}\phi$ which indicates that $\phi$ is a logical consequence of $\psi$ in the context $\Vec{x}$\footnote{A context $\Vec{x}$ is a list of distinct variables. When applied to a formula it indicates the fact that, that formula, has free variables only within that context, i.e. a formula in a context.}.
\end{Definition}

What this means is that any assignment of values of the variables in $\Vec{x}$, which makes $\psi$ true will also make $\phi$ true.\\
The deduction system will then be defined as a sequent calculus, i.e. a set of inference rules which will allow us to infer a sequent from other sequents. Symbolically, the rule of inference is written as follows:
\be
\frac{\Gamma}{\psi \vdash_{\Vec{x}}\phi}
\ee
which means that the sequent $\psi \vdash_{\Vec{x}}\phi$ can be inferred by the collection of sequents $\Gamma$. We can also have a double inference as follows:
\[
\infer=[ ]
     {\psi \vdash_{\Vec{x}}\phi }
     {\Gamma}
  \]

This can be read in both directions, thus it means that $\psi \vdash_{\Vec{x}}\phi$ can be inferred by the collection of sequents $\Gamma$, but also that the collection of sequents $\Gamma$ can be inferred by $\psi \vdash_{\Vec{x}}\phi$.  \\
We will now define a list of inference rules. In the following, the symbol $\Gamma$ will represent a collection of sequents, the letters $\gamma, \beta,\alpha$ will represent formulae while the letters $\sigma,\tau,\cdots$ will represent terms of some type and $\alpha\cup\Gamma$ represent the collections of formulas in both $\Gamma$ and the formula $\alpha$.
\begin{itemize}
\item \emph{Thinning}
\[
\infer=[ ]
     {\Gamma \vdash_{\Vec{x}}\alpha }
     {\beta\cap\Gamma\vdash_{\Vec{x}}\alpha}
  \]

\item \emph{Cut}
\[
\infer=[ ]
     {\Gamma\vdash_{\Vec{x}}\beta}
     {\Gamma \vdash_{\Vec{x}}\alpha\;\;,\alpha\cup\Gamma \vdash_{\Vec{x}}\beta}
  \]
For any free variable of $\gamma$ free in $\Gamma$ or $\beta$.
\item \emph{Substitution}
\[
\infer=[ ]
     {\Gamma (x/\sigma)\vdash_{\Vec{x}}\alpha(x/\sigma)}
     {\Gamma\vdash_{\Vec{x}}\alpha}
  \]
  where $\Gamma (x/\sigma)$ indicates the term obtained from $\Gamma$ by substituting $\sigma$ (which is a term of some type) for each occurrence of $x$ and $\sigma$ is free for $x$ in $\Gamma$ and $\alpha$.
  \item\emph{Extentionality}
 \[
\infer=[ ]
     {\Gamma \vdash_{\Vec{x}}\sigma=\rho}
     {\Gamma\vdash_{\Vec{x}}x\in\sigma\Leftrightarrow x\in\rho}
  \] 
  where $x$ is not free in either $\Gamma$, $\sigma$ or $\rho$.
  \item\emph{Equivalence}
  \[
\infer=[ ]
     {\Gamma\vdash_{\Vec{x}}\alpha\Leftrightarrow \beta}
     {\alpha\cup\Gamma \vdash_{\Vec{x}}\beta\;\;\;\beta\cup\Gamma\vdash_{\Vec{x}}\alpha}
  \]  
\item \emph{Finite Conjuction}\\
The rules for finite conjunction consist of the following axioms:
\be
\alpha\vdash_{\Vec{x}}\alpha=\top\;\;, \alpha\wedge \beta\vdash_{\Vec{x}} \beta\;\;,\alpha\wedge\beta\vdash_{\Vec{x}}\alpha
\ee
Note that we have used part of the definition of the logical connective `\emph{if then}'.\\
The rule of inference is
\be
\frac{(\alpha\vdash_{\Vec{x}}\beta)(\alpha\vdash_{\Vec{x}}\gamma)}{\alpha\vdash_{\Vec{x}}\gamma\wedge\beta}
\ee
\begin{proof}
\[\infer[^{(7)}]
     {\alpha\vdash_{\Vec{x}}\gamma\wedge\beta}
     {\alpha\vdash_{\Vec{x}}\beta &
     {\infer[^{(6)}]
        {\alpha\cup\beta\vdash_{\Vec{x}}\alpha\wedge\beta}{\alpha\vdash_{\Vec{x}}\gamma & \infer[^{(5)}]
        {\alpha\cup\beta\vdash_{\Vec{x}}\gamma\wedge\beta}{\infer[^{(3)}]{\beta\vdash_{\Vec{x}}\beta=\text{true}}{}&\infer[^{(4)}]{\gamma\cup\beta=\text{true}\vdash_{\Vec{x}}\gamma\wedge\beta}{\infer[^{(1)}]{\gamma\vdash_{\Vec{x}}\gamma=\text{true}}{}&\infer[^{(2)}]{\gamma=\text{true}\cup\beta=\text{true}\vdash_{\Vec{x}}\gamma\wedge\beta}{}}}}}}
        \]
        \end{proof}
        This proof should be read from top to bottom and consists, as one can see, with a finite collection of sequents called a finite \emph{tree}, in which the bottom vertex represents the \emph{conclusion} of the proof. All the sequents of the proof are correlated to each other in the following way:
        \begin{enumerate}
        \item A sequent belonging to a node\footnote{A node is an inference step:$\frac{\Gamma_1}{\Gamma_2}$.} which has nodes above it is derived by applying a rule of inference to the sequents belonging to the above nodes.
        \item Every top most node is either a basic axiom or a premise of the proof.       
        \end{enumerate}  
        
        In the proof above we have that \[\infer[]{\gamma\vdash_{\Vec{x}}\gamma=\text{true}}{}\] is derived by the thinning axiom, the equivalence axiom and the axioms $\gamma\vdash_{\Vec{x}}\gamma$ and $\vdash_{\Vec{x}}\text{true}$ as follows:
        \begin{proof}
        
         \[
         \infer[^{(4)}]{\gamma\vdash_{\Vec{x}}\gamma=\text{true}}{\infer[^{(3)}]{\gamma\cup\gamma\vdash_{\Vec{x}}\text{true}}{\infer[^{(1)}]{\gamma\vdash_{\Vec{x}}\text{true}}{\vdash_{\Vec{x}}\text{true}}}&\infer[^{(2)}]{\text{true}\cup\gamma\vdash_{\Vec{x}}\gamma}{\infer[]{\gamma\vdash_{\Vec{x}}\gamma}{}}
         }
        \]
        
        \end{proof}
        where the lines $(1)$, $(2)$ and $(3)$ are an application of the thinning axiom, while line $(4)$ is the application of the equivalence axiom where the equivalence $\alpha\Leftrightarrow\beta:=\alpha=\beta$ was used. \\
        Going back to the proof of the \emph{conjunction axiom} the remaining lines are derived as follows:  \\\\
        i) Line $(2)$ is the definition of the logical connective $\wedge$.\\\\
        ii) All the other lines are derived from applications of the \emph{cut axiom}.\\\\
        It should be noted that it is also possible to form a more general version of the \emph{conjunction axiom} by replacing the single sequent $\alpha$ by a collection of sequents $\Gamma$ as follows:
  \[\infer[]{\Gamma \vdash_{\Vec{x}}\beta\wedge\gamma}{\Gamma\vdash_{\Vec{x}}\beta & \Gamma\vdash_{\Vec{x}}\gamma}
  \]
\item\emph{Finite Disjunction}\\
The rules for finite conjunction consist of the following axioms:
\be
\bot\vdash_{\Vec{x}}\alpha\;\;\alpha\vdash_{\Vec{x}}\alpha\vee\beta\;\;\beta\vdash_{\Vec{x}}\alpha\vee\beta
\ee
and the following rule of inference:
\be
\frac{(\alpha\vdash_{\Vec{x}}\gamma)(\beta\vdash_{\Vec{x}}\gamma)}{\alpha\vee\beta\vdash_{\Vec{x}}\gamma}
\ee
whose generalization is
\[\infer[]{\alpha\vee\beta\cup\Gamma\vdash_{\Vec{x}}\gamma}{\alpha\cup\Gamma\vdash_{\Vec{x}}\gamma & \beta\cup\Gamma\vdash_{\Vec{x}}\gamma}
\]
\item\emph{Implication}\\
For implication we have the double inference rule
\[
\infer=[ ]
     {\alpha\vdash_{\Vec{x}}\beta\Rightarrow \gamma }
     {\beta\wedge\alpha\vdash_{\Vec{x}}\gamma}
  \]
  Again the general form of which the above is a specification is 
  \[
\infer=[ ]
     {\Gamma\vdash_{\Vec{x}}\beta\Rightarrow \gamma }
     {\beta\cup\Gamma\vdash_{\Vec{x}}\gamma}
  \]
  To see why that is the case we will prove the above generalisation, but only one way: 
  \begin{proof}
  \[\infer[^{(4)}]{\Gamma\vdash_{\Vec{x}}\beta\Rightarrow \gamma}{\infer[^{(2)}] {\beta\wedge\gamma\cup\Gamma\vdash_{\Vec{x}}\beta}{\infer[^{(1)}]{\beta\cup\Gamma\vdash_{\Vec{x}}\beta}{\beta\vdash_{\Vec{x}}\beta}}& \infer[^{(3)}]{\beta\cup\Gamma\vdash_{\Vec{x}}\beta\wedge\gamma}{\beta\cup\Gamma\vdash_{\Vec{x}}\beta & \beta\cup\Gamma\vdash_{\Vec{x}}\gamma}
  }
  \]
  \end{proof}
\item \emph{Negation}  \\
  For negation we only have one axiom
  \be
  \bot\vdash_{\Vec{x}}\alpha
  \ee
  while the inference rules are
  \[
\infer[ ]
     {\Gamma\vdash_{\Vec{x}}\neg\alpha}
     {(\alpha\cup\Gamma)\vdash_{\Vec{x}}\bot}
  \]
  and
  \[
\infer[ ]
     {(\neg\alpha\cup\Gamma)\vdash_{\Vec{x}}\bot}
     {\Gamma\vdash_{\Vec{x}}\alpha}
  \]
\item \emph{Universal Quantification}\\
We have the following double inference rule
\[
\infer=[ ]
     {\alpha\vdash_{\Vec{x}}\forall y\beta}
     {\alpha\vdash_{\Vec{x}y}\beta}
  \]
where $y$ is a free variable \footnote{A variable $x$ in a term $\alpha$ is said to be \emph{bounded} if it appears in a context of the form $x\vdash \alpha$, otherwise it is said to be \emph{free}.} in $\beta$. \\
Again the generalization is
\[
\infer=[ ]
     {\Gamma\vdash_{\Vec{x}}\forall y\beta}
     {\Gamma\vdash_{\Vec{x}y}\beta}
  \]
  \item\emph{Existential Quantifier}\\
 We have the double inference rule
 \[
\infer=[ ]
     {(\exists y)\alpha\vdash_{\Vec{x}}\beta}
     {\alpha\vdash_{\Vec{x}y}\beta}
  \]
  where $y$ is a free variable in $\beta$.\\
Again the generalization would be
\[
\infer=[ ]
     {(\exists y)\alpha\cup\Gamma\vdash_{\Vec{x}}\beta}
     {\alpha\cup\Gamma\vdash_{\Vec{x}y}\beta}
  \]
\item\emph{Distributive Axiom}
\be
(\alpha\wedge(\beta\vee\gamma))\vdash_{\Vec{x}}((\alpha\wedge\beta)\vee(\alpha\wedge\gamma))
\ee
\item\emph{Frobenious Axiom}
\be
((\alpha\wedge(\exists y)\beta)\vdash_{\Vec{x}}(\exists y)(\alpha\wedge\beta)
\ee
where $y\notin \Vec{x}$.
\item\emph{Law of Excluded Middle}
\be
\top\vdash_{\Vec{x}}\alpha\vee\beta
\ee
It should be noted that for intuitionistic type of first order languages, the law of excluded middle does not hold. All the rest does.\\
With this we end our definition of the first order language $\mathcal{L}$ which is comprised of a set of term types, a set of logical connectives and a set of rules of inference which determine the logic.
\end{itemize}

\chapter{Lecture 12}
In this lecture I will describe how in the quantum topos it is possible to define the truth value of a proposition given a state. The collection of such truth values will be a Heyting algebra thus leading to a multivalued logic (intuitionistic logic). I will then give specific examples. I will also define an analogue of the pseudo state called the truth object. This will have the same role as the pseudo sate but can be generalised to represent density matrices not only pure states. We will then compute the truth values of certain propositions with respect to this truth object and check if they reproduce the truth values computes in terms of the pseudo state.

\section{Truth Values Using the Pseudo-State Object }
We are now ready to turn to the question of how truth values are assigned to
propositions which, in this case, are represented by daseinized
operators $\delta(\hat{P})$. For this purpose it is worth thinking
again about classical physics. There, as previously stated, we know that a proposition
$\hat{A}\in\Delta$ is true for a given state $s$ if $s\in
f_{\hat{A}}^{-1}(\Delta)$, i.e. if $s$ belongs to those subsets
$f_{\hat{A}}^{-1}(\Delta)$ of the state space for which the
proposition $\hat{A}\in\Delta$ is true. Therefore, given a state
$s$, all true propositions of $s$ are represented by those
measurable subsets which contain $s$, i.e. those subsets which
have measure $1$ with respect to the Dirac measure $\delta_s$.

In the quantum case, a proposition of the form ``$A\in\Delta$'' is
represented by the presheaf $\ps{\delta(\hat{E}[A\in\Delta])}$
where $\hat E[A\in\Delta]$ is the spectral projector for the
self-adjoint operator $\hat A$, which projects onto the subset $\Delta$ of the
spectrum of $\hat A$. On the other hand, states are represented by
the presheaves $\ps{\w}^{\ket\psi}$. As described above, these
identifications are obtained using the maps
$\mathfrak{S}:P(V)\rightarrow \Sub_{{\rm cl}}(\Sig_V)$,
$V\in\mathcal{V(H)}$, and the daseinization map
$\delta:P(\Hi)\rightarrow \Sub_{\rm {cl}}(\Sig)$, with the
properties that
\begin{eqnarray}
\{{\mathfrak{S}}(\delta(\hat{P})_V)\mid{V\in\V(\Hi)}\}
&:=&\ps{\delta(\hat{P})}\subseteq \Sig\nonumber\\
\{ {\mathfrak{S}}(\w^{\ket\psi}_V) \mid V\in\V(\Hi)\} &:=&
\ps{\w}^{\ket\psi}\subseteq \Sig
\end{eqnarray}
As a consequence, within the structure of formal, typed languages,
both presheaves $\ps{\w}^{\ket\psi}$ and $\ps{\delta(\hat{P})}$
are terms of type $P\Sig$, i.e. they are sub-objects of the spectral presheaf.

We now want to define the condition by which, for each context
$V$, the proposition $(\ps{\delta(\hat{P})})_V$ is true given
$\ps{\w}^{\ket\psi}_V$\footnote{Recall that $\w^{\ket\psi}_V$ represents the projection operator while $\ps{\w}^{\ket\psi}_V$ indicated the subset of $\us_V$. Although ultimately they are equivalent, it is always worth specifying what specific role $\w^{\ket\psi}$ has. }. To this end we recall that, for each
context $V$, the projection operator $\w^{\ket\psi}_V$ can be
written as follows:
\begin{align}
\w^{\ket\psi}_V&=\bigwedge\{\hat{\alpha}\in
P(V)|\ket\psi\langle\psi|\leq\hat{\alpha}\}\nonumber\\
&=\bigwedge\{\hat{\alpha}\in
P(V)|\langle\psi|\hat{\alpha}\ket\psi=1\}\nonumber\\
&=\delta^o(\ket\psi\langle\psi|)_V
\end{align}
This represents the smallest projection in P(V) which has
expectation value equal to one with respect to the state
$\ket\psi$. The associated subset of the Gel'fand spectrum is
defined as
\ba
\ps{\w}^{\ket\psi}_V&=&\mathfrak{S}(\bigwedge\{\hat{\alpha}\in
P(V)|\langle\psi|\hat{\alpha}\ket\psi=1\})\\
&=&\{\lambda\in\us_V|\lambda(\delta^o(\ket\psi\langle\psi|)_V)=1\}
\ea
It follows that
$\ps{\w}^{\ket\psi}= \{\ps{\w}^{\ket\psi}_V \mid{V\in\V(\Hi)}\}$
is the sub-object of the spectral presheaf $\Sig$, such that at each
context $V\in\V(\Hi)$ it identifies those subsets of the Gel'fand
spectrum which correspond (through the map $\mathfrak{S}$) to the
smallest projections of that context, which have expectation value
equal to one with respect to the state $\ket\psi$, i.e. which are
true in $\ket\psi$.

On the other hand, as previously defined, at a given context $V$, the operator
$\delta(\hat{P})_V$ is:
\begin{equation}
\delta^o(\hat{P})_V:=\bigwedge\{\hat{\alpha}\in
P(V)|\hat{P}\leq\hat{\alpha}\}
\end{equation}
Thus the sub-presheaf $\ps{\delta(\hat{P})}$ is defined as the
sub-object of $\Sig$, such that at each context $V$ it defines the
subset $\ps{\delta(\hat{P})}_V$ of the Gel'fand spectrum $\Sig_V$,
which represents (through the map $\mathfrak{S}$) the projection
operator $\delta(\hat{P})_V$.

We are interested in defining the condition by which the
proposition represented by the sub-object $\ps{\delta(\hat{P})}$ is
true given the state $\ps{\w}^{\ket\psi}$. Let us analyse this
condition for each context V. In this case, we need to define the
condition by which the projection operator $\delta(\hat{P})_V$,
associated to the proposition $\ps{\delta(\hat{P})}$ is true, given
the pseudo state $\ps{\w}^{\ket\psi}$. Since at each context $V$
the pseudo-state defines the smallest projection in that context
which is true with probability one, i.e. $(\w^{\ket\psi})_V$, for
any other projection to be true given this pseudo-state, this
projection must be a coarse-graining of $(\w^{\ket\psi})_V$, i.e.
it must be implied by $(\w^{\ket\psi})_V$. Thus, if
$(\w^{\ket\psi})_V$ is the smallest projection in $P(V)$, which is
true with probability one, then the projector $\delta^o(\hat{P})_V$
will be true if and only if $\delta^o(\hat{P})_V\geq
(\w^{\ket\psi})_V$. This condition is a consequence of the fact
that, if $\langle\psi|\hat{\alpha}\ket\psi=1$, then for all
$\hat{\beta}\geq\hat{\alpha}$ it follows that
$\langle\psi|\hat{\beta}\ket\psi=1$.

So far we have defined a `truthfulness' relation at the level of
projection operators, namely $\delta^o(\hat{P})_V\geq
(\w^{\ket\psi})_V$. Through the map $\mathfrak{S}$ it is
possible to shift this relation to the level of sub-objects of the
Gel'fand spectrum:
\begin{align}
\mathfrak{S}((\w^{\ket\psi})_V)&\subseteq
\mathfrak{S}(\delta^o(\hat{P})_V)\label{equ:truthvalue}\\
\ps{\w}^{\ket\psi}_V&\subseteq
\ps{\delta(\hat{P})}_V\nonumber\\
\{\lambda\in\Sig(V)|\lambda
((\delta^o(\ket\psi\langle\psi|)_V)=1\}&\subseteq
\{\lambda\in\Sig(V)|\lambda((\delta^o(\hat{P}))_V)=1\}
\end{align}
What the above equation reveals is that, at the level of
sub-objects of the Gel'fand spectrum, for each context $V$, a
`proposition' can be said to be (totally) true for given a
pseudo-state if, and only if, the sub-objects of the Gel'fand
spectrum, associated to the pseudo-state, are subsets of the
corresponding subsets of the Gel'fand spectrum associated to the
proposition. It is straightforward to see that if
$\delta(\hat{P})_V\geq (\w^{\ket\psi})_V$, then
$\mathfrak{S}((\w^{\ket\psi})_V)\subseteq
\mathfrak{S}(\delta(\hat{P})_V)$ since for projection operators
the map $\lambda$ takes the values 0,1 only.

We still need a further abstraction in order to work directly with
the presheaves $\ps{\w}^{\ket\psi}$ and $\ps{\delta(\hat{P})}$.
Thus we want the analogue of equation (\ref{equ:truthvalue}) at
the level of sub-objects of the spectral presheaf, $\Sig$. This
relation is easily derived to be
\begin{equation}\label{equ:tpre}
\ps{\w}^{\ket\psi}\subseteq\ps{\delta(\hat{P})}
\end{equation}

Equation (\ref{equ:tpre})  shows that, whether or not a proposition
$\ps{\delta(\hat{P})}$ is `totally true' given a pseudo state
$\ps{\w}^{\ket\psi}$ is determined by whether or not the
pseudo-state is a sub-presheaf of the presheaf
$\ps{\delta(\hat{P})}$. With this motivation, we can now define the
\emph{generalised truth value} of the proposition ``$A\in\Delta$''
at stage $V$, given the state $\ps{\w}^{\ket\psi}$, as:
\begin{align}\label{ali:true}
v(A\in\Delta;\ket\psi)_V&= v(\ps{\w}^{\ket\psi}
\subseteq\ps{\delta(\hat{E}[A\in\Delta])})_V             \\
&:=\{V^{'}\subseteq V|(\ps{\w}^{\ket\psi})_{V^{'}}\subseteq
\ps{\delta(\hat{E}[A\in\Delta])}_{V^{'}}\}\\ \nonumber
&=\{V^{'}\subseteq
V|\langle\psi|\delta(\hat{E}[A\in\Delta])_V\ket\psi=1\}
\end{align}
The last equality is derived by the fact that
$(\ps{\w}^{\ket\psi})_V\subseteq \ps{\delta(\hat{P})}_V $ is a
consequence that at the level of projection operator
$\delta^o(\hat{P})_V\geq (\w^{\ket\psi})_V$. But, since
$(\w^{\ket\psi})_V$ is the smallest projection operator such that
$\langle\psi|(\w^{\ket\psi})_V\ket\psi=1$, then
$\delta^o(\hat{P})_V\geq (\w^{\ket\psi})_V$ implies that
$\langle\psi|\delta^o(\hat{P})\ket\psi=1$.

The right hand side of equation (\ref{ali:true}) means that the
truth value, defined at $V$ of the proposition ``$A\in\Delta$'',
given the state $\ps{\w}^{\ket\psi}$, is given in terms of all
those sub-contexts $V^{'}\subseteq V$ for which the projection
operator $\delta(\hat{E}[A\in\Delta])_V$ has expectation value
equal to one with respect to the state $\ket\psi$. In other words,
this \emph{partial} truth value is defined to be the set of all
those sub-contexts for which the proposition is totally true.

So we can see how a local truth value can be seen as a measure of how for the proposition is from being true. In fact what the sieve $v(A\in\Delta;\ket\psi)_V$ tells you is how much you have to generalise your original proposition for it to be true.

We now need to show that indeed $v(A\in\Delta;\ket\psi)_V=S$ is a sieve. To this end we simply need to show that it is closed under left composition. 
\begin{proof}
Consider an algebra $V^{'}\in S$ then given any other algebra $V^{''}\subseteq V$ then we want to show that $V^{''}\in S$. Now since $V^{'}\in S $, then $\delta^o(\hat{P})_{V^{'}}\geq (\ps{\w}^{\ket\psi})_{V^{'}}$, however from the definition of daseinisation we have that $\delta^o(\hat{P})_{V^{''}}\geq \delta^o(\hat{P})_{V^{'}}$ therefore $\delta^o(\hat{P})_{V^{''}}\geq (\ps{\w}^{\ket\psi})_{V^{'}}$, which implies that $V^{''}\in S$.

\end{proof}
Thus pictorially we have the following situation
 \[\xymatrix{
\ps{\w}^{\ket\psi}\ar@{^{(}->}[rr]\ar[dd]&&\ps{\delta(\hat{P})}\ar[dd]\\
&&\\
\underline{1}\ar[rr]^{\gamma}&&\uom\\
}\]
where to each sub-object relations which identifies the mathematical concept of evaluation we associate a global element $\gamma:1\rightarrow\uom$ which represents the global truth value. Such a global truth value will have local components defined as follows

 \[\xymatrix{
(\ps{\w}^{\ket\psi})_V\ar@{^{(}->}[rr]\ar[dd]&&\delta^o(\hat{P})_V\ar[dd]\\
&&\\
\underline{1}_V\ar[rr]^{\gamma_V}&&\uom_V\\
}\]
Which pick out a particular sieve for each context.

The reason why all this works is that generalised truth values defined
in this way form  a \emph{sieve} on $V$; and the set of all of
these is a Heyting algebra. Specifically:
$v(\ps{\w}^{\ket\psi}\subseteq \ps{\delta(\hat{P})})$ is a
global element defined at stage V of the sub-object classifier
$\Om:=(\Om_V)_{V\in\V(\Hi)}$, where $\Om_V$ represents the set of
all sieves defined at stage V. \\
The set of truth values is defined as $\Gamma(\Omega)$ and it forms a Heyting algebra.\\\\
%
\section{Example}
We will consider again a 4 dimensional Hilbert space $\Cl^4$ whose category is $\mv(\Cl^4)$, the state space $\us$ we have already computed in previous examples. We now want to define the truth values of the proposition  $S_z\in[-3, -1]$ which has corresponding projector operator $\hat{P}:=\hat{E}[S_z\in[-3, -1]]=|\phi\rangle\langle\phi|$. This proposition is equivalent to the projection operator $\hat{P}_4$.\\
The state we will consider will be $\psi=(1,0,0,0)$ with respective operator $|\psi\rangle\langle\psi|=\hat{P}_1$. First of all we will consider the context $V=lin_{\Cl}(\hat{P}_1, \hat{P}_2, \hat{P}_3,\hat{P}_4)$. Since $\hat{P}_4\in V$ it follows that $\delta^o(\hat{P}_4)_V=\hat{P}_4$, but $\langle\psi|\hat{P}_4|\psi\rangle=0$, thus we need to go to smaller contexts. In particular, only contexts in which $\delta^o(\hat{P}_4))$ is an operator which is implied by $\hat{P}_1$ will contribute to the truth value. We thus obtain  
\ba
v(\ps{\w}^{\ket\psi}\subseteq\ps{\delta(\hat{P}_4)})_V&:=&\{V^{'}\subseteq
V|\langle\psi|\delta(\hat{E}[A\in\Delta])\ket\psi=1\}\\
&=&\{V_{\hat{P}_{i\in\{2,3\}}}, V_{\hat{P}_{2}\hat{P}_{3}}\}
\ea
On the other hand for sub contexts we obtain
\ba
v(\ps{\w}^{\ket\psi}\subseteq\ps{\delta(\hat{P}_4)})_{V_{\hat{P}_1}}&:=&\{\emptyset\}\\
v(\ps{\w}^{\ket\psi}\subseteq\ps{\delta(\hat{P}_4)})_{V_{\hat{P}_2}}&:=&\{V_{\hat{P}_2}\}\\
v(\ps{\w}^{\ket\psi}\subseteq\ps{\delta(\hat{P}_4)})_{V_{\hat{P}_3}}&:=&\{V_{\hat{P}_3}\}\\
v(\ps{\w}^{\ket\psi}\subseteq\ps{\delta(\hat{P}_4)})_{V_{\hat{P}_4}}&:=&\{\emptyset\}\\
v(\ps{\w}^{\ket\psi}\subseteq\ps{\delta(\hat{P}_4)})_{V_{\hat{P}_1, \hat{P}_2}}&:=&\{V_{\hat{P}_2}\}\\
v(\ps{\w}^{\ket\psi}\subseteq\ps{\delta(\hat{P}_4)})_{V_{\hat{P}_1, \hat{P}_3}}&:=&\{V_{\hat{P}_3}\}\\
v(\ps{\w}^{\ket\psi}\subseteq\ps{\delta(\hat{P}_4)})_{V_{\hat{P}_1, \hat{P}_4}}&:=&\{\emptyset\}\\
v(\ps{\w}^{\ket\psi}\subseteq\ps{\delta(\hat{P}_4)})_{V_{\hat{P}_2, \hat{P}_3}}&:=&\{V_{\hat{P}_2, \hat{P}_3},V_{\hat{P}_2},V_{\hat{P}_3}\}\\
v(\ps{\w}^{\ket\psi}\subseteq\ps{\delta(\hat{P}_4)})_{V_{\hat{P}_2, \hat{P}_4}}&:=&\{V_{\hat{P}_2}\}\\
v(\ps{\w}^{\ket\psi}\subseteq\ps{\delta(\hat{P}_4)})_{V_{\hat{P}_3, \hat{P}_4}}&:=&\{V_{\hat{P}_3}\}\\
\ea
As can be seen, the truth values obtained by using the pseudo state object coincide with the truth values obtained using the truth object.

\section{Truth Object}
We will now analyse the \emph{truth object} option\footnote{We have seen in previous lectures how to represent propositions, which are linguistic objects of type $P(\Sigma)$ in a topos $\tau$. Moreover, in this lecture we have defined such a representation for $\tau=\Sets^{\mv(\mh)^{op}}$. We are now interested in understanding how propositions get assigned truth values. We know that at a linguistic level a truth value is defined as an element of the Heyting algebra $\Gamma(\Omega)$, therefore we need to associate to the elements $\{s|A(s)\in\Delta\}$ or $A(s)\in\Delta$ of type $P(\Sigma)$, which represent propositions, an element of type $\Gamma(\Omega)$. \\
Since in a representation $M$ an element of $\Gamma(\Omega)$ is defined as a term of type $\Omega$ with no free variables, finding the truth value of a proposition would be equivalent in finding a way of transforming a term of type $P(\Sigma)$ to a term of type $\Omega$, with no free variables.
This can be done in two ways:
\begin{enumerate}
\item \emph{Truth Object} \\
This method consists in defining a term $\mathbb{T}$ of type $P(P(\Sigma))$, then given a term $t$ of type $P(\Sigma)$, the term $t\in \mathbb{T}$ would be the desired term of type $\Omega$.

\item \emph{Psuedo-State Object}\\
This method consists in defining a term $\w$ of type $P(\Sigma)$, which was defined above and represents the topos analogue of a state of the system. Then the truth value of a proposition $t$ (term of type $P(\Sigma)$) would be $\w\subseteq t$, which is a term of type $\Omega$.

\end{enumerate}
We will now analyse the \emph{truth object} option. The reason for introducing such an object is because it enables one to also define the topos analogue of density matrices, which was not possible when only considering the pseudo-state option.\\\\
Our aim is to define a term $T$ of type $P(P(\Sigma))$ such that, given the representation $\{s|A(s)\in\Delta\}$ of a proposition, the term $(\{s|A(s)\in\Delta\}\in \mathbb{T})$ is a term of type $\Omega$. Such a term has free variables $\Delta$ of type $P(\mathcal{R})$ and $\mathbb{T}$ of type $P(P(\Sigma))$. Therefore, its representation in a topos $\tau$ would be
\be
[\{s | A(s)\in \Delta\} \in \mathbb{T} ]_{\tau_{M}} : P(\mathcal{R})\times P(P(\Sigma))\rightarrow \Omega 
\ee   
which can be factored as follows:
\be
[\{s | A(s)\in \Delta\} \in  \mathbb{T} ]_{\tau_{M}}= e_{P(\Sigma)}\circ [ \{s | A(s)\in \Delta\}]\times [ \mathbb{T} ]
\ee
where $e_{P(\Sigma)}:P(\sigma)\times P(P(\Sigma))\rightarrow\Omega$ is the evaluation map and 
\ba
 [ \{s | A(s)\in \Delta\}]: P(\mathcal{R})&\rightarrow &P(\Sigma)\\
^\lefthalfcap\mathbb{T}^\righthalfcap:P(P(\Sigma))&\xrightarrow{id}&P(P(\Sigma))
\ea
Given the above, the truth value of the proposition $A\in\Delta$ is represented as
\be
v(A\in\Delta;\mathbb{T}):[\{s | A(s)\in \Delta\} \in  \mathbb{T} ]_{\tau_{M}}\circ\langle ^\lefthalfcap  \Delta^\righthalfcap, ^\lefthalfcap   \mathbb{T}^\righthalfcap  \rangle
\ee
where $\langle ^\lefthalfcap  \Delta^\righthalfcap, ^\lefthalfcap \mathbb{T} ^\righthalfcap  \rangle:1\rightarrow P(\mathcal{R})\times P(P(\Sigma))$}. The reason for introducing such an object is because it enables one to also define the topos analogue of density matrices, which was not possible when only considering the pseudo-state option which allowed only a topos representation for pure states.

In order to define the truth-object we need to first introduce a new presheaf called the \emph{outer presheaf} whose definition is as follows
\begin{Definition}
The outer presheaf $\underline{O}:\mv(\mh)\rightarrow Sets$ is defined on 
\begin{enumerate}
\item Objects: for each $V\in \mv(\mh)$ we obtain $\uo_V:=P(V)$, i.e. the collection of all projection operators in $V$.
\item Morphisms: given a map $i:V^{'}\subseteq V$ in $\mv(\mh)$ the corresponding presheaf map is 
\ba
\uo(i_V^{'}V):\uo_V&\rightarrow&\uo_{V^{'}}\\
\hat{\alpha}&\mapsto&\delta^o(\hat{\alpha})_{V^{'}}
\ea 

\end{enumerate}

\end{Definition}
The above is a well defined presheaf. To see this all we need to show is that, given another inclusion map $j:V^{''}\subseteq V^{'}$ then the following holds
\be
\uo(i\circ j)=\uo(j)\circ \uo(i)
\ee
Computing the left hand side we get
\ba
\uo(i\circ j):\uo_{V}&\rightarrow& \uo_{V^{''}}\\
\hat{\alpha}&\mapsto&\delta^o(\hat{\alpha})_{V^{''}}
\ea
Computing the right hand side we get
\ba
\uo_{V}&\xrightarrow{\uo(i)}&\uo_{V^{'}}\xrightarrow{\uo(j)}\uo_{V^{''}}\\
\hat{\alpha}&\mapsto&\delta^o(\hat{\alpha})_{V^{'}}\mapsto\delta^o(\delta^o(\hat{\alpha})_{V^{'}})_{V^{''}}
\ea
where $\delta^o(\delta^o(\hat{\alpha})_{V^{'}})_{V^{''}}\geq \delta^o(\hat{\alpha})_{V^{'}}$. In particular, applying the definition of daseinisation recursively it follows trivially that 
$\delta^o(\delta^o(\hat{\alpha})_{V^{'}})_{V^{''}}=\delta^o(\hat{\alpha})_{V^{'}_{|V^{''}}}=\delta^o(\hat{\alpha})_{V^{'}}$.

From the above definition it follows that for each projection operator $\hat{P}$, the assignment $V\rightarrow \delta^o(\hat{P})_V$ defines a global element of $\uo$. We thus arrive at an alternative, but ultimately equivalent definition of the daseinisation map
\ba
\delta: P(\mh)&\rightarrow &\Gamma(\uo)\\
\hat{P}&\mapsto&\{\delta^{o}(\hat{P})_V|V\in \mv(\mh)\}
\ea
\subsubsection{Property of the daseinisation map}
We will now state some properties of the daseinisation map as defined above
\begin{enumerate}
\item For all $V\in \mv(\mh)$ we obtain $\delta^o(\hat{0})_V=\hat{0}$. Thus the null projection operator represents the proposition of the form $A\in \Delta$ such that $sp(\hat{A})\cap\Delta=\emptyset$ (false proposition).
\item For all $V\in \mv(\mh)$ we obtain $\delta^o(\hat{1})_V=\hat{1}$. Thus the unit projection operator represents the proposition of the form $A\in \Delta$ such that $sp(\hat{A})\cap\Delta=sp(\hat{A})$ (true proposition).
\item The map $\delta$ is not surjective but it is injective.
\begin{proof}
Given two global elements which we denote $\delta(\hat{P})$ and $\delta(\hat{Q})$ such that $\delta(\hat{P})=\delta(\hat{Q})$, we want to show that $\hat{P}=\hat{Q}$. Now is is easy to see that $\hat{P}=\bigwedge_{V\in \mv(\mh)}\delta^o(\hat{P})_V$ since there will exist a context $V$ such that $\hat{P}\in V$. It follows that
\be
\hat{P}=\bigwedge_{V\in \mv(\mh)}\delta^o(\hat{P})_V=\bigwedge_{V\in \mv(\mh)}\delta^o(\hat{Q})_V=\hat{Q}
\ee

\end{proof}
\end{enumerate}

\subsubsection{Properties of the Outer-Daseinisation Presheaf }
In order to show that $\uo\subseteq P_{cl}(\us)$ we need to show that there exists a monic arrow $i:\uo\rightarrow P_{cl}(\us)$. To this end recall from previous lectures (exponential) that there exists, in any topos $\tau$, a bijection 
\be\label{equ:exp}
Hom_{\tau}(A, C^B)\rightarrow Hom_{\tau}(A\times B, C)
\ee
We would like to utilise this bijection to define the map $i$, to do so we need to utilise another result of topos theory which states that sub-objects of a given object are in bijective correspondence with maps from the object in question to the sub-object classifier\footnote{This was shown when we defined the subobject classifier.}. Thus for the case at hand $P(\us)\simeq \uom^{\us}$. Substituting this in equation \ref{equ:exp} we get
\be\label{equ:exp2}
Hom_{Sets^{\mv(\mh)^{op}}}(\uo, P(\us))\rightarrow Hom_{Sets^{\mv(\mh)^{op}}}(\uo\times \us, \uom)
\ee
Consider a map $j\in Hom_{Sets^{\mv(\mh)^{op}}}(\uo\times \us, \uom)$ which, for each $V\in \mv(\mh)$ we define as follows
\ba
j_V:(\uo\times \us)_V&\rightarrow& \uom_V\\
(\hat{\alpha}, \lambda)&\mapsto&j_V(\hat{\alpha}, \lambda):=\{V^{'}\subseteq V|\us(i_{V^{'}V})\lambda\in S_{\uo(i_{V^{'}V})\hat{\alpha}}\}
\ea
where $\hat{\alpha}\in P(V)$, $\uo(i_{V^{'}V})\hat{\alpha}=\delta^o(\hat{\alpha})_{V^{'}}$ and  $S_{\uo(i_{V^{'}V})\hat{\alpha}}:=\{\lambda\in \us_{V^{'}}|\langle\lambda,  \delta^o(\hat{\alpha})_{V^{'}}\rangle=1\}$

However we know that 
\be
S_{\us(i_{V^{'}V})\hat{\alpha}}=\us(i_{V^{'}V})S_{\hat{\alpha}}
\ee
for all $V^{'}\subseteq V$ and $\hat{\alpha}\in \uo_V$. Therefore we can write $j_V(\hat{\alpha}, \lambda)$ as follows
\be
j_V(\hat{\alpha}, \lambda):= \{V^{'}\subseteq V|\us(i_{V^{'}V})\lambda\in \us(i_{V^{'}V})S_{\hat{\alpha}}\}
\ee
for all $(\hat{\alpha}, \lambda)\in \uo(V)\times \us_V$. As defined above $j_V(\hat{\alpha}, \lambda)$ is a sieve in $\uom_V$. 
\begin{proof}
We want to show that 
\be
j_V(\hat{\alpha}, \lambda):= \{V^{'}\subseteq V|\us(i_{V^{'}V})\lambda\in \us(i_{V^{'}V})S_{\hat{\alpha}}\}
\ee
is a sieve.

To this end we need to show that if $V^{'}\in j_V(\hat{\alpha}, \lambda)$ then for all $V^{''}\subseteq V^{'}$ $V^{''}\in j_V(\hat{\alpha}, \lambda)$. 
So let us assume that $V^{'}\in j_V(\hat{\alpha}, \lambda)$ then it follows that $\lambda_{|V^{'}}\in S_{\delta^o(\hat{\alpha})_{V^{'}}}$. 
Therefore $\lambda_{|V^{'}}(\delta^o(\hat{\alpha})_{V^{'}})=1$ Now let us consider any $V^{''}\subseteq V^{'}$. 
It follows that $\us(i_{V^{''}V^{'}})\lambda_{|V^{'}}\in\us_{V^{''}}$ and $\us(i_{V^{''}V^{'}})S_{\delta^o(\hat{\alpha})_{V^{'}}}=S_{\delta^o(\hat{\alpha})_{V^{''}}}$. 
Now since $\delta^o(\hat{\alpha})_{V^{''}}\geq \delta^o(\hat{\alpha})_{V^{'}}$ then $\lambda_{|V^{'}}(\delta^o(\hat{\alpha})_{V^{''}})=1$ 
in particular $(\lambda_{|V^{'}})_{|V^{''}}(\delta^o(\hat{\alpha})_{V^{''}})=1$. Therefore $\us(i_{V^{''}V^{'}})\lambda_{|V^{'}}\in \us(i_{V^{''}V^{'}})S_{\delta^o(\hat{\alpha})_{V^{'}}}=S_{\delta^o(\hat{\alpha})_{V^{''}}}$, i.e. $V^{''}\in j_V(\hat{\alpha}, \lambda)$.

\end{proof}
We now need to show that the collection of maps $j_V:(\uo\times \us)_V\rightarrow \uom_V$ for each  $V\in \mv(\mh)$ as defined above, combine together to form a natural transformation $j:\uo\times \us\rightarrow \uom$. 
\begin{proof}
We want to show that $j:\uo\times \us\rightarrow \uom$ is indeed a natural transformation. Thus we need to show that for all pairs $V^{'}\subseteq V$ the following diagram commutes
\[\xymatrix{
\uo_V\times \us_V\ar[rr]^{j_V}\ar[dd]^{f(i_{V^{'}V})}&&\uom_V\ar[dd]^{\uom(i_{V^{'}V})}\\
&&\\
uo_{V^{'}}\times \us_{V^{'}}\ar[rr]^{j_{V^{'}}}&&\uom_{V^{'}}\\
}\]
If we chaise the diagram around we obtain
\ba
\uom(i_{V^{'}V})\circ j_{V}(\hat{\alpha}, \lambda)&=&( j_{V}(\hat{\alpha}, \lambda))_{|V^{'}}=j_{V}(\hat{\alpha}, \lambda)\cap \downarrow V^{'}\\
&=&\{V^{''}\subseteq V^{'}|V^{''}\in j_{V}(\hat{\alpha}, \lambda)\}\\
&=&\{V^{''}\subseteq V^{'}|\us(i_{V^{''}V})\lambda\in\us(i_{V^{'}V})(S_{\hat{\alpha}})\}
\ea
In order to define the left path of the diagram we need to define what the maps $f(i_{V^{'}V})$ are.

\ba
f(i_{V^{'}V}):(\uo_V\times \us_V)&\rightarrow& (\uo_{V^{'}}\times \us_{V^{'}})\\
(\hat{\alpha}, \lambda)&\mapsto&f(i_{V^{'}V})(\hat{\alpha}, \lambda):=(\uo(i_{V^{'}V})\hat{\alpha}, \us(i_{V^{"}V})\lambda)
\ea
Therefore going around the diagram we obtain
\ba
j_{V^{'}}\circ f(i_{V^{'}V})(\hat{\alpha}), \lambda)&=&j_{V^{'}}(\uo(i_{V^{'}V})\hat{\alpha}, \us(i_{V^{"}V})\lambda)\\
&=&j_{V^{'}}(\delta^o(\hat{\alpha})_{V^{'}}, \lambda_{|V^{'}})\\
&=&\{V^{''}\subseteq V^{'}|\us(i_{V^{''}V^{'}})(\lambda_{|V^{'}})\in \us(i_{V^{''}V^{'}})S_{\delta^o(\hat{\alpha})_{V^{'}}}\}
\ea

Now since
\ba
\us(i_{V^{''}V})\alpha&=&\lambda_{V^{''}}=\us(i_{V^{''}V^{'}})(\lambda_{|V^{'}})\\
\us(i_{V^{'}V})(S_{\hat{\alpha}})&=&S_{\delta^o(\hat{\alpha})_{V^{''}}}=\us(i_{V^{''}V^{'}})S_{\delta^o(\hat{\alpha})_{V^{'}}}
\ea
It follows that the above diagram commutes.

\end{proof} 

Because of the equivalence in equation \ref{equ:exp2} to the map $j$ there corresponds a map $i:\uo\rightarrow  P(\us)$. However we are interested in $i:\uo\rightarrow P_{cl}(\us)$, but this restriction poses no problems.
\begin{proof}
We are interested in restriction our attention only to clopen sub-objects of $\us$, i.e. we want $i:\uo\rightarrow P_{cl}(\us)$ rather than $i:\uo\rightarrow P(\us)$. Now let us consider $\uo$. For each context we have $\uo_V=P(V)$ which is a lattice of operator with the usual lattice ordering. Now it is possible to put some topology on this set. Whatever topology we choose the entire set $\uo_V$ will be both open and closed. Thus, for each clopen sub-objects $\ps{A}\subseteq \us$ we now form clopen sub-objects of $\uo\times\us$ as follows
\be
\uo\times\ps{A}\subseteq \uo\times \us
\ee
is a clopen sub-object iff for each $V\in \mv(\mh)$, $\uo_V\times\ps{A}_V$ is a clopen subset of $\uo_V\times \us_V$. We then have that $Sub_{cl}( \uo\times \us)\subseteq Sub( \uo\times \us)$.

However we know that
\be
Sub(X)\simeq Hom(X\rightarrow \uom)
\ee
Therefore
\be
Sub_{cl}( \uo\times \us)\subseteq Sub( \uo\times \us)\simeq Hom(\uo\times \us, \uom)\simeq Hom(\uo, P(\us))
\ee 
But
\be
Sub_{cl}( \uo\times \us)\simeq Hom_{cl}(\uo\times \us, \uom)\subseteq  Hom(\uo\times \us, \uom)
\ee
A moment of thought reveals that 
\be
Hom_{cl}(\uo\times \us, \uom)\simeq  Hom(\uo, P_{cl}(\us))
\ee
\end{proof}
The map $i:\uo\rightarrow P_{cl}(\us)$ is the desired map, which is injective: for each $V\in \mv(\mh)$, $i_V:\uo_V\rightarrow P_{cl}(\us)_V$ is injective.
\begin{proof}
To this end let us consider first the definition of $P_{cl}(\us)$. This is given as follows
\begin{Definition}
The power object $P\us$ of of $\us$ is the
presheaf given by 
\begin{enumerate}
\item(i) On objects $V\in \mv(\mh)$
$$ P\us_V:=\{\nu_V:\us_{\downarrow V}\rightarrow\uom_{\downarrow V}|\eta_V\text{ is a natural transformation}\}$$
Here $\us_{\downarrow V}$ is the restriction of $\us$ to a smaller poset namely $\downarrow V\subseteq \mv(\mh)$.
\item On morphisms : for $i:V^{'}\subseteq V$ the presheaf maps are
\ba
P\us(i_{V^{'}V}):P\us_V&\rightarrow& P\us_{V^{'}}\\
\eta&\mapsto&\eta_{|V^{'}}
\ea
where here $\eta_{|V^{'}}:\us_{\downarrow V^{'}}\rightarrow \uom_{\downarrow V^{'}}$\end{enumerate}
\end{Definition}

Form the isomorphism 
$$
Sub(X)\simeq Hom(X, \Omega)
$$

It follows that 
$$
P(\us)_V:=Sub(\us_{\downarrow V})
$$
If we then restrict to clopen sub-objects we get
\be
P_{cl}(\us)_V:=Sub_{cl}(\us_{\downarrow V})
\ee
Therefore, for each context $V\in \mv(\mh)$ we have that 
\ba
i_V:\uo_V&\rightarrow &P_{cl}(\us_V)\\
P(V)&\rightarrow&Sub_{cl}(\us_{\downarrow V})\\
\hat{P}&\mapsto&(\ps{S}_{\hat{P}})_{\downarrow V}
\ea
Thus the map is clearly monic.
\end{proof}

The above reasoning showed us that, since $ \mathbb{T}^{|\psi\rangle}$ is a subobject of $\uo$ and since $\uo\subseteq P_{cl}(\us)$ it follows that $ \mathbb{T}^{|\psi\rangle}\in P(P_{cl}(\us))$.

It is interesting to now compare the two definitions of daseinisation given so far in the lecture course.
\ba
\delta:P(\mh)&\rightarrow& \Gamma(\uo)\\
P(\mh)&\rightarrow& Sub_{cl}(\us)
\ea
These two definition, although seemingly different are exactly the same. In fact as we just showed, $\uo\subseteq P_{cl}(\us)$, thus a global element $\gamma$ of $\uo$ will pick out, for each $V\in \mv(\mh)$ an element $\gamma_V\in \uo_V\subseteq P_{cl}(\us)$. Clearly considering all the context together $\gamma$ will give rise to a (clopen due to how $\uo$ is defined)  sub-object of $\us$, hence $\Gamma(\uo)\in P_{cl}(\us)$. 
\subsection{Example of Truth Object in Classical Physics}
Let us consider a proposition $(A\in \Delta)$ meaning that the value of the quantity $A$ lies in $\Delta$. We want to define the truth value of such a proposition with respect to a given state $s\in X$ where $X$ is the state space. In classical theory,  the truth value of the above proposition in the state $s$ is given by
$$v(A\in \Delta;s) :=\begin{cases}1\text{ iff }s\in  f_{A\in \Delta}^{-1}(\Delta)\\0 \text{ otherwise }\end{cases}$$
where $f_{A\in \Delta}^{-1}(\Delta)\subseteq X$ is the subset of the state space for which the proposition $(A\in \Delta)$ is true.\\
Another way of defining truth values is through the truth object $\mathbb{T}^s$, which is state dependent. The definition of the truth object is as follows: \\
for each state $s$, we define the set
$$\mathbb{T}^s :=\{S\subseteq X|s\in S\}$$ 
Since $(s\in f_{A\in \Delta}^{-1}(\Delta))$ iff $ f_{A\in \Delta}^{-1}(\Delta)\in \mathbb{T}^s$, we can now write the truth value above in the following equivalent way:
$$v(A\in \Delta;\mathbb{T}):=\begin{cases} 1 \text{ iff }f_{A\in \Delta}^{-1}(\Delta)\in \mathbb{T}^s\\ 0\text{ otherwise }\end{cases}$$
In classical physics propositions $A\in \Delta$ are identified with subsets $S:\{s | A(s)\in \Delta\}$ of the state space $X$, for which that proposition is true. Therefore, the truth value $v(A\in \Delta;\mathbb{T}^s)$ is equivalent to the truth value of the mathematical statement $[\{s | A(s)\in \Delta\}\in \mathbb{T}^s]$\footnote{
In terms of arrows in a topos we have that the truth value term $(\{s | A(s)\in \Delta\}\in \mathbb{T}^s)$ with free variables $\Delta$ of type $P(\mathcal{R})$ and $\mathbb{T}^s$ of type $P(P(\Sigma))$, is defined by the function
\be
v(A\in \Delta;\mathbb{T}):=[\{s | A(s)\in \Delta\}\in \mathbb{T}^s]:P(\mathcal{R})\times P(P(\Sigma))\rightarrow \{0,1\}
\ee
such that
\ba
v(A\in \Delta;\mathbb{T}^s)(\Delta, \mathcal{T})&=&\begin{cases} 1\text{ if }\{s\in \Sigma|A(s)\in \Delta\}\in\mathbb{T}\\
0\text{ otherwise}
\end{cases}\\
&=&\begin{cases} 1\text{ if }A^{-1}(\Delta)\in\mathbb{T}\\
0\text{ otherwise}
\end{cases}
\ea}.

\subsection{ Truth Object in Quantum Theory}
We now want to define the state dependent truth object $\mathbb{T}^{|\psi\rangle}$ for quantum theory. We recall that the topos we utilise to express quantum theory is $\Sets^{\mv(\mh)^{op}}$, thus we need to define an object $\ps{\mathbb{T}}^{|\psi\rangle}$ of type $P(P(\us))$. However, since propositions are represented by clopen subobjects we actually need to restrict our attention to an element of type $P(P_{cl}(\us))$. 
Thus $\ps{\mathbb{T}}^{|\psi\rangle}$ has to be a subpresheaf of $P_{cl}(\us)$, i.e. $\ps{\mathbb{T}}^{|\psi\rangle}\subseteq P_{cl}(\us)$. Given a state $\psi$, the precise way in which this presheaf is defined is as follows:
\begin{Definition}
The presheaf $\ps{\mathbb{T}}^{|\psi\rangle}$ has as:
\begin{enumerate}
\item[i)] \emph{Objects}: For each context $V$ we get the set
\ba
\underline{\mathbb{T}}^{|\psi\rangle}_V&:=&\{\hat{\alpha}\in \uo_V|Prob(\hat{\alpha}, |\psi\rangle)=1\}\\
&=&\{\hat{\alpha}\in \uo_V|\langle\psi|\hat{\alpha}|\psi\rangle=1\}
\ea
\item [ii)] \emph{Morphisms}: Given two contexts $i:V^{'}\subseteq V$ the associated morphisms is
\ba
\ps{\mathbb{T}}^{|\psi\rangle}(i_{V^{'},V}):\ps{\mathbb{T}}^{|\psi\rangle}_V&\rightarrow& \ps{\mathbb{T}}^{|\psi\rangle}_{V^{'}}\\
\hat{\alpha}&\mapsto&\delta(\hat{\alpha})_{V^{'}}
\ea
\end{enumerate}
\end{Definition}
What about the truth object as defined for a density matrix? In that case the definition if as follows:

\begin{Definition}
Given a density matrix $\rho$, the truth-object $\ps{\mathbb{T}}^{\rho}$ associated to it is defined on
\begin{enumerate}
\item Objects: for each $V\in \mv(\mh)$ we obtain
\ba
\underline{\mathbb{T}}^{\rho}_V&:=&\{\hat{\alpha}\in \uo_V|Prob(\hat{\alpha}, \rho)=1\}\\
&=&\{\hat{\alpha}\in \uo_V|tr(\hat{\rho}\hat{\alpha})=1\}
\ea

\item Morphisms: Given two contexts $i:V^{'}\subseteq V$ the associated morphisms is
\ba
\ps{\mathbb{T}}^{\rho}(i_{V^{'},V}):\ps{\mathbb{T}}^{|\psi\rangle}_V&\rightarrow& \ps{\mathbb{T}}^{\rho}_{V^{'}}\\
\hat{\alpha}&\mapsto&\delta(\hat{\alpha})_{V^{'}}
\ea
\end{enumerate}
\end{Definition}
\subsection{Truth Values Using the Truth-Object}
Given the above definition of truth object we can deduce that the truth value of a given proposition, at a given context $V\in\mv(\mh)$, can be defined as follows:
\ba\label{ali:truth}
v(\ps{\delta(\hat{P})}\in \ps{\mathbb{T}}^{|\psi\rangle})_V&:=&\{V^{'}\subseteq V|\ps{\delta(\hat{P})}_{V^{'}}\in \ps{\mathbb{T}}^{|\psi\rangle}_{V^{'}}\}\\
&=&\{V^{'}\subseteq V|\langle\psi|\ps{\delta(\hat{P})}_{V^{'}}|\psi\rangle=1\}
\ea
and 
\ba
v(\ps{\delta(\hat{P})}\in \ps{\mathbb{T}}^{\rho})_V&:=&\{V^{'}\subseteq V|\ps{\delta(\hat{P})}_{V^{'}}\in \ps{\mathbb{T}}^{\rho}_{V^{'}}\}\\
&=&\{V^{'}\subseteq V|tr(\rho\ps{\delta(\hat{P})}_{V^{'}})=1\}
\ea
\subsection{Relation Between Pseudo-State Object and Truth Object}
We are now interested in understanding the relation between the two distinct ways in which a pure state is defined in the topos formulation of quantum theory. We expect that the two definitions turn out to be equivalent. We will first analyse the relation in classical physics then turn our attention to quantum theory. 

Thus we are trying to understand in classical physics what is the relation between $\{s\}$ and $\mathbb{T}^{s}$. Recall that $\mathbb{T}^{s}:\{X\subseteq  S|s\in X\}$. It then follows trivially that
\be
\{s\}=\bigcap\{X\in S| s\in X\}=\bigcap\{X\in \mathbb{T}^{s}\}
\ee

Following the example of classical physics we will now try to define $\ps{\w}^{\ket\psi}$ in terms of $\ps{\mathbb{T}}^{|\psi\rangle}$ and vice versa.

To this end, consider the assignment
\be
V\mapsto\bigwedge\{\hat{\alpha}\in\ps{\mathbb{T}}^{|\psi\rangle}_V\}=\bigwedge\{\hat{\alpha}\in\uo_V||\psi\rangle\langle\psi|\leq \hat{\alpha}\}
\ee
where the last equality follows from the application of the definition of $\ps{\mathbb{T}}^{|\psi\rangle}_V$. But 
\be
\ps{\w}^{\ket\psi}:V\mapsto\delta^o(|\psi\rangle\langle\psi|)_V=\bigwedge\{\hat{\alpha}\in\uo_V||\psi\rangle\langle\psi|\leq \hat{\alpha}\}
\ee
such that $\ps{\w}^{\ket\psi}_V$ is the smallest projection operator in $V$ such that $\ps{\w}^{\ket\psi}_V\geq |\psi\rangle\langle\psi|$. Therefore
\be
\ps{\w}^{\ket\psi}_V=\bigwedge\{\hat{\alpha}\in \ps{\mathbb{T}}^{|\psi\rangle}_V\}
\ee 

On the other hand, since 
\be
\ps{\mathbb{T}}^{|\psi\rangle}_V=\{\hat{\alpha}\in \uo_V|\hat{\alpha}\geq |\psi\rangle\langle\psi|\}
\ee
However, as said above,  $\ps{\w}^{\ket\psi}_V$ is the smallest projection operator in $V$ such that $\ps{\w}^{\ket\psi}_V\geq |\psi\rangle\langle\psi|$. Therefore
\be
\ps{\mathbb{T}}^{|\psi\rangle}_V=\{\hat{\alpha}\in \uo_V|\hat{\alpha}\geq \ps{\w}^{\ket\psi}_V\}
\ee
It follows that there is a one to one correspondence between $\ps{\w}^{\ket\psi}$ and $\ps{\mathbb{T}}^{|\psi\rangle}$.
\subsubsection{What About the Truth Values}
Let us reiterate the relation between $\ps{\mathbb{T}}^{|\psi\rangle}$ and $\ps{\w}^{\ket\psi}$ 
\begin{enumerate}
\item Since for each $V\in \mv(\mh)$,  $\ps{\w}^{\ket\psi}_V$ is the smallest projection operator in $V$ such that $\ps{\w}^{\ket\psi}_V\geq|\psi\rangle\langle\psi|$. 
Then if $\hat{\alpha}\in \ps{\mathbb{T}}^{|\psi\rangle}_V$ it follows that $\hat{\alpha}\geq \ps{\w}^{\ket\psi}_V$.
\item If $\hat{\alpha}\geq \ps{\w}^{\ket\psi}_V$, then from the definition of $\ps{\mathbb{T}}^{|\psi\rangle}$ it follows that $\hat{\alpha}\in \ps{\mathbb{T}}^{|\psi\rangle}_V$
\end{enumerate}
The above relations imply the following
\be
\hat{\alpha}\in \ps{\mathbb{T}}^{|\psi\rangle}_V\text{  iff  }\hat{\alpha}\geq \ps{\w}^{\ket\psi}_V
\ee

which, when applied to a daseinised proposition becomes
\be
\delta^o(\hat{P})_V\in \ps{\mathbb{T}}^{|\psi\rangle}_V\text{  iff  }\delta^o(\hat{P})_V\geq \ps{\w}^{\ket\psi}_V
\ee
Alternatively we can express this relation in terms of subsets of the state space $\us_V$ as follows:
\be
S_{\delta^o(\hat{P})_V}\in \ps{\mathbb{T}}^{|\psi\rangle}_V\text{  iff  }S_{\delta^o(\hat{P})_V}\supseteq S_{\ps{\w}^{\ket\psi}_V} 
\ee
Therefore equation \ref{ali:truth} can be written as 
\be
v(\ps{\delta(\hat{P})}\in \ps{\mathbb{T}}^{|\psi\rangle})_V=\{V^{'}\subseteq V|\delta^o(\hat{P})_V\geq \ps{\w}^{\ket\psi}_V\}
\ee
but this is precisely $v(\ps{\w}^{\ket\psi}\subseteq\ps{\delta(\hat{P})} )_V$, thus as presheaves
\be
\ps{\delta(\hat{P})}\in \ps{\mathbb{T}}^{|\psi\rangle}\text{  is equivalent to   }\ps{\w}^{\ket\psi}\subseteq\ps{\delta(\hat{P})} 
\ee
Therefore the truth values as computed with respect to the pseudo state or with respect to the truth object are exactly the same:
\be
v(\ps{\w}^{\ket\psi}\subseteq\ps{\delta(\hat{P})} )=v(\ps{\delta(\hat{P})}\in \ps{\mathbb{T}}^{|\psi\rangle})
\ee

\subsection{Example}
We will now construct the truth object for the 4 dimensional Hilbert space $\mh(\Cl^4)$ as defined with respect to the state $|\psi\rangle=(1,0,0,0)$. In our analysis we will only consider the maximal algebra $V:=lin_{\Cl}(\hat{P}_1,\hat{P}_2, \hat{P}_3, \hat{P}_4)$ and all its subalgebras. We then have:\\
\ba
\ps{\mathbb{T}}_V^{|\psi\rangle}&=&\{\hat{\alpha}\in \uo_V|\langle\psi|\hat{\alpha}|\psi\rangle=1\}\\
&=&\{\hat{P}_1, \hat{P}_1+\hat{P}_{j\in\{2,3,4\}},  \hat{P}_1+\hat{P}_{l\in\{2, 3\}}+\hat{P}_{k\in\{ 3,4\}},\hat{P}_1+\hat{P}_2+\hat{P}_3+\hat{P}_4\}\\
\mathbb{T}_{V_{\hat{P}_1}}^{|\psi\rangle}&=&\{\hat{P}_1,\hat{P}_1+\hat{P}_2+\hat{P}_3+\hat{P}_4 \}\\
\mathbb{T}^{|\psi\rangle}_{V_{\hat{P}_2}}&=&\{\hat{P}_1+\hat{P}_3+\hat{P}_4,\hat{P}_1+\hat{P}_2+\hat{P}_3+\hat{P}_4\}\\
\mathbb{T}^{|\psi\rangle}_{V_{\hat{P}_3}}&=&\{\hat{P}_1+\hat{P}_2+\hat{P}_4,\hat{P}_1+\hat{P}_2+\hat{P}_3+\hat{P}_4\}\\
\mathbb{T}^{|\psi\rangle}_{V_{\hat{P}_4}}&=&\{\hat{P}_1+\hat{P}_2+\hat{P}_3,\hat{P}_1+\hat{P}_2+\hat{P}_3+\hat{P}_4\}\\
\mathbb{T}_{V_{\hat{P}_1,\hat{P}_2}}^{|\psi\rangle}&=&\{\hat{P}_1,\hat{P}_1+\hat{P}_2,\hat{P}_1+\hat{P}_3+\hat{P}_4,\hat{P}_1+\hat{P}_2+\hat{P}_3+\hat{P}_4\}\\
\mathbb{T}_{V_{\hat{P}_1,\hat{P}_3}}^{|\psi\rangle}&=&\{\hat{P}_1, \hat{P}_1+\hat{P}_3, \hat{P}_1+\hat{P}_2+\hat{P}_4,\hat{P}_1+\hat{P}_2+\hat{P}_3+\hat{P}_4\}\\
\mathbb{T}_{V_{\hat{P}_1,\hat{P}_4}}^{|\psi\rangle}&=&\{\hat{P}_1, \hat{P}_1+\hat{P}_4,\hat{P}_1+\hat{P}_2+\hat{P}_3,\hat{P}_1+\hat{P}_2+\hat{P}_3+\hat{P}_4\}\\
\mathbb{T}_{V_{\hat{P}_2,\hat{P}_3}}^{|\psi\rangle}&=&\{ \hat{P}_1+\hat{P}_4,\hat{P}_1+\hat{P}_2+\hat{P}_4,\hat{P}_1+\hat{P}_3+\hat{P}_4,\hat{P}_1+\hat{P}_2+\hat{P}_3+\hat{P}_4\}\\
\mathbb{T}_{V_{\hat{P}_2,\hat{P}_4}}^{|\psi\rangle}&=&\{\hat{P}_1+\hat{P}_3,\hat{P}_1+\hat{P}_2+\hat{P}_3,\hat{P}_1+\hat{P}_3+\hat{P}_4,\hat{P}_1+\hat{P}_2+\hat{P}_3+\hat{P}_4\}\\
\mathbb{T}_{V_{\hat{P}_3,\hat{P}_4}}^{|\psi\rangle}&=&\{\hat{P}_1+\hat{P}_2,\hat{P}_1+\hat{P}_2+\hat{P}_3,\hat{P}_1+\hat{P}_2+\hat{P}_4,\hat{P}_1+\hat{P}_2+\hat{P}_3+\hat{P}_4\}\\
\ea
The maps between the different truth objects are:
\ba
\ps{\mathbb{T}}^{|\psi\rangle}(i_{V,V_1})(\hat{P}_1)&=&\hat{P}_1\\
\ps{\mathbb{T}}^{|\psi\rangle}(i_{V,V_1})(\hat{P}_1+\hat{P}_{j\in\{2,3,4\}})&=&\hat{P}_1+\hat{P}_2+\hat{P}_3+\hat{P}_4\\
\ps{\mathbb{T}}^{|\psi\rangle}(i_{V,V_1})(\hat{P}_1+\hat{P}_{l\in\{2, 3\}}+\hat{P}_{k\in\{ 2,4\}})&=&\hat{P}_1+\hat{P}_2+\hat{P}_3+\hat{P}_4\\
\ps{\mathbb{T}}^{|\psi\rangle}(i_{V,V_1})(\hat{P}_1+\hat{P}_2+\hat{P}_{3}+\hat{P}_4)&=&\hat{P}_1+\hat{P}_2+\hat{P}_3+\hat{P}_4
\ea
The remaining maps are left as an exercise.\\
We can now define the truth values of the proposition $S_z\in[-3, -1]$, which has corresponding projector operator $\hat{P}:=\hat{E}[S_z\in[-3, -1]]=|\phi\rangle\langle\phi|$. This proposition is equivalent to the projection operator $\hat{P}_4$.\\
We then obtain the following:
\be
v(\ps{\delta(\hat{P}_4)}\in\ps{\mathbb{T}}^{|\psi\rangle})_V:=\{V_{\hat{P}_{i\in\{2,3\}}},V_{\hat{P}_{2}\hat{P}_{3}}\}
\ee
\be
v(\ps{\delta(\hat{P}_4)}\in\ps{\mathbb{T}}^{|\psi\rangle})_{V_{\hat{P}_{i\in\{2,3\}}}}:=\{V_{\hat{P}_i}\}
\ee
\be
v(\ps{\delta(\hat{P}_4)}\in\ps{\mathbb{T}}^{|\psi\rangle})_{V_{\hat{P}_{i\in\{1,4\}}}}:=\{\emptyset\}
\ee

\ba
v(\ps{\delta(\hat{P}_4)}\in\ps{\mathbb{T}}^{|\psi\rangle})_{V_{\hat{P}_1,\hat{P}_2}}&:=&\{V_{\hat{P}_2}\}\\
v(\ps{\delta(\hat{P}_4)}\in\ps{\mathbb{T}}^{|\psi\rangle})_{V_{\hat{P}_1,\hat{P}_3}}&:=&\{V_{\hat{P}_3}\}\\
v(\ps{\delta(\hat{P}_4)}\in\ps{\mathbb{T}}^{|\psi\rangle})_{V_{\hat{P}_1,\hat{P}_4}}&:=&\{\emptyset\}\\
v(\ps{\delta(\hat{P}_4)}\in\ps{\mathbb{T}}^{|\psi\rangle})_{V_{\hat{P}_2,\hat{P}_3}}&:=&\{V_{\hat{P}_2},V_{\hat{P}_3},V_{\hat{P}_2,\hat{P}_3}\}\\
v(\ps{\delta(\hat{P}_4)}\in\ps{\mathbb{T}}^{|\psi\rangle})_{V_{\hat{P}_2,\hat{P}_4}}&:=&\{V_{\hat{P}_2}\}\\
v(\ps{\delta(\hat{P}_4)}\in\ps{\mathbb{T}}^{|\psi\rangle})_{V_{\hat{P}_3,\hat{P}_4}}&:=&\{V_{\hat{P}_3}\}\\
\ea
\subsection{Example for Density Matrix}
Let us consider again a $4$ dimensional Hilbert space $\Cl^4$ representing our 2-spin system. We would like to consider a density matrix
\be
\rho=\sum_ip_i|\psi\rangle\langle\psi|
\ee
In particular we consider a situation in which $1/2$ get $S_z=2$ while $1/2$ get $S_z=-2$, thus our density matrix is
\[\hat{\rho}=1/2\begin{pmatrix} 1& 0& 0&0\\	
0&0&0 &0\\
0&0&0&0\\
0&0&0&0	
  \end{pmatrix}+
1/2\begin{pmatrix} 0& 0& 0&0\\	
0&0&0 &0\\
0&0&0&0\\
0&0&0&1
  \end{pmatrix}
  \]
We now consider the context $V=lin_{\Cl}(\hat{P}_1, \hat{P}_2, \hat{P}_3, \hat{P}_4)$ and compute
\be
\ps{\mathbb{T}}^{\rho})_{V}=\{\hat{\alpha}\in \uo_V|tr(\hat{\rho}\hat{\alpha})=1\}
\ee
By considering all possible operators in $V$ we obtain
\be
\ps{\mathbb{T}}^{\rho})_{V}=\{(\hat{P}_1+\hat{P}_2), (\hat{P}_1+\hat{P}_2+\hat{P}_4),  (\hat{P}_1+\hat{P}_3+\hat{P}_4),  (\hat{P}_1+\hat{P}_2+\hat{P}_3+\hat{P}_4)\}
\ee
For context $V_{\hat{P}_1,\hat{P}_2}=lin_{\Cl}(\hat{P}_1, \hat{P}_2, \hat{P}_3+ \hat{P}_4)$ we obtain
\be
\ps{\mathbb{T}}^{\rho})_{V_{\hat{P}_1,\hat{P}_2}}=\{(\hat{P}_1+\hat{P}_3+\hat{P}_4),  (\hat{P}_1+\hat{P}_2+\hat{P}_3+\hat{P}_4)\}
\ee
For context $V_{\hat{P}_2,\hat{P}_3}=lin_{\Cl}(\hat{P}_2, \hat{P}_3, \hat{P}_1+ \hat{P}_4)$ we obtain
\be
\ps{\mathbb{T}}^{\rho}_{V_{\hat{P}_2,\hat{P}_3}}=\{(\hat{P}_1+\hat{P}_2), (\hat{P}_1+\hat{P}_2+\hat{P}_4),  (\hat{P}_1+\hat{P}_3+\hat{P}_4),  (\hat{P}_1+\hat{P}_2+\hat{P}_3+\hat{P}_4)\}
\ee
For context $V_{\hat{P}_3,\hat{P}_4}=lin_{\Cl}(\hat{P}_3, \hat{P}_4, \hat{P}_1+ \hat{P}_2)$ we obtain
\be
\ps{\mathbb{T}}^{\rho}_{V_{\hat{P}_3,\hat{P}_4}}=\{(\hat{P}_1+\hat{P}_2+\hat{P}_4), (\hat{P}_1+\hat{P}_2+\hat{P}_3+\hat{P}_4)\}
\ee
For context $V_{\hat{P}_1}=lin_{\Cl}(\hat{P}_1, \hat{P}_2+ \hat{P}_3+\hat{P}_4)$ we obtain
\be
\ps{\mathbb{T}}^{\rho}_{V_{\hat{P}_1}}=\{(\hat{P}_1+\hat{P}_2+\hat{P}_3+\hat{P}_4)\}
\ee
For context $V_{\hat{P}_2}=lin_{\Cl}(\hat{P}_2, \hat{P}_1+ \hat{P}_3+\hat{P}_4)$ we obtain
\be
\ps{\mathbb{T}}^{\rho}_{V_{\hat{P}_2}}=\{((\hat{P}_1+\hat{P}_3+\hat{P}_4), (\hat{P}_1+\hat{P}_2+\hat{P}_3+\hat{P}_4)\}
\ee
Morphisms:
\ba
\ps{\mathbb{T}}^{\rho}_V&\rightarrow&\ps{\mathbb{T}}^{\rho}_{V_{\hat{P}_1,\hat{P}_2}}\\
(\hat{P}_1+\hat{P}_2)&\mapsto&(\hat{P}_1+\hat{P}_3+\hat{P}_4)\\
(\hat{P}_1+\hat{P}_2+\hat{P}_4)&\mapsto&(\hat{P}_1+\hat{P}_2+\hat{P}_3+\hat{P}_4)\\
(\hat{P}_1+\hat{P}_3+\hat{P}_4) &\mapsto&(\hat{P}_1+\hat{P}_2+\hat{P}_3+\hat{P}_4)\\
(\hat{P}_1+\hat{P}_2+\hat{P}_3+\hat{P}_4)&\mapsto&(\hat{P}_1+\hat{P}_2+\hat{P}_3+\hat{P}_4)
\ea
\ba
\ps{\mathbb{T}}^{\rho}_V&\rightarrow&\ps{\mathbb{T}}^{\rho}_{V_{\hat{P}_3,\hat{P}_4}}\\
(\hat{P}_1+\hat{P}_2)&\mapsto&(\hat{P}_1+\hat{P}_2+\hat{P}_4)\\
(\hat{P}_1+\hat{P}_2+\hat{P}_4)&\mapsto&(\hat{P}_1+\hat{P}_2+\hat{P}_4)\\
(\hat{P}_1+\hat{P}_3+\hat{P}_4) &\mapsto&(\hat{P}_1+\hat{P}_2+\hat{P}_3+\hat{P}_4)\\
(\hat{P}_1+\hat{P}_2+\hat{P}_3+\hat{P}_4)&\mapsto&(\hat{P}_1+\hat{P}_2+\hat{P}_3+\hat{P}_4)
\ea
Let us now consider the proposition $S_z\in [1.3, 2.3]$ represented by the projection operator $\hat{P}_1$ we now want to compute

\ba
v(\ps{\delta(\hat{P})}\in \ps{\mathbb{T}}^{\rho})_V&:=&\{V^{'}\subseteq V|\ps{\delta(\hat{P})}_{V^{'}}\in \ps{\mathbb{T}}^{\rho}_{V^{'}}\}\\
&=&\{V^{'}\subseteq V|tr(\rho\ps{\delta(\hat{P})}_{V^{'}})=1\}
\ea
We consider context $V$, here the daseinised proposition is simply itself: $\delta^o(\hat{P}_1)=\hat{P}_1$ but $\hat{P}_1\notin\ps{\mathbb{T}}^{\rho}_{V}$ ( $tr(\hat{\rho}\hat{P}_1)\neq 1$) thus we need to go to a smaller algebra, i.e. we need to generalise our proposition. 

Lets consider $V_{\hat{P}_3,\hat{P}_4}$ in this context we get $\delta^o(\hat{P}_1)=\hat{P}_1+\hat{P}_2$ but again $\hat{P}_1+\hat{P}_2\notin\ps{\mathbb{T}}^{\rho}_{V_{\hat{P}_3,\hat{P}_4}}$ ($tr(\hat{\rho}(\hat{P}_1+\hat{P}_2)\neq 1$).

On the other hand for $V_{\hat{P}_2,\hat{P}_3}$ we obtain $\delta^o(\hat{P}_1)=\hat{P}_1+\hat{P}_4$. In this case $\hat{P}_1+\hat{P}_4\in \ps{\mathbb{T}}^{\rho}_{V_{\hat{P}_2,\hat{P}_3}}$ ($tr(\hat{\rho}(\hat{P}_1+\hat{P}_4)=1$).

In fact we get 
\be
v(\ps{\delta(\hat{P})}\in \ps{\mathbb{T}}^{\rho})_V=\{V_{\hat{P}_2,\hat{P}_3}, V_{\hat{P}_2}, V_{\hat{P}_3}\}
\ee

\chapter{Lecture 13}

I thins lecture I will describe the topos analogue of the real numbers. I will then introduce a new presheaf called inner presheaf which is related to the process of inner daseinisation. Such presheaves will be used to define physical quantities in topos quantum theory

\section{Topos Representation of the Type Symbol R}
In the topos $\Sets^{\mv(\mh)}$ the representation of the quantity value object $\mathcal{R}$ is given by the following presheaf:
\begin{Definition}
The presheaf $\ps{\Rl^{\leftrightarrow}}$ has as
\begin{enumerate}
\item [i)] Objects\footnote{A map $\mu:\downarrow V\rightarrow\Rl$ is said to be order reversing if $V^{'}\subseteq V$ implies that $\mu(V^{'})\leq\mu(V)$. A map $\nu:\downarrow V\rightarrow\Rl$ is order reversing if $V^{'}\subseteq V$ implies that $\nu(V^{'})\supseteq \nu(V)$.}: 
\be
\ps{\Rl^{\leftrightarrow}}_V:=\{(\mu,\nu)|\mu,\nu:\downarrow V\rightarrow\Rl|\mu\text{ is order preserving },\nu\text{ is order reversing }; \mu\leq\nu\}
\ee
\item[ii)] Arrows: given two contexts $V^{'}\subseteq V$ the corresponding morphism is
\ba
\ps{\Rl^{\leftrightarrow}}_{V,V^{'}}:\ps{\Rl^{\leftrightarrow}}_V&\rightarrow&\ps{\Rl^{\leftrightarrow}}_{V^{'}}\\
(\mu,\nu)&\mapsto&(\mu_{|V^{'}},\nu_{|V^{'}})
\ea
\end{enumerate}
\end{Definition}
This presheaf is where physical quantities take their values, thus it has the same role as the reals in classical physics.\\
The reason why the quantity value object is defined in terms of order reversing and order preserving functions is because, in general, in quantum theory one can only give approximate values to the quantities. In particular, in most cases, the best approximation to the value of a physical quantity one can give is the smallest interval of possible values of that quantity. \\
\\
Let us analyse the presheaf $\ps{\Rl^{\leftrightarrow}}$ in more depth. To this end we assume that we want to define the value of a physical quantity $A$ given a state $\psi$. If $\psi$ is an eigenstate of $A$, then we would get a sharp value of the quantity $A$ say $a$. If $\psi$ is not an eigenstate, then we would get a certain range $\Delta$ of values for $A$, where $\Delta\in sp(\hat{A})$. \\
Let us assume that $\Delta=[a,b]$, then what the presheaf $\ps{\Rl^{\leftrightarrow}}$ does is to single out this extreme points $a$ and $b$, so as to give a range (unsharp) of values for the physical quantity $A$. Obviously, since we are in the topos of presheaves, we have to define each object contextually, i.e. for each context $V\in\mv(\mh)$. 
It is precisely to accommodate this fact that the pair of order reversing and order preserving functions was chosen to define the extreme values of our intervals.\\
To understand this we consider a context $V$, such that the self-adjoint operator $\hat{A}$, which represents the physical quantity $A$, does belong to $V$ and such that the range of values of $A$ at $V$ is $[a,b]$. 
If we then consider the context $V^{'}\subseteq V$, such that $\hat{A}\notin V$, we will have to approximate $\hat{A}$ so as to fit $V^{'}$. The precise way in which self-adjoint  operators are approximated will be described later on, however, such an approximation will inevitably coarse-grain $\hat{A}$, i.e. it will make it more general. \\
It follows that the range of possible values of such an approximated operator $\hat{A}_1$ will be bigger. 
Therefore the range of values of $\hat{A}_1$ at $V^{'}$ will be, $[c,d]\supseteq [a,b]$ where $c\leq a$ and $d\geq b$. 
These relations between the extremal points can be achieved by the presheaf $\ps{\Rl^{\leftrightarrow}}$ through the order reversing and order preserving functions. 
Specifically, given that $a:=\mu(V)$, $b:=\nu(V)$ since $V^{'}\subseteq V$, it follows that $c:=\mu(V^{'})\leq\mu(V)$ ($\mu$ being order preserving) and $d:=\nu(V^{'})\geq \nu(V)$ ($\nu$ being order reversing). Moreover, the fact that, by definition, $\mu(V)\leq \nu(V)$, it implies that as one goes to smaller and smaller contexts the intervals $(\mu(V)_i,\nu(V)_i)$ keep getting bigger or stay the same.
\section{Inner Daseinisation}
We will now introduce a different kind of daseinisation called inner daseinisation. The role of such daseininsation is to approximate projection operators but from below. In fact while outer daseinisation would pick the smallest projection operator implied by the original projection operators (hence approximation from above), inner daseinisation picks the biggest projection operators which implies the original one (hence approximation form below). So how is such daseinisation defined?
\begin{Definition}
Given a projection operator $\hat{P}$, for each context $V\in \mv(\mh)$, inner daseinisation is defined as:
\be
\delta^i(\hat{P})_V=\bigvee\{\hat{\alpha}\in P(V)|\hat{\alpha}\leq \hat{P}\}
\ee

\end{Definition}
It follows that $\delta^i(\hat{P})_V$ is the best approximation in $V$ of $\hat{P}$ obtained by taking the `largest' projection operator in $V$ which implies $\hat{P}$. From the definition, given $V^{'}\subseteq V$ then 
\be
\delta^i(\delta^i(\hat{P}_{V}))_{V^{'}}=\delta^i(\hat{P})_{V^{'}}\leq \delta^i(\hat{P})_V
\ee
This implies that 
\be
\delta^i(\hat{P})_V\leq \delta^o(\hat{P})_V
\ee

Given this definition we can construct the analogue of the inner presheaf which is the analogue of the outer presheaf as follows
\begin{Definition}
The inner presheaf $\underline{I}$ is defined over the category $\mv(\mh)$ as
follows:
\begin{enumerate}
\item  Objects: for each $V\in \mv(\mh)$ we obtain $\underline{I}_V := P(V )$.
\item Morphisms: for each $ i_{V^{'}V} : V^{'}\subseteq V$ the corresponding presheaf map is 
\ba
\underline{I}(i_{V^{'}V }) : \underline{I}_V&\rightarrow& \underline{I}_{V^{'}}\\
\hat{\alpha}&\mapsto& \underline{I}(i_{V^{'}V })(\hat{\alpha}) := \delta^i(\hat{\alpha})_{V^{'}}
\ea
for all $\hat{\alpha}\in P(V)$.
\end{enumerate}

\end{Definition}

Similarly as was the case for the outer presheaf, the assignment 
\be
\hat{P}\mapsto\{\delta^i(\hat{P})_V|V\in \mv(\mh)\}
\ee
defines a global element of $\underline{I}$. Thus we can write inner daseinisation as
\ba
\delta^i:P(\mh)&\rightarrow &\Gamma(\underline{I})\\
\hat{P}&\mapsto&\{\delta^i(\hat{P})_V|V\in \mv(\mh)\}
\ea

Equivalently we can define inner daseinisation as a mapping from projection operators to sub-objects of the spectral presheaf 
\ba
\delta^i:P(\mh)&\rightarrow &Sub_{cl}(\us)\\
\hat{P}&\mapsto&\{\underline{T}_{\delta^i(\hat{P})_V}|V\in \mv(\mh)\}
\ea
Where, for each $V\in \mv(\mh)$, $\underline{T}_{\delta^i(\hat{P})_V}:=\{\lambda\in \us_V|\lambda(\delta^i(\hat{P})_V)=1\}$. The collection of all these clopen subsets forms a clopen sub-objects $\ps{\delta^i(\hat{P})}\subseteq \us$

It is interesting to see how the inner daseinisation helps in the definition of the negation operation. In particular we would like to understand the presheaf
\be
\ps{\delta(\neg \hat{P})} \text{  where  } \neg\hat{P}=\hat{1}-\hat{P}\text{ in the lattice }P(\mh)
\ee

In order to define $\ps{\delta(\neg \hat{P})}$ one makes use of both inner and outer daseinisation obtaining for each context $V\in \mv(\mh)$
\be
\uo(i_{V^{'}V })(\neg\hat{P}) = \neg\underline{I}(i_{V^{'}V} )(\hat{\alpha})
\ee
\begin{proof}
We want to show that 
\be\label{equ:neg}
\uo(i_{V^{'}V })(\neg\hat{P}) = \neg\underline{I}(i_{V^{'}V} )(\hat{P})
\ee
where $\hat{P}\in V$.

Applying the definition to the left hand side we obtain
\be
\uo(i_{V^{'}V })(\neg\hat{P})=\delta^o(\neg\hat{P})_V^{'}=\delta^o(\hat{1}-\hat{P})_V^{'}=\bigwedge\{\hat{\beta}\in P(V^{'})|\hat{\beta}\geq \hat{1}-\hat{P}\}
\ee
Let us assume that the right hand side of the above equation is the projection operator $\hat{\beta}$. Then such projection operators is the smallest projection operator in $V^{'}$ which is bigger than $\hat{P}$, i.e. $\hat{\beta}\geq\hat{P}$. Therefore $\hat{1}-\hat{\beta}=\hat{\beta}^c$ is the biggest projection operator in $P(V^{'})$ such that $\hat{1}-\hat{\beta}=\hat{\beta}^c\leq \hat{P}$. Thus $\hat{\beta}^c=\delta^i(\hat{P})_{V^{'}}$.

If we now consider the right hand side of \ref{equ:neg} we obtain
\be
\neg\underline{I}(i_{V^{'}V} )(\hat{P})=\hat{1}-\delta^i(\hat{P})_{V^{'}}=\hat{1}-\hat{\beta}^c=\hat{1}-\hat{1}+\hat{\beta}=\hat{\beta}
\ee
Hence \ref{equ:neg} is proved.
\end{proof}

It follows that for each $\hat{P}\in P(\mh)$ and each context $V\in \mv(\mh)$ we obtain
\be
\delta^o(\neg\hat{P})_V = \hat{1}-\delta^i(\hat{P})_V 
\ee

\section{Topos Representation of Physical Quantities}
We will now define the topos analogue of a physical quantity. Since we are trying to render quantum theory more realist we will mimic, in the context of the topos $\Sets^{\mv(\mh)^{op}}$, the way in which physical quantities are defined in classical theory. To this end we recall that in classical theory, physical quantities are represented by functions from the state space to the reals, i.e.  each physical quantity, $A$, is represented by a map $f_A:S\rightarrow \Rl$. Similarly we want to define, in topos quantum theory, physical quantities as a functor $A:\us\rightarrow \underline{\Rl}^{\leftrightarrow}$. Before attempting such a definition we have to do a small digression in measure theory.

\subsection{Spectral Decomposition}
In quantum theory, observables, i.e. things that we measure, are identified with self adjoint operators. Thus, the study of their eigenvalues and how these eigenvalues are measured is very important. Here is where the spectral theorem and consequently the spectral decomposition come into the picture. 

Given a self adjoint operator $\hat{A}$, the spectral theorem essentially tells us that it is possible to write $\hat{A}$ as
\be\label{equ:specdeco}
\hat{A}=\int_{\sigma(A)}\lambda d\hat{E}^{\hat{A}}_{\lambda}
\ee 
Such an expression is called the spectral decomposition of $\hat{A}$. Here $\sigma(A)\subseteq \Rl$ represents the spectrum of the operator $\hat{A}$ and $\{\hat{E}^{\hat{A}}_{\lambda}|\lambda\in\sigma(\hat{A})\}$ is the spectral family of $\hat{A}$. Such family determines the set of spectral projection operators of $\hat{A}$, namely $\hat{E}[A\in\Delta]$ as follows
\be
\hat{E}[A\in\Delta]:=\int_{\Delta}d\hat{E}^{\hat{A}}_{\lambda}
\ee
where $\Delta$ is a Borel subset of the spectrum $\sigma(\hat{A})$ of $A$. What such projection operators represent are subspaces of the Hilbert space for which the states $|\psi\rangle$ have a value of $A$ which lies in the interval $\Delta$. Therefore, if $a$ is a value in the discrete spectrum of $\hat{A}$, then
\be
\hat{E}[A\in \{a\}]
\ee
projects onto the eignestates of $\hat{A}$ with eignevalue $a$. 
 
In this setting, given a bounded Borel function $f:\Rl\rightarrow\Rl$, then the `transformed' operator $f(\hat{A})$ has spectral decomposition given by
\be
f(\hat{A})=\int_{\sigma(A)}f(\lambda) d\hat{E}^{\hat{A}}_{\lambda}
\ee

The formal theorem for the spectral decomposition for bounded operators is as follows
\begin{Theorem}
Given a bounded self adjoint operator $\hat{A}$ on $\mh$ there exists a family of projection operators $\{\hat{E}_{\lambda}|\lambda\in \Rl\}$ called the spectral family of $A$ such that the following conditions are satisfied:
\begin{enumerate}
\item $\hat{E}_{\lambda}\leq\hat{E}_{\lambda^{'}}$ for $\lambda\leq \lambda^{'}$
\item The net $\lambda\rightarrow \hat{E}_{\lambda}$ of projection operators in the lattice $P(\mh)$ is bounded
above by $\hat{1}$, and below by $\hat{0}$, i.e. 
\ba
\lim_{\lambda\rightarrow \infty}\hat{E}_{\lambda}&=&\hat{1}\\
\lim_{\lambda\rightarrow -\infty}\hat{E}_{\lambda}&=&\hat{0}
\ea
\item $\hat{E}_{\lambda+0}=\hat{E}_{\lambda}$
\item $\hat{A}=\int_{\sigma(A)\subseteq \Rl}\lambda d\hat{E}^{\hat{A}}_{\lambda}$
\item The map $\lambda\rightarrow \hat{E}_{\lambda}$ is right-continuous\footnote{Could equivalently require left continuity.}:
\be
\bigwedge_{\epsilon\downarrow 0}\hat{E}_{\lambda+\epsilon}=\hat{E}_{\lambda}
\ee
for all $\lambda\in \Rl$.
\end{enumerate}
\end{Theorem}
Given the spectral decomposition it is possible to define a different type of ordering\footnote{Recall that the standard operator ordering is given as follows: $\hat{A}\leq\hat{B}$ iff $\langle\psi|\hat{A}|\psi\rangle\leq\langle\psi|\hat{B}|\psi\rangle$ for all $\langle\psi|-|\psi\rangle:N_{sa}\rightarrow\Cl$, where $N_{sa}$ represents the collection of self adjoint operators on the Hilbert space $\mh$.} on operators called \emph{spectral ordering}
The reason why this order was chosen rather than the standard operator ordering is because the former preserves the relation between the spectrums of the operator, i.e. if $\hat{A}\leq_s\hat{B}$, then $sp(\hat{A})\subseteq sp(\hat{B})$. This feature will reveal itself very important when defining the values for physical quantitates. \\
We will now define what the spectral order is. Consider two self adjoint operators $\hat{A}$ and $\hat{B}$ with spectral families $(\hat{E}^{\hat{A}}_r)_{r\in\Rl}$ and $(\hat{E}^{\hat{B}}_r)_{r\in\Rl}$, respectively. The spectral order is then defined as follows:
\be
\hat{A}\leq_s\hat{B}\;\;\text{ iff }\;\;\forall r\in\Rl\;\;\;\hat{E}^{\hat{A}}_r\geq\hat{E}^{\hat{B}}_r
\ee
From the definition it follows that the spectral order implies the usual order between operators, i.e. if  $\hat{A}\leq_s\hat{B}$ then $\hat{A}\leq\hat{B}$, but the converse is not true.Thus the spectral order is a
partial order on $B(\mh)_{sa}$ (the
self-adjoint operators in $B(\mh)$) that is coarser than the usual one.

It is easy to see that the spectral ordering defines a genuine partial ordering on $B(\mh)_{sa}$. In fact, each bounded set $S$ of self-adjoint operators has a minimum $\bigwedge S\in B(\mh)_{sa}$ and a maximum  $\bigvee S\in B(\mh)_{sa}$ with respect to the spectral order, i.e. $B(\mh)_{sa}$ is a Ôboundedly completeÕ
lattice with respect to the spectral order.

If we defined the spectral oder for projection operators then we would obtain exactly the usual partial ordering. In fact if we consider two projection operators $\hat{P}$ and $\hat{Q}$ then their spectral decomposition is
\be
\hat{E}^{\hat{P}}_{\lambda}=\begin{cases}\hat{0}\text{ if }\lambda< 0\\
\hat{1}-\hat{P}\text{ if } 0\leq \lambda <1\\
\hat{P}\text{ if }1\leq\lambda
\end{cases}
\ee
It follows that 
\be
\hat{P}\leq_s\hat{Q}\text{ iff }\hat{P}\leq \hat{Q}
\ee

Thus the spectral order coincides with the usual partial order on $P(\mh)$. 

Moreover, if $\hat{A}$ and $\hat{B}$ are self-adjoint operators such that (i) either $\hat{A}$ or $\hat{B}$ is a projection,
or (ii) $[ \hat{A}, \hat{B}]=0$, then $\hat{A}\leq _s\hat{B}$ iff $\hat{A}\leq \hat{B}$.

\subsection {Daseinisation of Self Adjoint Operators}
Given the discussion above regarding the spectral order we are now ready to define the concept of both inner and outer daseinisation of self adjoint operators. 
To this end let us consider a self adjoint operator $\hat{A}$ and a context $V$, such that $\hat{A}\notin V_{sa}$ ($V_{sa}$ denotes the collection of self adjoint operators in $V$). We then need to approximate $\hat{A}$, so as to be in $V$. However, since we eventually want to define an interval of possible values of $\hat{A}$ at $V$, we will approximate $\hat{A}$ both from above and from below. In particular, we will consider the pair of operators 
\ba\label{ali:order}
\delta^o(\hat{A})_V&:=&\bigwedge\{\hat{B}\in V_{sa}|\hat{A}\leq_s\hat{B}\}\\
\delta^i(\hat{A})_V&:=&\bigvee\{\hat{B}\in V_{sa}|\hat{A}\geq_s\hat{B}\}\nonumber
\ea
In the above equation $\delta^o(\hat{A})_V$ represents the smallest self adjoint operator in $V$, which is spectrally larger or equal to $\hat{A}$, while $\delta^i(\hat{A})_V$ represents the biggest self adjoint operator in $V_{sa}$, that is spectrally smaller or equal to $\hat{A}$. The process represented by $\delta^i$ is what we defined as \emph{inner dasainisation}, while $\delta^o$ represents the \emph{outer daseinisation}.\\
From the definition of $\delta^i(\hat{A})_V$ it follows that if $V^{'}\subseteq V$ then $\delta^i(\hat{A})_{V^{'}}\leq_s\delta^i(\hat{A})_V$. Moreover, from \ref{ali:order} it follows that:
\be
sp(\hat{A})\subseteq sp(\delta^i(\hat{A})_V),\;\;\;\;\;sp(\delta^o(\hat{A})_V)\subseteq sp(\hat{A})
\ee
which, as mentioned above,  is precisely the reason why the spectral order was chosen.\\
GIven the spectral decomposition of $\hat{A}$: $\hat{A}=\int_{\Rl}\lambda d(\hat{E}^{\hat{A}}_{\lambda}$, if we apply the definition of spectral order to inner and outer daseinisation we obtain:
\be
\delta^i(\hat{A})_{V}\leq_s\delta^o(\hat{A})_V\;\;\text{ iff }\forall r\in\Rl\;\;\;\hat{E}^{\delta^i(\hat{A})_{V}}_r\geq\hat{E}^{\delta^o(\hat{A})_{V}}_r
\ee
Since $\delta^i(\hat{A})_{V}\leq\delta^o(\hat{A})_V$, it follows that for all $r\in\Rl$
\ba
\hat{E}^{\delta^i(\hat{A})_{V}}_r&:=&\delta^o(\hat{E}^{\hat{A}}_r)_V\\
\hat{E}^{\delta^o(\hat{A})_{V}}_r&:=&\delta^i(\hat{E}^{\hat{A}}_r)_V
\ea
The spectral family described by the second equation is right-continuous, while the first is not. To overcome this problem we define the following:
\be
\hat{E}^{\delta^i(\hat{A})_{V}}_r:=\bigwedge_{s> r}\delta^o(\hat{E}^{\hat{A}}_s)_V
\ee

What this amounts to is that given a spectral family $\lambda\rightarrow\hat{E}_{\lambda}$ then it is possible to construct, for each $V\in \mv(\mh)$ other two spectral families
\ba
\lambda&\rightarrow &\bigwedge_{s> r}\delta^o(\hat{E}^{\hat{A}}_s)\\
\lambda&\rightarrow&\delta^i(\hat{E}_{\lambda})\\
\ea

These will be precisely the spectral families we will utilise to define the inner and outer daseininsation of self-adjoint operators. In particular, we can now write inner and outer diaseinisation for self adjoint operators as follows:
\ba
\delta^o(\hat{A})_V&:=&\int_{\Rl}\lambda d\Big(\delta^i(\hat{E}_{\lambda}^{\hat{A}})\Big)\\
\delta^i(\hat{A})_V&:=&\int_{\Rl}\lambda d\Big(\bigwedge_{\mu>\lambda}\delta^o(\hat{E}_{\mu}^{\hat{A}})\Big)
\ea
Note that in these contexts the above integrals should be interpreted as Riemann Stieltjes integrals, which explains the condition of right-continuity: \be
\hat{A}=\int_{\Rl}\lambda d\hat{P}_{\lambda}\text{    means   } \langle\psi|\hat{A}|\psi\rangle=\int_{\Rl}\lambda \langle\psi|d\hat{P}_{\lambda}|\psi\rangle
\ee

We can now define the analogues of the inner and outer presheaves but for self adjoint operators rather than projection operators. 
\begin{Definition}
The outer de Groote presheaf $\underline{\mathbb{O}}:\mv(\mh)\rightarrow Sets$ is defined on:
\begin{enumerate}
\item Objects: for each $V\in \mv(\mh)$ we define $\od_V:=V_{sa}$ (the collection of self adjoin operators in $V$).
\item Morphisms: given a map $i:V^{'}\subseteq V$ the corresponding prehseaf map is $\od(i_{V^{'}V}):\od_V\rightarrow\od_{V^{'}}$ which is defined as follows
\ba
\od(i_{V^{'}V})(\hat{A})&:=&\delta^o(\hat{A})_{V^{'}}\\
&=&\int_{\Rl}\lambda d\big(\delta^i(\hat{E}_{\lambda}^{\hat{A}})_{V^{'}}\big)\\
&=&\int_{\Rl}\lambda d\big(\underline{I}(i_{V^{'}V})(\hat{E}^{\hat{A}}_{\lambda})\big)
\ea
for all $\hat{A}\in \od_V$. 
\end{enumerate}
\end{Definition}
In the above definition,  $\underline{I}$ is the inner presheaf for projeciton operators defined in previous sections. On the other hand the inner de Groote presheaf is:
\begin{Definition}
The inner de Groote presheaf $\ui:\mv(\mh)\rightarrow Sets$ is defined as follows:
\begin{enumerate}
\item Objects: for each $V\in \mv(\mh)$ we obtain $\ui_V:=V_{sa}$.
\item Morphisms: given a map $i:V^{'}\subseteq V$ then the corresponding presheaf map is $\ui(i_{V^{'}V}):\ui_V\rightarrow \ui_{V^{'}}$ such that
\ba
\ui(i_{V^{'}V})(\hat{A})&:=&\delta^i(\hat{A})_{V^{'}}\\
&=&\int_{\Rl}\lambda d\big(\bigwedge_{\mu>\lambda}(\delta^o(\hat{E}^{\hat{A}}_{\mu})_{V^{'}})\big)\\
&=&\int_{\Rl}\lambda d\big(\bigwedge_{\mu>\lambda}(\uo(i_{V^{'}V})(\hat{E}^{\hat{A}}_{\mu})\big)
\ea 
For all $\hat{A}\in \ui_V$ (here $\uo$ is the outer presheaf for projection operators).

\end{enumerate}

\end{Definition}
As can be expected, the inner and outer daseinisation map give rise to global elements of the inner and outer de Groote presheaves:
\ba
\delta^o(\hat{A}):V&\mapsto&\delta^o(\hat{A})_V\\
\delta^i(\hat{A}):V&\mapsto &\delta^i(\hat{A})_V
\ea
We then reach the following theorem
\begin{Theorem}
The maps
\ba
\delta^i:B(\mh)_{sa}&\rightarrow& \Gamma\ui\\
\hat{A}&\mapsto&\delta^i(\hat{A})
\ea
and 
\ba
\delta^o:B(\mh)_{sa}&\rightarrow& \Gamma\od\\
\hat{A}&\mapsto&\delta^o(\hat{A})
\ea
are injective.
\end{Theorem}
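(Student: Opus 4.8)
The plan is to prove injectivity of each map by exhibiting an explicit left inverse, mirroring the strategy already used in the excerpt to prove that the projection-operator daseinisation maps $\delta^o, \delta^i : P(\mh) \to \Gamma(\uo), \Gamma(\underline{I})$ are injective. Recall that there the key identity was $\hat{P} = \bigwedge_{V \in \mv(\mh)} \delta^o(\hat{P})_V$, which holds because there always exists a context $V$ with $\hat{P} \in V$, so that $\delta^o(\hat{P})_V = \hat{P}$ and this is the spectrally smallest value among all contexts. I would establish the analogous recovery formula for self-adjoint operators: for any $\hat{A} \in B(\mh)_{sa}$,
\begin{equation}
\hat{A} = \bigvee_{V \in \mv(\mh)} \delta^i(\hat{A})_V = \bigwedge_{V \in \mv(\mh)} \delta^o(\hat{A})_V,
\end{equation}
where the lattice operations $\bigvee, \bigwedge$ are taken with respect to the \emph{spectral order} $\leq_s$, under which $B(\mh)_{sa}$ is a boundedly complete lattice (as stated in the excerpt).

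The proof of these recovery formulas is where the real content lies. First I would note that since $\hat{A}$ is self-adjoint, the commutative von Neumann algebra $V_{\hat{A}}$ it generates (together with $\hat 1$) is an object of $\mv(\mh)$, and within this context $\hat{A} \in (V_{\hat{A}})_{sa}$, so $\delta^i(\hat{A})_{V_{\hat{A}}} = \delta^o(\hat{A})_{V_{\hat{A}}} = \hat{A}$. Next, from the definitions in equation~(\ref{ali:order}), for every context $V$ we have $\delta^i(\hat{A})_V \leq_s \hat{A} \leq_s \delta^o(\hat{A})_V$, so $\hat{A}$ is an upper bound for the family $\{\delta^i(\hat{A})_V\}_V$ and a lower bound for $\{\delta^o(\hat{A})_V\}_V$. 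Since the value at $V_{\hat{A}}$ already equals $\hat{A}$, the spectral supremum of the inner family and the spectral infimum of the outer family both equal $\hat{A}$ exactly. I would carry out this argument via spectral families: using $\hat{E}^{\delta^o(\hat{A})_V}_r = \delta^i(\hat{E}^{\hat{A}}_r)_V$ and the right-continuous analogue for inner daseinisation, reducing the operator statement to the already-established recovery formula $\hat{E}^{\hat{A}}_r = \bigvee_V \delta^i(\hat{E}^{\hat{A}}_r)_V$ for each spectral projector, then reassembling via the Riemann–Stieltjes integral $\hat{A} = \int_{\Rl} \lambda \, d\hat{E}^{\hat{A}}_\lambda$.

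Given the recovery formula, injectivity is immediate: if $\delta^i(\hat{A}) = \delta^i(\hat{B})$ as global elements of $\ui$, then $\delta^i(\hat{A})_V = \delta^i(\hat{B})_V$ for all $V \in \mv(\mh)$, whence
\begin{equation}
\hat{A} = \bigvee_{V} \delta^i(\hat{A})_V = \bigvee_{V} \delta^i(\hat{B})_V = \hat{B},
\end{equation}
and the identical argument with $\bigwedge$ and $\delta^o$ handles the outer case.

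The main obstacle I anticipate is the careful handling of the spectral-order lattice suprema at the level of spectral families, in particular verifying that the right-continuity correction $\hat{E}^{\delta^i(\hat{A})_V}_r := \bigwedge_{s>r}\delta^o(\hat{E}^{\hat{A}}_s)_V$ does not spoil the recovery of $\hat{A}$ when taking the join over all contexts. One must check that the join over $V$ and the infimum over $s>r$ interchange correctly, and that the resulting spectral family is still right-continuous and generates exactly $\hat{A}$ rather than some spectrally larger operator. This interchange, together with confirming that $V_{\hat{A}}$ genuinely witnesses the extremum (so no strictly tighter spectral bound is produced by other contexts), is the delicate step; everything else follows formally from the bounded completeness of $(B(\mh)_{sa}, \leq_s)$ and the definitions of inner and outer daseinisation.
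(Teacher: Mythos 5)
Your proposal is correct and takes essentially the same route as the paper's own proof: both rest on the recovery formula $\hat{A}=\bigvee_{V\in\mv(\mh)}\delta^i(\hat{A})_V$ (dually $\bigwedge_V\delta^o(\hat{A})_V$), obtained because $\hat{A}\geq_s\delta^i(\hat{A})_V$ for all $V$ while the bound is attained at any context with $\hat{A}\in V_{sa}$, after which injectivity is immediate. The delicate step you anticipate (interchanging the join over contexts with the right-continuity correction of the spectral families) is not actually needed: once $\hat{A}$ is an upper bound of the family $\{\delta^i(\hat{A})_V\}_{V}$ in $(B(\mh)_{sa},\leq_s)$ and coincides with its member at $V_{\hat{A}}$, elementary lattice reasoning already forces the supremum to equal $\hat{A}$, which is exactly how the paper argues.
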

\begin{proof}
Given the definition of inner daseinisation we know that $\hat{A}\geq_s\delta^i(\hat{A})_V$ for each $V\in \mv(\mh)$. Therefore, since there exists a $V$ such that $\hat{A}\in V_{sa}$ it follows that
\be
\hat{A}=\bigvee _{V\in \mv(\mh)}\delta^i(\hat{A})_V
\ee
therefore, if $\delta^i(\hat{A})=\delta^i(\hat{B})$ it follows that
\be
\hat{A}=\bigvee _{V\in \mv(\mh)}\delta^i(\hat{A})_V=\bigvee _{V\in \mv(\mh)}\delta^i(\hat{B})_V=
\hat{B}
\ee
Similarly for outer daseinisation, if $\delta^o(\hat{A})=\delta^o(\hat{B})$ then 
\be
\hat{A}=\bigwedge _{V\in \mv(\mh)}\delta^i(\hat{A})_V=
\hat{B}=\bigwedge _{V\in \mv(\mh)}\delta^o(\hat{B})_V
\ee

\end{proof}
\subsection{Physical Quantities}
Given the definition of inner and outer daseinisation outlined in the previous section we can now define how physical quantities are represented in the topos framework of quantum theory.

A physical quantity $A$ with associated self adjoint operator $\hat{A}$ is represented by the map
\be
\breve{\delta}(\hat{A}):\us\rightarrow \ps{\Rl^{\leftrightarrow}}
\ee
which, at each context $V$, is defined as 
\ba
\breve{\delta}(\hat{A})_V:\us_V&\rightarrow& \ps{\Rl^{\leftrightarrow}}_V\\
\lambda&\mapsto&\breve{\delta}(\hat{A})_V(\lambda):=\big(\breve{\delta}^i(\hat{A})_V(\lambda), \breve{\delta}^o(\hat{A})_V(\lambda)\big)
\ea
Where $\breve{\delta}^o(\hat{A})_V$ is the order reversing function is defined by:

\be
\breve{\delta}^o(\hat{A})_V(\lambda):\downarrow V\rightarrow sp(\hat{A})
\ee
such that 
\ba
\big(\breve{\delta}^o(\hat{A})_V(\lambda)\big)(V^{'})&:=&\overline{\delta^o(\hat{A})_{V^{'}}}(\us(i_{V^{'}V})(\lambda))\\
&=&\overline{\delta^o(\hat{A})_{V^{'}}}(\lambda_{|V^{'}})\\
&=&\langle\lambda_{|V^{'}}, \delta^o(\hat{A})_{V^{'}}\rangle\\
&=&\langle\lambda, \delta^o(\hat{A})_{V^{'}}\rangle\\
&=&\lambda(\delta^o(\hat{A})_{V^{'}})
\ea
Here we have used the Gel'fand transform $\overline{\delta^o(\hat{A})_{V}}:\us_V\rightarrow \Rl$. 

The choice of order reversing functions was determined by the fact that, for all $V^{'}\subseteq V$ since $\delta^o(\hat{A})_{V^{'}}\geq \delta^o(\hat{A})_{V}$ then 
\be
\overline{\delta^o(\hat{A})_{V^{'}}}(\lambda_{|V^{'}})=\overline{\delta^o(\hat{A})_{V^{'}}}(\us(i_{V^{'}V})(\lambda))\geq \overline{\delta^o(\hat{A})_{V}}(\lambda)
\ee

On the other hand, the order preserving function is defined by:
\be
\breve{\delta}^i(\hat{A})_V(\lambda):\downarrow V\rightarrow sp(\hat{A})
\ee
such that 
\ba
\big(\breve{\delta}^i(\hat{A})_V(\lambda)\big)(V^{'})&:=&\overline{\delta^i(\hat{A})_{V^{'}}}(\us(i_{V^{'}V})(\lambda))\\
&=&\overline{\delta^i(\hat{A})_{V^{'}}}(\lambda_{|V^{'}})\\
&=&\langle\lambda_{|V^{'}}, \delta^i(\hat{A})_{V^{'}}\rangle\\
&=&\langle\lambda, \delta^i(\hat{A})_{V^{'}}\rangle\\
&=&=\lambda(\delta^i(\hat{A})_{V^{'}})
\ea
In this case the appropriate Gel'fand transform to use is $\overline{\delta^i(\hat{A})_{V^{'}}}:\us_V\rightarrow \Rl$. Here the choice of such an order preserving function is because, for $i:V^{'}\subseteq V$, since $\delta^i(\hat{A})_{V^{'}}\leq\delta^i(\hat{A})_V$ then
\be
\overline{\delta^i(\hat{A})_{V^{'}}}(\lambda_{|V^{'}})=\overline{\delta^o(\hat{A})_{V^{'}}}(\us(i_{V^{'}V})(\lambda))\leq \overline{\delta^i(\hat{A})_{V}}(\lambda)
\ee

The definition of $\breve{\delta}(\hat{A}):\us_V\rightarrow \Rl$ represents the mathematical implementation of the idea explained in the first section. There we saw that when going to smaller contexts the coarse graining of the self adjoint operators implied/induced an equivalent coarse graining in the interval of possible values for that operator. Therefore such an interval either becomes bigger or stays the same. This enlarging of the interval is precisely what is achieved by $\breve{\delta}(\hat{A})$. In fact as we go to smaller context $V^{'}\subseteq V$ the interval of possible values of the operator $\hat{A}$ which gets picked by $\breve{\delta}(\hat{A})$ becomes bigger or stays the same:
\be
[(\delta^i(\hat{A})_{V^{'}}(\lambda), \delta^o(\hat{A})_{V^{'}}(\lambda)]\geq[(\delta^i(\hat{A})_{V}(\lambda), \delta^o(\hat{A})_{V}(\lambda)]
\ee

We now need to show that indeed the map $\breve{\delta}(\hat{A}):\us\rightarrow \ps{\Rl^{\leftrightarrow}}$ is a well defined natural transformation. Therefore, given any $V^{'}\subseteq V$ we need to show that the following diagram commutes
\[\xymatrix{ 
\us_V\ar[rr]^{\breve{\delta}(\hat{A})_V}\ar[dd]_{\us(i_{V^{'}V})}&&\ps{\Rl^{\leftrightarrow}}_V\ar[dd]^{\ps{\Rl^{\leftrightarrow}}(i_{V^{'}V})}\\
&&\\
\us_{V^{'}}\ar[rr]^{\breve{\delta}(\hat{A})_{V^{'}}}&&\ps{\Rl^{\leftrightarrow}}_{V^{'}}
}
\]
Going one way round the diagram we obtain 
\ba
\ps{\Rl^{\leftrightarrow}}(i_{V^{'}V})\big(\breve{\delta}^i(\hat{A})_V(\lambda), \breve{\delta}^o(\hat{A})_V(\lambda)\big)&=&
\big(\breve{\delta}^i(\hat{A})_V(\lambda), \breve{\delta}^o(\hat{A})_V(\lambda)\big)_{V^{'}}\\
&=&\big((\breve{\delta}^i(\hat{A})_V(\lambda))_{|V^{'}}, (\breve{\delta}^o(\hat{A})_V(\lambda))_{|V^{'}}\big)\\
&=&\big(\breve{\delta}^i(\hat{A})_{V^{'}}(\lambda_{|V^{'}}), \breve{\delta}^o(\hat{A})_{V^{'}}(\lambda_{|V^{'}})\big)
\ea
On the other hand
\ba
\big(\breve{\delta}^i(\hat{A})_V(\cdot), \breve{\delta}^o(\hat{A})_V(\cdot)\big)\us(i_{V^{'}V})(\lambda)&=&\big(\breve{\delta}^i(\hat{A})_V(\cdot), \breve{\delta}^o(\hat{A})_V(\cdot)\big)(\lambda_{|V^{'}})\\
&=&\big(\breve{\delta}^i(\hat{A})_V(\lambda_{|V^{'}}), \breve{\delta}^o(\hat{A})_V(\lambda_{|V^{'}})\big)\\
&=&\big(\breve{\delta}^i(\hat{A})_{V^{'}}(\lambda_{|V^{'}}), \breve{\delta}^o(\hat{A})_{V^{'}}(\lambda_{|V^{'}})\big)
\ea
%
Thus $\big(\breve{\delta}(\hat{A})$ is a well defined natural transformation.

\chapter{Lecture 14}
In this Lecture we will start by giving a simple example of spectral order and outer daseinisation of self adjoint operators. We will then analyse the physical interpretation of the map $\breve{\delta}(\hat{A}):\us\rightarrow\ps{\Rl^{\leftrightarrow}}$ representing physical quantities and show how it can indeed be `used' to compute the value of a given physical quantity given a state.

\section{Example of Spectral Decomposition}
Let us consider again the $4$ dimensional Hilbert space $\Cl^4$. We would like to define the spectral decomposition of the self-adjoint operator 

\[\hat{S}_z=\begin{pmatrix} 2& 0& 0&0\\	
0&0&0 &0\\
0&0&0&0\\
0&0&0&-2	
  \end{pmatrix}
  \]
For such an operator the spectral family is 
\be
\hat{E}^{\hat{S}_z}_{\lambda}=\begin{cases}\hat{0}\text{ if }\lambda< -2\\
\hat{P}_4\text{ if } -2\leq \lambda <0\\
\hat{P}_4+\hat{P}_3+\hat{P}_2\text{ if }0\leq\lambda<2\\
\hat{P}_4+\hat{P}_3+\hat{P}_2+\hat{P}_1\text{ if }2\leq\lambda
\end{cases}
\ee

If we now considered the coarsed grained $\hat{S}^2_z$ whose matrix representation would be 
\[\hat{S}^2_z=\begin{pmatrix} 4& 0& 0&0\\	
0&0&0 &0\\
0&0&0&0\\
0&0&0&4	
  \end{pmatrix}
  \]

with corresponding spectral decomposition
\be
\hat{E}^{\hat{S}^2_z}_{\lambda}=\begin{cases}\hat{0}\text{ if }\lambda< 0\\
\hat{P}_2+\hat{P}_3\text{ if } 0\leq \lambda <4\\
\hat{P}_4+\hat{P}_3+\hat{P}_2+\hat{P}_1\text{ if }4\leq\lambda
\end{cases}
\ee
If we then utilised the spectral ordering to define which one is `bigger'  we would obtain the following
\be
\begin{cases}\hat{E}^{\hat{S}^2_z}_{\lambda}=\hat{E}^{\hat{S}_z}_{\lambda}=\hat{0}\text{ for }\lambda< -2 \\
\hat{E}^{\hat{S}^2_z}_{\lambda}\leq\hat{E}^{\hat{S}_z}_{\lambda}\text{ for } -2\leq \lambda <0\\
\hat{E}^{\hat{S}^2_z}_{\lambda}\leq\hat{E}^{\hat{S}_z}_{\lambda}\text{ for }0\leq\lambda < 2\\
\hat{E}^{\hat{S}^2_z}_{\lambda}\leq\hat{E}^{\hat{S}_z}_{\lambda}\text{ for }2\leq\lambda<4\\
\hat{E}^{\hat{S}^2_z}_{\lambda}=\hat{E}^{\hat{S}_z}_{\lambda}\text{ for }4\leq\lambda
\end{cases}
\ee
It follows that 
\be
\hat{S}^2_z\geq_s\hat{S}_z
\ee

We now would like to define the outer daseinisation of $\hat{S}_z$. To this end we note that $\hat{S}_z\in V$ thus $\delta^o(\hat{S}_z)_V=\hat{S}_z$. We then go to smaller sub-algebras and consider $V_{\hat{P}_2, \hat{P}_3}=lin_{\Cl}(\hat{P}_2, \hat{P}_3, \hat{P}_1+\hat{P}_4$ and compute
\be 
\delta^o(\hat{S}_z)_{V_{\hat{P}_2, \hat{P}_3}}:=\int_{\Rl}\lambda d\Big(\delta^i(\hat{E}_{\lambda}^{\hat{S}_z})\Big)
\ee
By looking at the spectral decomposition of $\hat{S}_z$ we can immediately define
\be
\delta^i(\hat{E}^{\hat{S}_z}_{\lambda})_{V_{\hat{P}_2, \hat{P}_3}}=\begin{cases}\hat{0}\text{ if }\lambda< -2\\
\hat{0}\text{ if } -2\leq \lambda <0\\
\hat{P}_3+\hat{P}_2\text{ if }0\leq\lambda<2\\
\hat{P}_4+\hat{P}_3+\hat{P}_2+\hat{P}_1\text{ if }2\leq\lambda
\end{cases}
\ee
Therefore in matrix representation we get
\[\delta^o(\hat{S}_z)_{V_{\hat{P}_2, \hat{P}_3}}=\begin{pmatrix} 2& 0& 0&0\\	
0&0&0 &0\\
0&0&0&0\\
0&0&0&2
  \end{pmatrix}
  \]
  
Thus 
\be
 \delta^o(\hat{S}_z)_{V_{\hat{P}_2, \hat{P}_3}}\geq_s\hat{S}_z
\ee
 implies that 
 \be
 sp( \delta^o(\hat{S}_z)_{V_{\hat{P}_2, \hat{P}_3}})\subseteq sp(\hat{S}_z)
 \ee 
  As can be easily deducible from the matrices them selves.

\section{Interpreting the Map Representing Physical Quantities}
In order to really understand the map $\breve{\delta}(\hat{A}):\us\rightarrow\ps{\Rl^{\leftrightarrow}}$ and how exactly once can use it in topos quantum theory, we need to analyse how expectation values are computed. To this end consider a vector $|\psi\rangle$ in a Hilbert space $\mh$. We are interested in computing the expectation value of the self adjoint operator $\hat{A}$. This is defiend (in standard quantum theory) as follows
\be\label{equ:expect}
\langle\psi|\hat{A}|\psi\rangle=\int^{||\hat{A}||}_{||\hat{A}||}\lambda d\langle\psi|\hat{E^{\hat{A}}}_{\lambda}|\psi\rangle
\ee
We can now re-write the above expression using the map $\breve{\delta}(\hat{A})$. To do so we note that for each $V\in \mv(\mh)$ the Gel'fand spectrum $\us_V$ of that algebra will contain a special element $\lambda^{|\psi\rangle}$ defined by $\lambda^{|\psi\rangle}(\hat{A}):=\langle\psi|\hat{A}|\psi\rangle$ for all $\hat{A}\in V$. Such an element of the spectrum is characterised by the properties i) $\lambda^{|\psi\rangle}(|\psi\rangle\langle\psi|)=1$ and ii) $\lambda^{|\psi\rangle}(\hat{P})=0$ for all $\hat{P}\in P(V)$ such that $\hat{P}|\psi\rangle\langle\psi|=0$.\\ 
Given such a definition, it follows that 
\be
\lambda^{|\psi\rangle}\in \us_V\text{ iff }|\psi\rangle\langle\psi|\in P(V)
\ee
We can now re-write equation \ref{equ:expect} as follows: for each context $V$ such that $|\psi\rangle\langle\psi|\in P(V)$
\ba
\langle\psi|\hat{A}|\psi\rangle&=&\int^{\delta^o(\hat{A})_V\lambda^{|\psi\rangle}}_{\delta^i(\hat{A})_V\lambda^{|\psi\rangle}}\lambda d\langle\psi|\hat{E}^{\hat{A}}_{\lambda}|\psi\rangle\\
&=&\int^{\langle\psi|\delta^o(\hat{A})_V|\psi\rangle}_{\langle\psi|\delta^i(\hat{A})_V|\psi\rangle}\lambda d\langle\psi|\hat{E}^{\hat{A}}_{\lambda}|\psi\rangle
\ea

Therefore it is possible to interpret (in the language of canonical quantum theory) $\delta^o(\hat{A})_V\lambda^{|\psi\rangle}$ and $\delta^i(\hat{A})_V\lambda^{|\psi\rangle}$ as the largest, respectively smallest result of measurements of a physical quantity $\hat{A}$ given the state $|\psi\rangle$. Obviously if $|\psi\rangle$ is an eigenstate of $\hat{A}$ then 
\be
\langle\psi|\hat{A}|\psi\rangle=\langle\psi|\delta^o(\hat{A})_V|\psi\rangle=\langle\psi|\delta^i(\hat{A})_V|\psi\rangle
\ee
However, if it is not an eigenstate then 
\be
\langle\psi|\delta^o(\hat{A})_V|\psi\rangle\geq \langle\psi|\hat{A}|\psi\rangle\geq \langle\psi|\delta^i(\hat{A})_V|\psi\rangle
\ee

If we now go to smaller context $i:V^{'}\subseteq V$, then the properties of inner and outer daseinisations imply that $\delta^o(\hat{A})_{V^{'}}\geq \delta^o(\hat{A})_V$ while $\delta^i(\hat{A})_{V^{'}}\leq \delta^i(\hat{A})_V$. It follows that 
\ba
\delta^o(\hat{A})_V\lambda^{|\psi\rangle}&=&\langle\psi|\delta^o(\hat{A})_V|\psi\rangle\leq\delta^o(\hat{A})_{V^{'}}\lambda^{|\psi\rangle}=\langle\psi|\delta^o(\hat{A})_{V^{'}}|\psi\rangle\\
\delta^i(\hat{A})_V\lambda^{|\psi\rangle}&=&\langle\psi|\delta^i(\hat{A})_V|\psi\rangle\geq\delta^i(\hat{A})_{V^{'}}\lambda^{|\psi\rangle}=\langle\psi|\delta^i(\hat{A})_{V^{'}}|\psi\rangle
\ea
Therefore the interval between the smallest and largest possible value for $\hat{A}$ becomes bigger and bigger as we go to smaller and smaller algebras. This is because we approximate our self adjoint operator more and more. 

In this setting, for each context $V\in \mv(\mh)$, the map
\ba
\breve{\delta}(\hat{A})_V:\us_V&\rightarrow&\ps{\Rl^{\leftrightarrow}}_V\\
\lambda&\mapsto&\breve{\delta}(\hat{A})_V(\lambda)=\Big(\breve{\delta}^i(\hat{A})_V(\lambda), \breve{\delta}^o(\hat{A})_V(\lambda)\Big)
\ea
defines the interval or range of possible values of the quantity $A$ at stages $V^{'}\subseteq V$. In particular, for each $\lambda\in \us_V$ we obtain a map$\breve{\delta}(\hat{A})_V(\lambda):\downarrow V\rightarrow sp(\hat{A})\times sp(\hat{A})$, such that, for each $V\in \mv(\mh)$ it picks the range of values
\be
[\breve{\delta}^i(\hat{A})_V(\lambda),\breve{\delta}^o(\hat{A})_V(\lambda)]\cap sp(\hat{A})
\ee
As we go to smaller and smaller contexts $V^{'}\subseteq V$ the range of values becomes bigger\footnote{Recall that $\delta^o(\hat{A})_V\geq \hat{A}$ and $\delta^i(\hat{A})_V\geq \hat{A}$ implied that $
sp(\delta^i(\hat{A})_V)\subseteq sp(\hat{A}) $ and $sp(\delta^o(\hat{A})_V)\subseteq sp(\hat{A})$ respectively.}

\be
[\breve{\delta}^i(\hat{A})_{V^{'}}(\lambda),\breve{\delta}^o(\hat{A})_{V^{'}}(\lambda)]\cap sp(\hat{A})
\ee

Thus reiterating, the map $\breve{\delta}(\hat{A}):\us\rightarrow\ps{\Rl^{\leftrightarrow}}$ is defined such that as we go down to smaller sub-algebras $V^{'}\subseteq V$, the range of possible values of our physical quantity becomes bigger. This is because $\hat{A}$ gets approximated both from above through the process of outer daseininsation and from below through the process of inner daseinisation. Such an approximation gets coarser as $V^{'}$ gets smaller, which basically means that $V^{'}$ contains less and less projections, i.e. less and less information. However,  such an interpretation, can only be local since the state space $\us$ has no global elements.

\subsection{Computing Values of Quantities Given a State}

We would now like to define the value of a physical quantity $\hat{A}$ given a state $\ps{\w}^{\ket\psi}$. Again as for all the constructs we have defined so far, we would like to mimic classical physics, i.e. we would like to define values for quantities given a state, in the same way is it is done in classical physics. To this end recall that in classical physics, given a state $s$, and a physical quantity $f_A$, the value of the latter given the former is $f_A(s)$.

Similarly we would like to define the value of $\breve{\delta}(\hat{A})$ given $\ps{\w}^{\ket\psi}$ as something like $\breve{\delta}(\hat{A})(\ps{\w}^{\ket\psi})$. Is this possible?

Let us recall how the pseudo-state $\ps{\w}^{\ket\psi}$ is defined. For each context $V$ we first define the approximate operator 
\be
\delta^o(|\psi\rangle\langle\psi|)_V:=\bigwedge\{\hat{\alpha}\in \uo_V||\psi\rangle\langle\psi|\leq \hat{\alpha}\}
\ee
We then associate a subset of the state space $\us_V$:
\be
\ps{\w}^{\ket\psi}_V :=\{\lambda\in \us_V|\lambda(\delta^o(|\psi\rangle\langle\psi|)_V)=1\}
\ee
Thus for each state $|\psi\rangle$, we get the presheaf $\ps{\w}^{\ket\psi}\subseteq \us_V$. Now since the physical quantity $\breve{\delta}(\hat{A})$ is defined in terms of a map whose codomain is $\us$ ($\breve{\delta}(\hat{A}):\us\rightarrow\ps{\Rl^{\leftrightarrow}}$), such a map has to be define also on any sub-object of $\us$. Thus we obtain the composite

\be
\ps{\w}^{\ket\psi}\rightarrow \us\xrightarrow{\breve{\delta}(\hat{A})}\ps{\Rl^{\leftrightarrow}}
\ee
In this setting one can indeed write the value of a physical quantity given a state as 
\be
\breve{\delta}(\hat{A})(\ps{\w}^{\ket\psi})
\ee
Such that for each context $V\in \mv(\mh)$ we obtain
\ba\label{ali:value}
\breve{\delta}(\hat{A})_V:\ps{\w}^{\ket\psi}_V&\rightarrow& \ps{\Rl^{\leftrightarrow}}_V\\
\lambda&\mapsto&\breve{\delta}(\hat{A})_V:=\Big(\breve{\delta}^i(\hat{A})_V(\lambda), \breve{\delta}^o(\hat{A})_V(\lambda)\Big)
\ea
\begin{Theorem}
$\breve{\delta}(\hat{A})(\ps{\w}^{\ket\psi})$ is a well defined subobject of $\ps{\Rl^{\leftrightarrow}}$
\end{Theorem}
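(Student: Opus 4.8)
The plan is to exhibit $\breve{\delta}(\hat{A})(\ps{\w}^{\ket\psi})$ explicitly as a sub-presheaf of $\ps{\Rl^{\leftrightarrow}}$, since in the topos $\Sets^{\mv(\mh)^{\op}}$ a sub-object is nothing other than a sub-presheaf, i.e. an object-wise assignment of subsets that is compatible with the restriction maps. Conceptually the statement is just the general fact that the image of a sub-object under a morphism is again a sub-object; here the morphism is the natural transformation $\breve{\delta}(\hat{A}):\us\rightarrow\ps{\Rl^{\leftrightarrow}}$ (shown to be well defined at the end of Lecture 13) and the sub-object is the pseudo-state $\ps{\w}^{\ket\psi}\subseteq\us$ (constructed in Lecture 10). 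Because images in a presheaf topos are computed object-wise, the whole argument reduces to a single naturality check.

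First I would define, for each context $V\in\mv(\mh)$, the component
\[
\breve{\delta}(\hat{A})(\ps{\w}^{\ket\psi})_V:=\breve{\delta}(\hat{A})_V\big(\ps{\w}^{\ket\psi}_V\big)=\{\breve{\delta}(\hat{A})_V(\lambda)\mid\lambda\in\ps{\w}^{\ket\psi}_V\}\subseteq\ps{\Rl^{\leftrightarrow}}_V,
\]
using the restricted map of equation (\ref{ali:value}); this is manifestly a subset of $\ps{\Rl^{\leftrightarrow}}_V$. The entire content of the claim is then that these subsets assemble into a presheaf, i.e. that for every inclusion $i_{V'V}:V'\subseteq V$ the restriction map $\ps{\Rl^{\leftrightarrow}}(i_{V'V})$ carries $\breve{\delta}(\hat{A})(\ps{\w}^{\ket\psi})_V$ into $\breve{\delta}(\hat{A})(\ps{\w}^{\ket\psi})_{V'}$.

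The key step is to combine the two established facts. Take an arbitrary element $\breve{\delta}(\hat{A})_V(\lambda)$ of $\breve{\delta}(\hat{A})(\ps{\w}^{\ket\psi})_V$, with $\lambda\in\ps{\w}^{\ket\psi}_V$. Naturality of $\breve{\delta}(\hat{A})$ (the commuting square proved at the end of Lecture 13) gives
\[
\ps{\Rl^{\leftrightarrow}}(i_{V'V})\big(\breve{\delta}(\hat{A})_V(\lambda)\big)=\breve{\delta}(\hat{A})_{V'}\big(\us(i_{V'V})(\lambda)\big)=\breve{\delta}(\hat{A})_{V'}(\lambda_{|V'}).
\]
On the other hand, since $\ps{\w}^{\ket\psi}$ is a sub-presheaf of $\us$ whose restriction maps are just the restrictions of those of $\us$, we have $\lambda_{|V'}=\us(i_{V'V})(\lambda)\in\ps{\w}^{\ket\psi}_{V'}$. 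Hence $\breve{\delta}(\hat{A})_{V'}(\lambda_{|V'})$ lies in $\breve{\delta}(\hat{A})_{V'}(\ps{\w}^{\ket\psi}_{V'})=\breve{\delta}(\hat{A})(\ps{\w}^{\ket\psi})_{V'}$, which is precisely the required compatibility.

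I do not expect a serious obstacle here: the argument rests entirely on the two previously established results (naturality of $\breve{\delta}(\hat{A})$ and the sub-object property of $\ps{\w}^{\ket\psi}$), and the only computation is the one-line diagram chase above. The single point worth a remark — and the closest thing to a subtlety — is that the resulting inclusion $\breve{\delta}(\hat{A})(\ps{\w}^{\ket\psi})\hookrightarrow\ps{\Rl^{\leftrightarrow}}$ is genuinely monic as a natural transformation, which holds automatically because each component is a plain subset inclusion; this is what legitimises reading the object-wise image as an honest sub-object rather than merely a family of subsets.
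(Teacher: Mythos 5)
Your proposal is correct and follows essentially the same route as the paper's own proof: define the image object-wise, use naturality of $\breve{\delta}(\hat{A})$ to rewrite the restriction of $\breve{\delta}(\hat{A})_V(\lambda)$ as $\breve{\delta}(\hat{A})_{V^{'}}(\lambda_{|V^{'}})$, and invoke the sub-presheaf property of $\ps{\w}^{\ket\psi}$ to conclude that this lies in the component at $V^{'}$. The only cosmetic difference is that the paper actually records the stronger statement of equality (since every element of $\ps{\w}^{\ket\psi}_{V^{'}}$ arises by restriction from $\ps{\w}^{\ket\psi}_V$), whereas you prove the inclusion, which is all the sub-object property requires.
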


\begin{proof}
We first need to show that $\breve{\delta}(\hat{A})_V(\ps{\w}^{\ket\psi})_V\subseteq \ps{\Rl^{\leftrightarrow}}_V$ for each $V\in\mv(\mh)$. This follows trivially from the definition in \ref{ali:value}. Thus what remains to show is that for each $V^{'}\subseteq V$
\be
\ps{\Rl^{\leftrightarrow}}(i_{V^{'}V})(\breve{\delta}(\hat{A})_V(\ps{\w}^{\ket\psi}_V))\subseteq \breve{\delta}(\hat{A})_{V^{'}}(\ps{\w}^{\ket\psi})_{V^{'}}
\ee
Given $\lambda\in \ps{\w}^{\ket\psi}_V$ then 
\be
\ps{\Rl^{\leftrightarrow}}(i_{V^{'}V})(\breve{\delta}(\hat{A})_V(\lambda))=(\breve{\delta}(\hat{A})_V(\lambda))_{|V^{'}}=\breve{\delta}(\hat{A})_{V^{'}}(\lambda_{|V^{'}})
\ee
However form the definition of pseudo-state we have
\be
\ps{\w}^{\ket\psi}_{V^{'}}=\us(i_{V^{'}V})(\ps{\w}^{\ket\psi}_V)=\{\lambda_{|V^{'}}|\lambda\in \ps{\w}^{\ket\psi}_V\}
\ee

Therefore, each element in $\ps{\w}^{\ket\psi}_{V^{'}}$ comes form restricting an element in $\ps{\w}^{\ket\psi}_V$. It follows that 
\be
\ps{\Rl^{\leftrightarrow}}(i_{V^{'}V})(\breve{\delta}(\hat{A})_V(\ps{\w}^{\ket\psi}_V))= \breve{\delta}(\hat{A})_{V^{'}}(\ps{\w}^{\ket\psi})_{V^{'}}
\ee
\end{proof}
We thus obtain a commuting diagram

\[\xymatrix{ 
\ps{\w}^{\ket\psi}_V\ar[rr]^{\breve{\delta}(\hat{A})_V}\ar[dd]_{\us(i_{V^{'}V})}&&\ps{\Rl^{\leftrightarrow}}_V\ar[dd]^{\ps{\Rl^{\leftrightarrow}}(i_{V^{'}V})}\\
&&\\
\ps{\w}^{\ket\psi}_{V^{'}}\ar[rr]^{\breve{\delta}(\hat{A})_{V^{'}}}&&\ps{\Rl^{\leftrightarrow}}_{V^{'}}
}
\]

\subsection{Examples}
We will now give two simle examples to show how the map representing self-adjoint operators actually works.
\subsubsection{Eigenvalue-Eigenstate Link}
We will first consider the case in which the state $|\psi\rangle$ is an eigenstate of $\hat{A}$.  We then consider abelian sub-algebras for which $\hat{A}\in V$ such that $\delta^o(\hat{A})=\delta^i(\hat{A})=\hat{A}$. 
The condition $\hat{A}\in V$ also implies that $|\psi\rangle\langle\psi|\in P(V)$ thus  $\ps{\w}^{\ket\psi}_{V}$ will contain the single element $\lambda_{|\psi\rangle\langle\psi|}\in \us_V$ such that $\lambda_{|\psi\rangle\langle\psi|}(|\psi\rangle\langle\psi|)=1$ while $\lambda_{|\psi\rangle\langle\psi|}(\hat{Q})=0$ for all $\hat{Q}|\psi\rangle\langle\psi|=0$. It follows that
\be
\breve{\delta}(\hat{A})_V(\ps{\w}^{\ket\psi}_V)=\Big( \breve{\delta}^i(\hat{A})_V (\lambda_{|\psi\rangle\langle\psi|}), \breve{\delta}^o(\hat{A})_V (\lambda_{|\psi\rangle\langle\psi|})\Big)
\ee

Since $\hat{A}\in V$, then $ \breve{\delta}^i(\hat{A})_V (\lambda_{|\psi\rangle\langle\psi|})= \breve{\delta}^o(\hat{A})_V (\lambda_{|\psi\rangle\langle\psi|})=\lambda_{|\psi\rangle\langle\psi|}(\hat{A})$ which is the eigenvalue of $\hat{A}$ given the state $|\psi\rangle$. Thus we get the usual eignevalue eigenstate link.

\subsection{Interval}
We now give both a simple example and a more complicated example on how to define values for quantities. \\\\
\textbf{Simple Example}\\\\
Let us consider the simple self adjoint projection operator $|\psi\rangle\langle\psi|$. Such an operator has $sp(|\psi\rangle\langle\psi|)=\{0,1\}$. For each context $V\in \mv(\mh)$ the map $\breve{\delta}(|\psi\rangle\langle\psi|):\us\rightarrow\ps{\Rl^{\leftrightarrow}}$ is
\ba
\breve{\delta}(|\psi\rangle\langle\psi|)_V:\us_V&\rightarrow&\ps{\Rl^{\leftrightarrow}}_V\\
\lambda&\mapsto&\Big(\breve{\delta}^i(|\psi\rangle\langle\psi|)_V(\lambda), \breve{\delta}^o(|\psi\rangle\langle\psi|)_V(\lambda)\Big)
\ea
However, given the spectrum of  $|\psi\rangle\langle\psi|$, we obtain
\ba
\breve{\delta}^i(|\psi\rangle\langle\psi|)_V(\lambda):\downarrow V&\rightarrow&\{0,1\}\\
\breve{\delta}^o(|\psi\rangle\langle\psi|)_V(\lambda):\downarrow V&\rightarrow&\{0,1\}\\
\ea
such that for all $V^{'}\subseteq V$ we have
\ba
\breve{\delta}^i(|\psi\rangle\langle\psi|)_V(\lambda)(V^{'})&:=&\langle\lambda, \delta^i(|\psi\rangle\langle\psi|)_{V^{'}}\rangle\\
\breve{\delta}^o(|\psi\rangle\langle\psi|)_V(\lambda)(V^{'})&:=&\langle\lambda, \delta^o(|\psi\rangle\langle\psi|)_{V^{'}}\rangle
\ea
We then consider the pseudo-state $\ps{\w}^{\ket\psi}$ and want to evaluate $\breve{\delta}(|\psi\rangle\langle\psi|)( \ps{\w}^{\ket\psi})$. From the definition of pseudo-state, given a context $V$, it follows that for all $\lambda\in  \ps{\w}^{\ket\psi}_V$ then, for all $V^{'}\subseteq V$
\be
\breve{\delta}^o(|\psi\rangle\langle\psi|)_V(\lambda)(V^{'})=\langle\lambda, \delta^o(|\psi\rangle\langle\psi|)_{V^{'}}\rangle=1
\ee

We call the constant function with value $1$ on all $V^{'}\subseteq V$ as $1_{\downarrow V}$. 

On the other hand, $\breve{\delta}^i(|\psi\rangle\langle\psi|)_V(\lambda)$ is such that for all $V^{'}\subseteq V$ we obtain

\be
\breve{\delta}^i(|\psi\rangle\langle\psi|)_V(\lambda)(V^{'})=\begin{cases}1 \text{ if }|\psi\rangle\langle\psi|\in V^{'}\\
0\text{ if }|\psi\rangle\langle\psi|\notin V^{'}
\end{cases}
\ee

We then arrive at a complete description of $\breve{\delta}(|\psi\rangle\langle\psi|)( \ps{\w}^{\ket\psi})$ as follows, given any context $V\in\mv(\mh)$ if $\lambda\in \ps{\w}^{\ket\psi}_V$ then we obtain
\be
\breve{\delta}(|\psi\rangle\langle\psi|)_V(\lambda)=\big(\breve{\delta}^i(|\psi\rangle\langle\psi|)_V(\lambda), 1_{\downarrow V})
\ee
\textbf{More Complicated Example}\\\\
Let us consider the 2 spin system in $\Cl^4$ defined in previous examples.  We are interested in the spin in the $z$-direction, which is represented by the physical quantity $S_z$. 
The self-adjoint operator representing $S_z$ is 
\[\hat{S}_z=\begin{pmatrix} 2& 0& 0&0\\	
0&0&0 &0\\
0&0&0&0\\
0&0&0&-2	
  \end{pmatrix}
  \]
We want compute $\breve{\delta}(\hat{S}_z):\us\rightarrow \Rl^{\leftrightarrow}$. Since we are dealing with presheaves we need to compute this contextwise, for each $V$, i.e.
\ba
\breve{\delta}(\hat{S}_z):\us&\rightarrow& \Rl^{\leftrightarrow}\\
\lambda&\mapsto&\breve{\delta}(\hat{S}_z)(\lambda)=\Big(\breve{\delta}^i(\hat{S}_z)_V(\lambda), \breve{\delta}^o(\hat{S}_z)_V(\lambda)\Big)
\ea
Where
\ba
\breve{\delta}^i(\hat{S}_z)_V(\lambda):\downarrow V&\rightarrow& \Rl\\
V^{'}&\mapsto&\lambda(\delta^i(\hat{S}_z)_{V^{'}})
\ea
and similar for $\breve{\delta}^o(\hat{S}_z)_V(\lambda)$. Thus,  in order to compute $\breve{\delta}^i(\hat{S}_z)_V$ and $\breve{\delta}^o(\hat{S}_z)_V$ we need to find the inner and outer daseinisation of the spectral family of $\hat{S}_z$ since we want to apply the formulas
\ba
\delta^o(\hat{S}_z)_V&=&\int_{\Rl}\lambda d\Big(\delta^i(\hat{E}^{\hat{S}_z}_{\lambda}\Big)\\
\delta^i(\hat{S}_z)_V&=&\int_{\Rl}\lambda d\Big(\bigwedge_{\mu>\lambda}\delta^o(\hat{E}^{\hat{S}_z}_{\lambda}\Big)\\
\ea
We know from previous examples that the spectral family of $\hat{S}_z$ is \be
\hat{E}^{\hat{S}_z}_{\lambda}=\begin{cases}\hat{0}\text{ if }\lambda< -2\\
\hat{P}_4\text{ if } -2\leq \lambda <0\\
\hat{P}_4+\hat{P}_3+\hat{P}_2\text{ if }0\leq\lambda<2\\
\hat{P}_4+\hat{P}_3+\hat{P}_2+\hat{P}_1\text{ if }2\leq\lambda
\end{cases}
\ee
Therefore, $j=\{2,4\}$ we obtain for 
\be
\delta^i(\hat{E}^{\hat{S}_z}_{\lambda})_{V_{\hat{P}_j\hat{P}_3}}=\begin{cases}\hat{0}\text{ if }\lambda< -2\\
\hat{0}\text{ if } -2\leq \lambda <0\\
\hat{P}_j+\hat{P}_3\text{ if }0\leq\lambda<2\\
\hat{1}\text{ if }2\leq\lambda
\end{cases}
\ee
For $j=\{2,3\}$
\be
\delta^i(\hat{E}^{\hat{S}_z}_{\lambda})_{V_{\hat{P}_1\hat{P}_j}}=\begin{cases}\hat{0}\text{ if }\lambda< -2\\
\hat{0}\text{ if } -2\leq \lambda <0\\
\hat{P}_2+\hat{P}_3+\hat{P}_4\text{ if }0\leq\lambda<2\\
\hat{1}\text{ if }2\leq\lambda
\end{cases}
\ee
For $V_{\hat{P}_1,\hat{P}_4}$
\be
\delta^i(\hat{E}^{\hat{S}_z}_{\lambda})_{V_{\hat{P}_1\hat{P}_4}}=\begin{cases}\hat{0}\text{ if }\lambda< -2\\
\hat{P}_4\text{ if } -2\leq \lambda <0\\
\hat{P}_2+\hat{P}_3+\hat{P}_4\text{ if }0\leq\lambda<2\\
\hat{1}\text{ if }2\leq\lambda
\end{cases}
\ee
For $V_{\hat{P}_2,\hat{P}_4}$
\be
\delta^i(\hat{E}^{\hat{S}_z}_{\lambda})_{V_{\hat{P}_2\hat{P}_4}}=\begin{cases}\hat{0}\text{ if }\lambda< -2\\
\hat{P}_4\text{ if } -2\leq \lambda <0\\
\hat{P}_2+\hat{P}_4\text{ if }0\leq\lambda<2\\
\hat{1}\text{ if }2\leq\lambda
\end{cases}
\ee
For $j=\{2,3\}$
 \be
\delta^i(\hat{E}^{\hat{S}_z}_{\lambda})_{V_{\hat{P}_j}}=\begin{cases}\hat{0}\text{ if }\lambda< -2\\
\hat{0}\text{ if } -2\leq \lambda <0\\
\hat{P}_j\text{ if }0\leq\lambda<2\\
\hat{1}\text{ if }2\leq\lambda
\end{cases}
\ee
For $V_{\hat{P}_1}$
 \be
\delta^i(\hat{E}^{\hat{S}_z}_{\lambda})_{V_{\hat{P}_1}}=\begin{cases}\hat{0}\text{ if }\lambda< -2\\
\hat{0}\text{ if } -2\leq \lambda <0\\
\hat{P}_2+\hat{P}_3+\hat{P}_4\text{ if }0\leq\lambda<2\\
\hat{1}\text{ if }2\leq\lambda
\end{cases}
\ee
Finally for $V_{\hat{P}}$
\be
\delta^i(\hat{E}^{\hat{S}_z}_{\lambda})_{V_{\hat{P}_4}}=\begin{cases}\hat{0}\text{ if }\lambda< -2\\
\hat{P}_4\text{ if } -2\leq \lambda <0\\
\hat{P}_4\text{ if }0\leq\lambda<2\\
\hat{1}\text{ if }2\leq\lambda
\end{cases}
\ee
Similarly for outer daseinisation we obtain for $i, j=\{2,3\}$
\be
\delta^o(\hat{E}^{\hat{S}_z}_{\lambda})_{V_{\hat{P}_j\hat{P}_1}}=\begin{cases}\hat{0}\text{ if }\lambda< -2\\
\hat{P}_4+\hat{P}_i\text{ if } -2\leq \lambda <0\; i\neq j\\
\hat{P}_j+\hat{P}_4+\hat{P}_3\text{ if }0\leq\lambda<2\\
\hat{1}\text{ if }2\leq\lambda
\end{cases}
\ee
For $j=\{1,2,3\}$
\be
\delta^o(\hat{E}^{\hat{S}_z}_{\lambda})_{V_{\hat{P}_j}}=\begin{cases}\hat{0}\text{ if }\lambda< -2\\
\hat{1}-\hat{P}_j\text{ if } -2\leq \lambda <0\\
\hat{1} (-\hat{P_j}\text{ if } j=1)\text{ if }0\leq\lambda<2\\
\hat{1}\text{ if }2\leq\lambda
\end{cases}
\ee
For $j=\{2,3\}$
\be
\delta^o(\hat{E}^{\hat{S}_z}_{\lambda})_{V_{\hat{P}_j\hat{P}_4}}=\begin{cases}\hat{0}\text{ if }\lambda< -2\\
\hat{P}_4\text{ if } -2\leq \lambda <0\\
\hat{1}\text{ if }0\leq\lambda<2\\
\hat{1}\text{ if }2\leq\lambda
\end{cases}
\ee
For $V_{\hat{P}_1,\hat{P}_4}$
\be
\delta^o(\hat{E}^{\hat{S}_z}_{\lambda})_{V_{\hat{P}_4\hat{P}_1}}=\begin{cases}\hat{0}\text{ if }\lambda< -2\\
\hat{P}_4\text{ if } -2\leq \lambda <0\\
\hat{P}_2+\hat{P}_3+\hat{P}_4\text{ if }0\leq\lambda<2\\
\hat{1}\text{ if }2\leq\lambda
\end{cases}
\ee
For $V_{\hat{P}_2, \hat{P}_3}$
\be
\delta^o(\hat{E}^{\hat{S}_z}_{\lambda})_{V_{\hat{P}_2\hat{P}_3}}=\begin{cases}\hat{0}\text{ if }\lambda< -2\\
\hat{P}_4+\hat{P}_1\text{ if } -2\leq \lambda <0\\
\hat{1}\text{ if }0\leq\lambda<2\\
\hat{1}\text{ if }2\leq\lambda
\end{cases}
\ee
We can now compute the daseinisations of $\hat{S}_z$ for each contexts. These are for $V_{\hat{P}_4}$ 
\be
\delta^o(\hat{S}_z)_{V_{\hat{P}_4}}=\begin{pmatrix} 2& 0& 0&0\\	
0&2&0 &0\\
0&0&2&0\\
0&0&0&-2	
  \end{pmatrix}\text{ and }\;\;\delta^i(\hat{S}_Z)_{V_{\hat{P}_4}}=\begin{pmatrix} 0& 0& 0&0\\	
0&2&0 &0\\
0&0&2&0\\
0&0&0&-2	
  \end{pmatrix}
\ee
For $V_{\hat{P}_1}$

\be
\delta^o(\hat{S}_z)_{V_{\hat{P}_4}}=\begin{pmatrix} 2& 0& 0&0\\	
0&0&0 &0\\
0&0&0&0\\
0&0&0&0	
  \end{pmatrix}\text{ and }\;\;\delta^i(\hat{S}_Z)_{V_{\hat{P}_1}}=\begin{pmatrix} 2& 0& 0&0\\	
0&-2&0 &0\\
0&0&-2&0\\
0&0&0&-2	
  \end{pmatrix}
\ee
For $V_{\hat{P}_1}$

\be
\delta^o(\hat{S}_z)_{V_{\hat{P}_2}}=\begin{pmatrix} 2& 0& 0&0\\	
0&0&0 &0\\
0&0&2&0\\
0&0&0&2
  \end{pmatrix}\text{ and }\;\;\delta^i(\hat{S}_Z)_{V_{\hat{P}_2}}=\begin{pmatrix} 2& 0& 0&0\\	
0&0&0 &0\\
0&0&-2&0\\
0&0&0&-2	
  \end{pmatrix}
\ee
For $V_{\hat{P}_3}$

\be
\delta^o(\hat{S}_z)_{V_{\hat{P}_3}}=\begin{pmatrix} 2& 0& 0&0\\	
0&2&0 &0\\
0&0&0&0\\
0&0&0&2	
  \end{pmatrix}\text{ and }\;\;\delta^i(\hat{S}_Z)_{V_{\hat{P}_3}}=\begin{pmatrix} -2& 0& 0&0\\	
0&-2&0 &0\\
0&0&0&0\\
0&0&0&-2	
  \end{pmatrix}
\ee
For $V_{\hat{P}_1,\hat{P}_2}$

\be
\delta^o(\hat{S}_z)_{V_{\hat{P}_1,\hat{P}_2}}=\begin{pmatrix} 2& 0& 0&0\\	
0&0&0 &0\\
0&0&0&0\\
0&0&0&0	
  \end{pmatrix}\text{ and }\;\;\delta^i(\hat{S}_Z)_{V_{\hat{P}_1,\hat{P}_2}}=\begin{pmatrix} 2& 0& 0&0\\	
0&0&0 &0\\
0&0&-2&0\\
0&0&0&-2	
  \end{pmatrix}
\ee
For $V_{\hat{P}_1,\hat{P}_3}$

\be
\delta^o(\hat{S}_z)_{V_{\hat{P}_1,\hat{P}_3}}=\begin{pmatrix} 2& 0& 0&0\\	
0&0&0 &0\\
0&0&0&0\\
0&0&0&0	
  \end{pmatrix}\text{  and  }\;\;\delta^i(\hat{S}_Z)_{V_{\hat{P}_1,\hat{P}_3}}=\begin{pmatrix} 2& 0& 0&0\\	
0&-2&0 &0\\
0&0&0&0\\
0&0&0&-2	
  \end{pmatrix}
\ee
For $V_{\hat{P}_1,\hat{P}_4}$

\be
\delta^o(\hat{S}_z)_{V_{\hat{P}_1,\hat{P}_4}}=\begin{pmatrix} 2& 0& 0&0\\	
0&0&0 &0\\
0&0&0&0\\
0&0&0&-2
  \end{pmatrix}\text{  and  }\;\;\delta^i(\hat{S}_Z)_{V_{\hat{P}_1,\hat{P}_4}}=\begin{pmatrix} 2& 0& 0&0\\	
0&0&0 &0\\
0&0&0&0\\
0&0&0&-2	
  \end{pmatrix}
\ee
For $V_{\hat{P}_2,\hat{P}_3}$

\be
\delta^o(\hat{S}_z)_{V_{\hat{P}_2,\hat{P}_3}}=\begin{pmatrix} 2& 0& 0&0\\	
0&0&0 &0\\
0&0&0&0\\
0&0&0&2	
  \end{pmatrix}\text{ and }\;\;\delta^i(\hat{S}_Z)_{V_{\hat{P}_2,\hat{P}_3}}=\begin{pmatrix} -2& 0& 0&0\\	
0&0&0 &0\\
0&0&0&0\\
0&0&0&-2	
  \end{pmatrix}
\ee
For $V_{\hat{P}_3,\hat{P}_4}$

\be
\delta^o(\hat{S}_z)_{V_{\hat{P}_3,\hat{P}_4}}=\begin{pmatrix} 2& 0& 0&0\\	
0&2&0 &0\\
0&0&0&0\\
0&0&0&0	
  \end{pmatrix}\text{ and }\;\;\delta^i(\hat{S}_Z)_{V_{\hat{P}_3,\hat{P}_4}}=\begin{pmatrix} 0& 0& 0&0\\	
0&0&0 &0\\
0&0&0&0\\
0&0&0&-2	
  \end{pmatrix}
\ee
Now given a state $|\psi\rangle=(1,0,0,0)$ we want to compute the physical quantity $\breve{\delta}(\hat{S}_z)$. Thus for each context $V$ we need to compute the pair $(\breve{\delta}^i(\hat{S}_z)_V(\cdot),\breve{\delta}^o(\hat{S}_z)_V(\cdot))$ which will then act on $\lambda\in \ps{\w}^{\ket\psi}_V$ \\
For $V_{\hat{P}_4}$, $\ps{\w}^{\ket\psi}_{V_{\hat{P}_4}}=\{\lambda\}$ is such that $\lambda(\delta^o(|\psi\rangle\langle\psi|)_{V_{\hat{P}_4}})=1$ where 
\be
\delta^o(|\psi\rangle\langle\psi|)_{V_{\hat{P}_4}}=\begin{pmatrix} 1& 0& 0&0\\	
0&1&0 &0\\
0&0&1&0\\
0&0&0&0	
  \end{pmatrix}
\ee
Hence
\ba
\breve{\delta}^o(\hat{S}_z)_{V_{\hat{P}_4}}(\lambda)(V_{\hat{P}_4})&=&2\\
\breve{\delta}^i(\hat{S}_z)_{V_{\hat{P}_4}}(\lambda)(\lambda)(V_{\hat{P}_4})&=&0
\ea

Note that this is equivalent to 
\be
\langle\psi|\delta^o(\hat{S}_z)_{V_{\hat{P}_4}}|\psi\rangle=\begin{pmatrix}1&0&0&0\end{pmatrix}\cdot\begin{pmatrix} 2& 0& 0&0\\	
0&2&0 &0\\
0&0&2&0\\
0&0&0&-2	
  \end{pmatrix}\begin{pmatrix} 1\\	
0\\
0\\
0	
  \end{pmatrix}=2
\ee
and
\be
\langle\psi|\delta^i(\hat{S}_z)_{V_{\hat{P}_4}}|\psi\rangle=\begin{pmatrix}1&0&0&0\end{pmatrix}\cdot\begin{pmatrix} 0& 0& 0&0\\	
0&0&0 &0\\
0&0&0&0\\
0&0&0&-2	
  \end{pmatrix}\begin{pmatrix} 1\\	
0\\
0\\
0	
\end{pmatrix}=0
\ee

Similarly, for context $V_{\hat{P}_2,\hat{P}_4}$ we obtain
\be
\langle\psi|\delta^o(\hat{S}_z)_{V_{\hat{P}_2,\hat{P}_4}}|\psi\rangle=\begin{pmatrix}1&0&0&0\end{pmatrix}\cdot\begin{pmatrix} 2& 0& 0&0\\	
0&0&0 &0\\
0&0&2&0\\
0&0&0&-2	
\end{pmatrix}\begin{pmatrix} 1\\	
0\\
0\\
0	
  \end{pmatrix}=2
\ee
and
\be
\langle\psi|\delta^i(\hat{S}_z)_{V_{\hat{P}_2}}|\psi\rangle=\begin{pmatrix}1&0&0&0\end{pmatrix}\cdot\begin{pmatrix} 0& 0& 0&0\\	
0&0&0 &0\\
0&0&0&0\\
0&0&0&-2	
  \end{pmatrix}\cdot\begin{pmatrix} 1\\	
0\\
0\\
0	
  \end{pmatrix}=0
\ee
For $V_{\hat{P}_2}$ we obtain 
\be
\langle\psi|\delta^o(\hat{S}_z)_{V_{\hat{P}_2,\hat{P}_4}}|\psi\rangle=\begin{pmatrix}1&0&0&0\end{pmatrix}\cdot\begin{pmatrix} 2& 0& 0&0\\	
0&0&0 &0\\
0&0&2&0\\
0&0&0&2	
  \end{pmatrix}\begin{pmatrix} 1\\	
0\\
0\\
0	
  \end{pmatrix}=2
\ee
and
\be
\langle\psi|\delta^i(\hat{S}_z)_{V_{\hat{P}_2}}|\psi\rangle=\begin{pmatrix}1&0&0&0\end{pmatrix}\cdot\begin{pmatrix} -2& 0& 0&0\\	
0&0&0 &0\\
0&0&-2&0\\
0&0&0&-2	
  \end{pmatrix}\cdot\begin{pmatrix} 1\\	
0\\
0\\
0	
  \end{pmatrix}=-2
\ee
Given the above results we have that for $V_{\hat{P}_2,\hat{P}_4}$ then the pair of order preserving, order reversing functions for the physical quantity $\breve{\delta}(\hat{S}_z)$ is $\big(\breve{\delta}^i(\hat{S}_z)_{V_{\hat{P}_2,\hat{P}_4}}(\lambda), \breve{\delta}^o(\hat{S}_z)_{V_{\hat{P}_2,\hat{P}_4}}(\lambda)\big)$ where
\ba
\breve{\delta}^i(\hat{S}_z)_{V_{\hat{P}_2,\hat{P}_4}}(\lambda):\downarrow V_{\hat{P}_2,\hat{P}_4}&\rightarrow&sp(\hat{S}_z)\\
\breve{\delta}^o(\hat{S}_z)_{V_{\hat{P}_2,\hat{P}_4}}(\lambda):\downarrow V_{\hat{P}_2,\hat{P}_4}&\rightarrow&sp(\hat{S}_z)\\
\ea
Since the latter as we have seen as constant value 2 we can write 
\be
\big(\breve{\delta}^i(\hat{S}_z)_{V_{\hat{P}_2,\hat{P}_4}}(\lambda), \breve{\delta}^o(\hat{S}_z)_{V_{\hat{P}_2,\hat{P}_4}}(\lambda)\big)=\big(\breve{\delta}^i(\hat{S}_z)_{V_{\hat{P}_2,\hat{P}_4}}(\lambda), 2_{\downarrow V_{\hat{P}_2, \hat{P}_4}}\big)
\ee
Similar analysis can be performed for all the remaining contexts.
\chapter{Lecture 14}
In this lecture we will describe the concept of a sheaf and its relation to presheaves and certain bundles called etal\'e bundles. We will also introduce the very important concept of adjoint functors which will then be utilised to define geometric morphisms which are maps between topoi. The reason we will need all this is because we will eventually change the topos we are working with from a topos of presheaves over the category $\mv(\mh)$ to the topos of sheaves over the same category but now seen as a topological space. Such a change is needed to define the topos analogue of probabilities and the concept of a group and group transformation in a topos. 

\section{Sheaves}

We will now describe what a sheaf is. In order to do this we will first give the bundle theoretical definition and then give the categorical definition. This equivalence of descriptions is possible since there is a 1:1 (one two one) correspondence between sheaves and a special type of budles namely: \emph{etal\'e bundles}.

So what is an etal\'e bundle?
\begin{Definition}
Given a topological space $X$, a bundle $p_{E}:E\rightarrow X$ is said to be etal\'e iff $p_A$ is a local homeomorphism. By this we mean that, for each $e\in E$ there exists an open set $V$ with $e\in V\subseteq E$, such that $pV$ is open in $X$ and $p_{|V}$ is a homeomorphism $V\rightarrow pV$.
\end{Definition}
If for example $X=R^2$ then for each point of a fibre there will be an open disc isomorphic to an open disc in $R^2$. It is not necessary that these discs have the sane size. Such a collection of open discs on each fibre are glued together by the topology on $E$. 

Another example of etal\'e bundles are covering spaces. However, although all covering spaces are etal\'e, it is not the case that all etal\'e bundles are covering spaces.

Given an etal\'e bundle $p_{E}:E\rightarrow X$ and an open subset $U\subseteq X$, then the pullback of $p_E$ via $i:U\subseteq X$ is etal\'e:

\[\xymatrix{
E_U\ar[rr]^{}\ar[dd]_{\mu}&&E\ar[dd]^{p_E}\\
&&\\
U\ar[rr]_{i}&&X\\
}\]

i.e. $E_U\rightarrow U$ is etal\'e.

This result generalises as follows
\begin{Lemma}
Given any continuous map $f:X\rightarrow Y$ and an etal\'e bundle $p_E:E\rightarrow Y$ then $g:f^*E\rightarrow X$ is etal\'e over $X$.

\end{Lemma}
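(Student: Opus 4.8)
The plan is to use the explicit description of the pullback in the category of topological spaces and then verify directly that the projection $g$ is a local homeomorphism, which is precisely the definition of being etal\'e. Concretely, I would realise $f^*E$ as the subspace $X\times_Y E=\{(x,e)\in X\times E\mid f(x)=p_E(e)\}$ of the product $X\times E$, equipped with the subspace topology, and take $g:f^*E\rightarrow X$ to be the restriction of the first projection $(x,e)\mapsto x$. This is the natural generalisation of the observation already made above for the inclusion $i:U\subseteq X$, which is just the special case $Y=X$, $f=i$.

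First I would fix a point $(x_0,e_0)\in f^*E$, so that $f(x_0)=p_E(e_0)$. Since $p_E$ is etal\'e, there is an open set $V\subseteq E$ with $e_0\in V$ such that $p_E(V)$ is open in $Y$ and $p_E|_V:V\rightarrow p_E(V)$ is a homeomorphism. Next I would set $U:=f^{-1}(p_E(V))$, which is open in $X$ by continuity of $f$ and contains $x_0$, and form the open neighbourhood $W:=(U\times V)\cap f^*E$ of $(x_0,e_0)$ in $f^*E$. The goal is then to show that $g|_W:W\rightarrow U$ is a homeomorphism onto the open set $U$.

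The heart of the argument is to exhibit a continuous inverse to $g|_W$. I would define $s:U\rightarrow W$ by $s(x):=\bigl(x,(p_E|_V)^{-1}(f(x))\bigr)$; this is well defined because $f(x)\in p_E(V)$ for $x\in U$, and it lands in $W$ since its second coordinate lies in $V$ and satisfies $p_E\bigl((p_E|_V)^{-1}(f(x))\bigr)=f(x)$, so that the pair indeed belongs to $f^*E$. One checks $g|_W\circ s=\mathrm{id}_U$ immediately, and $s\circ g|_W=\mathrm{id}_W$ using the injectivity of $p_E|_V$ applied to the second coordinate of any $(x,e)\in W$. Continuity of $s$ follows because its first coordinate is the identity and its second is the composite $(p_E|_V)^{-1}\circ f|_U$ of continuous maps, while $g|_W$ is continuous as a restriction of a projection.

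Thus every point of $f^*E$ has an open neighbourhood carried homeomorphically by $g$ onto an open subset of $X$, so $g$ is etal\'e over $X$. The argument is essentially routine point-set topology; the only step requiring a little care is confirming that the locally defined section $s$ is genuinely continuous as a map into the subspace $W\subseteq X\times E$, but this is immediate once $s$ is recognised as continuous into $X\times E$ and $W$ carries the subspace topology.
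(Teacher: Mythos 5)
Your proof is correct and follows essentially the same route as the paper's: both realise $f^*E$ as the subspace $\{(x,e)\mid f(x)=p_E(e)\}\subseteq X\times E$, pick an open $V\ni e_0$ on which $p_E$ restricts to a homeomorphism, and take $\bigl(f^{-1}(p_E(V))\times V\bigr)\cap f^*E$ as the neighbourhood carried by $g$ onto $f^{-1}(p_E(V))$. The only difference is that you make explicit the continuous inverse section $s(x)=\bigl(x,(p_E|_V)^{-1}(f(x))\bigr)$, which the paper's proof asserts implicitly when it claims the neighbourhood is ``mapped homeomorphically'' onto its image; this is a welcome sharpening rather than a departure.
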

\begin{proof}
We want to show that $g$ is a local homeomorphism. From the definition of pullback $f^*E:=\{(x,e)|f(x)=p_E(e)\}\subseteq X\times E$. Therefore, given an element $(x,e)\in f^*E$, we want to show that there exists an open neighbourhood $(V, U)\ni (x,e)$ which is mapped homeomorphically onto $U$ via $g$. Since $p_E$ is etal\'e, then there exists an open neighbourhood $U\ni e$ which is mapped homeomorphically into an open set $p_E(U)$ in $Y$. Then since $f$ is continuous, it follows that $f^{-1}(p_E(U))\times U$ is open in $ X\times E$ and is a neighbourhood of some $(x,e)$. If we then define $\big(f^{-1}(p_E(U))\times U\big)\cap f^*E $ this will be an open set since its the intersection of two opens, and it will be a neighbourhood of $(x,e)$ in $ f^*E$. Given the definition of pullback, then $g(f^{-1}(p_E(U))\times U)=f^{-1}(p_E(U))$, i.e. $f^{-1}(p_E(U))\times U$ will be mapped homeomorphically onto $f^{-1}(p_E(U))$. Thus $g$ is etal\'e.
\end{proof}
Each etal\'e bundle is equipped with an etal\'e topology on the stalk space. Such a topology is defined in terms of sections of the bundle as follows
\begin{Definition}
Given an etal\'e bundle $p_E:E\rightarrow X$, both $p$ and any section $s:X\rightarrow E$ of $p_E$ are open maps. Through every point $e\in E$ there is at least one section $s:U\rightarrow X$, and the images of $s(U)$ for all sections form a base for the topology of $E$. If $s$ and $t$ are two sections, the set $W=\{x|s(x)=t(x)\}$ where both sections are defined and agree on, is open in $X$. 

\end{Definition}

From the definition it follows that each stalk has a discrete topology, since by definition of a section, $s$ will pick an element in each stalk $p^{-1}(x)$ for all $x\in U$

Summarising, a sheaf can essentially be thought of as a bundle with some extra
topological properties. In particular,
given a topological space $X$, a sheaf over $X$ is a pair $(A,p)$ consisting of a topological space $A$ and a continuous map $p:A\rightarrow X$, which is a local homeomorphism. 

Thus, pictorially, one can imagine that to each point, in each
fibre, one associates an open disk (each of which will have a
different size) thus obtaining a stack of open disks for each
fibre. These different open discs are then glued together by the
topology on $A$.

The above is the more intuitive definition of
what a sheaf is. Now we come to the technical definition which is
the following:
\begin{Definition}
A sheaf of sets $F$ on a topological space $X$ is a functor\footnote{Here $\mathcal{O}(X)$ indicates the category of open sets of $X$ ordered by inclusions. }
$F:\mathcal{O}(X)^{op}\rightarrow Sets$, such that each open
covering $U=\bigcup_{i}U_i$, $i\in I$ of an open set $U$ of $X$
determines an equaliser
\[\xymatrix{
F(U)\ar@{{>}->}[r]^e & \prod_i F(U_i)\ar@<3pt>[r]^{p}
\ar@<-3pt>[r]_{q} & \prod_{i,j}F(U_i\cap U_j) }\] where for $t\in
F(U)$ we have $e(t)=\{t|_{U_i}|i\in I\}$ and for a family $t_i\in
F(U_i)$ we obtain \be p\{t_i\}=\{t_i|_{U_i\cap
U_j}\},\;\;\;q\{t_i\}=\{t_j|_{U_i\cap U_j}\} \ee
\end{Definition}

Given the definition of product it follows that the maps $e$, $p$, and $q$ above are dermined though the diagram
\[\xymatrix{
&&F(U_i)\ar[rr]^{F(U_i\cap U_j\subseteq U_i)}&&F(U_1\cap U_j)\\
&&&&\\
F(U)\ar@{{>}->}[rr]^e\ar[rruu]\ar[rrdd] && \prod_i F(U_i)\ar@<3pt>[rr]^{p}
\ar@<-3pt>[rr]_{q} \ar[uu]\ar[dd]&& \prod_{i,j}F(U_i\cap U_j) \ar[uu]\ar[dd]\\
&&&&\\
&&F(U_j)\ar[rr]^{F(U_i\cap U_j\subseteq U_j)}&&F(U_i\cap U_j)
}\]

It is clear from the above definition that a sheaf is a special type of presheaf. In fact, given a topological space $X$, $Sh(X)$ is a full subcategory of $Sets^{\mathcal{O}(X)^{op}}$. Similarly as $Sets^{\mathcal{O}(X)^{op}}$, also $Sh(X)$ forms a topos.
\subsubsection{Simple Example}
A very simple example of a sheaf is the following. Consider a presheaf 
\ba
C:\mathcal{O}(X):&\rightarrow& Sets\\
U&\mapsto& C(U):=\{f|f:U\rightarrow \Rl\text{ if continuous }\}
\ea
This is definitely a well defined presheaf in fact, for $U^{'}\subseteq U$ the presheaf maps can be defined through restriction
\ba
C(U)&\rightarrow& C(U^{'})\\
f&\mapsto& f_{|U^{'}}
\ea
The reason $C$ is also a sheaf follows from the continuity properties of the maps $f$. In fact if we consider a covering $U_i$, $i\in I$ of $U$, such that we have continuous functions $f_i:U_i\rightarrow \Rl$ for all $i\in I$. Because of continuity if follows that there exists at most one map $f:U\rightarrow\Rl$ such that $f_{|U_i}=f_i$. Moreover, such a map exists iff
\be
f_i(x)=f_j(x)\;\forall\; x\in U_i\cap U_j\;, (i,j \in I)
\ee 
It follows that the requirement of the map $e:C(U)\rightarrow\coprod_{i\in I} C(U_i)$ being an equaliser is satisfied, thus $C$ is a sheaf.
\subsubsection{Connection Between Sheaves and Etal\'e Bundles}
From the definitions  given above it seems hard to understand what the connection between sheaves and an etal\'e bundles might be. In order to understand this  connection we need to introduce the notion of \emph{germ} of a function. Once we have introduced such a notion that it can be shown that each sheaf is a sheaf of cross sections of a suitable bundle. All this will become clear as we proceed. So first thing what is a \emph{germ}? \emph{Germs} represent constructions which define local properties of functions. In particular they indicate how similar two functions are locally. Because of this locality requirement, \emph{germs} are generally defined on functions acting on topological spaces such that the word local acquires meaning. For example one can consider measure of `locality' to be a power series expansion of a function around some fixed point. Thus, one can say that two holomorphic functions $f, g: U\rightarrow \Cl$ have the same germ at a point $a\in U$ iff the power series expansions around that point are the same. Thus $f, g$ agree on some neighbourhood of $a$, i.e., with respect to that neighbourhood they ``look" the same.

This definition obviously holds only if a power series expansion exists, however it is possible to generalise such a definition in a way that it only requires topological properties of the spaces involve. Thus for example two functions $f, g:X\rightarrow E$ have the same germ at $x\in X$ if there exist some neighbourhood of $x$ on which they agree. In this case we write\footnote{This should be read as: the germ of f at x is the same as the germ of g at x.} $germ_x f=germ_x g$ which implies that $f(x)=g(x)$. But the converse is not true.\\

How do we generalise such a definition of germs in the case of presheaves? Let us consider a presheaf $P:\mathcal{O}(X)\rightarrow Sets\in Sets^{\mathcal{O}(X)^{op}}$ where $X$ is a topological space and $ \mathcal{O}(X)^{op}$ is the category of open sets with reverse ordering (to the inclusion ordering). Given a point $x\in X$ and two neighbourhoods $U$ and $V$ of $x$, the presheaf $P$ assigns two sets $P(U)$ and $P(V)$. Now consider two points $t\in P(V)$ and $s\in P(U)$. We then say that $t$ and $s$ have the same germ at $x$ iff there exists some open $W\subseteq U\cap V$ such that $x\in W$ and $s_{|W}=t_{|W}\in P(W)$. 

The condition of having the same germ at x defines an equivalence class which is denoted as $germ_x s$. Thus $t\in germ_x s$ iff, given two opens $U, V\ni x$ then there exists some $W\subseteq U\cap V$ such that $x\in W$ and $t_{|W}=s_{|W}\in P(W)$, where $s\in P(U)$ and $t\in P(V)$. It follows that the set of all elements obtained though the $P$ presheaf get `quotient' through the equivalence relation of \emph{``belonging to the same germ"}. Therefore, for each point $x\in X$ there will exist a collection of germs at $x$, i.e., a collection of equivalence classes:
\be
P_x:=\{germ_x s|s\in P(U), x\in U\text{ open in } X\}
\ee
We can now collect all these set of germs for all points $x\in X$ defining
\be
\Lambda_P=\coprod_{x\in X}P_x=\{\text{all } germ_x s|s\in X, s\in P(U)\}
\ee
Thus what we have done so far is basically divide the preheaf space in equivalence classes. We can now define the map
\ba
p:\Lambda_P&\rightarrow& X\\
germ_x s&\mapsto&x\\
germ_y s&\mapsto&y
\ea
which sends each germ to the point in which it is taken.
It follows that each $s\in P(U)$ defines a function 
\ba
\dot{s}:U&\rightarrow &\Lambda_P\\
x&\mapsto&germ _x s
\ea
It is straight forward to see that $\dot{s}$ is a section of $p:\Lambda_P\rightarrow X$. Since the assignments $s\rightarrow \dot{s}$ is unique, it is possible to replace each element $s$ in the original presheaf with a section $\dot{s}$ to the set of germs $\Lambda_P$.

We now define a topology on $\Lambda_P$ by considering as basis of open sets all the image sets $\dot{s}(U)\subseteq\Lambda_P$ for $U$ open in $X$, i.e. open sets are unions of images of sections. Such a topology obviously makes $p$ continuous. In fact, given an open set $U\subseteq X$ then $p^{-1}(U)$ is open by definition of the topology on $\Lambda_P$, since $p^{-1}(U)=\bigcup_{s_i\in P(U)}\dot{s}_i(U)$.

On the other hand it is also possible to show that the sections $\dot{s}$ as defined above are continuous with respect to the topology on $\Lambda_P$. To understand this consider two elements $t\in P(V) $ and $s\in P(U)$ such that $\dot{t}(x)=\dot{s}(x)$, i.e. $germ_x (t)=germ_x(s)$ where $x\in V\cap U$. It then follows that there exists an open set $W\ni x$ such that $W\subseteq V\cap U$. If we considered all those elements $y\in V\cap U\subseteq X$ for which $\dot{s}(y)=\dot{x}(y)$ then all such elements will comprise the open set $W\subseteq V\cap U$. Given this reasoning we want to show that given an open $\mathcal{O}\in \Lambda_P$, then $\dot{s}^{-1}(\mathcal{O})$ is open in $X$. Without loss of generality we can choose $\mathcal{O}$ to be a basis set, i.e. 
\be
\dot{s}(W)=\{germ_x(s)|\forall x\in W\}
\ee
Thus $W$ consists of all those points $x$ such that $\dot{s}(x)=\dot{t}$ for $t, s\in germs_x(s)$. It follows that $W$ is open.

One can also show that $\dot{s}$ is open and an injection \begin{proof}
We want to show that $\dot{s}$ is an injection and is open. The fact that it is open follows from the definition of topology on $\Lambda_p$ since the basis of open sets are all the image sets $\dot{s}(U)\subseteq\Lambda_P$ for $U$ open in $X$.  To show that it is injective we need to show that  if $germ_x s=germ_y s$ then $x=y$. This follows from  the definition of germs at a point.
\end{proof}
Putting all these results together we show that $\dot{s}:U\rightarrow \dot{s}(U)$ is a homeomorphism.

We have so managed to construct a bundle $p:\Lambda_P\rightarrow X$ which is a local homeomorphism since each point $germ_x(s)\in \Lambda_P$ has an open neighbourhood $\dot{s}(U)$ such that $p$ restricted to $\dot{s}(U)$ $p:\dot{s}(U)\rightarrow X$ has a two sided inverse $\dot{s}:U\rightarrow\dot{s}(U)$: 
\be
p\circ \dot{s}=id_X;\;\;\;\dot{s}\circ p=id_{\Lambda_P}
\ee
Hence $p$ is a local homeomorphism.\\

The above reasoning shows how, given a presheaf $P$ it is possible to construct a bundle $p:\Lambda_P\rightarrow X$ out of it. Given such a bundle, it is then possible to construct a sheaf in terms of it. In fact, consider the following sheaf
\ba
\Gamma(\Lambda_P):\mathcal{O}^{op}&\rightarrow& Sets\\
U&\mapsto&\{\dot{s}|s\in P(U)\}
\ea
\begin{proof}
We want to show that the presheaf
\ba
\Gamma(\Lambda_P):\mathcal{O}(X)^{op}&\rightarrow& Sets\\
U&\mapsto&\{\dot{s}|s\in P(U)\}
\ea
is actually a sheaf. To this end we should note that the maps are defined by restriction, i.e. given $U_i\subseteq U$ then
\ba
\Gamma(\Lambda_P):\mathcal{O}(X)^{op}&\rightarrow& Sets\\
U_i&\mapsto&\{\dot{s}_i|s_i\in P(U_i)\}
\ea
where $\dot{s}\mapsto\dot{s}_i$ id defined via $\dot{s}_i=P(i_{U_iU})s$. Now since
\ba
\dot{s}:U&\rightarrow&\Lambda_p(U)\\
x&\mapsto&germ_x s
\ea
while 
\ba
\dot{s}_i:U_i&\rightarrow&\Lambda_p(U_i)\\
y&\mapsto&germ_y s_i
\ea
Since $U_i\subseteq U$ then 
\ba
\dot{s}:U_i&\rightarrow&\Lambda_p(U_i)\\
y_i&\mapsto&germ_y s
\ea
Thus $\dot{s}_i=\dot{s}_{|U_i}$. 

In order to show that the above is indeed a sheaf we need to show that the diagram
\[\xymatrix{
\Gamma(\Lambda_p(U))\ar@{{>}->}[r]^e & \prod_i \Gamma(\Lambda_p(U_i))\ar@<3pt>[r]^{p}
\ar@<-3pt>[r]_{q} & \prod_{i,j}\Gamma(\Lambda_p(U_i\cap U_j)) }\] 
is an equaliser.
By applying the definition of the sheaf maps we obtain
\ba
e:\Gamma(\Lambda_p(U))&\rightarrow &\prod_i \Gamma(\Lambda_p(U_i))\\
\dot{s}&\rightarrow&e(\dot{s})=\{\dot{s}_{U_i}|i\in I\}=\{\dot{s}_i|i\in I\}
\ea
On the other hand 
\be
p(\dot{s}_i)=\{s_i|_{U_i\cap U_j}\}=\{s_{|U_i\cap U_j}\}
\ee
while
\be
q(\dot{s}_j)=\{s_j|_{U_i\cap U_j}\}=\{s_{|U_i\cap U_j}\}
\ee
\end{proof}

$\Gamma(\Lambda_P)$ is called the sheaf of cross sections of the bundle $p:\Lambda_P\rightarrow X$.

We can now define a map 
\be
\eta:P\rightarrow\Gamma\circ \Lambda_P
\ee
such that for each context $U\in \mathcal{O}(X)^{op}$ we obtain 
\ba
\eta_U:P_U&\rightarrow&\Gamma(\Lambda_P)(U)\\
s&\mapsto&\dot{s}
\ea
\begin{Theorem}
If $P$ is a sheaf then $\eta$ is an isomorphism.
\end{Theorem}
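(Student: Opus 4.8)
The plan is to show that $\eta$ is an isomorphism in the functor category $\Sets^{\mathcal{O}(X)^{op}}$, which—as for any natural transformation between presheaves—reduces to showing that each component
$\eta_U : P(U) \to \Gamma(\Lambda_P)(U)$, $s \mapsto \dot{s}$, is a bijection of sets. Naturality of $\eta$ has already been arranged by defining the restriction maps of $\Gamma(\Lambda_P)$ through $\dot{s} \mapsto \dot{s}_{|U_i}$, so it remains only to establish injectivity and surjectivity of each $\eta_U$; this is precisely where the two halves of the sheaf (equaliser) condition enter, separatedness for injectivity and gluing for surjectivity.

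First I would treat injectivity, which uses only the monic part of the map $e$. Suppose $s, t \in P(U)$ with $\dot{s} = \dot{t}$. Then for every $x \in U$ we have $germ_x s = germ_x t$, so there is an open $W_x$ with $x \in W_x \subseteq U$ and $s_{|W_x} = t_{|W_x}$. The family $\{W_x\}_{x \in U}$ is an open cover of $U$ on which $s$ and $t$ agree; since $P$ is a sheaf, the map $e : P(U) \to \prod_i P(W_{x_i})$ is monic, which forces $s = t$. Hence $\eta_U$ is injective.

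The real work is surjectivity, which I expect to be the main obstacle and where the full equaliser axiom is needed. Start from an arbitrary $\sigma \in \Gamma(\Lambda_P)(U)$, i.e.\ a continuous section $\sigma : U \to \Lambda_P$ of $p$. For each $x \in U$ the value $\sigma(x)$ is a germ, so $\sigma(x) = germ_x s_x$ for some $s_x \in P(U_x)$ with $x \in U_x$. The key point is to promote this \emph{pointwise} description to a \emph{local} one: because the sets $\dot{s_x}(V)$ form a basis for the étale topology on $\Lambda_P$ and $\sigma$ is continuous, there is an open $V_x$ with $x \in V_x \subseteq U_x$ on which $\sigma$ coincides with $\dot{s_x}$. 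Writing $\sigma_x := P(i_{V_x U_x})\, s_x \in P(V_x)$ for the restriction, we then have $\sigma_{|V_x} = \dot{\sigma_x}$, and the family $\{V_x\}$ covers $U$.

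To glue, I would check compatibility on overlaps. On $V_x \cap V_y$ both $\dot{\sigma_x}$ and $\dot{\sigma_y}$ restrict to $\sigma$, so the sections attached to $(\sigma_x)_{|V_x \cap V_y}$ and $(\sigma_y)_{|V_x \cap V_y}$ coincide; applying the injectivity just proved, now over the open set $V_x \cap V_y$, yields $(\sigma_x)_{|V_x \cap V_y} = (\sigma_y)_{|V_x \cap V_y}$. Thus $\{\sigma_x\}$ is an element of $\prod_x P(V_x)$ equalised by $p$ and $q$, and the equaliser property of the sheaf $P$ produces a unique $s \in P(U)$ with $s_{|V_x} = \sigma_x$ for all $x$. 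It then remains to verify $\dot{s} = \sigma$: for each $x$ we have $germ_x s = germ_x(s_{|V_x}) = germ_x \sigma_x = \sigma(x)$, so $\dot{s}$ and $\sigma$ agree pointwise and $\eta_U(s) = \sigma$. The delicate step throughout is the passage from germ-level data to honest local sections via continuity together with the definition of the étale topology; once that is secured, separatedness gives injectivity and overlap compatibility, and gluing supplies the required preimage, so every $\eta_U$ is a bijection and $\eta$ is an isomorphism.
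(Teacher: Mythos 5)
Your proof is correct and takes essentially the same route as the paper's: both reduce the statement to showing each component $\eta_U$ is a bijection, proving injectivity from the separatedness (monic) half of the equaliser condition, and proving surjectivity by using continuity of the section together with the \'etale basis to pass from germ-level data to local sections $\dot{s}_x$, checking compatibility on overlaps, and gluing via the equaliser property. Your explicit appeal to the already-established injectivity over $V_x\cap V_y$ to get overlap compatibility is a slightly cleaner rendering of a step the paper merely asserts, but it is not a different argument.
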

\begin{proof}
We need to show that $\eta$ is 1:1 and onto. 
\begin{enumerate}
\item One to one:\\
We want to show that if $\dot{s}=\dot{t}$ then $t=s$. Given $t, s\in P(U)$, $\dot{s}=\dot{t}$ means that $germ_x(s)=germ_x(t)$ for all $x\in U$. Therefore there exists opens $V_x\subseteq U$ such that $x\in V_x$ and $t_{|V_x}=s_{|V_x}$. The collection of these opens $V_x$ for all $x\in U$ form a cover of $U$ such that $s_{V_x}=t_{|V_x}$. This implies that $s,t$ agree on the map $P(U)\rightarrow\coprod_{x\in U}P(V_x)$. From the sheaf requirements it follows that $t=s$.

\item Onto:\\
We want to show that any section $h:U\rightarrow\Lambda_p$ is of the form $\eta_U(s)=\dot{s}$ for some $s\in P(U)$. Thus consider a section $h:U\rightarrow\Lambda_p$, this will pick for each $x\in U$ an element say $h(x)=germ_x(s_x)$. 
Therefore for each $x\in U$ there will exist an open $U_x\ni x$ such that $s_x\in P(U_x)$. By definition $germ_x(s_x)=\dot{s}_x(x)$ where $\dot{s}_x$ is a continuous section therefore for each open $U_x$ we get $\dot{s}_x(U_x)=\{germ_x(s_x)|\forall x\in U_x\}$ which is open by definition. 
It follows that for each $x\in U_x$ there will exist some $t, s\in germ_x(s)$ such that $\dot{s}(x)=\dot{t}(x)$. This implies that there exists some open set $W_x$ for which $x\in W_x\subseteq U_x\subseteq U$ and such that $t_{W_x}=s_{|W_x}$. 
These open sets $W_x$ form a covering  of $U$, i.e. $U=\coprod_{x\in U_x}W_x$ with $s_{|W_x}\in P(W_x)$ for each $P(W_x)$. Moreover, since $h(x)=germ_x(s_x)$ for $x\in U_x$ it follows that $h=\dot{s}$ for each $W_x$. 
Now consider two sections $\dot{s}_x$ and $\dot{s}_y$ for $x\in P(W_x)$ and $y\in P(W_y)$ then on the intersection $W_x\cap W_y$, $h$ agrees with both $\dot{s}_x$ and $\dot{s}_y$ therefore the latter agree in the intersection. This means that $germ_z(s_x)=germ_z(s_y)$ for $z\in W_x\cap W_y$, therefore $s_x|_{W_x\cap W_y}=s_y|_{W_x\cap W_y}$\\
We thus obtain a family of elements $s_x$ for each $x\in U_x$ such that they agree on both maps $P(U_x)\leftrightarrows\coprod_{x\in U_x}P(W_x)\cap P(W_y)$.
From the condition of being a sheaf it follows that there exists an $s\in P(U)$ such that $s_{V_x}=s_x$. Then at each $x\in U$ we have $h(x)=germ_x(s_x)=germ_x(s)=\dot{s}(x)$. Therefore $h=\dot{s}$

\end{enumerate}
\end{proof}
It follows that all sheaves are sheaves of cross sections of some bundle.\\
Moreover it is possible to generalise the above process and define a pair of functors
\be
Sets^{\mathcal{O}(X)^{op}}\xrightarrow{\Lambda}Bund(X)\xrightarrow{\Gamma}Sh(X)
\ee 
Which if we combine together we get the so called \emph{sheafification functor}:
\be
\Gamma\Lambda:Sets^{\mathcal{O}(X)^{op}}\rightarrow Sh(X)
\ee
Such a functor sends each presheaf $P$ on $X$ to the ``best approximation" $\Gamma\Lambda_P$ of $P$ by a sheaf.

In the case of etal\'e bundles we then obtain the following equivalence of categories
\[\xymatrix{
\text{Etal\'e}(X)\ar@/_/[r]_{\Gamma}
&Sh(X) \ar@/_/[l]_{\Lambda} 
}\]
The pair of functors $\Gamma$ and $\Lambda$ are an adjoint pair (see section 3.1).
Where we have restricetd the functors to act on $Sh(X)\subseteq Sets^{\mathcal{O}(X)^{op}}$
\section{Sheaves on a Partially Ordered Set}
In the case at hand, since our base category $\mv(\mh)$ is a poset
we have an interesting result. In particular, each poset $P$ is equipped with an Alexandroff topology whose basis is given by the collection of all lower
sets in the poset $P$, i.e., by sets of the form $\downarrow
p:=\{p^{'}\in P|p^{'}\leq p\}$, $p\in P$\footnote{ 
Note that a function $\alpha
: P_1\rightarrow P_2$ between posets
$P_1$ and $P_2$ is continuous with respect to the Alexandroff topologies on each poset, if and only if it is order preserving.}.

The dual of such a topology is the topology of upper sets, i.e.
the topology generated by the sets $\uparrow p:=\{p^{'}\in P|
p^{'}\leq p\}$. Given such a topology it is a standard result
that, for any poset $P$, \be \Sets^P\simeq Sh(P^+) \ee where $P^+$
denotes the complete Heyting algebra of upper sets, which are the
duals of lower sets. It follows that \be \Sets^{P^{op}}\simeq
Sh((P^{op})^+)\simeq Sh(P^-) \ee where $P^-$ denotes the set of
all lower sets in $P$. In particular, for the poset $\mv(\mh)$ we have 
\be\label{equ:correspondence}
\Sets^{\mv(\mh)^{op}}\simeq Sh(\mv(\mh)^-) \ee
Thus every presheaf in our theory is in fact a sheaf with respect to the topology $\mv(\mh)^-$. We will denote by $\underline{\bar{A}}$ the sheaves over $\mv(\mh)$, while the respective presheaf will denote by $\underline{A}$. Moreover, in order to simplify the notation we will write $Sh(\mv(\mh)^-)$ as just $Sh(\mv(\mh))$.

We shall frequently use the particular class of lower sets in
$\mv(\mh)$ of the form \be \downarrow V:= \{V^{'}|V^{'}\subseteq
V\} \ee where $V\in Ob(\mv(\mh))$. It is easy to see that the set
of all of these is a basis for the topology $\mv(\mh)^-$. Moreover
\be \downarrow V_ 1\cap\downarrow V_2 =\downarrow (V_1\cap V_2)
\ee
i.e., these basis elements are closed under finite intersections.

It should be noted that $\downarrow V$ is the `smallest' open set
containing $V$ , i.e., the intersection of all open neighbourhoods
of $V$ is $\downarrow V$ . The existence of such a smallest open
neighbourhood is typical of an Alexandroff space.

If we were to include the minimal algebra
$\Cl(\hat{1})$ in $\mv(\mh)$ then, for any $V_1$, $V_2$ the
intersection $V_1\cap V_2$ would be non-empty. This would imply
that $\mv(\mh)$ is non Hausdorff.
To avoid this, we will exclude the minimal algebra from $\mv(\mh)$. This means that, when $V_1\cap V_2$ equals $\Cl(\hat{1})$ we will not consider it. 

More precisely the semi-lattice operation $V_1, V_2\rightarrow V_1 \wedge V_2$ becomes a partial operation which is defined as $V_1\cap V_2$ only if $V_1 \cap V_2\neq \Cl(\hat{1})$, otherwise it is zero.

\noindent
This restriction implies that when considering the topology on the
poset $\mv(\mh)-\Cl(\hat{1})$ we obtain \be
\downarrow V_1\cap\downarrow V_2=\begin{cases}\downarrow (V_1\cap V_2)& {\rm if}\hspace{.1in}V_1 \cap V_2\neq \Cl(\hat{1});\\
\emptyset& \text{otherwise}.
\end{cases}
\ee There are a few properties regarding sheaves on a poset worth
mentioning:
\begin{enumerate}
\item When constructing sheaves it suffices to restrict attention to the basis elements of
the form $\downarrow V$, $V\in Ob(\mv(\mh)$. For a given presheaf
$\underline{A}$, a key relation between its associated sheaf,
$\underline{\bar{A}}$ is simply \be\label{equ:up}
\underline{\bar{A}}(\downarrow V ) := \underline{A}_V \ee where
the left hand side is the sheaf using the topology $\mv(\mh)^{-}$
and the right hand side
is the presheaf on $\mv(\mh)$. 
\begin{proof}
We want to show that 
\be
\underline{\overline{A}}(\downarrow V)\simeq\underline{A}_V
\ee
Consider the open $\downarrow V$, this forms a poset, thus we can define the presheaf $\ps{A}_{|\downarrow V}$. This has as objects $a_i\in \ps{V_i}$ for all $V_i\in \downarrow V$ and morphisms $\ps{A}_{V}\rightarrow\ps{A}_{V_j}$; $a\mapsto \ps{A}(i_{V, V_j})a$ where $V_j\in \downarrow V$. Moreover for $V_i\cap V_j\subseteq V$ we have 
\ba
\ps{A}_{V}&\rightarrow&\ps{A}_{V_j\cap V_j}\\
a&\mapsto &\ps{A}(i_{V, V_i\cap V_j})a\\
\ea
In order for such a presheaf to be a sheaf we require that 
\[\xymatrix{
\ps{A}(\downarrow V)\ar@{{>}->}[r]^e & \prod_i \ps{A}(\downarrow V_i)\ar@<3pt>[r]^{p}
\ar@<-3pt>[r]_{q} & \prod_{i,j}\ps{A}(\downarrow (V_i\cap V_j)) }\] 
Indeed we have that
\ba
\ps{A}_{V}&\rightarrow&\ps{A}_{V_i}\rightarrow\ps{A}_{(V_i\cap V_j)}\\
a&\mapsto&\ps{A}(i_{V_iV})(a)\mapsto \ps{A}(i_{V_i, V_i\cap V_j})\big(\ps{A}(i_{V_iV})(a)\big)=\ps{A}(i_{V, V_i\cap V_j})a
\ea
on the other hand
\ba
\ps{A}_{V}&\rightarrow&\ps{A}_{V_j}\rightarrow\ps{A}_{(V_i\cap V_j)}\\
a&\mapsto&\ps{A}(i_{V_jV})(a)\mapsto \ps{A}(i_{V_i, V_i\cap V_j})\big(\ps{A}(i_{V_jV})(a)\big)=\ps{A}(i_{V, V_i\cap V_j})a
\ea
Thus 
\be
\underline{\overline{A}}(\downarrow V)\simeq\ps{A}_{|\downarrow V}
\ee
However, given the initial algebra $V\in\downarrow V$, then from $\ps{A}_V$ and the presheaf maps, we can retrieve all of the elements in $\ps{A}_{|\downarrow V}$. Thus
\be
\underline{\overline{A}}(\downarrow V)\simeq\underline{A}_V
\ee
Given a presheaf map, there is an associated restriction map for
sheaves. In particular, given $i_{V_1V} : V_1\rightarrow V$ with
associated presheaf map $\underline{A}(i_{V_1V} ) :
\underline{A}_V\rightarrow \underline{A}_{V_1}$ , then the
restriction map $\rho_{V_1V} :\underline{\bar{A}}(\downarrow V
)\rightarrow\underline{\bar{A}}(\downarrow V_1)$ for the sheaf
$\underline{\bar{A}}$ is defined as \be a_{|\downarrow V_1} =
\rho_{V_1V} (a) := \underline{A}(i_{V_1V} )(a) \ee for all $a\in
\underline{\bar{A}}(\downarrow V )\simeq \underline{A}_V$.

\end{proof}
\item Given an open set $\mathcal{O}$ in $\mv(\mh)^-$ such a set is covered by the down set $\downarrow V$, $V\in Ob(\mv(\mh))$. Therefore we have 
\be 
\underline{\bar{A}}(\mathcal{O}) = {\lim_{\longleftarrow}}_{ V\subseteq
\mathcal{O}}\underline{\bar{A}}(\downarrow V) = {\lim_{\longleftarrow}}_{ V\subseteq
\mathcal{O}}\underline{A}_V \ee 
Where $\lim_{\longleftarrow}$ indicated the inverese limit.

 A direct
consequence of the above is that \be
\underline{\bar{A}}(\mathcal{O})=\Gamma\underline{A}_{|\mathcal{O}}
\ee The connection with \ref{equ:up} is given by the fact that
$\Gamma\underline{\bar{A}}_{|\downarrow V}\simeq\underline{A}_V$.

%
%
\item For presheaves on partially ordered sets the sub-object classifier $\uom^{\mv(\mh)}$ has some interesting properties. In particular, given the set $\uom^{\mv(\mh)}_V$ of sieves on $V$, there exists a bijection between sieves in $\uom^{\mv(\mh)}_V$ and lower sets of $V$. To understand this let us consider any sieve $S$ on $V$, we can then define the lower set of $V$
\be L_S:=\bigcup_{V_1\in S}\downarrow V_1 \ee Conversely, given a
lower set $L$ of $V$ we can construct a sieve on $V$ \be
S_L:=\{V_2\subseteq V|\downarrow V_2\subseteq L\} \ee However if
$\downarrow V_2\subseteq L_S$ then $\downarrow V_2\subseteq
\bigcup_{V_1\in S}\downarrow V_1$, therefore $V_2\in S$. On the
other hand if $V_2\in S_L$ ($S_L$ sieve on $V$), then
$V_2\subseteq V$ and $\downarrow V_2\subseteq L$, therefore
$V_2\in \bigcup_{V_1\in S}\downarrow V_1$, i.e. $V_2\in L_S$. This
implies that the above operations are inverse of each other.
Therefore \be \bar{\uom}^{\mv(\mh)}(\downarrow
V):=\uom^{\mv(\mh)}_V\simeq \Theta(V) \ee where $\Theta(V)$ is
the collection of lower subsets (i.e. open subsets in $\mv(\mh)$)
of $V$. This is equivalent to the fact that, in a topological
space $X$, we have that $\Omega^X(O)$ is the set of all open
subsets of $O\subseteq X$.
\end{enumerate}

\section{Geometric Morphisms}
We will now introduce a very important concept in topos theory, namely the idea of geometric morphisms. Such objects are very important because they allows to define maps between topoi in a way that a lot of internal relations are preserved. In order to fully understand what a geometric morphisms is we first have to introduce the concept of adjunction. This concept also is very important since adjunctions arise pretty much everywhere when one does topos theory,. 
\subsection{Adjunctions}
Consider two categories $\c$ and $\d$ and two functors between them going the opposite directions
\be
F:\d\rightarrow\c\;\text{ and } G:\c\rightarrow\d
\ee
$F$ is said to be \emph{left adjoint} to $G$ (or $G$ is \emph{right adjoint} to $F$) iff given any two objects $A\in \c$ and $X\in \d$ there exists a natural bijection between morphisms:

\be
\frac{X\xrightarrow{f}G(A)}{F(X)\xrightarrow{h}A}
\ee

What this means is that here is an exact correspondence between certain type of any maps, i.e. to each map from $X$ to $G(A)$ there uniquely corresponds a map form $F(X) $ to $A$. In other words $f$ uniquely determines $h$ and vice versa. Therefore we can write the following bijection
\be\label{equ:bijec}
\theta:Hom_{\d}(X, G(A))\xrightarrow{\sim}Hom_{\c}(F(X), A)
\ee
Such a bijection is said to be natural in the sense that, given any morphisms $\alpha:A\rightarrow A^{'}$ in $\c$ and $\beta: X^{'}\rightarrow X$ in $\d$, then the composition between these arrows and $f$ and $h$ above creates yet another correspondence:
\be
\frac{X^{'}\xrightarrow{\beta}X\xrightarrow{f}G(A)\xrightarrow{G(\alpha)}G(A^{'})}{F(X^{'})\xrightarrow{f(\beta)}FX\xrightarrow{h}A\xrightarrow{\alpha}A^{'}}
\ee
The symbol to indicate an adjunction relation between functors is $F\dashv G$.

An important consequence of adjunctions is the existence of unit and co-unit morphisms. These are defined as follows
\begin{Definition}
Given an adjunction $F\dashv G$ with corresponding bijection \ref{equ:bijec}, taking $A=F(X)$ then we obtain a unique map 
\be
\eta_X:X\rightarrow GF(X)
\ee
such that $\theta(\eta_X)=id_{F(X)}$. Such a map is called the unit of the adjunction. Moreover, given a map $f$, then $h$ is uniquely determined such that the following diagram commutes 
\[\xymatrix{
X\ar[rr]^{\eta}\ar[ddrr]_{f}&&GF(X)\ar[dd]^{G(h)}\\
&&\\
&&G(A)\\
}\;\;\;\xymatrix{&F(X)\ar[dd]^h&\\
&&\\
&A&\\
}\] 
$\eta$ is universal among the arrows which make the above diagram commute (i.e. any other such arrow uniquely factors though $\eta$)
\end{Definition}

Similarly we also have the notion of \emph{co-unit} of the adjuntion which is defined as follows
\begin{Definition}
By taking $G(A)=X$ in \ref{equ:bijec} and $f$ the identity on $G(A)$ then $h$ becomes
\be
\epsilon_A:FG(A)\rightarrow (A)
\ee
Therefore $\theta^{-1}(\epsilon_A)=1_{G(A)}$. Moreover, given any $g:F(X)\rightarrow A$ there exists a unique $f$ such that $\epsilon$ is universal amont the arrows which make the following diagram commute
 \[\xymatrix{
X\ar[rr]^{\eta}\ar[ddrr]_{f}&&GF(X)\ar[dd]^{G(h)}\\
&&\\
&&G(A)\\
}\;\;\;\xymatrix{&X\ar[dd]^{f}&\\
&&\\
&G(A)&\\
}\] 

\end{Definition}
\subsubsection{Example}
\textbf{Posets}\\
In a poset we can define an adjunction as follows.
Suppose $g :Q\rightarrow P$ is a monotone map between posets. Then, given any element $x\in P$, We call a $g$-approximation of $x$ (from above) an element $y\in Q$ such that $x \leq g(y)$. Moreover among all such approximations there will be the best one (is a bit similar to how one defined greatest lower bound):
a best $g$-approximation of $x$ is an element $y\in Q$ such that
\be
x \leq g(y)\text{ and } \forall z \in Q ( x \leq g(z)\Rightarrow y \leq z )
\ee
If a best g-approximation exists then it is clearly unique since we are in a poset (at most one arrow between any two elements). Moreover if it does exists, for all $x \in P$  then we have a function $f : P\rightarrow Q$  such that, for all $ x \in  P$,  $z \in Q$:
\be
x \leq g(z) \Leftrightarrow f(x) \leq z
\ee
We say that $f$ is the left adjoint of $g$, and $g$ is the right adjoint of $f$. Again it is trivial to see that the left adjoint of $g$, if it exists, is
uniquely determined by $g$.\\\\
\textbf{Exponential}\\
Consider a category $\c$ in which product are defined. For a given object $A\in \c$ one can define the functor 
\ba
A\times -:\c&\rightarrow &\c\\
B&\mapsto& A\times B
\ea
It is possible to define the right adjoint of such a functor, namely:
\ba
(-)^A:\c&\rightarrow& \c\\
B&\mapsto&B^A
\ea
which is simply the \emph{exponential} as defined in previous lectures. We then obtain the adjunction 
\be
\big(A\times -\big)\dashv (-)^A
\ee

The property of being an adjoint pair implies that there exists the bijection
\be
\frac{C\rightarrow B^A}{A\times C\rightarrow B}
\ee
In this context the \emph{co-unit} map is 
\be
\epsilon:A\times B^A\rightarrow B
\ee
such that, given any map $h:A\times C\rightarrow B$ there exists a unique $f:C\rightarrow B^A$ such that $\epsilon\circ (1\times f)=h$, i.e.

\[\xymatrix{
A\times B^A\ar[rr]^{\eta}&&B\\
&&\\
A\times C\ar[uurr]_h\ar[uu]^{1\times f}&&\\
}\]
commutes. Thus $\epsilon =ev$ and we get the usual definition of exponentiation.
\subsection{Geometric Morphisms}
Now that we have defined what an adjunction is we can define what a geometric morphism is.
\begin{Definition}
A \emph{geometric morphism} $\phi:\tau_1\rightarrow \tau_2$
between topoi $\tau_1$ and $ \tau_2$ is defined to be a pair of
functors $\phi_*:\tau_1\rightarrow \tau_2$ and
$\phi^*:\tau_2\rightarrow \tau_1$, called respectively the
\emph{direct image} and the \emph{inverse image} part of the
geometric morphism, such that
\begin{enumerate}
\item $\phi^*\dashv  \phi_*$ i.e., $\phi^*$ is the left adjoint of $\phi_*$
\item $\phi^*$ is left exact, i.e., it preserves all finite limits.
\end{enumerate}
\end{Definition}
In the case of presheaf topoi, an important  source of such
geometric morphisms arises from functors between the base
categories,  according to the following theorem.
\begin{Theorem} \label{GeomMo} 
A functor $\theta:A\rightarrow B$ between two categories $A$ and
$B$, induces a geometric morphism (also denoted $\theta$)
\begin{equation}
\theta:\Sets^{A^{op}}\rightarrow \Sets^{B^{op}}
\end{equation}
of which the inverse image part
$\theta^*:\Sets^{B^{op}}\rightarrow \Sets^{A^{op}}$ is such that
\begin{equation}
F\mapsto \theta^*(F):=F\circ \theta
\end{equation}

\end{Theorem}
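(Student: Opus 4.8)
The plan is to exhibit the direct image functor $\theta_*$, to prove the adjunction $\theta^*\dashv\theta_*$, and to check that $\theta^*$ is left exact; the inverse image $\theta^*(F):=F\circ\theta$ is already prescribed, so I begin by confirming it is a functor. On objects, if $F:B^{op}\to\Sets$ is a presheaf then $F\circ\theta^{op}:A^{op}\to\Sets$ is again a presheaf, being a composite of functors. On a morphism of presheaves, i.e. a natural transformation $\eta:F\to G$ in $\Sets^{B^{op}}$, I define $\theta^*(\eta)$ componentwise by whiskering, $\theta^*(\eta)_a:=\eta_{\theta(a)}$. Naturality of $\theta^*(\eta)$ together with $\theta^*(1_F)=1_{\theta^*F}$ and $\theta^*(\mu\circ\eta)=\theta^*(\mu)\circ\theta^*(\eta)$ are immediate from the corresponding properties of $\eta$; this step is pure bookkeeping.

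Next I would dispose of left exactness, which is essentially free. Limits in any presheaf category $\Sets^{C^{op}}$ are computed pointwise, so for a finite diagram $\{F_i\}$ in $\Sets^{B^{op}}$ and any $a\in A$,
\[
\big(\theta^*(\varprojlim_i F_i)\big)(a)=\big(\varprojlim_i F_i\big)(\theta(a))=\varprojlim_i\big(F_i(\theta(a))\big)=\varprojlim_i\big(\theta^* F_i\big)(a)=\big(\varprojlim_i\theta^* F_i\big)(a).
\]
Hence $\theta^*$ preserves all finite limits (indeed all limits), which is condition (2) of the definition of a geometric morphism.

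The substantive work is constructing the right adjoint $\theta_*$ and verifying $\theta^*\dashv\theta_*$. For a presheaf $P\in\Sets^{A^{op}}$ I would define $\theta_*(P)$ on objects $b\in B$ by the right Kan extension formula
\[
\theta_*(P)(b):=Nat_A\big(\theta^*(\mathbf{y}(b)),\,P\big),
\]
where $\mathbf{y}(b)=Hom_B(-,b)$ is the representable presheaf, so that $\theta^*(\mathbf{y}(b))$ is the presheaf $a\mapsto Hom_B(\theta(a),b)$ on $A$; on a morphism $g:b'\to b$ the induced map $\theta_*(P)(b)\to\theta_*(P)(b')$ is precomposition with $\theta^*(\mathbf{y}(g))$. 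This choice is forced, not arbitrary: Yoneda gives $Nat_B(\mathbf{y}(b),\theta_*P)\cong\theta_*(P)(b)$, and the desired adjunction applied to the representable $\mathbf{y}(b)$ demands $\theta_*(P)(b)\cong Nat_A(\theta^*\mathbf{y}(b),P)$, which is exactly the formula above.

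To establish the natural bijection $Nat_B(\theta^*G,P)\cong Nat_B(G,\theta_*P)$ for an arbitrary presheaf $G\in\Sets^{B^{op}}$, I would write $G$ as a colimit of representables, $G\cong\varinjlim_i\mathbf{y}(b_i)$ (the density theorem), and chain isomorphisms: $Nat_B(\theta^*G,P)\cong Nat_B(\varinjlim_i\theta^*\mathbf{y}(b_i),P)$, since $\theta^*$ preserves colimits (again because they are pointwise), then $\cong\varprojlim_i Nat_A(\theta^*\mathbf{y}(b_i),P)$ because maps out of a colimit form a limit, then $\cong\varprojlim_i\theta_*(P)(b_i)\cong\varprojlim_i Nat_B(\mathbf{y}(b_i),\theta_*P)\cong Nat_B(G,\theta_*P)$ using the definition of $\theta_*$ and Yoneda. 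The main obstacle lies precisely here: one must check that each isomorphism is natural in both $G$ and $P$ and that the composite bijection is independent of the chosen presentation of $G$ as a colimit, so that it genuinely assembles into the adjunction bijection. Equivalently, and within the vocabulary already developed in the excerpt, one can instead build the unit $\eta:\mathrm{id}\Rightarrow\theta_*\theta^*$ and counit $\epsilon:\theta^*\theta_*\Rightarrow\mathrm{id}$ (the latter via the Yoneda-type evaluation $\alpha\mapsto\alpha_a(1_{\theta(a)})$) and verify the two triangle identities directly. All remaining items—functoriality of $\theta_*$ and well-definedness of its action on morphisms of $B$—are routine consequences of pointwise computation and Yoneda.
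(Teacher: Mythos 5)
Your proposal is correct, but there is nothing in the paper to compare it against: the paper states this theorem without proof, citing it implicitly as a standard fact about presheaf topoi, and immediately moves on to apply it to twisted presheaves. Your argument is the standard one (the presheaf case of the pointwise right Kan extension construction): $\theta^*$ is functorial by whiskering, left exact because limits in $\Sets^{A^{\op}}$ are computed pointwise, and the right adjoint is forced by Yoneda to be $\theta_*(P)(b):=Nat_A\big(\theta^*(\mathbf{y}(b)),P\big)$, with the adjunction obtained by writing $G$ as a colimit of representables. Two remarks. First, watch your subscripts: in the bijection you state as $Nat_B(\theta^*G,P)\cong Nat_B(G,\theta_*P)$, the left-hand side should read $Nat_A(\theta^*G,P)$, since $\theta^*G$ and $P$ are presheaves on $A$; the same slip recurs inside your chain of isomorphisms. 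Second, the obstacle you flag---naturality and independence of the chosen presentation of $G$---evaporates if you use the canonical presentation of $G$ as the colimit of representables indexed by its category of elements; better still, the bijection can be written down with no density argument at all: given $\alpha:\theta^*G\rightarrow P$, define $\beta:G\rightarrow\theta_*P$ by $\big(\beta_b(x)\big)_a(f):=\alpha_a\big(G(f)(x)\big)$ for $x\in G(b)$ and $f\in Hom_B(\theta(a),b)$, and conversely, given $\beta:G\rightarrow\theta_*P$, define $\alpha_a(x):=\big(\beta_{\theta(a)}(x)\big)_a(1_{\theta(a)})$; that these assignments are mutually inverse and natural is the same evaluation-at-the-identity computation as in Lemma \ref{lem:simpleyoneda}, and this route makes the "routine" checks you defer genuinely routine.
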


\section{Twisted Presheaves}
In this section we will briefly analyse the problem of twisted
presheaves. If we have time we will see how it is possible to solve this problem by changing the topos we work with.

Given a group $G$, its action on the base category $\mv(\mh)$ is defined as $l_g(V):=\hat{U}_gV\hat{U}_g^{-1}:=\{\hat{U}_g\hat{A}\hat{U}_g^{-1}|\hat{A}\in
V\}$, $g\in G$. When considering the topos $\Sets^{\mv(\mh)^{\op}}$, for each $g$ we obtain the
functor $l_{\hat{U}_g}:\mv(\mh)\rightarrow \mv(\mh)$ with 
induces a geometric morphisms \be
l_{\hat{U}_g}:\Sets^{\mv(\mh)^{\op}}\rightarrow \Sets^{\mv(\mh)^{\op}} \ee whose
inverse image part is \ba
l^*_{\hat{U}_g}:\Sets^{\mv(\mh)^{\op}}&\rightarrow& \Sets^{\mv(\mh)^{\op}}\\
\underline{F}&\mapsto&l^*_{\hat{U}_g}(\underline{F}):=\underline{F}\circ
l_{\hat{U}_g} \ea The above
geometric morphism acted on the spectral presheaf
$\us^{\mv(\mh)}$, the quantity value object
$\underline{\Rl}^{\leftrightarrow}$, truth values and
daseinisation. Let us analyse each of such actions in detail.
\subsection{Group Action on the Presheaves}
In this section we will describe the group action on the presheaves in $\Sets^{\mv(\mh)}$ gives rise to the twisted presheaves.
\subsubsection{Spectral Presheaf}
Given the speactral presheaf $\us\in \Sets^{\mv(\mh)^{\op}}$, the action of each element of the group is
given by the following theorem:
\begin{Theorem}
For each $\hat{U}\in\mathcal{U}(\mh)$, there is a natural
isomorphism $\iota^{\hat{U}}:\us\rightarrow\us^{\hat{U}}$ which is
defined through the following diagram:
\[\xymatrix{
\us_V\ar[rr]^{\iota^{\hat{U}}_V}\ar[dd]_{\us_V(i_{V^{'}V})}&&\us^{\hat{U}}_V\ar[dd]^{\us^{\hat{U}}_V(i_{V^{'}V})}\\
&&\\
\us_{V^{'}}\ar[rr]_{\iota^{\hat{U}}_{V^{'}}}&&\us^{\hat{U}}_{V^{'}}\\
}\] where, at each stage $V$ \be
(\iota^{\hat{U}}_V(\lambda))(\hat{A}):=\langle \lambda,
\hat{U}\hat{A}\hat{U}^{-1}\rangle \ee for all $\lambda\in \us_V$
and $\hat{A}\in V_{sa}$.
\end{Theorem}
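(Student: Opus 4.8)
The plan is to show that $\iota^{\hat U}$ is a natural transformation whose every component is a bijection, which by definition is a natural isomorphism of presheaves (the objects of $\Sets^{\mv(\mh)^{\op}}$ being sets, an isomorphism there is just a bijection). Write $c_{\hat U}$ for the conjugation map $c_{\hat U}(\hat A):=\hat U\hat A\hat U^{-1}$. Since $\hat U$ is unitary, $c_{\hat U}$ is a unital $*$-algebra isomorphism; in particular it restricts to an isomorphism $c_{\hat U}:\hat U^{-1}V\hat U\to V$ for each $V$, and the target object $\us^{\hat U}_V$ is the Gel'fand spectrum of the conjugated algebra, so that, reading the defining formula $(\iota^{\hat U}_V(\lambda))(\hat A)=\langle\lambda,\hat U\hat A\hat U^{-1}\rangle$, the component $\iota^{\hat U}_V$ is precomposition (pullback) of functionals along $c_{\hat U}$, i.e. $\iota^{\hat U}_V(\lambda)=\lambda\circ c_{\hat U}$. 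With this identification there are three things to verify: well-definedness of each $\iota^{\hat U}_V$, commutativity of the naturality square, and bijectivity of each component.

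First I would check that $\iota^{\hat U}_V(\lambda)=\lambda\circ c_{\hat U}$ genuinely lands in the Gel'fand spectrum of the conjugated algebra. Linearity is immediate since $c_{\hat U}$ and $\lambda$ are linear. Multiplicativity follows from $c_{\hat U}(\hat A\hat B)=\hat U\hat A\hat B\hat U^{-1}=(\hat U\hat A\hat U^{-1})(\hat U\hat B\hat U^{-1})=c_{\hat U}(\hat A)c_{\hat U}(\hat B)$ together with multiplicativity of $\lambda$, giving $(\lambda\circ c_{\hat U})(\hat A\hat B)=(\lambda\circ c_{\hat U})(\hat A)\,(\lambda\circ c_{\hat U})(\hat B)$. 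The unit condition holds because $c_{\hat U}(\hat 1)=\hat 1$ and $\lambda(\hat 1)=1$. Hence $\iota^{\hat U}_V(\lambda)$ is a multiplicative linear functional of norm $1$, i.e. an element of the spectrum; this is the routine algebraic step.

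Next I would verify naturality, i.e. that $\us^{\hat U}(i_{V'V})\circ\iota^{\hat U}_V=\iota^{\hat U}_{V'}\circ\us(i_{V'V})$. Both the restriction morphisms of $\us$ and of $\us^{\hat U}$ act by restricting a functional to a subalgebra, while $\iota^{\hat U}$ acts by precomposition with $c_{\hat U}$; since restricting the domain of a functional and precomposing with $c_{\hat U}$ are independent operations, the two composites agree. Concretely, chasing $\lambda\in\us_V$ around the square, the lower-left route gives $(\lambda|_{V'})\circ c_{\hat U}$ and the upper-right route gives $(\lambda\circ c_{\hat U})$ restricted to the conjugate of $V'$, and both equal the functional $\hat A\mapsto\lambda(\hat U\hat A\hat U^{-1})$ on that conjugate algebra. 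This is exactly the content of the commuting square displayed in the statement.

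Finally, bijectivity of each component, and the main obstacle. Each $\iota^{\hat U}_V$ has a two-sided inverse given by the analogous map for $\hat U^{-1}$, namely precomposition with $c_{\hat U^{-1}}$, since $c_{\hat U}\circ c_{\hat U^{-1}}=c_{\hat U^{-1}}\circ c_{\hat U}=c_{\hat 1}=\mathrm{id}$, so that $\iota^{\hat U^{-1}}\circ\iota^{\hat U}$ and $\iota^{\hat U}\circ\iota^{\hat U^{-1}}$ are identities on the relevant spectra. Thus every component is a bijection and $\iota^{\hat U}$ is a natural isomorphism. I expect the only real friction to be bookkeeping: keeping the conjugation directions and the domain algebras ($V$ versus $\hat U^{-1}V\hat U$ versus $\hat U V\hat U^{-1}$) consistent throughout, since the defining formula fixes these once and for all but it is easy to transpose $\hat U$ and $\hat U^{-1}$. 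If one additionally wishes to regard $\us$ as valued in topological spaces (with the weak-$*$/spectral topology), one further notes that pullback along the isometric $*$-isomorphism $c_{\hat U}$ is weak-$*$ continuous with continuous inverse, so $\iota^{\hat U}_V$ is in fact a homeomorphism; but for a natural isomorphism in $\Sets^{\mv(\mh)^{\op}}$ the set-bijection established above already suffices.
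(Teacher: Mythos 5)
The paper states this theorem without proof (the statement is followed immediately by the definition of the twisted presheaf $\us^{\hat{U}}$ and then the section moves on), so there is no argument of the paper's to compare yours against; your proposal supplies exactly the missing content, and its skeleton is the right one: pull back elements of the Gel'fand spectrum along the unital $*$-isomorphism given by conjugation, get naturality because restriction of functionals commutes with precomposition, and get invertibility because $c_{\hat{U}}$ and $c_{\hat{U}^{-1}}$ are mutually inverse.

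The one point you should not defer as ``bookkeeping'' is the direction of the conjugation, because it affects your well-definedness step. With your identification $\iota^{\hat{U}}_V(\lambda)=\lambda\circ c_{\hat{U}}$, the resulting functional has domain $\{\hat{A}\;|\;\hat{U}\hat{A}\hat{U}^{-1}\in V\}=\hat{U}^{-1}V\hat{U}$, whereas the paper defines $\us^{\hat{U}}_V$ to be the Gel'fand spectrum of $\hat{U}V\hat{U}^{-1}$; these two algebras differ in general, so $\lambda\circ c_{\hat{U}}$ does not lie in the paper's $\us^{\hat{U}}_V$. To land there you must pull back along $c_{\hat{U}}^{-1}=c_{\hat{U}^{-1}}$, i.e.\ set $(\iota^{\hat{U}}_V\lambda)(\hat{B}):=\lambda(\hat{U}^{-1}\hat{B}\hat{U})$ for $\hat{B}\in(\hat{U}V\hat{U}^{-1})_{sa}$. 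Equivalently, the formula exactly as printed in the theorem (with $\hat{A}\in V_{sa}$) typechecks only as the component $\us^{\hat{U}}_V\rightarrow\us_V$ of the inverse isomorphism, which is the convention of D\"oring and Isham; as printed, with $\lambda\in\us_V$ and $\hat{A}\in V_{sa}$, the expression $\langle\lambda,\hat{U}\hat{A}\hat{U}^{-1}\rangle$ is not even defined, since $\hat{U}\hat{A}\hat{U}^{-1}\notin V$ in general. Once this choice is fixed, every step of your argument --- multiplicativity and unitality of the pulled-back functional, the diagram chase for naturality against the restriction maps $\lambda\mapsto\lambda_{|V^{'}}$ and $\lambda\mapsto\lambda_{|\hat{U}V^{'}\hat{U}^{-1}}$, the two-sided inverse, and weak-$*$ continuity if one keeps the spectral topology in view --- goes through verbatim.
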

The presheaf $\us^{\hat{U}}$ is the twisted presheaf associated to
the unitary operator $\hat{U}$. Such a presheaf is defined as
follows:
\begin{Definition}
The twisted presheaf $\us^{\hat{U}}$ has as:
\begin{itemize}
\item[--] Objects: for each $V\in \mv(\mh)$ it assigns the Gel'fand spectrum of the algebra $\hat{U}V\hat{U}^{-1}$, i.e., $\us^{\hat{U}}_V:=\{\lambda:\hat{U}V\hat{U}^{-1}\rightarrow\Cl|\lambda(\hat{1})=1\}$.
\item[--] Morphisms: for each $i_{V^{'}V}:V^{'}\rightarrow V$ ($V^{'}\subseteq V$) it assigns the presheaf maps
\ba
\us^{\hat{U}}(i_{V^{'}V}):\us^{\hat{U}}_V&\rightarrow&\us^{\hat{U}}_{V^{'}}\\
\lambda&\mapsto&\lambda_{|\hat{U}V^{'}\hat{U}^{-1}} \ea
\end{itemize}
\end{Definition}
\subsubsection{Quantity Value Object}
Similarly, for the quantity value object we obtain the following
theorem:
\begin{Theorem}
For each $\hat{U}\in \mathcal{U}(\mh)$, there exists a natural
isomorphism
$k^{\hat{U}}:\underline{\Rl}^{\leftrightarrow}\rightarrow(\underline{\Rl}^{\leftrightarrow})^{\hat{U}}$,
such that for each $V\in \mv(\mh)$ we obtain the individual
components
$k^{\hat{U}}:\underline{\Rl}^{\leftrightarrow}_V\rightarrow(\underline{\Rl}^{\leftrightarrow})^{\hat{U}}_V$
defined as \be\label{grouponR} k^{\hat{U}}_V(\mu,
\nu)(l^{\hat{U}}(V^{'})):=(\mu(V^{'}),\nu(V^{'})) \ee for all
$V^{'}\subseteq V$
\end{Theorem}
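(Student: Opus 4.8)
The plan is to verify in turn that each component $k^{\hat U}_V$ is a well-defined map into the correct set, that it is a bijection, and that the family $(k^{\hat U}_V)_{V\in\mv(\mh)}$ is natural; naturality together with componentwise bijectivity then yields the desired natural isomorphism.

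First I would pin down the target presheaf. Since conjugation by a unitary is a $*$-automorphism of $\mb(\mh)$, the assignment $l_{\hat U}(V):=\hat U V\hat U^{-1}$ sends abelian von Neumann subalgebras to abelian von Neumann subalgebras and preserves inclusion, so $l_{\hat U}:\mv(\mh)\to\mv(\mh)$ is a covariant order isomorphism. By Theorem~\ref{GeomMo} it induces a geometric morphism whose inverse-image part sends $\ps{\Rl^{\leftrightarrow}}$ to the twisted presheaf $(\ps{\Rl^{\leftrightarrow}})^{\hat U}:=\ps{\Rl^{\leftrightarrow}}\circ l_{\hat U}$. Consequently $(\ps{\Rl^{\leftrightarrow}})^{\hat U}_V=\ps{\Rl^{\leftrightarrow}}_{\hat U V\hat U^{-1}}$ consists of pairs of functions on $\downarrow(\hat U V\hat U^{-1})$, and its restriction map for $V'\subseteq V$ is restriction along $\downarrow(\hat U V'\hat U^{-1})\subseteq\downarrow(\hat U V\hat U^{-1})$. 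The key combinatorial fact I would isolate is that, because $l_{\hat U}$ is an order isomorphism, its restriction gives a bijection $\downarrow V\xrightarrow{\sim}\downarrow(\hat U V\hat U^{-1})$, so every element of the target down-set is uniquely $l^{\hat U}(V')$ for some $V'\subseteq V$; this is exactly what makes the defining formula meaningful.

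Next I would check well-definedness. Writing $k^{\hat U}_V(\mu,\nu)=(\tilde\mu,\tilde\nu)$ with $\tilde\mu(l^{\hat U}(V')):=\mu(V')$ and $\tilde\nu(l^{\hat U}(V')):=\nu(V')$, order preservation of $\tilde\mu$ follows from order preservation of $\mu$ together with the equivalence $V_1\subseteq V_2\iff l^{\hat U}(V_1)\subseteq l^{\hat U}(V_2)$; order reversal of $\tilde\nu$ is identical; and the pointwise inequality $\tilde\mu\leq\tilde\nu$ is immediate from $\mu\leq\nu$ since the two are compared at the same argument $V'$. Hence $(\tilde\mu,\tilde\nu)\in\ps{\Rl^{\leftrightarrow}}_{\hat U V\hat U^{-1}}$. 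Bijectivity is then clear: the assignment $(\mu',\nu')\mapsto(V'\mapsto(\mu'(l^{\hat U}(V')),\nu'(l^{\hat U}(V'))))$ is a two-sided inverse, using again that $l^{\hat U}$ restricts to a bijection of the relevant down-sets.

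Finally I would establish naturality by a diagram chase over the square with horizontal arrows $k^{\hat U}_V,k^{\hat U}_{V'}$ and vertical arrows the two restriction maps. Starting from $(\mu,\nu)$ and going right-then-down produces the pair whose value at $l^{\hat U}(W')$, for $W'\subseteq V'$, is $(\mu(W'),\nu(W'))$; going down-then-right first restricts to $(\mu_{|\downarrow V'},\nu_{|\downarrow V'})$ and then applies $k^{\hat U}_{V'}$, giving value $(\mu_{|\downarrow V'}(W'),\nu_{|\downarrow V'}(W'))=(\mu(W'),\nu(W'))$ at $l^{\hat U}(W')$. The two agree, so the square commutes. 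I expect the only real subtlety to be bookkeeping rather than depth: correctly reading off the restriction maps of $(\ps{\Rl^{\leftrightarrow}})^{\hat U}$ from the functor composition $\ps{\Rl^{\leftrightarrow}}\circ l_{\hat U}$, and keeping straight that conjugation preserves (rather than reverses) inclusion, so that the order-preserving/order-reversing roles of $\mu$ and $\nu$ are left unchanged under the relabeling by $l^{\hat U}$.
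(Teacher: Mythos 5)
Your proposal is correct and takes essentially the same route as the paper, which states this theorem without a detailed proof and only remarks that the defining formula relies on the bijection between $\downarrow V$ and $\downarrow l^{\hat{U}}(V)$ --- precisely the key fact you isolate from the observation that $l_{\hat{U}}$ is an order isomorphism of $\mv(\mh)$. Your explicit verification of well-definedness, componentwise bijectivity, and naturality supplies exactly the details the paper leaves implicit, with no gaps.
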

Here, $\mu\in \mathcal{R}^{\leftrightarrow}_V$ is an order
preserving function
$\mu :\downarrow V\rightarrow \Rl$ such that, if $V_2\subseteq V_1\subseteq V$, then $\mu(V_2)\geq\mu(V_1)\geq\mu(V)$, while $\nu$ is an order reversing function $\nu :\downarrow V\rightarrow \Rl$ such that, if $V_2\subseteq V_1\subseteq V$, then $\nu(V_2)\leq\nu(V_1)\leq\nu(V)$.

\noindent
In the equation \ref{grouponR} we have used the bijection between
the sets $\downarrow l^{\hat{U}}(V )$ and $\downarrow V$ .
\subsubsection{Daseinisation}
We recall the concept of daseinisation: 
given a projection operator $\hat{P}$ its daseinisation with
respect to each context $V$ is \be
\delta^o(\hat{P})_V:=\bigwedge\{\hat{Q}\in\mathcal{P}(V)|\hat{Q}\geq\hat{P}\}
\ee 
where $P(V)$ represents the collection of projection operators in $V$.

If we then act upon it by any $\hat{U}$ we obtain \ba
\hat{U}\delta^o(\hat{P})_V\hat{U}^{-1}&:=&\hat{U}\bigwedge\{\hat{Q}\in\mathcal{P}(V)|\hat{Q}\geq\hat{P}\}\hat{U}^{-1}\\
&=&\bigwedge\{\hat{U}\hat{Q}\hat{U}^{-1}\in\mathcal{P}(l_{\hat{U}}(V))|\hat{Q}\geq\hat{P}\}\\
&=&\bigwedge\{\hat{U}\hat{Q}\hat{U}^{-1}\in\mathcal{P}(l_{\hat{U}}(V))|\hat{U}\hat{Q}\hat{U}^{-1}\geq\hat{U}\hat{P}\hat{U}^{-1}\}\\
&=&\delta^o(\hat{U}\hat{P}\hat{U}^{-1})_{l_{\hat{U}}(V)} \ea
where the second and third equation hold since the map $\hat{Q}\rightarrow \hat{U}\hat{Q}\hat{U}^{-1}$ is weakly continuous.

What this implies is that the clopen sub-objects which represent
propositions, i.e., $\underline{\delta(\hat{P})}$, get mapped to
one another by the action of the group.
\subsubsection{Truth Values}
Now that we have defined the group action on daseinisation we can
define the group action on the truth values. We recall that for
pure states the truth object at each stage $V$ is defined as
\ba\label{ali:truthob}
\underline{\mathbb{T}}^{|\psi\rangle}_V&:=&\{\hat{\alpha}\in\mathcal{P}(V)|Prob(\hat{\alpha};|\psi\rangle)=1\}\\
&=&\{\hat{\alpha}\in\mathcal{P}(V)|\langle\psi|\hat{\alpha}|\psi\rangle=1\}
\ea For each context $V\in \mv(\mh)$ the truth value is \ba
v(\underline{\delta(\hat{P})}\in \underline{\mathbb{T}}^{|\psi\rangle})_V&:=&\{V^{'}\subseteq V|\delta^o(\hat{ P})_{V^{'}}\in\mathbb{T}^{|\psi\rangle}_{V^{'}}\}\\
&=&\{V^{'}\subseteq V|\langle\psi|\delta^o(\hat{
P})_{V^{'}}|\psi\rangle=1\} \ea we now act upon it with a group
element $\hat{U}$ obtaining \ba
l_{\hat{U}}\Big(v(\delta^o(\hat{P})\in
\underline{\mathbb{T}}^{|\psi\rangle})_V\Big)
&:=&l_{\hat{U}} \{V^{'}\subseteq V|\langle\psi|\delta^o(\hat{ P})_{V^{'}}|\psi\rangle=1\}\\
&=&\{l_{\hat{U}} V^{'}\subseteq l_{\hat{U}} V|\langle\psi|\delta^o(\hat{ P})_{V^{'}}|\psi\rangle=1\}\\
&=&\{l_{\hat{U}} V^{'}\subseteq l_{\hat{U}} V|\langle\psi|\hat{U}^{-1}\hat{U}\delta^o(\hat{ P})_{V^{'}}\hat{U}^{-1}\hat{U}|\psi\rangle=1\}\\
&=&\{l_{\hat{U}} V^{'}\subseteq l_{\hat{U}} V|\langle\psi|\hat{U}^{-1}\delta^o(\hat{U}\hat{P}\hat{U}^{-1})_{l_{\hat{U}}(V)}\hat{U}|\psi\rangle=1\}\\
&=&v(\delta^o(\hat{U}\hat{P}\hat{U}^{-1})\in
\underline{\mathbb{T}}^{\hat{U}|\psi\rangle})_{l_{\hat{U}}(V)} \ea
We thus obtain the following equality: \be
l_{\hat{U}}\Big(v(\delta^o(\hat{P})\in
\underline{\mathbb{T}}^{|\psi\rangle})_V\Big)=v(\delta^o(\hat{U}\hat{P}\hat{U}^{-1})\in
\underline{\mathbb{T}}^{\hat{U}|\psi\rangle})_{l_{\hat{U}}(V)} \ee
Thus truth values are invariant under the group transformations. This is the topos analogue of Dirac covariance, i.e., given a state $|\psi\rangle$ and a physical quantity $\hat{A}$, we would obtain the same predictions if we replaced the state by $\hat{U}|\psi\rangle$ and the quantity by $\hat{U}\hat{A}\hat{U}^{-1} $

\chapter{Lecture 16}
In this lecture we will try understanding a possible way of solving the problem of twisted presheaves This was done in \cite{group}. In particular, we will change the base category to be the category of abelian von-Neumann sub-algebras on which we assume that no group acts upon. We will then define the topos of sheaves over such a category equipped with the Alexandroff topology. This will be the new sheaf we will work with. It turns out that by using such a topos we will be able to define the concept of a group and group transformations which does not lead to twisted presheaves.

\section{In Need of a Different Base Category}
When analysing the origin of the twisted presheaves, it is clear that the reason we do get a twist is because the group moves the abelian algebras around, i.e. the group action is defined on the base category itself. Thus a possible way of avoiding the occurrence of twists is by imposing that the group does not act on the base category. The category of abelian von-Neumann sub-algebras with no group acting on it will be denoted $\mv_f(\mh)$ where we have added the subscript $f$ (for fixed) to distinguish
this situation from the case in which the group does act. Obviously if one then just defined sheaves over $\mv_f(\mh)$, then there would be no group action at all. Therefore something extra is needed. As we will see this `extra' will be the introduction of an intermediate category which will be used as an intermediate base category. On such an intermediate category the group is allowed to act, thus the sheaves defined over it will admit a group action. Once this is done, everything is ``pushed down" to the fixed category $\mv_f(\mh)$. A we will see, the sheaves defined in this way will admit a group action which now takes place at an intermediate stage, but will not produce any twists since the final base category stays fixed. 

\noindent
Thus the first question to address is: what is this intermediate category?

\section{The Sheaf of Faithful Representations}
In our new approach we still use   the poset $\mv(\mh)$ as the
base category but now we `forget' the group action. 
We now  consider the collection, $Hom_{faithful}(\mv_f(\mh),
\mv(\mh))$, of all faithful poset representations of $\mv_f(\mh)$
in $\mv(\mh)$ that come from the action of some transformation group
$G$. Thus we have the collection of all homomorphisms
$\phi_g:\mv_f(\mh)\rightarrow \mv(\mh)$, $g\in G$,  such that
$$\phi_{g}(V):=\hat{U}_gV\hat{U}_{g^{-1}}$$
We can `localise' $Hom_{faithful}(\mv_f(\mh), \mv(\mh))$ by
considering for each $V$, the set $Hom_{faithful}(\downarrow
V,\mv(\mh))$. It is easy to see that this actually defines a
\emph{presheaf} over $\mh$; which we will denote
$\underline{Hom}_{faithful}(\mv_f(\mh), \mv(\mh))$

Now, for each algebra $V$ there exists the fixed point group
$$G_{FV}:=\{g\in G|\forall v\in V\;\hat{U}_gv\hat{U}_{g^-1}=v\}$$
This implies that the collection of all faithful representations
for each $V$ is actually the quotient space $G/G_{FV}$. This
follows from the fact that the group homomorphisms
$\phi:G\rightarrow GL(V)$ has to be injective, but that would not
be the case if we also considered the elements of $G_{FV}$, since
each such element would give the same homomorphism.

Thus for each $V$ we have that
$$\underline{Hom}_{faithful}(\mv_f(\mh), \mv(\mh))_V:=
Hom_{faithful}(\downarrow\! V,\mv(\mh))\cong G/G_{FV}$$ As we will
shortly see, there is a presheaf, $\ps{G/G_F}$, such that, as
presheaves,
$$ \underline{Hom}_{faithful}(\mv_f(\mh), \mv(\mh))\cong \ps{G/G_F}$$
whose local components are defined above. In the rest of this
paper, unless otherwise specified, $\underline{Hom}(\mv_f(\mh),
\mv(\mh))$ will mean $\underline{Hom}_{faithful}(\mv_f(\mh),
\mv(\mh))$
\begin{Lemma}
$G_{FV}$ is a normal subgroup of $G_V$.
\end{Lemma}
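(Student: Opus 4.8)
The plan is to recognise this lemma as the standard fact that the kernel of a group action is a normal subgroup. Here I read $G_V$ as the setwise stabiliser $\{g\in G \mid l_g(V)=V\}$ of the algebra $V$ under the action $l_g(V):=\hat{U}_g V\hat{U}_{g^{-1}}$, whereas $G_{FV}$ consists of those $g$ fixing $V$ \emph{pointwise}. First I would check the inclusion $G_{FV}\subseteq G_V$: if $\hat{U}_g v\hat{U}_{g^{-1}}=v$ for every $v\in V$, then $l_g(V)=\{\hat{U}_g v\hat{U}_{g^{-1}}\mid v\in V\}=V$, so indeed $g\in G_V$. The three subgroup axioms for $G_{FV}$ (it contains the identity, is closed under products, and is closed under inverses) are routine consequences of $g\mapsto l_g$ being an action.

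Next I would exhibit $G_{FV}$ explicitly as a kernel. The setwise stabiliser $G_V$ acts on $V$ by algebra automorphisms via $\rho(g)\colon v\mapsto \hat{U}_g v\hat{U}_{g^{-1}}$; this is well defined precisely because $g\in G_V$ forces the image back into $V$, and it is a genuine homomorphism $\rho\colon G_V\to\mathrm{Aut}(V)$ since $g\mapsto\hat{U}_g$ is a (possibly projective) representation. By construction $\ker\rho=\{g\in G_V\mid \forall v\in V,\ \hat{U}_g v\hat{U}_{g^{-1}}=v\}=G_{FV}$, and the kernel of any homomorphism is normal, so $G_{FV}\trianglelefteq G_V$ follows immediately.

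Should one prefer to avoid $\mathrm{Aut}(V)$, I would instead give the equivalent direct conjugation computation. For $g\in G_V$, $h\in G_{FV}$ and any $v\in V$, I would expand $\hat{U}_{ghg^{-1}}v\hat{U}_{ghg^{-1}}^{-1}=\hat{U}_g\bigl(\hat{U}_h\,(l_{g^{-1}}(v))\,\hat{U}_h^{-1}\bigr)\hat{U}_{g^{-1}}$. Since $g^{-1}\in G_V$, the element $v':=l_{g^{-1}}(v)$ again lies in $V$, so $h$ fixes it and the bracket equals $v'$; the whole expression then collapses to $l_g(l_{g^{-1}}(v))=l_e(v)=v$. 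Hence $ghg^{-1}$ fixes $V$ pointwise, i.e. $ghg^{-1}\in G_{FV}$, which is normality of $G_{FV}$ in $G_V$.

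There is no substantive obstacle here; the only point needing care is the well-definedness of the conjugation action, namely that any projective phase in $g\mapsto\hat{U}_g$ does not spoil the homomorphism property of $\rho$ nor the relation $\hat{U}_{ghg^{-1}}=\hat{U}_g\hat{U}_h\hat{U}_{g^{-1}}$ up to a phase. The essential observation is that such a phase cancels in every conjugation $v\mapsto\hat{U}_g v\hat{U}_{g^{-1}}$, so that the set-level identities $l_e=\mathrm{id}$ and $l_{g_1 g_2}=l_{g_1}\circ l_{g_2}$ invoked above hold exactly, making both versions of the argument rigorous.
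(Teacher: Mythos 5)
Your proposal is correct, and in fact it contains the paper's proof as a special case: your third paragraph (the direct conjugation computation, expanding $\hat{U}_{ghg^{-1}}v\hat{U}_{ghg^{-1}}^{-1}$, using $g^{-1}\in G_V$ to land $v':=\hat{U}_{g^{-1}}v\hat{U}_{g}$ back in $V$, and then $h\in G_{FV}$ to fix $v'$) is exactly the computation the paper performs, only with $g_i gg_i^{-1}$ in place of your $ghg^{-1}$. What you do differently is to lead with the structural argument: $G_{FV}$ is the kernel of the conjugation homomorphism $\rho\colon G_V\to\mathrm{Aut}(V)$, hence normal for free. That route buys conceptual economy and makes clear \emph{why} the lemma is true (pointwise stabilisers are always kernels of the setwise-stabiliser action), at the cost of having to verify that $\rho$ is well defined and a homomorphism; the paper's bare-hands computation is self-contained and shows explicitly where each hypothesis enters ($g_i\in G_V$ for setwise invariance, $g\in G_{FV}$ for pointwise fixing), but hides the general mechanism. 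Your remark on projective phases is a genuine refinement the paper silently skips: the paper uses $\hat{U}_{g_igg_i^{-1}}=\hat{U}_{g_i}\hat{U}_g\hat{U}_{g_i^{-1}}$ as if $g\mapsto\hat{U}_g$ were an honest representation, whereas your observation that any cocycle phase cancels inside $v\mapsto\hat{U}_gv\hat{U}_g^{-1}$ is what actually makes both versions of the argument airtight.
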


\begin{Proof}
Consider an element $g\in G_{FV}$, then given any other element
$g_i\in G_V$ we consider the element $g_igg_i^{-1}$. Such an
element acts on each $v\in V$ as follows: \ba
\hat{U}_{g_igg_i^{-1}}v\hat{U}_{(g_igg_i^{-1})^{-1}}&=&\hat{U}_{g_igg_i^{-1}}v\hat{U}_{g_ig^{-1}g_i^{-1}}\\
&=&\hat{U}_{g_i}\hat{U}_g\hat{U}_{g_i^{-1}}v\hat{U}_{g_i}\hat{U}_{g^{-1}}\hat{U}_{g_i^{-1}}\nonumber\\
&=&\hat{U}_{g_i}\hat{U}_gv^{'}\hat{U}_{g^{-1}}\hat{U}_{g_i^{-1}}\nonumber\\
&=&\hat{U}_{g_i}v^{'}\hat{U}_{g_i^{-1}}\nonumber\\
&=&\hat{U}_{g_i}\hat{U}_{g_i^{-1}}v\hat{U}_{g_i}\hat{U}_{g_i^{-1}}\nonumber\\
&=&v\nonumber \ea where $v^{'}\in V$ because $g_i\in G_V$.
\end{Proof}

We then have the standard result that if $G$ is a group and $N$ a
normal subgroup of $G$ then the coset space $G/N$ has a natural
group structure. In the Lie group case, $G/N$ would only have a
Lie group structure if $N$ is a \emph{closed} subgroup of $G$.
However, it is clear from the definition of $G_{FV}$ that it is
closed, and hence for each $V$ we have a Lie group, $G/G_{FV}$. We
note \emph{en passant} that $G$ is a principal fibre bundle over
$G/G_{FV}$ with fiber $G_{FV}$.

For us, the interesting aspect of the collection $G_{FV}$,
$V\in\mv(\mh)$ is that, unlike the collection of stability groups
$G_V$, $V\in\mv(\mh)$, form the components of a presheaf over
$\mv_f(\mh)$ (or $\mv(\mh)$) defined as follows:
\begin{Definition}
The presheaf $\ug_F$ over $\mv_f(\mh)$ has as
\begin{enumerate}
\item [--] Objects: for each $V\in \mv_f(\mh)$ we define set $\ug_{FV}:=G_{FV}=\{g\in G|\forall v\in V\;\hat{U}_gv\hat{U}_{g^-1}=v\}$
\item [--] Morphisms: given a map $i:V^{'}\rightarrow V$ in $\mv_f(\mh)$ ($V^{'}\subseteq V$) then we define the morphism $\ug_F(i):\ug_{FV}\rightarrow \ug_{FV^{'}}$, as subgroup inclusion.
\end{enumerate}
\end{Definition}

The morphisms $\ug_F(i):\ug_{FV}\rightarrow \ug_{FV^{'}}$ are well defined since if   $V^{'}\subseteq V$ then clearly $G_{FV}\subseteq G_{FV^{'}}$. Associativity is obvious.

We now define the  presheaf $\ps{G/G_F}$ as follows:
\begin{Definition}
The presheaf $\ps{G/G_F}$ is defined as the presheaf with
\begin{enumerate}
\item[--] Objects: for each $V\in \mv_f(\mh)$ we assign $(\ps{G/G_F})_V:=G/G_{FV}\cong Hom(\downarrow\!V, \mv(\mh))$. An element of $G/G_{FV}$ is an orbit $w^g_V:=\{g\cdot G_{FV}\}$ which corresponds to the unique homeomorphism $\phi^g$.
\item[--] Morphisms: Given a morphisms $i_{V^{'}V}:V^{'}\rightarrow V$ ($V^{'}\subseteq V$) in $\mv_f(\mh)$ we define
\ba
\ps{G/G_F}(i_{V^{'}V}):G/G_{FV}&\rightarrow& G/G_{FV^{'}}\\
w^g_V&\mapsto&\ps{G/G_F}(i_{V^{'}V})(w^g_V)
\ea as the projection maps $\pi_{V^{'}V}$ of the fibre bundles \be
G_{FV^{'}}/G_{FV}\rightarrow G/G_{FV}\rightarrow G/G_{FV^{'}} \ee
with fibre isomorphic to $G_{FV^{'}}/G_{FV}$.

What this means is that to each $w^g_{V^{'}}=g\cdot G_{FV^{'}}\in
G/G_{FV^{'}}$ one obtains in $G/G_{FV}$ the fibre \ba
\pi^{-1}_{V^{'}V}(g\cdot G_{FV^{'}})&:=&\sigma^g_{V}=\{g_i(g\cdot G_{FV})|\forall g_i\in G_{FV^{'}}\}\\
&=&\{l_{g_i}\cdot w^g_{V}|\forall g_i\in G_{FV^{'}}\}\\
&=&\{w^{g_ig}_V|g_i\in G_{FV^{'}}\} \ea In the above expression we
have used the usual action of the group $G$ on an orbit:  \be
l_{g_i}\cdot w^g_{V}=g_i\cdot(g\cdot G_{FV})=g_i\cdot g\cdot
G_{FV}=:w^{g_ig}_V \ee The fibre $\sigma^g_{V}$ is obviously
isomorphic to $G_{FV^{'}}/G_{FV}$. Thus the projection map $\pi_{V^{'}V}$
projects \be \pi_{V^{'}V}(\sigma^g_{V})=g\cdot G_{FV^{'}}=w^g_{V^{'}} \ee
such that for individual elements we have \be
\ps{G/G_F}(i_{V^{'}V})(w^{g}_V):=\pi_{V^{'}V}(\sigma^g_{V})=w^g_{V^{'}}
\ee Note that when $g_i\in G_{FV^{'}}$ but $g_i\notin G_{FV}$ then
$w^{g}_V=g\cdot G_{FV}$ and $w^{g}_{V^{'}}=g\cdot
G_{FV^{'}}=g_igG_{FV^{'}}=w^{g_ig}_{V^{'}}$. Therefore
$\ps{G/G_F}(i_{V^{'}V})w^{g}_V=w^{g}_{V^{'}}=w^{g_ig}_{V^{'}}$

\end{enumerate}
\end{Definition}

It should be noted that the morphisms in the presheaf $\ps{G/G_F}$
can also be defined in terms of the homeomorphisms
$Hom(\mv_f(\mh), \mv(\mh))$. Namely, given an element $g_j\in
w^g_V$ we obtain the associated homomorphisms $\phi_{g_j}$, such
that \be \ps{G/G_F}(i_{V^{'}V})\phi_{g_j}:=\phi_{g_j|V^{'}} \ee

We will now define another presheaf which we will then show to be isomorphic to $\ps{G/G_F}$. To this end we first of all have to introduce the constant presheaf $\underline{G}$. This is defined as follows
\begin{Definition}
The presheaf $\underline{G}$ over $\mv_f(\mh)$ is defined on 
\begin{itemize}
\item Objects: for each context $V$, $\ps{G}_V$ is simply the entire group, i.e. $\ps{G}_V=G$
\item Morphisms: given a morphisms $i:V^{'}\subseteq V$ in $\mv(\mh)$, the corresponding morphisms $\ps{G}_V\rightarrow \ps{G}_{V^{'}}$ is simply the identity map.
\end{itemize}
\end{Definition}
We are now ready to define the new presheaf.
\begin{Definition}
The presheaf $\ps{G}/\ps{G_F}$ over $\mv_f(\mh)$ is defined on 
\begin{itemize}
\item Objects. For each $V\in \mv_f(\mh)$ we obtain $(\ps{G}/\ps{G_F})_V:=G/G_{FV}$. Since as previously explained the equivalence relation is computed context wise.
\item Morphisms. For each map $i:V^{'}\subseteq V$ we obtain the morphisms 
\ba
(\ps{G}/\ps{G_F})_V&\rightarrow& (\ps{G}/\ps{G_F})_{V^{'}}\\
G/G_{FV}&\rightarrow&G/G_{FV^{'}}
\ea
These are defined to be the projection maps $\pi_{V^{'}V}$ of the fibre bundles \be
G_{FV^{'}}/G_{FV}\rightarrow G/G_{FV}\rightarrow G/G_{FV^{'}} \ee
with fibre isomorphic to $G_{FV^{'}}/G_{FV}$.

\end{itemize}
 
\end{Definition}
From the above definition it is trivial to show the following theorem.
\begin{Theorem}
\be
\ps{G/G_F}\simeq\ps{G}/\ps{G_F}
\ee
\end{Theorem}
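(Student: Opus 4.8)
The plan is to exhibit the identity natural transformation between the two presheaves and check that it is a natural isomorphism; since both presheaves are built from the very same coset data, essentially all the work is in recognising that they coincide object-by-object and arrow-by-arrow. First I would record that they agree on objects: by the definition of $\ps{G/G_F}$ one has $(\ps{G/G_F})_V := G/G_{FV}$ (identified with $Hom(\downarrow\!V,\mv(\mh))$ through the orbit description $w^g_V = g\cdot G_{FV}$), while the quotient $\ps{G}/\ps{G_F}$ is defined context-wise so that $(\ps{G}/\ps{G_F})_V := G/G_{FV}$ as well. Hence for every $V\in\mv_f(\mh)$ the two assignments give the identical set, and I may take the candidate component $\eta_V := id_{G/G_{FV}}$.

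The substantive (though still short) step is to verify that the presheaf morphisms coincide, so that these identity components assemble into a natural transformation. For the inclusion $i_{V'V}:V'\subseteq V$ the arrow of $\ps{G/G_F}$ is the projection $\pi_{V'V}$ of the fibre bundle $G_{FV'}/G_{FV}\to G/G_{FV}\to G/G_{FV'}$, which by construction sends $w^g_V = gG_{FV}\mapsto w^g_{V'} = gG_{FV'}$. On the other side, the arrow of $\ps{G}/\ps{G_F}$ is induced by the identity morphism of the constant presheaf $\ps{G}$ together with the subgroup inclusion $G_{FV}\subseteq G_{FV'}$ (which holds precisely because $V'\subseteq V$): the identity $G\to G$ descends through these stabiliser subgroups to the canonical coset projection $gG_{FV}\mapsto gG_{FV'}$, and the paper defines this to be the same $\pi_{V'V}$. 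Thus the two presheaf maps are literally the same function, and every naturality square — whose horizontal edges are the identities $\eta_V,\eta_{V'}$ and whose vertical edges are these equal maps — commutes automatically.

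Finally, since each $\eta_V$ is an identity map it is trivially invertible, so $\eta$ is a natural isomorphism and $\ps{G/G_F}\simeq\ps{G}/\ps{G_F}$ follows. I do not expect any genuine obstacle here: the only point deserving explicit attention is confirming that the context-wise quotient of the constant presheaf $\ps{G}$ by $\ps{G_F}$ is a well-defined presheaf and that its transition maps are exactly the bundle projections $\pi_{V'V}$, and this reduces entirely to the elementary fact that a quotient map of groups descends along an inclusion of subgroups to the corresponding projection of coset spaces. Everything else is the bookkeeping of matching the two definitions.
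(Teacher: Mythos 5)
Your proposal is correct and takes essentially the same route as the paper's own proof: both exhibit the context-wise identity map $G/G_{FV}\rightarrow G/G_{FV}$ and observe that, by their very definitions, the two presheaves $\ps{G/G_F}$ and $\ps{G}/\ps{G_F}$ assign identical sets at each $V$ and identical transition maps $\pi_{V'V}$, so the isomorphism is immediate. Your verification that the restriction maps coincide (via the subgroup inclusions $G_{FV}\subseteq G_{FV'}$) is merely a more explicit spelling-out of what the paper compresses into ``this follows from the definitions of the individual presheaves.''
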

\begin{Proof}
We construct the map $k:\ps{G/G_F}\rightarrow \ps{G}/\ps{G_F}$ such that, for each context $V$ we have
\ba
k_V:\ps{G/G_F}_V&\rightarrow& \ps{G}/\ps{G_F}_V\\
G/G_{FV}&\mapsto&G/G_{FV}
\ea
This follows from the definitions of the individual presheaves.
\end{Proof}

\section{Changing Base Category}
We know that given a sheaf over a poset we obtain the corresponding
etal\'e bundle. In our case the sheaf in question is $\ps{G/G_F}$ with corresponding etal\'e bundle $p:\Lambda \ps{G/G_F}\rightarrow \mv_f(\mh)$ where $\Lambda\ps{G/G_F}$ is the etal\'e space. We will now equip the etal\'e space $\Lambda(\ps{G/G_F})=\coprod_{V\in\mv_F(\mh)}(\ps{G/G_F})_V$ with a poset structure.

The most obvious poset structure to use would be the partial order
given by restriction, i.e., $w_{V}\leq w_{V^{'}}$ iff $V\subseteq V^{'}$ and 
$w_V^g=w^g_{V^{'}}|_V$ or equivalently $g\cdot G_{FV}=g\cdot
(G_{FV}\cap G_{FV^{'}})$. We could write this last condition as an
inclusion of sets as follows:
$w_{V}\subseteq w_V^{'}$ ($g\cdot G_{FV}\subseteq g\cdot G_{FV^{'}}$). 
However this poset structure would not give a presheaf if we were
to use it as the base category, rather it would give a covariant
functor. To solve this problem we adopt the \emph{order dual} of
the partially ordered set, which is the same set but equipped with
the inverse order which is itself a partial order. We thus define
the ordering on $\Lambda( \ps{G/G_F})$ as follows:

\begin{Lemma}\label{lem:ordering}
Given two orbits $w^g_V\in G/G_{FV}$ and $w^g_{V^{'}}\in
G/G_{FV^{'}}$ we define the partial ordering $\leq$, by
defining
$$w^g_{V^{'}}\leq w^g_V$$ iff
\ba
V^{'}&\subseteq &V\\
w^g_V&\subseteq& w^g_{V^{'}} \ea Note that the last condition is
equivalent to $w^g_V=w^g_{V^{'}}|_V$ ($g\cdot G_{FV}=g\cdot
(G_{FV}\cap G_{FV^{'}})$).
\end{Lemma}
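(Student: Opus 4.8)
The statement to prove is Lemma \ref{lem:ordering}, which asserts that the relation $\leq$ defined on the étalé space $\Lambda(\ps{G/G_F})$ by
\[
w^g_{V^{'}}\leq w^g_V \quad\text{iff}\quad V^{'}\subseteq V \ \text{ and }\ w^g_V\subseteq w^g_{V^{'}}
\]
is a genuine partial ordering. So despite the geometric packaging, the task is simply to verify the three defining properties of a partial order from Definition~2.3.1 (the poset definition): reflexivity, antisymmetry, and transitivity.

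The plan is to check each property directly, being careful to track how the two conjoined conditions interact. For \textbf{reflexivity}, I would observe that $V\subseteq V$ trivially holds, and $w^g_V\subseteq w^g_V$ holds trivially as well, so $w^g_V\leq w^g_V$; this is immediate. For \textbf{transitivity}, suppose $w^g_{V^{''}}\leq w^g_{V^{'}}$ and $w^g_{V^{'}}\leq w^g_V$. From the first-coordinate conditions I get $V^{''}\subseteq V^{'}$ and $V^{'}\subseteq V$, whence $V^{''}\subseteq V$ by transitivity of subset inclusion on $\mv_f(\mh)$. From the orbit conditions I get $w^g_{V^{'}}\subseteq w^g_{V^{''}}$ and $w^g_V\subseteq w^g_{V^{'}}$, whence $w^g_V\subseteq w^g_{V^{''}}$ again by transitivity of inclusion. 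Both defining conditions for $w^g_{V^{''}}\leq w^g_V$ are thus met. The key point here is just that $\leq$ is built as a conjunction of two orderings each of which is itself transitive, so transitivity is inherited coordinatewise.

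The genuinely load-bearing step is \textbf{antisymmetry}, and this is where I would expect the main (mild) obstacle, since one must confirm that the two elements being compared actually coincide \emph{as elements of the étalé space}, not merely that their contexts and orbits are separately forced to agree. Suppose $w^g_{V^{'}}\leq w^g_V$ and $w^g_V\leq w^g_{V^{'}}$. The first-coordinate conditions give $V^{'}\subseteq V$ and $V\subseteq V^{'}$, so by antisymmetry of inclusion in the poset $\mv_f(\mh)$ we conclude $V=V^{'}$. Once the contexts are equal, the orbit conditions $w^g_V\subseteq w^g_{V^{'}}$ and $w^g_{V^{'}}\subseteq w^g_V$ are set inclusions between orbits living in the \emph{same} quotient $G/G_{FV}=G/G_{FV^{'}}$, so they force $w^g_V=w^g_{V^{'}}$. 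Hence the two étalé-space points coincide, which is antisymmetry.

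Finally I would add a brief remark justifying the parenthetical reformulation stated in the lemma, namely that the orbit-inclusion condition $w^g_V\subseteq w^g_{V^{'}}$ is equivalent to $w^g_V = w^g_{V^{'}}|_V$, i.e.\ to $g\cdot G_{FV}=g\cdot(G_{FV}\cap G_{FV^{'}})$. This follows because, for $V^{'}\subseteq V$ we have $G_{FV}\subseteq G_{FV^{'}}$ (as noted in the excerpt, larger algebras have smaller fixing groups), so $G_{FV}\cap G_{FV^{'}}=G_{FV}$ and the restriction map $\ps{G/G_F}(i_{VV^{'}})$ is exactly the bundle projection $\pi_{VV^{'}}$ sending $w^g_{V^{'}}$ to $w^g_V$; thus the inclusion of orbits records precisely that $w^g_V$ is the restriction of $w^g_{V^{'}}$. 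With this equivalence noted, the three order axioms established above complete the proof that $\leq$ is a partial order on $\Lambda(\ps{G/G_F})$.
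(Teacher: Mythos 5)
Your proof is correct and takes essentially the same route as the paper's: a direct verification of reflexivity, transitivity and antisymmetry, with transitivity inherited coordinatewise from the two inclusion orderings and antisymmetry obtained from antisymmetry of $\subseteq$ in $\mv_f(\mh)$ followed by mutual set inclusion of orbits in the now-identical quotient $G/G_{FV}$. One cosmetic caveat: in your closing remark the presheaf map $\ps{G/G_F}(i_{V^{'}V})$ actually sends $w^g_V$ to $w^g_{V^{'}}$ (from the larger algebra's coset space $G/G_{FV}$ to $G/G_{FV^{'}}$), not the reverse; this slip sits only in your justification of the parenthetical reformulation (whose substance, $G_{FV}\cap G_{FV^{'}}=G_{FV}$ when $V^{'}\subseteq V$, is right) and does not affect your verification of the order axioms.
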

It should be noted though that if $w^g_V=w^g_{V^{'}}|_V$ then $\ps{G/G_F}(i_{V^{'}V})(w^g_V)=
\ps{G/G_F}(i_{V^{'}V})(w^g_{V^{'}}|_{V})=w^g_{V^{'}}$. In other words it is also possible to define the partial ordering 
in terms of the presheaf maps defined above, i.e., \be w^g_V\geq
w^g_{V^{'}}\text{  iff  }w^g_{V^{'}}=\ps{G/G_F}(i_{V^{'}V})w^g_V
\ee

We now show that the ordering defined on $\Lambda(\ps{G/G_F})$ is
indeed a partial order.
\begin{Proof}
\noindent
\begin{enumerate}
\item \emph{Reflexivity}. Trivially $w^g_V\leq w^g_V$ for all $w^g_V\in \Lambda \ps{G/G_F}$.
\item \emph{Transitivity}. If $w^g_{V_1}\leq w^g_{V_2}$ and $w^g_{V_2}\leq w^g_{V_3}$ then $V_1\subseteq V_2$ and $V_2\subseteq V_3$. From the partial ordering on $\mv(\mh)$ it follows that $V_1\subseteq V_3$. Moreover from the definition of  ordering on $\Lambda (\ps{G/G_F})$ we have that $w^g_{V_2}=w^g_{V_1}|_{V_2}$ and $w^g_{V_3}=w^g_{V_2}|_{V_3}$ which implies that $w^g_{V_3}=w^g_{V_1}|_{V_3}$. It follows that $w_{V_1}\leq w_{V_3}$.
\item \emph{Antisymmetry}. If $w_{V_1}\leq w_{V_2}$ and $w_{V_2}\leq w_{V_1}$, it implies that $V_1\leq V_2$ and $V_2\leq V_1$ which, by the partial ordering on $\mv(\mh)$ implies that $V_1=V_2$. Moreover the above conditions imply that $w_{V_1}= w_{V_2}|_{V_1}$ and $w_{V_2}=w_{V_1}|_{V_2}$, which by the property of subsets implies that $w_{V_1}=w_{V_2}$.
\end{enumerate}
\end{Proof}
Given the previously defined isomorphisms, 
$Hom(\down{V}, \mv(\mh))\cong (\ps{G/G_F})_V$ for each $V$, then to each equivalence class $w_V^g$ there is associated a particular homeomorphism $\phi_g:\down V\rightarrow \mv(\mh)$. Even though $w_V^g$ is an equivalence class, each element in it will give the same $\phi_g$, i.e. it will pick out the same $V_i\in \mv(\mh)$. This is because the equivalence relation is defined in terms of the fixed point group for $V$. 

Therefore it is also possible to define the ordering relation on
$\Lambda(\ps{G/G_F})$ in terms of the homeomorphisms $\phi^g_i$.
First of all we introduce the bundle space $\Lambda J\simeq \Lambda(\ps{G/G_F})$ which is essentially the same as $\Lambda(\ps{G/G_F})$, but whose elements are now the maps $\phi^g_i$, i.e., $\Lambda J=\Lambda(\underline{Hom}(\mv_f(\mh), \mv(\mh)))$. The associated bundle map is $p_J:\Lambda J\rightarrow \mv_f(\mh)$.
We then define the ordering on $\Lambda J$ as 
$\phi^g_i\leq \phi^g_j$ iff
 \be
 p_J(\phi^g_i)\subseteq p_J(\phi^g_j)
\ee and \be \phi^g_i=\phi^g_j| _{p_J(\phi^g_i)} \ee We now need to
show that this does indeed define a partial
order on $\Lambda J$.
\begin{Proof}\hspace{.2in}
\begin{enumerate}
\item \emph{Reflexivity}. Trivially $\phi^g_i\leq \phi^g_i$ since  $p_J(\phi^g_i)\subseteq p_J(\phi^g_i)$ and $\phi^g_i=\phi^g_i$.
\item \emph{Transitivity}. If $\phi^g_i\leq \phi^g_j$ and $\phi^g_j\leq \phi^g_k$ then $p_J(\phi^g_i)\subseteq p_J(\phi^g_j)$ and  $p_J(\phi^g_j)\subseteq p_J(\phi^g_k)$, therefore  $p_J(\phi^g_i)\subseteq p_J(\phi^g_k)$. Moreover we have that $\phi^g_i=\phi^g_j| _{p_J(\phi^g_i)}$ and $\phi^g_j=\phi^g_k| _{p_J(\phi^g_j)}$, therefore $\phi^g_i=\phi^g_k| _{p_J(\phi^g_i)}$.
\item \emph{Antisymmetry}.  If $\phi^g_i\leq \phi^g_j$ and $\phi^g_j\leq \phi^g_i$ it implies that  $p_J(\phi^g_i)\subseteq p_J(\phi^g_j)$ and  $p_J(\phi^g_j)\subseteq p_J(\phi^g_i)$, thus  $p_J(\phi^g_i)= p_J(\phi^g_j)$. Moreover we have that $\phi^g_i=\phi^g_j| _{p_J(\phi^g_i)}$ and $\phi^g_j=\phi^g_i| _{p_J(\phi^g_j)}$, therefore $\phi^g_i=\phi^g_j$.
\end{enumerate}
\end{Proof}
Given this ordering we can now define the corresponding ordering on $\Lambda(\ps{G/G_F})$ as
$w^g_{V_i}\leq w^g_{V_j}$ iff
$\phi^g_i\leq \phi^g_j$. We  have again used the fact that to each $w^g_{V_i}$ there is
associated a unique homeomorphism $\phi^g_V:\downarrow V\rightarrow
\mv(\mh)$.
\section{From Sheaves on the Old Base Category to Sheaves on The New Base Category}
In what follows we will move freely between the language of presheaves and that of sheaves which we will  both denote as $\underline{A}$. Which of the two is being used should be clear from the context. The reason we are able to do this is because our base categories are posets (see discussion at the end of section 2).

We are now interested in `transforming' all the physically
relevant sheaves on $\mv(\mh)$ to sheaves over
$\Lambda(\ps{G/G_F})$. Therefore we are interested in finding a
functor $I: Sh(\mv(\mh))\rightarrow Sh(\Lambda(\ps{G/G_F}))$. As a
first attempt we define: \ba
I: Sh(\mv(\mh))&\rightarrow& Sh(\Lambda(\ps{G/G_F}))\\
\underline{A}&\mapsto&I(\underline{A}) \ea such that for each
context $w_V^g\simeq \phi^g$ we define \be
\big(I(\underline{A})\big)_{w_V^g}:=\underline{A}_{\phi^g(V)}=\Big((\phi^g)^*(\underline{A})\Big)(V)
\ee
where $\phi^g:\down V\rightarrow \mv(\mh)$ is the unique homeomorphism associated with the equivalence class $w^g_V=g\cdot G_{FV}$.

We then need to define the morphisms. Thus, given $i_{w_{V^{'}}^g,w_V^g}:w_{V^{'}}^g\rightarrow w_{V}^g$ ($w_{V^{'}}^g\leq
w_{V}^g$) with corresponding homeomorphisms $\phi_2^g\leq \phi_2^g$ ($\phi_1^g\in Hom(\down V, \mv(\mh))$ and $\phi_2^g\in Hom(\downarrow V^{'}, \mv(\mh))$) we have the associated morphisms
$I\underline{A}(i_{w_{V^{'}}^g,w_V^g}):\big(I(\underline{A})\Big)_{w_V^g}\rightarrow
\big(I(\underline{A})\Big)_{w_{V^{'}}^g}$ defined as \be
(I\underline{A}(i_{w_{V^{'}}^g,w_V^g}))(a)=(I\underline{A}(i_{\phi_2^g,\phi_1^g}))(a):=\underline{A}_{\phi^g_1(V),\phi^g_2(V^{'})}(a)
\ee
for all $a\in \underline{A}_{\phi^g(V)}$. In the above equation $V=p_J(\phi_1^g)$ and $V^{'}=p_J(\phi_2^g)$\footnote{Recall that $p_J:\Lambda J=\Lambda(\underline{Hom}(\mv_f(\mh),\mv(\mh))\rightarrow\mv_f(\mh)$.}. 
Moreover, since $\phi_2^g\leq \phi^g_1$ is equivalent to the condition $w_{V^{'}}^g\leq w_{V}^g$, then $\phi^g_2(V^{'})\subseteq \phi^g_1(V)$ and $\phi^g_2=\phi^g_1|_{V^{'}}$.
\begin{Theorem}
The map $I:Sh(\mv(\mh))\rightarrow Sh(\Lambda( \ps{G/G_F}))$ is a
functor defined as follows:
\begin{enumerate}
\item [(i)] Objects: $\big(I(\underline{A})\big)_{w_V^g}:=\underline{A}_{\phi^g_1(V)}=\Big((\phi^g)^*(\underline{A})\Big)(V)$. If $w_{V^{'}}^g\leq w_{V}^g$ with associated homeomorphisms $\phi_2^g\leq \phi^g_1$ ($\phi_1^g\in Hom(\down V, \mv(\mh))$ and $\phi_2^g\in Hom(\downarrow V^{'}, \mv(\mh))$), then
$$(I\underline{A}(i_{w_{V^{'}}^g,w_V^g}))=I\underline{A}(i_{\phi_2^g,\phi_1^g}):=\underline{A}_{\phi^g_1(V),\phi^g_2(V^{'})}:\underline{A}_{\phi^g_1(V)}\rightarrow \underline{A}_{\phi^g_2(V^{'})}$$
where $V=p_J(\phi_1^g)$ and $V^{'}=p_J(\phi_2^g)$.
\item [(ii)] Morphisms: if we have a morphisms $f:\underline{A}\rightarrow\underline{B}$ in $Sh(\mv(\mh))$ we then define the corresponding morphisms in $Sh(\Lambda (\ps{G/G_F}))$ as
\ba
I(f)_{w_V^g}:I(\underline{A})_{w_V^g}&\rightarrow& I(\underline{B})_{w_V^g}\\
f_{\phi_1^g}:\underline{A}_{\phi_1^g(p_J(\phi_1^g))}&\rightarrow
&\underline{B}_{\phi_1^g(p_J(\phi_1^g))} \ea
\end{enumerate}
\end{Theorem}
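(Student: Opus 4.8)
The plan is to verify in turn the four defining conditions of a functor: that $I$ sends each object $\underline{A}\in Sh(\mv(\mh))$ to a genuine object of $Sh(\Lambda(\ps{G/G_F}))$, that it sends each morphism to a genuine morphism, and that it respects identities and composition. Throughout I shall work in the presheaf picture, which is legitimate because both $\mv(\mh)$ and $\Lambda(\ps{G/G_F})$ are posets, so that by the equivalence $\Sets^{P^{op}}\simeq Sh(P^-)$ established earlier a sheaf over the Alexandroff topology is the same thing as a contravariant functor on the poset; it therefore suffices to check the functor axioms for the assignments as defined.

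First I would show that $I(\underline{A})$ is a well-defined presheaf on $\Lambda(\ps{G/G_F})$. The object assignment $w_V^g\mapsto \underline{A}_{\phi_1^g(V)}$ is unambiguous because, as established when the isomorphism $Hom(\downarrow V,\mv(\mh))\cong (\ps{G/G_F})_V$ was set up, every representative of the orbit $w_V^g=g\cdot G_{FV}$ yields the same homeomorphism $\phi_1^g$ and hence singles out the same object $\phi_1^g(V)\in\mv(\mh)$; since $\underline{A}$ is a presheaf, $\underline{A}_{\phi_1^g(V)}$ is a well-defined set. For the arrow part, I would use Lemma \ref{lem:ordering}: the relation $w_{V'}^g\leq w_V^g$ is equivalent to $\phi_2^g\leq\phi_1^g$, i.e. $V'\subseteq V$ together with $\phi_2^g=\phi_1^g|_{\downarrow V'}$. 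Since $\phi_1^g$ is order preserving ($V_i\mapsto\hat{U}_gV_i\hat{U}_g^{-1}$), this gives the inclusion $\phi_2^g(V')=\phi_1^g(V')\subseteq\phi_1^g(V)$ in $\mv(\mh)$, so that the presheaf restriction $\underline{A}_{\phi_1^g(V),\phi_2^g(V')}$ exists and supplies the required map.

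Next I would check the two presheaf axioms for $I(\underline{A})$. Identity preservation is immediate, since $w_V^g\leq w_V^g$ corresponds to the identity inclusion $\phi_1^g(V)\subseteq\phi_1^g(V)$ and $\underline{A}_{\phi_1^g(V),\phi_1^g(V)}=\mathrm{id}$. For composition, given $w_{V''}^g\leq w_{V'}^g\leq w_V^g$ with associated $\phi_3^g\leq\phi_2^g\leq\phi_1^g$, transitivity of the ordering yields $\phi_3^g(V'')\subseteq\phi_2^g(V')\subseteq\phi_1^g(V)$, and the functoriality of $\underline{A}$ downstairs gives $\underline{A}_{\phi_2^g(V'),\phi_3^g(V'')}\circ\underline{A}_{\phi_1^g(V),\phi_2^g(V')}=\underline{A}_{\phi_1^g(V),\phi_3^g(V'')}$, which is exactly the identity needed. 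Thus $I(\underline{A})$ is a presheaf, hence a sheaf over $\Lambda(\ps{G/G_F})$. For the morphism part, given $f:\underline{A}\rightarrow\underline{B}$ with components $f_W$ satisfying the naturality squares over $\mv(\mh)$, I would set $I(f)_{w_V^g}:=f_{\phi_1^g(V)}$ and verify naturality: the square over $\Lambda(\ps{G/G_F})$ for $w_{V'}^g\leq w_V^g$ reads $f_{\phi_2^g(V')}\circ\underline{A}_{\phi_1^g(V),\phi_2^g(V')}=\underline{B}_{\phi_1^g(V),\phi_2^g(V')}\circ f_{\phi_1^g(V)}$, which is nothing but the naturality of $f$ for the inclusion $\phi_2^g(V')\subseteq\phi_1^g(V)$ already valid in $Sh(\mv(\mh))$.

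Finally, preservation of identities and composition follow pointwise: $I(\mathrm{id}_{\underline{A}})_{w_V^g}=(\mathrm{id}_{\underline{A}})_{\phi_1^g(V)}=\mathrm{id}$, and for composable $f:\underline{A}\rightarrow\underline{B}$, $h:\underline{B}\rightarrow\underline{C}$ one has $I(h\circ f)_{w_V^g}=(h\circ f)_{\phi_1^g(V)}=h_{\phi_1^g(V)}\circ f_{\phi_1^g(V)}=(I(h)\circ I(f))_{w_V^g}$. The only place demanding genuine care — the main, if modest, obstacle — is the well-definedness of the object and arrow assignments: one must confirm that passing through the equivalence classes $w_V^g$ to the unique homeomorphisms $\phi^g$ really does land on a single object of $\mv(\mh)$ and transports the inclusion relations faithfully, which is guaranteed by the fact that the quotient is taken by the fixed-point group $G_{FV}$, which acts trivially on $V$. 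Everything else reduces, by design of the ordering on $\Lambda(\ps{G/G_F})$, to the corresponding presheaf and naturality data already present in $Sh(\mv(\mh))$.
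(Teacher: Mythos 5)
Your proof is correct and follows essentially the same route as the paper's: every verification over $\Lambda(\ps{G/G_F})$ is reduced, via the unique homeomorphism $\phi^g$ attached to each orbit $w^g_V$ (unique because the quotient is by the fixed-point group $G_{FV}$), to the corresponding presheaf and naturality data already present in $Sh(\mv(\mh))$. If anything, you are more thorough than the paper, whose proof only verifies the naturality square for $I(f)$ and asserts $I(f\circ g)=I(f)\circ I(g)$, whereas you also check well-definedness of the object assignment, the presheaf axioms for $I(\underline{A})$, and preservation of identities.
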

\begin{Proof}
Consider an arrow $f:\underline{A}\rightarrow\underline{B}$ in
$Sh(\mv(\mh))$ so that, for each $V\in \mv(\mh)$, the local
component is $f_V:\underline{A}_V\rightarrow\underline{B}_V$ with
commutative diagram
\[\xymatrix{
\underline{A}_{V_1}\ar[rr]^{f_{V_1}}\ar[dd]_{\underline{A}_{V_1V_2}}&&\underline{B}_{V_1}\ar[dd]^{\underline{B}_{V_1V_2}}\\
&&\\
\underline{A}_{V_2}\ar[rr]_{f_{V_2}}&&\underline{B}_{V_2}\\
}\] for all pairs $V_1$, $V_2$ with $V_2\leq V_1$. Now suppose
that $w_{V_2}^g\leq w_{V_1}^g$ with associated homeomorphisms $\phi^g_2\leq\phi^g_1$, such
that (i) $p_J(\phi^g_2)\subseteq p_J(\phi^g_1)$; and (ii) $\phi^g_2 =
\phi^g_1|_{p_J(\phi^g_2)}$. We want to show that the action of the $I$
functor gives the commutative diagram
\[\xymatrix{
I(\underline{A})_{w_{V_1}^g}\ar[rr]^{I(f)_{w_{V_1}^g}}\ar[dd]_{I(\underline{A})(i_{w_{V_1}^g,w_{V_2}^g})}&&I(\underline{B})_{w_{V_1}^g}\ar[dd]^{I(\underline{B})(i_{w_{V_1}^g,w_{V_2}^g})}\\
&&\\
I(\underline{A})_{w_{V_2}^g}\ar[rr]_{I(f)_{w_{V_2}^g}}&&I(\underline{B})_{w_{V_2}^g}\\
}\] for all $V_2\subseteq V_1$. By applying the definitions we get

\[\xymatrix{
\underline{A}_{\phi^g_1(p_J(\phi^g_1))}\ar[rr]^{f_{\phi^g_1(p_J(\phi^g_1))}}\ar[dd]_{\underline{A}_{\phi^g_1(p_J(\phi^g_1)),\phi^g_2(p_J(\phi^g_2))}}&&\underline{B}_{\phi^g_1(p_J(\phi^g_1))}\ar[dd]^{\underline{B}_{\phi^g_1(p_J(\phi^g_1)),\phi^g_2(p_J(\phi^g_2))}}\\
&&\\
\underline{A}_{\phi^g_2(p_J(\phi^g_2))}\ar[rr]_{f_{\phi^g_2(p_J(\phi^g_2))}}&&\underline{B}_{\phi^g_2(p_J(\phi^g_2))}\\
}\]
which is commutative. Therefore $I(f)$ is a well defined arrow in $Sh(\Lambda\ps{G/G_F})$ from $I(\underline{A})$ to $I(\underline{B})$.

Given two arrows $f,g$ in $Sh(\mv(\mh))$ then it follows that: \be
I(f\circ g)=I(f)\circ I(g) \ee This proves that $I$ is a functor
from $Sh(\mv(\mh))$ to $Sh(\Lambda(\ps{G/G_F}))$.
\end{Proof}
From the above definition of the functor $I$ we immediately have
the following corollary:
\begin{Corollary}
The functor $I$ preserves monic arrows.
\end{Corollary}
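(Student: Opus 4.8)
The plan is to exploit the fact that both base categories, $\mv(\mh)$ and $\Lambda(\ps{G/G_F})$, are posets, so that the topoi $Sh(\mv(\mh))$ and $Sh(\Lambda(\ps{G/G_F}))$ are (equivalent to) presheaf topoi. In any such topos a morphism is monic precisely when each of its components is a monic arrow in $\Sets$, i.e. an injective function of sets. First I would record this standard characterisation, since it reduces the whole statement to a componentwise check. I would then unpack the hypothesis: saying that $f:\ps{A}\rightarrow\ps{B}$ is monic in $Sh(\mv(\mh))$ means, by the characterisation above, that the component $f_V:\ps{A}_V\rightarrow\ps{B}_V$ is injective for every $V\in\mv(\mh)$.

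Next I would compute the components of $I(f)$. By the definition of the functor $I$ on morphisms (part (ii) of the preceding theorem), the component of $I(f)$ at a stage $w_V^g\in\Lambda(\ps{G/G_F})$ is
\be
I(f)_{w_V^g}=f_{\phi^g(V)}:\ps{A}_{\phi^g(V)}\rightarrow\ps{B}_{\phi^g(V)}
\ee
where $\phi^g:\down V\rightarrow\mv(\mh)$ is the homeomorphism associated with the equivalence class $w_V^g=g\cdot G_{FV}$. Since $\phi^g(V)$ is an object of $\mv(\mh)$, the hypothesis gives that $f_{\phi^g(V)}$ is injective, so every component of $I(f)$ is monic in $\Sets$. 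Applying the characterisation in the reverse direction then yields that $I(f)$ is monic in $Sh(\Lambda(\ps{G/G_F}))$.

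The argument is essentially routine because $I$ acts on objects merely by relabelling each stage $w_V^g$ with the object $\phi^g(V)$ obtained by transporting along a homeomorphism, an operation that neither identifies nor collapses anything at the level of the underlying sets. The only point requiring a little care — and the closest thing to an obstacle — is justifying the componentwise characterisation of monics in $Sh(\Lambda(\ps{G/G_F}))$. This follows from the correspondence $\Sets^{P^{op}}\simeq Sh(P^-)$ recorded earlier (with $P=\Lambda(\ps{G/G_F})$ ordered as in Lemma \ref{lem:ordering}), together with the fact that limits, and hence monomorphisms, are computed pointwise in any presheaf category.

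Alternatively, I could bypass the general characterisation and give a direct diagram-chase, which avoids invoking the pointwise computation of limits: if $I(f)\circ h=I(f)\circ k$ for a parallel pair $h,k$ into $I(\ps{A})$, then at each stage $f_{\phi^g(V)}\circ h_{w_V^g}=f_{\phi^g(V)}\circ k_{w_V^g}$, and injectivity of $f_{\phi^g(V)}$ forces $h_{w_V^g}=k_{w_V^g}$ for every $w_V^g$, whence $h=k$. Either route delivers the corollary, and I would expect to present the componentwise version as the cleaner of the two.
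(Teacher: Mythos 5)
Your proposal is correct and follows essentially the same route as the paper: the paper's proof likewise observes that each component $I(f)_{w_V^g}$ is just the component $f_{\phi^g(V)}$ of $f$, and then declares monicity "straightforward." You have simply filled in the justification the paper leaves implicit — the componentwise characterisation of monics via the poset/presheaf equivalence — which is a sound and welcome addition rather than a different argument.
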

\begin{Proof}
Given a monic arrow $f:\underline{A}\rightarrow \underline{B}$ in
$Sh(\mv(\mh))$ then by definition \ba
I(f)_{w_{V_2}^g}:I(\underline{A})_{w_{V_2}^g}&\rightarrow& I(\underline{B})_{w_{V_2}^g}\\
f_{\phi_2^g(p_J(\phi^g_2))}:\underline{A}_{\phi_2^g(p_J(\phi^g_2))}&\rightarrow&\underline{B}_{\phi_2^g(p_J(\phi^g_2))}
\ea The fact that such a map is monic is straightforward.
\end{Proof}
Similarly we can show that
\begin{Corollary}
The functor $I$ preserves epic arrows.
\end{Corollary}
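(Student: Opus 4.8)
The plan is to mirror almost verbatim the proof just given for monic arrows, the only genuine input being a characterisation of epimorphisms that lets us check the epic property context by context. First I would recall the key structural fact established earlier in the excerpt: since the base categories $\mv(\mh)$ and $\Lambda(\ps{G/G_F})$ are posets, we have equivalences of the form $\Sets^{\mv(\mh)^{op}}\simeq Sh(\mv(\mh)^-)$, so that sheaves over these posets may be treated as presheaves. In the presheaf topos $\Sets^{\c^{op}}$ over a poset $\c$, a natural transformation is epic if and only if each of its components is epic (surjective) in $\Sets$; this is the analogue for epimorphisms of the component-wise criterion for monics, and it is precisely the tool I would invoke.

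Next I would unwind the definition of $I$ on morphisms. Given an epic arrow $f:\underline{A}\rightarrow\underline{B}$ in $Sh(\mv(\mh))$, the component-wise criterion gives that $f_V:\underline{A}_V\rightarrow\underline{B}_V$ is epic in $\Sets$ for every $V\in\mv(\mh)$. By the construction of the functor, for each object $w_V^g$ of $\Lambda(\ps{G/G_F})$ with associated homeomorphism $\phi^g$, the component of $I(f)$ is
\be
I(f)_{w_V^g}=f_{\phi^g(p_J(\phi^g))}:\underline{A}_{\phi^g(p_J(\phi^g))}\rightarrow\underline{B}_{\phi^g(p_J(\phi^g))}.
\ee
Since $\phi^g(p_J(\phi^g))$ is merely some object of $\mv(\mh)$, this component is literally one of the components of $f$, and is therefore epic in $\Sets$. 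Hence every component of $I(f)$ is epic, and applying the same component-wise criterion in the other direction over the poset $\Lambda(\ps{G/G_F})$ yields that $I(f)$ is epic in $Sh(\Lambda(\ps{G/G_F}))$. Finally, using functoriality of $I$ (already established), this completes the verification that $I$ carries epimorphisms to epimorphisms.

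The step I expect to be the main obstacle is the component-wise characterisation of epimorphisms itself, because it is here that the two corollaries genuinely diverge. Monomorphisms are component-wise monic in \emph{any} functor category, so the monic corollary is immediate; but in a general sheaf topos an epimorphism need only be \emph{locally} surjective (surjective after sheafification, equivalently on stalks), and component-wise surjectivity is a strictly stronger condition. The crucial reason the naive argument survives here is that the Alexandroff topology on a poset makes $Sh(P^-)$ equivalent to $\Sets^{P^{op}}$, where the epimorphisms \emph{are} exactly the component-wise epic maps; I would therefore take care to justify this equivalence-transported criterion, rather than appeal to sheaf epimorphisms directly, before running the otherwise routine argument above.
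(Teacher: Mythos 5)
Your proof is correct and takes essentially the same route as the paper's: both identify each component $I(f)_{w_V^g}$ with the component $f_{\phi^g(p_J(\phi^g))}$ of $f$ and conclude component-wise. The only difference is that you explicitly justify the component-wise criterion for epimorphisms via the poset/Alexandroff equivalence $Sh(\mv(\mh))\simeq \Sets^{\mv(\mh)^{op}}$ (a genuine subtlety, since in a general sheaf topos epimorphisms are only locally surjective), whereas the paper dismisses this step as ``straightforward.''
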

\begin{Proof}
Given an epic arrow $f:\underline{A}\rightarrow \underline{B}$ in
$Sh(\mv(\mh))$ then by definition \ba
I(f)_{w_{V_2}^g}:I(\underline{A})_{w_{V_2}^g}&\rightarrow& I(\underline{B})_{w_{V_2}^g}\\
f_{\phi_2^g(p_J(\phi^g_2))}:\underline{A}_{\phi_2^g(p_J(\phi^g_2))}&\rightarrow&\underline{B}_{\phi_2^g(p_J(\phi^g_2))}
\ea The fact that such a map is epic is straightforward.
\end{Proof}
We would now like to know how such a functor behaves with respect
to the terminal object. To this end we define the following
corollary:
\begin{Corollary}
The functor $I$ preserves the terminal object.
\end{Corollary}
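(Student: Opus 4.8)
The plan is to compute $I(\underline{1})$ explicitly and recognise it as the constant singleton-valued functor, which is the terminal object of $Sh(\Lambda(\ps{G/G_F}))$. First I would recall the shape of the terminal object in each sheaf topos. Since both $\mv(\mh)$ and $\Lambda(\ps{G/G_F})$ are posets carrying the Alexandroff topology, the equivalence $\Sets^{\mv(\mh)^{op}}\simeq Sh(\mv(\mh)^-)$ recorded in \ref{equ:correspondence} (and its analogue for $\Lambda(\ps{G/G_F})$) lets me identify the sheaf categories with the corresponding presheaf categories. Hence in each of them the terminal object is the constant functor sending every object to the one-element set $\{*\}$ and every arrow to $id_{\{*\}}$, exactly as in the definition of the terminal object in $\Sets^{\c^{op}}$ given earlier. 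I denote by $\underline{1}$ this terminal object in $Sh(\mv(\mh))$.

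Next I would apply the object-part of the functor $I$. By definition $\big(I(\underline{1})\big)_{w_V^g}:=\underline{1}_{\phi^g(V)}$, and because $\underline{1}$ is the constant singleton functor on $\mv(\mh)$ we have $\underline{1}_{\phi^g(V)}=\{*\}$ for every context $w_V^g\simeq\phi^g$ of $\Lambda(\ps{G/G_F})$. For the restriction maps, whenever $w_{V^{'}}^g\leq w_V^g$ (with associated homeomorphisms $\phi_2^g\leq\phi_1^g$, where $V=p_J(\phi_1^g)$ and $V^{'}=p_J(\phi_2^g)$) the induced arrow $I\underline{1}(i_{w_{V^{'}}^g,w_V^g})=\underline{1}_{\phi_1^g(V),\phi_2^g(V^{'})}$ is simply the presheaf map of $\underline{1}$ between the stages $\phi_1^g(V)$ and $\phi_2^g(V^{'})$, which is $id_{\{*\}}$. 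Thus $I(\underline{1})$ assigns $\{*\}$ to every object of $\Lambda(\ps{G/G_F})$ and the identity to every arrow, i.e. it is precisely the constant singleton-valued functor on $\Lambda(\ps{G/G_F})$.

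Finally I would invoke the first step again to conclude that this constant functor \emph{is} the terminal object of $Sh(\Lambda(\ps{G/G_F}))$, so $I(\underline{1})$ is terminal, which is the assertion. I do not expect a genuine obstacle: the computation is forced by the definition of $I$ together with the fact that $\underline{1}$ has singleton stalks and identity transition maps. The only point deserving a little care is the justification that the terminal object in the target sheaf topos really is the constant singleton functor; this rests on the poset/Alexandroff correspondence recalled at the outset, and not on any property special to $I$.

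If one prefers to avoid appealing to the explicit form of the terminal object, I would instead verify the universal property directly. For any $\underline{B}\in Sh(\Lambda(\ps{G/G_F}))$ there is exactly one natural transformation $\underline{B}\rightarrow I(\underline{1})$, namely the one whose component at each $w_V^g$ is the unique map $\underline{B}_{w_V^g}\rightarrow\{*\}$; naturality of this family is automatic since every component of the codomain is a singleton, so every relevant square commutes trivially. Uniqueness is immediate because any two maps into a singleton coincide. Either route yields that $I$ preserves the terminal object.
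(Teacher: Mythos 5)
Your proof is correct and follows essentially the same route as the paper: compute $I(\underline{1})$ stalkwise via $\big(I(\underline{1})\big)_{w_V^g}=\underline{1}_{\phi^g(V)}=\{*\}$ and identify the result as the constant singleton functor, i.e.\ the terminal object of $Sh(\Lambda(\ps{G/G_F}))$. Your additional checks (the transition maps being $id_{\{*\}}$, and the direct universal-property verification) go slightly beyond the paper's proof, which stops at the object-level computation, but they do not change the argument.
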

\begin{Proof}
The terminal object in $Sh(\mv(\mh))$ is the objects
$\underline{1}_{Sh(\mv(\mh))}$ such that to each element $V\in
\mv(\mh)$ it associates the singleton set $\{*\}$. We now apply
the $I$ functor to such an object obtaining \be
I(\underline{1}_{Sh(\mv(\mh))})_{w_V^g}:=(\underline{1}_{Sh(\mv(\mh))})_{\phi^g(p_J(\phi^g))}=\{*\}
\ee
where $\phi^g$ is the unique homeomorphism associated to the coset $w^g_V$.\\
Thus it follows that
$I(\underline{1}_{Sh(\mv(\mh))})=\underline{1}_{Sh(\Lambda(\ps{G/G_F}))}$
\end{Proof}
We now check whether $I$ preserves the initial object. We recall that
the initial object in $Sh(\mv(\mh))$ is simply the sheaf
$\underline{O}_{Sh(\mv(\mh))}$ which assigns to each element $V$
the empty set $\{\emptyset\}$. We then have \be
I(\underline{O}_{Sh(\mv(\mh))})_{w_V^g}:=(\underline{O}_{Sh(\mv(\mh))})_{\phi^g(p_J(\phi^g))}=\{\emptyset\}
\ee
where $\phi^g\in Hom(\down V, \mv(\mh))$ is the unique homeomorphism associated with the coset $w^g_V$.\\
It follows that: \be
I(\underline{O}_{Sh(\mv(\mh))})=\underline{O}_{Sh(\Lambda(\ps{G/G_F}))}
\ee

From the above proof it transpires that the reason the functor $I$ preserves monic, epic, terminal object, and initial object is manly due to the fact that the action of $I$ is defined component-wise as $(I(\underline{A}))_{\phi}:=\underline{A}_{\phi(V)}$ for $\phi\in Hom(\downarrow V,\mv(\mh))$. In particular, it can be shown that $I$ preserves all limits and colimits.
\begin{Theorem}\label{the:lim}
The functor $I$ preserves limits.
\end{Theorem}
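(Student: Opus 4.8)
The plan is to exploit the fact, already implicit in the equivalence $\Sets^{\mv(\mh)^{\op}}\simeq Sh(\mv(\mh))$ recorded in equation \ref{equ:correspondence}, that limits in a category of set-valued (pre)sheaves over a poset are computed \emph{pointwise}: for a diagram $\{\underline{A}_j\}_{j\in J}$ the limit presheaf satisfies $(\lim_j\underline{A}_j)_V\cong\lim_j\big((\underline{A}_j)_V\big)$ in $\Sets$ at every stage, with the transition maps induced stage-wise. (This is the same principle already used for pullbacks in the topos $\Sets^{\c^{\op}}$.) Since the functor $I$ is itself defined stalk-wise by the rule $(I(\underline{A}))_{w_V^g}:=\underline{A}_{\phi^g(V)}=\big((\phi^g)^*(\underline{A})\big)(V)$, the proof reduces to observing that this assignment is just evaluation of $\underline{A}$ at the single object $\phi^g(V)\in\mv(\mh)$ (precomposed with the pullback along the homeomorphism $\phi^g$), and evaluation functors into $\Sets$ preserve all limits.

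Concretely, first I would take a diagram $D:J\rightarrow Sh(\mv(\mh))$ with limiting cone $\{\pi_j:\underline{L}\rightarrow D(j)\}_{j\in J}$, where $\underline{L}=\lim_j D(j)$. Because $I$ is a functor it carries this cone to a cone $\{I(\pi_j):I(\underline{L})\rightarrow I(D(j))\}_{j\in J}$ over $I\circ D$. It then remains to show this image cone is limiting. I would verify universality stalk-wise: fix a context $w_V^g\in\Lambda(\ps{G/G_F})$ with associated homeomorphism $\phi^g:\down V\rightarrow\mv(\mh)$. Evaluating at $w_V^g$ gives, by the definition of $I$ and pointwise computation of the limit $\underline{L}$,
\begin{align*}
(I(\underline{L}))_{w_V^g}&=\underline{L}_{\phi^g(V)}=\big(\lim_j D(j)\big)_{\phi^g(V)}\cong\lim_j\big((D(j))_{\phi^g(V)}\big)\\
&=\lim_j\big((I(D(j)))_{w_V^g}\big),
\end{align*}
so that any competing cone $\{f_j:\underline{B}\rightarrow I(D(j))\}$ yields, at each $w_V^g$, a cone in $\Sets$ over $\{(D(j))_{\phi^g(V)}\}_{j}$ and hence a unique mediating map $(f)_{w_V^g}:\underline{B}_{w_V^g}\rightarrow\underline{L}_{\phi^g(V)}=(I(\underline{L}))_{w_V^g}$.

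The main work, and the step I expect to be the real obstacle, is showing that these stalk-wise mediating maps $\{(f)_{w_V^g}\}$ assemble into a genuine morphism of sheaves $f:\underline{B}\rightarrow I(\underline{L})$, rather than merely a family of set maps. This amounts to checking naturality with respect to every arrow $i_{w^g_{V^{'}},w^g_V}:w^g_{V^{'}}\rightarrow w^g_V$ of the new base category. Here I would use the explicit description of the transition maps of $I(\underline{L})$ from the defining theorem for $I$, namely $I\underline{L}(i_{\phi^g_2,\phi^g_1})=\underline{L}_{\phi^g_1(V),\phi^g_2(V^{'})}$, together with the naturality of the projections exhibiting $\underline{L}_{\phi^g(V)}$ as a pointwise limit and the fact that $f$ and each $f_j$ are already natural. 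Chasing the resulting square reduces the required commutativity to the corresponding square for the limit $\underline{L}$ in $Sh(\mv(\mh))$, which holds by hypothesis; uniqueness of $f$ then follows immediately from the stalk-wise uniqueness established above. Since the terminal object, products, equalisers and pullbacks are all instances of limits, this argument simultaneously subsumes and generalises the earlier corollaries that $I$ preserves the terminal object and finite limits.
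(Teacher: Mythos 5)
Your proposal is correct and follows essentially the same route as the paper's own proof: both rest on the fact that limits in set-valued presheaf categories are computed pointwise (the paper cites this as the isomorphism $(\lim_{\leftarrow J} A)D\simeq \lim_{\leftarrow J}A_D$ from MacLane--Moerdijk) combined with the stalk-wise definition $(I(\underline{A}))_{w_V^g}:=\underline{A}_{\phi^g(V)}$, so that evaluating $I$ of a limit and the limit of the images at each stage gives the same set. The only difference is one of presentation: the paper establishes the natural isomorphism $I\circ\lim_{\leftarrow J}\simeq\lim_{\leftarrow J}\circ\, I^J$ by a chain of stalk-wise identifications, whereas you verify the universal property directly and explicitly flag the assembly of the stalk-wise mediating maps into a natural transformation -- a step the paper leaves implicit.
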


In order to prove the above theorem we first of all have to recall some general results and definitions. To this end consider two categories $\mc$ and $\md$, such that there exists a functor between them $F:\mc\rightarrow \md$. For a small index category $J$, we consider diagrams of type $J$ in both $\mc$ and $\md$, i.e. elements in $\mc^J$ and $\md^J$, respectively. The functor $F$ then induces a functor  between these diagrams as follows:
\ba
F^J:\mc^J&\rightarrow& \md^J\\
A&\mapsto&F^J(A)
\ea
such that $(F^J(A))(j):=F(A(j))$. Therefore, if limits of type $J$ exist in $\mc$ and $\md$ we obtain the diagram
\[\xymatrix{
\mc^J\ar[rr]^{\lim_{\leftarrow J}}\ar[dd]_{F^J}&&\mc\ar[dd]^{F}\\
&&\\
\md^J\ar[rr]_{\lim_{\leftarrow J}}&&\md\\
}\]
where the map
\ba
\lim_{\leftarrow J}:\mc^J&\rightarrow&\mc\\
A&\mapsto&\lim_{\leftarrow J}(A)
\ea
assigns, to each diagram $A$ of type $J$ in $\mc$, its limit $\lim_{\leftarrow J}(A)\in \mc$.
By the universal properties of limits we obtain the \emph{natural transformation} 
\be
\alpha_{J}:F\circ \lim_{\leftarrow J}\rightarrow \lim_{\leftarrow J}\circ F^J
\ee
We then say that $F$ preserves limits if $\alpha_J$ is a \emph{natural isomorphisms}. \\
For the case at hand, in order to show that the functor $I$ preserves limits we need to show that there exists a map
\be
\alpha_J:I\circ \lim_{\leftarrow J}\rightarrow \lim_{\leftarrow J}\circ I^J
\ee
which is a natural isomorphisms. Here $I^J$ represents the map
\ba
I^J:(Sh(\mv(\mh))\Big)^J&\rightarrow& \Big(Sh(\Lambda(\ps{G/G_F}))\Big)^J\\
A&\mapsto&I^J(A)
\ea
where $(I^J(A)(j))_{\phi}:=I(A(j))_{\phi}$.

\noindent
The proof of $\alpha_J$ being a natural isomorphisms will utilise a result derived in \cite{topos7} where it is shown that for any diagram $A:J\rightarrow \mc^{\md}$ of type $J$ in $\mc^{\md}$ the following isomorphisms holds
\be\label{equ:dual}
\Big( \lim_{\leftarrow J} A\Big)D\simeq \lim_{\leftarrow J}A_D\;\forall\; D\in \md
\ee
where $A_D:J\rightarrow \mc$ is a diagram in $\md$.
With these results in mind we are now ready to prove theorem \ref{the:lim}
\begin{Proof}
Let us consider a diagram $A:J\rightarrow Sets^{\mv(\mh)}$ of type $J$ in $Sets^{\mv(\mh)}$:
\ba
A:J&\rightarrow& Sets^{\mv(\mh)}\\
j&\mapsto&A(j)
\ea
where $A(j)(V):=A_V(j)$ for $A_V:j\rightarrow Sets$ a diagram in $Sets$. Assume that $L$ is a limit of type $J$ for $A$, i.e. $L:\mv(\mh)\rightarrow Sets$ such that $\lim_{\leftarrow J}A=J$. We then construct the diagram
\[\xymatrix{
\Big(Sets^{\mv(\mh)}\Big)^J\ar[rr]^{\lim_{\leftarrow J}}\ar[dd]_{I^J}&&Sets^{\mv(\mh)}\ar[dd]^{I}\\
&&\\
\Big(Sets^{\Lambda(\ps{G/G_F})}\Big)^J\ar[rr]_{\lim_{\leftarrow J}}&&Sets^{\Lambda(\ps{G/G_F})}\\
}\]
and the associated natural transformation
\be
\alpha_J:I\circ \lim_{\leftarrow J}\rightarrow \lim_{\leftarrow J}\circ I^J
\ee
For each diagram $A:J\rightarrow Sets^{\mv(\mh)}$ and $\phi\in\Lambda(\ps{G/G_F})$ we obtain
\be
\Big(I\circ \lim_{\leftarrow J}(A)\Big)_{\phi}=\Big(I\big(\lim_{\leftarrow J}A\big)\Big)_{\phi}:=\big(\lim_{\leftarrow J}A\big)_{\phi(V)}\simeq \lim_{\leftarrow J} A_{\phi(V)}
\ee
where $A_{\phi(V)}:J\rightarrow Sets$, such that $A_{\phi(V)}(j)=A(j)(\phi{V})$\footnote{Recall that $A:J\rightarrow Sets^{\mv(\mh)}$ is such that $A_{V}(j)=A(j)(V)$, therefore $\big(I(A(j))\big)_{\phi}:=A(j)_{\phi(V)}=A_{\phi(V)}(j)$}

On the other hand
\be
\Big(\big(\lim_{\leftarrow J}\circ I^J\big)A\Big)_{\phi}=\Big(\lim_{J}(I^J(A))\Big)_{\phi}\simeq \lim_{\leftarrow J}(I^J(A))_{\phi}=\lim_{\leftarrow J}A_{\phi(V)}
\ee
where 
\ba
I^J(A):J&\rightarrow& Sets^{\Lambda(\ps{G/G_F})}\\
j&\mapsto&I^J(A)(j)
\ea
such that for all $\phi\in \Lambda(\ps{G/G_F})$ we have $\big(I^J(A(j))\big)_{\phi}=\big(I(A(j))\big)_{\phi}=A(j)_{\phi(V)}$.

\noindent
It follows that 
\be
I\circ \lim_{\leftarrow J}\simeq \lim_{\leftarrow J}\circ I^J
\ee
\end{Proof}
Similarly one can show that
\begin{Theorem}
The functor $I$ preserves all colimits
\end{Theorem}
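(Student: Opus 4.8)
The plan is to mirror the proof of Theorem \ref{the:lim} verbatim, replacing limits by colimits throughout and reversing the direction of the comparison arrow as the universal property of colimits demands. The two facts that made the limit argument work both survive dualisation: first, the functor $I$ is defined \emph{stalkwise}, $(I(\underline{A}))_\phi := \underline{A}_{\phi(V)}$ for $\phi\in\Lambda(\ps{G/G_F})$ with $V=p_J(\phi)$; and second, (co)limits of diagrams valued in a functor category are computed pointwise. So the structure of the argument carries over unchanged.

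First I would set up the comparison. For a small index category $J$ the functor $I$ induces $I^J$ between diagram categories, with $(I^J(A)(j))_\phi := I(A(j))_\phi$, and the universal property of colimits supplies a canonical natural transformation
\be
\beta_J:\lim_{\rightarrow J}\circ\, I^J\rightarrow I\circ\lim_{\rightarrow J}
\ee
pointing in the \emph{opposite} direction to the limit comparison $\alpha_J$. The goal is to show $\beta_J$ is a natural isomorphism, which by the stalkwise description of morphisms in $Sh(\Lambda(\ps{G/G_F}))$ reduces to checking it is an isomorphism at every stalk $\phi$.

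Next I would compute both composites stalkwise. Let $A:J\rightarrow Sets^{\mv(\mh)}$ be a diagram of type $J$, with $A(j)(V)=:A_V(j)$. Using $(I(\underline{B}))_\phi=\underline{B}_{\phi(V)}$ together with the dual of the isomorphism \ref{equ:dual} — namely that colimits in a functor category are pointwise, $(\lim_{\rightarrow J}A)(D)\simeq\lim_{\rightarrow J}A_D$ — I obtain, for each $\phi\in\Lambda(\ps{G/G_F})$ (writing $V=p_J(\phi)$),
\be
\Big(I\circ\lim_{\rightarrow J}(A)\Big)_\phi=\big(\lim_{\rightarrow J}A\big)_{\phi(V)}\simeq\lim_{\rightarrow J}A_{\phi(V)}
\ee
while on the other side, since $(I^J(A)(j))_\phi=A(j)_{\phi(V)}=A_{\phi(V)}(j)$,
\be
\Big(\big(\lim_{\rightarrow J}\circ\, I^J\big)A\Big)_\phi\simeq\lim_{\rightarrow J}\big(I^J(A)\big)_\phi=\lim_{\rightarrow J}A_{\phi(V)}
\ee
Comparing the two displays shows $\beta_J$ is an isomorphism at every stalk, hence a natural isomorphism, so $I$ preserves all colimits.

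The hard part will be justifying that colimits are pointwise. For limits this is harmless, but in a \emph{sheaf} category colimits are in general obtained by sheafifying the pointwise colimit, so the naive dual of \ref{equ:dual} could fail. The resolution is the fact already recorded in the excerpt: because every base category here is a poset, $\Sets^{P^{op}}\simeq Sh(P^-)$ is an \emph{equivalence} of categories (and likewise for $\Lambda(\ps{G/G_F})$), and an equivalence preserves and reflects all colimits. Thus a colimit in $Sh(\mv(\mh))$ may be computed as a colimit in the presheaf category $\Sets^{\mv(\mh)^{op}}$, where it genuinely is pointwise, and the dual of \ref{equ:dual} applies verbatim. This is exactly the same device that legitimises moving ``freely between presheaves and sheaves'', so no hypotheses beyond those used in the limit case are required.
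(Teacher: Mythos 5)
Your proposal is correct and follows essentially the same route as the paper's own proof: dualise the limit argument, compute both composites stalkwise via $(I(\underline{A}))_{\phi}=\underline{A}_{\phi(V)}$, and invoke the dual of the pointwise-limit isomorphism to identify $\big(I\circ\lim_{\rightarrow J}(A)\big)_{\phi}$ and $\big(\lim_{\rightarrow J}\circ I^J(A)\big)_{\phi}$ with $\lim_{\rightarrow J}A_{\phi(V)}$. Your two refinements---orienting the comparison map $\beta_J$ in the canonical direction for colimits, and explicitly justifying the pointwise computation of colimits through the equivalence $\Sets^{P^{op}}\simeq Sh(P^-)$ for posets rather than leaving the sheafification issue implicit---only tighten steps that the paper glosses over by its standing convention of moving freely between presheaves and sheaves.
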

Since colimits are simply duals to the limits, the proof of this theorem is similar to the proof given above. However, for completeness sake we will, nonetheless, report it here.
\begin{Proof}
We first of all construct the analogue of the diagram above:
\[\xymatrix{
\Big(Sets^{\mv(\mh)}\Big)^J\ar[rr]^{\lim_{\rightarrow J}}\ar[dd]_{I^J}&&Sets^{\mv(\mh)}\ar[dd]^{I}\\
&&\\
\Big(Sets^{\Lambda(\ps{G/G_F})}\Big)^J\ar[rr]_{\lim_{\rightarrow J}}&&Sets^{\Lambda(\ps{G/G_F})}\\
}\]
where $\lim_{\rightarrow J}:\Big(Sets^{\mv(\mh)}\Big)^I\rightarrow Sets^{\mv(\mh)}$ represents the map which assigns colimits to all diagrams in $ \Big(Sets^{\mv(\mh)}\Big)^I$.

\noindent
We now need to show that the associated natural transformation
\be
\beta_J:I\circ \lim_{\rightarrow J} \rightarrow \lim_{\rightarrow J}\circ I^J
\ee
is a natural isomorphisms.

\noindent
For any diagram $A\in \Big(Sets^{\mv(\mh)}\Big)^I$ and $\phi\in \Lambda(\ps{G/G_F})$ we compute
\be
\Big(I\circ \lim_{\rightarrow J}(A)\Big)_{\phi}=\Big(I( \lim_{\rightarrow J}A)\Big)_{\phi}=\Big(\lim_{\rightarrow J}A\Big)_{\phi(V)}\simeq \lim_{\rightarrow J}A_{\phi(V)}
\ee
where $\Big(\lim_{\rightarrow J}A\Big)_{\phi(V)}\simeq \lim_{\rightarrow J}A_{\phi(V)}$ is the dual of \ref{equ:dual}.
On the other hand 
\be
\Big(( \lim_{\rightarrow J}\circ I^J)(A)\Big)_{\phi}=\Big( \lim_{\rightarrow J}(I^J(A))\Big)_{\phi}\simeq \lim_{\rightarrow J}(I^J(A))_{\phi}=\lim_{\rightarrow J}A_{\phi(V)}
\ee
It follows that indeed $\beta_J$ is a natural isomorphisms.
\end{Proof}

\section{The Adjoint Pair }
It is a standard result that, given a map $f : X\rightarrow Y$
between topological spaces $X$ and $Y$, we obtain a
geometric morphisms
 \ba
f^*: Sh(Y )&\rightarrow& Sh(X)\\
f_* : Sh(X)&\rightarrow& Sh(Y ) \ea and we know that $f^*\dashv
f_*$, i.e., $f^*$ is the left-adjoint of $f_*$. If $f$ is an
etal\'e map, however, there also exists the left adjoint $f!$ to
$f^*$, namely \be f! : Sh(X)\rightarrow Sh(Y ) \ee
with $f!\dashv f^*\dashv f_*$.\\
In the appendix we will show that \be f!(p_A : A\rightarrow X) =
f\circ p_A : A\rightarrow Y \ee so that we combine the etal\'e
bundle $p_A:A\rightarrow X$ with the etal\'e map $f : X\rightarrow
Y$ to give the
etal\'e bundle $f\circ p_A : A\rightarrow Y$. Here we have used the fact that sheaves can be defined in terms of etal\'e bundles. In fact in previous lectures it was shown that there exists an equivalence of categories $Sh(X)\simeq Etale(X)$ for any topological space $X$.

Given a map $\alpha:A\rightarrow B$ of etal\'e bundles over $X$,
we obtain the map $f!(\alpha) : f!(A)\rightarrow f!(B)$ which is
defined as follows. We start with the collection of fibre maps $\alpha_x :
A_x\rightarrow B_x$,  $x\in X$, where $A_x := p^{-1}A (\{x\})$.
Then, for each $y\in Y$ we want to define the maps $f!(\alpha)_y :
f!(A)_y\rightarrow f!(B)_y$,  i.e., $f!(\alpha)_y : p^{-1}\big( A
(f^{-1}\{y\})\big)\rightarrow p^{-1}\big( B (f^{-1}\{y\})\big)$.
This are defined as \be\label{equ:shreack} f!(\alpha)_y(a) :=
\alpha_{p_A(a)}(a) \ee
for all $a\in f!(A)_y = p^{-1}\big( A (f^{-1}\{a\})\big)$.

For the case of interest we obtain the left adjoint functor
$p_J!:Sh(\Lambda(\ps{G/G_F}))\rightarrow Sh(\mv_f(\mh))$ of
$p_J^*:Sh(\mv_f(\mh))\rightarrow Sh(\Lambda(\ps{G/G_F}))$. The
existence of such a functor enables us to define the composite
functor \be F:=p_J!\circ I:Sh(\mv(\mh))\rightarrow Sh(\mv_f(\mh))
\ee Such a functor sends all the original sheaves we had defined
over $\mv(\mh)$ to new sheaves over $\mv_f(\mh)$. Thus, denoting
the sheaves over $\mv_f(\mh)$ as $\underline{\breve{A}}$ we have
\be \breve{\us}:=F(\us)=p_J!\circ I(\us) \ee

What happens to the terminal object? Given
$\underline{1}_{\mv(\mh)}$ we obtain \be
F(\underline{1}_{\mv(\mh)})=p_J!\circ
I(\underline{1}_{\mv(\mh)})=p_J!(\underline{1}_{\Lambda(\ps{G/G_F})})
\ee Now the etal\'e bundle associated to the sheaf
$\underline{1}_{\Lambda(\ps{G/G_F})}$ is
$p_1:\Lambda(\{*\})\rightarrow \Lambda(\ps{G/G_F}))$ where
$\Lambda(\{*\})$ represents the collection of singletons, one for
each $w_V^g\in \Lambda(\ps{G/G_F})$. Obviously the etal\'e bundle
$p_1:\Lambda(\{*\})\rightarrow \Lambda(\ps{G/G_F}))$ is nothing
but $\Lambda(\ps{G/G_F})$. Thus by applying the definition of
$p_J!$ we then get \be
p_J!(\underline{1}_{\Lambda(\ps{G/G_F})})=\ps{G/G_F} \ee
It follows that the functor $F$ does not preserve the terminal object therefore it can not be a right adjoint. In fact we would like $F$ to be left adjoint. However so far that does not seem the case. 
%
We have seen above that the functor $I$ preserves \emph{colimits} (initial object) and \emph{limits}. Since $F=p_J!\circ I$ and $p_J!$ is left adjoint thus preserves \emph{colimits}, it follows that $F$ will preserve \emph{colimits}.

Of particular importance to us is the following: each object
$\underline{A}\in Sh(\mv(\mh))$ has associated to it the unique
arrow $!\underline{A}:\underline{A}\rightarrow
\underline{1}_{\mv(\mh)}$. This arrow is epic thus
$F(!\underline{A}):F(\underline{A})\rightarrow
F(\underline{1}_{\mv(\mh)})$ is also epic. In particular we obtain
\ba
F(!\underline{A}):F(\underline{A})&\rightarrow &F(\underline{1}_{\mv(\mh)})\\
\underline{\breve{A}}&\rightarrow&\ps{G/G_F} \ea such that for
each $V\in \mv(\mh)$ we get \ba
\underline{\breve{A}}_V&\rightarrow&(\ps{G/G_F})_V\\
\coprod_{w^g_V\in(\ps{
G/G_F})_V}\underline{A}_{w^g_V}&\rightarrow&G/G_{FV} \ea 

However,
since we are considering sub-objects of the state object presheaf
$\breve{\us}$ we would like the $F$ functor to also preserve monic
arrows. And indeed it does.
\begin{Lemma}
The functor $F:Sh(\mv(\mh))\rightarrow Sh(\mv_f(\mh))$ preserves
monics.
\end{Lemma}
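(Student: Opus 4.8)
The plan is to exploit the factorization $F = p_J! \circ I$ and reduce the claim to preservation of monics by each factor separately, using the fact that a composite of monic-preserving functors is itself monic-preserving. The functor $I$ has already been shown to preserve monic arrows in the Corollary above, so it remains only to prove that the left-adjoint functor $p_J!: Sh(\Lambda(\ps{G/G_F})) \to Sh(\mv_f(\mh))$ preserves monics. I emphasize that this does \emph{not} follow formally from $p_J!$ being a left adjoint: left adjoints preserve colimits, hence epics, but not in general monics. A concrete argument resting on the etal\'e description of the sheaves is therefore needed.

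First I would pass from sheaves to etal\'e bundles via the equivalence $Sh(X) \simeq Etale(X)$. Under this dictionary a sheaf morphism $\alpha: \underline{A} \to \underline{B}$ over a space $X$ is monic precisely when the corresponding bundle map $\alpha: A \to B$ is injective on each stalk, i.e. when every fiber map $\alpha_x: A_x \to B_x$ is injective, where $A_x := p_A^{-1}(\{x\})$. I would then recall that $p_J!$ sends the etal\'e bundle $p_A: A \to \Lambda(\ps{G/G_F})$ to the etal\'e bundle $p_J \circ p_A: A \to \mv_f(\mh)$, and that by \ref{equ:shreack} the induced map acts fiberwise by $p_J!(\alpha)_V(a) := \alpha_{p_A(a)}(a)$ for $a \in p_J!(A)_V = p_A^{-1}(p_J^{-1}(\{V\}))$.

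The heart of the argument is then a short fiberwise computation. For each $V \in \mv_f(\mh)$ the fiber decomposes as a disjoint union $p_J!(A)_V = \bigsqcup_{w \in p_J^{-1}(V)} A_w$ indexed by the (discrete, since $p_J$ is etal\'e) fiber $p_J^{-1}(V)$, and likewise for $B$. Because $\alpha$ is a morphism of bundles over $\Lambda(\ps{G/G_F})$ we have $p_B \circ \alpha = p_A$, so $p_J!(\alpha)_V$ carries the summand $A_w$ into the summand $B_w$. Injectivity now follows at once: if $p_J!(\alpha)_V(a) = p_J!(\alpha)_V(a')$ with $w = p_A(a)$ and $w' = p_A(a')$, then $\alpha_w(a) = \alpha_{w'}(a')$; applying $p_B$ and using $p_B \circ \alpha = p_A$ yields $w = w'$, and injectivity of the single fiber map $\alpha_w$ then forces $a = a'$. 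Hence each $p_J!(\alpha)_V$ is injective, $p_J!(\alpha)$ is monic, and so $p_J!$ --- and therefore $F$ --- preserves monics.

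I expect the only real obstacle to be the bookkeeping in the last step: writing the fiber $p_J!(A)_V$ correctly as the disjoint union over the etal\'e fiber $p_J^{-1}(V)$, and checking that the bundle-map identity $p_B \circ \alpha = p_A$ is exactly what prevents $p_J!(\alpha)_V$ from mixing the distinct summands. Once this indexing is in place the injectivity itself is the elementary observation that a disjoint union of injective maps is injective.
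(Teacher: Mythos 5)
Your proof is correct and follows essentially the same route as the paper: factor $F = p_J! \circ I$, invoke the earlier corollary that $I$ preserves monics, and then argue via the fiberwise formula \ref{equ:shreack} that $p_J!$ preserves monics. The only difference is one of completeness rather than of method: where the paper simply asserts that monic-preservation of $p_J!$ ``follows from the defining equation'', you carry out the stalk-wise verification explicitly (decomposing $p_J!(A)_V$ as a disjoint union over the fiber $p_J^{-1}(V)$ and using $p_B \circ \alpha = p_A$ to see the summands are not mixed), and you correctly flag that left-adjointness alone would not give this --- a welcome sharpening, but the same argument.
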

\begin{Proof}
Let $i:\underline{A}\rightarrow \underline{B}$ be a monic arrow in
$Sh(\mv(\mh))$, then we have that \be F(i) =p_J!(I(i)) \ee
However, the $I$ functor preserves monics, as a consequence $I(i)$
is monic in $Sh(\Lambda(\ps{G/G_F}))$.
\\
Moreover, from the defining equation \ref{equ:shreack}, it follows
that if $f : X\rightarrow Y$ is etal\'e and $p_A:A\rightarrow X$
is etal\'e then, since $i : A\rightarrow B$ is monic then so is
$f!(i) : f!(A)\rightarrow f!(B)$. Therefore applying this
reasoning to our case it follows that $F(i) =p_J!(I(i))$ is monic.
\end{Proof}
\section{From Sheaves over $\mv(\mh)$ to Sheaves over $\mv(\mh_f)$}
Now that we have defined the functor $F$ we will map
all the sheaves in our original formalism ($Sh(\mv(\mh))$) to
sheaves over $\mv_f(\mh)$. We will then analyse how the truth
values behave under such mappings.
\subsection{Spectral Sheaf}
Given the spectral sheaf $\us\in Sh(\mv(\mh))$ we define the
following: \be \breve{\us}:=F(\us)=p_{I}!\circ I (\us) \ee This will
be our new spectral sheaf. The definition given below will be in
terms of the corresponding presheaf (which we will still denote
$\breve{\us)}$), where we have used the correspondence between
sheaves and presheaves induced by the fact that the base category
is a poset (see lecture 14)
\begin{Definition}
The spectral presheaf $\breve{\us}$ is defined on
\begin{itemize}
\item [--] Objects: For each $V\in\mv_f(\mh)$ we have
\be
\breve{\us}_V:=\coprod_{w^{g_i}_V\in\Lambda(\ps{G/G_F})_V}\us_{w^{g_i}_V}\simeq\coprod_{\phi_i\in
Hom(\down V,\mv(\mh))}\us_{\phi(V)} \ee which represents the disjoint
union of the Gel'fand spectrum of all algebras related to $V$ via
a group transformation
\item[--] Morphisms: Given a morphism $i:V^{'}\rightarrow V$, ($V^{'}\subseteq V)$ in $\mv_f(\mh)$ the corresponding spectral presheaf morphism is
\ba
\breve{\us}(i_{V^{'}V}):\breve{\us}_{V}&\rightarrow& \breve{\us}_{V^{'}}\\
\coprod_{\phi_i\in
Hom(\down V,\mv(\mh))}\us_{\phi_i(V)}&\rightarrow&\coprod_{\phi_j\in
Hom(\down V^{'},\mv(\mh))}\us_{\phi(V^{'})} \ea
such that given $\lambda\in\us_{\phi_i(V)}$ we obtain $\breve{\us}(i_{V^{'}V})(\lambda):=\us_{\phi_i(V),\phi_j(V^{'})}\lambda=\lambda_{|\phi_j(V^{'})}$\\
Thus in effect $\breve{\us}(i_{V^{'}V})$ is actually a co-product of morphisms $\us_{\phi_i(V),\phi_j(V^{'})}$, one for each 
$\phi_i \in Hom(\down V,\mv(\mh))$.
\end{itemize}
\end{Definition}
From the above definition it is clear that the new spectral sheaf
contains the information of all possible representations of a
given abelian von-Neumann algebra at the same time. It is such an
idea that will reveal itself fruitful when considering how
quantisation is defined in a topos.
\subsubsection{Topology on The State Space}
We would now like to analyse what kind of topology the sheaf
$\breve{\us}:=F(\us)$ has. We know that for each $V\in\mv_f(\mh)$
we obtain the collection $\coprod_{w^{g_i}_V\in
G/G_{FV}}\us_{w^{g_i}_V}$, where each
$\us_{w^{g_i}_V}:=\us_{\phi^{g_i}(V)}$ is equipped with the spectral
topology. Thus, similarly as was the case of the sheaf $\us\in
Sh(\mv(\mh))$, we could equip $\breve{\us}$ with the disjoint
union topology or with the spectral topology. In order to
understand the spectral topology we should recall that the functor
$F:Sh(\mv(\mh))\rightarrow Sh(\mv_f(\mh))$ preserves monics, thus
if $\underline{S}\subseteq \us$, then
$\breve{\underline{S}}:=F(\underline{S})\subseteq
\breve{\us}:=F(\us)$. We can then define the spectral topology on
$\breve{\us}$ as follows
\begin{Definition}
The spectral topology on $\breve{\us}$ has as basis the collection
of clopen sub-objects $\breve{\underline{S}}\subseteq \breve{\us}$
which are defined for each $V\in \mv_f(\mh)$ as \be
\breve{\underline{S}}_V:=\coprod_{w_{V}^{g_i}\in
G/G_{FV}}\underline{S}_{w_{V}^{g_i}}=\coprod_{\phi_i\in Hom(\down V,
\mv(\mh))}\underline{S}_{\phi_i(V)} \ee
\end{Definition}
From the definition it follows that on each element $\us_{w^{g_i}_V}$ of the stalks we retrieve the standard spectral topology.

It is easy to see that the map $p:\coprod_{w^{g_i}_V\in \Lambda(\ps{G/G_F})}\us_{w^{g_i}_V}\rightarrow \mv_f(\mh)$ is continuous since
$p^{-1}(\downarrow V):=\coprod_{w^{g_i}_{V^{'}}\in \downarrow w^{g_i}_V|\forall w^{g_i}_V\in G/G_{FV}}\us_{w^{g_i}_{V^{'}}}$ is the clopen sub-object which has value $\coprod_{w^{g_i}_{V^{'}}\in G/G_{FV^{'}}}\us_{w^{g_i}_{V^{'}}}$ at each context $V^{'}\in \downarrow V$ and $\emptyset $ everywhere else.

Similarly, as was the case for the topology on $\us\in
Sh(\mv(\mh))$, the spectral topology defined above is weaker than
the product topology and it has the advantage that if takes into account both the `vertical' topology on the fibres and the `horizontal' topology on the base space $\mv_f(\mh)$. 

A moment of thought will reveal that also with respect to the disjoint union topology the map $p$ is continuous, however because of the above argument, from now on we will use the spectral
topology on the spectral presheaf.

\subsection{Quantity Value Object}
We are now interested in mapping the quantity value objects
$\underline{\Rl}^{\leftrightarrow}\in Sh(\mv(\mh))$ to an object
in $Sh(\mv_f(\mh))$ via the $F$ functor. We thus define:
\begin{Definition}
The quantity value objects
$\breve{\underline{R}}^{\leftrightarrow}:=F(\underline{\Rl}^{\leftrightarrow})=p_{I}!\circ
I(\underline{\Rl}^{\leftrightarrow})$ is an $\Rl$-valued presheaf
of order-preserving and order-reversing functions on $\mv_f(\mh)$
defined as follows:
\begin{itemize}
\item[--] On objects $V\in\mv_f(\mh)$ we have
\be (F(\underline{\Rl}^{\leftrightarrow}))_V:=\coprod_{\phi_i\in
Hom(\down V,\mv(\mh))}\underline{\Rl}^{\leftrightarrow}_{\phi_i(V)} \ee
where each \be
\underline{\Rl}^{\leftrightarrow}_{\phi_i(V)}:=\{(\mu, \nu)|\mu\in
OP(\downarrow \phi_i(V), \Rl)\;,\; \mu\in OR(\downarrow \phi_i(V),
\Rl),\; \mu\leq\nu\} \ee The downward set $\downarrow \phi_i(V)$
comprises all the sub-algebras $V^{'}\subseteq \phi_i(V)$. The
condition $\mu\leq\nu$ implies that for all $V^{'}\in\down
\phi_i(V)$, $\mu(V^{'})\leq\nu(V^{'})$.
\item[--] On morphisms $i_{V^{'}V}:V^{'}\rightarrow V$ ($V^{'}\subseteq V)$ we get:
\ba
\breve{\underline{R}}^{\leftrightarrow}(i_{V^{'}V}):\breve{\underline{R}}^{\leftrightarrow}_V&\rightarrow& \breve{\underline{R}}^{\leftrightarrow}_{V^{'}}\\
\coprod_{\phi_i\in
Hom(\down V,\mv(\mh))}\underline{\Rl}^{\leftrightarrow}_{\phi_i(V)}&\rightarrow
&\coprod_{\phi_j\in
Hom(\down V^{'},\mv(\mh))}\underline{\Rl}^{\leftrightarrow}_{\phi_j(V^{'})}
\ea where for each element $(\mu,\nu)\in
\underline{\Rl}^{\leftrightarrow}_{\phi_i(V)}$ we obtain \ba
\breve{\underline{R}}^{\leftrightarrow}(i_{V^{'}V})(\mu,\nu)&:=&\underline{R}^{\leftrightarrow}(i_{\phi_i(V),\phi_j(V^{'})})(\mu,\nu)\\
&=&(\mu_{|\phi_i(V^{'})},\nu_{|\phi_j(V^{'})}) \ea where
$\mu_{|\phi_i(V^{'})}$ denotes the restriction of $\mu$ to
$\downarrow \phi_j(V^{'})\subseteq\downarrow\phi_i(V)$, and
analogously for $\nu_{|\phi_j(V^{'})}$.
\end{itemize}
\end{Definition}
\subsubsection{Topology on the Quantity Value Object}
We are now interested in defining a topology for our newly defined quantity value object $\breve{\underline{\Rl}}$. Similarly, as was done for the spectral sheaf, we define the set
\be
\mathcal{R}=\coprod_{V\in\mv_f(\mh)}\breve{\underline{\Rl}}^{\leftrightarrow}_V=\bigcup_{V\in\mv_f(\mh)}\{V\}\times\breve{\underline{\Rl}}^{\leftrightarrow}_V
\ee
where each $\breve{\underline{\Rl}}^{\leftrightarrow}_V:=\coprod_{\phi_i\in Hom(\down V, \mv(\mh))}\underline{\Rl}^{\leftrightarrow}_{\phi_i(V)}$.\\
The above represents a bundle over $\mv_f(\mh)$ with bundle map $p_{\mathcal{R}}:\mathcal{R}\rightarrow \mv_f(\mh)$ such that $p_{\mathcal{R}}(\mu, \nu)=V=p_J(\phi_i)$, where $V$ is the context such that $(\mu, \nu)\in\underline{\Rl}^{\leftrightarrow}_{\phi_i(V)}$. 
In this setting
$p^{-1}_{\mathcal{R}}(V)=\breve{\underline{\Rl}}^{\leftrightarrow}_{V}$
are the fibres of the map $p_{\mathcal{R}}$.

We would like to define a topology on $\mathcal{R}$ with the minimal require that the map $p_{\mathcal{R}}$ is continuous. We know that the category $\mv_f(\mh)$ has the Alexandroff topology whose basis open sets are of the form $\downarrow V$ for some $V\in\mv_f(\mh)$. Thus we are looking for a topology such that the pullback 
$p_{\underline{\Rl}}^{-1}(\downarrow V):=\coprod_{V^{'}\in\downarrow V}\underline{\breve{\Rl}}_{V^{'}}$ is open in $\mathcal{R}$. 

Following the discussion at the end of section 2.1 we know that each $\underline{\Rl}^{\leftrightarrow}$ is equipped with the discrete topology in which all sub-objects are open (in particular each $\underline{\Rl}^{\leftrightarrow}_V$ has the discrete topology).
Since the $F$ functor preserves monics, if $\underline{Q}\subseteq\underline{\Rl}^{\leftrightarrow}$ is open then $F(\underline{Q})\subseteq F(\underline{\Rl}^{\leftrightarrow})$ is open, where $F(\underline{Q}):=\coprod_{\phi_i\in Hom(\down V, \mv(\mh))}\underline{Q}_{\phi_i(V)}$. 

Therefore we define a sub-sheaf, $\underline{\breve{Q}}$, of 
$\breve{\underline{\Rl}}^{\leftrightarrow}$ to be \emph{open}
if for each $V\in\mv_f(\mh)$ the set $\underline{\breve{Q}}_V\subseteq \breve{\underline{\Rl}}_V$ is open, i.e., each $\underline{Q}_{\phi_i(V)}\subseteq \underline{\Rl}^{\leftrightarrow}_{\phi_i(V)}$ is open in the discrete topology on $\underline{\Rl}^{\leftrightarrow}_{\phi_i(V)}$. It follows that the sheaf $\breve{\underline{\Rl}}^{\leftrightarrow}$ gets induced the discrete topology in which all sub-objects are open. In this setting the `horizontal' topology on the base category $\mv_f(\mh)$ would be accounted for by the sheave maps.

For each $\downarrow V$ we then obtain the open set $p_{\underline{\Rl}}^{-1}(\downarrow V)$ which has value
$\underline{\breve{\Rl}}_{V^{'}}$ at contexts $V^{'}\in\downarrow
V$ and $\emptyset$ everywhere else
\subsection{Truth Values}
We now want to see what happens to the truth
values when they are mapped via the functor $F$. In particular,
given the sub-object classifier $\uom^{\mv(\mh)}\in Sh(\mv(\mh))$ we want to know
what $F(\uom^{\mv(\mh)})$ is. Since \ba
F(\uom^{\mv(\mh)})=p_J!\circ I (\uom^{\mv(\mh)}) \ea we first of
all need to analyse what $I (\uom^{\mv(\mh)})$ is. Applying the
definition for each $w^{g_i}_V\in \Lambda(\ps{G/G_F})$ we obtain 
\be (I(\uom^{\mv(\mh)}))_{w^{g_i}_V}:=\uom^{\mv(\mh)}_{\phi_i(V)} \ee Where
$\phi_i\in Hom(\down V, \mv(\mh))$ is the unique homeomorphism
associated to the equivalence class $w^{g_i}_V\in G/G_{FV}$. If we
then consider another element $w^{g_j}_V\in G/G_{FV}$, we then
have \be (I
(\uom^{\mv(\mh)}))_{w^{g_j}_V}:=\uom^{\mv(\mh)}_{\phi_j(V)} \ee
where now $\phi_i(V)\neq \phi_j(V))$. What this implies is
that once we apply the functor $p_J!$ to push everything down to
$\mv_f(\mh)$, the distinct elements $\uom^{\mv(\mh)}_{\phi_i(V)}$
and $\uom^{\mv(\mh)}_{\phi_j(V)}$ will be pushed down to the
same $V$, since both $\phi_i, \phi_j\in Hom(\down V, \mv(\mh))$. It
follows that, for every $V\in \mv_f(\mh)$, $F(\uom^{\mv(\mh)})$ is
defined as \be F(\uom^{\mv(\mh)})_V:=\coprod_{w^{g_i}_V\in
G/G_{FV}}\uom^{\mv(\mh)}_{w^{g_i}_V}\simeq\bigcup_{w^{g_i}_V\in
G/G_{FV}}\{w^{g_i}_V\}\times\uom^{\mv(\mh)}_{w^{g_i}_V}\simeq\bigcup_{\phi_i\in
Hom(\down V,\mv(\mh))}\{\phi_i\}\times \uom^{\mv(\mh)}_{\phi_i(V)}\ee

Thus it seems that for each $V\in \mv_f(\mh)$,
$F(\uom^{\mv(\mh)})_V$ assigns the disjoint union of the collection
of sieves for each algebra $V_{i}\in \mv(\mh)$ such that $V_{i}=\phi_i(V)$,
where $\phi_i$ are the unique homeomorphisms associated to each
$w^{g_i}_V\in G/G_{FV}$. This leads to the
following conjecture:
\begin{Conjecture}
$F(\uom^{\mv(\mh)})\simeq\ps{G/G_F}\times \uom^{\mv(\mh)}$
\end{Conjecture}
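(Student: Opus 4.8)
The plan is to prove the conjecture by exhibiting a natural isomorphism $k:F(\uom^{\mv(\mh)})\xrightarrow{\simeq}\ps{G/G_F}\times\uom^{\mv(\mh)}$ in $Sh(\mv_f(\mh))$. Since our base category $\mv_f(\mh)$ is a poset, it is enough (via the equivalence $\Sets^{\mv_f(\mh)^{\op}}\simeq Sh(\mv_f(\mh))$) to work with the associated presheaves: I would produce, for each $V\in\mv_f(\mh)$, a bijection $k_V$ of underlying sets and then check that the family $\{k_V\}$ commutes with the restriction maps. The objectwise bijection is essentially already isolated in the displayed computation immediately preceding the conjecture, so the genuine content of the proof is the naturality square.

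First I would construct $k_V$. By the established correspondence $Hom(\down{V},\mv(\mh))\cong(\ps{G/G_F})_V=G/G_{FV}$, the index set of the coproduct $F(\uom^{\mv(\mh)})_V\simeq\coprod_{\phi_i\in Hom(\down{V},\mv(\mh))}\uom^{\mv(\mh)}_{\phi_i(V)}$ is exactly $(\ps{G/G_F})_V$. It then remains to identify each summand $\uom^{\mv(\mh)}_{\phi_i(V)}$ canonically with $\uom^{\mv(\mh)}_V$. Each $\phi_i$ is a poset isomorphism $\phi_i:\down{V}\to\down{\phi_i(V)}$ (the restriction of $V^{''}\mapsto\hat{U}_{g_i}V^{''}\hat{U}_{g_i}^{-1}$). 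By property 3 of the section on sheaves over a poset, sieves on an algebra $W$ are in bijection with lower sets of $\down{W}$; pulling lower sets back along the order isomorphism $\phi_i$ therefore yields a bijection $\sigma_{\phi_i}:\uom^{\mv(\mh)}_{\phi_i(V)}\to\uom^{\mv(\mh)}_V$, explicitly $\sigma_{\phi_i}(S):=\{V^{''}\subseteq V\mid\phi_i(V^{''})\in S\}$. Setting $k_V(\phi_i,S):=(w^{g_i}_V,\sigma_{\phi_i}(S))$ gives the desired bijection onto $G/G_{FV}\times\uom^{\mv(\mh)}_V=(\ps{G/G_F}\times\uom^{\mv(\mh)})_V$, since a coproduct of copies of $\uom^{\mv(\mh)}_V$ indexed by the set $G/G_{FV}$ is precisely the product $G/G_{FV}\times\uom^{\mv(\mh)}_V$.

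The hard part will be the naturality check. For $V^{'}\subseteq V$, the restriction map of $F(\uom^{\mv(\mh)})$, unwound through the definitions of $I$ and $p_J!$ (recalling that the fibre of the composite \'etal\'e bundle $p_J\circ p_A$ over $V$ is the stalk $\coprod_{w^{g_i}_V}(I\uom)_{w^{g_i}_V}$), sends $(\phi_i,S)$ to $(\phi_i|_{\down{V^{'}}},\,\uom^{\mv(\mh)}(i_{\phi_i(V^{'})\phi_i(V)})(S))=(\phi_i|_{\down{V^{'}}},\,S\cap\down{\phi_i(V^{'})})$, while on the product side the restriction is the pair consisting of the projection $\pi_{V^{'}V}$ of $\ps{G/G_F}$ and the sieve map $S^{'}\mapsto S^{'}\cap\down{V^{'}}$ of $\uom^{\mv(\mh)}$. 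Chasing $(\phi_i,S)$ around both routes and using $\pi_{V^{'}V}(w^{g_i}_V)=w^{g_i}_{V^{'}}$, commutativity reduces to the single set identity $\sigma_{\phi_i}(S)\cap\down{V^{'}}=\sigma_{\phi_i|_{\down{V^{'}}}}(S\cap\down{\phi_i(V^{'})})$. Both sides equal $\{V^{''}\subseteq V^{'}\mid\phi_i(V^{''})\in S\}$: on the right the extra constraint $\phi_i(V^{''})\in\down{\phi_i(V^{'})}$ is automatic because $\phi_i$ is order preserving and order reflecting, so $V^{''}\subseteq V^{'}\Leftrightarrow\phi_i(V^{''})\subseteq\phi_i(V^{'})$. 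This is exactly the step where the homeomorphism property of $\phi_i$ (equivalently, that it arises from a unitary conjugation respecting inclusion) is essential, and I expect it to be the crux of the argument.

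Finally I would record that each $k_V$ is invertible (invert $\sigma_{\phi_i}$ by pushing lower sets forward along $\phi_i$), so $k$ is an isomorphism of presheaves and hence of sheaves. One should also confirm that $k$ respects the topologies carried by these objects, but since the bijections $\sigma_{\phi_i}$ are induced by order isomorphisms they send open lower sets to open lower sets, making this routine. The only genuine obstacle is the naturality identity above; everything else is bookkeeping with the definitions of $I$, $p_J!$, and the sieve--lower-set correspondence.
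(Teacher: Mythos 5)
Your proposal is correct, and its objectwise bijection is in fact identical to the one in the paper's proof: the paper sends a sieve $S\in\uom^{\mv(\mh)}_{\phi_i(V)}$ to $(w^{g_i}_V,\,l_{g_i^{-1}}S)$, and since $V''\in l_{g_i^{-1}}S$ iff $l_{g_i}V''=\phi_i(V'')\in S$, the paper's $l_{g_i^{-1}}S$ is exactly your $\sigma_{\phi_i}(S)$ — pushing forward along $g_i^{-1}$ and pulling lower sets back along the order isomorphism $\phi_i$ are the same operation. Where you genuinely go beyond the paper is in what you correctly flag as the real content: the naturality square. The paper's proof is purely stalk-wise — it defines $i_V$, checks injectivity and surjectivity, exhibits $j$, and closes with ``a moment of thought reveals that $j=i^{-1}$'' — but it never verifies that the family $\{i_V\}$ commutes with the restriction maps of $F(\uom^{\mv(\mh)})$ and of $\ps{G/G_F}\times\uom^{\mv(\mh)}$, i.e.\ that $i$ is a morphism of presheaves at all, which is what ``$\simeq$'' in the conjecture requires. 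Your reduction of that square to the identity $\sigma_{\phi_i}(S)\cap\down{V'}=\sigma_{\phi_i|_{\down{V'}}}\bigl(S\cap\down{\phi_i(V')}\bigr)$, with the observation that the extra constraint on the right is automatic because $\phi_i$ is order preserving and order reflecting, is exactly the missing step, so your argument is the more complete one; the paper's buys brevity at the cost of leaving the presheaf-morphism condition unchecked.
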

It should be noted that $\mv_f(\mh)\simeq \mv(\mh)$ since
$\mv_f(\mh)$ and  $\mv(\mh)$ are in fact the same categories only
that in the former there is no group action on it. Thus it also
follows trivially that $\uom^{\mv_f(\mh)}\simeq \uom^{\mv(\mh)}$. Having said that we can now prove the above conjecture
\begin{Proof}
For each $V\in \mv_f(\mh)$ we define the map \ba
i_V: F(\uom^{\mv(\mh)})_V&\rightarrow&G/G_{FV}\times \uom^{\mv(\mh)}_V\\
S&\mapsto&(w^{g_i}_V, l_{g_i^{-1}}S) \ea
where $S\in \uom^{\mv(\mh)}_{w^{g_i}_V}=\uom^{\mv(\mh)}_{\phi_i(V)}$ for $\phi_i\in Hom(\downarrow V, \mv(\mh))$ and $\phi_i(V):=l_{g_i}V$ while $l_{g_i^{-1}}S\in \uom^{\mv(\mh)}_V$.

Such a map is one to one since if $(w^{g_i}_V, l_{g_i^{-1}}S_1)=(w^{g_i}_V, l_{g_i^{-1}}S_2)$ then $l_{g_i^{-1}}S_1=l_{g_i^{-1}}S_2$ and $S_1=S_2$.
The fact that it is onto follows form the definition.

We now construct, for each $V\in \mv(\mh)$ the map \ba
j:G/G_{FV}\times \uom^{\mv(\mh)}_V&\rightarrow&F(\uom^{\mv(\mh)})_V\\
(w^{g_i}_V, S)&\mapsto&l_{g_i}(S) \ea 
where $S\in \uom^{\mv(\mh)}_V$ and $l_{g_i}S\in \uom^{\mv(\mh)}_{l_{g_i}V}$ for $l_{g_i}V=\phi_i(V)$ thus $l_{g_i}S\in \uom^{\mv(\mh)}_{w^{g_i}_V}$

A moment of thought reveals that
$j=i^{-1}$
\end{Proof}
From the above result we obtain the following conjecture:
\begin{Conjecture}\label{con:trueiso}
$\uom^{\mv_f(\mh)}\simeq F(\uom^{\mv(\mh)})/\ug$
\end{Conjecture}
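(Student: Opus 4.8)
The plan is to bootstrap off the isomorphism $F(\uom^{\mv(\mh)})\simeq\ps{G/G_F}\times\uom^{\mv(\mh)}$ just established, observing that the residual $\ug$-action only moves the ``coset direction'' and leaves the sieve direction fixed; quotienting then collapses $\ps{G/G_F}$ to a point and returns $\uom^{\mv(\mh)}$. Since $\mv_f(\mh)$ is the same poset as $\mv(\mh)$ with the group action dropped, we have $\uom^{\mv_f(\mh)}\simeq\uom^{\mv(\mh)}$ (noted just above the statement), so it suffices to prove $F(\uom^{\mv(\mh)})/\ug\simeq\uom^{\mv(\mh)}$.

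First I would pin down the action of $\ug=\underline{G}$ on $F(\uom^{\mv(\mh)})$. At each context $V$ an element $h\in\ug_V=G$ should send a sieve $S\in\uom^{\mv(\mh)}_{\phi_i(V)}$, sitting in the component indexed by $w^{g_i}_V$, to $l_hS\in\uom^{\mv(\mh)}_{\phi_{hg_i}(V)}$ in the component indexed by $w^{hg_i}_V$, which is exactly the action inherited from the group action on the base used in the twisted-presheaf analysis. Composing with the bijection $i_V$ of the previous proof, under which $S\mapsto(w^{g_i}_V,l_{g_i^{-1}}S)$, and using $l_{(hg_i)^{-1}}l_h=l_{g_i^{-1}}$, one finds
\[
h\cdot(w^{g_i}_V,T)=(w^{hg_i}_V,T),\qquad T:=l_{g_i^{-1}}S\in\uom^{\mv(\mh)}_V .
\]
Thus, transported through the product decomposition, $\ug$ acts by left multiplication on the factor $\ps{G/G_F}$ and trivially on the factor $\uom^{\mv(\mh)}$.

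Next I would take the quotient context-wise, $\big(F(\uom^{\mv(\mh)})/\ug\big)_V:=F(\uom^{\mv(\mh)})_V/G$. Because the left action of $G$ on $G/G_{FV}$ is transitive, its quotient is a single point, so
\[
\big(G/G_{FV}\times\uom^{\mv(\mh)}_V\big)/G\;\simeq\;\uom^{\mv(\mh)}_V .
\]
It then remains to verify that this collapse is compatible with the restriction morphisms. For this I would use that the morphism of $F(\uom^{\mv(\mh)})$ at $i_{V'V}:V'\subseteq V$ factors, through the decomposition, as the product of the fibre-bundle projection $\pi_{V'V}:G/G_{FV}\rightarrow G/G_{FV'}$ with $\uom^{\mv(\mh)}(i_{V'V})$; since $\pi_{V'V}$ is $G$-equivariant for the left action and the $\uom^{\mv(\mh)}$-factor carries the trivial action, the quotiented morphism is well defined and, after the coset factor has collapsed, equals $\uom^{\mv(\mh)}(i_{V'V})$. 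This yields $F(\uom^{\mv(\mh)})/\ug\simeq\uom^{\mv(\mh)}$ as presheaves, and invoking $\uom^{\mv_f(\mh)}\simeq\uom^{\mv(\mh)}$ finishes the argument.

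The hard part will not be the pointwise collapse but making the quotient rigorous as an object of $Sh(\mv_f(\mh))$. One must check that the $\ug$-action is a genuine action by presheaf automorphisms (a natural transformation satisfying the action axioms), that the context-wise quotients assemble into a presheaf whose restriction maps are the descended ones, and crucially that the decomposition $F(\uom^{\mv(\mh)})\simeq\ps{G/G_F}\times\uom^{\mv(\mh)}$ is $\ug$-\emph{equivariant} rather than merely a context-wise bijection. Establishing this equivariance and naturality—so that transporting the action and forming the quotient genuinely commute—is the delicate step; once it is secured, the identification of the quotient with $\uom^{\mv(\mh)}$, and hence with $\uom^{\mv_f(\mh)}$, follows immediately.
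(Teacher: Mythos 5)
Your proposal is correct, but it takes a genuinely different route from the paper's own proof. The paper does not re-use the product decomposition $F(\uom^{\mv(\mh)})\simeq\ps{G/G_F}\times\uom^{\mv(\mh)}$ when proving this conjecture: it works directly with the quotient presheaf of context-wise orbit classes of sieves $[S]=\{l_g S\,|\,g\in G\}$, constructs the explicit map $\beta_V(S):=[S]$ and its candidate inverse $\gamma_V([S]):=[S]\cap V$ (the unique representative lying in $\uom^{\mv_f(\mh)}_V$), and checks by hand that both are functorial and mutually inverse; the injectivity of $\beta_V$ rests on the bookkeeping claim that each orbit contains exactly one sieve per algebra. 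You instead transport the $\ug$-action through the decomposition --- and your computation $i_V(l_h S)=(w^{hg_i}_V,\, l_{(hg_i)^{-1}}l_h S)=(w^{hg_i}_V,\, l_{g_i^{-1}}S)$ is exactly right, so the action really is left translation on the coset factor and trivial on the sieve factor --- after which the collapse follows from the general fact that quotienting a product by an action transitive on one factor and trivial on the other returns the other factor. What your route buys: the ``one sieve per algebra'' bookkeeping disappears (orbits are classified by the second coordinate), and compatibility with restrictions reduces to the $G$-equivariance of the bundle projections $\pi_{V^{'}V}:G/G_{FV}\rightarrow G/G_{FV^{'}}$, which is immediate. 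What the paper's route buys: it is self-contained (an error in the previous conjecture would not propagate), and it exhibits the explicit inverse $\gamma$. The delicate steps you flag do go through: the action is by presheaf morphisms because each $l_h$ is a poset isomorphism and hence commutes with sieve restriction, and the equivariance of the identification $i_V$ is your displayed computation together with its well-definedness, which uses that any $k\in G_{FV}$ fixes every subalgebra of $V$ and hence every sieve on $V$.
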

Before proving the above conjecture we, first of all, need to define what a quotient presheaf is. This is simply a presheaf in which the quotient is computed context wise, thus, in the case at hand the quotient is computed for each $V\in\mv_f(\mh)$. In order to understand the definition of the quotient presheaf we will analyse what the equivalence classes look like.

We already know that for presheaves over $\mv_f(\mh)$ the group
action is at the level of the base category $\Lambda(\ps{G/G_F})$.
In particular for each $g\in G$ we have \be
(l^*_g(\uom^{\mv(\mh)}))_{\phi (V)}:=\uom^{\mv(\mh)}_{l_g(\phi (V))}
\ee where $\phi\in Hom(\down V, \mv(\mh))$. Therefore by
defining for each $V\in \mv(\mh)$ the equivalence relation on
$\coprod_{\phi_i\in
Hom(\down V,\mv(\mh))}\uom^{\mv(\mh)}_{\phi_i(V)}=:(F(\uom^{\mv(\mh)}))_V$ by
the action of $G$, the elements in $(F(\uom^{\mv(\mh)}))_V/G_V=\Big(\coprod_{\phi_i\in
Hom(\down V,\mv(\mh))}\uom^{\mv(\mh)}_{\phi_i(V)}\Big)/G$ will be
equivalence classes of sieves, i.e., \be [S_i]:=\{l_g(S_i)|g\in
G\} \ee
for each $S_i\in \uom^{\mv(\mh)}_{\phi_i(V)}\in \coprod_{{\phi_i\in Hom(\down V,\mv(\mh))}}\uom^{\mv(\mh)}_{\phi_i(V)}$. In the above we used the action of the group $G$ on sieves which is defined as $l_gS:=\{l_gV^{'}|V^{'}\in S\}$.
We are now ready to define the presheaf $F(\uom^{\mv(\mh)})/\ug$.

\begin{Definition}
The Presheaf $F(\uom^{\mv(\mh)})/\ug$ is defined:
\begin{itemize}
\item On objects: for each context $V\in \mv_f(\mh)$ we have the object 
\be
(F(\uom^{\mv(\mh)}))_V/G_V:=\Big(\coprod_{\phi_i\in Hom(\down V,\mv(\mh))}\uom^{\mv(\mh)}_{\phi_i(V)}\Big)/(G)\ee
whose elements are equivalence classes of sieves $[S_i]$, i.e., $S_1, S_2\in [S_i]$ iff $S_1:=\{l_gS_2|g\in G\}$ and $S_2\in\uom^{\mv(\mh)}_{\phi_i(V)}$ and $S_1=\uom^{\mv(\mh)}_{l_g\phi_i(V)}$, i.e. each equivalence class will contain only one sieve for each algebra. This definition of equivalence condition follows from the fact that the group action of $G$ moves each set $\uom^{\mv(\mh)}_{\phi_i(V)}$ to another set $\uom^{\mv(\mh)}_{l_g\phi_i(V)}$ in the same stork $F(\uom^{\mv(\mh)}))_V$, i.e. the group action is at the level of the base category $\Lambda(\ps{G/G_F})$.
\item On morphisms: for each $V^{'}\subseteq V$ we then have the corresponding morphisms
\ba
\alpha_{VV^{'}}:\Big(\coprod_{\phi_i\in Hom(\down V,\mv(\mh))}\uom^{\mv(\mh)}_{\phi_i(V)}\Big)/(G)&\rightarrow& \Big(\coprod_{\phi_j\in Hom(\down V^{'},\mv(\mh))}\uom^{\mv(\mh)}_{\phi_j(V^{'})}\Big)/(G)\\
\;[S]&\mapsto&\alpha_{VV^{'}}([S]):=[S\cap V^{'}] \ea 

where
$[S\cap V^{'}]:=\{l_g(S\cap V^{'})|g\in G\}$,  and we choose as the representative for the equivalence class $S\in\uom^{\mv(\mh)}_V$ for $V=\phi_i(V)$ where $\phi_i\in Hom(\downarrow V, \mv(\mh))$ is associated to some $g\in G_V$
\end{itemize}
\end{Definition}
We can now prove the above conjecture (\ref{con:trueiso}), i.e.,
we will show that the functor \be \beta:\uom^{\mv_f(\mh)}\rightarrow
F(\uom^{\mv(\mh)})/(\ug) \ee
is an isomorphism.

In particular for each context $V\in \mv(\mh)$ we define \ba
\beta_V:\uom^{\mv_f(\mh)}_V&\rightarrow& F(\uom^{\mv(\mh)})_V/(\ug)_{V}\\
S&\mapsto&[S] \ea
where $[S]$ denotes the equivalence class to which the sieve $S$ belongs to, i.e., $[S]:=\{l_g S|g\in G\}$.

First we need to show that $\beta$ is indeed a functor,
i.e., we need to show that the following diagram commutes
\[\xymatrix{
\uom^{\mv_f(\mh)}_V\ar[rr]^{\beta_V}\ar[dd]_{\uom^{\mv_f(\mh)}(i_{V^{'}V})}&&F(\uom^{\mv(\mh)})_V/G\ar[dd]^{\alpha_{VV^{'}}}\\
&&\\
\uom^{\mv_f(\mh)}_{V^{'}}\ar[rr]^{\beta_{V^{'}}}&&F(\uom^{\mv(\mh)})_V/G\\
}\] Thus for each $S$ we obtain for one direction
\be\big(\beta_{V^{'}}\circ\uom^{\mv_f(\mh)}(i_{V^{'}V})\big)(S)=\beta_{V^{'}}(S\cap
V^{'})=[S\cap V^{'}] \ee 
where the first equality follows from the definition of the sub-object classifier $\uom^{\mv_f(\mh)}$ \cite{andreas5}.

Going the opposite direction we get
\be \big(\alpha_{VV^{'}}\circ
\beta_V\big)S=\alpha_{VV^{'}}[S]=[S\cap V^{'}] \ee It follows that
indeed the above diagram commutes. Now that we have showed that
$\beta$ is a functor we need to show that it is an isomorphisms. We consider each individual component $\beta_V$, $V\in  \mv_f(\mh)$.
\begin{enumerate}
\item
\emph{The map $\beta_V$ is one-to-one}.

Given $S_1, S_2\in\uom^{\mv_f(\mh)}_{V}$, 
if $\beta_V(S_1)=\beta_V(S_2)$ then $[S_1]=[S_2]$, thus both $S_1$
and $S_2$ belong to the same equivalence class. Each equivalence
class is of the form $[S]=\{l_gS|g\in G\}$, therefore $S_1=l_gS_2$ for some $g\in G$.
However, the definition of the equivalence classes of sieves implied that for each equivalence class there is one and only one sieve for each algebra. Thus if $[S_1]=[S_2]$ and both $S_1, S_2\in\uom^{\mv_f(\mh)}_V$, then $S_1=S_2$.

%
%
%
\item
\emph{The map $\beta_V$ is onto}. This follows at once from the definition.
\item
\emph{The map $\beta_V$ has an inverse}.

We now need to define an inverse. We choose \be
\gamma:F(\uom^{\mv(\mh)})/G\rightarrow \uom^{\mv_f(\mh)} \ee such
that for each context we get \ba
\gamma_V:F(\uom^{\mv(\mh)})_V/G&\rightarrow& \uom^{\mv_f(\mh)}_V\\
\;[S]&\mapsto&[S]\cap V \ea where $[S]\cap V:=\{l_g(S)\cap V|g\in
G\}$ represents the only sieve in the equivalence class which belongs to $\uom^{\mv_f(\mh)}_V$ . We first of all have to show that this is indeed a functor.
Thus we need to show that, for each $V^{'}\subseteq V$ the following diagram commutes
\[\xymatrix{
F(\uom^{\mv(\mh)})_V/G\ar[dd]_{\alpha_{VV^{'}}}\ar[rr]^{\gamma_V}&&\uom^{\mv_f(\mh)}_V\ar[dd]^{\uom^{\mv_f(\mh)}(i_{V^{'}V})}\\
&&\\
F(\uom^{\mv(\mh)})_V/G\ar[rr]^{\gamma_{V^{'}}}&&\uom^{\mv_f(\mh)}{V^{'}}
}\] Chasing the diagram around for each $S$ we obtain \be
\uom^{\mv_f(\mh)}(i_{V^{'}V})\circ
\gamma_V([S])=\uom^{\mv_f(\mh)}(i_{V^{'}V})([S]\cap V)=([S]\cap
V)\cap V^{'}=[S]\cap V^{'} \ee

On the other hand we have \be \gamma_{V^{'}}\circ
\alpha_{VV^{'}}[S]=\gamma_{V^{'}}[S\cap V^{'}]=[S\cap V^{'}]\cap
V^{'}=[S]\cap V^{'} \ee
where the last equality follows since $[S\cap V^{'}]\cap V^{'}:=\{l_g(S\cap V^{'})|g\in G\}\cap V^{'}$ and the only sieve in $[S]$ belonging to $\uom^{\mv_f(\mh)}_{V^{'}}$ is $S\cap V^{'}$. Therefore the map $\gamma$ is a functor. 

It now remains to show that, for each $V\in \mv_f(\mh)$ and each
$S\in\uom^{\mv(\mh)}_V$, $\gamma_V$  is the inverse of $\beta_V$.
Thus 
\be \gamma_V\circ\beta_V(S)=\gamma_V([S])=[S]\cap V=S \ee
where the last equality follows from the fact that in each
equivalence class of sieves there is one and only one referred to
each context $l_gV$. On the other hand we have \be \beta_V\circ
\gamma_V([S])=\beta_V\circ ([S]\cap V)=\beta_V(S)=[S] \ee
\end{enumerate}
The functor $\beta$ is indeed an isomorphism.

\subsection{Group Action on the New Sheaves}
We would now like to analyse what the group action on the new
sheaves is. In particular we will show how the action of the group
$\ug$ on the sheaves define on $\mv_f(\mh)$ via the $F$ functor
will not induce twisted sheaves.
\subsubsection{Spectral Sheaf}
The action of the group $\ug$ on the new spectral sheaf
$\breve{\us}:=F(\us)$ is given by the following map: \be \ug\times
\breve{\us}\rightarrow\breve{\us} \ee defined for each context
$V\in\mv_f(\mh)$ as \ba\label{ali:gonvalue}
\ug_V\times \breve{\us}_V&\rightarrow&\breve{\us}_V\\
(g, \lambda)&\mapsto&l_g\lambda \ea where
$\breve{\us}_V:=\coprod_{\phi_i\in Hom(\down V,
\mv(\mh))}\us_{\phi_i(V)}$ such that if $\lambda\in \us_{\phi_i(V)}$
we define $l_g\lambda\in l_g\us_{\phi_i(V)}:=\us_{l_g(\phi_i(V))}$ by
\be (l_g(\lambda))\hat{A}:=\langle
\lambda,\hat{U}(g)^{-1}\hat{A}\hat{U}(g)\rangle \ee
for all $g\in G$, $\hat{A}\in V_{sa}$(self adjoint operators in $V$) and $V\in \mv(\mh)$.

However from the definition of $\breve{\us}$, both $ \us_{\phi_i(V)}$ and $\us_{l_g(\phi_i(V))}$ belong to the same stalk, i.e., belong to $\breve{\us}_V$.

\noindent
We thus obtain a well defined group action which does not induce twisted presheaves. 

We would now like to check whether such a group action is
continuous with respect to the spectral topology, i.e., if the map
\be \rho:\ug\times \breve{\us}\rightarrow\breve{\us} \ee is
continuous. In particular we want to check if for each
$V\in\mv_f(\mh)$ the local component \be \rho_V:\ug_V\times
\breve{\us}_V\rightarrow\breve{\us}_V \ee is continuous, i.e., if
$\rho^{-1}_V\breve{\underline{S}}_V=\rho^{-1}_V\Big(\coprod_{\phi_i\in
Hom(\down V, \mv(\mh))}\underline{S}_{\phi_i(V)}\Big)$ is open for
$\breve{\underline{S}}_V$ open. \ba
\rho^{-1}_V\Big(\coprod_{\phi_i\in Hom(\down V, \mv(\mh))}\underline{S}_{\phi_i(V)}\Big)&=&\{(g_j, \underline{S}_{\phi_i(V)})|l_{g_j}(\underline{S}_{\phi_i(V)})\in \underline{\breve{S}}_V\}\\
&=& (G, \underline{\breve{S}}_V) \ea where
$l_{g_j}(\underline{S}_{\phi_i(V)}):=\underline{S}_{l_{g_j}\phi_i(V)}=\underline{S}_{l_{g_j}(\phi_i(V))}$.
It follows that the action is continuous.

Moreover it seems that the sub-objects $\breve{\underline{S}}$ actually remain invariant under the group action. In fact, for each $V\in \mv_f(\mh)$, $\breve{\underline{S}}_V=\coprod_{\phi_i\in Hom(\down V, \mv(\mh))}\underline{S}_{\phi_i(V)}$ where the set $Hom(\down V, \mv(\mh))$ contains all $G$ related homeomorphisms, i.e. all $l_{g_j}(\phi_i)\;\forall \; g_j\in G$, ($l_{g_j}(\phi)(V):=l_{g_j}(\phi(V)))$.

\noindent
It follows that the sub-objects $\breve{\underline{S}}\subseteq\breve{\us}$ are invariant under the group action.

This is an important result when considering propositions which
are identified with clopen sub-objects coming from
daseinisation. In this context the group action is defined, for each
$V\in\mv_f(\mh)$, as: \ba
\underline{G}_V\times \underline{\delta\breve{P}}_V&\rightarrow& \underline{\delta\breve{P}}_V\\
\underline{G}_V\times \coprod_{\phi_i\in Hom(\down V, \mv(\mh))}\delta^o(\hat{P})_{\phi_i(V)}&\rightarrow&\coprod_{\phi_i\in Hom(\down V, \mv(\mh))}\delta^o(\hat{P})_{\phi_i(V)}\\
(g,
\delta^o(\hat{P})_{\phi_i(V)})&\mapsto&\delta^o(\hat{U}_g\hat{P}\hat{U}_g^{-1})_{l_g(\phi_i(V))}
\ea Thus for each $g\in G$ we get a collection of transformations each similar to those obtained in the original formalism. However, since the effect
of such a transformation is to move the objects around within a
stalk, when considering the action of the entire $G$, the
stalk, as an entire set,  remains invariant, i.e., the collection
of local component of the propositions stays the same.

Moreover the fact that individual sub-objects $\breve{\underline{S}}\subseteq\breve{\us}$ are invariant under the group action, implies that the action $\ps{G}\times\breve{\us}\rightarrow \breve{\us}$ is not transitive. In fact the transitivity of the action of a group sheaf is defined as follows
\begin{Definition}
Given a group $\ps{G}$, we say that the action of $\ps{G}$ on any other sheaf $\underline{A}$ is transitive iff there are no invariant sub-objects of $A$.
\end{Definition}

Thus although the group actions moves the elements around in each stalk, it never moves elements in between different stalks, thus each sub-object is left invariant.
\subsubsection{Sub-object Classifier}
We now are interested in defining the group action on the
sub-object classifier $\uom^{\mv_f(\mh)}$. However, by definition,
there is no action on such object. The only action which could be
defined would be the action on $\breve{\uom}:=F(\uom^{\mv(\mh)})$.
In this case, for each $V\in \mv_f(\mh)$, we have \ba
\alpha_V:\ug_V\times \breve{\uom}_V&\rightarrow&\breve{\uom}_V\\
\ug_V\times \coprod_{w^{g_i}_V\in G/G_{FV}}\uom_{w^{g_i}_V}&\rightarrow&\coprod_{w^{g_i}_V\in G/G_{FV}}\uom_{w^{g_i}_V}\nonumber\\
\ug_V\times \coprod_{\phi_i\in Hom(\down V,\mv(\mh))}\uom_{\phi_i(V)}&\rightarrow&\coprod_{\phi_i\in Hom(\down V,\mv(\mh))}\uom_{\phi_i(V)}\nonumber\\
(g, S)&\mapsto&l_g(S) \ea
where $l_g(S):=\{l_gV|V\in S\}$. 

If $S\in \uom_{\phi_i(V)}\in \coprod_{\phi_i\in Hom(\down V,\mv(\mh))}\uom_{\phi_i(V)}$, then $l_g(S)$ is a sieve on $l_g\phi_i(V)$, i.e., $l_g(S)\in
\uom_{l_g\phi_i(V)}\in \coprod_{\phi_i\in Hom(\down V,\mv(\mh))}\uom_{\phi_i(V)}$. 

\noindent 
It follows that the action of the group $\ug$ is to move sieves around in each stalk but never to move sieves to different stalks.

The next question is to define a topology on $\breve{\uom}$ and
check whether the action is continuous or not.

\noindent 
A possible topology would be the topology whose basis are the collection of open sub-sheaves of $\breve{\uom}$. If we assume that each $\uom_{\phi(V)}$ has the discrete topology, coming from the fact that it can be seen as an etal\'e bundle, then the topology on $\breve{\uom}$ will be the topology in which each sub-sheaf is open, i.e., the discrete topology. 

Given such a topology we would like to check if the group action
is continuous. To this end we need to show that
$\alpha_V^{-1}(\underline{\breve{S}}_V)$ is open for $\breve{S}_V$
open sub-object. We recall that $\underline{\breve{S}}_V=\coprod_{\phi_i\in
Hom(\down V, \mv(\mh)}\underline{S}_{\phi_i(V)}$. We then obtain \ba
\alpha_V^{-1}(\underline{\breve{S}}_V)&=&\{(g, S)|l_g(S)\in \underline{\breve{S}}_V\}\\
&=&(\underline{G}_V, \underline{\breve{S}}_V) \ea which is open.
\subsubsection{Quantity Value Object}
We would now like to analyse how the group acts on the new
quantity value object $\breve{\underline{\Rl}}^{\leftrightarrow}$.
This is defined via the map \ba \ug\times
\breve{\underline{\Rl}}^{\leftrightarrow}\rightarrow
\breve{\underline{\Rl}}^{\leftrightarrow} \ea which, for each
$V\in\mv_f(\mh)$, has local components \ba \ug_V\times
\breve{\underline{\Rl}}^{\leftrightarrow}_V&\rightarrow&
\breve{\underline{\Rl}}^{\leftrightarrow}_V\\\nonumber \ug_V\times
\coprod_{\phi_i\in
Hom(\down V,\mv(\mh))}\underline{\Rl}^{\leftrightarrow}_{\phi_i(V)}&\rightarrow&\coprod_{\phi_i\in
Hom(\down V,\mv(\mh))}\underline{\Rl}^{\leftrightarrow}_{\phi_i(V)}\\\nonumber
\Big(g, (\mu,\nu)\Big)&\mapsto&(l_g\mu, l_g\nu) \ea where
$(\mu,\nu)\in \underline{\Rl}^{\leftrightarrow}_{\phi_i(V)}$,
while $(l_g\mu,
l_g\nu)\in\underline{\Rl}^{\leftrightarrow}_{l_g(\phi_i(V))}$.
Therefore $l_g\mu:\downarrow l_g(\phi_i(V))\rightarrow\Rl$ and $l_g\nu:\downarrow l_g(\phi_i(\nu))\rightarrow \Rl$.

As it can be easily deduced, even in this case the action of the $\ug$ group is to map elements around in the same stalk but never to map elements between different stalks. Thus yet again we do not obtain twisted sheaves.

We would now like to check whether the group action is continuous
with respect to the discrete topology on $\breve{\underline{\Rl}}$ defined in section 12.2.1.
Thus  we have to check whether for $V\in \mv_f(\mh)$ the following
map is continuous \ba
\Phi_V:\underline{G}_V\times \breve{\underline{\Rl}}_V&\rightarrow &\breve{\underline{\Rl}}_V\\
(g, (\mu,\nu))&\rightarrow&(l_g\mu,l_g\nu)
\ea
A typical open set in $\breve{\underline{\Rl}}_V$ is of the form $\underline{\breve{Q}}_{V}:=\coprod_{\phi_i\in Hom(\down V, \mv(\mh)}\underline{Q}_{\phi_i(V)}$ where each $\underline{Q}_{\phi_i(V)}\subseteq \underline{\Rl^{\leftrightarrow}}_{\phi_i(V)}$ is open.
Therefore \ba
\Phi^{-1}_V(\underline{\breve{Q}}_{V})&=&\{g_i, (\mu, \nu)|(l_{g_i}\mu, l_{g_i}\nu)\in \underline{\breve{Q}}_{V}\}\\
&=&(G, \underline{\breve{Q}}_{V}) \ea 
Therefore the group action with respect to the discrete topology is continuous. 
\subsubsection{Truth Object}
The new truth value object for pure states obtained through the
action of the $F$ functor is \be
\underline{\breve{\mathbb{T}}}^{|\psi\rangle}:=F(\underline{\mathbb{T}}^{|\psi\rangle})
\ee  which is defined as follows:
\begin{Definition}
The truth object $F(\underline{\mathbb{T}}^{|\psi\rangle})$ is the
presheaf defined on
\begin{itemize}
\item [--] Objects: for each $V\in\mv_f(\mh)$  we get
\be F(\underline{\mathbb{T}}^{|\psi\rangle}):=\coprod_{\phi_i\in
Hom(\down V, \mv(\mh))}\underline{\mathbb{T}}^{|\psi\rangle}_{\phi_i(V)}
\ee where
$\underline{\mathbb{T}}^{|\psi\rangle}_{\phi_i(V)}:=\{\hat{\alpha}\in
P(\phi_i(V))|\langle\psi|\hat{\alpha}|\psi\rangle=1\}$ and $P(\phi_i(V))$ denotes the collection of all projection operators in $\phi_i(V)$.
\item[--] Morphisms: given $V^{'}\subseteq V$ the corresponding map is
\be
\underline{\breve{\mathbb{T}}}^{|\psi\rangle}(i_{V^{'}V}):\coprod_{\phi_i\in
Hom(\down V,
\mv(\mh))}\underline{\mathbb{T}}^{|\psi\rangle}_{\phi_i(V)}\rightarrow
\coprod_{\phi_j\in Hom(\down V^{'},
\mv(\mh))}\underline{\mathbb{T}}^{|\psi\rangle}_{\phi_j(V^{'})}
\ee such that, given $\underline{S}\in
\underline{\mathbb{T}}^{|\psi\rangle}_{\phi_i(V)}$, then \be
\underline{\breve{\mathbb{T}}}^{|\psi\rangle}(i_{V^{'}V})\underline{S}:=\underline{\mathbb{T}}^{|\psi\rangle}(i_{\phi_i(V),\phi_j(V^{'})})\underline{S}=\underline{S}_{|\phi_j(V^{'})}
\ee where $\phi_j\leq \phi_i$ thus $\phi_j(V^{'})\subseteq
\phi_i(V)$ and $\phi_j(V^{'})={\phi_i}_{|V^{'}}(V^{'})$.
\end{itemize}
\end{Definition}
\subsection{New Representation of Physical Quantities}
We are now interested in understanding the action of the $F$
functor on physical quantities. We thus define the following \be
F(\breve{\delta}(\hat{A})):\breve{\us}\rightarrow
\ps{\breve{\Rl}^{\leftrightarrow}} \ee which, at each context $V$,
is defined as \be F(\breve{\delta}(\hat{A}))_V:\coprod_{\phi_i\in
Hom(\down V,\mv(\mh))}\us_{\phi_i(V)}\rightarrow\coprod_{\phi_i\in
Hom(\down V,\mv(\mh))} \ps{\Rl^{\leftrightarrow}}_{\phi_i(V)} \ee such
that for a given $\lambda\in \us_{\phi_i(V)}$ we obtain \ba
F(\breve{\delta}(\hat{A}))_V(\lambda)&:=&\breve{\delta}(\hat{A})_{\phi_i(V)}(\lambda)\\\nonumber
&=&(\breve{\delta}^i(\hat{A})_{\phi_i(V)}(\cdot),
\breve{\delta}^o(\hat{A})_{\phi_i(V)}(\cdot))(\lambda)=(\mu_{\lambda},\nu_{\lambda})
\ea
Thus in effect the map $F(\breve{\delta}(\hat{A}))_V$ is a co-product of maps of the form $F(\breve{\delta}(\hat{A}))_{\phi_i(V)}$ for all $\phi_i\in Hom(\down V, \mv(\mh))$.

From this definition it is straightforward to understand how the
group acts on such physical quantities. In particular, for each context $V\in\mv_f(\mh)$ we obtain a collection of
maps \be F(\breve{\delta}(\hat{A}))_V:\coprod_{\phi_i\in
Hom(\down V,\mv(\mh))}\us_{\phi_i(V)}\rightarrow\coprod_{\phi_i\in
Hom(\down V,\mv(\mh))} \ps{\Rl^{\leftrightarrow}}_{\phi_i(V)} \ee and
the group action is to map individual maps in such a collection into one
another. Thus, for example, if we consider the component \be
\breve{\delta}(\hat{A})_{\phi_i(V)}:\us_{\phi_i(V)}\rightarrow
\ps{\Rl^{\leftrightarrow}}_{\phi_i(V)} \ee by acting on it by an
element of the group we would obtain \ba
l_g\Big(\breve{\delta}(\hat{A})_{\phi_i(V)}\Big):l_g\us_{\phi_i(V)}&\rightarrow& l_g \ps{\Rl^{\leftrightarrow}}_{\phi_i(V)}\\
\us_{l_g(\phi_i(V))}&\rightarrow&
\ps{\Rl^{\leftrightarrow}}_{l_g(\phi_i(V))} \ea
Let us now analyse what exactly is $l_g\breve{\delta}(\hat{A}))_{\phi_i(V)}$.\\
We know that it is comprised of two functions, namely \be
l_g\Big(\breve{\delta}(\hat{A})_{\phi_i(V)}\Big)=\Big(l_g\Big(\breve{\delta}^i(\hat{A}))_{\phi_i(V)}\Big)(\cdot),\Big(l_g\breve{\delta}^o(\hat{A}))_{\phi_i(V)}\Big)(\cdot)\Big)
\ee We will consider each of them separately. Given
$\lambda\in\us_{l_g(\phi_i(V))}$ we obtain \ba
l_g\Big(\breve{\delta}^i(\hat{A})_{\phi_i(V)}\Big)(\lambda)&=&\lambda\Big(l_g\big(\delta^i(\hat{A})_{\phi_i(V)}\big)\Big)\\\nonumber
&=&\lambda\Big(\hat{U}_g\big(\delta^i(\hat{A})_{\phi_i(V)}\big)\hat{U}_g^{-1}\Big)\\\nonumber
&=&\lambda\Big(\delta^i(\hat{U}_g\hat{A}\hat{U}_g^{-1})_{l_g(\phi_i(V))}\Big)\\\nonumber
&=&\breve{\delta}^i(\hat{U}_g\hat{A}\hat{U}_g^{-1})_{l_g(\phi_i(V))}(\lambda)
\ea Similarly for the order reversing function we obtain \be
l_g\Big(\breve{\delta}^o(\hat{A})_{\phi_i(V)}\Big)(\lambda)=\breve{\delta}^o(\hat{U}_g\hat{A}\hat{U}_g^{-1})_{l_g(\phi_i(V))}(\lambda)
\ee Thus putting the two results together we have \be
l_g\Big(\breve{\delta}(\hat{A})_{\phi_i(V)}\Big)=\Big(\breve{\delta}(\hat{U}_g\hat{A}\hat{U}_g^{-1})_{l_g(\phi_i(V))}\Big)
\ee This is the topos analogue of the standard transformation of
self adjoint operators in the canonical formalism of quantum
theory. In particular, given a self adjoint operator
$\breve{\delta}(\hat{A})$ its local component in the context $V$
is $\breve{\delta}(\hat{A})_{V}$.
 This `represents' the pair of self adjoint operators $(\delta^i(\hat{A})_{V}, \delta^o(\hat{A})_{V})$ which live in $V$. By acting with a unitary transformation we obtain the transformed quantity $l_g\Big(\breve{\delta}(\hat{A})\Big)$ with local components $\Big(\breve{\delta}(\hat{U}_g\hat{A}\hat{U}_g^{-1})_{l_gV}\Big)$, $V\in \mv_f(\mh)$. Such a quantity represents the pair $(\delta^i(\hat{U}_g\hat{A}\hat{U}_g^{-1})_{l_g(V)}, \delta^o(\hat{U}_g\hat{A}\hat{U}_g^{-1})_{l_g(V)})$ of self adjoint operators living in the transformed context $l_g(V)$.

\section{Appendix}
\begin{Theorem}
Given the etal\'e map $f : X\rightarrow Y$ the left adjoint functor $f!:Sh(X)\rightarrow Sh(Y)$ is defined as follows 
\be
f!(p_A : A\rightarrow X) = f\circ p_A : A\rightarrow Y 
\ee
for $p_A:A\rightarrow Y$ an etal\'e bundle
\end{Theorem}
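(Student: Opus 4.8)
The plan is to establish that $f!$, defined on objects by $f!(p_A:A\to X) := f\circ p_A$, is genuinely left adjoint to the inverse image functor $f^*$, by exhibiting a natural bijection of hom-sets. Throughout I would work inside the equivalence $Sh(X)\simeq Etale(X)$ established earlier, so that every sheaf is presented as an étal\'e bundle and every sheaf morphism as a continuous bundle map over the appropriate base.

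First I would check that $f!$ is well defined on objects. Given an étal\'e bundle $p_A:A\to X$, the candidate image $f\circ p_A:A\to Y$ is a composite of two local homeomorphisms, namely $p_A$ (étal\'e by hypothesis) and $f$ (étal\'e by hypothesis), hence is itself a local homeomorphism and so lies in $Sh(Y)$. This is precisely where the assumption that $f$ is étal\'e --- and not merely continuous --- is used; for a general continuous $f$ the composite $f\circ p_A$ would fail to be étal\'e and the formula would not land in $Sh(Y)$. On morphisms, a bundle map $\alpha:A\to A'$ over $X$ (so $p_{A'}\circ\alpha=p_A$) is sent to the same underlying continuous map viewed as a bundle map over $Y$, which is legitimate since $p_{A'}\circ\alpha=p_A$ gives $(f\circ p_{A'})\circ\alpha=f\circ p_A$; functoriality is then immediate.

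The core of the argument is the hom-set bijection. Recall that $f^*(B)$ is computed as the pullback $f^*B=\{(x,b)\in X\times B\,|\,f(x)=p_B(b)\}$, which is étal\'e over $X$ by the lemma on pullbacks of étal\'e bundles. A morphism $f!(A)\to B$ in $Sh(Y)$ is a continuous $\alpha:A\to B$ with $p_B\circ\alpha=f\circ p_A$, whereas a morphism $A\to f^*B$ in $Sh(X)$ is a continuous $\beta:A\to f^*B$ with $\mathrm{pr}_X\circ\beta=p_A$. I would define the correspondence by $\alpha\mapsto\langle p_A,\alpha\rangle$, where $\langle p_A,\alpha\rangle(a):=(p_A(a),\alpha(a))$; the condition $f\circ p_A=p_B\circ\alpha$ guarantees this lands in the pullback, and continuity is supplied by the universal property of $f^*B$ in $\mathbf{Top}$. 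Conversely, given $\beta$ with components $(\beta_1,\beta_2)$, the constraint $\mathrm{pr}_X\circ\beta=p_A$ forces $\beta_1=p_A$, and the pullback defining relation gives $p_B\circ\beta_2=f\circ p_A$, so $\beta\mapsto\beta_2$ produces a valid bundle map $f!(A)\to B$. These assignments are manifestly mutually inverse, yielding
\be
\theta:Hom_{Sh(Y)}(f!(A),B)\xrightarrow{\sim}Hom_{Sh(X)}(A,f^*B).
\ee

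Finally I would verify naturality of $\theta$ in both arguments, by pre-composing with a bundle map $A'\to A$ over $X$ and post-composing with a bundle map $B\to B'$ over $Y$, and chasing the resulting squares; each check reduces to the observation that $\langle p_A,\alpha\rangle$ is built componentwise, so it commutes with such composites. This establishes $f!\dashv f^*$, consistent with the chain $f!\dashv f^*\dashv f_*$ asserted earlier. The main obstacle is not any single computation but keeping the two distinct base spaces straight: one must consistently track which maps are required to commute over $X$ and which over $Y$, and confirm at each stage that continuity of the induced maps really follows from the universal property of the pullback rather than being silently assumed. Once the étal\'e hypothesis secures well-definedness of $f!$ on objects, everything else is a formal consequence of that universal property.
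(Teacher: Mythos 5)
Your proof is correct, but it takes a genuinely different route from the paper's. You treat the composition formula as the \emph{definition} of $f!$ and verify directly that it is left adjoint to $f^*$, via the explicit natural bijection $Hom_{Sh(Y)}(f!(A),B)\cong Hom_{Sh(X)}(A,f^*B)$ given by $\alpha\mapsto\langle p_A,\alpha\rangle$ in the etal\'e-bundle picture; uniqueness of left adjoints then identifies your functor with $f!$, and this final appeal to uniqueness is worth stating explicitly, since the theorem is phrased as computing \emph{the} left adjoint of $f^*$. The paper instead starts from the presheaf-level construction $f!=-\otimes_X({}_fY^{\bullet})$ (a left Kan extension along $f$) and computes that coend explicitly: it works out the equivalence classes generated by the relations $\tau(a,g,h)=(ag,h)$ and $\theta(a,g,h)=(a,gh)$, and shows that when $A$ is etal\'e and $f$ is a local homeomorphism these classes collapse to the fibres of $p_A$, so the quotient is exactly the bundle $f\circ p_A:A\rightarrow Y$. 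Your route is more elementary and self-contained: it needs only the universal property of the pullback in $\mathbf{Top}$, the paper's earlier lemma that $f^*B$ is etal\'e over $X$, and the fact that a composite of local homeomorphisms is a local homeomorphism --- which, as you correctly observe, is precisely where etal\'eness of $f$ enters and why the formula fails to land in $Sh(Y)$ for general continuous $f$. The paper's route buys compatibility with the general machinery: it shows the composition formula agrees with the canonical $f!$ that already exists at the presheaf level over an arbitrary continuous map, rather than re-deriving the adjunction by hand. One small caveat for your write-up: you silently use that the bundle pullback computes the sheaf-theoretic inverse image $f^*$ under the equivalence $Sh(X)\simeq Etale(X)$; this is standard (and the paper assumes it without proof as well), but since your whole bijection is built on it, it deserves an explicit sentence.
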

\begin{Proof}
In the proof we will first define the functor $f!$ for general presheaf situation, then restrict our attention to the case of sheaves ($Sh(X)\subseteq \Sets^{X^{\op}}$) and $f$ etal\'e. 

Consider the map $f:X\rightarrow Y$, this gives rise to the functor $f!:\Sets^{X^{\op}}\rightarrow \Sets^{Y^{\op}}$. The standard definition of $f!$ is as follows:
\be
f!:=-\otimes_{X}(_{f}X^{\bullet})
\ee
which is defined on objects $A\in \Sets^{X^{\op}}$ as
\be
A\otimes_{X}(_{f}Y^{\bullet})
\ee
This is a presheaf in $\Sets^{Y^{\op}}$, thus for each element $y\in Y$ we obtain the set
\be
(A\otimes_{X}(_{f}Y^{\bullet}))y:=A\otimes_{X}(_{f}Y^{\bullet})(-, y)
\ee
where $(_{f}Y^{\bullet})$ is the presheaf 
\be
(_{f}Y^{\bullet}):X\times Y^{\op}\rightarrow \Sets
\ee
This presheaf derives from the composition of $f\times id_{Y^{\op}}:X\times Y^{op}\rightarrow Y\times Y^{op}$ ($(f\times id_{Y^{\op}})^*:\Sets^{Y\times Y^{op}}\rightarrow \Sets^{X\times Y^{op}}$) with $^{\bullet}Y^{\bullet}:Y\times Y^{\op}\rightarrow \Sets$, i.e.,
\be
(_{f}Y^{\bullet}):=(f\times id_{Y^{\op}})^*(^{\bullet}Y^{\bullet})=^{\bullet}Y^{\bullet}\circ (f\times id_{Y^{\op}})
\ee
where $^{\bullet}Y^{\bullet}$ is the bi-functor 
\ba
^{\bullet}Y^{\bullet}:Y\times Y^{\op}&\rightarrow&\Sets\\
(y,y^{'})&\mapsto&Hom_Y(y^{'},y)
\ea
Now coming back to our situation we then have the restricted functor
\ba
(_{f}Y^{\bullet})(-, y):(X, y)&\rightarrow& \Sets\\
(x, y)&\mapsto &(_{f}Y^{\bullet})(x,y)
\ea
which from the definition given above is
\ba
(_{f}Y^{\bullet})(x,y)=^{\bullet}Y^{\bullet}\circ (f\times id_{Y^{\op}})(x,y)=^{\bullet}Y^{\bullet}(f(x), y)=Hom_Y(y, f(x))
\ea
Therefore putting all the results together we have that for each $y\in Y$ we obtain $A\otimes_{X}(_{f}Y^{\bullet})(-, y)$, defined for each $x\in X$ as 
\be
A(-)\otimes_{X}(_{f}Y^{\bullet})(x,y):=A(x)\otimes_{X}Hom_Y(y, f(x))
\ee
This represents the presheaf $A$ defined over the element $x$, plus a collection of maps in $Y$ mapping the original $y$ to the image of $x$ via $f$. 

In particular $A(x)\otimes_{X}(_{f}X^{\bullet})=A(x)\otimes_{X}Hom_Y(y, f(-))$ represents the following equaliser: 
\[\xymatrix{
\coprod_{x, x^{'}}A(x)\times Hom_X(x^{'},x)\times Hom_Y(y, f(x^{'})\ar@<3pt>[rr]^{\;\;\;\;\;\;\; \tau} \ar@<-3pt>[rr]_{\;\;\;\;\;\;\; \theta}&&\coprod_x A(x)\times Hom_Y(y, f(x))\ar[dd]^{\sigma}\\
&&\\
&&A(-)\otimes_{X}Hom_Y(y, f(-)) \\
}\]
Such that given a triplet $(a, g, h)\in A(x)\times Hom_X(x^{'},x)\times Hom_Y(y, f(x^{'})$ we then obtain that 
\be\label{equ:equivalence}
\tau(a, g, h)=(ag, h)=\theta(a,g,h)=(a, gh)
\ee
Therefore $A(-)\otimes_{X}Hom_Y(y, f(-))$ is the quotient space of $\coprod_x A(x)\times Hom_Y(y, f(x))$ by the above equivalence conditions.

We now consider the situation in which $A$ is a sheaf on $X$, in particular it is an etal\'e bundle $p_A:A\rightarrow X$ and $f$ is an etal\'e map which means that it is a local homeomorphism, i.e. for each $x\in X$ there is an open set $V$ such that $x\in V$ and such that $f_{|V}:V\rightarrow f(V)$ is a homeomorphism. It follows that for each $x_i\in V$ there is a unique element $y_i$ such that $f_{|V}(x_i)=y_i$. In particular for each $V\subset X$ then $f_{|V}(V)=U$ for some $U\subset Y$.

\noindent
It can be the case that $f_{|V_i}(V_i)=f_{|V_j}(V_j)$ even if $V_i\neq V_j$, since the condition of being a homeomorphism is only local, however in these cases the restricted etal'e maps have to agree on the intersections, i.e. $f_{|V_i}(V_i\cap V_j)=f_{|V_j}(V_j\cap V_j)$

Let us now consider an open set $V$ with local homeomorphism $f_{|V}$. In this setting each element $y_i\in f_{|V}(V)$ will be of the form $f(x_i)$ for a unique $x_i$. Moreover, if we consider two open sets $V_1,V_2\subseteq V$, then to each map $V_1\rightarrow V_2$ in $X$, with associated bundle map $A(V_2)\rightarrow A(V_1)$, there corresponds a map $f_{|V}V_1\rightarrow f_{|V}(V_2)$ in $Y$. Therefore evaluating $A(-)\otimes_{X}Hom_Y(-, f(-))$ at the open set $f_{|V}(V)\subset Y$ we get, for each $V_i\subseteq V$ the equivalence classes $[A(V_i)\times_XHom_Y(f_{|V}(V), f_{|V}(V_i))]$ where $A(V_j)\times_XHom_Y(f_{|V}(V), f_{|V}(V_j))\simeq A(V_k)\times_XHom_Y(f_{|V}(V), f_{|V}(V_k))$ iff there exists a map $f_{|V}(V_j)\rightarrow f_{|V}(V_k)$ (which combines giving $f_{|V}(V)\rightarrow f_{|V}(V_k)$)  and corresponding bundle map $A(V_k)\rightarrow A(V_j)$ (which combine giving $A(V_k)\rightarrow A(V)$) given by the map $V_j\rightarrow V_k$ ( which combined gives $V\rightarrow V_k$) in $X$. A moment of thought reveals that such an equivalence class is nothing but $p_A^{-1}(V)$ (the fibre of $p_A$ at $V$) with associated fibre maps induced from the base maps.

We will now denote such an equivalence class by $[A(V)\times_XHom_Y(f_{|V}(V), f_{|V}(V))]$, since obviously in each equivalence class there will be the element $[A(V)\times_XHom_Y(f_{|V}(V), f_{|V}(V))]$

We apply the same procedure for each open set $V_i\subset X$. We can obtain two cases: \begin{enumerate}
\item [i)] $f_{|V_i}(V_i)=U\neq f_{V}(V)$. In that case we simply get an independent equivalence class for $U$.

\item [ii)] If $f_{|V_i}(V_i)=U=f_{V}(V)$ and there is no map $i:V\rightarrow V_i$ in $X$ then, in this case, we obtain for $U$ two distinct equivalence classes $[A(V_i)\times_XHom_Y(f_{|V_i}(V_i), f_{|V_i}(V_i))]$ and $[A(V)\times_XHom_Y(f_{|V}(V), f_{|V}(V))]$. 
\end{enumerate}
Thus the sheaf $A(-)\otimes_X(_fY^{\bullet})$ is defined for each open set $f_V(V)\subset Y$ as the set $[A(V)\times_XHom_Y(f_{|V}(V), f_{|V}(V))]\simeq A(V)))$, and for each map $f_{V^{'}}(V^{'})\rightarrow f_V(V)$ in $Y$ (with associated map $V^{'}\rightarrow V$ in $X$), the corresponding maps $[A(V)\times_XHom_Y(f_{|V}(V), f_{|V}(V))]\simeq A(V)\rightarrow [A(V^{'})\times_XHom_Y(f_{|V^{'}}(V)^{'}, f_{|V^{'}}(V^{'}))]\simeq A(V^{'})$

This is precisely what the etal\'e bundle $f\circ p_A:A\rightarrow Y$ is.

\end{Proof}

\chapter{Lecture 17}

In this lecture we will define the topos analogue of probabilities. To this end we will have to introduce a new topos namely the topos of sheaves over the category $\mv(\mh)\times (0,1)_L$

\section{Topos Reformulation of Probabilities}
We will now delineate how probabilities can be described in terms of truth values in a given topos. In such a formulation logical concepts are seen as fundamental, while probabilities become derived concepts. \\
This view is very useful in quantum theory, in as much that it overcomes the problems related to the relative frequency interpretation of probabilities. \\
In particular, in the topos approach truth values can be assigned to any proposition in quantum theory and, since probabilities are defined in terms of truth values, probabilities can always be assigned even in the context of closed systems.\\
For the interested reader, the full analysis of these ideas can be found in \cite{probabilities}.

\section{General Definition of Probabilities in the Language of Topos Theory}\label{sec:prob}
In this section we will outline the general way in which probabilities can be described in a topos. How this general definition will apply to classical physics and quantum physics will be described in subsequent sections.\\
What we will do in this section is to define a possible topos in which probabilities can be expressed in terms of truth values in that topos. What we are looking for is a way to combine on the one hand probabilities, which are described by numbers in the interval $[0,1]$ and, on the other hand, truth values expressed in a topos $\tau$, which are defined as global elements of the sub-object classifier $\Omega^{\tau}$. \\
What this amounts to is to find a topos $\tau$ such that the following holds
\be\label{eq:tau}
[0,1]\simeq \Gamma\Omega^{\tau}
\ee
If we are able to achieve this, it will mean that we have found a bijective correspondence between probabilities and truth values in a topos $\tau$.\\
The way in which the equivalence in \ref{eq:tau} is achieved is as follows:\\\\
the first step is to define the topological space $(0,1)$, whose open sets are the intervals $(0,r)$ for $0\leq r\leq 1$. This topological space is denoted as $(0,1)_L$ and the collection of open sets as $\mathcal{O}((0,1)_L)$. This is a category under inclusion. 
One can then define a bijection
\ba
\beta:[0,1]&\rightarrow& \mathcal{O}((0,1)_L)\\
r&\mapsto&(0,r)
\ea
The strategy is then to associate open sets $(0,r)$ with truth values in a certain topos. Such open sets will then, in turn, be associated with the probability $r$. \\
In this way we will not be loosing anything by considering open sets $(0,r)$ which don't actually contain $r$ instead of closed sets $(0, r]$ or $[0,r]$. Moreover, if we were to consider either $(0, r]$ or $[0,r]$ we would run into troubles. In fact, if we had chosen $[0,r]$ we would have obtained situations in which all propositions are totally true with probability zero, even the totally false proposition (more details later on). If instead we had chosen $(0, r]$ we wouldn't have obtained a topology, since these sets do not close under arbitrary unions.\\\\

Having chosen the topology on our space, we then know from topos theory that, for any topological space $X$, there is the following isomorphisms of Heyting algebras
\be
\mathcal{O}(X)\simeq\Gamma\Omega^{Sh(X)}
\ee
where $Sh(X)$ identifies the topos of sheaves over the topological space $X$. Therefore, in order to obtain equation \ref{eq:tau}, a possible topos which could be used is $\tau=Sh((0,1)_L)$.\\
In such a topos the isomorphisms we are after is 
\ba
\sigma:\mathcal{O}((0,1)_L)&\rightarrow& \Gamma\Omega^{Sh((0,1)_L)}\\
(0, p)&\mapsto&\sigma(0,p):=(l(p):\ps{1}\rightarrow \Omega^{Sh((0,1)_L)}
\ea
such that for each stage/context $(0,r)\in \mathcal{O}((0,1)_L)$ we have the truth value
\be
l(p)_{(0,r)}=\begin{cases}\{(0,r^{'})\in\mathcal{O}((0,1)_L)|r^{'}\leq r\}=\Omega^{Sh((0,1)_L)}_{(0,r)}&\text{ if } p\geq r\\
\{(0,r^{'})\in\mathcal{O}((0,1)_L)|r^{'}\leq p\}&\text{ if }0< p < r\\
\emptyset &\text{ if } p=0
\end{cases}
\ee
If we use the isomorphisms $\{(0,r^{'})\in \mathcal{O}(0,1)_L|r^{'}\leq r\}\simeq [0,r]$ the we can write the above as
\be
l(p)_{(0,r)}=\begin{cases}[0,r]=\Omega^{Sh((0,1)_L)}_{(0,r)}&\text{ if } p\geq r\\
[0,p]&\text{ if }0< p < r\\
\emptyset &\text{ if } p=0
\end{cases}
\ee

As can be seen from the definition, $l$ is nothing but the combination of $\beta$ and $\sigma$. Thus $l:[0,1]\rightarrow\Gamma\Omega^{Sh((0,1)_L)}$ is a bijection between probabilities and truth values:
\ba
[0,1]&\xrightarrow{\beta}&\mathcal{O}((0,1)_L)\xrightarrow{\sigma}\Gamma\Omega^{Sh((0,1)_L)}\\
p&\mapsto&(0,p)\mapsto l(p)
\ea
It can also be shown that $l$ is an order preserving isomorphisms. This implies that probabilities are faithfully represented as truth values in the topos $Sh((0,1)_L)$.

\section{Example for Classical Probability Theory} 
In this section we will apply the topos theoretic description of probabilities defined in the previous section to a classical system. 
As a first step we will define truth values of propositions regarding a classical system and, then, show how such truth values are related to classical probabilities. \\
Let us consider a proposition $(A\in \Delta)$ meaning that the value of the quantity $A$ lies in $\Delta$. We want to define the truth value of such a proposition with respect to a given state $s\in X$, where $X$ is the state space. Recall that in classical theory, the truth value of the above proposition in the state $s$ is given by
$$v(A\in \Delta;s) :=\begin{cases}1\text{ iff }s\in  A^{-1}(\Delta)\\0 \text{ otherwise }\end{cases}$$
where $A^{-1}(\Delta)\subseteq X$ is the subset of the state space for which the proposition $(A\in \Delta)$ is true.\\
Another way of defining truth values is through the truth object $\mathbb{T}^s$ which is state dependent. The definition of the truth object was defined in previous lecture, but we will report here for clarity reasons: \\\\
for each state $s$, we define the set
$$\mathbb{T}^s :=\{S\subseteq X|s\in S\}$$ 
Since $(s\in A^{-1}(\Delta))$ iff $ A^{-1}(\Delta)\in \mathbb{T}^s$, we can now write the truth value above in the following equivalent way
$$v(A\in \Delta;s):=\begin{cases} 1 \text{ iff }A^{-1}(\Delta)\in \mathbb{T}^s\\ 0\text{ otherwise }\end{cases}$$
It follows that the truth value $v(A\in \Delta;s)$ is equivalent to the truth value of the mathematical statement $[[A^{-1}(\Delta)\in \mathbb{T}^s]]$.\\
We would now  like to relate the above defined truth values to probability measures. To this end we associate to the space $X$ the probability measure 
\be
\mu:Sub(X)\rightarrow [0,1]
\ee
where $Sub(X)$ denotes the measurable subsets of $X$.\\
It is now possible to define a measure dependent truth object as follows: 
\be
\mathbb{T}^{\mu}_r:=\{S\subseteq X|\mu(S)\geq r\}
\ee
for all $r\in(0,1]$\footnote{Note that we will not include the value r=0. The reason, as will be explained later on, is to avoid obtaining situations in which all propositions are totally true with probability zero.}. What the above truth object defines, is all those proposition which are true with probability equal or greater than $r$.\\
So far, the objects we have defined are simply sets. However, we would like to find their analogue in the topos $Sh(\mathcal{O}(0,1)_L)$, which was used to define probabilities, so that truth values and probabilities have a common ground in which to be compared. Thus what we are looking for is a way of expressing both truth object $\mathbb{T}^{\mu}$ and the truth values $v(A\in \Delta;s)$ as objects in $Sh(\mathcal{O}(0,1)_L)$.\\
 To this end we perform the following ``translations":\\\\
 First of all we map the state space $X$ to the constant sheaf
\be
X\rightarrow \underline{X}
\ee
What this means is that for all $(0,r)\in\mathcal{O}((0,1)_L)$
\be
\underline{X}(0,r)=X
\ee
\\
For any measurable set $S\in Sub(X)$ we define the constant sheaf $S\rightarrow \underline{S}$; $\underline{S}(0,r)=S$. Thus obtaining the map
\ba
\Delta:Sub(X)&\rightarrow &Sub_{Sh(\mathcal{O}((0,1)_L))}(\underline{X})\\
S&\mapsto&\underline{S}
\ea
The analogue of the truth object in $Sh (\mathcal{O}((0,1)_L))$ is then: $\underline{\mathbb{T}}^{\mu}_{(0,r)}:=\{\underline{S}\subseteq \underline{X}|\mu(\underline{S})_{(0,r)}\geq r\}$ for all $(0,r)\in\mathcal{O}((0,1)_L)$.\\
Now that we have defined the analogue of the relevant object in the topos $Sh(\mathcal{O}((0,1)_L))$ we can define the truth value of a proposition $(A\in\Delta:=S)\subseteq X$ as a global section of $\uom^{Sh(\mathcal{O}((0,1)_L))}$:
\ba
[[\underline{S}\in \underline{\mathbb{T}}^{\mu}]](0,r)&:=&\{(0,r^{'})\leq (0,r)| \underline{S}_{(0,r^{'})}\in\ps{\mathbb{T}}^{\mu}_{(0,r^{'})}\}\\
&=& \{(0,r^{'})\leq (0,r)|\mu(S)\geq r^{'}\}\\
&=&[0,\mu(S)]\cap(0,r]\\
&=&[0,min\{\mu(S),r\}]
\ea
which is an element of $\underline{\Omega}^{Sh(\mathcal{O}((0,1)_L))}(0,r)$. Thus, globally, we get 
$$[[\underline{S}\in \underline{\mathbb{T}}^{\mu}]]\in\Gamma(\underline{\Omega}^{Sh(\mathcal{O}((0,1)_L})$$
It is easy to see that for any set $S\subseteq X$ the value $\mu(S)$ can be recovered by the valuation $[[\underline{S}\in \underline{\mathbb{T}}^{\mu}]]\in\Gamma(\underline{\Omega}^{Sh(\mathcal{O}((0,1)_L))})$. This means that probabilities can be replaced by  truth values without any information being lost. \\
In particular, in the topos $Sh(\mathcal{O}((0,1)_L))$ the probability measure $\mu:Sub(X)\rightarrow [0,1]$ can be uniquely expressed through the map $\epsilon^{\mu}$ (defined below) in the sense that there exists a bijective correspondence between $\mu$ and $\epsilon^{\mu}$, i.e. for each $\mu$ we get:
\ba
\epsilon^{\mu}:Sub_{Sh(\mathcal{O}((0,1)_L))}(\underline{X})&\rightarrow&  \Gamma(\underline{\Omega}^{Sh(\mathcal{O}((0,1)_L))})\\
\underline{S}&\mapsto&[[\underline{S}\in \underline{\mathbb{T}}^{\mu}]]
\ea 
\\\\
What this means is that we have effectively replaced the probability measure $\mu:Sub(X)\rightarrow [0,1]$ with the collection of truth values $\Gamma(\underline{\Omega}^{Sh(\mathcal{O}((0,1)_L))})$. \\
Since $\Gamma(\underline{\Omega}^{Sh(\mathcal{O}((0,1)_L))})$ is a Heyting algebra, probabilities are now interpreted in the context of intuitionistic logic.\\\\
What has been done so far can be summarised by the following commutative diagram:
\[\xymatrix{
Sub(X)\ar[rr]^{\mu}\ar[dd]^{\Delta}&&[0,1]\ar[dd]^l\\
&&\\
Sub_{Sh(\mathcal{O}((0,1)_L))}(\underline{X})\ar[rr]^{\epsilon^{\mu}}&&\Gamma\uom^{Sh(\mathcal{O}((0,1)_L))}\\
}\]
By chasing the diagram around we now have the following equalities
\ba
[l\circ \mu(S)](r)&=&l(\mu(S))(r)=\{(0,r^{'})\subseteq (0,r)|\mu(S)\geq r^{'}\}\\
&=&[[\ps{S}\in \ps{T}^{\mu}]](r)=\epsilon^{\mu}(\Delta(S))(r)
\ea
It was also shown in \cite{probabilities} that it is possible to define the logical analogue of the $\sigma$-additivity of the measure $\mu$. We recall that the $\sigma$-additivity of a measure $\mu$ is defined as follows:\\\\
for any countable family $(S_i)_{i\in N}$ of pairwise disjoint, measurable subsets of $X$ (the space where the measure is defined), we have
\be\label{equ:measurec}
\mu(\bigcup_i S_i)=\sum_i\mu(S_i)
\ee

Let us now consider a countable increasing family $(\tilde{S}_i)_{i\in N}$ of measurable subsets of $X$ defined as follows
\be
\tilde{S}_i:=\tilde{S}_{i-1}\cup \tilde{S}_i\text{ for } i>0\text{ while } \tilde{S}_0:=S_0
\ee
Then equation \ref{equ:measurec} becomes 
\be\label{equ:mc}
\mu(\bigvee_i\tilde{S}_i)=\mu(\bigcup_i\tilde{S}_i)=\sup_{i}\mu(\tilde{S}_i)=\bigvee_i\mu(\tilde{S}_i)
\ee
i.e., $\mu$ preserves countable joins (suprema).

The logical analogue of this is as follows: \\\\
Given a countable increasing family of measurable subsets $(\tilde{S}_i)_{i\in N}$ of $X$ we then have
\be
(\epsilon^{\mu}\circ\Delta)(\bigvee_i\tilde{S}_i)=(l\circ\mu)(\bigvee_i(\tilde{S}_i)
\ee
We know that $\mu$ preserves countable joins, we now want to show that $l$ does. To this end we will explicitly compute, for each context $(0,r)$ how $l$ acts on  countable unions. Since $\mu:sub(X)\rightarrow [0,1]$ then $\mu(\bigvee_i(\tilde{S}_i))\stackrel{\ref{equ:mc}}{=}\bigvee_i(\mu(\tilde{S}_i))=\bigvee_ip_i$ where $p_{i\in I}$ is some family of real numbers in the interval $[0,1]$. Thus the question is what is $l(\bigvee_ip_i)$ as computed for each context $(0,r)$? 

By applying the definition we obtain
\be
l(\bigvee_ip_i)_{(0,r)}:=\begin{cases} \uom_{(0,r)}^{Sh((0,1)_L)}\text{  if  }\bigvee_ip_i\geq r\\
\{(0,r^{'})\in\mathcal{O}((0,1)_L)|r^{'}\leq \bigvee_ip_i\}\text{  if  }0<\bigvee_ip_i<r\\
\emptyset \text{  if  } \bigvee_ip_i=0
\end{cases}
\ee
Or equivalently
\be
l(\bigvee_ip_i)_{(0,r)}:=\begin{cases}[0,r]= \uom_{(0,r)}^{Sh((0,1)_L)}\text{  if  }\bigvee_ip_i\geq r\\
[0,\bigvee_ip_i]\text{  if  }0<\bigvee_ip_i<r\\
\emptyset \text{  if  } \bigvee_ip_i=0
\end{cases}
\ee
On the other hand if we computed $\bigvee_il(p_i)_{(0,r)}$ for all contexts $(0,r)\in\mathcal{O}((0,1)_L)$ we would obtain
\be
\bigvee_il(p_i)_{(0,r)}=\bigvee_i\begin{cases}
\uom_{(0,r)}^{Sh((0,1)_L)}\text{  if  }p_i\geq r\\
\{(0,r^{'})\in\mathcal{O}((0,1)_L)|r^{'}\leq p_i\}\text{  if  }0<p_i<r\\
\emptyset \text{  if  } p_i=0
\end{cases}
\ee
Or equivalently
\be
\bigvee_il(p_i)_{(0,r)}=\bigvee_i\begin{cases}
[0,r]=\uom_{(0,r)}^{Sh((0,1)_L)}\text{  if  }p_i\geq r\\
[0,p_i]\text{  if  }0<p_i<r\\
\emptyset \text{  if  } p_i=0
\end{cases}
\ee
Thus it follows that $\bigvee_il(p_i)_{(0,r)}=l(\bigvee_ip_i)_{(0,r)}$ for all contexts $(0,r)\in\mathcal{O}((0,1)_L)$ thus
\be
(l\circ\mu)(\bigvee_i(\tilde{S}_i))=\bigvee_i(l\circ\mu)(\tilde{S}_i)
\ee
It follows that the logical description of the $\sigma$-additivity is
\be
(\epsilon^{\mu}\circ\Delta)(\bigvee_i\tilde{S}_i)=\bigvee_i(l\circ\mu)(\tilde{S}_i)
\ee
So far we have only described the classical aspects of the topos interpretation of probability. However, it is possible to extend such ideas to the quantum case. 
\section{Quantum Case}
We now would like to show how the interpretation of probabilities as truth values in an appropriate topos can be applied to the quantum case. So far, in the literature it was shown how, within the topos $Sh(\mv(\mh))$\footnote{Although in the previous lectures we have used the topos $Set^{\mv(\mh)}$, we know that given the Alexandrov topology on $\mv(\mh)$ then $Sh(\mv(\mh))\simeq Sets^{\mv(\mh)}$. So, in the following we will alternate freely between sheaves and presheaves.} (the topos formed by the collection of sheaves on the poset $\mv(\mh)$ of all abelian von Neumann algebras of a given Hilbert space $\mh$) the truth value of a proposition $A\in \Delta$, given a state $\psi$, is defined as\footnote{Note that here we have added a suffix \emph{org} to indicate original since we will now change the formulation of truth objects. }
\be\label{equ:tru}
v(A\in \Delta,\psi)(V)=[[\delta(\hat{E}[A\in \Delta])\in ^{org}\underline{\mathbb{T}}^{|\psi\rangle}]]_V:=\{V^{'}\subseteq V|\langle\psi|\delta(\hat{E}[A\in \Delta])_{V^{'}}|\psi\rangle=1\}\in \uom^{Sh(\mv(\mh))}_{V}
\ee
where, for the quantum case, the truth object is the sheaf $^{org}\underline{\mathbb{T}}^{|\psi\rangle}$ which is defined, for each context $V\in \mv(\mh)$ as the set
\be\label{equ:tobj}
^{org}\underline{\mathbb{T}}^{|\psi\rangle}_V:=\{\hat{\alpha}\in P(V)|\langle\psi|\hat{\alpha}|\psi\rangle=1\}
\ee
This definition of truth value works perfectly well when we consider a pure state $|\psi\rangle$. However, if we consider mixed states with associate density matrix say $\rho=\sum_{i=1}^Nr_i|\psi\rangle\langle\psi|$, then the obvious analogue of \ref{equ:tru}, namely
\be\label{equ:tru1} 
v(A\in \Delta,\rho)(V)=[[\delta(\hat{E}[A\in \Delta])\in ^{org}\underline{\mathbb{T}}^{\rho}]]_V:=\{V^{'}\subseteq V|tr(\rho\hat{E}[A\in \Delta])_{V^{'}}=1\}\in \uom^{Sh(\mv(\mh))}_{V}
\ee
doesn't work since it does not separate the states $\rho$. \\
In this case the truth object is 
\be\label{equ:tobj1}
^{org}\underline{\mathbb{T}}^{\rho}_V:=\{\hat{\alpha}\in P(V)|tr(\rho\hat{\alpha})=1\}
\ee
To see why this is the case consider our usual example for $\mh=\Cl^4$ and consider two density matrices

\[\rho_1=\begin{pmatrix} 3/4& 0& 0&0\\	
0&1/4&0 &0\\
0&0&0&0\\
0&0&0&0
  \end{pmatrix}\;\;\;\rho=\begin{pmatrix}3/5&0&0&0\\
  0&2/5&0&0\\
  0&0&0&0\\
  0&0&0&0
  \end{pmatrix}
  \]
Then the condition for $tr(\rho\hat{P})=1$ is that $\hat{P}\geq \hat{P}_1+\hat{P}_2$ where $\hat{P}_1=diag(1,0,0,0) $ and $\hat{P}_2=diag(0,1,0,0)$. Similarly the condition for $tr(\rho_1\hat{P})=1$ is that $\hat{P}\geq \hat{P}_1+\hat{P}_2$. It follows that $\rho$ and $\rho_1$ have the same support , namely $\hat{P}_1+\hat{P}_2$. However, from the definition of truth values, the element $v(A\in \Delta,\rho)(V)$ depends only on the support of $\rho$, and $\rho_1$, thus it is not possible to distinguish the two. 

On the other hand, truth values that do separate the density matrices $\rho$ are the one parameter family of truth values
\be\label{equ:measure}
v(A\in \Delta,\rho)^r(V):=\{V^{'}\subseteq V|tr(\rho\hat{E}[A\in \Delta])_{V^{'}}\geq r\}\in \uom^{Sh(\mv(\mh))}_{V}
\ee
for $r\in (0,1]$. As we can see from the above formula, such truth values introduce truth probabilities different from one. In particular, 
what \ref{equ:measure} expresses is the truth value for the proposition $A\in \Delta$ to be true with probability at least $r$.\\
However, the problem with such one parameter family is that it represents a collection of objects, one for each $r$. So we need to group, somehow, these objects together and show that such a family can be considered as a single object. This can be done by enlarging the topos $Sh(\mv(\mh))_{V}$. \\\\\\
So the main steps needed, in order to correctly express probabilities as truth values of quantum propositions in an appropriate topos are: 
\begin{itemize}
\item [1)] Define the truth values which separate the density matrices so as to be an object in the topos. This, as was hinted to above, can be done by enlarging the topos we are working with.
\item [2)] Define a correct probability measure on the state space $\us$. 
\item [3)] Find a way to relate 1) and 2).
\end{itemize}
\subsection{Measure on the Topos State Space}
We are now interested in constructing a measure $\mu$ on the state space $\us$. As such it should some how define a size or weight for each sub-object of $\us$. However we will not consider all sub-objects of $\us$ but only a collection of them which we define as ``measurable". These collection of ``measurable"sub-objects of $\us$ will be the collection of all clopen sub-objects of $\us$. The reason being that we are yet again taking ideas form classical physics. There we know that a proposition is defined as a measurable subset of the state space. Hence in the topos formulation of quantum theory we ascribe the status of ``measurable" to all clopen sub-objects of $\us$, i.e., all propositions. It could be the case that one can consider a larger collection of sub-objects, but surely $Sub_{cl}(\us)$ is the minimal such collection. 
In any case we will define a probability measure on $Sub_{cl}(\us)$ such that $\us$ will have measure 1 and $\ps{0}$ will have measure zero. 

As we will see, the measure defined on $\us$ will be such that there exists a bijective correspondence between such a measure and states of the quantum system, i.e. $\mu\mapsto\rho_{\mu}$ is a bijection\footnote{It should be notes at this point that the correspondence between measures and state is present also in the context of classical physics. In fact in that case, as we have seen in previous lectures, a pure state (i.e. a point) is identified with the Dirac measure, while a general state is simply a probability measure on the state space. Such a probability measure assigns a number in the interval $[0,1]$ called the weight and that in a sense tells you the `size' of the measurable set.} (for $\rho$ a density matrix).

So the question is : how is this measure defined?  A suitable measure on $\us$ is defined as follows: \\\\
for each density matrix $\rho$ we have
\ba\label{ali:measure}
\mu_{\rho}: Sub_{cl}(\us)&\rightarrow& \Gamma\underline{[0,1]}^{\succeq}\\
\underline{S}=(S_V)_{V\in\mv(\mh)}&\mapsto&\mu_{\rho}(\underline{S}):=(tr(\rho\hat{P}_{\ps{S}_V}))_{V\in\mv(\mh)}
\ea
where $\underline{[0,1]}^{\succeq}$ is the sheaf of order-reversing functions from $\mv(\mh)$ to $[0,1]$. Recall that $\hat{P}_{S_V}=\mathfrak{S}^{-1}\ps{S}_V$ where 
\begin{equation}\label{equ:smap}
\mathfrak{S}:P(V)\rightarrow \Sub_{cl}(\Sig)_V\hspace{.2in}
\end{equation}

The detailed definition of the sheaf $\underline{[0,1]}^{\succeq}$ is as follows
\begin{Definition}
The presheaf $\underline{[0,1]}^{\succeq}:\mv(\mh)\rightarrow Sets$ is defined on:
\begin{enumerate}
\item Objects: For each context $V$ we obtain $\underline{[0,1]}^{\succeq}_V:=\{f:\downarrow V\rightarrow [0,1]|f \text{ is order reversing} \}$ which is a set of order-reversing functions.

\item Morphisms: given $i:V^{'}\subseteq V$ then the corresponding presheaf map is 
\ba
\underline{[0,1]}^{\succeq}(i_{V^{'}V}):\underline{[0,1]}^{\succeq}_V&\rightarrow& \underline{[0,1]}^{\succeq}_{V^{'}}\\
f&\mapsto&f_{|\downarrow V^{'}}
\ea

\end{enumerate}

\end{Definition}

Thus the measure $\mu_{\rho}$ defined in \ref{ali:measure} takes a clopen sub-object of $\us$ and defines an order reversing function 
\be
\mu_{\rho}(\underline{S}):\mv(\mh)\rightarrow [0,1]
\ee
such that for each $V\in \mv(\mh)$, $\mu_{\rho}(\underline{S})(V)$ defined the expectation value, with respect to $\rho$, of the projection operators $\hat{P}_{\ps{S}_V}$ to which the sub-object $\ps{S}$ corresponds to in $V$. Therefore, given two contexts $V^{'}\subseteq V$, since $\hat{P}_{S_{V^{'}}}\geq\hat{P}_{S_V}$ then $tr(\rho\hat{P}_{S_{V^{'}}})\geq tr(\rho\hat{P}_{S_V})$. In detail we have, for each $V\in \mv(\mh)$
\ba
\mu_{\rho}(\underline{S})(V):\ps{1}_V&\rightarrow &\ps{[0,1]}^{\succeq}\\
\{*\}&\mapsto&\mu_{\rho}(\underline{S})(V)(\{*\}):=\mu_{\rho}(\underline{S})(V)
\ea
where $\mu_{\rho}(\underline{S})(V):\downarrow V\rightarrow [0,1]$.

It is worth mentioning some properties of $\mu_{\rho}$
\begin{enumerate}
\item For each $V\in \mv(\mh)$ then $\mu_{\rho}(\ps{0})(V)=tr(\rho\hat{0})=0$ therefore globally
\be
\mu_{\rho}(\ps{0})=(0)_{V\in \mv(\mh)}
\ee
\item  For each $V\in \mv(\mh)$ then $\mu_{\rho}\us(V)=tr(\rho\hat{1})=1$ therefore globally
\be
\mu_{\rho}(\us)=(1)_{V\in \mv(\mh)}
\ee
\item Given two disjoint clopen sub-objects $\ps{S}$ and $\ps{T}$ of $\us$ we then have for each context $V\in \mv(\mh)$ that
\ba\label{ali:additive}
\mu_{\rho}(\ps{S}\vee\ps{T})(V)&=&\mu_{\rho}((\ps{S}\vee\ps{T})_V)\\
&=&\mu_{\rho}(\ps{S}_V\cup\ps{T}_V)\\
&=&tr(\rho(\hat{P}_{\ps{S}_V}\vee\hat{P}_{\ps{T}_V}))\\
&=&tr(\rho(\hat{P}_{\ps{S}_V}+\hat{P}_{\ps{T}_V}))\\
&=&tr(\rho(\hat{P}_{\ps{S}_V})+tr(\rho\hat{P}_{\ps{T}_V}))\\
&=&\mu_{\rho}(\ps{S})(V)+\mu_{\rho}(\ps{T})(V)
\ea
It follows globally that
\be
\mu_{\rho}(\ps{S}\vee\ps{T})=\mu_{\rho}(\ps{S})(V)+\mu_{\rho}(\ps{T})
\ee
This is the property of finite additivity.
\item A generalisation of the above property, i.e. a property of which \ref{ali:additive} is a special case is the following: given two arbitrary clopen sub-objects $\ps{S}$ and $\ps{T}$ of $\us$, for all $V\in \mv(\mh)$ we obtain
\ba
\mu_{\rho}(\ps{S}\vee\ps{T})(V)+\mu_{\rho}(\ps{S}\wedge\ps{T})(V)&=&tr(\rho(\hat{P}_{\ps{S}_V}\vee\hat{P}_{\ps{T}_V}))+tr(\rho(\hat{P}_{\ps{S}_V}\wedge\hat{P}_{\ps{T}_V}))\\
&=&tr(\rho(\hat{P}_{\ps{S}_V}\vee\hat{P}_{\ps{T}_V}+\hat{P}_{\ps{S}_V}\wedge\hat{P}_{\ps{T}_V}))\\
&=&tr(\rho(\hat{P}_{\ps{S}_V}+\hat{P}_{\ps{T}_V}))\\
&=&tr(\rho(\hat{P}_{\ps{S}_V}))+tr(\rho(\hat{P}_{\ps{T}_V}))\\
&=&\mu_{\rho}(\ps{S})(V)+\mu_{\rho}(\ps{T})(V)
\ea
which globally gives
\be
\mu_{\rho}(\ps{S}\vee\ps{T})+\mu_{\rho}(\ps{S}\wedge\ps{T})=\mu_{\rho}(\ps{S})+\mu_{\rho}(\ps{T})
\ee
\item Because the collection of clopen sub-objects of $\us$ forms a Heyting algebra and not a Boolean algebra it follows that
since 
\be
\ps{S}\vee\neg\ps{S}<\us
\ee
then
\be
\mu_{\rho}(\ps{S}\vee\neg\ps{S})<1_{V\in\mv(\mh)}
\ee
\item The closest one can get to $\sigma$-additivity is the following: given a countable infinite family $(\ps{S}_i)_{i\in I}$ for open sub-objects such that for each context $V\in \mv(\mh)$ the clopen subsets $(\ps{S}_i)_V\subseteq \us_V$ (for all $i\in I$) are pairwise disjoint then we have
\be
(\mu_{\rho})\bigvee_{i\in I}\ps{S}_i))(V)=tr(\rho (\bigvee_{i\in I}\hat{P}_{\ps{S}_i}))=tr(\rho(\sum_{i\in I}\hat{P}_{\ps{S}_i}))=\sum_{i\in I}tr(\rho\hat{P}_{\ps{S}_i})=\sum_{i\in I}\mu_{\rho}(\ps{S}_i)(V)
\ee
\end{enumerate}
\subsection{Deriving a State from a Measure}
So far we have defined a particular measure given a quantum state $\rho$. We are now interested in doing the opposite, since in the end we want to show that there is a bijection between the two. Thus we would like to first give an abstract characterisation of a measure with no reference to a state and then show how such a measure can uniquely determine a state $\rho$.

Thus the abstract characterisation of a measure is as follows
\begin{Definition}\label{Def:measure}
A measure $\mu$ on the state space $\us$ is a map
\ba
\mu:Sub_{cl}(\us)\us&\rightarrow&\Gamma\ps{[0,1]^{\geq}}\\
\ps{S}=(\ps{S}_V)V\in\mv(\mh)&\mapsto&(\mu(\ps{S}_V))_{V\in\mv(\mh)}
\ea
such that the following condition hold
\begin{enumerate}
\item $\mu(\us)=1_{V(\mh)}$
\item for all $\ps{S}$ and $\ps{T}$ in $Sub_{cl}(\us)$ then $\mu(\ps{S}\vee\ps{T})+\mu(\ps{S}\wedge\ps{T})=\mu(\ps{S})+\mu(\ps{T})$

\end{enumerate}
\end{Definition}

Given such an abstract definition of measure all properties defined in the previous section follow. We now want to show that each such measure $\mu$ uniquely determines a state $\rho_{\mu}$. Since above we show that each state $\rho$ determined a measure, then combining the two result we end up with a bijective correspondence between the space of states $\rho$ and the space of measures $\mu$ on $\us$.
\begin{Theorem}
Given a measure $\mu$ as defined above then there exist a unique state $\rho$ ``associated" to that measure.
\end{Theorem}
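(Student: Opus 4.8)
The plan is to reconstruct the state $\rho$ from the measure $\mu$ by a local-to-global argument that mirrors, in reverse, the construction of $\mu_\rho$. First I would exploit the fact that a measure $\mu:Sub_{cl}(\us)\rightarrow\Gamma\ps{[0,1]^{\geq}}$ assigns to each clopen subobject an order-reversing function, and evaluate everything at a fixed maximal context $V$. Recall from equation \ref{equ:smap} that the map $\mathfrak{S}:P(V)\rightarrow Sub_{cl}(\us)_V$ is a lattice isomorphism between the projections in $V$ and the clopen subsets of $\us_V$. Hence for each $V$ the assignment $\hat{P}\mapsto\mu(\mathfrak{S}(\hat{P}))(V)$ defines a map $m_V:P(V)\rightarrow[0,1]$ on the projection lattice of the abelian von Neumann algebra $V$. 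The two axioms of Definition \ref{Def:measure} translate directly: the normalisation $\mu(\us)=1$ gives $m_V(\hat{1})=1$, and the modularity condition $\mu(\ps{S}\vee\ps{T})+\mu(\ps{S}\wedge\ps{T})=\mu(\ps{S})+\mu(\ps{T})$, together with the fact that $\mathfrak{S}$ preserves $\wedge$ and $\vee$, gives $m_V(\hat{P}\vee\hat{Q})+m_V(\hat{P}\wedge\hat{Q})=m_V(\hat{P})+m_V(\hat{Q})$. In particular, for orthogonal projections this reduces to finite additivity $m_V(\hat{P}+\hat{Q})=m_V(\hat{P})+m_V(\hat{Q})$.

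The key step is then to invoke Gleason's theorem. Since $m_V$ is a finitely additive, normalised measure on the projection lattice $P(V)$, and since we are working with $\dim\mh>2$ (which is already our standing hypothesis from the Kochen--Specker discussion), each such $m_V$ extends to a positive normalised linear functional on $V_{sa}$, and hence is represented by a density matrix: there exists $\rho_V$ with $m_V(\hat{P})=tr(\rho_V\hat{P})$ for all $\hat{P}\in P(V)$. The delicate point is that Gleason's theorem in its classical form applies to the full projection lattice $P(\mh)$, not to that of a single abelian subalgebra. Therefore I would not apply it contextwise and then try to glue; rather I would assemble the local data first. The naturality/order-reversing compatibility of $\mu$ under the presheaf restriction maps $\ps{[0,1]^{\geq}}(i_{V'V})$ forces the local functionals $m_{V'}$ and $m_V$ to agree on $P(V')\subseteq P(V)$ whenever $V'\subseteq V$. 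This coherence means the family $\{m_V\}_{V\in\mv(\mh)}$ defines a single noncontextual, finitely additive probability assignment on $\bigcup_V P(V)=P(\mh)$.

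Having produced a globally consistent measure on all of $P(\mh)$, I would then apply Gleason's theorem once, at the global level, to obtain a unique density matrix $\rho_\mu$ on $\mh$ such that $m_V(\hat{P})=tr(\rho_\mu\hat{P})$ for every $V$ and every $\hat{P}\in P(V)$. Unwinding the definition of $m_V$ through $\mathfrak{S}$ recovers exactly $\mu(\ps{S})(V)=tr(\rho_\mu\hat{P}_{\ps{S}_V})$, which is the defining formula \ref{ali:measure} of $\mu_{\rho_\mu}$, so $\mu=\mu_{\rho_\mu}$. Uniqueness of $\rho_\mu$ follows from the uniqueness clause of Gleason's theorem together with the injectivity already established for the state-to-measure direction: if $\rho_1$ and $\rho_2$ both induce $\mu$, then $tr(\rho_1\hat{P})=tr(\rho_2\hat{P})$ for all projections, whence $\rho_1=\rho_2$.

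The main obstacle, and the step I would treat most carefully, is the passage from the collection of contextual measures $\{m_V\}$ to a genuine Gleason measure on $P(\mh)$: one must verify that the restriction-compatibility enforced by the presheaf structure of $\ps{[0,1]^{\geq}}$ is precisely the noncontextuality hypothesis that Gleason requires, namely that the value $m_V(\hat{P})$ depends only on $\hat{P}$ and not on the context $V$ in which it is viewed. This is exactly where the order-reversing nature of the target sheaf and the definition of its morphisms do the essential work, and it is worth checking that additivity for a \emph{commuting} triple of projections (all that a single $V$ can supply) suffices, once coherence across contexts is in hand, to feed Gleason's global theorem. I expect no difficulty in the normalisation and modularity bookkeeping; the conceptual content lies entirely in recognising the glued family as the input to Gleason and in confirming that $\dim\mh>2$ is available to license its use.
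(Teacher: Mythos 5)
Your overall route is exactly the paper's route: turn $\mu$ into a single finitely additive probability measure $m$ on $P(\mh)$ and invoke Gleason's theorem once, with uniqueness coming from the fact that a state is determined by its values on projections. But before comparing, note a type problem in your very first step: $\mathfrak{S}(\hat{P})$ is a clopen subset of $\us_V$, not a clopen sub-object of $\us$, so the expression $\mu(\mathfrak{S}(\hat{P}))(V)$ is undefined --- $\mu$ eats global sub-objects. You must choose some $\ps{S}\in Sub_{cl}(\us)$ whose $V$-component is $\mathfrak{S}(\hat{P})$ (for instance $\ps{\delta(\hat{P})}$) and set $m_V(\hat{P}):=\mu(\ps{S})(V)$; independence of this choice is then itself something to prove, and the paper spends the first half of its proof on exactly that point.

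The genuine gap is in the step you yourself flag as carrying all the conceptual content. You claim that the ``naturality/order-reversing compatibility'' of $\mu$ under the restriction maps of $\ps{[0,1]}^{\succeq}$ \emph{forces} $m_{V'}$ and $m_V$ to agree on $P(V')$ for $V'\subseteq V$. It does not: being a global section of $\ps{[0,1]}^{\succeq}$ says only that $V\mapsto\mu(\ps{S})(V)$ is order-reversing, which for $V'\subseteq V$ yields the one-sided inequality $m_{V'}(\hat{P})=\mu(\ps{\delta(\hat{P})})(V')\geq\mu(\ps{\delta(\hat{P})})(V)=m_V(\hat{P})$ and nothing more; there is no structure in the presheaf maps that upgrades this to equality. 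The missing argument is a complement trick: since daseinisation preserves joins, $\ps{\delta(\hat{P})}\vee\ps{\delta(\hat{1}-\hat{P})}=\ps{\delta(\hat{1})}=\us$, while at any $W$ containing $\hat{P}$ the component of $\ps{\delta(\hat{P})}\wedge\ps{\delta(\hat{1}-\hat{P})}$ is empty; modularity and normalisation (together with $m(\hat{0})=0$, which you, like the paper, use tacitly when reducing modularity to additivity) then give $m_W(\hat{P})+m_W(\hat{1}-\hat{P})=1$ for every $W\ni\hat{P}$. Applying the one-sided inequality separately to $\hat{P}$ and to $\hat{1}-\hat{P}$ and summing, both inequalities are squeezed into equalities, which is the desired noncontextuality; arbitrary pairs $V,V'$ containing $\hat{P}$ are then handled by routing through $V\cap V'$, which is how the paper phrases its (also rather terse) argument. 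Once this is supplied, the rest of your plan is sound: noncontextual additivity on orthogonal projections is precisely Gleason's hypothesis (your worry about commutativity is moot, since orthogonal projections commute), $\dim\mh>2$ licenses its use, and injectivity of $\rho\mapsto tr(\rho\,\cdot)$ on projections gives uniqueness.
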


\begin{proof}
Let us define 
\ba
m:P(\mh)&\rightarrow& [0,1]\\
P&\mapsto&m(p)
\ea
such that i) $m(\hat{1})=1$; ii) if $\hat{P}\hat{Q}=0$ (are orthogonal) then $m(\hat{P}\vee\hat{Q})=m(\hat{P}+\hat{Q})=m(\hat{P})+m(\hat{Q})$ such a map is called in the literature a\emph{finitely additive probability measure on the projections of $V$}. We now want to define a unique such finite additive measure given the probability measure $\mu$. To this end we define
\be\label{equ:defmu}
m(\hat{P}):=\mu(\ps{S}_{\hat{P}})(V)=\mu(\ps{S}_{\hat{P}})_V
\ee
such that $m(\hat{1}):=1$. In the above formula $\ps{S}$ is some clopen sub-object $\ps{S}\subseteq \us$ such that for $V$ we have $\mathfrak{S}^{-1}(\ps{S}_V)=\hat{P}$. 

In order for this definition to be well defined it has to be independent on which sub-object of $\us$ is chosen to represent the projection operator. In fact it could be the case that for a contexts $V$ and $V^{'}$ we have that $\ps{S}_V$ and $\ps{T}_{V^{'}}$ both correspond to the same projection operator $\hat{P}=\mathfrak{S}^{-1}(\ps{S}_V)=\mathfrak{S}^{-1}(\ps{T}_{V^{'}})$. If this is the case then we need to show that $\mu(\ps{S}_{\hat{P}})_V=\mu(\ps{T}_{\hat{P}})_{V^{'}}$. To show this we first take the case for which here exist to sub-objects of $\us$ such that they correspond to the same projector at the same context\footnote{In our usual sample of $\Cl^4$ this is the case for $\delta^o(\hat{P}_1)_{V_{\hat{P_2}}}$ and $\delta^o(\hat{P}_3)_{V_{\hat{P_2}}}$} $V$, i.e. $\ps{T}_V=\ps{S}_V$. We then obtain that 
\ba
\mu(\ps{S})(V)+\mu(\ps{T})(V)&\stackrel{\ref{Def:measure}}{=}&\mu(\ps{S}\vee\ps{T})(V)+\mu(\ps{S}\wedge\ps{T})(V)\\
&=&\mu(\ps{S}\vee\ps{T})_V+\mu(\ps{S}\wedge\ps{T})_V\\
&=&\mu(\ps{S}_V\cup\ps{T}_V)+\mu(\ps{S}_V\cap\ps{T}_V)\\
&=&\mu(\ps{S}_V)+\mu(\ps{S}_V)\\
&=&\mu(\ps{S})(V)+\mu(\ps{S})(V)
\ea
Therefore 
\be\label{equ:samecontext}
\mu(\ps{S})(V)=\mu(\ps{T})(V)
\ee
So for the same context we manage to prove the result. What about different contexts? Let us consider the case in which $\ps{S}_V=\ps{T}_{V^{'}}$ and they both correspond to the projection operator $\hat{P}$. This means that $\hat{P}\in V$ and $\hat{P}\in V^{'}$ therefore $\hat{P}\in V\cap V^{'}$. So for example $\delta(\hat{P})_V=\ps{S}_V=\ps{T}_V$. We should not that although $\delta^o(\hat{P})_{V^{'}\cap V}=\hat{P}$ it is not the case that $\mathfrak{S}^{-1}(\ps{S}_{V^{'}\cap V})=\hat{P}$ or $\mathfrak{S}^{-1}(\ps{T}_{V^{'}\cap V})$. Given such situation we obtain
\ba
\mu(\ps{S})(V)&\stackrel{\ref{equ:samecontext}}{=}&\mu(\delta(\hat{P}))(V)\\
&=&\mu(\delta(\hat{P}))(V^{'}\cap V)\\
&=&\mu(\delta(\hat{P}))(V^{'})\\
&\stackrel{\ref{equ:samecontext}}{=}&\mu(\ps{T})(V^{'})
\ea
Hence, the definition in \ref{equ:defmu} is well defined. We next have to show that $m$ is actually a finitely additive probability measure on the projections in $\mh$. To this end consider two orthogonal projection operators $\hat{P}$ and $\hat{Q}$, such that $\mathfrak{S}^{-1}(\ps{T}_V)=\hat{Q}$ and $\mathfrak{S}^{-1}(\ps{S}_V)=\hat{P}$. Then $(\ps{S}\vee\ps{T})_V$ corresponds to the projection operator $\hat{P}\vee\hat{Q}$. We then obtain
\ba
m(\hat{P}\vee\hat{Q})&:=&\mu(\ps{S}\vee\ps{T})(V)\\
&=&\mu(\ps{S})(V)+\mu(\ps{T})(V)+\mu(\ps{S}\wedge\ps{T})(V)\\
&=&\mu(\ps{S})(V)+\mu(\ps{T})(V)\\
&=:&m(\hat{P})+m(\hat{Q})
\ea
Finally it is easy to show that $m(\hat{0})=0$. These results together prove that, given a measure $\mu$ on $\us$ we can uniquely defined a  finitely additive probability measure on the projections in $\mh$. Through Gleason's theorem\footnote{Sucha  theorem shows that a quantum state is uniquely determined by the values it takes on projections. Since clopen sub-objects have components which correspond to projections, Gleason's theorem applies.} it is possible to show that such a probability measure corresponds to a state $\rho_{\mu}$. Thus we have the following chain: $\mu\mapsto m\mapsto\rho_{\mu}$.
\end{proof}

We have so managed to show that there exists a one to one correspondence between states and measures as defined in \ref{Def:measure}. We are now ready to tackle issue number 3):  how to relate such a measure with a generalised version of \ref{equ:measure}.\\
One thing to notice is that the measure in \ref{ali:measure} is defined on all clopen sub-objects of the state space $\us$, not only those deriving from the process of daseinisation, while the truth objects \ref{equ:tobj}, \ref{equ:tobj1} and thus the truth values \ref{equ:tru}, \ref{equ:tru1}, are only defined on those particular sub-objects which are derived from daseinisation. Thus another issues to solve is the following:
\begin{itemize}
\item [4)] Enlarge the truth objects in \ref{equ:tobj}, \ref{equ:tobj1} so as to include all clopen sub-objects of the state space $\us$.\\
We will tackle this issue in the following section.
\end{itemize}
\subsection{New Truth Object}
We now would like to define a truth object which takes into account all clopen sub-objects of the state space $\us$, not only those coming from daseinisation. This is because the measure $\mu$ described in previous sections is defined in all clopen sub-objects of $\us$ which in this context correspond to measurable sub-objects. 

Both the truth objects $^{org}\underline{\mathbb{T}}^{|\psi\rangle}$ and $^{org}\underline{\mathbb{T}}^{\rho}$ are defined in such a way (see previous lectures) such that the global sections $\Gamma(^{org}\underline{\mathbb{T}}^{|\psi\rangle})$ give all the clopen sub-objects of $\us$ coming from daseinisation.\\
We now would like to enlarge $\mathbb{T}^{|\psi\rangle}$ such that the global sections of the enlarged truth object give us all the sub-objects of the state space, i.e. what we are looking for is an object $\underline{\mathbb{T}}$, such that\begin{enumerate}
\item $\Gamma \underline{\mathbb{T}}=Sub_{cl}\us$
\item $^{org}\underline{\mathbb{T}}^{|\psi\rangle}\in \Gamma \underline{\mathbb{T}}$\end{enumerate}
Without going into the details of the proof that can be found in \cite{probabilities}, we will simply state the results of their findings.\\
\subsubsection{Pure State Truth Object}
As mentioned above, we would like to be able to define truth values for all clopen sub-objects of $\us$, not just those coming from daseininsation. Thus the truth object $\mathbb{T}^{|\psi\rangle}$ has to now be defined as a general sub-object of $P_{cl}(\us)$, such that $\Gamma \mathbb{T}^{|\psi\rangle}=Sub_{cl}(\us)$. 

Thus the general truth object as referred to a state $|\psi\rangle$ is the sheaf $\underline{\mathbb{T}}^{|\psi\rangle}$, such that for a given context $V$ we have\footnote{Note that $\us_{|\downarrow V}$ indicates the sheaf $\us$ defined on the lower set $\downarrow V$.} 
\be
\underline{\mathbb{T}}^{|\psi\rangle}_V:=\{\underline{S}\in Sub_{cl}(\us_{|\downarrow V})|\forall V^{'}\subseteq V, |\psi\rangle\langle\psi|\leq \hat{P}_{S_{V^{'}}}\}
\ee
In words, what this truth object tells you is a condition in terms of which any sub-object $\underline{S}$ of the state space $\us$ can be said to be true. In particular, for a sub-object $\underline{S}\subseteq\us$ to be totally true in a given state $|\psi\rangle$ it has to be such that for all $V\in \mv(\mh)$ (i.e. locally) it corresponds to a projection operator $\hat{P}_{S_{V}}\in P(V)$ which corresponds to an available proposition in $V$. This proposition is then required to be true with respect to $|\psi\rangle$, i.e. $\langle\psi|\hat{P}_{S_{V}}|\psi\rangle=1$.\\
The corresponding global elements are
\be
\Gamma(\underline{\mathbb{T}}^{|\psi\rangle})=\{\underline{S}\in Sub_{cl}(\us)|\forall V^{'}\subseteq \mv(\mh), |\psi\rangle\langle\psi|\leq \hat{P}_{S_{V}}\}
\ee
From the above definitions it is easy to see that indeed conditions 1. and 2. above are satisfied, namely $$\Gamma(\underline{\mathbb{T}}^{|\psi\rangle})=Sub_{cl}\us$$ and 
$$^{org}\underline{\mathbb{T}}^{|\psi\rangle}\in \Gamma \underline{\mathbb{T}}^{|\psi\rangle}$$\\
So we have managed to solve problem 4) and define a state dependent truth object for all sub-objects of the state space, not only those derived from daseinisation. This implies that now the expression $[[\underline{S}\in \underline{\mathbb{T}}^{|\psi\rangle}]]$ makes sense and can be evaluated also for $\underline{S}\neq\underline{\delta{P}}$. \\
The truth value in \ref{equ:tru} now becomes
\be
v(A\in \Delta, |\psi\rangle)=[[\underline{\delta(\hat{E}[A\in \Delta])}\in \underline{\mathbb{T}}^{|\psi\rangle}]]
\ee
\subsubsection{Density Matrix Truth Object}
The enlarged truth object as referred to a density matrix $\rho$ is the sheaf $\underline{\mathbb{T}}^{\rho, r}$ such that, for each context $V\in \mv(\mh)$ we have
\be\label{equ:rho}
\underline{\mathbb{T}}^{\rho, r}_V:=\{\underline{S}\in Sub_{cl}(\us_{|\downarrow V})|\forall V^{'}\subseteq V, tr(\rho\hat{P}_{S_{V^{'}}})\geq r\}
\ee
The corresponding global elements are
\be\label{equ:globrho}
\Gamma(\underline{\mathbb{T}}^{\rho, r}):=\{\underline{S}\in Sub_{cl}(\us)|\forall V\subseteq \mv(\mh), tr(\rho\hat{P}_{S_{V}})\geq r\}
\ee
What these global elements tell you is which (general) propositions, represented by sub-objects of the state space are true with probability at least $r$. \\
Moreover, if we have two distinct numbers $r_1\le r_2\leq 1$, then 
$$\underline{\mathbb{T}}^{\rho, r_2}\subseteq\underline{\mathbb{T}}^{\rho, r_1}$$
\begin{proof}
From the definition in equation \ref{equ:rho} $\underline{\mathbb{T}}^{\rho, r_2}$ is a family of clopen sub-objects $\ps{S}\subseteq\us_{\downarrow V}$ 
such that for all $V^{'}\in \downarrow V$ we have $tr(\rho\hat{P}_{S_{V^{'}}})\geq r_2$. However $r_1\le r_2$ therefore $tr(\rho\hat{P}_{S_{V^{'}}})\geq r_1$, for all $V^{'}\in \downarrow V$. It follows that $\underline{\mathbb{T}}^{\rho, r_2}\subseteq\underline{\mathbb{T}}^{\rho, r_1}$
\end{proof}
This means that the collection of sub-objects (propositions) of the state space, which are true with probability at least $r_1$, are more than the collection of sub-objects (propositions) of the state space which are true with a bigger probability (at least $r_2$). \\
Similarly, as above, conditions 1. and 2. are satisfied by the newly defined object \ref{equ:rho}.\\
Therefore, also for a density matrix truth object we have managed to solve problem 4), i.e. we have managed to define a density matrix dependent truth object for all sub-objects of the state space, not only those derived from daseinisation. \\
Thus the expression $[[\underline{S}\in \underline{\mathbb{T}}^{\rho, r}]]$ makes sense and can be evaluated for all $\underline{S}\in Sub(\us)$.\\ 
Equation \ref{equ:measure} now becomes
\be
v(A\in \Delta,\rho)^r(V)=[[\delta(\hat{E}[A\in \Delta])\in \underline{\mathbb{T}}^{\rho,r}]]_V:=\{V^{'}\subseteq V|tr(\rho\hat{E}[A\in \Delta])_{V^{'}}\geq r\}\in \uom^{Sh(\mv(\mh))}_{V}
\ee
\subsection{Generalised Truth Values}
We now try tackling issue number 1), i.e. how to put together the one parameter family of truth values in \ref{equ:measure}, so as to be itself an object in a topos. To this end one needs to extend the topos from $Sh(\mv(\mh)$ to $Sh(\mv(\mh)\times (0,1)_L)$, so that now the stages/contexts are pairs $\langle V, r\rangle$. Such a category $\mv(\mh)\times (0,1)_L$ can be given the structure of a poset as follow:
\be
\langle V^{'}, r^{'}\rangle\leq\langle V, r\rangle\text{ iff } V^{'}\leq V\;,\text{ and }\; r^{'}\leq r
\ee
Being a poset $\mv(\mh)\times (0,1)_L$ is equipped with the (lower) Alexander topology, where the basic open are $(\downarrow V\times (0,r))$
\\
This enlargement makes sense intuitively, since we are trying to combine the topos definition of probabilities outlined in section \ref{sec:prob}, which makes use of the topos $Sh((0,1)_L))$ and the concept of truth values which makes use of the topos $Sh(\mv(\mh))$. In particular, what we are trying to combine is the following:
\begin{itemize}
\item The one parameter family of truth values
\be\label{equ:sieve1}
v(A\in \Delta,\rho)^r(V)=\{V^{'}\subseteq V|tr(\rho\delta(\hat{E}[A\in \Delta])_{V^{'}}\geq r\}\in \uom^{Sh(\mv(\mh))}_{V}
\ee
which gives us sieves for each stage $V$ in $\uom^{Sh(\mv(\mh))}$. 
\item The topos definition of probabilities 
\be\label{equ:sieve2}
[[\underline{S}\in \underline{\mathbb{T}}^{\mu}]](0,r):=\{(0,r^{'})\leq (0,r)|\mu(S)\geq (0,r^{'})\}\in\underline{\Omega}^{Sh(\mathcal{O}((0,1)_L)}
\ee
which gives us, for each context $(0,r)$, a sieve in $\uom^{Sh(\mathcal{O}(0,1)_L)}$.
\end{itemize}
In order to meaningfully combine the two objects above we first need to go back to the the way in which topos probabilities where implemented for classical physics. In this context we ended up with the commuting diagram
\[\xymatrix{
Sub(X)\ar[rr]^{\mu}\ar[dd]^{\Delta}&&[0,1]\ar[dd]^l\\
&&\\
Sub_{Sh(\mathcal{O}((0,1)_L))}(\underline{X})\ar[rr]^{\epsilon^{\mu}}&&\Gamma\uom^{Sh(\mv(\mh)\times \mathcal{O}((0,1)_L))}\\
}\]
We are now looking for a quantum analogue of this. We already have the definition of a measure on the state space. The next step is to define the map of Heyting algebras 
\be
l:\Gamma\ps{[0,1]}^{\geq}\rightarrow\Gamma\uom^{Sh(\mv(\mh)\times \mathcal{O}((0,1)_L))}
\ee
Which is defined as:

\ba\label{ali:l}
l:\Gamma\ps{[0,1]}^{\geq}&\rightarrow&\Gamma\uom^{Sh(\mv(\mh)\times \mathcal{O}((0,1)_L))}\\
\gamma&\mapsto& l(\gamma)
\ea
where $l(\gamma)(\langle V, r\rangle):=\{\langle V^{'}, r^{'}\rangle\leq \langle V, r\rangle|\gamma(V^{'})\geq r^{'}\}$.

$\{\langle V^{'}, r^{'}\rangle\leq \langle V, r\rangle|\gamma(V^{'})\geq r^{'}\}$ is a sieve on $\langle V, r\rangle\in \mathcal{O}(\mv(\mh)\times (0,1)_L)$ since $\gamma$ is a nowhere increasing function. In fact, to each density matrix $\rho$ and to each projection operator $\hat{P}$ there corresponds a global element $\gamma_{\hat{P}, \rho}\in\Gamma\ps{[0,1]}^{\geq}$ defined by
\be
\gamma_{\hat{P}, \rho}(V):=tr(\rho\delta^o(\hat{P})_V)
\ee
When applying the map $l$ to such section we obtain, for each context $\langle V, \rho\rangle$:
\ba
l(\gamma_{\hat{P}, r})(\langle V, \rho\rangle)&=&\{\langle V^{'}, r^{'}\rangle\leq \langle V, r\rangle|\gamma_{\hat{P}, \rho}(V^{'})\geq r^{'}\}\\
&=&\{\langle V^{'}, r^{'}\rangle\leq \langle V, r\rangle|tr(\rho\delta^o(\hat{P})_{V^{'}})\geq r^{'}\}
\ea
If $\hat{P}=\delta([\hat{E}[A\in \Delta])$ then we get
\ba
l(\gamma_{\delta([\hat{E}[A\in \Delta]), \rho})(\langle V, \rho\rangle)&=&\{\langle V^{'}, r^{'}\rangle\leq \langle V, r\rangle|\gamma_{\delta([\hat{E}[A\in \Delta]), \rho}(V^{'})\geq r^{'}\}\\
&=&\{\langle V^{'}, r^{'}\rangle\leq \langle V, r\rangle|tr(\rho\delta([\hat{E}[A\in \Delta])_{V^{'}})\geq r^{'}\}
\ea
An important result is the following
\begin{Theorem}
The map $l$ as defined in \ref{ali:l} separated the elements in $\Gamma\ps{[0,1]}^{\geq}$, i.e. it is injective.
\end{Theorem}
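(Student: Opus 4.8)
The plan is to show that the map $l:\Gamma\ps{[0,1]}^{\geq}\rightarrow\Gamma\uom^{Sh(\mv(\mh)\times \mathcal{O}((0,1)_L))}$ is injective by a direct argument: I would suppose that two order-reversing global sections $\gamma_1,\gamma_2\in\Gamma\ps{[0,1]}^{\geq}$ satisfy $l(\gamma_1)=l(\gamma_2)$, and then show that this forces $\gamma_1(V)=\gamma_2(V)$ for every context $V\in\mv(\mh)$. Since a global section $\gamma$ is precisely a nowhere-increasing assignment $V\mapsto\gamma(V)\in[0,1]$ (compatible with restriction), establishing equality context-by-context yields $\gamma_1=\gamma_2$, which is exactly injectivity.

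The key step is to recover the real number $\gamma(V)$ from the sieve $l(\gamma)(\langle V,r\rangle)$. First I would fix a context $V$ and examine $l(\gamma)$ at stages of the form $\langle V,r\rangle$ for $r\in(0,1]$. By definition,
\be
l(\gamma)(\langle V, r\rangle)=\{\langle V^{'}, r^{'}\rangle\leq \langle V, r\rangle\mid\gamma(V^{'})\geq r^{'}\}.
\ee
The crucial observation is that the ``diagonal'' part of this sieve, namely the pairs $\langle V,r'\rangle$ with $V'=V$, records exactly the set $\{r'\leq r\mid \gamma(V)\geq r'\}$, which is the interval $(0,\min\{\gamma(V),r\}]$ (intersected with the available $r'$). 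Hence from the sieve one can read off $\min\{\gamma(V),r\}$, and by letting $r$ range up to (or take) the value $1$, one recovers $\gamma(V)$ itself as the supremum $\sup\{r'\mid \langle V,r'\rangle\in l(\gamma)(\langle V,1\rangle)\}$. Thus if $l(\gamma_1)(\langle V,1\rangle)=l(\gamma_2)(\langle V,1\rangle)$ then $\gamma_1(V)=\gamma_2(V)$, and since this holds for all $V$, we conclude $\gamma_1=\gamma_2$.

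I expect the main obstacle to be a careful handling of the strict-versus-non-strict inequalities inherent in the topology $(0,1)_L$, whose opens are the half-open intervals $(0,r)$. Because the sieve condition uses $\gamma(V^{'})\geq r^{'}$ while the open sets are of the form $(0,r)$ (not containing their endpoint), one must be attentive to whether the recovered quantity is $\min\{\gamma(V),r\}$ with the endpoint included or excluded, and to the fact that $r=0$ is deliberately excluded (as emphasised in the discussion preventing the ``totally false proposition being true with probability zero'' pathology). The cleanest route is to extract $\gamma(V)$ as a supremum, since the supremum of $\{r'\mid \gamma(V)\geq r'\}$ equals $\gamma(V)$ regardless of whether the endpoint itself belongs to the set, thereby sidestepping the endpoint subtlety. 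A secondary technical point to verify is that the set appearing on the right-hand side is genuinely a sieve on $\langle V,r\rangle$ in $\mathcal{O}(\mv(\mh)\times(0,1)_L)$ — this follows, as noted in the text, from $\gamma$ being nowhere increasing, so that $\langle V'',r''\rangle\leq\langle V',r'\rangle$ with $\gamma(V')\geq r'$ gives $\gamma(V'')\geq\gamma(V')\geq r'\geq r''$; I would state this compatibility explicitly before performing the recovery argument.
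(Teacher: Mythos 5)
Your proof is correct, and it takes a genuinely different — and in fact more robust — route than the paper's. You argue directly: from the section $l(\gamma)$ you \emph{reconstruct} $\gamma$, reading off $\gamma(V)$ as $\sup\{r'\mid \langle V,r'\rangle\in l(\gamma)(\langle V,1\rangle)\}$, so that $l(\gamma_1)=l(\gamma_2)$ forces $\gamma_1=\gamma_2$ context by context; in effect you exhibit a left inverse of $l$ on its image. The paper instead argues by contraposition with a witness stage: assuming $\gamma_1(V_i)>\gamma_2(V_i)$ at some $V_i$, it evaluates both sections at a single pair and compares a principal sieve with a proper subsieve. Both arguments rest on the same mechanism — the diagonal elements $\langle V,r'\rangle$ of the sieve record exactly the set $\{r'\mid\gamma(V)\geq r'\}$ — but your packaging buys uniformity: no witness stage has to be chosen, and the supremum is insensitive to whether endpoints are attained, which is precisely the $(0,1)_L$ subtlety you flag. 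This matters here, because the paper's printed witness is chosen infelicitously: at the stage $\langle V_i,\gamma_2(V_i)\rangle$ that it uses, order-reversal gives $\gamma_1(V')\geq\gamma_1(V_i)>\gamma_2(V_i)\geq r'$ for every $\langle V',r'\rangle$ below it, so \emph{both} sieves are principal there and no discrepancy is visible; the comparison only works at $\langle V_i,\gamma_1(V_i)\rangle$, where $l(\gamma_1)$ is principal while the top element fails the $\gamma_2$-condition. Your reconstruction argument sidesteps this issue entirely, at the modest cost of stating explicitly (as you do) that the defining sets are sieves and that the supremum recovery is endpoint-independent.
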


\begin{proof}
Assume that $\gamma_1\neq\gamma_2$. Then there will exist a context $V_i$ such that $\gamma_1(V_i)\neq\gamma_2(V_i)$. Let us assume that $\gamma_1(V_i)>\gamma_2(V_i)$. If we then apply the $l$ map we obtain respectively
\be
l(\gamma_2)(\langle V_i, \gamma_2(V_i)\rangle):=\{\langle V^{'}, r^{'}\rangle\leq \langle V_i, \gamma_2(V_i)\rangle|\gamma_2(V^{'})\geq r^{'}\}=\downarrow \langle V_i, \gamma_2(V_i)\rangle\}
\ee
and
\be
l(\gamma_1)(\langle V_i, \gamma_2(V_i)\rangle):=\{\langle V^{'}, r^{'}\langle\leq \langle V_i, \gamma_2(V_i)\rangle|\gamma_1(V^{'})\geq r^{'}\}\subset \downarrow \langle V_i, \gamma_2(V_i)\rangle\}
\ee
Therefore $l(\gamma_1)\neq l(\gamma_2)$

\end{proof}
Since the map $l$ represents the quantum analogue of the classical map $l:[0,1]\rightarrow \Gamma(\uom^{(0,1)})$ used to define probabilities we would like to check if the quantum $l$ has the same properties, in particular if it preserves joins. To this end consider a family $(\gamma_i)_{i\in I}$ of global section of $\ps{[0,1]}^{\geq}$. We then obtain for each context $( \langle V, r\rangle)$
\ba
l(\bigvee_i\gamma_i)( \langle V, r\rangle)&=&\{ \langle V^{'}, r^{'}\rangle\leq  \langle V, r\rangle|(\bigvee_i\gamma_i)(V^{'})\geq r^{'}\}\\
&=&\{ \langle V^{'}, r^{'}\rangle\leq  \langle V, r\rangle|\sup_i\gamma_i(V^{'})\geq r^{'}\}\\
&=&\bigcup_i\{ \langle V^{'}, r^{'}\rangle\leq  \langle V, r\rangle|\gamma_i(V^{'})\geq r^{'}\}\\
&=&\bigvee_il(\gamma_i)( \langle V, r\rangle)
\ea 
It follows that 
\be
l(\bigvee_i\gamma_i)=\bigvee_il(\gamma_i)
\ee

Now that we have both maps $\mu$ and $l$ we can combine the two sieves defined in \ref{equ:sieve1} and \ref{equ:sieve2} in a unique sieve in $\uom^{Sh(\mv(\mh)\times (0,1)_L)}$ by composing the maps $\mu^{\rho}$ and $l$ as follows
\be
l\circ \mu^{\rho}(\ps{\delta(\hat{P})})\in \Gamma(\uom^{Sh(\mv(\mh)\times (0,1)_L)})
\ee
Therefore, for each context $\langle V, r\rangle$ we obtain
\ba
l\circ \mu^{\rho}(\ps{\delta(\hat{P})})(\langle V, r\rangle)&=&l(\gamma_{\delta(\hat{P}), \rho})(\langle V, r\rangle)\\
&=&\{\langle V^{'}, r^{'}\rangle\leq \langle V, r\rangle|\gamma_{\delta(\hat{P}), \rho}(V^{'})\geq r^{'}\}\\
&=&\{\langle V^{'}, r^{'}\rangle\leq \langle V, r\rangle|tr(\rho\delta(\hat{P})_{V^{'}})\geq r^{'}\}
\ea

%
%
%
%
%
%
In other words, for each context $\langle V, r\rangle$ we obtain the following truth value
\ba\label{ali:newtruth}
v(A\in \Delta,\rho)(\langle V, r\rangle)&:=&\{\langle V^{'}, r^{'}\rangle\leq \langle V, r\rangle|tr(\rho\delta(\hat{E}[A\in \Delta])_{V^{'}}\geq r\}\\
&:=&\{\langle V^{'}, r^{'}\rangle\leq \langle V, r\rangle|\mu^{\rho}(\delta(\hat{E}[A\in \Delta])_{V^{'}})\geq r\}
\ea
Mathematically, this is a well defined element (sieve) of $\uom^{Sh(\mv(\mh)\times (0,1)_L)}$. Moreover, it was shown that such a truth value separates the density matrices and thus separates the measures.\\
However, in order to give physical meaning to the above expression, as representing the truth value of a quantum proposition regarding a physical system, we need to translate all the other objects which were defined in the topos $Sh(\mv(\mh))$, to objects in the topos $Sh(\mv(\mh)\times (0,1)_L)$.
This will be done by defining the quantum analogues of the maps $\Delta$ and $\epsilon$ which we are still missing. In this way we can effectively obtain the quantum analogue of the commuting classical diagram above. In particular, the quantum map $\Delta$ will be identified through the geometric morphisms $p_1^*:Sh(\mh(\mv))\rightarrow Sh(\mv(\mh)\times (0,1)_L)$ induced by the projection map $p_1:\mv(\mh)\times (0,1)_L\rightarrow \mv(\mh)$. Therefore we obtain
\be
p_1^*:Sub_{cl}\us^{Sh(\mv(\mh))}\rightarrow Sub_{cl}(\us^{Sh(\mv(\mh)\times (0,1)_L)})
\ee
Where $\us^{Sh(\mv(\mh))}\in Sh(\mv(\mh))$ while $\us^{Sh(\mv(\mh)\times (0,1)_L)}\in Sh(\mv(\mh)\times (0,1)_L)$

In particular, we will obtain the following:
\begin{itemize}
\item \emph{State space} $$\us^{Sh(\mv(\mh)\times (0,1)_L)}_{\langle V, r\rangle}:=(p_1^*\us^{Sh(\mh(\mv))})_{\langle V, r\rangle}=\us^{Sh(\mh(\mv))}_{p_1(\langle V, r\rangle)}=\us^{Sh(\mh(\mv))}_V$$
\item \emph{Propositions} $$\underline{\delta(\hat{P})}^{Sh(\mv(\mh)\times (0,1)_L)}_{\langle V, r\rangle}:=(p_1^*\underline{\delta(\hat{P})}^{Sh(\mh(\mv))})_{\langle V, r\rangle}=\underline{\delta(\hat{P})}^{Sh(\mh(\mv))}_{p_1(\langle V, r\rangle)}=\underline{\delta(\hat{P})}^{Sh(\mh(\mv))}_V$$
\item \emph{Truth Object} $$ \underline{\mathbb{T}}^{\rho}_{Sh(\mv(\mh)\times (0,1)_L)}(\langle V, r\rangle):= \underline{\mathbb{T}}^{\rho, r}V=\{\underline{S}\in Sub_{cl}(\us_{|\downarrow V})|\forall V^{'}\subseteq V\;,\;tr(\rho\hat{P}_{\underline{S}_{V^{'}}}\geq r\}$$
Equation \ref{ali:newtruth} now becomes
\ba
v(A\in \Delta,\rho)(\langle V, r\rangle)&=&[[p_1^*\underline{\delta(\hat{E}[A\in \Delta])}\in\underline{\mathbb{T}}^{\rho}]]({\langle V, r\rangle})\\
& :=&\{\langle V^{'}, r^{'}\rangle\leq \langle V, r\rangle|tr(\rho\delta(\hat{E}[A\in \Delta])_{V^{'}}\geq r\}\\
&:=&\{\langle V^{'}, r^{'}\rangle\leq\langle V, r\rangle|\mu^{\rho}(\delta(\hat{E}[A\in \Delta])_{V^{'}})\geq r\}
\ea
\end{itemize}

Having done that we can complete our commuting diagram by defining
\ba
\epsilon^{\rho}:Sub_{cl}(\us^{Sh(\mv(\mh)\times (0,1)_L)}&\rightarrow &\Gamma\uom^{Sh(\mv(\mh)\times (0,1)_L)}\\
\ps{S}&\mapsto&[[\ps{S}\in \ps{\mathbb{T}}^{\rho}]]
\ea
Thus the quantum version of the above diagram is
\[\xymatrix{
Sub_{cl}(\us^{Sh(\mv(\mh)})\ar[rr]^{\mu^{\rho}}\ar[dd]^{p_1^*}&&\Gamma\ps{[0,1]}^{\geq}\ar[dd]^l\\
&&\\
Sub_{cl}(\us^{Sh(\mv(\mh)\times (0,1)_L)})\ar[rr]^{\epsilon^{\rho}}&&\Gamma\uom^{Sh(\mv(\mh)\times \mathcal{O}((0,1)_L))}\\
}\]

We then have that 
\be
\epsilon^{\rho}(p_1^*\ps{\delta(\hat{P})})=[[p_1^*\ps{\delta(\hat{P})}\in\ps{\mathbb{T}}^{\rho}]]=l\circ \mu^{\rho}(\ps{\delta(\hat{P})})
\ee
This equation gives the precise statement of the relation between truth values and probability measures in a topos.
Moreover, we know that, given a state $\rho$ then the measure $\mu^{\rho}$ has the $\sigma$-additivity property such that for a cauntable increasing family of clopen sub-objects $(\ps{S}_i)_{i\in N}$ then
\be
\mu^{\rho}(\bigvee_i\ps{S}_i)=\bigvee_i\mu^{\rho}(\ps{S}_i)
\ee
If we then combine this result with the fact that $l$ preserves joins we then get that
\be
(l\circ\mu^{\rho})(\bigvee_i\ps{S}_i)=\bigvee_i(l\circ \mu^{\rho})(\ps{S}_i)
\ee

However from the commutativity of the above diagram it follows that 
\be
(\epsilon^{\rho}\circ p_1^*)(\bigvee_i\ps{S}_i)=(l\circ\mu^{\rho})(\bigvee_i\ps{S}_i)=\bigvee_i(l\circ \mu^{\rho})(\ps{S}_i)
\ee
This represents the logical reformulation of $\sigma$-additivity.

So we have seen that even in the quantum case there is a clear relation between truth values and probability measures. \\Thus in both classical and quantum theory, probabilities can be faithfully expressed in terms of truth values in sheaf topoi with an intuitionistic logic.


\begin{thebibliography}{99}
\bibitem{isham1} C.J. Isham, J. Butterfield. A Topos Perspective on the Kochen-Specker Theorem:I.
Quantum States as Generalized Valuations (1998).
[quant-ph/9803055]



\bibitem{isham2} J. Butterfield, C.J. Isham". A Topos Perspective on the Kochen-Specker Theorem:II. Conceptual Aspects,and
Classical Analogues (1998). [quant-ph/9808067]

\bibitem{isham3} J.Butterfield J.Hamilton, C.J.Isham.
A topos Perspective on the Kochen-Specker Theorem:III. Von Neumann
Algebras as the Base Category (1999). [quant-ph/9911020]
\bibitem{topos3} R.Goldblatt {\it Topoi The Categorial Analysis of Logic} (North-Holland, London, 1984)
\bibitem{topos7} S.MacLane, I. Moerdijk, {\it
Sheaves in Geometry and Logic: A First Introduction to Topos
Theory}, (Springer-Verlag, London 1968)
\bibitem{bell} J.L. Bell {\it Toposes and Local Set Theories}
(Clarendon Press, Oxford 1988)
\bibitem{topos8} Saunders MacLane {\it Categories for the working mathematician}
( Springer-Verlag, London 1997)
\bibitem{bell} J.L. Bell {\it Toposes and Local Set Theories}
(Clarendon Press, Oxford 1988)
\bibitem{topos1} C.J.Isham {\it Lectures on Quantum Theory, Mathematical and Structural
Foundations} (Imperial College Press 1995)
\bibitem{topos4} J. Bub {\it Interpreting the Quantum World} (Cambridge University Press 1997)
\bibitem{topos12} M. Kernaghan {\it Bell-Kochen-Specker Theorem for 20 Vectors} Journal of Physics A 27 (1994)
\bibitem{topos2}  I.M.Sigal, S.J.Gustafson {\it Mathematical Concepts in QUantum Mechanics} (Springer 2002)
\bibitem{topos10} R. Wallace Garden {\it Modern Logci and Quantum Mechanics} (Adam Hilger Ltd, Bristol 1984)
\bibitem{probabilities} A. Doering, C.J. Isham.. Classical and Quantum Probabilities as Truth Values (2011). [arXiv:1102.2213v1]
\bibitem{andreas} A. Doering. 
 Quantum States and Measures on the Spectral Presheaf (2008). [arxive:0809.4847v1 [quant-ph]]
\bibitem{isham4} 
  C.~J.~Isham and J.~Butterfield,
  ``Some possible roles for topos theory in quantum theory and quantum gravity,''
  Found.\ Phys.\  {\bf 30}, 1707 (2000)
  [gr-qc/9910005].
  \bibitem{andreas1}
  A.~Doring and C.~J.~Isham,
  ``A Topos foundation for theories of physics. I. Formal languages for physics,''
  J.\ Math.\ Phys.\  {\bf 49}, 053515 (2008)
  [quant-ph/0703060 [quant-ph]].

\bibitem{andreas2} 
  A.~Doring and C.~J.~Isham,
  ``A Topos foundation for theories of physics. II. Daseinisation and the liberation of quantum theory,''
  J.\ Math.\ Phys.\  {\bf 49}, 053516 (2008)
  [quant-ph/0703062 [quant-ph].

\bibitem{andreas3} 
  A.~Doring and C.~J.~Isham,
  ``A Topos foundation for theories of physics. III. The Representation of physical quantities with arrows,''
  J.\ Math.\ Phys.\  {\bf 49}, 053517 (2008)
  [quant-ph/0703064 [quant-ph]].



\bibitem{andreas4}
  A.~Doring and C.~J.~Isham,
  ``A Topos foundation for theories of physics. IV. Categories of systems,''
  J.\ Math.\ Phys.\  {\bf 49}, 053518 (2008)
  [quant-ph/0703066 [quant-ph]].


\bibitem{andreas5} 
 A.~Doring and C.~Isham,
 ``\,`What is a Thing?': Topos Theory in the Foundations of Physics,''
  arXiv:0803.0417 [quant-ph].

\bibitem{andreas6} 
 A.~Doring,
  ``Topos theory and `neo-realist' quantum theory,''
  arXiv:0712.4003 [quant-ph].
\bibitem{jordan} 
J.~Harding and A.~Doering,
``Abelian subalgebras and the Jordan structure of a von Neumann algebra,''  arXiv:1009.4945 [math-ph].





\bibitem{isham5} 
C.~J.~Isham, 
``Is it True or is it False; or Somewhere In Between?
 The Logic of quantum theory,''
Contemporary\  Phys. \ {\bf 46}, 207 (2005). 

\bibitem{isham6} 
  C.~J.~Isham,
  ``A Topos  perspective on state-vector reduction,''
  quant-ph/0508225.


\bibitem{quantization} 
K.~Nakayama,
``Sheaves in Quantum Topos Induced by Quantization,'' 
arXiv:1109.1192 [math-ph].

\bibitem{consistent2}
  C.~J.~Isham,
 ``Topos theory and consistent histories: The Internal logic of the set of all consistent sets,''
  Int.\ J.\ Theor.\ Phys.\  {\bf 36}, 785 (1997)
  [gr-qc/9607069].


\bibitem{consistent3} 
 C.~J.~Isham,
  ``Quantum logic and the histories approach to quantum theory,''
  J.\ Math.\ Phys.\  {\bf 35}, 2157 (1994)
  [gr-qc/9308006].

\bibitem{mixed} 
A.~Doering, 
``Topos Quantum Logic and Mixed States,'' 
 arXiv:1004.3561 [quant-ph].


\bibitem{dasain} 
A.~Doering, 
``The physical interpretation of daseinisation,'' 
arXiv: 1004.3573 [quant-ph]. 


\bibitem{cecilia} 
C.~Flori, 
``Concept of quantization in a topos,'' {\it in preparation}.

\bibitem{cecilia2} 
 C.~Flori,
  ``Review of the Topos Approach to quantum theory,''
  arXiv:1106.5660 [math-ph].

\bibitem{cecilia3} 
 C.~Flori,
``A topos formulation of history quantum theory,'' 
J.\ Math. \ Phys. {\bf 51}, 053527  (2010)
[arXiv:0812.1290 [quant-ph]].

\bibitem{foundations} 
C.~J.~Isham,
  ``Topos Methods in the Foundations of Physics,''
  arXiv:1004.3564 [quant-ph].


\bibitem{group}
C.~Flori,
``Group Action in Topos Quantum Physics,'' 
[arXiv:1110.1650 [quant-ph]].


\bibitem{comparison} 
S.~Wolters,
``A Comparison of Two Topos-Theoretic Approaches to quantum theory,'' arXiv:1010.2031 [math-ph].

\bibitem{bass1} 
C.~Heunen, N.~P.~Landsman and  B.~Spitters,  
``A topos for algebraic quantum theory,'' 
Comm. \ Math. \ Phys. \ {\bf 291}, 63 (2009)
[arXiv:0709.4364 [quant-ph]]

\bibitem{bass} 
C.~Heunen, N.~P.~Landsman, B.~Spitters and S.~Wolters,
``The Gelfand spectrum of a noncommutative C*-algebra: a topos-theoretic approach,'' 
Journal of the Australian Mathematical Society {\bf 90}, 39 (2011) 
 [arXiv:1010.2050 [math-ph]].


\bibitem{complexnumbers} 
W.~Brenna and C.~Flori,
``Complex Numbers and Normal Operators in Topos quantum theory,''
arXiv:1206.0809 [quant-ph]. 

\bibitem{mythesis} 
 C.~Flori,
  ``Approaches To Quantum Gravity,''
  arXiv:0911.2135 [gr-qc].

\bibitem{bell} 
J.~L.~Bell,
``Toposes and Local Set Theories''
(Clarendon Press, Oxford, 1988)

\bibitem{topos8} 
S.~MacLane,
``Categories for the working mathematician''
(Springer-Verlag, London, 1997).




\end{thebibliography}
\end{document}